\definecolor{vertfonce}{rgb}{0.20, 0.46, 0.25}
\definecolor{rougefonce}{rgb}{0.64, 0.09, 0.20}
\theoremstyle{definition} \newtheorem{definition}{Definition}[section]
\theoremstyle{plain} \newtheorem{theorem}[definition]{Theorem}
\theoremstyle{plain} 
\theoremstyle{plain} \newtheorem{proposition}[definition]{Proposition}
\theoremstyle{plain} \newtheorem{lemma}[definition]{Lemma}
\theoremstyle{plain} \newtheorem{corollary}[definition]{Corollary}
\theoremstyle{plain} \newtheorem{remark}[definition]{Remark}
\theoremstyle{definition} 
\numberwithin{equation}{section}
\newcommand{\Z}{\mathbb{Z}}
\newcommand{\R}{\mathbb{R}}
\newcommand{\C}{\mathbb{C}}
\DeclareMathOperator{\supp}{\mathrm{supp}}
\newcommand{\ldelta}{\ell_{\delta}}
\newcommand{\one}{\mathbbm{1}}
\newcommand{\rmu}{\rho_{\mu}}
\newcommand{\dd}{\mathrm{d}}
\newcommand{\grandO}{\mathcal{O}}
\newcommand{\V}{\widehat{v}}
\newcommand{\gt}{g}
\DeclarePairedDelimiterX\innerp[2]{\langle}{\rangle}{
	#1, #2
}
\titleformat{\section}{\centering\Large\bfseries}{\thesection \ --}{0.7em}{\Large\bfseries #1}
\titleformat{\subsection}{\centering\large\bfseries}{\thesubsection \ --}{0.4em}{\large\bfseries #1}
\titleformat{\subsubsection}{\centering\bfseries}{\thesubsubsection \ --}{0.4em}{\bfseries #1}
\begin{document}

\title{The Ground State Energy of a Two-Dimensional Bose Gas}
\author{S. Fournais\thanks{fournais@math.au.dk}}
\author{T. Girardot\thanks{theotime.girardot@math.au.dk}}
\author{L. Junge\thanks{junge@math.au.dk}}
\author{L. Morin\thanks{leo.morin@math.au.dk}}
\author{M. Olivieri\thanks{marco.olivieri@math.au.dk}}
\affil{\small{Department of Mathematics, Aarhus University,\\  Ny Munkegade 118,  DK-8000 Aarhus C, Denmark}}
\date{}

\maketitle
\begin{abstract}
We prove the following formula for the ground state energy density of a dilute Bose gas with density $\rho$ in $2$ dimensions in the thermodynamic limit 
\begin{align*}
e^{\rm{2D}}(\rho) = 4\pi \rho^2 Y\Big(1 - Y \vert \log Y \vert + \Big( 2\Gamma + \frac{1}{2} + \log(\pi) \Big) Y \Big) + o(\rho^2 Y^{2}),
\end{align*}
as $\rho a^2 \rightarrow 0$.
Here $Y= |\log(\rho a^2)|^{-1}$ and $a$ is the scattering length of the two-body potential. This result in $2$ dimensions corresponds to the famous Lee-Huang-Yang formula in $3$ dimensions. The proof is valid for essentially all positive potentials with finite scattering length, in particular, it covers the crucial case of the hard core potential.
\end{abstract}

\tableofcontents

\section{Introduction}

The calculation of the ground state energy of a dilute gas of bosons is of fundamental importance and has been the focus of much attention in recent years. 
This question can be posed in all dimensions of the ambient space, but
of course, the most important case from the point of view of Physics is the $3$-dimensional situation. However, also $1$ and $2$ dimensions are experimentally realizable. In this paper we study the $2$-dimensional setting and prove an asymptotic formula analogous to the famous Lee-Huang-Yang formula in $3$-dimensions.

Let us be more precise about the setting of the result. We consider positive, measurable potentials $v: {\mathbb R}^2 \rightarrow [0,+\infty]$ that are radial. Given such a potential, we will let $a=a(v)$ be its scattering length (for details on the scattering length see Section~\ref{section:scattering}) and define the Hamiltonian
\begin{align}
H(N,L) = \sum_{j=1}^N -\Delta_j + \sum_{j<k} v(x_j-x_k),
\end{align}
on $L^2(\Omega^N)$, with $\Omega = [-\frac{L}{2}, \frac{L}{2}]^2$.
The ground state energy density in the thermodynamic limit $e^{\rm{2D}}(\rho)$ is then defined by
\begin{align}\label{eq:erho_def}
e^{\rm{2D}}(\rho) := \lim_{\substack{L \rightarrow \infty\\ N/L^2 \rightarrow \rho}} L^{-2} \inf_{\Psi \in C_0^{\infty}(\Omega^N)} \frac{ \langle \Psi, H(N,L) \Psi \rangle}{\|\Psi\|^2}.
\end{align}
It is a standard result that the limit exists, and actually our analysis of $e^{\rm{2D}}(\rho)$ proceeds by giving upper bounds on the $\limsup$ and lower bounds on the $\liminf$.
It is also well-known that the limit is independent of the boundary conditions. The fact that we consider $\Psi \in C_0^{\infty}$ in the formula above, corresponds to the choice of Dirichlet boundary conditions for concreteness.

\begin{theorem}[Main result]\label{thm:LHY_2D}
For any constants $C_0, \eta_0 >0$, there exist $C, \eta >0$ (depending only on $C_0$ and $\eta_0$) such that the following holds. If the potential $v: {\mathbb R}^2 \rightarrow [0,+\infty]$ is non-negative, measurable and radial with scattering length $a$ and $\rho a^2 < C^{-1}$,
and, furthermore,
\begin{align}\label{eq:decay}
 v(x) \leq \frac{C_0}{\vert x \vert^{2}} \Big( \frac{a}{\vert x \vert} \Big)^{\eta_0}, \qquad \text{for all } \vert x \vert \geq C_0 a .
\end{align}
Then 
\begin{align}\label{eq:MainHC}
\Big| e^{\rm{2D}}(\rho) - 4\pi \rho^2 \delta_0 \Big(1 + \Big(2\Gamma + \frac{1}{2} + \log(\pi) \Big) \delta_0 \Big)
\Big| \leq C \rho^2 \delta^{2+\eta}_0,
\end{align}
with
\begin{align}\label{eq:def_delta}
\delta_0 := |\log(\rho a^2 |\log(\rho a^2)|^{-1})|^{-1},
\end{align}
where $\Gamma=0.577\ldots$ is the Euler-Mascheroni constant.
\end{theorem}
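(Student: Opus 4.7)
The plan is to establish Theorem~\ref{thm:LHY_2D} by proving matching upper and lower bounds on $e^{\rm 2D}(\rho)$, each obtained by reducing to an effective quadratic Bogoliubov Hamiltonian whose ground-state energy can be computed in closed form. A guiding observation is that the natural effective coupling in two dimensions is $g := 8\pi \delta_0$; the nested logarithm in \eqref{eq:def_delta} reflects the fact that the two-dimensional zero-energy scattering solution $1-\omega_0$ heals only logarithmically, so the relation between $g$ and the microscopic potential $v$ must be solved self-consistently in $\rho$.

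\emph{Upper bound.} I would first construct a trial state of Jastrow--Bogoliubov type,
\begin{equation*}
\Psi_N(x_1,\dots,x_N) = \Big( \prod_{i<j} f(x_i - x_j) \Big)\, \Phi(x_1,\dots,x_N),
\end{equation*}
where $f = 1 - \omega_0$ is the two-body scattering solution truncated at an intermediate scale $b$ with $a \ll b \ll \rho^{-1/2}$, and $\Phi$ is a Bogoliubov vacuum built on top of the constant condensate $\varphi_0 \equiv L^{-1}$. Expanding the energy of $\Psi_N$ and using the scattering equation converts the singular pair interaction into an effective renormalized coupling $g \sim 8\pi\delta_0$, from which the leading term $4\pi\rho^2 \delta_0$ follows by counting pairs. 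Diagonalizing the remaining quadratic form yields the Bogoliubov sum
\begin{equation*}
\tfrac{1}{2} \sum_{p \neq 0} \Big[ \sqrt{|p|^4 + 2g\rho\, |p|^2} - |p|^2 - g\rho + \frac{(g\rho)^2}{2|p|^2} \Big],
\end{equation*}
which, evaluated as a Riemann integral with appropriate infrared and ultraviolet cutoffs, produces precisely the constant $(2\Gamma + \tfrac{1}{2} + \log\pi)\,\delta_0^2$ after all counterterms cancel.

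\emph{Lower bound.} This is the main obstacle and would adapt the strategy developed by Fournais--Solovej in the three-dimensional setting. The steps are: (i) Neumann-bracket $H(N,L)$ into smaller boxes of intermediate side length $\ell$ so that the particle number in each box is nearly fixed and boundary effects are negligible; (ii) apply a $c$-Dyson sliding lemma to replace $v$ by a softer effective potential $\mathcal{U}$ of range $R$ with $a \ll R \ll \rho^{-1/2}$, at the price of only a small fraction of the kinetic energy; (iii) extract the condensate mode via a $c$-number substitution and establish a quantitative Bose-Einstein condensation bound on the number of excited particles; (iv) rigorously diagonalize the resulting quadratic Bogoliubov Hamiltonian on the excited Fock space, and show that the residual cubic and quartic terms are subleading by Cauchy--Schwarz estimates controlled by the kinetic and number operators. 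The sum produced in step (iv) matches the one from the trial state, yielding the two-sided bound.

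\emph{Main difficulty.} I expect the principal technical obstacle to be the double-logarithmic structure of $\delta_0$: the scales $\ell$, $R$, the momentum cutoffs, and the permissible size of the excitation number all feed back into the effective coupling and must be chosen by a bootstrap, in which a preliminary rough energy estimate fixes an approximate value of $\delta_0$ and one then iterates to the sharp value. The $\delta_0^{2+\eta}$ error is especially delicate in two dimensions because $1-\omega_0$ heals on scale $R$ only as $1/\log(R/a)$, so losses in the Dyson-lemma step risk corrupting the $\delta_0^2$ correction. This forces a very sharp analysis of the low-momentum modes and a finely tuned balance between the intermediate length scales.
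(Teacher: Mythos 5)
Your upper-bound sketch essentially matches the paper's construction: a Jastrow factor built from the scattering solution truncated at an intermediate scale $b$ with $a\ll b\ll\rho^{-1/2}$, multiplied by a quasi-free state on top of the constant condensate, with the Bogoliubov integral producing the LHY constant. That part is fine, although the paper makes an important simplification you do not mention: after the Jastrow reduction the problem is governed by a spherical surface-measure potential $\widetilde v = 2f'(b)\,\delta_{\{|x|=b\}}$, which satisfies $\widehat{\widetilde v}_0 - \widehat{\widetilde g}_0 = O(Y^2|\log Y|)$ and therefore makes the pure quasi-free state sufficient (no soft-pair corrections are needed on the upper-bound side). Your remark about a self-consistent bootstrap for $\delta_0$ also turns out to be unnecessary: the paper fixes the normalization scale of the scattering solution once and for all by writing the problem grand-canonically in terms of $\rho_\mu$ and taking $\delta=\delta_\mu$, and only at the very end sets $\rho_\mu=\rho$.

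The lower bound is where your proposal has a genuine gap. You write that after the $c$-number substitution and Bogoliubov diagonalization the ``residual cubic and quartic terms are subleading by Cauchy--Schwarz estimates controlled by the kinetic and number operators.'' This is false at the precision required: the cubic term (the $\mathcal{Q}_3$ part of the $P/Q$ splitting) encodes the soft-pair mechanism of Yau--Yin and contributes at the same order $\rho^2\delta_0^2$ as the LHY term. A naive Cauchy--Schwarz bound of the cubic term against kinetic and number energy leaves an error that is not $o(\rho^2\delta_0^2)$. The paper handles this by two successive momentum localizations of $\mathcal{Q}_3$ (one restricting the outgoing excitation to low momenta, one restricting the incoming pair to high momenta) and then showing that the resulting $\mathcal{Q}_3^{\rm high}$ is compensated by the positive diagonalized quadratic Hamiltonian $\mathcal{K}_H^{\rm Diag}$ obtained from the Bogoliubov rotation, with $\mathcal Q_2^{\rm ex}$ and $\mathcal Q_1^{\rm ex}$ absorbed along the way (Section~\ref{subsec:Q3_technical}). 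Without this step your lower bound cannot reach $\delta_0^{2+\eta}$.

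A secondary but nontrivial divergence is your proposed use of a Dyson lemma to replace $v$ by a softer potential of range $R$ at the cost of a fraction of the kinetic energy. The paper deliberately avoids this: sacrificing a fixed fraction $\varepsilon$ of the kinetic energy costs an error of order $\varepsilon\rho^2\delta_0$, which cannot be made $o(\rho^2\delta_0^2)$ unless $\varepsilon\ll\delta_0$, and then the Dyson-softened potential becomes too weak to reproduce the correct scattering data to the needed precision. Instead, the paper first reduces to potentials with $\|v\|_1\leq Y^{-1/8}$ (Theorem~\ref{theorem. main potential reduction}, a direct manipulation of the scattering variational problem, not a Dyson lemma), and then renormalizes $\widehat v_0\to\widehat g_0$ \emph{inside} the $P/Q$-splitting of the potential energy, absorbing the difference into a manifestly positive term $\mathcal{Q}_4^{\rm ren}$. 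This is the mechanism that replaces the Dyson lemma at second order. Finally, the box localization is done by the Brietzke--Solovej sliding technique rather than Neumann bracketing, which matters because the sliding version keeps spectral gaps and a modified kinetic energy $\mathcal{T}_u$ available to absorb the many error terms from the $\mathcal{Q}_3$ and $\mathcal Q_4$ analysis.
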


In terms of the simpler parameter $Y = |\log(\rho a^2)|^{-1}$, we get from \eqref{eq:MainHC}, expanding $\delta_0$ in terms of $Y$, the three-term asymptotics
\begin{align}\label{eq:simpleLHY}
e^{\rm{2D}}(\rho) = 4\pi \rho^2 Y\Big(1 - Y \vert \log Y \vert + \Big( 2\Gamma + \frac{1}{2} + \log(\pi) \Big) Y \Big) + {\mathcal O}(\rho^2 Y^{2+\eta}).
\end{align}
Here the third term in the asymptotics is analogous to the famous Lee-Huang-Yang term in the $3$-dimensional situation.

Notice, in particular, that the decay assumption \eqref{eq:decay} is valid for potentials with compact support. So Theorem~\ref{thm:LHY_2D} applies to the very important special case of the hard core potential of radius $a$:
\begin{align}
v_{\rm hc}(x) = \begin{cases} 0, & |x| > a, \\ +\infty, & |x| \leq a.
\end{cases}
\end{align}
For this potential the radius of the support is equal to its scattering length.

The proof of Theorem~\ref{thm:LHY_2D} will proceed by establishing upper and lower bounds. In Theorems~\ref{thermo dynamic limit theorem} and~\ref{thm:main_lower} below, we will state more precisely the estimates for the upper and lower bounds, respectively, and the assumptions necessary for each of these.
In Section~\ref{sec:strat} below, we will give an outline of the paper as well as these precise statements.

The first term $4\pi \rho^2 Y$ in \eqref{eq:simpleLHY} was understood in \cite{PhysRevA.3.1067} but a full proof was only given in 2001 in the paper \cite{MR1827922}.
Calculations beyond leading order were given in \cite{PhysRevE.64.027105,HINES197812,PhysRevLett.102.180404,Yang_2008}, but have so far not been rigorously proven. The recent papers  \cite{MR4261710,CCS2D} give an analogous expansion of the ground state energy in the setting of the Gross-Pitaevskii regime, giving furthermore information about the excitation spectrum. The constant in the second order term was also found in \cite{MR3986933} by restricting to quasi-free states in a special scaling regime.

In the $3$-dimensional case, the asymptotic formula for the energy density (with $e^{\rm{3D}}(\rho)$ defined analogously to \eqref{eq:erho_def} and $a$ being here the $3$-dimensional scattering length) is
\begin{align}\label{eq:LHY3D}
e^{\rm{3D}}(\rho) = 4\pi a \rho^2 \Big( 1 + \frac{128}{15\sqrt{\pi}} \sqrt{\rho a^3} \Big) + o( a \rho^2\sqrt{\rho a^3} ).
\end{align}
This is the famous Lee-Huang-Yang formula. The leading order term goes back to \cite{Lenz}, and the second term---the Lee-Huang-Yang (LHY) term---were given in \cite{Bog,LHY}.
Mathematically rigorous proofs of the leading order term were given in \cite{dyson} (upper bound) and \cite{LY} (matching lower bound).
Upper bounds for sufficiently regular potentials to the precision of the LHY-term were given in \cite{ESY} (correct order only), \cite{YY} (first upper bound with correct constant on the LHY-term) with recent improvements in \cite{BCS}.
Lower bounds of second order were given in \cite{FS} (potentials in $L^1$) and \cite{FS2} (general case including the hard core potential).
The upper bound in $3$-dimensions in the case of potentials with large $L^1$-norm, in particular the key example of hard core potentials, is still open.

As can be understood from this overview of results from the analysis of the $3D$ case, it is difficult to prove precise results on the energy when $\int_{{\mathbb R}^3} v$ is much larger than the scattering length $a(v)$, \textit{i.e.}, the hard core case. In $2$-dimensions the analogous comparison is between $\int_{{\mathbb R}^2} v$ and $\delta_0$, which always satisfy $\int_{{\mathbb R}^2} v \gg \delta_0$. So in $2$-dimensions we face similar challenges as in the $3$D hard core case, even for regular potentials.
This is one of the reasons why progress on the $2$D problem has been slower. It is therefore remarkable that Theorem~\ref{thm:LHY_2D} can be established, including both upper and lower bounds, without any extra assumptions on the potentials. 

Throughout the paper we will use the standard convention that $C>0$ will denote an arbitrarily large universal constant whose value can change from one line to the other.

\paragraph{Notation}
We will use the following notation for Fourier transforms,
$$
\widehat{f}(p) = \widehat{f}_p = \int e^{-ixp} f(x)\,\dd x.
$$
In the paper we will use the notation $A \ll B$ in a precise sense given by \eqref{eq:order_comparison}. 

\paragraph{ACKNOWLEDGEMENTS}
This research was partially supported by the grant 0135-00166B from Independent Research Fund Denmark.

\section{Strategy of the proof}\label{sec:strat}

\subsection{Upper bound }

As upper bound we prove the following theorem.

\begin{theorem}\label{thermo dynamic limit theorem}For any constants $C_0$, $\eta_0 >0$,  there exists $C$ (that depends only on $C_0$ and $\eta_0$) such that the following holds. Let $v : \R^2 \rightarrow [0,\infty]$ be a non-negative, measurable and radial potential with scattering length $a<\infty$, and satisfying the following decay property, 
\begin{equation}\label{eq:v_tail}
 v(x) \leq \frac{C_0}{\vert x \vert^{2}} \Big( \frac{a}{\vert x \vert} \Big)^{\eta_0} \quad \text{for} \quad \vert x \vert \geq C_0 a .
 \end{equation}
Then, if $\rho a^2 < C^{-1}$,  
\begin{equation*}
e^{\rm{2D}}(\rho) \leq  4\pi \rho^2 \delta_0 \Big(1 + \Big(2\Gamma + \frac{1}{2} + \log(\pi) \Big) \delta_0 \Big)
 + C \rho^2 \delta^{3}_0\vert\log(\delta_0)\vert,
\end{equation*}
with $\delta_0$ given by \eqref{eq:def_delta}. 
\end{theorem}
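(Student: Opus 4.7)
The plan is to establish the upper bound by constructing an explicit trial state and evaluating its energy, following the standard Jastrow--Bogoliubov strategy adapted to two-dimensional scattering. The first step is a reduction to a finite box via Neumann bracketing: it suffices to produce a trial state on a box $\Lambda_\ell$ of side length $\ell$, where $\ell$ is chosen so that $\rho \ell^2 \gg 1$ while remaining much smaller than the thermodynamic scale, and then sum over the tiling. On this box I would take a trial state of the form $\Psi_N = F \cdot T\, \Omega_{\rm cond}$, where $\Omega_{\rm cond}$ is the pure condensate state with all $N$ particles in the zero mode, $T$ is a Bogoliubov transformation rotating pairs of modes with opposite momentum, and $F(x_1,\ldots,x_N) = \prod_{i<j} f(x_i - x_j)$ is a Jastrow factor built from a truncated solution of the two-body zero-energy scattering equation $-\Delta f + \tfrac{1}{2} v f = 0$ with $f(x) \to 1$ at some cutoff scale $R$.

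The heart of the proof is then the calculation of $\langle \Psi_N, H \Psi_N\rangle / \|\Psi_N\|^2$. The Jastrow factor accounts for the short-distance correlations: writing $F$ as a product and using the scattering equation to eliminate $v$ at short scales, the singular potential $v$ is effectively replaced by a softer potential $V$ whose zero-momentum Fourier coefficient is $\widehat{V}(0) = 8\pi\delta_0$ to leading order, the parameter $\delta_0$ arising as a self-consistent solution of the logarithmic 2D scattering problem on scale $R \sim (\rho \delta_0)^{-1/2}$. After this replacement the Hamiltonian takes a Bogoliubov-like form and the transformation $T$ can be chosen to diagonalize its quadratic part; its ground state energy has the explicit expression
\begin{align*}
E_{\rm Bog} = \tfrac{1}{2} \sum_{p \neq 0} \Bigl( \sqrt{|p|^4 + 2\rho \widehat{V}(p)|p|^2} - |p|^2 - \rho \widehat{V}(p) + \frac{(\rho \widehat{V}(p))^2}{2|p|^2}\Bigr),
\end{align*}
together with a Born-type counterterm proportional to $\tfrac{N(N-1)}{2L^2}\widehat{V}(0)$. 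An asymptotic evaluation of this integral in two dimensions with $\widehat{V}(0)=8\pi\delta_0$ produces simultaneously the leading term $4\pi\rho^2\delta_0$ and the explicit second-order constant $2\Gamma + \tfrac{1}{2} + \log(\pi)$ claimed in the theorem.

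The hardest step is to control the errors. Expanding the Hamiltonian in the Bogoliubov-rotated creation/annihilation operators produces cubic and quartic pieces; there are additionally commutator terms arising from the interaction between $F$ and $T$, localization errors from the Neumann bracketing, and errors from the finite cutoff defining the truncated scattering solution. Each must be shown to be at most $C \rho^2 \delta_0^3 |\log \delta_0|$, which forces a simultaneous choice of three well-separated scales: the cutoff $R$ for $f$, the box size $\ell$, and a momentum cutoff $K$ above which modes are not rotated by $T$. The two-dimensional situation is significantly more delicate than the analogous three-dimensional calculation because the scattering solution only decays logarithmically, so that several error estimates acquire extra $|\log \delta_0|$ factors compared to their three-dimensional counterparts; this is precisely the origin of the $|\log\delta_0|$ in the remainder. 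The decay hypothesis \eqref{eq:v_tail} enters crucially when estimating contributions where one particle sees the tail of $v$ against a slowly varying Jastrow factor, and is what enables a uniform treatment covering the hard-core potential.
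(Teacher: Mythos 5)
Your overall architecture matches the paper's: a Jastrow factor to handle short-distance correlations, a Bogoliubov-rotated condensate state for the excitation spectrum, and a box/localization argument to pass to the thermodynamic limit. However, there is a genuine gap in the key computation. The Bogoliubov energy you write,
\begin{equation*}
\tfrac{1}{2} \sum_{p \neq 0} \Bigl( \sqrt{|p|^4 + 2\rho \widehat{V}(p)|p|^2} - |p|^2 - \rho \widehat{V}(p) + \frac{(\rho \widehat{V}(p))^2}{2|p|^2}\Bigr),
\end{equation*}
is the correct \emph{three}-dimensional renormalization, but in two dimensions the Born counterterm $\frac{(\rho\widehat{V}(p))^2}{2|p|^2}$ makes the integral logarithmically \emph{divergent} at $p\to 0$, since $\int_{|p|\le\varepsilon} |p|^{-2}\,\dd p = \infty$ in 2D. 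The correct 2D counterterm must be $\rho^2\,\frac{\widehat{g}_p^2 - \widehat{g}_0^2\,\one_{\{|p|\le\ell_\delta^{-1}\}}}{2p^2}$, where the subtraction is at the healing length $\ell_\delta = \frac{1}{2}e^\Gamma\widetilde R$. This scale is not a free cutoff: it is forced by the Fourier transform of the 2D scattering solution $\omega = 1-\varphi$ (equivalently, by the Fourier transform of the 2D logarithm, Lemma~\ref{lem:FouLog}), and the appearance of $\Gamma$ in $\ell_\delta$ is precisely what produces the constant $2\Gamma + \tfrac 1 2 + \log\pi$ in the LHY term. Without this subtraction scheme your formula does not converge, and the second-order constant cannot be extracted.

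A secondary omission is that ``effectively replaced by a softer potential $V$'' glosses over a step that is essential for a uniform treatment of the hard core: when one integrates by parts in $\langle F\Phi, H\, F\Phi\rangle$, the Jastrow factor built from the truncated scattering solution $f = \min(1,\varphi_b)$ produces a boundary term which is exactly a \emph{surface} potential $\widetilde v = 2 f'(b)\,\delta_{\{|x|=b\}}$. It is this explicit identification, together with the facts that $\widehat{\widetilde v}$ decays like $|p|^{-1/2}$ (Bessel function) and that $\widehat{\widetilde v}_0 - \widehat{\widetilde g}_0 = O(Y^2|\log Y|)$, that permits using a quasi-free trial state without soft-pair corrections and gets the error down to $\rho^2\delta_0^3|\log\delta_0|$. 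You should make this reduction explicit, since otherwise it is not clear how to handle $\int v = \infty$ in the Bogoliubov formula.
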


 In order to prove Theorem~\ref{thermo dynamic limit theorem}, we will reduce the analysis to the case of compactly supported potentials on a smaller periodic box $\Lambda = \Lambda_\beta = [- \frac{L_{\beta}}{2},\frac{L_{\beta}}{2} ]^2$ with length 
\begin{equation}
L_\beta = \rho^{-1/2} Y^{-\beta}, \qquad \beta >0.
\end{equation}
In this box, if the density is $\rho$, the number of particles is $N = \rho L_{\beta}^2 = Y^{-2\beta }\gg 1$. Throughout the paper we find conditions on $\beta$ over which we will optimize. For a potential $v$ with $\supp v \subseteq B(0,\frac{L_{\beta}}{2})$, we consider the following Hamiltonian acting on the Fock space $\mathscr{F}_s(L^2(\Lambda_\beta))$,
\begin{equation}\label{eq:upperbound.grandcanonical}
 \mathcal H_v = \bigoplus_{n \geq 0} \bigg( \sum_{i=1}^n - \Delta_{x_i}^{\rm{per}} +  \sum_{1 \leq i<j \leq n} v^{\rm{per}}(x_i-x_j) \bigg) .
\end{equation} 
Here $\Delta^{\rm{per}}$ is the periodic Laplacian, and $v^{\rm{per}}(x) = \sum_{m \in {\mathbb Z}^2}v(x+L_{\beta}m)$ is the periodic version of $v$. Note that for any $p \in \frac{2\pi}{L_\beta} \Z^2$, the Fourier coefficient of $v^{\rm{per}}$ is equal to the Fourier transform $\widehat v (p)$, because the radius of the support of $v$ is smaller than $L_{\beta}$. In this setting we prove the following result.

\begin{theorem}\label{thm.upperbound.grandcanonical}
For any $\beta \geq\frac{3}{2}$, there exists $C >0$, depending only on $\beta$ such that the following holds. Let $\rho >0$ and $v : \R^2 \rightarrow [0,\infty]$ be a non-negative, measurable and radial potential with scattering length $a$ and $\supp v \subset B(0,R)$ for some $R >0$. If $\rho R^2 \leq Y^{2 \beta + 2}$ and $\rho a^2 \leq C^{-1}$, then there exists a normalized trial state $\Psi \in \mathscr{F}_s(L^2(\Lambda_\beta))$, such that,
\begin{align*}
\langle \mathcal H_v \rangle_\Psi \leq 4\pi L_\beta^2 \rho^2 \delta_0 \Big(1 + \Big(2\Gamma + \frac{1}{2} + \log(\pi) \Big) \delta_0 \Big)
+ CL_{\beta}^2 \rho^2 \delta^{3}_0\vert\log(\delta_0)\vert.
\end{align*}
Moreover $\Psi$ satisfies $\langle \mathcal N \rangle_\Psi \geq N (1- C Y^2) ,$ and $\langle \mathcal N^2 \rangle_{\Psi} \leq 9 N^2$, where $\mathcal{N}$ is the number operator on $\mathscr{F}_s(L^2(\Lambda_{\beta}))$ and $N=\rho L_\beta^2= Y^{- 2 \beta}$.
\end{theorem}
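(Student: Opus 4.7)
The plan is to construct an explicit trial state of the form $\Psi = T_c T_K U_B |0\rangle$ in $\mathscr F_s(L^2(\Lambda_\beta))$, where $T_c$ is a Weyl/coherent-state transform installing $\sqrt N$ particles in the constant mode, $U_B = \exp\bigl(\frac{1}{2}\sum_{p\neq 0}\eta_p(a_p^*a_{-p}^* - a_pa_{-p})\bigr)$ is a quasi-free Bogoliubov transformation with coefficients $\eta_p$ to be chosen, and $T_K$ is a cubic (and possibly quartic) correction that installs the two-body scattering correlation at short distances. The condensate amplitude is tuned so that $\langle \mathcal N\rangle_\Psi \approx N = Y^{-2\beta}$; the bound $\langle \mathcal N\rangle_\Psi \geq N(1-CY^2)$ then follows from an $O(NY)$ estimate on the depletion $\sum_{p\neq 0}\sinh^2\eta_p$, and $\langle \mathcal N^2\rangle_\Psi \leq 9N^2$ from Wick-expanding the quasi-free state.

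Because the potential can be as singular as the hard core, I would pre-process the Hamiltonian by the scattering substitution: using the zero-energy scattering solution $\omega$, one writes $v^{\rm per}(1-\omega^{\rm per}) = -2\Delta^{\rm per}\omega^{\rm per}$, which transfers the singular part of the interaction onto the kinetic term and leaves behind a soft, integrable effective potential $g := v(1-\omega)$ with $\widehat g(0) \approx 8\pi\delta_0$. After this substitution, what used to be a short-range quartic problem becomes a Bogoliubov problem with a regular coupling $\widehat g(p)$ that decays at high momentum.

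Evaluating $\langle \mathcal H_v\rangle_\Psi$ then organizes into (i) a leading term $\frac{N(N-1)}{2L_\beta^2}\widehat g(0) \sim 4\pi L_\beta^2 \rho^2 \delta_0$, and (ii) a sum of Bogoliubov pair-excitation terms that, with $\eta_p$ chosen to diagonalize the quadratic part, reduce to the standard two-dimensional LHY Riemann sum
\begin{align*}
E_{\rm Bog} = \tfrac12 \sum_{p\neq 0}\Bigl(\sqrt{p^4+2\rho p^2 \widehat g(p)} - p^2 - \rho\widehat g(p) + \tfrac{\rho^2\widehat g(p)^2}{2p^2}\Bigr).
\end{align*}
For $\beta\geq 3/2$ the sum is well approximated by its thermodynamic integral, and an explicit evaluation to logarithmic accuracy in $\delta_0$ yields the constant $2\Gamma + \frac12 + \log\pi$. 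All remaining contributions --- cubic and quartic remainders, finite-box errors, corrections from $\widehat g(p)-\widehat g(0)$, and short-range truncation errors from $T_K$ --- are each bounded by power-counting to be $O(L_\beta^2\rho^2\delta_0^3|\log\delta_0|)$.

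The hardest step will be to get the constant $\log\pi$ with the required precision. In 2D the effective coupling is itself a slowly varying function of $\rho a^2$, so the naive expansion of $E_{\rm Bog}$ in $\widehat g(0)$ is not a genuine Taylor expansion; one must carry out a self-consistent analysis in $\delta_0$, matching the ultraviolet cutoff scale $\sqrt{\rho \widehat g(0)}$ against the microscopic scale $a^{-1}$, and track how the logarithmic renormalization feeds back into the Bogoliubov integral. A secondary difficulty is that because $\int v$ is effectively infinite for hard-core-like potentials, the off-diagonal pieces of $\mathcal H_v$ do not simplify on the Bogoliubov state directly; the corrector $T_K$ must be designed and estimated so that it cancels the short-range mismatch left over by $U_B$ without producing uncontrolled commutator terms in the energy.
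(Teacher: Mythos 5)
Your proposed trial state $T_c T_K U_B|0\rangle$ takes a genuinely different route from the paper, and it has a gap that would be fatal for exactly the class of potentials the theorem is claimed for: the hard core. The paper's trial state is
\begin{equation*}
\Psi = \sum_{n\geq 0} F_n\,\Phi_n, \qquad F_n(x_1,\dots,x_n)=\prod_{i<j} f(x_i-x_j), \quad f=\min(1,\varphi_b),
\end{equation*}
i.e.\ a Jastrow factor built from a truncated scattering solution normalized at an intermediate length $b=\rho^{-1/2}Y^{\beta+1/2}$, applied to a quasi-free $\Phi$. The Jastrow factor is multiplicative and vanishes inside the core, so $\Psi$ is in the form domain of $\mathcal H_v$ even when $v\equiv+\infty$ on $B(0,a)$. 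When one computes $\langle F_n\Phi_n, \mathcal H_v\, F_n\Phi_n\rangle$, integration by parts and the scattering equation for $f$ convert the $v$-interaction into the soft surface potential $\widetilde v = 2f'(b)\,\delta_{\{|x|=b\}}$, plus a controllable three-body remainder $\mathcal R$. Crucially, $\widehat{\widetilde v}_0-\widehat{\widetilde g}_0 = \mathcal O(Y^2|\log Y|)$, so the quasi-free analysis from Section~4 already gives the right LHY constant without any cubic corrector.

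Your ``pre-process the Hamiltonian by the scattering substitution'' step does not do the work you need. The identity $v(1-\omega)=-2\Delta\omega$ is an identity on functions, not a change of the Hamiltonian; for an upper bound you still must evaluate $\langle\mathcal H_v\rangle_\Psi$ on a genuine state, and if $\Psi=T_cT_KU_B|0\rangle$ does not vanish on the hard core then $\langle\mathcal H_v\rangle_\Psi=+\infty$. Your $T_K$ is left unspecified (``installs the two-body scattering correlation at short distances''); a standard cubic or quartic unitary built on creation/annihilation operators, as used in the 3D works you implicitly invoke, produces superpositions of Fock-space excitations, not a multiplicative factor that vanishes on the diagonal, and so it does not put the state in the form domain. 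You would effectively have to redesign $T_K$ into a Jastrow-type (non-unitary, multiplicative) correction --- which is the paper's construction. The paper also explicitly notes that soft-pair (cubic) corrections, while possible, are not needed here precisely because the surface potential already makes $\widehat{\widetilde v}_0$ close to $\widehat{\widetilde g}_0$; and the Jastrow approach is affordable in 2D only because $N=Y^{-2\beta}$ is a power of $|\log(\rho a^2)|$, so that bounding $\mathcal O(N^2)$ Jastrow factors by $1$ costs only logarithmic errors --- a dimension-specific feature worth making explicit if you pursue this.
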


We will show below, using standard techniques, how Theorem~\ref{thermo dynamic limit theorem} follows from Theorem~\ref{thm.upperbound.grandcanonical}. The proof of Theorem~\ref{thm.upperbound.grandcanonical} is divided into two parts, Sections~\ref{sec.soft} and~\ref{sec.upper.general}. We first prove in Section~\ref{sec.soft} a weaker upper bound when the potential is regular enough (we call it a \textit{soft} potential), using a quasi-free state $\Phi$. Indeed, this is an adaptation of the method of \cite{ESY,Napio1,Napio2} to the 2D case which amounts to a minimization of the energy over quasi-free states. We show in Theorem~\ref{thm.upperbound.soft.pot} with a good choice of $\Phi$ to our level of precision gives 
\begin{align}
		\langle  \mathcal H_v  \rangle_\Phi \leq 4\pi L_\beta^2 \rho^2 \delta_0 \Big(1 + \Big(2\Gamma + \frac{1}{2} + \log(\pi) \Big) \delta_0 \Big)
		+ CL_{\beta}^2 \rho^2 \delta_0(\widehat{v}_0-\widehat{g}_0)+CL_{\beta}^2 \rho^2 \delta^{2}_0\widehat{v}_0.\label{eq:strat01}
\end{align}
Here $g =  \varphi v$ and $\varphi$ is the scattering solution associated to $v$ (see Section~\ref{section:scattering} for the precise definition of $\varphi$ and with parameter $\delta_0$). This provides a first upper bound, but it is not enough to prove Theorem~\ref{thm.upperbound.grandcanonical}, unless $v$ admits a Fourier transform and $\widehat{v}_{0}$ is of order $\widehat{g}_{0}$. In Section~\ref{sec.upper.general} we explain how to reduce from any $v$ to a soft potential. To this end, we take care of the influence of the potential on a much shorter length scale by introducing $\varphi_b$ as the scattering solution normalized at
\begin{equation}
	b=\rho^{-\frac{1}{2}}Y^{\beta+\frac{1}{2}}.
\end{equation}
 Then our state will be the following product state
\begin{equation}\label{eq:strat02}
\Psi = \sum_{n \geq 1} F_n \Phi_n  , \quad F_n , \Phi_n \in L^2_s(\Lambda_\beta^n),
\end{equation} 
where $\Phi = \sum \Phi_n$ is a quasi-free state and $F_n$ is the so-called \textit{Jastrow function},
\begin{equation}\label{eq:jastrow}
F_n(x_1, \ldots, x_n) = \prod_{1 \leq i< j \leq n} f(x_i-x_j),
\end{equation}
with $f= \min(1,\varphi_b)$. When we compute the energy of such a state $\Psi$ we get 
\begin{equation}\label{eq:strat03}
\langle \mathcal H_v \rangle_\Psi \leq \langle \mathcal H_{\widetilde v} \rangle_\Phi + \langle \mathcal R \rangle _\Phi ,
\end{equation}
where $\mathcal R$ is an error term and $\widetilde v$ is the following soft potential,
\begin{equation}\label{eq:strat04}
\widetilde v = 2 f'(b) \delta_{\{\vert x \vert = b\}}.
\end{equation}
The power of $Y$ in $b$ is chosen minimal such that $\Vert \Psi\Vert=\Vert \Phi\Vert+O(Y^2)$, see Lemma~\ref{lemma denominator}. The same can be said about the assumption on $R$ (the support of $v$) in Theorem~\ref{thm.upperbound.grandcanonical}. Thus for the result to apply for the widest range of potentials we will want to choose $\beta$ as small as possible.

The potential in \eqref{eq:strat04} is soft in the sense that it has a decaying Fourier transform and $\widehat{\widetilde{v}}_0 \simeq \widehat{\widetilde g}_0$ (see Lemma~\ref{lemma.soft.pot.properties} for precise estimates). Then we can take for $\Phi$ the optimal quasi-free state satisfying \eqref{eq:strat01} for $\widetilde v$ and this turns out to be enough to prove Theorem~\ref{thm.upperbound.grandcanonical}.

Since the Jastrow factor \eqref{eq:jastrow} encodes all 2-particle interactions---at least on short scales---it is a natural trial state for getting upper bounds on the energy. In particular, it has been used to get the correct first order upper bound, both in 3D \cite{dyson} and 2D \cite{MR1827922}. In the product state $\Psi$, the Jastrow factor deals with short distance correlations between particles (when $\vert x_i - x_j \vert \leq b$), while long range effects are dealt with by the quasi-free state $\Phi$. In the case of hard core potentials, the Jastrow factor also imposes the necessary condition that our state vanishes whenever two particles are too close. 

We emphasize the following major differences between 2D and 3D. To be able to reduce to the quasi-free state $\Phi$, we need to bound $\grandO(N^2)$ terms of the form $f(x_i-x_j)$ by $1$. The number of particles $N$ in our box is not too large (powers of $|\log(\rho a^2)|$) thus making this error controllable. This is not at all the case in dimension 3, because the number of particles in the box that would give the thermodynamic limit is of order $(\rho a^3)^{-2}$. However, a similar state as ours was successfully used in the 3D Gross-Pitaevskii regime \cite{BCOPS}. In this regime the number of particles is $(\rho a^3)^{-\frac{1}{2}}$, which allows the authors, with substantially more work, to get through to a good upper bound.

Finally, one should notice that $\Phi$ is a quasi-free state, and does not include the soft pair interactions that were necessary in \cite{YY,BCS} to get the correct upper bound in 3D. Indeed, for a quasi-free state $\Phi$ the second order energy bounds are in terms of $\widehat{v}_0$ and to get the correct constant one needs to change $\widehat{v}_0$'s into $\widehat{g}_0$'s. This is the role of soft pairs. However, our potential $\widetilde v$ from \eqref{eq:strat04} already satisfies $\widehat{\widetilde{v}}_0 - \widehat{\widetilde g}_0 = {\mathcal O}(Y^2 \vert \log Y \vert)$ (see Lemma~\ref{lemma.soft.pot.properties}) and this replacement only gives errors of order $\rho^2 Y^3 \vert \log Y \vert$. It is possible that we could add the soft pair interactions into $\Phi$ to reduce this error at the expense of a much longer and more technical proof.

We conclude this section by proving Theorem~\ref{thermo dynamic limit theorem} using Theorem~\ref{thm.upperbound.grandcanonical} and the classical theory of localization to smaller boxes which is added for convenience in Appendix~\ref{appendix b}. 

\begin{proof}[Proof of Theorem~\ref{thermo dynamic limit theorem}]
We first cut our potential in order to apply Theorem~\ref{thm.upperbound.grandcanonical}. We write $v=v\one_{B(0,R)}+v\one_{B(0,R)^c}=v_{R}+v_{tail},$ where $R=\rho^{-\frac{1}{2}}Y^{\beta + 2}$. We denote by $a_{R}$ the scattering length of $v_{R}$. To get estimates on the energy density $e(\rho)$ we use the standard theory developed in Appendix~\ref{appendix b}. The idea is to extend $L_{\beta}$ with $R_0$ and force the trial function to have Dirichlet boundary conditions on the box of sidelength $L_{\beta}+R_0$. Thereafter one glues together these small Dirichlet boxes, separated by corridors of size  $R$. Since this process will slightly change the density, we choose for a given $\rho>0$, the larger density $\widetilde{\rho}$ satisfying
\[\rho=\widetilde{\rho}(1-2C\widetilde{Y}^2)\Big(1+\frac{2R_0}{L_\beta}+\frac{R}{L_\beta}\Big)^{-2},\]
where $C$ is the same as in Theorem~\ref{thm.upperbound.grandcanonical}, and $R_0=\rho^{-\frac 1 2} Y^{-\frac{1}{2}}$. This choice of $R_0$ is in fact optimal as one can see from the error term $C \frac{\rho}{L_{\beta}R_0}$ coming from the glueing process in \eqref{eq:erhoHcut}. Here we use the notation $\widetilde{Y} = \vert \log( \widetilde \rho a_{R}^2) \vert^{-1}$ and $\widetilde{\delta_0} = \vert \log( \widetilde \rho a_{R}^2\widetilde{Y}) \vert^{-1}$. If $\rho a^2$ is small enough then $\widetilde{\rho} a_{R}^2 \leq C^{-1}$, and we may use Theorem~\ref{thm.upperbound.grandcanonical} to find a periodic trial state $\Psi$ for the density $\widetilde{\rho}$ and potential $v_{R}$ satisfying
\begin{align}
\langle \mathcal H_{v_R} \rangle_\Psi &\leq 4\pi L_\beta^2 \widetilde{\rho}^2 \widetilde{\delta_0} \Big(1 + \Big(2\Gamma + \frac{1}{2} + \log(\pi) \Big) \widetilde{\delta_0} \Big)
+ CL_{\beta}^2 \widetilde{\rho}^2 \widetilde{\delta_0}^3\vert\log(\widetilde{\delta_0})\vert , \label{eq:Hvcut1}
\end{align}
with $\langle \mathcal N \rangle_\Psi \geq \widetilde{\rho}L_\beta^2(1-C\widetilde{Y}^2),$ and $\langle \mathcal N^2 \rangle_{\Psi} \leq C \widetilde{\rho}^2 L_\beta^4$ .

Since $\frac{R}{L_\beta} \ll \frac{R_0}{L_\beta} = Y^{-\frac{1}{2}+\beta} \ll 1$, we have $\vert \rho - \widetilde \rho \vert \leq C \rho Y^{-\frac{1}{2}+\beta}$, and we can change $\widetilde \rho$ into $\rho$ in \eqref{eq:Hvcut1} up to smaller errors if
\begin{equation}\label{condition on beta}
\beta \geq \frac{5}{2}.
\end{equation}
One can show that the $C$ appearing in \eqref{eq:Hvcut1} only increases in $\beta$ (see \eqref{eq. how C depends on beta}). Thus we find  $\beta=5/2$ to be optimal. We can also change $a_{R}$ into $a$ because the right-hand side of \eqref{eq:Hvcut1} is an increasing function of the scattering length and $a_{R} \leq a$. Thus,
\begin{align}
\langle \mathcal H_{v_{R}} \rangle_\Psi & \leq 4\pi L_\beta^2 \rho^2 \delta_0 \Big(1 + \Big(2\Gamma + \frac{1}{2} + \log(\pi) \Big) \delta_0 \Big)
+ CL_{\beta}^2 \rho^2 \delta^{3}_0\vert\log(\delta_0)\vert , \label{eq:Hvcutbound}
\end{align}
and the bounds on the number of particles become
\begin{equation}\label{eq:NPsirho}
\langle \mathcal N \rangle_\Psi \geq (\rho+c\rho^2)  (L_\beta+2R_0+R)^2 , \quad \text{and} \quad \langle \mathcal N^2 \rangle_{\Psi} \leq  C(\rho L_\beta^2)^2,
\end{equation}
for some $c> 0$. 

Now we can use Theorems~\ref{lem.B1} and~\ref{thm.B2} to glue small boxes together. We get a sequence $\Psi_{k(L_\beta+2R_0+R)}\in \mathscr{F}_s\big(L^{2}\big(\Lambda_{k(L_{\beta}+2R_0+R)}\big)\big)$ with Dirichlet boundary conditions, for $k\in \mathbb{N}$, on the box \[\Lambda_{k(L_\beta + 2 R_0 + R)} = \Big[ - \frac 1 2 k(L_\beta + 2 R_0 + R) , \frac 1 2 k(L_\beta + 2 R_0 + R) \Big]^2,\] satisfying 
\begin{equation}\label{eq:NPsik1}
\langle \mathcal N \rangle_{\Psi_{k(L_{\beta}+2R_0+R)}} = k^2\langle \mathcal N \rangle_\Psi,\qquad \langle \mathcal N^2 \rangle_{\Psi_{k(L_{\beta}+2R_0+R)}} = k^4\langle \mathcal N^2 \rangle_\Psi,
\end{equation}
and
\begin{equation}\label{eq:HvtoHvcut}
\langle \mathcal{H}_v \rangle_{\Psi_{k(L_\beta+2R_0+R)}}\leq k^2\Big(\langle \mathcal{H}_{v_{R}} \rangle_{\Psi}+\langle \mathcal{N} \rangle_{\Psi}\frac{C}{L_\beta R_0}+\langle \mathcal{N}^2 \rangle_{\Psi}\frac{Ca^{\eta_0}}{R^{2+\eta_0}}\Big),
\end{equation}
where $C$ only depends on $\eta_0$ and $C_0$.
Note that we have the original potential in the left-hand side of \eqref{eq:HvtoHvcut} because by the proof of Theorem~\ref{thm.B2}, $v_{tail}$ only produces an error term.
By construction this sequence satisfies the conditions on the number of particles for Theorem~\ref{thm.B3}, and we conclude
\begin{equation}\label{eq:erhoHcut}
e^{2D}(\rho)\leq \lim_{k\rightarrow \infty}\frac{\langle \mathcal{H}_v \rangle_{\Psi_{k(L_\beta+2R_0+R)}}}{
k^2L_\beta^2}\leq \frac{\langle \mathcal{H}_{v_{R}} \rangle_{\Psi}}{L^2_\beta}+ C \frac{\rho}{L_\beta R_0}+ Ca^{\eta_0}\frac{\rho^2 L_\beta^2}{R^{2+\eta_0}},
\end{equation}
where in the last inequality we used \eqref{eq:HvtoHvcut}, \eqref{eq:NPsirho} and that $\langle \mathcal{N}\rangle^2_{\Psi} \leq \langle \mathcal{N}^2\rangle_{\Psi}$. With our choice of parameters including \eqref{condition on beta}, the two last terms in \eqref{eq:erhoHcut} are errors. Theorem~\ref{thermo dynamic limit theorem} follows from \eqref{eq:erhoHcut} and \eqref{eq:Hvcutbound}.
\end{proof}

\subsection{Lower bound}

In this section we provide the strategy of proof for the theorem below. 

\begin{theorem}\label{thm:main_lower}

For any constant $\eta_1 >0$ there exist $C, \eta >0$ (depending only on $\eta_1$) such that the following holds. Let $\rho >0$ and $v: {\mathbb R}^2 \rightarrow [0,+\infty]$ be a non-negative, measurable and radial potential with scattering length $a < \infty$. If $\rho a^2 < C^{-1}$ and

\begin{equation}\label{eq:assumption_v_decay}
\int_{ \{\vert x \vert > \rho^{-1/2} \}} v(x) \log \Big( \frac{\vert x \vert}{a} \Big) ^2 \dd x \leq Y^{\eta_1},
\end{equation}
then
\begin{align}
 e^{\rm{2D}}(\rho) \geq  4\pi \rho^2 \delta_0 \Big(1 + \Big(2\Gamma + \frac{1}{2} + \log(\pi) \Big) \delta_0 \Big)
+  C \rho^2 \delta^{2+\eta}_0,
\end{align}
with $\delta_0$ as defined in \eqref{eq:def_delta}.
\end{theorem}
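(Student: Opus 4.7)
The plan is to establish the matching lower bound by adapting the Bogoliubov-theoretic strategy developed in three dimensions by Fournais and Solovej. The proof proceeds via localization to intermediate-size boxes, renormalization of the bare interaction into a soft effective interaction via the zero-energy scattering solution, and then a careful Bogoliubov diagonalization together with quantitative condensation estimates.

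First I would localize the thermodynamic problem to periodic boxes $\Lambda_\beta$ of sidelength $L_\beta = \rho^{-1/2} Y^{-\beta}$ by Dirichlet--Neumann bracketing (or the sliding localization technique), so that each box contains on average $N = \rho L_\beta^2 = Y^{-2\beta}$ particles. The tail assumption \eqref{eq:assumption_v_decay} is exactly what is needed to discard the contribution of $v$ outside a ball of radius comparable to $L_\beta$ without damaging the scattering length beyond an admissible error; note that this assumption is substantially weaker than its upper-bound counterpart, because positivity of $v$ can be exploited in the lower bound.

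The central reduction is to replace the bare interaction $v$ --- which may be hard core --- by an effective soft interaction $g = v\varphi$, where $\varphi$ solves $-\Delta \varphi + \tfrac12 v\varphi = 0$ normalized so that $\widehat{g}_0 = 4\pi \delta_0 (1 + O(\delta_0))$. Using a Dyson-type lemma one extracts a renormalized two-body operator at the price of sacrificing part of the kinetic energy below some momentum cutoff $p_H$, yielding a lower bound of schematic form
\begin{align*}
\mathcal H_v \;\geq\; \sum_{|p|\geq p_H} p^2\, a^*_p a_p \;+\; \frac{1}{2 L_\beta^2}\sum_{\substack{p,q,k \\ \text{low momenta}}}\widehat g_k\, a^*_{p+k} a^*_{q-k} a_q a_p \;-\; \mathcal R,
\end{align*}
with $\mathcal R$ a controllable remainder. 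Next I would invoke Bose--Einstein condensation inside the box (known at leading order in this regime) and perform the $c$-number substitution $a_0 \mapsto \sqrt{N_0}$ on the condensate mode. Bogoliubov diagonalization of the resulting quadratic Hamiltonian produces the contribution
\begin{align*}
E_{\mathrm{Bog}} \;=\; \tfrac12 \sum_{p\neq 0}\Big( \sqrt{p^4 + 2\rho\, \widehat g_p\, p^2} - p^2 - \rho\, \widehat g_p + \tfrac{(\rho\, \widehat g_p)^2}{2 p^2}\Big),
\end{align*}
whose asymptotic evaluation, after converting the Riemann sum into an integral and carefully isolating the logarithmic infrared behaviour (the reason the renormalization is performed at the scale $\delta_0$ rather than $Y$), yields precisely the $(2\Gamma + \tfrac12 + \log\pi)\, \delta_0^2$ second-order correction.

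The hard part will be controlling the non-quadratic remainders --- in particular the cubic terms $Q_3 \sim \sqrt{\rho_0}\, \widehat g_k\, a^*_{p+k} a^*_{-k} a_p$ and the off-diagonal quartic terms --- which in 2D are of critical order relative to the target precision $\rho^2 \delta_0^{2+\eta}$. These must be absorbed into the positive kinetic energy above $p_H$ and into the positive diagonal quartic, using Cauchy--Schwarz combined with improved condensate-depletion estimates of the form $\langle \mathcal N_{>0} \rangle \ll N\, \delta_0$, likely obtained by a bootstrap on the effective Hamiltonian. Simultaneously matching the scales $p_H$, the low-momentum cutoff $p_L$, the box size exponent $\beta$, and the tail exponent $\eta_1$ appearing in \eqref{eq:assumption_v_decay}, so that every error term is genuinely of order $\rho^2 \delta_0^{2+\eta}$, will be the most delicate bookkeeping of the argument.
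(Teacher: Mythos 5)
Your proposal gives a recognizable outline of the paper's actual strategy (localization, scattering-function renormalization, Bogoliubov diagonalization, absorption of cubic and quartic remainders), and you correctly identify that the cubic ``soft pair'' terms are the crux. But several of the steps you describe do not match how the paper actually makes the argument work, and at the level of precision required I believe the variants you sketch would fail.

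First, the renormalization is \emph{not} done by a Dyson-type lemma that sacrifices low-momentum kinetic energy. Instead, the paper reformulates the problem grand-canonically with a chemical potential tied to a new parameter $\rho_\mu$, then splits the potential with condensate/excitation projectors $P,Q$ so that the replacement $\widehat v_0 \to \widehat g_0$ is absorbed into a \emph{positive} quartic operator $\mathcal Q_4^{\mathrm{ren}}$ (Lemma~\ref{lem:SF_potsplit-bigbox}). A Dyson lemma is adequate at leading order but does not produce a positive remainder that can simultaneously swallow the cubic errors, which is essential at LHY precision. Second, the cubic term is not controlled merely by Cauchy--Schwarz against kinetic energy above a cutoff and the diagonal quartic; it is first localized in two stages (low outgoing momentum in Lemma~\ref{lem:Q3loc}, then high incoming momentum in Lemma~\ref{lem:Q3highloc}) and then absorbed into the \emph{diagonalized} Bogoliubov Hamiltonian $\mathcal K_H^{\mathrm{Diag}}$ together with $\mathcal Q_2^{\mathrm{ex}}$, $\mathcal Q_1^{\mathrm{ex}}$ via a long cascade of commutator estimates (Theorem~\ref{thm.Q3z}). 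Third, you invoke BEC as ``known at leading order'' with a depletion bound $\langle \mathcal N_{>0}\rangle \ll N\delta_0$; in fact the paper must \emph{prove} a quantitative condensation estimate by a further localization to small boxes of side $d\ell \ll \ell_\delta$, where the Neumann gap is strong enough (Appendix~\ref{app:low-smallbox}), and the resulting bound is $\langle n_+\rangle \leq C K_B^2 K_\ell^2 \rho_\mu \ell^2 Y\vert\log Y\vert$, which is larger than what you assume. Finally, the $c$-number substitution in the paper leads to a minimization over coherent states indexed by $z$, and the argument bifurcates into the regimes $\vert\rho_z-\rho_\mu\vert$ large versus small, each treated differently (Sections~\ref{subsec:rhofar} and~\ref{sec.9.2}) --- a mechanism absent from your sketch. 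You also use the upper-bound box length scale $L_\beta=\rho^{-1/2}Y^{-\beta}$ and the normalization $\widehat g_0\simeq 4\pi\delta_0$, whereas the lower bound works on a different hierarchy of scales $d\ell\ll\ell_\delta\ll\ell$ with $\ell=K_\ell\rho_\mu^{-1/2}Y^{-1/2}$, and $\widehat g_0=8\pi\delta$. Taken together these are not cosmetic: without the $\mathcal Q_4^{\mathrm{ren}}$-positivity renormalization, the coherent-state dichotomy, the small-box BEC argument, and the two-step $3Q$ localization feeding into $\mathcal K_H^{\mathrm{Diag}}$, the error terms do not close at order $\rho^2\delta_0^{2+\eta}$.
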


The proof has the same structure as in the $3D$ hard core case analyzed in \cite{FS2}. However, the $2D$ case comes with its own challenges due to the logaritmic divergences and changes of the lengthscales.

We introduce the lengths $\ell, \ell_{\delta}$ which a posteriori are going to correspond to

\begin{equation}
\ell = \rho^{-1/2} Y^{-1/2-\alpha}, \qquad \ell_{\delta} = \frac 1 2 e^{\Gamma} \rho^{-1/2} Y^{-1/2},
\end{equation}
for a certain $\alpha \in (0,1)$, the second of which being called the \textit{healing length}.
The proof will depend on a precise choice of a number of parameters. For convenience these and the relations between them have been collected in Appendix~\ref{app:parameters}.

We work at three different lengthscales:

\begin{itemize}

\item the thermodynamical scale, in the box $\Omega = [-L/2, L/2]^2$, where we state the main result in the limit $L \rightarrow + \infty$;

\item the large box scale $\Lambda = [-\ell/2, \ell/2]^2$, where we prove most of the results and by the sliding localization techniques we integrate over all these boxes to prove the lower bound in the whole thermodynamical box;

\item the small box scale $B = [-d \ell/2, d \ell/2]^2$, with $d \ll 1$, where we derive a bound for the number of particles excited out from the condensate, fundamental for the general strategy, obtaining the Bose-Einstein condensation (BEC).

\end{itemize}

The relations

\begin{equation}
d \ell \ll \ell_{\delta}\ll \ell \ll L,
\end{equation}

guarantee that the boxes are in a chain of inclusions.

Here below there is a summary of the strategy of the proof, collecting all the key points.

\begin{enumerate}

\item In Section~\ref{subsec:GCE} we reformulate the problem in a grand canonical setting, adding a chemical potential to the Hamiltonian, in order to control the distribution of particles. We also reduce to potentials which are compactly supported and with norm\footnote{The power $\frac{1}{8}$ is not optimal but chosen for convenience, in particular to be in agreement with \eqref{eq:M_large_matrices}.} $\|v\|_1 \leq Y^{-1/8}$, using the analysis of the scattering equation from Section~\ref{section:scattering}. Theorem~\ref{thm:main_lower} is shown to be a consequence of Theorem~\ref{thm:grancanonicaltherm}.

\item In order to prove Theorem~\ref{thm:grancanonicaltherm}, in Section~\ref{subsec:localization_large_boxes} we use a sliding localization technique to reduce the problem from the thermodynamical box $\Omega$ to the large box $\Lambda$. We derive a lower bound where the original Hamiltonian is bounded by the integral of large box Hamiltonians with varying position on the full space. The main result is then reduced to the proof of an analogous lower bound for a Hamiltonian on $\Lambda$, namely Theorem~\ref{thm:largebox_lower}. The next sections focus on this proof.

\item We split the potential energy on the large box in Section~\ref{subsec:splitpot} by means of projectors $P$ and $Q$ onto and outside the condensate, respectively. 
This idea has its roots in \cite{MR1301362}.
The splitting produces terms involving from $0$ to $4$ $Q$ projectors which can be interpreted as follows. The two body interaction given by $v$ can be seen as producing two particles with new outcoming momenta from two particles with given incoming momenta. The terms with $j$ $Q$'s in the Hamiltonian correspond to the case when $j$ non-zero momenta are involved in this interaction.

We replace some of the terms in $v$ by $g$, the error being absorbed by a positive term $\mathcal Q_4^{\text{ren}}$. This overcomes the difficulty, well known in the literature, to deal with the error made by substituting $\widehat v_0$ by $\widehat g_0$  in the leading term of the expansion. Since $\widehat g_0$ is much smaller than $\widehat v_0$, this can be interpreted as a renormalization procedure.

\item In \cite{YY} it is underlined how the interaction of the so-called \textit{soft pairs} contributes significantly to the energy. These correspond to two interacting high-momenta producing one $0$-momentum and one low-momentum. This is encoded in the action of the $3Q$ term, from which the soft pairs contribution emerges after two localizations. The first one (low outcoming momentum) is proved in Section~\ref{sec:Q3loc} and the second one (high incoming momenta) in Lemma~\ref{lem:Q3highloc}, the latter being easier treated in second quantization.

\item In Section~\ref{sec:apriori} we show that we can restrict the action of the Hamiltonian to vectors with a bounded number of low excitations (Theorem~\ref{thm:excitationrestriction}). This localization produces errors which are estimated in Appendix~\ref{app:proofd1d2}. It is based on a priori bounds on the number of excitations (Theorem~\ref{thm:excitationsbound}) which are proven in Appendix~\ref{app:low-smallbox}. In this Appendix we use a localization to small boxes $B$ (of lengthscale $d \ell \ll \rho^{-1/2} Y^{-1/2}$), in which we prove  Bose-Einstein condensation.

\item Section~\ref{sec:c-number} contains lower bounds that use a second quantization formalism in momentum space. We first write the Hamiltonian in this formalism in Section~\ref{subsec:sqh}. Then we use the $c-$number substitution in Section~\ref{subsec:c-number}, thus reducing to a problem of minimization for particles outside the condensate. The operators related to the condensate act as numbers over the class of coherent states over which we minimize. 
After this procedure we arrive at an operator which contains a constant term, a quadratic operator which we call the \textit{Bogoliubov Hamiltonian}, as well as other terms up to order $3$, in creation and annihilation operators of non-zero momenta, that have to be controlled.

\item In Section~\ref{sec:bogestimates} we distinguish the two cases where the density of particles in the condensate $\rho_z$ is far or close to $\rho_{\mu}$. In the first case, analyzed in Section~\ref{subsec:rhofar}, the associated energy is high enough to be bounded from below and to compensate the error in the leading term. In the second case a more careful analysis is needed. We use standard techniques, collected in Appendix~\ref{sec:bogdiag}, to diagonalize the Bogoliubov Hamiltonian and obtain a diagonalized quadratic Hamiltonian, useful for lower bounds, summed to an integral. This last integral is calculated in Appendix~\ref{app:bogintegral}, and we show how it gives both the compensation of the error in the leading term and the desired second order contribution. What remains at this point is to show how the remainders, including the localized $3Q$ term, are error terms, and this is the content of the technical Section~\ref{subsec:Q3_technical}. There we show how the contribution of the soft pairs is compensated by the positive Hamiltonian obtained by the diagonalization procedure.

\item Finally, in Section~\ref{subsec:proof_lower_final} we use all the previous results to give a proof of Theorem~\ref{thm:largebox_lower}, with the choices of the parameters in Appendix~\ref{app:parameters}, where all the conditions used to prove the lower bound are collected.

\item In the proof we need two technical estimates, namely \eqref{eq:kinetic_momentum_comparison}, \eqref{eq:kinetic_momentum_comparison2} and \eqref{eq:corF6}, which are taken from the 3D case and are independent of dimension. They are only stated and we refer to \cite{FS2} for the proof.

\end{enumerate}

 \section{The Scattering Solution in \texorpdfstring{$2$}{2} dimensions}\label{section:scattering}
\subsection{Basic theory}
In this section we establish the notation and results surrounding the two dimensional two body scattering problem. The standard properties of the scattering solutions stated below are well known and can be found in \cite[Appendix A]{greenbook}. We will only consider radial and positive potentials $v:\mathbb{R}^2\rightarrow [0,\infty]$, furthermore if $v$ is compactly supported we denote by $R$ the radius of the support of $v$, \textit{i.e.}, $v(x)=0$ if $\vert x\vert \geq R$. 
\begin{definition}
	For a compactly supported $v$ its scattering length $a=a(v)$ is defined as
	\begin{equation}\label{variational definition of the scattering length}	\frac{2\pi}{\log(\frac{\widetilde{R}}{a})}=\inf\Big\{\int_{B(0,\widetilde{R})}\Big(\vert \nabla \varphi\vert^2+\frac{1}{2}v\varphi^2 \Big)\dd x\quad \Big\vert \quad \varphi\in H^1(B(0,\widetilde{R})),\quad \varphi\vert_{\partial B(0,\widetilde{R})}=1\Big\},
	\end{equation}
	where $\widetilde{R}> R$ is arbitrary.
\end{definition}
By the positivity of the right hand side we find $a\leq R$. It is also easy to verify that $a$ is an increasing function of $v$ and is independent of $\widetilde{R}>R$. Furthermore for any $\widetilde{R}$ the above functional has a unique minimizer $\varphi_{v,\widetilde{R}}=\log(\frac{\widetilde{R}}{a(v)})^{-1}\varphi_v^{(0)}(x)$, where, for $v\in L^1(\mathbb{R}^2)$, we have
\begin{equation} \label{scattering_equation}
	-\Delta \varphi^{(0)}_v+\frac{1}{2}v\varphi^{(0)}_v =0 \quad \text{on $\R^2$},
\end{equation}
in the distributional sense. Furthermore,
$$\varphi^{(0)}_v(r)=\log\Big(\frac{r}{a(v)}\Big), \quad \text{for $r\geq R$},$$
and $\varphi^{(0)}_v$ is a monotone, non-decreasing and non-negative, radial function. We will omit the $v$ in the notation of the scattering length if the potential is clear from the context.

The logarithm in the $2D$-scattering solution is clearly unbounded for large values of $r$. This is a major difference to the $3D$ behaviour (where the scattering solution behaves as $1-\frac{a}{r}$ at infinity).
Therefore the scattering solution normalized to $1$ at a certain length $\widetilde{R}$ is of much greater importance. Using the parameter 
\begin{equation}\label{eq:delta}
	\delta=\frac{1}{2}\log\Big(\frac{\widetilde{R}}{a}\Big)^{-1},\qquad \text{\textit{i.e.}} \qquad \widetilde{R}=ae^{\frac{1}{2\delta}},
\end{equation}we define on $\mathbb{R}^2$
\begin{align}\label{eq:scat_defs}
	\varphi=\varphi_{v,\delta} = 2 \delta \varphi^{(0)}, \qquad \omega = 1- \varphi,
	\qquad g = v \varphi=v(1-\omega).
\end{align}
Clearly,
\begin{align}\label{eq:ScatOmega}
	-\Delta \omega = \frac{1}{2} g,
\end{align}
and, using the divergence theorem,
\begin{align}\label{eq:IntegralScattLength}
	\int g \,\dd x = 8\pi \delta.
\end{align}
We remark here again a difference between the $2D$ and $3D$ case: in $3D$, $\varphi$ would be normalized to $1$ at infinity and \eqref{eq:IntegralScattLength} would have an $a$ instead of $\delta$. 
\begin{remark}[On the parameter $\delta$] \label{remark.delta}
We clearly have some freedom in the choice of $\delta$, which amounts to determine a normalization lengthscale $\widetilde R$ for $\varphi$. Throughout the paper, we will need $\delta$ to be of the same order as $Y = \vert \log (\rho a^2) \vert^{-1}$, namely
\begin{equation}\label{eq. standing assumption on delta}
	\frac{Y}{2}\leq \delta\leq 2 Y, \qquad \text{or, equivalently, } \quad  (\rho a^2)^{-1/4} \leq \frac{\widetilde R}{a} \leq (\rho a^2)^{-1}.
\end{equation}
With this condition we can always exchange $Y$ and $\delta$ when estimating errors. We thus get upper and lower bounds on the energy depending on the parameter $\delta$. In both cases, it turns out that the optimal choice is $\delta = \delta_0$, which corresponds to $\frac{\widetilde R}{a} = (\rho a^2 Y)^{-1/2}$. See also Remarks~\ref{remark.delta.up} and~\ref{remark.delta.low}.
\end{remark}
\subsection{Potentials without compact support}
\begin{definition}
	For a potential $v$ without compact support the scattering length is defined as
	\[a(v)=\lim_{n\rightarrow \infty}a(v\one_{B(0,n)}).\]
\end{definition}
Since $a$ is an increasing function of $v$ the limit exits if and only if $\{a(v\one_{B(0,n)})\}_n$ is bounded, which by \cite[Lemma 1]{Landon_2012} is true if and only if there exists a $\widetilde{b}>0$ such that
\[\int_{\{\vert x\vert >\widetilde{b}\}}v(x)\log\Big(\frac{\vert x\vert }{\widetilde{b}}\Big)^2 \dd x<\infty.\]

We need to localize our potentials to have compact support. The next result estimates the change this localization induces in the scattering length.

\begin{lemma}\label{lemma: cutting potential so have compact support}
	For a potential $v$ with finite scattering length $a$ and $R>a$, let $v_R=\one_{B(0,R)}v$ and $a_R$ be its associated scattering length. Then,
	\begin{equation}\label{eq. scattering length of non-compact potential}
		0\leq \frac{2\pi}{\log(\frac{R}{a})}-\frac{2\pi}{\log(\frac{{R}}{a_R})}\leq \frac{1}{2} \int_{\{\vert x\vert > R\}}v(x)\frac{\log(\frac{\vert x\vert}{a})^2}{\log(\frac{R}{a})^2}\dd x.
	\end{equation}
\end{lemma}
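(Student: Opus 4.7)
The left inequality $0\leq 2\pi/\log(R/a)-2\pi/\log(R/a_R)$ is immediate from monotonicity of the scattering length: $v_R\leq v$ gives $a_R\leq a$, hence $\log(R/a_R)\geq\log(R/a)>0$. For the right inequality, my plan is to plug a carefully designed trial function into the variational formula \eqref{variational definition of the scattering length} for $a(v)$. I treat first the case when $v$ is compactly supported; the general case will follow by taking $v_n:=v\one_{B(0,n)}$, applying the inequality to $v_n$, and passing to the limit $n\to\infty$ via dominated convergence (the finite-scattering-length hypothesis on $v$ provides the needed integrability on $\{|x|>R\}$).

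Suppose $v$ has support in $B(0,M)$ and fix $\widetilde R>M$. Let $\psi_R$ denote the scattering solution of $v_R$ normalized so that $\psi_R(r)=\log(r/a_R)$ for $r\geq R$, and set
\begin{equation*}
\phi(x) = \begin{cases} c\,\psi_R(x), & |x|\leq R,\\ \log(|x|/a)/\log(\widetilde R/a), & R\leq |x|\leq\widetilde R, \end{cases} \qquad c := \frac{\log(R/a)}{\log(R/a_R)\log(\widetilde R/a)},
\end{equation*}
where $c$ is chosen so that the two pieces agree at $|x|=R$. Then $\phi|_{\partial B(0,\widetilde R)}=1$, and the exterior piece is the $2D$ harmonic profile one would see in the scattering of a hypothetical potential supported in $B(0,R)$ with scattering length $a$. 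The crucial point is that we use $a$ (not $a_R$) on the exterior, which is exactly what will make the $\widetilde R$-dependent contributions cancel below. Using the identity $\int_{B(0,R)}(|\nabla\psi_R|^2+\tfrac12 v_R\psi_R^2)\,\dd x = 2\pi\log(R/a_R)$ (Green's formula applied to the scattering equation \eqref{scattering_equation} for $v_R$), a direct computation gives
\begin{equation*}
\int_{B(0,\widetilde R)}\!\bigl(|\nabla\phi|^2+\tfrac12 v\phi^2\bigr)\dd x = \frac{2\pi}{\log(\widetilde R/a)^2}\left[\frac{\log(R/a)^2}{\log(R/a_R)} + \log(\widetilde R/R)\right] + \frac{1}{2\log(\widetilde R/a)^2}\int_{R<|x|\leq\widetilde R}v(x)\log(|x|/a)^2\,\dd x.
\end{equation*}

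Inserting this into the variational upper bound $2\pi/\log(\widetilde R/a)\leq\int_{B(0,\widetilde R)}(|\nabla\phi|^2+\tfrac12 v\phi^2)\,\dd x$, multiplying through by $\log(\widetilde R/a)^2/(2\pi)$, and using $\log(\widetilde R/a)=\log(R/a)+\log(\widetilde R/R)$ to cancel the $\log(\widetilde R/R)$ terms, I am left with $\log(R/a)\cdot[\log(a/a_R)/\log(R/a_R)]\leq \frac{1}{4\pi}\int_{|x|>R}v\log(|x|/a)^2\,\dd x$, which rearranges (after multiplication by $2\pi/\log(R/a)^2$) into exactly \eqref{eq. scattering length of non-compact potential}. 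The main difficulty is identifying this mixed ansatz: the more natural candidate, namely the scattering solution of $v_R$ itself rescaled to $1$ on $\partial B(0,\widetilde R)$, produces only a weaker estimate with $\log(|x|/a_R)^2$ in the integrand and with a $\log(\widetilde R/a_R)^{-2}$ prefactor (which does not simplify to $\log(R/a)^{-2}$ as the lemma requires). Calibrating the exterior harmonic profile with the a priori unknown full scattering length $a$ is what drives the cancellation and gives the sharp form of the bound.
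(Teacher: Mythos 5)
Your proposal is correct and uses essentially the same trial function as the paper: after aligning normalizations (your $c\psi_R$ on $B(0,R)$ equals the paper's $\varphi_1\cdot\log(R/a)/\log(\widetilde R/a)$, and the exterior profile $\log(|x|/a)/\log(\widetilde R/a)$ matches theirs with $\widetilde R=an$), the computation and the cancellation of the $\log(\widetilde R/R)$ terms are identical. The only organizational difference is that you first treat compactly supported $v$ with a fixed $\widetilde R$ and then pass to the general case by truncation and dominated convergence, whereas the paper works directly with $v_n=v\one_{B(0,an)}$ in the variational formula and lets $n\to\infty$; both routes are valid and rest on the same key idea of calibrating the exterior harmonic profile with $a$ rather than $a_R$.
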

\begin{proof}
	Let $\varphi_1$ be the scattering solution for $v_R$ normalized at $R$, and let
	\begin{equation}
		\varphi_n(x):=\begin{cases}
			\varphi_1(x)\frac{\log(\frac{R}{a})}{\log(n)}, &\vert x\vert \leq R,\\
			\frac{\log(\frac{\vert x\vert}{a})}{\log(n)}, &\vert x\vert \geq R.
		\end{cases}
	\end{equation}
	Notice that $\varphi_n$ is normalized at $a\cdot n$ and continuous. We use it as a trial function in the variational problem of $v_n=\one_{B(0,a\cdot n)}v$, with $n\cdot a>R$, to get (with $a_n:=a(v_n)$)
	\begin{equation}\label{eq:proof_scatt_prop_logn2}
		\frac{2\pi}{\log(\frac{a\cdot n}{a_n})}\leq\int_{\{\vert x\vert < R\}}\Big( \vert \nabla \varphi_n\vert^2+\frac{1}{2}v_n\varphi_n^2\Big) \dd x+\int_{\{R<\vert x\vert < a\cdot n\}}\Big( \vert \nabla \varphi_n\vert^2+\frac{1}{2}v_n\varphi_n^2\Big) \dd x.
	\end{equation}
	Since $\varphi_n$ is just a multiple of the scattering solution of $\varphi_1$ inside $R$ the first integral gives
\begin{equation}\label{eq:proof_scattpropr_logn}
	\int_{\{\vert x\vert < R\}} \Big(\vert \nabla \varphi_n\vert^2+\frac{1}{2}v_n\varphi_n^2 \Big)\dd x=\frac{2\pi}{\log(\frac{R}{a_R})}\frac{\log(\frac{R}{a})^2}{\log(n)^2}.
\end{equation}
	The second term is directly calculated using the explicit formula for $\varphi_n$,
	\begin{equation}
		\int_{\{R<\vert x\vert < a\cdot n\}}\Big(\vert \nabla \varphi_n\vert^2+\frac{1}{2}v_n\varphi_n^2\Big) \dd x\leq \frac{2\pi\log(\frac{a\cdot n}{R})}{\log(n)^2}+\frac{1}{2\log(n)^2}\int_{\{\vert x\vert >R\}}v\log\Big(\frac{\vert x\vert}{a}\Big)^2.
	\end{equation}
	By \eqref{eq:proof_scattpropr_logn}, multiplying \eqref{eq:proof_scatt_prop_logn2} through with $\log(n)^2$ and letting $n\rightarrow \infty$, whereby $a_n\rightarrow a$, yields 
	\[2\pi\log\Big(\frac{R}{a}\Big)\leq \frac{2\pi\log(\frac{R}{a})^2}{\log(\frac{R}{a_R})}+\frac{1}{2}\int_{\{\vert x\vert >R\}}v\log\Big(\frac{\vert x\vert}{a}\Big)^2\dd x.\]
	The result then follows by dividing through with $\log(\frac{R}{a})^2$. 
\end{proof}
\subsection{Compactly supported potentials with large integrals}
We state and prove here in the $2D$ setting a similar approximation result as the one found in \cite[Theorem 1.6]{FS2} for the scattering length in $3D$.
\begin{lemma}\label{lemma. large integral potential} 
	For a radial, positive $v\in L^1(\mathbb{R}^2)$ with support contained in $B(0,R)$ there exists, for any $T>0$, a $v_T:\mathbb{R}^2\rightarrow [0,+\infty]$ satisfying 
	\begin{equation}
		0\leq v_T(x)\leq v(x), \quad \text{ for all } x\in\mathbb{R}^2, \text{ and }\qquad \int v_T\leq 4 \pi T,\
	\end{equation}
	and such that
	\begin{equation}		
			\frac{2\pi}{\log(\frac{R}{a})}-\frac{2\pi}{\log(\frac{R}{a_T})}\leq \frac{2\pi}{\log(\frac{R}{a})^2T}.
	\end{equation}
\end{lemma}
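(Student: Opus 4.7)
The plan is to reduce to the case $\int v > 4\pi T$ (in the complementary regime we may take $v_T = v$ and the conclusion is immediate). In this regime I would construct $v_T$ as a radial spatial truncation from the inside: $v_T(x) := v(x)\,\mathbf{1}_{|x|\geq s}$, with $s \in (0, R)$ chosen, by continuity of the decreasing map $s \mapsto \int_{|x|\geq s} v$, so that $\int v_T = 4\pi T$. Then $0 \leq v_T \leq v$, $v_T$ is radial, and $\supp v_T \subset B(0, R)$.

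To compare the scattering lengths I would use $\varphi_T$, the scattering solution of $v_T$ normalized at $\partial B(0, R)$, as a trial function in the variational characterization \eqref{variational definition of the scattering length} for $v$. Since $v = v_T$ on $\{|x|\geq s\}$, this yields
\[
\frac{2\pi}{\log(R/a)} - \frac{2\pi}{\log(R/a_T)} \leq \frac{1}{2}\int_{B(0, s)} v\,\varphi_T^2 \, \dd x.
\]
A key observation is that $\varphi_T$ is harmonic on $B(0, s)$ (because $v_T \equiv 0$ there); by radiality and boundedness at the origin $\varphi_T$ is therefore constant on $B(0, s)$, equal to $c_T := \varphi_T(s)$, whence the right-hand side reduces to $(c_T^2/2)\int_{B(0, s)} v$. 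To control $c_T$ I would use the flux identity $\int v_T \varphi_T \, \dd x = 4\pi/\log(R/a_T)$, obtained by integrating $\Delta\varphi_T = \tfrac{1}{2} v_T \varphi_T$ over $B(0, R)$ and using the exterior form $\varphi_T(r) = \log(r/a_T)/\log(R/a_T)$. Combined with $\varphi_T \geq c_T$ on $\supp(v_T)$ and $\int v_T = 4\pi T$, this gives $c_T \leq 1/(T \log(R/a))$.

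The hardest step will be converting these two bounds into the target estimate $(2\pi)/(T \log(R/a)^2)$: the crude combination $(c_T^2/2)\int_{B(0, s)} v \leq (\int v)/(2 T^2 \log(R/a)^2)$ is insufficient when $\int v \gg T$, as occurs in the near-hard-core regime. To sharpen it I would exploit that $\varphi$ and $\varphi_T$ both solve the same radial scattering equation $(r u')' = \frac{r}{2} v u$ on the annulus $\{s \leq |x| \leq R\}$. Writing both in the basis $(\psi, \chi)$ of solutions on $[s, R]$ determined by $\psi(s) = 1, \psi(R) = 0$ and $\chi(s) = 0, \chi(R) = 1$, the matching conditions $\varphi_T'(s) = 0$ and $\varphi_T(R) = \varphi(R) = 1$ together with the exterior asymptotics lead to the exact identity
\[
\frac{2\pi}{\log(R/a)} - \frac{2\pi}{\log(R/a_T)} = 2\pi R\,|\psi'(R)|\,(c_T - \varphi(s)).
\]
Subharmonicity of $\psi$ (its Laplacian equals $\tfrac{1}{2} v \psi \geq 0$) yields $\psi \leq \log(R/|\cdot|)/\log(R/s)$ by the maximum principle, and comparing boundary derivatives at $|x| = R$ gives $R|\psi'(R)| \leq 1/\log(R/s)$. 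Combining with $c_T - \varphi(s) \leq c_T \leq 1/(T \log(R/a))$ and choosing $s \leq a$ (so that $\log(R/s) \geq \log(R/a)$) closes the argument. The remaining subcase $\int_{|x|\geq a} v > 4\pi T$, which would force $s > a$, I would handle by a modified construction — for instance rescaling $v$ on $\{|x|\geq a\}$ by the factor $(4\pi T)/\int_{|x|\geq a} v$ to enforce the integral constraint directly while keeping the support (and hence the scattering length) essentially unchanged.
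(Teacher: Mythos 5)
Your construction of $v_T$ as an inner radial truncation (your $s$ is the paper's $R_T$) and your flux bound $c_T=\varphi_T(s)\leq 1/(T\log(R/a))$ match the paper. You correctly diagnose that plugging $\varphi_T$ itself into the variational principle produces the error $\tfrac{1}{2}c_T^2\int_{B(0,s)}v$, which is unbounded for near-hard-core potentials. The genuine gap is in the final paragraph, where $s>a$: rescaling $v$ by a factor $4\pi T/\int_{\{|x|\geq a\}}v$ on $\{|x|\geq a\}$ does \emph{not} keep the scattering length ``essentially unchanged.'' Controlling the change of scattering length under such a multiplicative weakening of $v$ is precisely the type of estimate the lemma is meant to supply, so this step is circular, and this case does occur (e.g.\ for $v=M\one_{\{|x|\leq a_0\}}$ with $M$ large). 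Your intermediate ODE-basis argument (write $\varphi,\varphi_T$ in the fundamental system on $[s,R]$, use the subharmonic comparison $\psi\leq\log(R/|\cdot|)/\log(R/s)$ to bound $R|\psi'(R)|$) is correct as far as it goes, but it caps the gain at a factor $1/\log(R/s)$, which is only $\geq 1/\log(R/a)$ when $s\leq a$.

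The paper avoids the case split entirely by a different choice of trial function: instead of using $\varphi_T$ on all of $B(0,\widetilde R)$, it uses
\[
u=\one_{\{|x|>R_T\}}\bigl(\varphi_T-\omega_T\,\varphi_T(R_T)\bigr),\qquad
\omega_T(x)=1-\frac{\log(|x|/R_T)}{\log(\widetilde R/R_T)},
\]
which is continuous, equals $1$ at $\widetilde R$, and vanishes identically on $B(0,R_T)$. Because $u\equiv 0$ where $v$ still lives but $v_T$ does not, the term $\tfrac12\int_{B(0,R_T)}v\,u^2$ disappears altogether, and the only extra cost is
\[
E_3=\varphi_T(R_T)^2\int\Bigl(|\nabla\omega_T|^2+\tfrac12 v_T\omega_T^2\Bigr)
\leq \varphi_T(R_T)^2\Bigl(\tfrac{2\pi}{\log(\widetilde R/R_T)}+2\pi T\Bigr),
\]
where $v_T$ (not $v$) appears, so $\int v_T=4\pi T$ gives the right scale. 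Combined with $\varphi_T(R_T)\leq 1/(T\log(\widetilde R/a_T))$, the cross term $E_2$ vanishing by harmonicity of $\varphi_T$ inside $B(0,R_T)$, and a monotonicity-in-$\widetilde R$ argument followed by $\widetilde R\to\infty$ to kill the $1/\log(\widetilde R/R_T)$ term, this closes the estimate uniformly in $R_T$. The key idea you missed is to sacrifice the value of the trial function inside $B(0,R_T)$ (setting it to zero by the harmonic correction $\omega_T$) rather than pay for the potential there.
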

\begin{proof}
	Due to the integrability assumption on $v$ we may define $R_T=\inf\{R'>0: \int_{\{\vert x\vert\geq R'\}}v\dd x< 4 \pi T\}$ and
	\begin{equation}
		v_T:=v\one_{\{\vert x\vert >R_T\}}.
	\end{equation}
	Clearly,
	\begin{equation}\label{eq. integration of vt}
		\int v_T=4 \pi T.
	\end{equation}
	Also, we may assume $R_T>0$. Otherwise there is nothing to prove. 
	
	Let $\varphi$ be the scattering solution of $v$ and $\varphi_T$ the scattering solution of $v_T$ both normalized at $\widetilde{R}>R$. We have from \eqref{eq:IntegralScattLength}, using that $\varphi_T$ is a non-decreasing function,
	\[\frac{4\pi}{\log(\frac{\widetilde{R}}{a_T})}=\int v_T\varphi_T\dd x\geq \varphi_T(R_T)\int v_T=4\pi  \varphi(R_T)T,\]
	and hence
	\begin{equation}\label{eq. bound on phi(Rt)}
		\varphi(R_T)\leq \frac{1}{\log(\frac{\widetilde{R}}{a_T})T}.
	\end{equation}
	Next we define
	\[u=\one_{\{\vert x\vert >R_T\}}(\varphi_T-\omega_T\varphi_T(R_T))\qquad \text{where}\quad \omega_T(x)=1-\frac{\log(\frac{\vert x\vert}{R_T})}{\log(\frac{\widetilde{R}}{R_T})}.\]
	Observe that $u(\widetilde{R})=1$ and we may therefore apply it as a trial function in the functional for $a$ to get 
	\begin{equation}\label{eq. integration of u}
		\frac{2\pi}{\log(\frac{\widetilde{R}}{a})}\leq \int_{\{\vert x\vert <\widetilde{R}\}}\Big(\vert\nabla u\vert^2+\frac{1}{2}vu^2\Big)\dd x:=E_1+E_2+E_3,
	\end{equation}
	with
	\begin{equation*}
		\begin{aligned}
			&E_1=\int_{\{R_T <\vert x\vert <\widetilde{R}\} } \Big(\vert \nabla \varphi_T\vert^2+ \frac{1}{2}v_T\varphi_T^2 \Big) \dd x=\frac{2\pi}{\log(\frac{\widetilde{R}}{a_T})},\\
			&E_2=-2\varphi_T(R_T)\int_{\{R_T <\vert x\vert <\widetilde{R}\}} \Big(\nabla \varphi_T\nabla \omega_T+ \frac{1}{2}v_T\varphi_T\omega_T \Big) \dd x=0,\\
			&E_3=\varphi_T(R_T)^2\int_{\{R_T <\vert x\vert <\widetilde{R}\}} \Big(\vert \nabla\omega_T \vert^2+ \frac{1}{2}v_T\omega_T^2 \Big)\dd x.
		\end{aligned}
	\end{equation*}
	For $E_2$ we integrated by parts and used that $\varphi_T$ is harmonic inside $B(0,R_T)$, thus constant, which makes the boundary term vanish. For $E_3$ we use that $\omega_T\leq1 $ on the given interval, so
	combining \eqref{eq. integration of vt}, \eqref{eq. bound on phi(Rt)} and  \eqref{eq. integration of u}  yields
	\begin{equation}\label{eq. E_3}
		\frac{2\pi}{\log(\frac{\widetilde{R}}{a})}-\frac{2\pi}{\log(\frac{\widetilde{R}}{a_T})}\leq E_3\leq\frac{2\pi}{\log(\frac{\widetilde{R}}{a_T})^2}\bigg( \frac{1}{\log(\frac{\widetilde{R}}{R_T})T^2}+\frac{1}{T}\bigg).
	\end{equation}
Using that $a\geq a_T$ we may replace $a_T$ with $a$ on the right hand side. Secondly, we observe that the function  
	\[\bigg(\frac{1}{\log(\frac{\widetilde{R}}{a})}-\frac{1}{\log(\frac{\widetilde{R}}{a_T})}\bigg)\log\Big(\frac{\widetilde{R}}{a}\Big)^2,\]
	is increasing in $\widetilde{R}$ so we may replace $\widetilde{R}$ with $R$ in the above expression and use \eqref{eq. E_3} to get
		\begin{equation*}
		\frac{2\pi}{\log(\frac{R}{a})}-\frac{2\pi}{\log(\frac{R}{a_T})}\leq \frac{2\pi}{\log(\frac{R}{a})^2}\bigg( \frac{1}{\log(\frac{\widetilde{R}}{R_T})T^2}+\frac{1}{T}\bigg).
	\end{equation*}
Now the result follows by letting $\widetilde{R}$ go to infinity. 
\end{proof}
We are ready to prove the main theorem of this section which gives us the ability to deal with a wide range of potentials including, most notably, the hard core.
\begin{theorem}\label{theorem. main potential reduction}
	For a radial, positive potential $v:\mathbb{R}^2\rightarrow [0,\infty]$ with finite scattering length $a$ there exists, for any $R>a$ and $T,\epsilon>0$, a potential $v_{T,R,\epsilon}$ such that
	\begin{equation}\label{eq. conditions on new potential}
		\supp(v_{T,R,\epsilon})\subset B(0,R),\qquad 0\leq v_{T,R,\epsilon}(x)\leq v(x),\qquad 
		\int v_{T,R,\epsilon}\leq4\pi T,
	\end{equation}
	and its scattering length $a_{T,R,\epsilon}$ satisfies
	\begin{equation}\label{eq. scattering condtions}
		\frac{2\pi}{\log(\frac{R}{a})}-\frac{2\pi}{\log(\frac{R}{a_{T,R,\epsilon}})}\leq \frac{1}{\log(\frac{R}{a})^2}\bigg(\frac{2\pi}{T}(1+\epsilon)+
		\frac{1}{2} \int_{\{\vert x\vert > R\}}v(x)\log\Big(\frac{\vert x\vert}{a}\Big)^2\bigg) \dd x.
	\end{equation}
\end{theorem}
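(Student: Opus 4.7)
My plan is to obtain $v_{T,R,\epsilon}$ by chaining the two preceding lemmas: first truncate $v$ to compact support via Lemma~\ref{lemma: cutting potential so have compact support}, then shrink the $L^1$-norm of the truncation via Lemma~\ref{lemma. large integral potential}. The role of $\epsilon$ is to absorb a small error coming from the treatment of non-integrable potentials (the hard core being the prototype).

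Concretely, set $v_R := v\,\one_{B(0,R)}$ and $a_R := a(v_R) \leq a$. Lemma~\ref{lemma: cutting potential so have compact support} gives, after dividing by $\log(R/a)^2$,
\[
\frac{2\pi}{\log(R/a)}-\frac{2\pi}{\log(R/a_R)} \leq \frac{1}{2\log(R/a)^2}\int_{\{|x|>R\}}v(x)\log\Big(\frac{|x|}{a}\Big)^2 \dd x.
\]
Assuming for the moment $v_R \in L^1(\R^2)$, Lemma~\ref{lemma. large integral potential} applied to $v_R$ with parameter $T$ produces a potential $v_{T,R,\epsilon} := (v_R)_T$ satisfying $0 \leq v_{T,R,\epsilon} \leq v_R \leq v$, $\supp v_{T,R,\epsilon} \subset B(0,R)$, $\int v_{T,R,\epsilon} \leq 4\pi T$, and
\[
\frac{2\pi}{\log(R/a_R)}-\frac{2\pi}{\log(R/a_{T,R,\epsilon})} \leq \frac{2\pi}{\log(R/a_R)^2\,T} \leq \frac{2\pi}{\log(R/a)^2\,T},
\]
where the last bound uses the monotonicity $a_R \leq a$. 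Adding the two inequalities telescopes the differences of $2\pi/\log(R/\cdot)$ and yields the conclusion, with $\epsilon = 0$ in this integrable case.

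The main subtlety, and the step where the factor $(1+\epsilon)$ is needed, is when $v_R \notin L^1$ (e.g.\ hard core). In that situation I would first regularize by bounded truncations $v^{(n)} := \min(v,n)\,\one_{B(0,R)} \in L^1(\R^2)$. By monotonicity of the scattering length, $a(v^{(n)})$ is non-decreasing in $n$ and bounded above by $a$; using the variational characterization \eqref{variational definition of the scattering length} one checks that $a(v^{(n)})\to a$ as $n\to\infty$, so for any prescribed $\epsilon > 0$ one can fix $n$ large enough that $\log(R/a(v^{(n)}))^{-2}\leq (1+\epsilon)\log(R/a)^{-2}$. Applying Lemma~\ref{lemma. large integral potential} to $v^{(n)}$ and setting $v_{T,R,\epsilon} := (v^{(n)})_T$ (which still satisfies $v_{T,R,\epsilon} \leq v^{(n)} \leq v$) reproduces the estimate above with the $(1+\epsilon)$ factor absorbing the $n$-dependent loss. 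Combined with the Lemma~\ref{lemma: cutting potential so have compact support} step this gives \eqref{eq. scattering condtions}. The only real work is this regularization step; everything else is a direct consequence of the two lemmas already in hand.
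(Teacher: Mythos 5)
Your overall structure is the same as the paper's: regularize with $v^{(n)}=\min(v,n)\one_{B(0,R)}$, apply Lemma~\ref{lemma. large integral potential} to $v^{(n)}$, and then combine with Lemma~\ref{lemma: cutting potential so have compact support}. However, your explanation of where the $(1+\epsilon)$ factor enters is wrong in two ways. First, $a(v^{(n)})\to a_R=a(v\one_{B(0,R)})$, \emph{not} $a$ (the potentials $v^{(n)}$ are all supported in $B(0,R)$, so their scattering lengths are bounded by $a_R$, which can be strictly smaller than $a$; if you truly had $a(v^{(n)})\to a$, you wouldn't need Lemma~\ref{lemma: cutting potential so have compact support} at all). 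Second, the condition you propose for fixing $n$, namely $\log(R/a(v^{(n)}))^{-2}\leq(1+\epsilon)\log(R/a)^{-2}$, is automatically true for \emph{every} $n$ with $\epsilon=0$, since $a(v^{(n)})\leq a_R\leq a$ forces $\log(R/a(v^{(n)}))\geq\log(R/a)$. This monotonicity is exactly what lets you bound the right-hand side of Lemma~\ref{lemma. large integral potential} by $\frac{2\pi}{\log(R/a)^2 T}$ without any loss, so $\epsilon$ cannot be doing its work there.

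The $(1+\epsilon)$ is actually needed on the \emph{left}-hand side. Lemma~\ref{lemma. large integral potential} applied to $v^{(n)}$ controls the telescoping difference starting at $\frac{2\pi}{\log(R/a(v^{(n)}))}$, whereas to glue with Lemma~\ref{lemma: cutting potential so have compact support} you need it to start at $\frac{2\pi}{\log(R/a_R)}$. Since $a(v^{(n)})\leq a_R$, replacing $a(v^{(n)})$ by $a_R$ \emph{increases} the quantity you are trying to bound, by the amount $\frac{2\pi}{\log(R/a_R)}-\frac{2\pi}{\log(R/a(v^{(n)}))}$, which tends to $0$ as $n\to\infty$. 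You must therefore choose $n$ large enough that this gap is at most $\epsilon\,\frac{2\pi}{\log(R/a)^2T}$; that is what the $(1+\epsilon)$ absorbs, and this is precisely the paper's ``$a_R^{n_0}$ close enough to $a_R$'' step. Fixing the target of the convergence (to $a_R$) and the role of $n$ (to shrink the left-hand-side replacement error, not the already-automatic right-hand-side bound) closes the gap, and the rest of your argument is fine.
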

\begin{proof}
	 Lemma~\ref{lemma. large integral potential} applied to $v_R^n=\one_{B(0,R)}\min(n,v)$ yields a $v_{R,T}^n$ satisfying all three conditions of \eqref{eq. conditions on new potential} and 
	\begin{equation}
		\frac{2\pi}{\log(\frac{R}{a_R^n})}-\frac{2\pi}{\log(\frac{R}{a_{T,R}^n})}\leq \frac{2\pi}{\log(\frac{R}{a})^2T},
	\end{equation}
	for all $n\in \mathbb{N}$. In the above we used that $a_{T,R}^n\leq a_R^n\leq a_R\leq a$ (where $a_{T,R}^n, a_R^n, a_R$ are the scattering lengths of $v_{T,R}^n, v_{R}^n, v_R$, respectively). Choosing $n_0$ large enough such that $a_R^{n_0}$ is close enough to $a_R$ gives an $a_{T,R,\epsilon}:=a_{T,R}^{n_0}$ satisfying
	\begin{equation}
		\frac{2\pi}{\log(\frac{R}{a_R})}-\frac{2\pi}{\log(\frac{R}{a_{T,R,\epsilon}})}\leq \frac{2\pi}{\log(\frac{R}{a})^2T}(1+\epsilon).
	\end{equation}
	We conclude using \eqref{eq. scattering length of non-compact potential} which gives the integral term of \eqref{eq. scattering condtions}.
\end{proof}

\subsection{Fourier analysis on the scattering equation}
Due to Theorem~\ref{theorem. main potential reduction} we may assume our potentials to be compactly supported and $L^1$, thus making the Fourier transform well defined. The scattering solution $\varphi$ will be the one defined in \eqref{eq:scat_defs} which is normalized to $1$ outside the support of $v$. In order to discuss the Fourier transform of the scattering solution, we recall some standard results surrounding the Fourier transform of the logarithm. We denote by $\mathcal{S}$ and $\mathcal{S}'$ the Schwartz space and the space of tempered distribution on $\mathbb{R}^2$, respectively.
\begin{lemma}\label{lem:FouLog}
	For $D>0$,
	let $L_D$ denote  the tempered distribution given by the function $\log (|x|/D)$ in ${\mathbb R}^2$. The Fourier transform of $L_D$ satisfies
	\begin{align}\label{eq:InclGamma}
		\langle \widehat{L_D} , \varphi \rangle_{{\mathcal S}',  {\mathcal S}} =-(2\pi)  \int_{{\mathbb R}^2} \frac{\varphi(p)-\varphi(0) \one_{\{|p| \leq 2 e^{-\Gamma}D^{-1}\}}}{p^2}\,\dd p,
	\end{align}
	where $\Gamma $ denotes the Euler-Mascheroni constant,
	\begin{align}
		\Gamma:= - \int_0^{\infty} e^{-x} \log x\, \dd x \approx 0.5772.
	\end{align}
\end{lemma}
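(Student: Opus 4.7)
The plan is to reduce first to the case $D=1$ and then to identify $\widehat{L_1}$ by a combination of distributional calculus and a single explicit Gaussian computation. For the reduction, note that $L_D = L_1 - (\log D)\,\one$, so $\widehat{L_D} = \widehat{L_1} - (2\pi)^2 (\log D)\,\delta_0$, and a polar-coordinate calculation gives $\int_{\R^2} (\one_{\{|p|\leq A\}} - \one_{\{|p|\leq B\}})/p^2\,\dd p = 2\pi\log(A/B)$ for any $A,B>0$. This shows the right-hand side of the claimed formula depends on $D$ in exactly the same way, so it suffices to establish the identity for $D = 1$.

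For $D=1$, I would start from the standard distributional identity $\Delta \log|x| = 2\pi\,\delta_0$ in $\R^2$, whose Fourier transform yields $-p^2\,\widehat{L_1} = 2\pi$ in $\mathcal{S}'(\R^2)$. I would then introduce the candidate
\begin{equation*}
T_c(\varphi) := -(2\pi)\int_{\R^2} \frac{\varphi(p) - \varphi(0)\one_{\{|p|\leq c\}}}{p^2}\,\dd p,
\end{equation*}
which is a well-defined tempered distribution for any $c > 0$ (the cutoff renders the integrand integrable at $p=0$, while Schwartz decay handles large $|p|$). A direct check gives $-p^2\,T_c = 2\pi$ as well, so $S := \widehat{L_1} - T_c$ is supported at $\{0\}$ and hence a finite linear combination of derivatives of $\delta_0$. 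Since $L_1$ is radial, both $\widehat{L_1}$ and $T_c$ are $O(2)$-invariant, which reduces $S$ to a combination of the radial distributions $\Delta^k\delta_0$, $k \geq 0$.

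To pin down $S$, I would exploit the scaling law $L_1(\lambda x) = L_1(x) + \log\lambda$. Fourier transforming and testing against $\varphi(\lambda\cdot)$ yields $\langle \widehat{L_1}, \varphi(\lambda\cdot)\rangle = \langle \widehat{L_1}, \varphi\rangle + (2\pi)^2(\log\lambda)\varphi(0)$, and the same relation holds for $T_c$ via the change of variables $q = \lambda p$ together with the polar-coordinate formula above. Thus $\langle S, \varphi(\lambda\cdot)\rangle = \langle S, \varphi\rangle$ for every $\lambda > 0$. Since $\langle \Delta^k\delta_0, \varphi(\lambda\cdot)\rangle = \lambda^{2k}\Delta^k\varphi(0)$, this $\lambda$-invariance forces all coefficients of $\Delta^k\delta_0$ with $k\ge 1$ to vanish, leaving $S = k_c\,\delta_0$ for some scalar $k_c = k_c(c)$.

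Finally, I would determine the cutoff $c = 2e^{-\Gamma}$ making $k_c=0$ by testing against the Gaussian $\varphi(p) = e^{-|p|^2/2}$, whose Fourier transform is $\widehat\varphi(x) = 2\pi e^{-|x|^2/2}$. The left-hand side $\langle L_1, \widehat\varphi\rangle = 2\pi\int_{\R^2}\log|x|\,e^{-|x|^2/2}\,\dd x$ evaluates in polar coordinates, via the substitution $u = |x|^2/2$ and the representation $\Gamma = -\int_0^\infty e^{-u}\log u\,\dd u$, to $2\pi^2(\log 2 - \Gamma)$. The quantity $T_c(\varphi)$ similarly reduces to a one-dimensional integral, which is evaluated by the elementary identity $\int_0^a (e^{-u}-1)/u\,\dd u + \int_a^\infty e^{-u}/u\,\dd u = -\log a - \Gamma$ and produces $2\pi^2(2\log c - \log 2 + \Gamma)$; matching the two yields precisely $c = 2e^{-\Gamma}$. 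The main technical point is the support-and-scaling argument that collapses $S$ onto a scalar multiple of $\delta_0$; an elegant alternative bypassing it is to use $\log|x| = -\frac{d}{d\alpha}\big|_{\alpha=0}|x|^{-\alpha}$ together with the classical Fourier transform of $|x|^{-\alpha}$ on $\R^2$ for $0<\alpha<2$, in which the Euler--Mascheroni constant arises from the Laurent expansion of the Gamma function near zero.
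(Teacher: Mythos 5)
Your proof is correct and complete. The paper itself offers no argument for Lemma~\ref{lem:FouLog}: it is introduced with the remark that it is a ``standard result'' and stated without proof, so there is nothing to compare against on the paper's side. Your strategy---reduce to $D=1$ via the additivity of the logarithm and the polar-coordinate identity $\int (\one_{\{|p|\le A\}}-\one_{\{|p|\le B\}})/p^2\,\dd p = 2\pi\log(A/B)$, then note that both $\widehat{L_1}$ and the candidate $T_c$ solve $-p^2 u = 2\pi$ in $\mathcal{S}'(\mathbb{R}^2)$, so that their difference $S$ is a distribution supported at the origin---is clean. The rotational invariance reducing $S$ to $\sum_k a_k\Delta^k\delta_0$, and the $\lambda$-scaling invariance killing all $k\ge 1$, are both correct, and the final Gaussian computation nails the cutoff constant. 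I checked the two one-variable integrals: $\frac{1}{2}\int_0^\infty e^{-u}\log(2u)\,\dd u = \frac{1}{2}(\log 2 - \Gamma)$ gives $\langle L_1,\widehat\varphi\rangle = 2\pi^2(\log 2 - \Gamma)$, and the identity $\int_0^a (e^{-u}-1)u^{-1}\dd u + \int_a^\infty e^{-u}u^{-1}\dd u = -\log a - \Gamma$ (whose constant is one of the classical integral representations of $\Gamma$) yields $T_c(\varphi) = 2\pi^2(2\log c - \log 2 + \Gamma)$; matching gives $c = 2e^{-\Gamma}$ as claimed. One small point worth making explicit in a write-up: the claim that $O(2)$-invariant distributions supported at a point are polynomials in $\Delta$ applied to $\delta_0$ follows by duality from the fact that $O(2)$-invariant polynomials on $\mathbb{R}^2$ are polynomials in $|x|^2$. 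The alternative route you mention at the end---writing $\log|x| = -\partial_\alpha|_{\alpha=0}|x|^{-\alpha}$ and differentiating the classical formula $\widehat{|x|^{-\alpha}} = \pi\,2^{2-\alpha}\,\Gamma(1-\tfrac{\alpha}{2})\Gamma(\tfrac{\alpha}{2})^{-1}|p|^{\alpha-2}$ in $\alpha$, extracting $\Gamma$ from the digamma function at $1$---is also valid and arguably shorter, at the cost of invoking the Riesz-distribution identity on faith.
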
 
It follows from \eqref{eq:InclGamma} that, for any $f \in \mathcal S$,
\begin{align}\label{eq:convolutionLog}
	\iint_{{\mathbb R}^2 \times {\mathbb R}^2} \overline{f(x)} f(y) \log\Big(\frac{|x-y|}{D}\Big)\, \dd x \dd y = -2\pi
	\int_{{\mathbb R}^2} \frac{|\widehat{f}(p)|^2-|\widehat{f}(0)|^2 \one_{\{|p| \leq 2 e^{-\Gamma}D^{-1}\}}}{p^2}\, \dd p.
\end{align}

Using the notation from \eqref{eq:scat_defs} and \eqref{eq:ScatOmega}, we may compute the Fourier transform of $\omega$. 
In the $3D$ case one gets that $\widehat{\omega}(p) = \frac{\widehat{g}(p)}{2p^2}$, but in $2D$ this formula has to be corrected by a distribution supported at the origin according to Lemma~\ref{lem:FouLog}, see Lemma~\ref{lem:FT-w} below.

\begin{lemma}\label{lem:FT-w}
	Let $\widehat{\omega}$ denote the Fourier transform of $\omega$. Then $\widehat{\omega}$
	is the tempered distribution given by
	\begin{equation}\label{eq:Fou-w}
		\langle \widehat{\omega},u\rangle_{{\mathcal S}',  {\mathcal S}} = \int \frac{\widehat{g}(p) u(p) - \widehat{g}(0) u(0) \one_{\{|p|\leq \ldelta^{-1}\}} }{2p^2} \dd p,\end{equation}
	where, recalling the definition of $\widetilde{R}$ in \eqref{eq:delta},
	\begin{equation}\label{eq:Defldelta}
		\ldelta = \frac{a e^{\Gamma}}{2} e^{\frac{1}{2\delta}} =\frac{1}{2}e^\Gamma\widetilde{R}.
	\end{equation}
\end{lemma}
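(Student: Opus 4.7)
The strategy is to strip off the slowly decaying (logarithmic) part of $\omega$ so as to reduce to a compactly supported distribution whose Fourier transform is manifest. Since $\varphi^{(0)}(r)=\log(r/a)$ for $r\geq R$ and $\varphi=2\delta\varphi^{(0)}$, the identity $2\delta\log(\widetilde R/a)=1$ (which is just \eqref{eq:delta}) gives
\begin{equation*}
\omega(x) = -2\delta\log(|x|/\widetilde R) \qquad \text{for } |x|\geq R.
\end{equation*}
Therefore, introducing the tempered distribution $u(x):=2\delta\log(|x|/\widetilde R)$, the sum $w:=\omega+u$ is supported in $\overline{B(0,R)}$. Even in the hard-core case $w$ is bounded with compact support, so $\widehat w$ is a smooth (even entire) function and the Fourier transforms below are well defined.

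Next I would use the scattering equation to compute $\widehat w$. By \eqref{eq:ScatOmega} and the standard distributional identity $\Delta\log|x|=2\pi\delta_0$ in $\mathbb R^2$, one has
\begin{equation*}
-\Delta w = \tfrac{1}{2}g - 4\pi\delta\,\delta_0 \qquad \text{in } \mathcal S'(\mathbb R^2).
\end{equation*}
Taking the Fourier transform gives $p^{2}\widehat w(p) = \tfrac{1}{2}\widehat g(p)-4\pi\delta$. Evaluating at $p=0$ reproduces $\widehat g(0)=8\pi\delta$, consistent with \eqref{eq:IntegralScattLength}, and the right hand side vanishes to second order at $0$, so by smoothness of $\widehat w$ we may divide through to obtain $\widehat w(p)=(\widehat g(p)-\widehat g(0))/(2p^2)$ for all $p$.

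The final ingredient is $\widehat u$. Applying Lemma~\ref{lem:FouLog} with $D=\widetilde R$, and noting that the definition $\ldelta=\tfrac{1}{2}e^{\Gamma}\widetilde R$ makes the cutoff $2e^{-\Gamma}D^{-1}$ appearing in \eqref{eq:InclGamma} equal to exactly $\ldelta^{-1}$, one finds
\begin{equation*}
\langle \widehat u,\phi\rangle_{\mathcal S',\mathcal S} = -4\pi\delta\int\frac{\phi(p)-\phi(0)\one_{\{|p|\leq \ldelta^{-1}\}}}{p^2}\,\dd p.
\end{equation*}
Writing $\widehat\omega=\widehat w-\widehat u$ as tempered distributions and using $\widehat g(0)=8\pi\delta$ to merge the two cutoff contributions then assembles \eqref{eq:Fou-w}. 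The only subtle point is bookkeeping: the alignment $2e^{-\Gamma}\widetilde R^{-1}=\ldelta^{-1}$ forces the choice $D=\widetilde R$ in Lemma~\ref{lem:FouLog}, which is exactly the normalization length that makes $\omega+u$ compactly supported via $\varphi(\widetilde R)=1$. Once this alignment is observed, the rest is a direct distributional computation.
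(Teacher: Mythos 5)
Your proposal is correct and follows essentially the same route as the paper: decompose $\omega$ into its logarithmic tail $-2\delta\log(|x|/\widetilde R)$ plus a compactly supported remainder, apply Lemma~\ref{lem:FouLog} to the log, and recombine using $\widehat g_0 = 8\pi\delta$. The one point where you are slightly more explicit than the paper is in identifying the Fourier transform of the remainder: you derive $\widehat w(p) = (\widehat g(p)-\widehat g_0)/(2p^2)$ directly from $-\Delta w = \tfrac12 g - 4\pi\delta\,\delta_0$ together with entirety of $\widehat w$, whereas the paper obtains $\widehat\omega(p)=\widehat g(p)/(2p^2)$ away from the origin from \eqref{eq:ScatOmega}, notes $\widehat{\widetilde\omega}\in L^2\cap L^\infty$, and leaves the matching implicit in its concluding sentence; your version usefully makes that step precise.
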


\begin{proof}
	By taking the Fourier transform (in the sense of tempered distributions) on \eqref{eq:ScatOmega} we get $\widehat{\omega}(p) = \frac{\widehat{g}(p)}{2p^2}$ on ${\mathbb R}^2 \setminus \{0\}$. To determine the singular term at $0$, we write, with $\widetilde{\omega} \in L^1\cap L^2$,
	\begin{equation}
		\omega = 1 - 2\delta \log(r/a) + \widetilde{\omega} = - 2\delta \log\Big(\frac{r}{a e^{\frac{1}{2\delta}}}\Big) + \widetilde{\omega}.
	\end{equation}
	So, using the Fourier transform of the logarithm as recalled in \eqref{eq:InclGamma}
	\begin{equation}
		\langle \widehat{\omega},u\rangle_{{\mathcal S}',  {\mathcal S}} 
		= 4\pi \delta \int_{{\mathbb R}^2} \frac{u(p)-u(0) \one_{\{|p| \leq 2 a^{-1} e^{-\Gamma-\frac{1}{2\delta}}\}}}{p^2}\, \dd p + \int \widehat{\widetilde{\omega}}(p) u(p)\, \dd p
	\end{equation}
	where $\widehat{\widetilde{\omega}} \in L^2\cap L^{\infty} ({\mathbb R}^2)$.
	We conclude upon recalling from \eqref{eq:IntegralScattLength} that $8\pi \delta = \widehat{g}_0$.
\end{proof}

Thanks to the previous lemma we are able to prove some important properties of $\widehat{g\omega}(0)$ which are going to be key through all the paper.

\begin{lemma}\label{eq:gdecay}
	The following identity holds
	\begin{align}\label{eq:gomega_equivalence}
		\widehat{g\omega}(0) = \int_{\mathbb{R}^2} \frac{\widehat{g}_k^2 - \widehat{g}_0^2 \one_{\{|k|\leq \ell^{-1}_{\delta}\}}}{2 k^2}\, \dd k
	\end{align}
	and, furthermore, the following bounds hold
	\begin{align}
		\;|\widehat{g\omega}(0)|&\leq C \delta,\label{eq. bound on gw_0}\\
		\bigg\vert\int_{\{|k|\leq \ell_{\delta}^{-1}\}} \frac{\widehat{g}_k^2 - \widehat{g}_0^2}{2 k^2}\, \dd k\bigg\vert &\leq C R^2 \delta^2 \ell^{-2}_{\delta},\label{eq:332}\\
		\bigg\vert\int_{\{|k|\geq \ell_{\delta}^{-1}\}} \frac{\widehat{g}_k^2}{2 k^2}\,\dd k\bigg\vert &\leq C \delta.\label{eq. bound on intg2}
	\end{align}
\end{lemma}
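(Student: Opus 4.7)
The plan is to prove the four assertions in the order \eqref{eq. bound on gw_0}, \eqref{eq:332}, \eqref{eq:gomega_equivalence}, \eqref{eq. bound on intg2}. For \eqref{eq. bound on gw_0}, note that the scattering solution $\varphi = 2\delta\varphi^{(0)}$ is non-negative, non-decreasing and satisfies $\varphi(\widetilde R) = 2\delta\log(\widetilde R/a) = 1$, so $0 \leq \omega = 1-\varphi \leq 1$; combined with $g \geq 0$ and $\int g = 8\pi\delta$ from \eqref{eq:IntegralScattLength}, this gives $0 \leq \widehat{g\omega}(0) = \int g\omega\,\dd x \leq 8\pi\delta$.

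For \eqref{eq:332}, using $\cos(kx) - 1 = -2\sin^2(kx/2)$ and $\sin^2 y \leq y^2$, and that $g$ is real, radial and supported in $B(0,R)$, one obtains
\begin{equation*}
|\widehat g(k) - \widehat g(0)| \leq \frac{k^2}{2}\int |x|^2 g(x)\,\dd x \leq 4\pi\delta R^2 k^2.
\end{equation*}
Combined with $|\widehat g(k) + \widehat g(0)| \leq 2\widehat g(0) = 16\pi\delta$, this yields $|\widehat g(k)^2 - \widehat g(0)^2|/(2k^2) \leq C R^2\delta^2$; integrating over the disk $\{|k|\leq \ell_\delta^{-1}\}$ (area $\pi\ell_\delta^{-2}$) gives \eqref{eq:332}. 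The identity \eqref{eq:gomega_equivalence} is the main Fourier-analytic content and follows from Lemma \ref{lem:FT-w} via a distributional Parseval pairing between $\omega$ and $g$. Since $\widehat g$ is smooth and bounded (as $g$ is compactly supported) but not in general Schwartz, I would apply Lemma \ref{lem:FT-w} to truncated test functions $u_n := \widehat g \chi_n$ with $\chi_n \in C_c^\infty(\mathbb R^2)$ equal to $1$ on $\{|k|\leq n\}$, so that $u_n \in \mathcal S$ and $u_n(0) = \widehat g(0)$ for large $n$. The pairing $\langle \widehat\omega, u_n\rangle$ is then a finite integral over $\{|k|\leq n+1\}$ and equals (by definition of the distributional Fourier transform) $\int \omega(x)\widehat{u_n}(x)\,\dd x$; passing $n\to\infty$, the left-hand side converges to $\int g\omega\,\dd x = \widehat{g\omega}(0)$ (up to the Fourier normalization constant), and dominated convergence on the right, using $|u_n|\leq |\widehat g|\leq 8\pi\delta$, produces \eqref{eq:gomega_equivalence}.

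Finally, \eqref{eq. bound on intg2} follows by rearranging \eqref{eq:gomega_equivalence},
\begin{equation*}
\int_{|k|\geq \ell_\delta^{-1}}\frac{\widehat g(k)^2}{2k^2}\,\dd k = \widehat{g\omega}(0) - \int_{|k|\leq \ell_\delta^{-1}}\frac{\widehat g(k)^2 - \widehat g(0)^2}{2k^2}\,\dd k,
\end{equation*}
and combining \eqref{eq. bound on gw_0} and \eqref{eq:332}: under the standing assumption $R \leq \widetilde R = a e^{1/(2\delta)}$ one has $R^2\ell_\delta^{-2} \leq 4 e^{-2\Gamma}$, so the second term contributes $O(\delta^2) = O(\delta)$. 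The main technical subtlety is the justification of the limiting argument in the proof of \eqref{eq:gomega_equivalence}: the truncation procedure has a slight circularity since absolute convergence of the right-hand integral is essentially the content of \eqref{eq. bound on intg2}; however, each finite-$n$ version is manifestly well-defined, and the bounds \eqref{eq. bound on gw_0}, \eqref{eq:332} together with $|\widehat g|\leq 8\pi\delta$ suffice to control the limit by dominated convergence without invoking \eqref{eq. bound on intg2} a priori.
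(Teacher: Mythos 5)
Your proof is correct and follows essentially the same route as the paper: Lemma~\ref{lem:FT-w} with $u=\widehat g$ for \eqref{eq:gomega_equivalence}, $0\le\omega\le 1$ on $\supp g$ together with $\widehat g_0=8\pi\delta$ for \eqref{eq. bound on gw_0}, a second-order Taylor expansion exploiting the evenness of $g$ for \eqref{eq:332}, and rearrangement of \eqref{eq:gomega_equivalence} for \eqref{eq. bound on intg2}. Your truncation argument supplies rigor the paper leaves implicit in its ``direct application'' of Lemma~\ref{lem:FT-w} (since $\widehat g$ is not Schwartz), and your parenthetical about the Fourier normalization is well taken---with the convention $\widehat f(p)=\int e^{-ipx}f(x)\,\dd x$ the identity should carry a $(2\pi)^{-2}$ factor, as it in fact does when invoked later in Lemma~\ref{lem:gomegaapprox}.
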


\begin{proof}
	Formula \eqref{eq:gomega_equivalence} is given by a direct application of Lemma~\ref{lem:FT-w} choosing $u = \widehat{g}$. The first bound \eqref{eq. bound on gw_0} follows because in the support of $g$, $\omega \leq 1$ and $\widehat{g}_0 = 8\pi\delta$. The last bound \eqref{eq. bound on intg2} follows once we have proved the second one. In order to do that, we consider a Taylor expansion to the second order, due to the symmetry of $g$, which gives 
	\begin{equation}
		\bigg\vert\int_{\{|k|\leq \ell_{\delta}^{-1}\}} \frac{\widehat{g}_k^2 - \widehat{g}_0^2}{2 k^2}\,\dd k\bigg\vert \leq R^2 |\widehat{g}_0|^2 \int_{\{|k|\leq \ell_{\delta}^{-1}\}} \dd k = CR^2 \delta^2 \ell^{-2}_{\delta}.
	\end{equation}
\end{proof}

\subsection{Spherical measure potentials}

For the upper bound, we will change the potential in order to ensure small $L^1$ norm. For a potential $v$ supported in $B(0,R)$ and $b>R$, let \[f(x):=\min\Big(1,\varphi^{(0)}(x)\log\Big(\frac{b}{a}\Big)^{-1} \Big).\]
Thus, $f$ is the scattering solution in $B(0,b)$ normalized at $b$ and extended by one. The new potential $\widetilde{v}$ will then be described by the deviation of $f$ being the actual scattering solution, \textit{i.e.},
\begin{equation}
	\widetilde{v}=2\Big(-\Delta f+\frac{1}{2}vf\Big),
\end{equation}
where the above equality is to be thought of in a distributional sense. The factor $2$ is important and should be thought of as the number of particles involved in the scattering process. A quick calculation shows that 
\begin{equation}\label{eq:deltaPot}
\widetilde{v}=2f'(b)\delta_{\{\vert x\vert=b\}}=2\frac{1}{b\log(\frac{b}{a})}\delta_{\{\vert x\vert=b\}},
\end{equation}
where $f'(b)$ is to be understood as the outgoing radial derivative of $f$ at length $b$. We show in Section~\ref{sec.upper.general} how we reduce to this potential. The simple, but essential properties of $\widetilde{v}$ are stated in the lemma below.

\begin{lemma} \label{lemma.soft.pot.properties} Let $v$ and $\widetilde{v}$ be given as above. We use the notation $\widetilde{a}=a(\widetilde{v})$ and $a=a(v)$. Furthermore, let $\widetilde{\varphi}$ be the scattering solution of $\widetilde{v}$ normalized at $\widetilde{R}>b$. Then 
	\begin{enumerate}
		
		\item \label{item. scattering lengths agree} The scattering lengths agree, \textit{i.e.}, $\widetilde{a}=a$.
		
		\item \label{item. scattering bessel function}  $\widehat{ \widetilde v}(p)=2f'(b) b J_0(b\vert p\vert)$, where $J_0$ is the zeroth spherical Bessel function. In particular there exists a universal constant $C>0$ such that 
		\begin{equation}\label{eq:vtildehat.bound}
			\vert \widehat{ \widetilde{v} }_p\vert \leq C \frac{\widehat{\widetilde v}_0}{\sqrt{b\vert p\vert}}.
		\end{equation}
		
		\item \label{item. size of v and g}  $\displaystyle  \widehat{ \widetilde v}_0:=\langle v,1\rangle = \frac{4 \pi}{\log(b / a)}$, and $\displaystyle \widehat{\widetilde{g}}_0=(\widehat{ \widetilde v \widetilde \varphi})_0 = \frac{4\pi}{\log(\widetilde{R}/a)}.$		
	\end{enumerate}
\end{lemma}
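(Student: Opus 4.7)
The plan is to establish the three claims in order, starting from an explicit construction of the scattering solution of $\widetilde v$.

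For part (1), I would directly propose the candidate $\widetilde\varphi^{(0)}(x) := \log(b/a)$ for $|x|\leq b$ and $\widetilde\varphi^{(0)}(x) := \log(|x|/a)$ for $|x|\geq b$. This function is continuous, radial, harmonic on either side of the sphere $\{|x|=b\}$, and asymptotic to $\log(|x|/a)$ at infinity, so if I can check that it solves the scattering equation for $\widetilde v$, the uniqueness of the asymptotic form $\log(|x|/\widetilde a)$ will force $\widetilde a = a$. Computing $-\Delta\widetilde\varphi^{(0)}$ distributionally: the classical part vanishes, while the jump of the radial derivative $\partial_r\widetilde\varphi^{(0)}|_{b^+}-\partial_r\widetilde\varphi^{(0)}|_{b^-} = 1/b$ produces a singular contribution $-(1/b)\delta_{\{|x|=b\}}$. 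On the other hand, since $\widetilde v$ is supported on the sphere where $\widetilde\varphi^{(0)} = \log(b/a)$, and using $f'(b) = (b\log(b/a))^{-1}$ from \eqref{eq:deltaPot}, one gets $\frac{1}{2}\widetilde v\widetilde\varphi^{(0)} = f'(b)\log(b/a)\,\delta_{\{|x|=b\}} = (1/b)\,\delta_{\{|x|=b\}}$. The two contributions cancel, proving the claim.

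For part (2), I observe that $\widetilde v$ is a constant multiple of the uniform surface measure on $\{|x|=b\}$. Parametrizing the circle and using rotational invariance yields
\begin{equation*}
\widehat{\widetilde v}(p) = 2f'(b)\int_{|x|=b} e^{-ip\cdot x}\, d\sigma(x) = 4\pi b f'(b)\, J_0(b|p|),
\end{equation*}
using the integral representation $J_0(z) = (2\pi)^{-1}\int_0^{2\pi} e^{-iz\cos\theta}\,d\theta$. Expressing the prefactor in terms of $\widehat{\widetilde v}_0$ gives the formula up to a standard normalization constant. The bound \eqref{eq:vtildehat.bound} then follows by combining the trivial estimate $|J_0(z)|\leq 1$ (useful for $b|p|\leq 1$) with the standard large-argument asymptotics $|J_0(z)|\leq C z^{-1/2}$ (useful for $b|p|\geq 1$), absorbing both ranges into a single universal constant.

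For part (3), the identity $\widehat{\widetilde v}_0 = 4\pi/\log(b/a)$ is immediate from part (2) evaluated at $p=0$ together with the explicit value of $f'(b)$. For $\widehat{\widetilde g}_0$, using part (1) together with \eqref{eq:scat_defs}, the function $\widetilde\varphi = 2\delta\,\widetilde\varphi^{(0)}$ takes the constant value $2\delta\log(b/a)$ on the support of $\widetilde v$, so
\begin{equation*}
\widehat{\widetilde g}_0 = 2\delta\log(b/a)\,\widehat{\widetilde v}_0 = 8\pi\delta = \frac{4\pi}{\log(\widetilde R/a)},
\end{equation*}
by \eqref{eq:delta}; this is also consistent with \eqref{eq:IntegralScattLength} applied to $\widetilde v$, which has scattering length $a$ by part (1).

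The only delicate point is the distributional cancellation in part (1); once the two delta contributions on the sphere are matched, parts (2) and (3) reduce to direct Fourier computation and should present no obstacle.
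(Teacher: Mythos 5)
Your proof is correct and takes essentially the same approach as the paper's: part (1) is the same jump-condition argument at $r=b$, phrased constructively (propose $\widetilde\varphi^{(0)}$ and verify) rather than deductively (derive its form from harmonicity and then impose the scattering equation), and parts (2)--(3) are the same Bessel-function Fourier computation. Your explicit Fourier integral actually yields $\widehat{\widetilde v}(p)=4\pi b f'(b)\,J_0(b|p|)$, which is the value consistent with part (3) (namely $\widehat{\widetilde v}_0=4\pi/\log(b/a)$); the prefactor $2f'(b)b$ in the lemma statement is missing a factor $2\pi$, but this typo does not affect the bound \eqref{eq:vtildehat.bound}.
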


\begin{proof}
The potential $\widetilde{v}$ is a spherical measure on the sphere $\{\vert x\vert =b\}$ and thus $\widetilde{\varphi}$ is harmonic both inside and outside this sphere. We may therefore conclude from the continuity of $\widetilde{\varphi}$ that
	\begin{equation}\label{eq. behavior of phi tilde}
		\log (\widetilde{R} / \widetilde a) \widetilde{\varphi}\left (r\right ) =
		\begin{cases}
			\log (b/\widetilde a),  & \mbox{if } r\leq b, \\
			\log (r/ \widetilde a),  & \mbox{if } r > b.
		\end{cases}
	\end{equation}
	From the scattering equation
	\[-\Delta \widetilde{\varphi}+\frac{1}{2}\widetilde{v}\widetilde{\varphi}=0,\]
	we find
	\begin{equation}\label{eq. scattering equation of phi tilde}
		\widetilde{\varphi}'(b)=f'(b)\widetilde{\varphi}(b)
	\end{equation}
	where $\widetilde{\varphi}'(b)$ means the ingoing radial derivative of $\widetilde{\varphi}$ at length $b$. Combining \eqref{eq. behavior of phi tilde} and \eqref{eq. scattering equation of phi tilde} yields~\ref{item. scattering lengths agree}.
	Property~\ref{item. scattering bessel function}. is a direct consequence of $\widetilde{v}$ being a uniform measure on the sphere $\{\vert x\vert\} =b$ and the behaviour of $J_0$ at infinity.
	Finally, the identities in~\ref{item. size of v and g}. follow immediately after realizing that
	\[g=\widetilde{\varphi}(b)v.\]
\end{proof}

\section{Upper bound for a soft potential} \label{sec.soft}
We denote by $\mathscr{M}_c$ the set of compactly supported, finite, regular, positive, radial Borel measures on $\R^2$. If $v \in \mathscr{M}_c$, it admits a Fourier transform $\widehat v$. The aim of this section is to prove an upper bound on the ground state energy of $\mathcal H_v$ on the box $\Lambda_\beta = [ - \frac{L_\beta}{2} , \frac{L_\beta}{2} ]^2$ for potentials $v \in \mathscr{M}_c$, under some additional decay assumption on the Fourier transform of $v$. We recall that $L_\beta = \rho^{- \frac 1 2} Y^{-\beta}$.

In this section, we will denote by $\varphi$ the scattering solution of the given $v$, normalized at length $\widetilde{R}$, and $g=\varphi v$, see \eqref{eq:scat_defs}.
Notice here that the theory of Section~\ref{section:scattering} clearly extends to potentials $v \in \mathscr{M}_c$. In particular, the scattering equation \eqref{scattering_equation} is valid.
We recall that $0 \leq \widehat{g}_0=8\pi \delta \leq C Y$ by \eqref{eq:IntegralScattLength} and \eqref{eq:delta}. We prove the following upper bound, which is very similar in spirit to the upper bound of \cite{ESY} in the $3D$ case.

\begin{theorem}\label{thm.upperbound.soft.pot}
	For any given $c_0>0$ and $\beta\geq \frac{3}{2}$, there exists $C_\beta>0$ (only depending on $c_0$ and $\beta$) such that the following holds. 
Let $\rho >0$ and $v \in \mathscr{M}_c$ be a radial positive measure with scattering length $a$ and $\supp v \subset B(0,R)$, for some $R>0$. 
Let ${\mathcal H}_v$ be as defined in \eqref{eq:upperbound.grandcanonical}.
Assume that
	\begin{equation}\label{eq:vhatp}
		\vert \widehat{g}_p \vert \leq c_0 \frac{\widehat{g}_0}{\sqrt{R \vert p \vert}}, \qquad \forall \vert p \vert \geq a^{-1}.
	\end{equation}
	Then, if $\rho R^2 \leq  Y$ and $\rho a^2 \leq C_{\beta}^{-1}$, one can find a normalized trial state $\Phi \in \mathscr{F}_s(L^2(\Lambda_\beta))$ satisfying
	\begin{align*}
		\langle  \mathcal H_v  \rangle_\Phi &\leq   4\pi L_\beta^2 \rho^2 \delta_0 \Big(1 + \Big(2\Gamma + \frac{1}{2} + \log(\pi) \Big) \delta_0 \Big)
		\\&\quad + CL_{\beta}^2 \rho^2 \delta_0(\widehat{v}_0-\widehat{g}_0)+CL_{\beta}^2 \rho^2 \delta^{2}_0\widehat{v}_0.
	\end{align*}
Here $\Phi$ is a quasi-free state such that $\langle \mathcal N \rangle_\Phi = N ,$ and $\langle \mathcal N^2 \rangle_\Phi \leq 9 N^2$, with $N = \rho L_\beta^2 = Y^{-2\beta}$.
\end{theorem}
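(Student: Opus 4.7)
The plan is to take $\Phi$ as a quasi-free state with a condensate in the zero mode plus pair correlations tuned to the scattering solution $\varphi$ (with the normalization choice $\delta=\delta_0$ from Remark~\ref{remark.delta}); this is the 2D analogue of the Bogoliubov trial state of \cite{ESY,Napio1,Napio2}. Concretely, I would set $\Phi = W(\sqrt{N_0}\,\chi_0)\,\mathbb{T}\ket{0}$, where $\chi_0 \equiv L_\beta^{-1}$ is the normalized zero-momentum mode, $W$ the Weyl operator producing the condensate, and $\mathbb{T}$ the Bogoliubov unitary acting as $\mathbb{T}^\ast a_p \mathbb{T} = \cosh(\theta_p)\,a_p + \sinh(\theta_p)\,a_{-p}^\ast$ for $p\in\tfrac{2\pi}{L_\beta}\Z^2\setminus\{0\}$, with a radial parameter $\theta_p$ to be chosen. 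The condensate population $N_0$ is then fixed by $\langle \mathcal{N}\rangle_\Phi = N_0 + \sum_{p\neq 0}\sinh^2\theta_p = N$.

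Next I would compute $\langle \mathcal{H}_v\rangle_\Phi$ using Wick's rule for quasi-free states. This produces a kinetic contribution $\sum_{p\neq 0} p^2 \sinh^2\theta_p$, a Hartree term $\tfrac{N_0^2}{2L_\beta^2}\widehat{v}_0$, and two pairing contributions $\tfrac{N_0}{L_\beta^2}\sum_{p\neq 0}\widehat{v}_p(\sinh^2\theta_p + \cosh\theta_p\sinh\theta_p)$, plus quartic-in-$\theta_p$ remainders. The decisive choice is $\theta_p$ close to the diagonalizing one, $\tanh 2\theta_p = -\rho\widehat{g}_p/(p^2+\rho\widehat{g}_p)$; the essential mechanism is that then $\cosh\theta_p\sinh\theta_p \approx -\rho\widehat{\omega}_p$ in the relevant momentum regime, where $\widehat{\omega}_p = \widehat{g}_p/(2p^2)$ by the scattering equation \eqref{eq:ScatOmega}. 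Consequently $N_0^2 \widehat{v}_0 + 2N_0\sum_{p\neq 0}\widehat{v}_p\cosh\theta_p\sinh\theta_p$ collapses to $N^2\widehat{v\varphi}_0 = N^2 \widehat{g}_0 = 8\pi N^2\delta_0$, at the price of an error of the form $L_\beta^2\rho^2\delta_0(\widehat{v}_0-\widehat{g}_0)$ which matches the first error term in the statement. This isolates the leading $4\pi L_\beta^2\rho^2\delta_0$.

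For the LHY correction, with the diagonalizing $\theta_p$ the remaining quadratic part of the energy reduces to the 2D Bogoliubov sum $\tfrac12\sum_{p\neq 0}(\sqrt{p^4+2\rho\widehat{g}_p p^2}-p^2-\rho\widehat{g}_p)$. I would evaluate it asymptotically by adding and subtracting $(\rho\widehat{g}_p)^2/(2p^2)$: the subtracted piece is identified via \eqref{eq:gomega_equivalence} as $\rho^2\widehat{g\omega}_0$ plus the distributional tail $\rho^2\int_{|p|\leq\ell_\delta^{-1}}\widehat{g}_0^2/(2p^2)\,\dd p$, with $\ell_\delta = \tfrac12 e^\Gamma\widetilde{R}$ as in \eqref{eq:Defldelta}. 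Replacing $\widehat{g}_p$ by $\widehat{g}_0$ at low momenta and computing the annular integral on $\{\rho^{1/2}\lesssim|p|\lesssim\ell_\delta^{-1}\}$ explicitly, one obtains a logarithm that combines with $\widehat{g}_0 = 8\pi\delta_0$ and the $(2\pi)^{-2}$ measure factor to produce precisely the constant $2\Gamma + \tfrac12 + \log\pi$. The main obstacle is exactly this constant-tracking calculation: the 2D Bogoliubov integral has logarithmic singularities that must be regularized with the same distributional subtraction built into $\widehat{\omega}$ (Lemma~\ref{lem:FT-w}), and the Euler--Mascheroni contribution originates in the Fourier transform of $\log|x|$ exposed in Lemma~\ref{lem:FouLog}; simultaneously one must match the momentum cutoffs $\ell_\delta^{-1}$ and $\rho^{1/2}$ cleanly against the condensate length to end with the stated coefficient.

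Finally, error control and number bounds close the argument. The UV hypothesis \eqref{eq:vhatp} guarantees that the quartic-in-$\theta_p$ remainders and the high-momentum piece of the Bogoliubov sum are bounded by $CL_\beta^2\rho^2\delta_0^2\widehat{v}_0$, absorbing them into the second error term. The variance bound $\langle \mathcal{N}^2\rangle_\Phi\leq 9N^2$ follows from Wick's rule for quasi-free states once we verify that the depletion $\sum_{p\neq 0}\sinh^2\theta_p$ is of order $N\delta_0\ll N$, which is immediate from an explicit momentum integration of $\theta_p$ together with the decay properties of $\widehat{g}$ established in Lemma~\ref{eq:gdecay}.
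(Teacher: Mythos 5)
Your proposal follows essentially the same route as the paper: the quasi-free state $T_\nu W_{N_0}\Omega$ with the Bogoliubov-diagonalizing angle, the renormalization of $\widehat v$ to $\widehat g$ via the insertion of $\rho_0\widehat\omega$ into the pairing amplitude (the paper's decomposition $D(\alpha,\alpha)=-\rho_0^2 D(\widehat\omega,\widehat\omega)-2\rho_0 D(\alpha,\widehat\omega)+D(\alpha+\rho_0\widehat\omega,\alpha+\rho_0\widehat\omega)$ is exactly the algebraic form of your ``$\alpha_p\approx-\rho_0\widehat\omega_p$'' heuristic), and the evaluation of the Bogoliubov sum using the distributional subtraction at scale $\ell_\delta^{-1}$ from Lemmas~\ref{lem:FouLog} and~\ref{lem:FT-w} to produce the $2\Gamma+\tfrac12+\log\pi$ constant. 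The one step you gloss over is the Riemann-sum-to-integral conversion (Lemma~\ref{lem:RiemannSums2}), which is where the hypothesis $\beta\ge \tfrac32$ is actually used, but the overall plan is the paper's proof.
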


\begin{remark}
	Note that this result is much weaker than Theorem~\ref{thm.upperbound.grandcanonical}. Indeed, the remainders are only of order $\rho^2 L_\beta^2 \delta_0^2$ and $\rho^2 L_\beta^2 \delta_0$ and thus much larger than the 2D-LHY term, unless $\widehat{v}_0 = \widehat{g}_0 + o(\delta_0)$. Moreover, Theorem~\ref{thm.upperbound.soft.pot} only holds for potentials with finite integral and, in particular, it does not allow for a hard core. However, in the proof of Theorem~\ref{thm.upperbound.grandcanonical} in Section~\ref{sec.upper.general} we will show how to reduce to such potentials. More precisely, we will apply Theorem~\ref{thm.upperbound.soft.pot} to a surface potential of the form \eqref{eq:deltaPot}.
\end{remark}

The rest of Section~\ref{sec.soft} is dedicated to the proof of Theorem~\ref{thm.upperbound.soft.pot}. We will give an explicit trial state and state several technical calculations as lemmas. In the end we collect the pieces and finish the proof.

\subsection{A quasi-free state} \label{subsec:quasifree}

We will define our trial state $\Phi$ in second quantization formalism. On the bosonic Fock space $\mathscr{F}( L^2(\Lambda_\beta))$, we will denote by $a_p^\dagger$ and $a_p$ the creation and annihilation operators associated to the function $x \mapsto \vert \Lambda_\beta \vert^{-\frac 1 2} \exp (ipx)$, for $p \in \Lambda^*_\beta = (\frac{2\pi}{L_\beta} \Z )^2$.
Our quasi-free state is $\Phi = T_{\nu} W_{N_0} \Omega$ where $\Omega$ is the vacuum, $W_{N_0}$ creates the condensate and $T_\nu$ the excitations:
\begin{equation}\label{def:WT}
	W_{N_0} = \exp \Big( \sqrt{N_0} (a_0^\dagger - a_0) \Big), \quad T_{\nu} = \exp \Big(\frac{1}{2} \sum_{p\neq 0} \nu_p (a_p^\dagger a_{-p}^\dagger - a_p a_{-p}) \Big),
\end{equation}
for a given $N_0\leq N$ associated with $\rho_0 :=  N_0/L_{\beta}^2$.
These operators have the nice properties that 
\begin{equation}\label{eq:conj}
	W_{N_0}^* a_0 W_{N_0} = a_0 + \sqrt{ N_0 }, \quad \text{ and } \quad T_\nu^* a_p T_\nu = \cosh (\nu_p) a_p + \sinh (\nu _p) a_{-p}^\dagger.
\end{equation}
In particular, for any  $p$, $q \in \Lambda^*_\beta$,
\begin{equation} \label{eq.Phi.apaq}
	\langle a_q^\dagger a_p \rangle_\Phi =
	\begin{cases}
		N_0, & \text{ if }  p=q=0, \\
		0, & \text{ if }  p\neq q, \\
		\gamma_q, & \text{ if } p=q \neq 0,
	\end{cases}
	\quad \text{and} \quad
	\langle a_q a_p \rangle_\Phi=\langle a_q^\dagger a_p^\dagger \rangle_\Phi =
	\begin{cases}
		N_0, & \text{ if }  p=q=0, \\
		0, & \text{ if }  p\neq -q, \\
		\alpha_q, & \text{ if } p=-q \neq 0,
	\end{cases}
\end{equation}
where $\alpha_p = \cosh (\nu_p) \sinh (\nu_p)$ and $\gamma_p = \sinh (\nu_p)^2$. We choose the coefficient $\nu_p$ such that
\begin{equation}\label{def:alphagamma}
	\alpha_p = \frac{-\rho_0 \widehat{g}_p}{2 \sqrt{p^4 + 2 \rho_0 \widehat{g}_p p^2}} , \qquad \gamma_p = \frac{p^2 + \rho_0 \widehat{g}_p - \sqrt{p^4 + 2 \rho_0 \widehat{g}_p p^2}}{2 \sqrt{p^4 + 2 \rho_0 \widehat{g}_p p^2}}\geq 0 ,
\end{equation}
this specific choice coming from an optimization of the energy, as we will see later. Note that $\alpha_p^2 = \gamma_p (\gamma_p + 1)$, making it a possible choice. These coefficients satisfy the following estimates.

\begin{lemma}\label{lem:ag}We estimate  the sum (over $\Lambda_{\beta}^{*}$) of $\alpha_{p}$ and $\gamma_{p}$:
	\begin{equation}
		\sum_{p\neq 0}\left \vert\alpha_{p}\right \vert \leq C N, \quad \text{and}\quad \sum_{p\neq 0}\gamma_{p}\leq  C N \delta.
	\end{equation}
\end{lemma}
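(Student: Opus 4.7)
The plan is to bound each discrete sum over $\Lambda_\beta^{\ast}\setminus\{0\}$ by a Riemann integral $\frac{L_\beta^2}{(2\pi)^2}\int_{\R^2}\cdots\,\dd p$ and to split the integration domain at the healing momentum $\sqrt{2\rho_0\widehat{g}_0}\sim\sqrt{\rho\delta}$ and at the potential momentum $a^{-1}$. In the region $|p|\leq\sqrt{2\rho_0\widehat{g}_0}$ the term $2\rho_0\widehat{g}_pp^2$ dominates $p^4$; in the region $\sqrt{2\rho_0\widehat{g}_0}\leq|p|\leq a^{-1}$ the kinetic term $p^4$ dominates and the crude bound $|\widehat{g}_p|\leq\widehat{g}_0=8\pi\delta$ (which follows from $g\geq 0$) is used; in the tail $|p|\geq a^{-1}$ the decay hypothesis \eqref{eq:vhatp} yields $|\widehat{g}_p|\leq c_0\widehat{g}_0/\sqrt{R|p|}$.

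For the first bound I estimate
\begin{equation*}
|\alpha_p|=\frac{\rho_0|\widehat{g}_p|}{2|p|\sqrt{p^2+2\rho_0\widehat{g}_p}}\leq\min\!\left(\frac{\sqrt{\rho_0|\widehat{g}_p|}}{2\sqrt{2}\,|p|},\;\frac{\rho_0|\widehat{g}_p|}{2p^2}\right).
\end{equation*}
The low region then contributes at most $CL_\beta^2\rho\delta=CN\delta$. The intermediate region produces $\lesssim\rho\delta L_\beta^2\log\!\left((\rho a^2\delta)^{-1/2}\right)$, which is of order $N$ because $|\log(\rho a^2)|=Y^{-1}\sim\delta^{-1}$. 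The tail contributes $\lesssim\rho\delta L_\beta^2\sqrt{a/R}\leq CN\delta$, using $a\leq R$. Summing the three pieces gives $\sum|\alpha_p|\leq CN$.

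For the second bound I use $\gamma_p\leq|\alpha_p|$ pointwise (which follows from the identity $\alpha_p^2=\gamma_p(\gamma_p+1)$ and $\gamma_p\geq 0$), together with the sharper global estimate
\begin{equation*}
\gamma_p=\frac{p^2+\rho_0\widehat{g}_p-\sqrt{p^4+2\rho_0\widehat{g}_pp^2}}{2\sqrt{p^4+2\rho_0\widehat{g}_pp^2}}\leq\frac{(\rho_0|\widehat{g}_p|)^2}{4p^4},
\end{equation*}
obtained from the global convexity inequality $\sqrt{1+x}\geq 1+\frac{x}{2}-\frac{x^2}{8}$ (valid for all $x\geq 0$) applied at $x=2\rho_0\widehat{g}_p/p^2$, together with the trivial bound $\sqrt{p^4+2\rho_0\widehat{g}_pp^2}\geq p^2$ for the denominator. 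The low region is then handled as for $|\alpha_p|$ and yields $\grandO(N\delta)$. The intermediate region now benefits from the extra factor $\rho_0\widehat{g}_p/p^2\lesssim\rho\delta/p^2$, producing $\lesssim(\rho\delta)^2 L_\beta^2\int_{\sqrt{\rho\delta}}^{a^{-1}}r^{-3}\,\dd r\lesssim \rho\delta L_\beta^2=N\delta$, and the tail is even smaller thanks to the additional $1/(R|p|)$ factor from \eqref{eq:vhatp}. Adding the three contributions gives $\sum\gamma_p\leq CN\delta$.

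The main technical subtlety is the replacement of the lattice sum by the corresponding integral in the low region, where the integrand $1/|p|$ is borderline integrable in $\R^2$. Since the lattice spacing $2\pi/L_\beta=2\pi\sqrt{\rho}\,Y^\beta$ is much smaller than the healing momentum $\sqrt{\rho\delta}\sim\sqrt{\rho Y}$ under the assumption $\beta\geq 3/2$ and $\rho a^2\ll 1$, the low region contains many lattice points and the standard monotone Riemann comparison suffices. The second ingredient to verify is the equivalence $\log((\rho a^2)^{-1/2})\sim\delta^{-1}$, which is immediate from $\delta\sim Y=|\log(\rho a^2)|^{-1}$; this is what converts the logarithm produced by the intermediate-region integration into a factor of order $1$ and yields the sharp $\grandO(N)$ estimate for $\sum|\alpha_p|$ rather than an extra logarithmic factor.
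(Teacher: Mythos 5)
Your proof is correct and takes essentially the same route as the paper's: split at the healing momentum $\sqrt{\rho_0\widehat{g}_0}$, bound $|\alpha_p|\lesssim\sqrt{\rho_0\widehat{g}_0}/|p|$ below and $\rho_0|\widehat{g}_p|/p^2$ above, invoke the decay hypothesis \eqref{eq:vhatp} for $|p|\geq a^{-1}$, and use the quartic decay of $\gamma_p$ at high momenta. The only cosmetic differences are that you route the low-momentum estimate for $\gamma_p$ through the pointwise inequality $\gamma_p\leq|\alpha_p|$ rather than directly from the explicit formula, and you split the high-momentum region for $\gamma_p$ in two even though the $p^{-4}$ decay already converges without invoking \eqref{eq:vhatp}, as the paper notes; neither affects the result.
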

\begin{proof}
	We start from the expression of $\alpha_{p}$ \eqref{def:alphagamma} and split the sum between $\vert p \vert \leq \sqrt{\rho_0 \widehat{g}_0}$ and  $\vert p \vert \geq \sqrt{\rho_0 \widehat {g}_0 }$:
	\begin{align*}
		\sum_{p\neq 0}\left \vert\alpha_{p}\right \vert &\leq C\sqrt{\rho_0 \widehat{g}_0}\sum_{0<\vert p\vert\leq \sqrt{\rho_0 \widehat{g}_0}}\frac{1}{\vert p\vert}+ C\rho_0\sum_{\vert p\vert\geq \sqrt{\rho_0 \widehat{g}_0}}\frac{ \vert \widehat{g}_{p}\vert}{\vert p\vert^{2}}\\
		&\leq CL_\beta^{2} \sqrt{\rho_0 \widehat{g}_0}\int_{0}^{\sqrt{\rho_0 \widehat{g}_0}}\dd u + C L_\beta^{2} \rho_0 \widehat{g}_0\int_{\sqrt{\rho_0 \widehat{g}_0}}^{a^{-1}}\frac{\dd u}{u}+ C L_\beta^{2}\rho_0 \int_{ a^{-1}}^{+\infty}\frac{ \widehat{g}_0}{R^{1/2}u^{3/2}} \dd u\\
		&\leq CL_\beta^{2}\rho_0 \widehat{g}_0 (1 + \vert \log(a^2 \rho_0 \widehat{g}_0) \vert) \leq C N,
	\end{align*}
	where we used the decay of $\widehat{g}_{p}$ at infinity \eqref{eq:vhatp} and the bound $a \leq R$. \newline
	For $\gamma_{p}$ we also split the sum this way. For $p\leq \sqrt{\rho_0 \widehat{g}_0}$ we obtain that 
	\begin{equation*}
		\sum_{\vert p\vert\leq \sqrt{\rho_0 \widehat{g}_{0}}}\vert \gamma_{p}\vert \leq C\sum_{\vert p\vert\leq \sqrt{\rho_0 \widehat{g}_{0}}}\frac{\sqrt{\rho_0 \widehat{g}_{0}}}{\vert p\vert } \leq  CL_\beta^{2} \rho_0 \widehat{g}_{0} \leq C N_0 \delta.
	\end{equation*}
	For $p\geq \sqrt{\rho_0 \widehat{g}_0}$ we expand the square root and find
	\begin{equation*}
		\sum_{\vert p\vert\geq \sqrt{\rho_0 \widehat{g}_{0}}}\vert \gamma_{p}\vert \leq C\sum_{\vert p\vert\geq \sqrt{\rho_0 \widehat{g}_{0}}}\frac{\left (\rho_0 \widehat{g}_{0}\right )^{2}}{\vert p\vert^{4} } \leq  CL^{2}_\beta \rho_0 \widehat{g}_{0} \leq C N_0 \delta,
	\end{equation*}
	which concludes the proof.
\end{proof}

Finally  choose $N_0$ such that 
\begin{equation} \rho L_{\beta}^2 = N=N_0 + \sum_{p \neq 0} \gamma_p .
\end{equation}
Note that with this choice $\Phi$ has the expected average number of particles as stated in the next lemma.
\begin{lemma}\label{lem:numberPhi}
	The state $\Phi=  T_{\nu} W_{N_0} \Omega$ satisfies
	\[ \langle \mathcal{N} \rangle_ \Phi = N , \qquad \langle {\mathcal N}^2 \rangle_\Phi \leq C N^2 , \]
	where $\mathcal N = \sum_{p \in \Lambda_\beta^*} a_p^\dagger a_p$ is the number operator.
\end{lemma}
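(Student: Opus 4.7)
The first equality is immediate from the two-point function formulas in \eqref{eq.Phi.apaq}: writing $\mathcal N = \sum_{p \in \Lambda_\beta^*} a_p^\dagger a_p$ and taking expectation gives $\langle \mathcal N\rangle_\Phi = N_0 + \sum_{p\neq 0} \gamma_p$, which is exactly $N$ by the choice of $N_0$ fixed just before the lemma.

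For the second inequality, the plan is to split $\mathcal N = \mathcal N_0 + \mathcal N_+$ with $\mathcal N_0 = a_0^\dagger a_0$ and $\mathcal N_+ = \sum_{p\neq 0} a_p^\dagger a_p$. Since $W_{N_0}$ acts only on the zero mode, $T_\nu$ only on nonzero modes, and these act on orthogonal subspaces of the one-particle space, the state factorizes as $\Phi = (W_{N_0}\Omega_0)\otimes (T_\nu \Omega_+)$ and $[\mathcal N_0,\mathcal N_+]=0$. Hence
\begin{equation*}
\langle \mathcal N^2\rangle_\Phi = \langle \mathcal N_0^2\rangle_\Phi + 2\langle \mathcal N_0\rangle_\Phi \langle \mathcal N_+\rangle_\Phi + \langle \mathcal N_+^2\rangle_\Phi.
\end{equation*}
The first piece is the second moment of a Poisson variable with mean $N_0$, so $\langle \mathcal N_0^2\rangle_\Phi = N_0 + N_0^2$. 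For the third piece I would apply the bosonic Wick theorem to the quasi-free state $T_\nu\Omega_+$, using the CCR $[a_p,a_q^\dagger]=\delta_{pq}$ and \eqref{eq.Phi.apaq}:
\begin{equation*}
\langle a_p^\dagger a_p a_q^\dagger a_q\rangle_\Phi = \gamma_p \gamma_q + \gamma_p(1+\gamma_p)\delta_{pq} + \alpha_p^2 \delta_{p,-q}, \qquad p,q\neq 0.
\end{equation*}
Summing and using $\alpha_p^2 = \gamma_p(1+\gamma_p)$ yields $\langle \mathcal N_+^2\rangle_\Phi = \bigl(\sum_{p\neq 0}\gamma_p\bigr)^2 + \sum_{p\neq 0}\gamma_p + 2\sum_{p\neq 0}\alpha_p^2$. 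Putting everything together and using $N_0 + \sum\gamma_p = N$ collapses the sum into
\begin{equation*}
\langle \mathcal N^2\rangle_\Phi = N^2 + N + 2\sum_{p\neq 0}\alpha_p^2.
\end{equation*}

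The remaining task, and the only genuinely analytic step, is to bound $\sum_{p\neq 0}\alpha_p^2$ by $CN^2$. From the explicit expression \eqref{def:alphagamma} I would extract the pointwise bound $|\alpha_p|\leq C\sqrt{\rho_0 \widehat g_p}/|p|$, which is maximized at the smallest nonzero lattice momentum $|p|=2\pi/L_\beta$, giving $\sup_{p\neq 0}|\alpha_p|\leq C L_\beta \sqrt{\rho_0 \widehat g_0}\leq C\sqrt{N\delta}$. Combining with the $\ell^1$ estimate $\sum_{p\neq 0}|\alpha_p|\leq CN$ from Lemma~\ref{lem:ag} and the trivial $\sum \alpha_p^2\leq (\sup|\alpha_p|)\sum|\alpha_p|$ gives $\sum_{p\neq 0}\alpha_p^2\leq C N^{3/2}\sqrt{\delta}\leq CN^2$, which is the dominant control. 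The main obstacle is thus the pointwise control of $\alpha_p$ at the infrared end; once this is combined with Lemma~\ref{lem:ag}, the desired bound $\langle \mathcal N^2\rangle_\Phi \leq CN^2$ follows.
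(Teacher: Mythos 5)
Your proof is correct and follows essentially the same path as the paper's, with a cleaner presentation. You use the tensor factorization $\Phi = (W_{N_0}\Omega_0)\otimes(T_\nu\Omega_+)$ to separate the coherent zero mode from the Bogoliubov vacuum on nonzero modes, whereas the paper conjugates by $W_{N_0}$ and then applies Wick to $T_\nu\Omega$; these are the same calculation viewed slightly differently. One small arithmetic slip: the exact value is $\langle \mathcal N_+^2\rangle_\Phi = \bigl(\sum_{p\neq 0}\gamma_p\bigr)^2 + 2\sum_{p\neq 0}\alpha_p^2$ (the $\sum\gamma_p$ gets cancelled when you substitute $\gamma_p^2 = \alpha_p^2 - \gamma_p$), so your $\langle\mathcal N^2\rangle_\Phi = N^2 + N + 2\sum\alpha_p^2$ is an overestimate of the true $N^2 + N_0 + 2\sum\alpha_p^2$; harmless, since you only want an upper bound. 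For the final estimate you reach for a pointwise infrared bound $\sup_{p\neq 0}|\alpha_p|\leq C\sqrt{N\delta}$ and $\ell^\infty$--$\ell^1$ interpolation to control $\sum\alpha_p^2$. That works, but there is a shorter route that needs no pointwise information: since $\alpha_p^2=\gamma_p(1+\gamma_p)=\gamma_p+\gamma_p^2$ and the $\gamma_p\geq 0$, one has $\sum\gamma_p^2\leq\bigl(\sum\gamma_p\bigr)^2$, so $\sum\alpha_p^2\leq \sum\gamma_p + \bigl(\sum\gamma_p\bigr)^2\leq CN\delta + (CN\delta)^2\leq CN^2$ directly from Lemma~\ref{lem:ag}. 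This is presumably what the paper means by ``using Lemma~\ref{lem:ag}'' without further comment. Your route is also valid, just slightly heavier.
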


\begin{proof}
	First we can use the property \eqref{eq.Phi.apaq} to find
	\begin{equation}
		\langle \mathcal N \rangle_\Phi = N_0 + \sum_{p\neq 0} \gamma_p = N.
		\label{eq:N}
	\end{equation}  
	For $\mathcal N^2$ we split the sums according to zero and non-zero momenta, and then conjugate by $W_{N_{0}}$,
	\begin{align*}
		\langle \mathcal N^2 \rangle_\Phi &= \sum_{q,p} \langle a_p^\dagger a_p a_q^\dagger a_q \rangle_\Phi =N_{0}^{2}+N_{0}+\sum_{q \neq 0}\langle a_0^\dagger a_0 a_q^\dagger a_q+\text{h.c} \rangle_\Phi+ \sum_{q\neq 0,p\neq 0} \langle a_p^\dagger a_p a_q^\dagger a_q \rangle_\Phi\\
		&=N_{0}^{2}+N_{0}\Big( 1+2\sum_{p\neq 0} \gamma_p \Big)+\sum_{q\neq 0, p\neq 0} \langle a_p^\dagger a_p a_q^\dagger a_q \rangle_\Phi . 
	\end{align*}
	Now we use Lemma~\ref{lem:ag} and apply Wick's Theorem \cite[Theorem 10.2]{Sol_07} to the state $T_\nu \Omega$ to find
	\begin{align*}
		\langle \mathcal N^2 \rangle_\Phi &\leq 4N^{2}+\sum_{q\neq 0,p\neq 0} \Big( \langle a_p^\dagger a_p \rangle_\Phi \langle a_q^\dagger a_q \rangle_\Phi + \langle a_p^\dagger a_q^\dagger \rangle_\Phi \langle a_p a_q \rangle_\Phi + \langle a_p^\dagger a_q \rangle_\Phi \langle a_p a_q^\dagger \rangle_\Phi \Big)\\
		&\leq  4N^{2}+ \Big( \sum_{p\neq 0} \gamma_p\Big)^{2} + \sum_{p\neq 0} \alpha_p^{2}+ \sum_{p\neq 0} (\gamma_{p}^{2}+\gamma_{p})\leq C N^{2} \,
	\end{align*}
	using Lemma~\ref{lem:ag}.
\end{proof}

\subsection{Energy of \texorpdfstring{$\Phi$}{Phi}} \label{subsec:energyPhi}

In order to get an upper bound on the energy of $\Phi$ we first introduce some notations. Our convention for the convolution will be
\begin{equation}
	\V * A (p) = \int \V(p-q) A(q) \dd q.
\end{equation}
We also define the quantity $D(A,B)$ by
\begin{equation}
	D(A,B) = \frac{1}{(4\pi^2)^2} \int \V * A (p) B(p) \dd p = \frac{1}{(4\pi^2)^2} \langle B , \V * A \rangle,
\end{equation}
and observe that it is symmetric in the entries. Then we prove the following result.

\begin{lemma}\label{lem:up1} Under the assumptions of Theorem~\ref{thm.upperbound.soft.pot}, there exists a constant $C>0$, independent of $v$ and $\rho$, such that
	\begin{equation*} 
		\vert\Lambda_\beta \vert ^{-1} \langle \mathcal{H}_v \rangle_\Phi \leq \frac{\rho^2}{2 } \V_0 + \int \Big((p^2 + \rho_0 \V_p) \gamma_p + \rho_0 \V_p \alpha_p \Big)\frac{\dd p}{4\pi^2} + \frac{1}{2} D(\alpha, \alpha) + C \V_0 \rho^2 Y^3.
	\end{equation*}
\end{lemma}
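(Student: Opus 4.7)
}
The plan is to compute $\langle\mathcal{H}_v\rangle_\Phi$ directly in second quantization and then collect the contributions into the three leading blocks plus a controlled remainder. I would first write
\begin{equation*}
\mathcal{H}_v = \sum_{p\in\Lambda_\beta^*} p^2 a_p^\dagger a_p + \frac{1}{2L_\beta^2}\sum_{k,p,q\in\Lambda_\beta^*}\widehat v_k\, a_{p+k}^\dagger a_{q-k}^\dagger a_q a_p,
\end{equation*}
and observe that the kinetic term gives $\sum_{p\neq 0} p^2 \gamma_p$ at once from \eqref{eq.Phi.apaq}. For the potential term I would apply Wick's theorem to the displaced quasi-free state $\Phi$: since $T_\nu$ commutes with $a_0,a_0^\dagger$ and $W_{N_0}$ with $a_p^{(\dagger)}$ for $p\neq 0$, the only non-vanishing one-point correlator is $\langle a_0\rangle_\Phi=\sqrt{N_0}$, and the remaining two-point correlators are $\gamma_p\delta_{p,q}$ and $\alpha_p\delta_{p,-q}$. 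Consequently, the four-point function $\langle a_{p+k}^\dagger a_{q-k}^\dagger a_q a_p\rangle_\Phi$ splits into a sum over complete pairings, where each operator is either paired with another operator (producing $\gamma$ or $\alpha$) or replaced by the condensate mean $\sqrt{N_0}$.

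I would then classify the resulting contributions by the number $j\in\{0,2,4\}$ of zero momenta among $\{p,q,p+k,q-k\}$ (odd $j$ gives zero). The $j=4$ piece is $N_0^2\widehat v_0/(2L_\beta^2)$. The $j=2$ piece collects six symmetric sub-cases that sum to $(N_0/L_\beta^2)\sum_{p\neq 0}[(\widehat v_0+\widehat v_p)\gamma_p+\widehat v_p\alpha_p]$. The $j=0$ piece yields three Wick contractions: a direct one $\widehat v_0(\sum\gamma_p)^2/(2L_\beta^2)$ (coming from $k=0$), an exchange one $\tfrac{1}{2L_\beta^2}\sum_{p,q\neq 0}\widehat v_{q-p}\gamma_p\gamma_q$, and a pairing one $\tfrac{1}{2L_\beta^2}\sum_{p,q\neq 0}\widehat v_{q-p}\alpha_p\alpha_q$, the latter giving $\tfrac12 L_\beta^2 D(\alpha,\alpha)$ in the discrete formulation. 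The key observation is that the three $\widehat v_0$-only contributions combine algebraically into the claimed leading constant:
\begin{equation*}
\frac{\rho_0^2\widehat v_0}{2}+\rho_0\widehat v_0(\rho-\rho_0)+\frac{\widehat v_0(\rho-\rho_0)^2}{2}=\frac{\widehat v_0}{2}\bigl(\rho_0+(\rho-\rho_0)\bigr)^2=\frac{\rho^2\widehat v_0}{2},
\end{equation*}
using that $\rho L_\beta^2-N_0=\sum_{p\neq 0}\gamma_p$ by the definition of $N_0$. The remaining $j=2$ terms in $\widehat v_p$ produce exactly $\rho_0 L_\beta^{-2}\sum(\widehat v_p\gamma_p+\widehat v_p\alpha_p)$, matching, up to the Riemann-sum conversion, the two integral terms of the statement.

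The last tasks are (a) passing from sums $L_\beta^{-2}\sum_p$ to integrals $(4\pi^2)^{-1}\int\dd p$ and (b) controlling the leftover exchange term $\tfrac{1}{2L_\beta^4}\sum_{p,q}\widehat v_{q-p}\gamma_p\gamma_q$ by $C\widehat v_0\rho^2 Y^3$. For (a), since $\widehat v$ is smooth on the scale $R^{-1}$ and the integrands $p^2\gamma_p$, $\widehat v_p\gamma_p$, $\widehat v_p\alpha_p$ are integrable at $p=0$ (their singularities are of the form $|p|^{-1}$ or $|p|$, cut off at the Bogoliubov scale $\sqrt{\rho_0\widehat g_0}$), the Riemann-sum error is $O(L_\beta^{-1})$, which is smaller than any positive power of $Y$ given the relation $L_\beta=\rho^{-1/2}Y^{-\beta}$. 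For (b), I would split at the Bogoliubov scale $k_0=\sqrt{\rho_0\widehat g_0}$: in the region $|p|,|q|\leq k_0$ the product $\gamma_p\gamma_q\sim \rho_0\widehat g_0/(|p||q|)$ and the separation of scales $R^{-1}\gg k_0$ (guaranteed by $\rho R^2\leq Y$) gives $\widehat v_{p-q}\approx\widehat v_0$, so a direct computation bounds this region; in the opposite region I would use the sharper decay $\gamma_p\lesssim(\rho_0\widehat g_0/p^2)^2$, together with \eqref{eq:vhatp}, to win an extra factor of $\delta$. The main obstacle is indeed this quartic exchange estimate, where one must carefully exploit both the decay of $\widehat v_p$ (through $\widehat g_p$) and the concentration of $\gamma,\alpha$ near zero momentum to upgrade the naive bound $\widehat v_0(\rho-\rho_0)^2\lesssim\widehat v_0\rho^2 Y^2$ to the claimed $C\widehat v_0\rho^2 Y^3$.
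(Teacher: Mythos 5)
Your decomposition is correct and it follows essentially the same route as the paper: write $\mathcal H_v$ in second quantization, conjugate by $W_{N_0}$, apply Wick's theorem, classify by the number of zero momenta, and recombine the $\widehat v_0$ pieces via $(\rho_0+(\rho-\rho_0))^2=\rho^2$ to get the $\frac{\rho^2}{2}\widehat v_0$ front term. The $j=0$ exchange term $\frac{1}{2|\Lambda_\beta|^2}\sum\widehat v_r\gamma_q\gamma_{q+r}$ is indeed the remainder to be controlled, and the pairing term gives $\tfrac12 D(\alpha,\alpha)$. However, two steps of your argument go off track.

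First, you should abandon the attempt to ``upgrade'' the exchange bound to $Y^3$. The paper itself only proves the $\gamma\gamma$ exchange term is $\leq C\widehat v_0\rho^2 Y^2$: it crudely bounds $\widehat v_r\leq\widehat v_0$ and uses Lemma~\ref{lem:ag} ($\sum_p\gamma_p\leq CN\delta$), arriving at \eqref{eq. sum just before the int} with error $C\widehat v_0\rho^2 Y^2$. This $Y^2$ is all that is used downstream (the proof of Lemma~\ref{lem:up2} cites the error from Lemma~\ref{lem:up1} as $C\widehat v_0\rho^2 Y^2$), and the $Y^3$ appearing in the displayed statement of Lemma~\ref{lem:up1} appears to be a typographical slip. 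Your own scale analysis already reveals why a $Y^3$ bound is not readily available: in the region $|p|,|q|\lesssim\sqrt{\rho_0\widehat g_0}$, which produces the bulk of $\sum\gamma$, one has $|p-q|\ll R^{-1}$ so $\widehat v_{p-q}\approx\widehat v_0$ and no extra decay of $\widehat v$ can be exploited. The split you propose does not win the extra factor of $\delta$.

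Second, your treatment of the Riemann-sum-to-integral conversion is too cavalier and, in fact, hides the essential constraint of the theorem. The error is \emph{not} ``smaller than any positive power of $Y$''; the paper's Lemma~\ref{lem:RiemannSums2} shows it is of size $C\widehat v_0\rho^2 Y^{1/2+\beta}$, and the paper explicitly notes that the hypothesis $\beta\geq 3/2$ in Theorem~\ref{thm.upperbound.soft.pot} comes exactly from requiring $Y^{1/2+\beta}\lesssim Y^2$ here. The integrands $\alpha_p$, $\gamma_p$ behave like $|p|^{-1}$ near the origin (barely integrable in dimension two) and the estimates for $\partial_p\alpha_p$, $\partial_p\gamma_p$, $\partial_p\widehat g_p$ that feed into the Riemann-sum error require the decay \eqref{eq:vhatp} and the separate treatment of the regions $|p|\lessgtr\sqrt{\rho_0\widehat g_0}$; this is a nontrivial computation, not an $O(L_\beta^{-1})$ afterthought. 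You need to either reproduce the quantitative analysis of Lemma~\ref{lem:RiemannSums2}, or at least acknowledge that this is where $\beta\geq 3/2$ is used.
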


\begin{proof}
	One can write $\mathcal H_v$ in second quantization in momentum variable,
	\[  \mathcal H_v = \sum_{p \in \Lambda_\beta^*} p^2 a_p^\dagger a_p + \frac{1}{2\vert \Lambda_\beta \vert} \sum_{p,q,r} \V_r a^\dagger_{p+r} a^\dagger_q a_{q+r} a_p , \]
	and express the energy of $\Phi$ in terms of $\alpha_{p}$ and $\gamma_{p}$ as follows. 
	We conjugate by $W_{N_0}$ using \eqref{eq:conj}, which amounts to change the $a_{0}$'s in $\sqrt{N_{0}}$. Since $\Phi = T_{\nu} W_{N_0} \Omega$ with no $a_{0}$ in $T_{\nu}$ (see \eqref{def:WT}), when we apply $\Phi$ we find
	\begin{align}
		\langle \mathcal H_v \rangle_\Phi &=  \sum_{p \neq 0} p^2 \langle a_p^\dagger a_p \rangle_\Phi + \frac{N_0^2}{2\vert \Lambda_\beta \vert} \V_0 + \frac{N_0}{ \vert \Lambda_\beta \vert}  \sum_{p \neq 0} ( \widehat v_0 + \widehat v_p) \langle a_p^\dagger a_p \rangle_{\Phi} + \frac{N_0}{\vert \Lambda_\beta \vert} \sum_{r\neq 0} \widehat v_r \langle a_r a_{-r} \rangle_\Phi  \nonumber \\ &+ \frac{\sqrt{N_0}}{\vert \Lambda_\beta \vert}  \sum_{\substack{q,r\neq 0 \\ q+r \neq 0}} \widehat v_r  \langle a_r^\dagger a_q^\dagger a_{q+r} \rangle_{\Phi}   + \frac{\sqrt{N_0}}{\vert \Lambda_\beta \vert} \sum_{\substack{p,r\neq 0 \\ p+r \neq 0}} \widehat v_{r} \langle a_{p+r}^\dagger a_{-r}^\dagger a_{p} \rangle_\Phi \nonumber \\ 
&+ \frac{1}{2 \vert \Lambda_\beta \vert} \sum_{\substack{ p,q \neq 0 \\  p+r, q+r  \neq 0 }} \widehat v_r \langle a_{p+r}^\dagger a_q^\dagger a_{q+r} a_p \rangle_\Phi. \label{eq:conjW0}
	\end{align}
	We can use Wick's Theorem \cite[Theorem 10.2]{Sol_07} to the state $T_\nu \Omega$. By definition of $\alpha_p$ and $\gamma_p$ in \eqref{eq.Phi.apaq} 
	together with $N^2 = (N_0 + \sum_{p\neq 0} \gamma_p)^2$ we deduce
	\begin{align}
		\langle \mathcal H_v \rangle_\Phi &= \frac{N^2}{2 \vert \Lambda_\beta \vert} \V_0 + \sum_{p\neq 0} p^2 \gamma_p +  \frac{N_0}{\vert \Lambda_\beta \vert} \sum_{p\neq 0} \Big( \V_p \gamma_p + \V_p \alpha_p  \Big)\nonumber \\
		&\quad  + \frac{1}{2 \vert \Lambda_\beta \vert }\sum_{\substack{q\neq 0\\ q+r \neq 0}} \V_r \alpha_q \alpha_{q+r} + \frac{1}{2 \vert \Lambda_\beta \vert }\sum_{\substack{q\neq 0\\ q+r \neq 0}} \V_r \gamma_q \gamma_{q+r} .
	\end{align}
	We bound the last term in the above using Lemma~\ref{lem:ag}. With $\rho = N \vert \Lambda_\beta \vert^{-1}$ and $\rho_{0}=N_{0} \vert \Lambda_\beta \vert^{-1}$ we deduce
	\begin{align}\label{eq. sum just before the int}
		\vert \Lambda_\beta \vert^{-1} \langle \mathcal H_v \rangle_\Phi &\leq  \frac 1 2 \rho^2 \V_0 + \frac{1}{\vert \Lambda_\beta \vert} \sum_{p \neq 0} \Big((p^2 + \rho_0 \V_p) \gamma_p + \rho_0 \V_p \alpha_p \Big)\nonumber \\ &\quad+ \frac{1}{2\vert \Lambda_\beta \vert^2} \sum_{\substack{q\neq 0\\ q+r \neq 0}} \V_r \alpha_q \alpha_{q+r}  + C \V_0 \rho^2 Y^2 .
	\end{align}
	Up to errors $\mathcal{E} \leq C \V_0 \rho^2 Y^{\frac{1}{2}+\beta}$, we can approximate these Riemann sums by integrals (see Lemma~\ref{lem:RiemannSums2}) and the lemma follows. In fact, the requirement $\beta\geq 3/2$ in Theorem~\ref{thm.upperbound.soft.pot} comes from here.
\end{proof}

\begin{lemma}\label{lem:up2}
	Under the assumptions of Theorem~\ref{thm.upperbound.soft.pot}, there exists a constant $C>0,$ independent of $v$ and $\rho$, such that
	\begin{align*}
		\vert \Lambda_\beta \vert ^{-1} \langle \mathcal{H} \rangle _\Phi &\leq  \frac{\rho^2}{2} \widehat \gt_0 + \frac 1 2 \int \Big(  \sqrt{p^4 + 2 \rho_0 p^2 \widehat \gt_p} -p^2 - \rho_0 \widehat \gt_p+ \rho_0^2\frac{\widehat \gt^2_p - \widehat \gt_0^2 \one_{\{p \leq  2 e^{-\Gamma} \widetilde{R}^{-1}\}}}{2p^2} \Big)\frac{\dd p }{4 \pi^2} \\ &\quad+ \frac{1}{2} D(\alpha + \rho_0 \widehat \omega, \alpha + \rho_0 \widehat \omega)  + C \rho^2 Y  ( \V_0 - \widehat g_0 ) + C \V_0 \rho^2 Y^2 .
	\end{align*}
\end{lemma}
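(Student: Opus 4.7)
The plan is to start from Lemma~\ref{lem:up1} and perform an algebraic rearrangement converting occurrences of $\V_p$ into occurrences of $\widehat{g}_p$ by means of the scattering equation $g = v\varphi = v - v\omega$, which in Fourier reads $\V_p = \widehat{g}_p + \widehat{v\omega}(p)$. Inserting this decomposition both into $\tfrac{\rho^2}{2}\V_0$ and into the integral $\int((p^2 + \rho_0 \V_p)\gamma_p + \rho_0 \V_p \alpha_p)\tfrac{\dd p}{4\pi^2}$ from Lemma~\ref{lem:up1} splits every $\V_p$ into a ``$g$-piece'' and a ``$\widehat{v\omega}$-piece.''

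For the $g$-piece of the integrand, inserting the explicit choice \eqref{def:alphagamma} one checks by direct algebra (using $(p^2 + \rho_0 \widehat{g}_p)^2 - (\rho_0 \widehat{g}_p)^2 = p^4 + 2\rho_0 p^2 \widehat{g}_p$) that
\begin{equation*}
(p^2 + \rho_0 \widehat{g}_p)\gamma_p + \rho_0 \widehat{g}_p \alpha_p = \tfrac{1}{2}\bigl(\sqrt{p^4 + 2\rho_0 p^2 \widehat{g}_p} - p^2 - \rho_0 \widehat{g}_p\bigr).
\end{equation*}
Since this integrand decays only as $-\rho_0^2 \widehat{g}_p^2/(2p^2)$ at large $p$, I add and subtract the regularizer $\tfrac{\rho_0^2}{2} R_p$ with $R_p = (\widehat{g}_p^2 - \widehat{g}_0^2\one_{\{|p|\leq \ldelta^{-1}\}})/(2p^2)$; the sum yields the integrand of Lemma~\ref{lem:up2}, while the subtracted piece collapses, by Lemma~\ref{eq:gdecay}, to a multiple of $\widehat{g\omega}(0)$.

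For the $\widehat{v\omega}$-piece, the convolution identity $\V\ast\widehat{\omega} = (2\pi)^2 \widehat{v\omega}$ gives $\rho_0 \int \widehat{v\omega}(p)\alpha_p \tfrac{\dd p}{4\pi^2} = \rho_0 D(\widehat{\omega},\alpha)$, which is precisely the cross term needed to complete the square
\begin{equation*}
\tfrac{1}{2} D(\alpha + \rho_0 \widehat{\omega}, \alpha + \rho_0 \widehat{\omega}) = \tfrac{1}{2}D(\alpha,\alpha) + \rho_0 D(\alpha,\widehat{\omega}) + \tfrac{\rho_0^2}{2} D(\widehat{\omega},\widehat{\omega}).
\end{equation*}
The new quadratic $\tfrac{\rho_0^2}{2}D(\widehat{\omega},\widehat{\omega})$ is identified with $\tfrac{\rho_0^2}{2}\int v\omega^2\dd x$ by Parseval, and $g = v(1-\omega)$ implies $\int v\omega^2 \dd x = (\V_0 - \widehat{g}_0) - \widehat{g\omega}(0)$. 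Thus the two $\widehat{g\omega}(0)$-contributions (from the regularization and from the completed square) cancel exactly, and only a $\tfrac{1}{2}(\rho^2 - \rho_0^2)(\V_0 - \widehat{g}_0)$ remainder survives, as the leading $\tfrac{\rho^2}{2}(\V_0 - \widehat{g}_0)$ has produced a $\tfrac{\rho_0^2}{2}(\V_0 - \widehat{g}_0)$.

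The leftover pieces to bound are $\rho_0 \int \widehat{v\omega}(p)\gamma_p \tfrac{\dd p}{4\pi^2} = \rho_0 D(\widehat{\omega},\gamma)$ and $\tfrac{1}{2}(\rho^2 - \rho_0^2)(\V_0 - \widehat{g}_0)$. Both are controlled using $|\widehat{v\omega}(p)| \leq \widehat{v\omega}(0) = \V_0 - \widehat{g}_0$ together with Lemma~\ref{lem:ag}, which yields $\sum_{p\neq 0}\gamma_p \leq CN\delta$, hence after Riemann-sum conversion both $\int \gamma_p \dd p \leq C\rho\delta \cdot |\Lambda_\beta^*|$-style control and $\rho - \rho_0 = L_\beta^{-2}\sum_{p\neq 0}\gamma_p \leq C\rho\delta$. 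Each residual is then bounded by $C\rho^2 Y(\V_0 - \widehat{g}_0)$, matching exactly the allowed error, while the $C\V_0\rho^2 Y^3$ inherited from Lemma~\ref{lem:up1} is trivially absorbed into $C\V_0\rho^2 Y^2$. The main obstacle is the careful bookkeeping of the distributional nature of $\widehat{\omega}$ at the origin: only through the regularized identities of Lemmas~\ref{lem:FT-w} and~\ref{eq:gdecay} do the two otherwise-divergent $\widehat{g\omega}(0)$-type terms produce a finite, exact cancellation, after which the remaining error bounds are routine.
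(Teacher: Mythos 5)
Your proposal is correct and follows essentially the same route as the paper's proof: completing the square on $D(\alpha,\alpha)$ with $\rho_0\widehat\omega$, using the scattering equation to replace $\widehat v_p$ by $\widehat g_p$, invoking the explicit Bogoliubov algebra for $(p^2+\rho_0\widehat g_p)\gamma_p+\rho_0\widehat g_p\alpha_p$, and using the regularized Fourier identity for $\widehat{g\omega}(0)$ (Lemma~\ref{eq:gdecay}/Lemma~\ref{lem:FT-w}) so that the two $\widehat{g\omega}(0)$-contributions cancel exactly, with the remainders $\rho_0\int(\widehat v_p-\widehat g_p)\gamma_p$ and $\tfrac{\rho^2-\rho_0^2}{2}(\widehat v_0-\widehat g_0)$ bounded via $\rho-\rho_0\leq C\rho Y$ and $|\widehat v_p-\widehat g_p|\leq\widehat v_0-\widehat g_0$. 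The only cosmetic difference is that you expand $D(\widehat\omega,\widehat\omega)=(\widehat v_0-\widehat g_0)-\widehat{g\omega}(0)$ directly, whereas the paper obtains the same terms by first decomposing $\tfrac{\rho^2}{2}\widehat v_0$ using $\widehat{v\omega}=\widehat{g\omega}+\widehat{v\omega^2}$; the algebra is equivalent.
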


\begin{proof}
	We recall the definition \eqref{eq:scat_defs} of $\omega$, and we insert $\rho_0 \widehat{\omega}$ into $D(\alpha,\alpha)$,
	\[ D(\alpha,\alpha) = - \rho_0^2 D(\widehat \omega, \widehat \omega) - 2 \rho_0 D(\alpha, \widehat \omega) + D(\alpha + \rho_0 \widehat \omega, \alpha + \rho_0 \widehat \omega) . \]
	Inserting this into Lemma~\ref{lem:up1} we find
	\begin{align} \frac{\langle \mathcal{H}_v \rangle _\Phi}{\vert \Lambda_\beta \vert} &\leq \frac{\rho^2}{2} \V_0 -  \frac{\rho_0^2}{2} D(\widehat \omega, \widehat \omega) +  \int \Big((p^2 + \rho_0 \V_p) \gamma_p + \rho_0 (\V_p - (\V * \widehat \omega)_p) \alpha_p\Big) \frac{\dd p}{4\pi^2} \nonumber \\ &\quad+ \frac{1}{2} D(\alpha + \rho_0 \widehat \omega, \alpha + \rho_0 \widehat \omega) + C \V_0 \rho^2 Y^2 . \label{eq:HPhi04}
	\end{align}
	Now note that $\widehat \gt _p = \V_p - ( \V * \widehat \omega)_p$ and,
	\begin{align*}
		\frac{\rho^2}{2} \V_0 &= \frac{\rho^2}{2} \widehat \gt _0 + \frac{\rho_0^2}{2} (\widehat{v \omega})_0 + \frac{\rho^2 -\rho_0^2 }{2} (\widehat{v \omega})_0 \\
		&= \frac{\rho^2}{2} \widehat \gt_0 + \frac{\rho_0^2}{2} (\widehat{\gt \omega})_0 + \frac{\rho_0^2}{2} (\widehat{v \omega^2})_0 + \frac{\rho^2 -\rho_0^2 }{2} (\widehat{v \omega})_0.
	\end{align*}
	This equality inserted in \eqref{eq:HPhi04}, together with $D(\widehat \omega, \widehat \omega) = (\widehat{v  \omega^2})_0$ implies
	\begin{align} \frac{\langle \mathcal{H}_v \rangle _\Phi}{\vert \Lambda \vert} &= \frac{\rho^2}{2} \widehat \gt_0 +  \frac{\rho_0^2}{2} (\widehat{\gt \omega})_0 +  \int \Big((p^2 + \rho_0 \V_p) \gamma_p + \rho_0 \widehat \gt_p \alpha_p \Big)\frac{\dd p}{4\pi^2} \nonumber \\ &\quad+ \frac{1}{2} D(\alpha + \rho_0 \widehat \omega, \alpha + \rho_0 \widehat \omega)  + \frac{\rho^2 - \rho_0^2}{2} (\widehat{v\omega})_0  + C \V_0 \rho^2 Y^2 \label{eq:HPhi05}.
	\end{align}
	Our choice of $\gamma$ and $\alpha$ minimizes the integral where we replaced $\V_p$ by $\widehat \gt_p$, and by explicit computation using the definition \eqref{def:alphagamma} of $\alpha$ and $\gamma$ we find 
	\begin{equation}\label{eq.int.bog0}
		\int (p^2 + \rho_0 \widehat \gt_p) \gamma_p + \rho_0 \widehat \gt_p \alpha_p \frac{\dd p}{2 \pi^2} = \frac 1 2 \int  \Big(\sqrt{p^4 + 2 \rho_0 p^2 \widehat \gt_p}-p^2 - \rho_0 \widehat \gt_p \Big)\frac{\dd p }{4 \pi^2} .
	\end{equation}
	Moreover the formula for $\widehat{g\omega}$ from Lemma~\ref{eq:gdecay} yields
	\[ (\widehat{\gt \omega})_0 = \langle \widehat \omega, \widehat \gt \rangle = \int \frac{\widehat \gt^2_p - \widehat \gt_0^2 \one_{\{p \leq \ell_\delta^{-1}\}}}{2p^2} \frac{\dd p}{4 \pi^2}, \]
	where $\ell_\delta = \frac 1 2 e^{\Gamma} \widetilde{R}$. Inserting this and \eqref{eq.int.bog0} into \eqref{eq:HPhi05} we find
	\begin{align*} \frac{\langle \mathcal{H} \rangle _\Phi}{\vert \Lambda \vert} &= \frac{\rho^2}{2} \widehat \gt_0 + \frac 1 2 \int \Big(  \sqrt{p^4 + 2 \rho_0 p^2 \widehat \gt_p} -p^2 - \rho_0 \widehat \gt_p+ \rho_0^2\frac{\widehat \gt^2_p - \widehat \gt_0^2 \one_{\{p \leq \ell^{-1}_{\delta}\}}}{2p^2} \Big)\frac{\dd p }{4 \pi^2} \\ &\quad+ \frac{1}{2} D(\alpha + \rho_0 \widehat \omega, \alpha + \rho_0 \widehat \omega)  + \frac{\rho^2 - \rho_0^2}{2} (\widehat{v\omega})_0 + \rho_0 \int (\V_p - \widehat \gt_p) \gamma_p \frac{\dd p}{4 \pi^2}+ C \V_0 \rho^2 Y^2,
	\end{align*}
	where the last integral comes from the replacement of $\V_p$ by $\widehat g_p$ in the first term of the integral in \eqref{eq:HPhi05}. Since $ \rho - \rho_0 \leq C\rho Y$ (Lemma~\ref{lem:ag} and Lemma~\ref{lem:RiemannSums2}) and $\vert \V_p - \widehat g_p \vert \leq (\widehat{v\omega})_0 = \V_0 - \widehat g_0$, we can bound
	\[ \frac{\rho^2 - \rho_0^2}{2} (\widehat{v\omega})_0 + \rho_0 \int (\V_p - \widehat \gt_p) \gamma_p \frac{\dd p}{4 \pi^2} \leq C \rho^2 Y (\V_0 - \widehat g_0), \]
	and the lemma follows.
\end{proof}
In the following lemma we estimate the remainder term from Lemma~\ref{lem:up2}.
\begin{lemma}\label{lem:up3}
	There is a $C>0$ independent of $v$ and $\rho$ such that:
	\[ D(\alpha + \rho_0 \widehat \omega, \alpha + \rho_0 \widehat \omega) \leq\rho_{0}^{2}\delta^{2}\V_{0} \Big\vert\frac{1}{\delta}-\frac{1}{Y}+\log\delta \Big\vert+\rho_{0}^{2}\delta^{2}\V_{0} \Big(\frac{1}{\delta}-\frac{1}{Y}+\log\delta \Big)^{2}+ C  \V_0 \rho_0^2 \widehat{g}_0^2 .\]
In particular, with $\delta = \delta_0$ we deduce
\[ D(\alpha + \rho_0 \widehat \omega, \alpha + \rho_0 \widehat \omega)  \leq C \widehat v_0 \rho^2 \delta_0^2. \]
\end{lemma}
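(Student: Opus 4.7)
The plan is to pass to position space. Since $\widehat\omega$ is a tempered distribution whose inverse Fourier transform is the scattering function $\omega$ from \eqref{eq:scat_defs} (by Lemma~\ref{lem:FT-w}), a Plancherel-type identity yields
\[D(A,A) = \int v(x)\, \check A(x)^2\,\dd x \;\leq\; \V_0\, \| \check A \|_{L^\infty(\supp v)}^2, \qquad A := \alpha + \rho_0\widehat\omega, \quad \check A = \check\alpha + \rho_0\omega.\]
It therefore suffices to prove the pointwise bound $|\check A(x)| \leq C\bigl(\rho_0 \delta\,|L| + \rho_0 \widehat g_0\bigr)$ on $|x|\leq R$, where $L := 1/\delta - 1/Y + \log\delta$: squaring and using $\widehat g_0 = 8\pi\delta$ produces exactly the three terms of the claimed bound, the cross-term $2\rho_0\delta |L| \cdot \rho_0 \widehat g_0$ providing the first, $L^2$ the second, and $\widehat g_0^2$ the third.

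To establish the pointwise bound I would apply Lemma~\ref{lem:FT-w} to write
\[(2\pi)^2 \check A(x) = \int \alpha_p\, e^{ipx}\,\dd p + \rho_0 \int \frac{\widehat g_p e^{ipx} - \widehat g_0\, \one_{\{|p|\leq \ell_\delta^{-1}\}}}{2p^2}\,\dd p,\]
and split the $p$-integration at $|p| = \ell_\delta^{-1}$. On the high-momentum side $|p| > \ell_\delta^{-1}$ the identity
\[\alpha_p + \frac{\rho_0 \widehat g_p}{2p^2} = \frac{\rho_0 \widehat g_p (E_p - p^2)}{2 p^2 E_p}, \qquad E_p := \sqrt{p^4 + 2\rho_0 \widehat g_p p^2},\]
exhibits the expected cancellation: since $E_p \approx p^2$ there, the combined function is of order $(\rho_0 \widehat g_p)^2/p^4$, contributing $O(\rho_0^2 \widehat g_0^2 \ell_\delta^2) = O(\rho\,\widehat g_0^2)$ after integration, easily absorbed in the third term of the bound. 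On the low-momentum side, the assumption $\rho R^2 \leq Y$ and $\ell_\delta^{-1} \sim \rho^{1/2}$ yield $|px|\leq R\ell_\delta^{-1}\lesssim \sqrt Y \ll 1$, and $|\widehat g_p - \widehat g_0|\leq C p^2 R^2 \widehat g_0$, so $e^{ipx}$ can be replaced by $1$ and $\widehat g_p$ by $\widehat g_0$ up to errors of order $\rho\delta\sqrt Y$ (bounded via the moment estimates on $\alpha$ from the proof of Lemma~\ref{lem:ag}). The remaining main contribution reduces by polar integration to
\[\int_{|p|\leq \ell_\delta^{-1}} \alpha_p\,\dd p = -\pi \rho_0 \widehat g_0\, \mathrm{arcsinh}\bigl(\ell_\delta^{-1}/\sqrt{2\rho_0 \widehat g_0}\bigr).\]

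Since $\ell_\delta^{-2}\sim \rho$ and $\rho_0 \widehat g_0 \sim \rho\delta$, the argument of arcsinh is large, so $\mathrm{arcsinh}(t) = \log(2t) + O(1/t^2)$. Using $\ell_\delta^2 = \tfrac14 e^{2\Gamma} a^2 e^{1/\delta}$, $\widehat g_0 = 8\pi\delta$, and the key relation $\log(\rho a^2) = -1/Y$, one computes $\mathrm{arcsinh}(\ell_\delta^{-1}/\sqrt{2\rho_0 \widehat g_0}) = -\tfrac12(L + 2\Gamma + \log\pi) + o(1)$, and consequently $\check A(x) = \rho_0 \delta(L + 2\Gamma + \log\pi) + O(\rho_0 \delta (\sqrt Y + \delta))$ on $\supp v$, yielding the required pointwise bound, with the constant $2\Gamma + \log\pi$ harmless and absorbed into the $\rho_0 \widehat g_0$ term.

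The main obstacle is the careful bookkeeping in the arcsinh expansion that produces precisely the combination $L = 1/\delta - 1/Y + \log\delta$, together with controlling all the small error terms (from the Taylor expansions of $e^{ipx}$ and $\widehat g_p$, from the high-momentum remainder, and from the subtraction in the distributional formula for $\widehat\omega$) so that they fit under the $\rho_0 \widehat g_0$ part of the pointwise bound, equivalently under the $\V_0 \rho_0^2 \widehat g_0^2$ term of the final bound. The ``in particular'' claim is then immediate: at $\delta = \delta_0 = |\log(\rho a^2 Y)|^{-1}$ one has $1/\delta_0 = 1/Y + |\log Y|$, so $L = |\log Y| + \log\delta_0 = O(Y|\log Y|)$, and the first two terms are $O(\V_0 \rho^2 Y^3 |\log Y|) = o(\V_0 \rho^2 \delta_0^2)$, dominated by the third term $C\V_0 \rho_0^2 \widehat g_0^2 = O(\V_0 \rho^2 \delta_0^2)$.
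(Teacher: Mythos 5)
You take a genuinely different route from the paper. The paper's proof stays in momentum space, first estimating $\varphi_p=\langle\alpha+\rho_0\widehat\omega,\,\widehat v_{p-\cdot}\rangle$ and then $D=\langle\alpha+\rho_0\widehat\omega,\,\varphi\rangle$, each time splitting the $q$-integration at $\sqrt{\rho_0\widehat g_0}$ and at $\ell_\delta^{-1}$. You pass to position space via $D(A,A)=\int v\,\check A^2\le\widehat v_0\,\|\check A\|_{L^\infty(\supp v)}^2$ (correct with the paper's normalization, since $\widehat v*A=(2\pi)^2\widehat{v\check A}$ and $A$ is even, so Plancherel applies) and reduce to a single pointwise bound on $\check A=\check\alpha+\rho_0\omega$ for $|x|\le R$. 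This collapses two nested momentum-space estimates into one and is a genuine simplification: the key cancellation $\alpha_p+\rho_0\widehat g_p/(2p^2)=\rho_0\widehat g_p(E_p-p^2)/(2p^2 E_p)$ that the paper has to exploit twice appears only once in your argument.

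Two of your intermediate steps are not quite right as stated, although both can be repaired. First, $|\alpha_p+\rho_0\widehat g_p/(2p^2)|\le(\rho_0\widehat g_p)^2/(2p^4)$ is indeed universal, but integrating it over $|p|>\ell_\delta^{-1}$ gives $O\bigl((\rho_0\widehat g_0)^2\ell_\delta^2\bigr)$, and since $(\rho_0\widehat g_0)\ell_\delta^2\approx 2\pi e^{2\Gamma}\delta\,e^{1/\delta-1/Y}$ this is of the required size $O(\rho_0\widehat g_0)$ only when $\ell_\delta^{-2}\gtrsim\rho_0\widehat g_0$; for $\delta$ appreciably below $\delta_0$ the window $\ell_\delta^{-1}<|p|<\sqrt{\rho_0\widehat g_0}$ is nonempty and the combined integrand is genuinely of size $\rho_0\widehat g_0/p^2$ there, contributing $\sim\rho_0\widehat g_0\log(\rho_0\widehat g_0\ell_\delta^2)\approx\rho_0\widehat g_0(L+C)$, so you must also split at $\sqrt{\rho_0\widehat g_0}$ --- exactly the paper's second threshold, and the source of the $|L|$ term. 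Second, the $\mathrm{arcsinh}$ argument is \emph{not} large at $\delta=\delta_0$: it is $e^{-\Gamma}/(2\sqrt\pi)+o(1)$ there, so the $\log(2t)$ expansion you invoke fails in the case of interest; what holds and suffices is the uniform bound $|\mathrm{arcsinh}(t)|\le C(1+|\log t|)\le C(1+|L|)$. (Your ``$O(\rho\widehat g_0^2)$'' for the high-momentum piece is also off by a factor $\widehat g_0$; at $\delta_0$ it is $O(\rho_0\widehat g_0)$, which is still harmless.) With these two repairs the pointwise bound $|\check A|\le C\rho_0\widehat g_0(1+|L|)$ on $\supp v$ does hold, and squaring and multiplying by $\widehat v_0$ gives the lemma; the ``in particular'' statement at $\delta=\delta_0$ then follows exactly as you say.
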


\begin{proof}
	We first estimate $\varphi_p := \langle \alpha + \rho_0 \widehat \omega, \V_{p-\cdot} \rangle$, using Lemma~\ref{lem:FT-w} as
	\begin{align*}
		\varphi_p &=  \int \Big(\frac{ \rho_0 \widehat \gt_q \V_{p-q} - \rho_0 \widehat \gt_0 \V_p \one_{\{\vert q\vert < \sqrt{\rho_{0} \widehat \gt_0}\}}}{2 q^2}+\frac{ \rho_0 \widehat \gt_0 \V_p \one_{\{ \ell^{-1}_{\delta}< \vert q\vert<\sqrt{\rho_{0} \widehat \gt_0} \}}}{2 q^2} \\ & \qquad \qquad  - \frac{\rho_0 \widehat \gt_q \V_{p-q}}{2 \sqrt{ q^4 + 2 \rho_0 q^2 \widehat \gt_q}}\Big) \frac{\dd q}{4 \pi^2}\\ &= \varphi_p^{(1)} + \varphi_p^{(2)} +\varphi_p^{(3)},
	\end{align*}
	with
	\begin{align}
		\vert \varphi_p^{(1)} \vert &=  \Big\vert \int_{\{\vert q \vert > \sqrt{\rho_0 \widehat \gt_0}\}} \frac{\rho_0 \widehat \gt_q \V_{p-q}}{2q^2} \Big( 1 - \frac{1}{\sqrt{1+ \frac{2 \rho_0 \widehat \gt_q}{q^2}}} \Big) \frac{\dd q}{4 \pi^2} \Big\vert \nonumber \\ &\leq C \V_0 \int_{\{\vert q \vert > \sqrt{\rho_0 \widehat \gt_0}\}} \frac{(\rho_0 \widehat \gt_0)^2}{q^4} \dd q \leq C \V_0 \rho_0 \widehat{g}_0 . \label{eq:boundphi1p}
	\end{align}
We also calculate
\begin{align}
	\vert \varphi_p^{(2)} \vert &=\Big\vert \int_{\{ \ell^{-1}_{\delta}<\vert q \vert < \sqrt{\rho_0 \widehat \gt_0}\}} \frac{\rho_0 \widehat \gt_0 \V_{p}}{2q^2} \frac{\dd q}{4 \pi^2} \Big\vert \nonumber\\
	&\leq \frac{\rho_{0}\widehat \gt_0\V_0}{4\pi}\log \Big(\sqrt{\rho_{0}a^{2}}\sqrt{\widehat \gt_0}\frac{e^{\Gamma}}{2}e^{\frac{1}{2\delta}} \Big)\nonumber\\
	&= \rho_{0}\delta \V_0 \Big( \frac{1}{\delta}-\frac{1}{Y}+\log\delta +C \Big).
	\end{align}
Using \eqref{eq:Defldelta} we have,
	\begin{align*}
		\vert \varphi_p^{(3)} \vert &= \Big\vert \int_{\{\vert q \vert <\sqrt{\rho_{0} \widehat \gt_0}\}} \frac{\rho_0 (\widehat \gt_q \V_{p-q} - \widehat \gt_0 \V_p ) }{2 q^2} \frac{\dd q}{4\pi^2} - \int_{\{\vert q \vert < \sqrt{\rho_0 \widehat{g}_0}\}} \frac{\rho_0 \widehat \gt_q \V_{p-q}}{2 \sqrt{q^4 + 2 \rho_0 q^2 \widehat \gt_q}} \frac{\dd q }{4 \pi^2} \Big\vert \\
		& \leq \int_{\{\vert q \vert < \sqrt{\rho_{0} \widehat \gt_0}\}} \frac{\rho_0 \widehat \gt_0 \vert \V_{p-q} - \V_p \vert +\rho_0   \vert\widehat{g}_q  - \widehat{\gt}_0   \vert\widehat{v}_0}{2 q^2} \frac{\dd q}{4\pi^2} + C \V_0 \sqrt{\rho_0 \widehat \gt_0}  \int_{\{\vert q \vert < \sqrt{\rho_{0} \widehat \gt_0}\}} \frac{1}{\vert q \vert} \dd q \\
		& \leq C \Vert \nabla \V \Vert_{\infty} \int_{\{\vert q \vert < \sqrt{\rho_{0} \widehat \gt_0}\}} \frac{\rho_0 \widehat \gt_0}{\vert q \vert} \dd q + C \V_0 \rho_0 \widehat{g}_0 \leq C\V_0 \rho_0 \widehat{g}_0,
	\end{align*}
where we used $\ell^{-1}_{\delta}\sim \sqrt{\rho\widehat{g}_0}\leq R$ and $\Vert\nabla \widehat{v}_0\Vert_{\infty}\leq R\V_0$. In the end we obtain
\begin{align*}
	\vert \varphi_p \vert &\leq C\V_0 \rho_0 \widehat{g}_0 +\rho_{0}\delta \V_0 \Big(\frac{1}{\delta}-\frac{1}{Y}+\log\delta +C \Big).
\end{align*}
Similarly we have bounds on the gradient of $\varphi$, namely
	\begin{equation}
		\vert \nabla \varphi_p \vert \leq CR\V_0 \rho_0\widehat{g}_0.
	\end{equation}
	Now we turn to
	\begin{align*}
		D(\alpha + \rho_0 \widehat \omega, \alpha + \rho_0 \widehat \omega) &= \langle \alpha + \rho_0 \widehat \omega, \varphi \rangle \\  &= \int \Big(\frac{ \rho_0 \widehat \gt_q \varphi_{q} - \rho_0 \widehat \gt_0 \varphi_0 \one_{\{\vert q \vert > \ell^{-1}_{\delta}\}}}{2 q^2} - \frac{\rho_0 \widehat \gt_q \varphi_{q}}{2 \sqrt{ q^4 + 2 \rho_0 q^2 \widehat \gt_q}} \Big)\frac{\dd q}{4 \pi^2},
	\end{align*}
	which we in the same way write as $D_1 + D_2+ D_{3}$ with
	\begin{align*}
		\vert D_1\vert &= \Big\vert  \int_{\{\vert q \vert > \sqrt{\rho_0\widehat{g}_0}\}} \frac{\rho_0 \widehat \gt_q \varphi_q}{2 q^2} \Big( 1 - \frac{1}{\sqrt{1+ \frac{2 \rho_0 \widehat \gt_q}{q^2}}} \Big)  \frac{\dd q}{4 \pi^2} \Big\vert \\
		&\leq \frac{\Vert \varphi \Vert_{\infty}}{8\pi^{2}}  \int_{\{\vert q \vert > \sqrt{\rho_0\widehat{g}_0}\}} \frac{(\rho_0 \widehat \gt_0)^2}{q^4} \dd q  \\
		&\leq C\V_0 \rho_0^{2} \widehat{g}_0^{2}+\rho_{0}^{2}\delta^{2}\V_{0} \Big(\frac{1}{\delta}-\frac{1}{Y}+\log\delta +C \Big),
	\end{align*}
	and using the bounds on $\varphi$, we find $\vert D_1 \vert \leq  C  \V_0 \rho_0^2 \widehat{g}_0^2$. The technique to bound $D_{2}$ is the same as for $\varphi^{(2)}$ and its provides
	\begin{equation*}
		\vert D_2 \vert  = \Big\vert\int_{\{\ell^{-1}_{\delta} < \vert q \vert < \sqrt{\rho_0 \widehat{g}_0} \}} \frac{ \rho_0 \widehat \gt_0 \varphi_0 }{2 q^2} \frac{\dd q}{4 \pi^2}\Big\vert \leq  \V_0 \rho_0^2\delta^{2} \Big(\frac{1}{\delta}-\frac{1}{Y}+\log\delta +C \Big)^{2}.
	\end{equation*}
Lastly $D_3$ is bounded just as $\varphi^{(3)}$,
	\begin{align*}
		\vert D_3 \vert &= \Big\vert \int_{\{\vert q \vert < \sqrt{\rho_0\widehat{g}_0}\}} \frac{\rho_0 (\widehat \gt_q \varphi_{q} - \widehat \gt_0 \varphi_0 ) }{2 q^2} \frac{\dd q}{4\pi^2} - \int_{\{\vert q \vert < \sqrt{\rho_0\widehat{g}_0}\}} \frac{\rho_0 \widehat \gt_q \varphi_{q}}{2 \sqrt{q^4 + 2 \rho_0 q^2 \widehat \gt_q}} \frac{\dd q }{4 \pi^2} \Big\vert \\
		&\leq C  \V_0 \rho_0^2 \widehat{g}_0^2,
	\end{align*}
	from which the result follows. 
\end{proof}

Now we have all necessary ingredients to conclude the proof of Theorem~\ref{thm.upperbound.soft.pot}.

\begin{proof}[Proof of Theorem~\ref{thm.upperbound.soft.pot}]
	We take the trial state $\Phi$ defined in Section~\ref{subsec:quasifree}, which has the expected bounds on number of particles from Lemma~\ref{lem:numberPhi}. The energy of $\Phi$ is bounded by Lemma~\ref{lem:up2} together with Lemma~\ref{lem:up3}, and using $\delta_0\geq \frac{1}{2}Y$ we find
	\begin{align*}
		\frac{\langle \mathcal{H} \rangle _\Phi}{\vert \Lambda_\beta \vert}   &\leq \frac{\rho^2}{2} \widehat \gt_0 + \frac 1 2 \int \Big(  \sqrt{p^4 + 2 \rho_0 p^2 \widehat \gt_p} -p^2 - \rho_0 \widehat \gt_p+ \rho_0^2\frac{\widehat \gt^2_p - \widehat \gt_0^2 \one_{\{\vert p\vert \leq  \ell_{\delta}^{-1}\}}}{2p^2} \Big)\frac{\dd p }{4 \pi^2} \nonumber \\ &\quad+ \rho_{0}^{2}\delta^{2}\V_{0} \Big(\frac{1}{\delta}-\frac{1}{Y}+\log\delta \Big)+\rho_{0}^{2}\delta^{2}\V_{0} \Big(\frac{1}{\delta}-\frac{1}{Y}+\log\delta \Big)^{2}\\ &\quad +C \rho^2 \delta (\V_0 - \widehat g_0)+ C  \V_0 \rho_0^2 \widehat{g}_0^2.
	\end{align*}
	Now this integral can be estimated by Proposition~\ref{prop:integralapprox2} and using $\rho-\rho_0\leq C\rho Y$ we find
	\begin{align}
		\frac{\langle \mathcal{H} \rangle _\Phi}{\vert \Lambda_\beta \vert}   &\leq  \frac{\rho^2}{2} \widehat \gt_0 + 4 \pi \rho^2 \delta^2 \Big( \frac{1}{\delta} - \frac{1}{Y} + \log \delta + \Big( \frac 1 2 + 2 \Gamma + \log \pi \Big) \Big) +\frac{1}{2} \rho^{2}\delta^{2}\V_{0} \Big\vert\frac{1}{\delta}-\frac{1}{Y}+\log\delta \Big\vert \nonumber \\ &\quad +\frac{1}{2}\rho^{2}\delta^{2}\V_{0} \Big(\frac{1}{\delta}-\frac{1}{Y}+\log\delta \Big)^{2}+C \rho^2 \delta (\V_0 - \widehat g_0)+ C  \V_0 \rho^2 \widehat{g}_0^2. \label{eq.upperbound.delta}
	\end{align}
Finally, with $\widehat g_0 = 8 \pi \delta$ and the specific choice $\delta = \delta_0$ we deduce
		\begin{align}
		\frac{\langle \mathcal{H} \rangle _\Phi}{\vert \Lambda_\beta \vert} &\leq  4\pi \rho^2 \delta_0 \Big(1 + \Big(2\Gamma + \frac{1}{2} + \log(\pi) \Big) \delta_0 \Big)+ C \rho^2 \delta_0(\widehat{v}_0-\widehat{g}_0)+C \rho^2 \delta^{2}_0\widehat{v}_0.
	\end{align}
\end{proof}

\begin{remark}\label{remark.delta.up}
In the case of the spherical measure potential \eqref{eq:deltaPot}, one can see that the upper bound \eqref{eq.upperbound.delta} is minimized (to the available energy precision) by the choice $\delta = \delta_0$. Indeed, even though the first four terms suggest to choose the smallest $\delta$ possible, the contribution of $\widehat v_0 - \widehat g_0$ yields a minimizer of the form
\begin{equation}
\delta = Y (1 + cY\vert \log Y \vert ).
\end{equation}
The constant $c$ will only affect the energy to a precision higher than the one we reach. We pick for simplicity $c=-1$ to obtain our $\delta_{0}$ providing useful cancelations.
\end{remark}


\section{General upper bound}\label{sec.upper.general}

In this section we prove Theorem~\ref{thm.upperbound.grandcanonical}, using the results of Section~\ref{sec.soft}. We let $\beta \geq \frac{3}{2} $ be given and we work on the box $\Lambda_\beta = [ - \frac{L_\beta}{2} , \frac{L_\beta}{2} ]^2$ of size $L_\beta = \rho^{- \frac 1 2} Y^{- \beta}$. Moreover, the number of particles at density $\rho$ is $N= Y^{- 2 \beta}$.

\subsection{Trial state}

Let $v$ be a non-negative measurable and radial potential with scattering length $a$ and $\supp(v) \subset B(0,R)$, with $\rho R^2 \leq Y^{2 \beta  +2}$. We consider $\varphi_b$ the associated scattering solution normalized at length $b = \rho^{-1/2} Y^{\beta + \frac 1 2}$. In other words $\varphi = 2 \delta_\beta \varphi^{(0)}$ with $\delta_\beta = \frac 1 2 \log \left( b / a \right)^{-1}$, see \eqref{eq:delta}. We also define $g=v\varphi$. Note that $R \ll b$. Let $f=\min\left (1,\varphi\right )$ be the truncated scattering solution. It satisfies
\begin{equation}\label{eq:scatt2}
 -\Delta f (x) + \frac{1}{2} v(x) f(x) = 0 \quad \text{on } B(0,b) ,
\end{equation} 
and is normalized such that $f(x) = 1$ for $\vert x \vert \geq b$.
We define a grand canonical trial state as 
\begin{equation}\label{eq:defpsi}
\Psi = \sum_{n \geq 0} \Phi_n F_n \in \mathscr{F}_s\big (L^{2}\left (\Lambda_{\beta}\right ) \big)
\end{equation}
 where $\Phi = \sum_n \Phi_n \in \mathscr{F}_s\big(L^{2}\left (\Lambda_{\beta}\right )\big )$ is a quasi-free state defined in \eqref{def:WT} and $F_n$ is the Jastrow factor
\begin{equation} 
F_n(x_1, \ldots, x_n) = \prod_{1 \leq i < j \leq n} f(x_i -x_j) .
\end{equation}
We will use the notation $f(x_i -x_j)=f_{ij}$ and $\nabla f (x_i - x_j) = \nabla f_{ij}$. Finally note that
\begin{equation}
\nabla_{i}F_n(x_1, \ldots, x_n)=\sum_{\substack{j=1\\ j\neq i}}^{n}\nabla_{i}f_{ij}\frac{F_n}{f_{ij}}.\label{eq:nabF}
\end{equation}

\subsection{Reduction to a soft potential}

In this section we prove that the energy $\langle \Psi , \mathcal H_v \Psi \rangle$ can be bounded by $\langle \Phi , \mathcal H_{\widetilde v} \Phi \rangle$ where $\widetilde v$ is a nicer potential. This is the effect of the Jastrow factor $F_n$, and we are thus reduced to optimizing the choice of the quasi-free state $\Phi$ according to the potential $\widetilde v$.

\begin{lemma}\label{lemma reduction to soft}
Consider the radial potential $\widetilde{v}(x) = 2 f'(b) \delta_{\{\vert x \vert = b\}}$ (with $f'$ being understood as the radial derivative). Then the state $\Psi$ defined in \eqref{eq:defpsi} satisfies
\[ \langle \Psi, \mathcal H_v \Psi \rangle \leq \langle \Phi, \mathcal H_{\widetilde v} \Phi \rangle - \langle \Phi, \mathcal R \Phi \rangle ,\]
where $\mathcal R= \oplus_n \mathcal R_n$ with
\[ \mathcal  R_n = \sum_{\{i,j,k\}} \frac{\nabla f_{ij}}{f_{ij}} \cdot \frac{\nabla f_{ik}}{f_{ik}} F_n^2, \]
where we introduced the notation $$\{i,j,k\}=\{\text{set of pairwise distinct indices}\,\, i, j, k \,\,\text{running from}\,\, 1\,\, \text{to}\,\, n\}. $$
\end{lemma}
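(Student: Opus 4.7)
The plan is to trade the potential $v$ for the soft surface potential $\widetilde v$ by exploiting the scattering equation satisfied by the truncated scattering solution $f$. The starting point is a distributional identity for $f$: since $f = \varphi$ on $B(0,b)$, $f=1$ on its complement, and the radial derivative jumps from $\varphi'(b)>0$ down to $0$ across $\{|x|=b\}$, a direct computation combined with $-\Delta\varphi + \tfrac12 v\varphi = 0$ on $B(0,b)$ and $f(b)=1$ yields
\[
-\Delta f + \tfrac{1}{2}\,v\,f \;=\; \tfrac{1}{2}\,\widetilde v\qquad \text{on } \R^2
\]
in the sense of distributions, where the product $\widetilde v \cdot f$ is unambiguous because $f$ is continuous and equal to $1$ on $\supp \widetilde v$.

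For each $n$-particle sector I set $\psi_n = \Phi_n F_n$ and apply the Dyson-style identity
\[
\sum_i \int |\nabla_i \psi_n|^2 \, \dd x \;=\; \sum_i \int F_n^2\,|\nabla_i \Phi_n|^2 \, \dd x \;-\; \int |\Phi_n|^2\sum_i F_n \Delta_i F_n \, \dd x,
\]
obtained by expanding the squared gradient and integrating by parts the cross term $\int F_n\,\nabla_i(|\Phi_n|^2)\cdot \nabla_i F_n$. The product structure of $F_n=\prod_{k<l}f_{kl}$ gives, by the chain rule,
\[
F_n \Delta_i F_n \;=\; F_n^2 \sum_{j \neq i} \frac{\Delta_i f_{ij}}{f_{ij}} \;+\; F_n^2 \sum_{\substack{j,l \neq i\\ j \neq l}} \frac{\nabla_i f_{ij}}{f_{ij}} \cdot \frac{\nabla_i f_{il}}{f_{il}},
\]
read distributionally.

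Substituting the two-body scattering equation gives $F_n^2\,\Delta_i f_{ij}/f_{ij} = \tfrac12 F_n^2\,(v_{ij}-\widetilde v_{ij})$. Summing over ordered pairs $(i,j)$ with $j\neq i$ converts the prefactor into $\sum_{i<j}$, and combining with the two-body potential $\sum_{i<j}v_{ij}|\psi_n|^2$ of $\mathcal H_v$ produces an exact cancellation of the $v_{ij}$ contributions. The outcome is
\[
\langle \psi_n, H_{v,n} \psi_n \rangle \;=\; \sum_i \int F_n^2\,|\nabla_i \Phi_n|^2 \;+\; \sum_{i<j} \int F_n^2 \,\widetilde v_{ij}\,|\Phi_n|^2 \;-\; \int |\Phi_n|^2\, \mathcal R_n\, \dd x,
\]
with $\mathcal R_n$ as in the statement. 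Since $f\in[0,1]$ implies $F_n^2\leq 1$, and both $|\nabla_i \Phi_n|^2$ and $\widetilde v_{ij}$ are non-negative, I may drop the $F_n^2$ prefactor in the first two integrals to obtain the claimed inequality. Summing over $n$ yields the Fock-space statement.

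The main technical point requiring care is the distributional treatment of the scattering equation at the jump surface $\{|x|=b\}$: one must verify that multiplying the singular part of $-\Delta f_{ij}$ by the bounded continuous tail $F_n/f_{ij}$ produces exactly the surface contribution $\widetilde v_{ij} F_n^2$, and this works precisely because $f_{ij}=1$ on $\supp \widetilde v_{ij}$. In the hard-core regime where $f$ may vanish on a positive-measure set, the ratios $\nabla f/f$ appearing in $\mathcal R_n$ are still harmless when interpreted through the product form of the Jastrow factor, since $\nabla f_{ij}=0$ and $F_n=0$ coincide on $\{f_{ij}=0\}$.
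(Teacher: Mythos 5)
Correct, and essentially the same approach as the paper's. Your reorganization via the Dyson-type identity $\sum_i\int|\nabla_i(F_n\Phi_n)|^2=\sum_i\int F_n^2|\nabla_i\Phi_n|^2-\sum_i\int F_n(\Delta_i F_n)\Phi_n^2$ together with the global distributional equation $-\Delta f+\tfrac12 vf=\tfrac12\widetilde v$ (which is precisely the paper's definition of $\widetilde v$) is a slightly cleaner packaging of the paper's integration by parts on the ball $\{|x_i-x_j|\le b\}$ plus explicit surface term, but the exact identity obtained before bounding $F_n^2\le 1$ and the subsequent steps are the same.
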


\begin{proof}
The energy of the $n$-th sector state is
\begin{align}\label{eq:squared}
\langle \Psi_n , \mathcal{H}_n \Psi_n \rangle =  \sum_{i=1}^n \int_{\Lambda^n} \big(F_n^2 \vert \nabla_i \Phi_n \vert^2 &+ \vert \nabla_i F_n \vert^2 \Phi_n^2 + 2 F_n \nabla_i F_n \cdot \Phi_n \nabla_i \Phi_n \big)\dd x \nonumber \\
& + \sum_{1\leq i<j\leq n} \int_{\Lambda^n} v(x_i-x_j) F_n^2 \Phi_n^2 \dd x.
\end{align}
 The second term in \eqref{eq:squared} can be written via \eqref{eq:nabF} as
\begin{align}\label{eq:squared2}
\sum_{i=1}^n \int_{\Lambda^n} \vert \nabla_i F_n \vert^2 \Phi_n^2 \dd x = \sum_{i\neq j} \int_{\Lambda^n} \vert \nabla f_{ij} \vert^2 \frac{F_n^2}{f_{ij}^2} \Phi_n^2 \dd x + \sum_{\{i,j,k\}} \int_{\Lambda^n} \frac{\nabla f_{ij}}{f_{ij}} \cdot \frac{\nabla f_{ik}}{f_{ik}} F_n^2 \Phi_n^2 \dd x .
\end{align}
Note that, in the first part of \eqref{eq:squared2} the integration in $x_i$ is only supported on the ball $\vert x_i - x_j \vert \leq b$, because $f_{ij} =1$ outside this ball. We integrate by parts on this ball to find
\begin{align}
\sum_{i=1}^n \int_{\Lambda^n} \vert \nabla_i & F_n \vert^2 \Phi_n^2 \dd x = - \sum_{i \neq j} \int_{\{\vert x_{i}-x_{j}\vert\leq b\}} \Delta f_{ij} \frac{F_n^2}{f_{ij}} \Phi_n^2 \dd x   - \sum_{i \neq j} \int_{\Lambda^n} \nabla f_{ij} \frac{F_n^2}{f_{ij}} \cdot \nabla_i (\Phi_n^2) \dd x \nonumber \\ 
& - \sum_{\{i,j,k\}} \int_{\Lambda^n} \frac{\nabla f_{ij}}{f_{ij}} \cdot \frac{\nabla f_{ik}}{f_{ik}} F_n^2 \Phi_n^2 \dd x + \sum_{i \neq j} \int_{\Lambda^{n-1}} \int_{\{\vert x_i - x_j \vert = b\}} \partial_r f(b) F_n^2 \Phi_n^2 \dd x_i \dd \hat{x}_i, \label{eq:squared3}
\end{align}
where $ \hat{x}_i=(x_{1}, \ldots x_{i-1} , x_{i+1}, \ldots , x_n)$.
The second term in the right hand side of \eqref{eq:squared3} is precisely $-2F_n \nabla_i F_n \cdot \Phi_n \nabla_i \Phi_n$ thanks to \eqref{eq:nabF}. We use the scattering equation \eqref{eq:scatt2} to transform
\begin{equation}\label{eq:squared4}
\sum_{i \neq j} \int_{\{\vert x_{i}-x_{j}\vert\leq b\}} \Delta f_{ij} \frac{F_n^2}{f_{ij}} \Phi_n^2 \dd x =\sum_{1\leq i<j\leq n} v(x_i-x_j) F_n^2 \Phi_n^2 \dd x,
\end{equation}
in \eqref{eq:squared3} (note that there is no half factor because the sum is on $i<j$). Using \eqref{eq:squared3} and \eqref{eq:squared4} in \eqref{eq:squared} we deduce
\begin{align}
\langle \Psi_n , \mathcal H_n \Psi_n \rangle &= \sum_{i=1}^n \int_{\Lambda^n} F_n^2 \vert \nabla_i \Phi_n \vert^2 + 2 \sum_{i<j} \int_{\Lambda^{n-1}} \int_{\{\vert x_i - x_j \vert = b\}} \partial_r f(b) F_n^2 \Phi_n^2 \dd x_i \dd \hat{x}_i \nonumber \\ &\quad - \sum_{\{i,j,k\}} \int_{\Lambda^n} \frac{\nabla f_{ij}}{f_{ij}} \cdot \frac{\nabla f_{ik}}{f_{ik}} F_n^2 \Phi_n^2 \dd x.
\end{align}
In the first two terms we bound $F_n$ by $1$, and the last one we consider as a remainder. Thus,
\[ \langle \Psi_n , \mathcal H_n \Psi_n \rangle \leq \int_{\Lambda^n} \vert \nabla \Phi_n \vert^2 + \sum_{i<j} \int_{\Lambda^n} \Phi_n^2 \widetilde{v}(x_i-x_j) \dd x - \mathcal  R_n .\]
\end{proof}
We comment here how in the proof we used nowhere that $\Phi$ is a quasi-free state, therefore the lemma holds true for more general $\Phi \in \mathscr{F}_s(L^2(\Lambda_{\beta}))$.

\subsection{Number of particles in our trial state}

Now for $\Phi$ we choose the quasi-free state given by Theorem~\ref{thm.upperbound.soft.pot}, applied to the potential $\widetilde v$. We recall that $\Phi = W_{N_0} T_\nu \Omega$ is defined in \eqref{def:WT}, and $\Psi$ in \eqref{eq:defpsi}. In this section we prove the following two lemmas, giving estimates on the norm of $\Psi$ and the average number of particles in $\Psi$. The idea is to use the properties of $F_n$ to derive the bounds on $\Psi_n = F_n \Phi_n$ from the bounds on the quasi-free state $\Phi$.

\begin{lemma} \label{lemma denominator}
There is a $C>0$, independent of $v$ and $\rho$, such that
\[\Vert \Psi \Vert^2 \geq \Vert \Phi \Vert^2 \Big( 1 - C N Y^{2\beta + 2} \Big) .\]
\end{lemma}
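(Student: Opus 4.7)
My plan is to expand
\[
\|\Psi\|^2 = \sum_{n \geq 0} \int_{\Lambda_\beta^n} F_n(x_1,\ldots,x_n)^2 \, |\Phi_n(x_1,\ldots,x_n)|^2 \, \dd x_1 \cdots \dd x_n,
\]
and to exploit the pointwise bound $0 \leq F_n \leq 1$. Since each factor $f_{ij}^2 \in [0,1]$, a short induction on the number of factors yields the telescoping inequality
\[
F_n^2 \;=\; \prod_{1 \leq i<j \leq n} f_{ij}^2 \;\geq\; 1 - \sum_{1 \leq i<j \leq n}(1-f_{ij}^2).
\]
Inserting this lower bound and using the bosonic symmetry of $\Phi_n$ to rewrite the resulting error in second quantization gives
\[
\|\Psi\|^2 \;\geq\; \|\Phi\|^2 \;-\; \tfrac{1}{2}\!\int_{\Lambda_\beta \times \Lambda_\beta} \! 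W(x-y)\, \langle a_x^\dagger a_y^\dagger a_y a_x \rangle_\Phi\, \dd x\, \dd y, \qquad W := 1-f^2 \geq 0.
\]

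For $\Phi = W_{N_0} T_\nu \Omega$ I plan to compute $\langle a_x^\dagger a_y^\dagger a_y a_x\rangle_\Phi$ by first conjugating with $W_{N_0}$ (which replaces $a_x$ by $\sqrt{N_0/|\Lambda_\beta|} + a_x$) and then applying Wick's theorem to the quasi-free state $T_\nu \Omega$, exactly as in the proof of Lemma~\ref{lem:numberPhi}. Translation invariance makes the two-point functions depend only on the relative position $x-y$, and the Fourier coefficients $N_0, \gamma_p, \alpha_p$ are the ones already estimated. The resulting expression is a sum of products of the factors $N_0/|\Lambda_\beta|$, $|\Lambda_\beta|^{-1}\sum_{p\neq 0}\gamma_p e^{ipz}$ and $|\Lambda_\beta|^{-1}\sum_{p\neq 0}\alpha_p e^{ipz}$, each uniformly bounded by $C\rho$ thanks to Lemma~\ref{lem:ag} together with $N_0 \leq N$. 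Hence $\langle a_x^\dagger a_y^\dagger a_y a_x\rangle_\Phi \leq C\rho^2$ pointwise, so, using $\|\Phi\|=1$,
\[
\|\Psi\|^2 \;\geq\; \|\Phi\|^2 \Big( 1 - C\rho^2\, |\Lambda_\beta| \int_{\mathbb{R}^2} W(z)\, \dd z \Big).
\]

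The final step is to estimate $\int W$. Since $f \equiv 1$ outside $B(0,b)$ and $f = \min(1,\varphi_b)$ with $\varphi_b(r) = \log(r/a)/\log(b/a)$ on $[R,b]$, I write $1 - f^2 = (1-f)(1+f) \leq 2(1-\varphi_b)_+$ and compute in polar coordinates
\[
\int_{\mathbb{R}^2} W(z)\, \dd z \;\leq\; C \Big( R^2 + \frac{b^2}{\log(b/a)} \Big) \;\leq\; C\, b^2\, Y,
\]
where the $R^2$ contribution is absorbed using the assumption $\rho R^2 \leq Y^{2\beta+2}$ together with $b^2 = \rho^{-1} Y^{2\beta+1}$. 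Plugging $b^2 = \rho^{-1}Y^{2\beta+1}$ and $\rho|\Lambda_\beta| = N$ into the error yields precisely $CNY^{2\beta+2}$, which is the claim. The only slightly delicate point in this plan is the book-keeping of the $p=0$ condensate mode when applying Wick's theorem, but since this is handled identically to Lemma~\ref{lem:numberPhi}, I do not anticipate any real obstacle.
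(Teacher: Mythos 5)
Your proof is correct and follows the same approach as the paper: the same telescoping inequality $\prod f_{ij}^2 \geq 1 - \sum(1-f_{ij}^2)$, second quantization combined with conjugation by $W_{N_0}$ and Wick's theorem, the bounds $\sum_p|\alpha_p|\leq CN$, $\sum_p\gamma_p\leq CN\delta$ from Lemma~\ref{lem:ag}, and the same estimate $\int(1-f^2)\leq C\rho^{-1}Y^{2\beta+2}$. Phrasing the Wick bound in position space (uniform $L^\infty$ control of the two-particle density $\leq C\rho^2$) rather than in momentum space as the paper does is only a cosmetic difference and gives exactly the same final inequality.
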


\begin{proof}
The norm of our trial state is bounded from below by
\begin{align}
\Vert \Psi \Vert^2 &= \sum_{n \geq 0} \int_{\Lambda^n} F_n^2(x) \Phi_n^2(x) \dd x \nonumber \\ &\geq \sum_{n \geq 0} \Big( \int_{\Lambda^n} \Phi_n^2 \dd x - \sum_{1\leq i<j\leq n} \int_{\Lambda^n} (1 - f(x_i-x_j)^2) \Phi_n^2(x) \dd x \Big),  \label{eq:aPsiBound}
\end{align}
where we used the inequality 
\begin{equation}\label{eq:the.inequality}
\prod_{1 \leq i<j \leq n} f(x_i-x_j)^2 \geq 1 - \sum_{1 \leq i<j \leq n} (1- f(x_i-x_j)^2).
\end{equation}
 The second term is the $2$-body interaction potential energy of $\Phi$, thus we can write it as
\begin{align}
\sum_{n \geq 0} \sum_{1 \leq i <j \leq n} \int (1-f(x_i-x_j)^2) \Phi_n^2 \dd x &= \frac{1}{2 \vert \Lambda_{\beta} \vert} \sum_{p,q,r} \widehat{(1-f^2)}_r \langle a_{q}^* a_{p+r}^* a_{q+r} a_p \Phi, \Phi \rangle \nonumber\\
&\leq \frac{\widehat{(1-f^2)}_0}{2 \vert \Lambda_{\beta} \vert} \sum_{p,q,r}  \langle a_{q}^* a_{p+r}^* a_{q+r} a_p \Phi, \Phi \rangle \label{ine:r}.
\end{align}  
Since $\Phi = W_{N_0} T_\nu \Omega$ is a quasi-free state we can estimate this term as already done in \eqref{eq:conjW0}. We first conjugate by $W_{N_0}$ which amounts to change the $a_{0}$'s into $N_0 \leq N$. Together with Lemma~\ref{lem:ag} and Wick's theorem we deduce
\begin{align}
 \sum_{p,q,r}  \langle a_{q}^* a_{p+r}^* a_{q+r} a_p \rangle_\Phi &\leq CN^{2} +\sum_{\substack{p\neq 0,q\neq 0,r\neq 0\\ p+r\neq 0, q+r\neq 0}}  \langle a_{q}^* a_{p+r}^* a_{q+r} a_p \rangle_\Phi .
\end{align}
Then we use again Wick's Theorem to estimate the remaining sum, which is then bounded by $C N^2$ by Lemma~\ref{lem:ag}. Thus equation \eqref{ine:r} gives
\begin{equation}\label{eq.Wick1}
\sum_{n \geq 0} \sum_{1 \leq i <j \leq n} \int (1-f(x_i-x_j)^2) \Phi_n^2 \dd x \leq C \frac{N^2}{\vert \Lambda_{\beta} \vert} \int_{\Lambda} (1 - f(x)^2) \dd x \, \Vert \Phi \Vert^2 .
\end{equation}
Using $ \frac{\dd}{\dd r} [ r^2 \log \left( \frac r a \right)^2 - r^2 \log \left( \frac r a \right) + \frac{r^2}{2} ] = 2 r \log \left(\frac r a \right)^2 $ and $a\leq R\leq b$ we have that
\begin{align}
\int_\Lambda (1-f(x)^2) \dd x &= 2\pi \int_R^b \bigg( 1 - \frac{\log \left( \frac r a \right)^2}{\log \left( \frac b a \right)^2} \bigg) r \dd r + 2 \pi\int_0^R (1 - f(r)^2) r\dd r \nonumber \\ &\leq C \frac{b^2}{\log \left( \frac b a \right)} + C R^2 \leq C \rho^{-1} Y^{2 \beta + 2},\label{ine:1mf}
\end{align} 
where we used $\rho R^2 \leq Y^{2 \beta + 2}$ and $b^2 = \rho^{-1} Y^{2 \beta + 1}$.
We use this last bound in \eqref{eq.Wick1} and \eqref{eq:aPsiBound} to get
\[ \Vert \Psi \Vert^2 \geq \Vert \Phi \Vert^2 \big( 1 - C N Y^{2 \beta + 2} \big) . \]
\end{proof}

\begin{lemma}
There is a $C>0$ independent of $\rho$ and $v$ such that,
\[ \langle \Psi, \mathcal N \Psi\rangle \geq N(1- CY^2)\|\Psi\|^2 , \qquad \langle \Psi, \mathcal N^2 \Psi\rangle \leq C N^2 \|\Psi\|^2. \]
\end{lemma}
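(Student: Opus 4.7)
The plan is to transfer both bounds from $\Phi$ to $\Psi$ by controlling the Jastrow factor $F_n$ via two elementary pointwise estimates: $F_n^2\le 1$ for the upper bound and the inequality $F_n^2\geq 1-\sum_{i<j}(1-f_{ij}^2)$ already used in \eqref{eq:the.inequality} for the lower bound. In both cases the resulting error is handled by a Wick expansion that is a one-line variation of the one carried out in the proof of Lemma~\ref{lemma denominator}.

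For the lower bound on $\langle \Psi, \mathcal{N}\Psi\rangle$, I would first combine $F_n^2\geq 1-\sum_{i<j}(1-f_{ij}^2)$ with $\langle \mathcal{N}\rangle_\Phi=N$ from Lemma~\ref{lem:numberPhi} to obtain
\[
\langle \Psi, \mathcal{N}\Psi\rangle \geq N - E, \qquad E := \sum_{n\geq 0} n \sum_{1\leq i<j\leq n}\int_{\Lambda_\beta^n}(1-f(x_i-x_j)^2)\,\Phi_n^2 \,\dd x.
\]
Since the two-body multiplication operator $V=\sum_{i<j}(1-f_{ij}^2)$ preserves particle number, $E=\langle \Phi, \mathcal{N}V\Phi\rangle$ is a $6$-point Fock-space expectation, which I would rewrite in second quantization as
\[
E = \frac{1}{2\vert\Lambda_\beta\vert}\sum_{s,p,q,r}\widehat{(1-f^2)}_r\,\langle a_s^*a_s a_q^* a_{p+r}^* a_{q+r}a_p\rangle_\Phi .
\]
I would then follow verbatim the strategy leading to \eqref{eq.Wick1}: conjugate by $W_{N_0}$ so that every $a_0$ becomes $a_0+\sqrt{N_0}$, and apply Wick's theorem to the quasi-free state $T_\nu\Omega$. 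The dominant pairing contracts $a_s^*$ with $a_s$ and reproduces $\langle a_s^*a_s\rangle$ times the four-operator expectation already estimated in \eqref{eq.Wick1}; summation over $s$ then gives an extra factor $N$. All cross-contractions between the $\mathcal{N}$-pair and the four-operator term contribute at most the same order by Lemma~\ref{lem:ag}. Altogether, mimicking \eqref{eq.Wick1},
\[
E \leq C\frac{N^3}{\vert\Lambda_\beta\vert}\int_{\Lambda_\beta}(1-f^2)\,\dd x \leq C N^2 Y^{2\beta+2} = CNY^2,
\]
where I used \eqref{ine:1mf} together with the identity $NY^{2\beta+2}=Y^2$ valid for $N=Y^{-2\beta}$. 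Combined with the trivial bound $\|\Psi\|^2\leq\|\Phi\|^2=1$ coming from $F_n\leq 1$, this yields $\langle\Psi, \mathcal{N}\Psi\rangle\geq N(1-CY^2)\geq N(1-CY^2)\|\Psi\|^2$.

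For the upper bound on $\langle\Psi, \mathcal{N}^2\Psi\rangle$ the pointwise inequality $F_n^2\leq 1$ gives directly
\[
\langle\Psi, \mathcal{N}^2\Psi\rangle = \sum_{n\geq 0} n^2 \int F_n^2 \Phi_n^2 \,\dd x \leq \langle\mathcal{N}^2\rangle_\Phi \leq CN^2
\]
by Lemma~\ref{lem:numberPhi}. Simultaneously, Lemma~\ref{lemma denominator} and $NY^{2\beta+2}=Y^2$ produce $\|\Psi\|^2\geq 1-CY^2\geq \tfrac12$ once $\rho a^2$ is small enough, hence $\langle\Psi,\mathcal{N}^2\Psi\rangle\leq 2CN^2\|\Psi\|^2$.

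I expect the only real technical point to be the Wick bookkeeping for the $6$-point function defining $E$: one must verify that each of the pairings beyond the trivial factorization $\langle\mathcal{N}\rangle_\Phi\langle V\rangle_\Phi$ is indeed bounded by $C N^3 \vert\Lambda_\beta\vert^{-1}\widehat{(1-f^2)}_0$ after summing over all momenta. This is however completely parallel to the four-point bookkeeping already performed in the proof of Lemma~\ref{lemma denominator}, with only one extra contraction to track and one additional application of Lemma~\ref{lem:ag} to control it.
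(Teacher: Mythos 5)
Your proposal matches the paper's proof step for step: both bounds are transferred from $\Phi$ to $\Psi$ exactly as you describe, the upper bound on $\langle\mathcal N^2\rangle_\Psi$ via $F_n\leq 1$ combined with Lemma~\ref{lem:numberPhi} and Lemma~\ref{lemma denominator}, and the lower bound on $\langle\mathcal N\rangle_\Psi$ via the inequality $F_n^2\geq 1-\sum_{i<j}(1-f_{ij}^2)$, the identical six-operator second-quantized rewriting of the error $E$, and the same $W_{N_0}$-conjugation plus Wick's theorem bookkeeping used in Lemma~\ref{lemma denominator}. The arithmetic $NY^{2\beta+2}=Y^2$ and the final passage to $\|\Psi\|^2$ are also exactly as in the paper.
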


\begin{proof}
First we have by Lemma \ref{lem:numberPhi} and Lemma \ref{lemma denominator} that
\[ \langle  \Psi, \mathcal N ^2\Psi \rangle = \sum_{n \geq 0} n^{2} \int_{\Lambda^n} F_n^2 \Phi_n^2 \dd x \leq \sum_{n \geq 0} n^2 \Vert \Phi_n \Vert^2 =\langle  \Phi, \mathcal N^2 \Phi\rangle \leq C N^2 \|\Psi\|^2. \]
For the bound on $\langle \mathcal N \rangle_{\Psi}$ we use the same idea as in the proof of Lemma~\ref{lemma denominator}. From inequality \eqref{eq:the.inequality} we deduce
\begin{align} \langle  \Psi , \mathcal N\Psi \rangle &= \sum_{n \geq 0} n \int_{\Lambda^n} F_n^2 \Phi_n^2 \dd x \nonumber \\ 
&\geq \sum_{n\geq 0} n \Big( \int_{\Lambda^n} \Phi_n^2 \dd x - \sum_{i<j} \int_{\Lambda^n} (1-f(x_i-x_j)^2) \Phi_n^2\dd x \Big). \label{eq.N.lower1}
\end{align}
In the second term we recognize a number operator and a 2-particles interaction energy, which can be rewritten as
\[ \sum_{n \geq 0} n \sum_{i<j} \int_{\Lambda^n} (1-f(x_i-x_j)^2) \Phi_n^2 \dd x = \frac{1}{2L_\beta^2} \sum_{k,p,q,r \in \Lambda^*} \widehat{(1 - f^2)}_r \langle a_k^* a_k a_{p+r}^* a_q^* a_{q+r} a_p \Phi , \Phi \rangle). \]
We can compute this term using the same techniques as for \eqref{ine:r}, \textit{i.e.}, extract the $a_{0}$'s and then apply Wick's Theorem
yielding many terms of the form $A_1 A_2 A_3$ with $A_i \in \lbrace \langle a_0^\dagger a_0 \rangle_\Phi , \sum_{p \neq 0} \alpha_p , \sum_{p \neq 0} \gamma_p \rbrace$ (see \eqref{eq.Phi.apaq}). These terms are bounded by $N^3$ by Lemma~\ref{lem:ag}. Thus
\[\sum_{n \geq 0} n \sum_{i<j} \int_{\Lambda^n} (1-f(x_i-x_j)^2) \Phi_n^2 \dd x  \leq C \frac{ N^3 }{L_\beta^2} \int (1-f(x)^2 )\dd x \Vert \Phi \Vert^2. \]
Now we use the inequality \eqref{ine:1mf} to bound the right hand side of the quantity above and plug it in \eqref{eq.N.lower1} to obtain
\[ \langle \Psi, \mathcal N \Psi \rangle_{} \geq (N - C N^2 Y^{2 \beta + 2} ) \|\Psi\|^2= N (1-C Y^2)\|\Psi\|^2, \]
where in the equality used that $N=Y^{-2\beta}$. 
\end{proof}

\subsection{Remainder term}

Here we prove that the remainder term in Lemma~\ref{lemma reduction to soft} is indeed small.

\begin{lemma}\label{lemma double product}
There is a $C>0$ independent of $v$ and $\rho$ such that
\[ \vert \langle \Phi, \mathcal R \Phi \rangle \vert \leq C L_\beta^{2} \rho^{2} Y^{2 \beta + 2} \Vert \Phi \Vert^{2}. \]
\end{lemma}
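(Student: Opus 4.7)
My plan is to bound $|\langle \Phi, \mathcal R \Phi\rangle|$ in three steps: (i) a pointwise bound on $\mathcal R_n$ that removes the $1/f$ singularities, (ii) second-quantisation and a Wick-theorem computation against the quasi-free state $\Phi$, and (iii) explicit estimation of the $|\nabla f|$ integrals.

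First, to handle the $1/f$ factors---which are genuinely singular in the hard-core case where $f$ can vanish---I would extract $f_{ij}, f_{ik}$ from $F_n$. Writing $F_n = f_{ij} f_{ik} G_{ijk}$ with $G_{ijk} = \prod_{(a,b) \neq (i,j),(i,k)} f_{ab}$, one gets
$$
F_n^2 \, \frac{\nabla f_{ij}}{f_{ij}} \cdot \frac{\nabla f_{ik}}{f_{ik}} = f_{ij} f_{ik} G_{ijk}^2 \, \nabla f_{ij} \cdot \nabla f_{ik},
$$
and since $0 \leq f, G_{ijk} \leq 1$ this gives the pointwise bound $|\mathcal R_n| \leq \sum_{\{i,j,k\}} |\nabla f_{ij}| |\nabla f_{ik}|$, hence
$$
|\langle\Phi,\mathcal R\Phi\rangle| \leq \langle \Phi, \hat K \Phi\rangle, \qquad \hat K = \int W(x,y,z)\, a^*(x) a^*(y) a^*(z) a(z) a(y) a(x)\, dx\,dy\,dz,
$$
with $W(x,y,z) = |\nabla f(x-y)| |\nabla f(x-z)|$. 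I would compute $\langle \Phi, \hat K \Phi\rangle$ by conjugating $\hat K$ with $W_{N_0}$ (replacing each $a_0, a_0^*$ by $\sqrt{N_0}$) and then applying Wick's theorem to the squeezed state $T_\nu \Omega$, exactly as in the proofs of Lemma~\ref{lem:up1} and Lemma~\ref{lemma denominator}. The leading term is the one where every operator is substituted by $\sqrt{N_0}$, giving $\rho_0^3 \int W \, \|\Phi\|^2 = \rho_0^3 L_\beta^2 (\int |\nabla f|)^2 \|\Phi\|^2$. Every other Wick pairing produces at least one factor of $\sum_{p \neq 0} \gamma_p$ or $\sum_{p\neq 0}|\alpha_p|$, both bounded by $CN$ thanks to Lemma~\ref{lem:ag}, which replaces a condensate factor of the same order $N_0$; the total is therefore $\langle \Phi, \hat K \Phi\rangle \leq C\rho^3 L_\beta^2 (\int |\nabla f|)^2 \|\Phi\|^2$.

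Finally, I would estimate $\int |\nabla f|$. Because $f$ is non-decreasing radial with $f(r) = \log(r/a)/\log(b/a)$ on $[R,b]$ and $\nabla f \equiv 0$ outside $B(0,b)$, the outer annulus contributes $2\pi(b-R)/\log(b/a) \leq 4\pi b \delta_\beta$; inside $B(0,R)$, radial integration by parts gives $\int_0^R 2\pi r f'(r)\, dr = 2\pi R f(R) - 2\pi \int_0^R f \leq 2\pi R$. Combined with the assumption $\rho R^2 \leq Y^{2\beta+2}$, with $b = \rho^{-1/2} Y^{\beta+1/2}$ and $\delta_\beta \leq CY$, one gets $\int|\nabla f|\, dx \leq C \rho^{-1/2} Y^{\beta+1}$, and therefore
$$
|\langle \Phi, \mathcal R\Phi\rangle| \leq C \rho^3 L_\beta^2 \cdot \rho^{-1} Y^{2\beta+2} \|\Phi\|^2 = C L_\beta^2 \rho^2 Y^{2\beta+2} \|\Phi\|^2,
$$
as desired. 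The main obstacle is step (ii): carefully enumerating the Wick expansion of the three-body operator $\hat K$ and checking that each of the many non-leading pairings---mixing condensate factors $\sqrt{N_0}$ with $\gamma$- and $\alpha$-contractions---is uniformly controlled by the leading quantity $\rho^3 L_\beta^2 (\int |\nabla f|)^2$ via Lemma~\ref{lem:ag}. This is structurally the same as in Lemma~\ref{lem:up1} and Lemma~\ref{lemma denominator} but with one more layer of operators.
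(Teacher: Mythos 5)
Your proposal is essentially correct and follows the same route as the paper: bound $\mathcal R_n$ pointwise by a three-body multiplication operator, pass to second quantization, estimate the Wick contractions against the quasi-free state via Lemma~\ref{lem:ag}, and then compute the $L^1$-norm of the one-body kernel. The only cosmetic difference is that you absorb $f_{ij}f_{ik}G_{ijk}^2 \leq 1$ to obtain $W(x)=|\nabla f(x)|$, whereas the paper absorbs only $G_{ijk}^2\leq 1$ and keeps $W(x)=|f(x)\nabla f(x)|$; both choices give $\int W \leq C\rho^{-1/2}Y^{\beta+1}$ and hence the same final bound.
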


\begin{proof}
The remainder term can be bounded by
\[ \vert \langle \Phi, \mathcal{R} \Phi \rangle \vert \leq \sum_{n \geq 3} \sum_{\{i,j,k\}} \int_{\Lambda^n} W(x_i-x_k) W(x_i - x_j) \Phi_n^2 \dd x  , \]
where $W(x) = \vert f(x) \nabla f(x) \vert$. This is a three-body interaction potential, which can be rewritten in second quantization as
\[ \vert \langle \Phi, \mathcal{R} \Phi \rangle \vert \leq \frac{1}{\vert \Lambda_\beta
\vert^2} \sum_{p,q,r,k,\ell \in \Lambda^*} \widehat{W}_k \widehat{W}_\ell \langle a^*_{p+\ell + k} a^*_{q-k} a^*_{r-\ell} a_r a_q a_p \rangle_{\Phi} \, \Vert \Phi \Vert^2 . \]
We can again use Wick's Theorem to estimate this part, and since Lemma~\ref{lem:ag} provides $$\sum_{p \neq 0} \alpha_p \,\,\,\text{and}\,\, \sum_{p \neq 0} \gamma_p \leq N,$$ we find
\begin{equation} \label{eq:boundonR} \vert \langle \Phi, \mathcal R \Phi \rangle \vert \leq C \frac{N^3}{\vert \Lambda_\beta \vert^2} \widehat W_0 ^2 \, \Vert \Phi \Vert^2. 
\end{equation}
Now since $f(x) = \log \left(\frac{b}{a} \right) ^{-1} \log \Big( \frac{\vert x \vert}{a} \Big)$ outside the support of $v$ and is radially increasing, we have
\begin{align*}
\widehat{W}_0 & \leq 2\pi\int_R^b \frac{\log \left( \frac r a \right)}{\log \left( \frac b a \right) ^2}  \dd r + 2\pi\int_0^R f(r) f'(r) r \dd r \\
& \leq C \frac{b}{\log \left( \frac b a \right)} + C R\Big(\frac{ \log \left( \frac R a \right)}{\log \left( \frac b a \right)}\Big)^{2} \leq C \rho^{-1/2} Y^{\beta + 1} ,
\end{align*}
where we used $\vert \log \left( \frac b a \right) \vert^{-1} \leq Y$ and $\vert \log \left( \frac R a \right) \vert \leq \vert \log \left( \frac b a \right) \vert$. Inserting this bound in \eqref{eq:boundonR} we get the result.
\end{proof}

\subsection{Conclusion : Proof of Theorem~\ref{thm.upperbound.grandcanonical}}

Using Lemmas~\ref{lemma reduction to soft},~\ref{lemma denominator} and~\ref{lemma double product} we know that our trial state $\Psi$ satisfies
\begin{equation}\label{eq.HvPsi1}
 \langle \mathcal H_v \rangle_\Psi \leq \Big( \langle \mathcal H_{\widetilde v} \rangle_\Phi  + C L_\beta^2 \rho^2 Y^{2\beta + 2} \Big) \Big( 1 + CNY^{2\beta + 2} \Big).
\end{equation}
For $\Phi$ we choose the quasi-free state given by Theorem~\ref{thm.upperbound.soft.pot} applied to the soft potential $\widetilde v$. We deduce that
\begin{align}
\frac{1}{\vert \Lambda_\beta \vert} \langle \mathcal H_{\widetilde v} \rangle_\Phi \leq  4\pi  \rho^2 \delta_0 \Big(1 + \Big(2\Gamma + \frac{1}{2} + \log(\pi) \Big) \delta_0 \Big)
+ C \rho^2 \delta_0(\widehat{\widetilde v}_0-\widehat{g}_0)+C\rho^2 \delta^{2}_0\widehat{\widetilde v}_0.
\end{align}
In Lemma~\ref{lemma.soft.pot.properties} we have bounds on $\widehat{\widetilde v}_0$ as well as $\widehat{\widetilde v}_0-\widehat{g}_0$. Therefore, remembering the choices $b=\rho^{-1/2} Y^{1/2+\beta}$ and $\beta \geq 3/2$, we get
\begin{align}
\frac{1}{\vert \Lambda_\beta \vert} \langle \mathcal H_{\widetilde v} \rangle_\Phi \leq  4\pi  \rho^2 \delta_0 \Big(1 + \Big(2\Gamma + \frac{1}{2} + \log(\pi) \Big) \delta_0 \Big)
+ \beta C \rho^2 \delta_0^3\vert \log(\delta_0)\vert+C\rho^2 \delta^{3}_0.
\label{eq. how C depends on beta}
\end{align}
We insert this into \eqref{eq.HvPsi1} together with $N = \rho L_\beta^2 = Y^{-2\beta}$ and $Y\leq 2\delta_0$, which concludes the proof of Theorem~\ref{thm.upperbound.grandcanonical}.\qed

\section{Localization to large boxes for the lower bound}\label{sec:largebox}

In this section we reduce the proof of Theorem~\ref{thm:main_lower} to an analogous statement localized to a box of size $\ell$ defined in \eqref{eq:def_ell}, namely Theorem~\ref{thm:largebox_lower}.

\subsection{Grand Canonical Ensemble}\label{subsec:GCE}

We rewrite the Hamiltonian in a grand canonical setting to approach the problem in the Fock space description. To emphasize the fact that the density parameter appears through a chemical potential in this setting, we introduce the notation $\rho_\mu >0$ as new parameter. The corresponding $Y$ will be $Y = \vert \log( \rho_\mu a^2) \vert^{-1}$ and we fix $\delta$ to be
\begin{equation}\label{eq:def_delta_mu}
\delta = \delta_{\mu}, \qquad \delta_{\mu} := \frac{1}{|\log(\rho_{\mu} a^2 |\log(\rho_{\mu} a^2)|^{-1})|}.
\end{equation}
This corresponds to normalize the scattering solution at length $\widetilde{R} = (\rho_{\mu} Y)^{-1/2}$ in \eqref{eq:delta}. With this choice we recall the definition \eqref{eq:scat_defs} of $g$.
This definition is analogous to the one of $\delta_0$ \eqref{eq:def_delta} but with $\rho_{\mu}$ in place of $\rho$.  We are going to choose, a posteriori, $\rho_{\mu} = \rho$ which implies $\delta_{\mu} = \delta_0$.

We consider the operator $\mathcal{H}_{\rho_{\mu}}$ acting on the symmetric Fock space $\mathscr{F}_s(L^2(\Omega))$ and commuting with the number operator,  whose action on the $N-$bosons space is 
\begin{align}
\mathcal{H}_{\rho_{\mu},N} &= H(N,L) - 8 \pi \delta \rho_{\mu} N = \sum_{j=1}^N -\Delta_j + \sum_{i<j} v(x_i-x_j)- 8 \pi \delta \rho_{\mu} N  \nonumber\\
&= \sum_{j=1}^N \Big(-\Delta_j -  \rho_{\mu} \int_{\mathbb{R}^2}  g(x_j-y)\,\dd y\Big) + \sum_{i<j} v(x_i-x_j).\label{eq:grandcan_hamilton}
\end{align}
We define the ground state energy density of $\mathcal{H}_{\rho_{\mu}}$:
\begin{equation}\label{def:e0}
e_0(\rho_{\mu}) := \lim_{|\Omega| \rightarrow +\infty} \frac{1}{|\Omega|} \inf_{\Psi \in \mathscr{F}_s(L^2(\Omega))\setminus \{0\}} \frac{\langle \Psi\,|\, \mathcal{H}_{\rho_{\mu}} \,|\,\Psi \rangle }{\|\Psi\|^2}.
\end{equation}
In the rest of the paper we prove the following lower bound on $e_0(\rho_\mu)$.
\begin{theorem}\label{thm:grancanonicaltherm}
There exists $C$, $\eta > 0$ such that the following holds. Let $\rho_\mu > 0$ and $v\in L^1(\Omega)$ be a positive, spherically symmetric potential with scattering length $a$ and $\supp(v) \subset B(0,R)$ such that $\Vert v \Vert_1 \leq Y^{-1/8}$ and $R \leq \rho_\mu^{-1/2}$. Then, if $\rho_\mu a^2 \leq C^{-1}$, we have, for any $\rho_\mu > 0$,
\begin{equation}
e_0(\rho_{\mu}) \geq - 4 \pi \rho_{\mu}^2  \delta \Big(1 - \Big( 2 \Gamma + \frac 1 2 + \log \pi \Big) \delta \Big) - C \rho_{\mu}^2 \delta^{2+\eta}.
\end{equation}
\end{theorem}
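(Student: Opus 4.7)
\medskip

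\textbf{Proof plan for Theorem~\ref{thm:grancanonicaltherm}.} The plan is to implement the eight-step scheme described in Section~\ref{sec:strat}. The hypotheses $\|v\|_1\le Y^{-1/8}$ and $\supp v\subset B(0,R)$ with $R\le \rho_\mu^{-1/2}$ are the payoff from the reduction, based on Theorem~\ref{theorem. main potential reduction} and Lemma~\ref{lemma: cutting potential so have compact support}, of the original potential of Theorem~\ref{thm:main_lower} to a truncated version with small $L^1$-mass; the cost is absorbed in the error $C\rho_\mu^2\delta^{2+\eta}$ using the tail assumption \eqref{eq:assumption_v_decay}. So I can work directly with the grand canonical operator $\mathcal{H}_{\rho_\mu}$ of \eqref{eq:grandcan_hamilton}.

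First, I would apply a sliding localization at scale $\ell=\rho_\mu^{-1/2}Y^{-1/2-\alpha}$ to reduce $e_0(\rho_\mu)$ to a large-box analogue on $\Lambda=[-\ell/2,\ell/2]^2$. This produces, up to negligible boundary errors, a bound of the form $|\Omega|^{-1}\langle\Psi,\mathcal{H}_{\rho_\mu}\Psi\rangle\ge |\Lambda|^{-1}\,E_\Lambda$, where $E_\Lambda$ is the infimum on $\mathscr{F}_s(L^2(\Lambda))$ of a large-box Hamiltonian, which is the object of Theorem~\ref{thm:largebox_lower}. From here onward the work happens on $\Lambda$. I split the kinetic term so that a small $\varepsilon$-fraction is kept as a reservoir and the rest is used to absorb quadratic errors. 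On the interaction side I introduce the condensate projector $P$ onto constants and $Q=1-P$, and decompose $v(x-y)$ into the $2{\times}2{\times}2{\times}2$ pieces labelled by the number $j\in\{0,\ldots,4\}$ of $Q$'s, using the scattering equation $-\Delta\omega=\tfrac12 g$ to renormalize by replacing some $\widehat v$'s by $\widehat g$'s. The error in this renormalization is absorbed by a nonnegative quartic term $\mathcal{Q}_4^{\mathrm{ren}}$, following the strategy initiated in \cite{MR1301362,FS2}.

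Next, I prove a priori condensation estimates: localizing further to small boxes $B$ of side $d\ell\ll \rho_\mu^{-1/2}Y^{-1/2}$, I establish Bose-Einstein condensation (small number of $Q$-excitations per box) in the spirit of \cite{LY}, and then use these to replace the full Fock space by the subspace of vectors with a bounded number of low-momentum excitations. This step produces controllable commutator errors. After this a priori localization, I pass to momentum second quantization on $\Lambda$ and perform the $c$-number substitution on the zero mode: the condensate operator $a_0$ is replaced by a parameter $z=\sqrt{\rho_z|\Lambda|}\,e^{i\theta}$ over coherent states, reducing the analysis to minimizing a Bogoliubov-type Hamiltonian $\mathbb{H}_{\mathrm{Bog}}(z)$ of quadratic type in $a_p^\dagger,a_p$, $p\ne 0$, plus a constant, plus cubic remainder and the positive quartic $\mathcal{Q}_4^{\mathrm{ren}}$, plus the $3Q$-term that encodes the soft pairs.

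The remaining analysis splits according to whether $\rho_z$ is far from or close to $\rho_\mu$. In the \emph{far} regime, the constant part $\tfrac12\widehat g_0(\rho_z-\rho_\mu)^2|\Lambda|$ alone dominates any deficit and yields the bound. In the \emph{close} regime I diagonalize $\mathbb{H}_{\mathrm{Bog}}$ by a Bogoliubov rotation; the resulting $c$-number part is an integral of $\tfrac12(\sqrt{p^4+2\rho_z p^2\widehat g_p}-p^2-\rho_z\widehat g_p+\rho_z^2\widehat g_p^2/(2p^2))$ which, by the explicit computation already invoked in Proposition~\ref{prop:integralapprox2}, equals exactly $4\pi\rho_\mu^2\delta(2\Gamma+\tfrac12+\log\pi)+\mathcal O(\rho_\mu^2\delta^{2+\eta})$, giving the correct Lee-Huang-Yang constant; the positive quadratic part that remains after diagonalization is held in reserve. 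The main obstacle, and the technical heart of the proof, is to dominate the localized $3Q$ soft-pair term by this reserve together with $\mathcal Q_4^{\mathrm{ren}}$: I would treat it by first localizing the low outgoing momentum (Section~\ref{sec:Q3loc}) and then the two high incoming momenta in second quantization, and then applying Cauchy-Schwarz against $\mathcal Q_4^{\mathrm{ren}}$ and the Bogoliubov reservoir with a carefully tuned threshold. Collecting all pieces with the parameter choices of Appendix~\ref{app:parameters} and optimizing in $\alpha$ yields the exponent $\eta>0$ and closes the proof.
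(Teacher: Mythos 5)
Your proposal reproduces the paper's own proof essentially step for step: the same sliding localization to a large box (Lemma~\ref{thm:control_e_0e_L} reducing to Theorem~\ref{thm:largebox_lower}), the same $P/Q$ potential splitting with renormalization absorbed by $\mathcal{Q}_4^{\mathrm{ren}}$, the same a priori condensation via small boxes and large-matrices localization of $n_+^L$, the same $c$-number substitution, and the same dichotomy in $\rho_z$ with Bogoliubov diagonalization and absorption of the $3Q$-term by $\mathcal{K}_H^{\mathrm{Diag}}$ and $\mathcal{Q}_4^{\mathrm{ren}}$. The only slip is typographical: the second-order (LHY) contribution from the Bogoliubov integral is of order $\rho_\mu^2\delta^2$, not $\rho_\mu^2\delta$.
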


We now show that Theorem~\ref{thm:grancanonicaltherm} implies the main lower bound Theorem~\ref{thm:main_lower}.

\begin{proof}[Proof of Theorem~\ref{thm:main_lower}]
We start by reducing the problem to a potential which is $L^1$ and compactly supported. For a given $v$ satisfying the assumptions of Theorem~\ref{thm:main_lower}, we apply Theorem~\ref{theorem. main potential reduction} with $T = (4 \pi Y)^{-1/8}$, $R= \rho^{-1/2}$ and $\varepsilon = 1$. This provides us a potential $\widetilde v = v_{T,R,\epsilon}$ to which we can apply Theorem~\ref{thm:grancanonicaltherm}. Then for this new potential we use the ground state of $\mathcal{H}_N$ as a trial function for $\mathcal{H}_{\rho_{\mu}}$ and get
\begin{align*}
e^{\rm{2D}}(\rho , \widetilde v ) &\geq e_0(\rho_{\mu} , \widetilde v ) + 8 \pi \widetilde \delta \rho \rho_{\mu}  \\
&\geq - 4 \pi \rho_{\mu}^2  \widetilde\delta \Big(1 - \Big(2 \Gamma + \frac 1 2 + \log \pi \Big) \widetilde \delta \Big) - C \rho_{\mu}^2 \widetilde \delta^{2+\eta} + 8 \pi \widetilde \delta \rho \rho_{\mu},
\end{align*}
where $\widetilde \delta = \frac{1}{\vert \log ( \rho \widetilde a^2 \vert \log (\rho \widetilde a^2) \vert^{-1} ) \vert}$ and $\widetilde a$ is the scattering length of $\widetilde v$. Since $\widetilde v \leq v$ we have $e^{\rm{2D}}(\rho,v) \geq e^{\rm{2D}}(\rho, \widetilde v)$. Moreover, by equation \eqref{eq. scattering condtions} we can change $\widetilde \delta$ into $\delta$ up to an error of order 
\begin{equation}
\frac{1}{\log\left( \frac{R}{a}\right)^2 T} + \frac{1}{\log\left( \frac{R}{a}\right)^2} \int_{\{\vert x \vert > R\}} v(x) \log \Big( \frac{ \vert x \vert }{a} \Big)^2 \dd x \leq C \delta^{2+\min \left(\frac{1}{8}, \eta_1 \right)}.
\end{equation} 
Choosing $\rho_{\mu} = \rho$ concludes the proof.
\end{proof}
\subsection{Reduction to large boxes}\label{subsec:localization_large_boxes}

We now make use of the sliding localization technique developed in \cite{BSol} to reduce the proof of Theorem~\ref{thm:main_lower} to a localized problem in a large box $\Lambda \subset \Omega$. 
We introduce the length scale
\begin{align}\label{eq:def_ell}
\ell := K_{\ell}\, \rho_{\mu}^{-1/2} Y^{-\frac{1}{2}},
\end{align}
where $K_{\ell} \gg 1$ is a parameter fixed in Appendix~\ref{app:parameters}, and we carry out the analysis in the large box 
\begin{equation}
\Lambda := \Big[-\frac{\ell}{2}, \frac{\ell}{2}\Big]^2.
\end{equation}
For any $u \in \mathbb{R}^2$, we denote by 
\begin{equation}
\Lambda_u := \ell u + \Lambda
\end{equation}
the translated large box.
Let us introduce the localization functions: the sharp characteristic function
\begin{equation}\label{def:sharp_loc_Lambda}
\theta_u := \one_{\Lambda_u}
\end{equation}
and the regular one: let $\chi \in C_0^{M}(\mathbb{R}^2)$, for $M \in \mathbb{N}$ with $\supp\chi = [-\frac{1}{2},\frac{1}{2}]^2$ be the spherically symmetric function defined in Appendix~\ref{app:locfunction}, and
\begin{equation}
\chi_{\Lambda}(x) := \chi \Big( \frac{x}{\ell}\Big), \qquad \chi_u (x) :=  \chi_{\Lambda}(x-\ell u).
\end{equation}
The parameter $M$ is fixed in Appendix~\ref{app:parameters}. Define the following projections on $L^2(\Lambda)$,
\begin{align}\label{def:PQ}
P := \ell^{-2} | \one_{\Lambda} \rangle \langle \one_{\Lambda} |, \qquad Q  := \one - P,
\end{align}
\textit{i.e.} $P$ is the orthogonal projection in $L^2(\Lambda)$ onto the constant functions and $Q$ is the orthogonal projection to the complement.
Using these definitions, we define the following operators on $\mathscr{F}_s(L^2(\Lambda))$ through their action on any $N$-particles sector:
\begin{align}\label{eq:SF_n0ognplus}
n_0:= \sum_{j=1}^N P_j, \qquad n_{+} := \sum_{j=1}^N Q_j = N - n_0.
\end{align}
The definition is based on the idea that low energy eigenstates of the system should concentrate in the constant function. Thus, $n_0$ counts the number of particles in the condensate and $n_{+}$ the number of particles excited out of the condensate.

We start by stating the result for the kinetic energy.
 
\begin{lemma}[Kinetic energy localization]\label{lem:LocKinEnLarge} 
Let $-\Delta_u^{\mathcal{N}}$ denote the Neumann Laplacian in $\Lambda_u$ and $-\Delta$ the Laplacian on ${\mathbb R}^2$. If the regularity of $\chi$ is $M >5$ and the positive parameters $\varepsilon_{N}, \varepsilon_T, d, s, b$ are smaller than some universal constant, then for all $\ell >0$ we have  
\begin{equation}
-\Delta \geq 
  \int_{\mathbb{R}^2}  {\mathcal T}_u \,\dd u,  
\end{equation}
in terms of quadratic forms in $H^1(\mathbb{R}^2)$, where
\begin{equation}\label{eq:Texplicitexpress}
\mathcal{T}_u  := \varepsilon_{N}(-\Delta_u^{\mathcal{N}}) + (1-\varepsilon_{N})(\mathcal{T}_u^{\text{Neu,s}}+ \mathcal{T}_u^{\text{Neu,l}} + \mathcal{T}_u^{\text{gap}} + \mathcal{T}_u^{\text{kin}} ),
\end{equation}
where 
\begin{align}\label{eq:SF_DefTuLarge}
 &\mathcal{T}_u^{\text{Neu,s}} := \frac{\varepsilon_T}{2(d\ell)^2}\frac{-\Delta_u^{\mathcal{N}}}{-\Delta_u^{\mathcal{N}} + (d\ell)^{-2}}, \\
 &\mathcal{T}_u^{\text{Neu,l}} := \frac{b}{\ell^2} Q_u, \\
 &\mathcal{T}_u^{\text{gap}}:= b \frac{\varepsilon_T}{(d\ell)^2} Q_u \mathbbm{1}_{(d^{-2}\ell^{-1}, + \infty)} (\sqrt{-\Delta}) Q_u,\\
&\mathcal{T}_u^{\text{kin}} := Q_u \chi_u \left\{ (1-\varepsilon_T)\left[\sqrt{-\Delta} -\frac{1}{2s\ell}\right]_+^2 + \varepsilon_T \left[\sqrt{-\Delta} -\frac{1}{2 d s\ell}\right]_+^2 \right\} \chi_u Q_u.
\end{align}
\end{lemma}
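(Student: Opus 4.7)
My plan is a two-step reduction. First I will reduce the global inequality to a per-box inequality on $L^2(\Lambda_u)$, and second I will decompose the per-box Neumann Laplacian into the five pieces appearing in $\mathcal{T}_u$.

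The first step uses the sliding identity $\int_{\mathbb{R}^2}\one_{\Lambda_u}(x)\,du=1$ together with the observation that the full Dirichlet quadratic form on $\mathbb{R}^2$ restricted to $\Lambda_u$ coincides with the Neumann quadratic form on $\Lambda_u$. This yields the exact rewriting
\[
\langle\psi,-\Delta\psi\rangle=\int_{\mathbb{R}^2}\int_{\Lambda_u}|\nabla\psi|^2\,dx\,du=\int_{\mathbb{R}^2}\langle\psi|_{\Lambda_u},(-\Delta_u^{\mathcal{N}})\psi|_{\Lambda_u}\rangle\,du,
\]
so it suffices to prove, at each fixed $u$, the operator inequality
\[
-\Delta_u^{\mathcal{N}}\geq\varepsilon_N(-\Delta_u^{\mathcal{N}})+(1-\varepsilon_N)\bigl(\mathcal{T}_u^{\text{Neu,s}}+\mathcal{T}_u^{\text{Neu,l}}+\mathcal{T}_u^{\text{gap}}+\mathcal{T}_u^{\text{kin}}\bigr)
\]
on $L^2(\Lambda_u)$; equivalently, $-\Delta_u^{\mathcal{N}}\geq \mathcal{T}_u^{\text{Neu,s}}+\mathcal{T}_u^{\text{Neu,l}}+\mathcal{T}_u^{\text{gap}}+\mathcal{T}_u^{\text{kin}}$.

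For the three bounded pieces I would allocate small fractions of $-\Delta_u^{\mathcal{N}}$ via spectral calculus. Explicitly, $\mathcal{T}_u^{\text{Neu,s}}\leq\frac{\varepsilon_T}{2}(-\Delta_u^{\mathcal{N}})$ follows from the pointwise inequality $cx/(x+c)\leq x$ applied to $x=-\Delta_u^{\mathcal{N}}$ with $c=(d\ell)^{-2}$; $\mathcal{T}_u^{\text{Neu,l}}=(b/\ell^2)Q_u$ is controlled by the Neumann spectral gap $(-\Delta_u^{\mathcal{N}})Q_u\geq(\pi/\ell)^2 Q_u$; and $\mathcal{T}_u^{\text{gap}}$ is handled by a standard comparison between the Fourier spectral cutoff $\mathbbm{1}(\sqrt{-\Delta}>d^{-2}\ell^{-1})$ on $\mathbb{R}^2$ and the Neumann spectral cutoff of $-\Delta_u^{\mathcal{N}}$ at scale $(d\ell)^{-2}$, which costs only a constant factor.

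The main piece $\mathcal{T}_u^{\text{kin}}$ is the high-momentum kinetic energy localized by $\chi_u Q_u$. Because $\chi_u\in C_c^M$ is supported strictly inside $\Lambda_u$, the function $\chi_u Q_u\psi$ extended by zero lies in $H^1(\mathbb{R}^2)$, so the Fourier multiplier $\sqrt{-\Delta}$ may be applied to it. I would use an IMS-type identity of the form $-\Delta_u^{\mathcal{N}}\geq\chi_u(-\Delta)\chi_u-|\nabla\chi_u|^2$ (valid because $\chi_u$ vanishes near $\partial\Lambda_u$), then insert $Q_u$ on both sides to kill the condensate contribution, and finally apply the spectral bound $-\Delta\geq[\sqrt{-\Delta}-c]_+^2$ at the scales $c=(2s\ell)^{-1}$ and $c=(2ds\ell)^{-1}$ to obtain the momentum-cutoff form appearing in $\mathcal{T}_u^{\text{kin}}$.

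The main obstacle will be the bookkeeping of the error terms: the IMS correction $|\nabla\chi_u|^2\sim\ell^{-2}\|\nabla\chi\|_\infty^2$, the cross-terms $P_u\chi_u(-\Delta)\chi_u Q_u+\mathrm{h.c.}$ produced by the $Q_u$-insertion, and the transfer between Fourier cutoffs of $\sqrt{-\Delta}$ on $\mathbb{R}^2$ and Neumann cutoffs of $-\Delta_u^{\mathcal{N}}$ on $\Lambda_u$ needed to apply spectral calculus cleanly. All three must be absorbed into the small pieces $\mathcal{T}_u^{\text{Neu,s}},\mathcal{T}_u^{\text{Neu,l}},\mathcal{T}_u^{\text{gap}}$, which is feasible because the smallness of $\varepsilon_N,\varepsilon_T,b,d,s$ (independent of $\ell$) provides the needed slack. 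The regularity hypothesis $M>5$ enters through the decay of $\widehat{\chi}$, which is what makes the transfer between Fourier and Neumann localizations possible with constants uniform in $\ell$.
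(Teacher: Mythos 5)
The first reduction in your plan is incorrect: the per-box operator inequality
\[
-\Delta_u^{\mathcal{N}}\geq\mathcal{T}_u^{\text{Neu,s}}+\mathcal{T}_u^{\text{Neu,l}}+\mathcal{T}_u^{\text{gap}}+\mathcal{T}_u^{\text{kin}}
\quad\text{on }L^2(\Lambda_u)
\]
does not hold, and the sliding localization cannot be reduced to it. The culprit is the term $\mathcal{T}_u^{\text{kin}}$. Writing its quadratic form at $\psi=Q_u\phi$ as $\int\chi_u^2|\nabla\psi|^2-\tfrac12\int(\Delta\chi_u^2)|\psi|^2$, your IMS step needs $\int\chi_u^2|\nabla\psi|^2\leq\int_{\Lambda_u}|\nabla\psi|^2$ up to lower-order errors. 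But in this paper $\chi$ is normalized in $L^2$, $\|\chi\|_2=1$, so the peak value $\chi(0)=C_M>1$ and $\chi_u^2$ exceeds $1$ by an $\ell$-independent amount on a set of order-$\ell^2$ measure near the center of $\Lambda_u$. Testing against a $\psi\in\mathrm{Ran}\,Q_u$ whose gradient concentrates there produces $\int\chi_u^2|\nabla\psi|^2 > (1+c)\int|\nabla\psi|^2$ with $c\sim C_M^2-1$ bounded away from zero, which cannot be absorbed by the small fractions $\varepsilon_N,\varepsilon_T,b$. So the pointwise-in-$u$ inequality fails.

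The fact that cannot be omitted from any correct proof is that the averaging $\int du$ is structurally essential, not merely a bookkeeping convenience: the identity that actually powers the estimate is $\int_{\mathbb{R}^2}\chi_u^2(x)\,\dd u = \ell^2\|\chi\|_2^2 = \ell^2$ (and, more generally, that $\int\chi_u F(\sqrt{-\Delta})\chi_u\,\dd u$ can be compared to a Fourier multiplier on $\mathbb{R}^2$ by a Plancherel/convolution computation with commutator errors controlled by the $M$-fold decay of $\widehat{\chi}$). The same remark applies to $\mathcal{T}_u^{\text{gap}}$: a low Neumann eigenfunction of $-\Delta_u^{\mathcal{N}}$ extended by zero to $\mathbb{R}^2$ has substantial Fourier mass above $d^{-2}\ell^{-1}$ due to the boundary discontinuity, so the comparison between the Fourier cutoff and the Neumann spectral scale is not a per-box ``constant-factor'' fact; it is tamed only after integration. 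The reference proofs in \cite{BSol} and \cite{FS2} are organized precisely around the global Fourier-analytic comparison rather than per-box operator inequalities.
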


\begin{proof}
The proof is identical to the one of \cite[Lemma 3.7]{BSol} and its adaptation to our context in \cite[Lemma 6.4]{FS2}, which are independent of dimension.
\end{proof}

\begin{remark}
The kinetic energy is composed of several terms which have to remedy some problems related to the main kinetic energy term and play the following roles:
\begin{itemize}
\item $\mathcal{T}_u^{\text{kin}}$ is the main kinetic energy term;
\item $-\Delta^{\mathcal{N}}$ is the Neumann Laplacian and compensates the loss of ellipticity at the boundary caused by the localization function $\chi$ in $\mathcal{T}^{\text{kin}}_u$;
\item $\mathcal{T}_u^{\text{Neu,s}}$ is the Neumann gap in the small box. Worth to remark is that, for large momenta, it behaves like a gap, while for small momenta its action is like a Neumann Laplacian;
\item $\mathcal{T}_u^{\text{Neu,l}}$ is the Neumann gap in the large box;
\item $\mathcal{T}_u^{\text{gap}}$ is another spectral gap which we need in order to control the number of excitations with large momenta.
\end{itemize}
\end{remark}

The localization of the potential energy relies on a direct calculation of the integral which can be found in \cite[Proposition 3.1]{BSol}. Assuming that $R \ell^{-1}$ is sufficiently small, we can introduce the following localized potentials
\begin{align}\label{eq:SF_w12u}
W(x) &:= \frac{v(x)}{\chi*\chi(x/\ell)},  &w(x,y) := \chi_{\Lambda} (x) W(x-y)\chi_{\Lambda}(y), \\
W_1(x) &:= \frac{g(x)}{\chi*\chi(x/\ell)},  &w_{1}(x,y) := \chi_{\Lambda} (x) W_1(x-y)\chi_{\Lambda}(y),  \\
W_2(x) &:= \frac{g(x) + g(x) \omega(x)}{\chi*\chi(x/\ell)},  &w_{2}(x,y) := \chi_{\Lambda} (x) W_2(x-y)\chi_{\Lambda}(y),
\end{align}
where we observe that $W, W_1, W_2$ and $w, w_1, w_2$ are localized versions of $v, g, (1+ \omega) g$, respectively, defined in  \eqref{eq:scat_defs}.

Furthermore, we introduce the translated versions for $u \in \Lambda$
\begin{equation}
w_{*,u}(x,y) = w_*(x-\ell u,y-\ell u).
\end{equation}

We are going to make use of the following approximation result. We recall the defintion of the lengthscale $\ell_{\delta}$ from \eqref{eq:Defldelta}, which, with our choice $\delta = \delta_{\mu}$ from \eqref{eq:def_delta_mu} becomes
\begin{equation}\label{eq:def_elldelta_mu}
\ell_{\delta} = \frac{e^{\Gamma}}{2} \rho_{\mu}^{-1/2} Y^{-1/2},
\end{equation}
and corresponds to the so-called healing length.

\begin{lemma}\label{lem:gomegaapprox}
There exists a universal constant $C>0$ such that, if $R \ell^{-1} < C^{-1}$, we have
\begin{itemize}
\item $W_1$ can be approximated by $g$ up to the following error
\begin{equation}\label{eq:Wg_approx}
0 \leq W_1(x) -g(x) \leq C g(x) \frac{\min\{|x|^2,R^2\}}{\ell^2},
\end{equation}
and in particular $\Vert W_1 \Vert_{L^1} \leq 8\pi \delta (1 + C R^2 \ell^{-2})$ due to \eqref{eq:IntegralScattLength}.
\item For any $f \in L^1(\mathbb{R}^2)$ such that $f(x) = f(-x)$ and $\supp f \subseteq B(0,R)$,
\begin{equation}\label{eq:fchi_approx}
\bigg\vert f * \chi_{\Lambda}(x) - \chi_{\Lambda}(x) \int_{\mathbb{R}^2} \dd x\; f(x) \bigg\vert \leq C\max_{i,j} \|\partial_i \partial_j \chi\|_{\infty} \frac{R^2}{\ell^2} \|f\|_{L^1}.
\end{equation}
\item It also holds
\begin{equation}\label{eq:gw0.approx}
\bigg\vert \frac{1}{(2\pi)^2} \int_{\mathbb{R}^2} dk\, \frac{\widehat{W}_1(k)^2 - \widehat{W}_1^2(0) \one_{\{|k|\leq \ell_{\delta}^{-1}\}}}{2k^2} - (\widehat{g \omega})_0\bigg\vert \leq C\frac{R^2}{\ell^2}\delta.
\end{equation}
\item It holds
\begin{equation}
\bigg\vert \int_{\mathbb{R}^2} \frac{(\widehat{W}_1(k) - \widehat{g}_k)^2 - (\widehat{W}_1(0) - \widehat{g}_0)^2\one_{\{|k| \leq \ell_{\delta}^{-1}\}}}{2 k^2} \dd k \bigg\vert \leq C\frac{R^4}{\ell^4} \widehat{(g \omega)}_0.
\end{equation}
\end{itemize}
\end{lemma}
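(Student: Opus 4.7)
The four items are essentially independent, and I would tackle them in order. All four rely on the smoothness of $\chi$ and the normalization $(\chi*\chi)(0) = \|\chi\|_{L^2}^2 = 1$ together with the spherical symmetry of $\chi$, which forces $\nabla(\chi*\chi)(0) = 0$ and hence $(\chi*\chi)(y) = 1 + O(|y|^2)$ near the origin. The first two items reduce to Taylor expansions, whereas the last two I would handle via the Fourier representation of the logarithm from Lemma~\ref{lem:FouLog} and its consequence \eqref{eq:convolutionLog}.

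For item (1), I would write $W_1(x) - g(x) = g(x)[(\chi*\chi)(x/\ell)^{-1} - 1]$ and apply the expansion above on $\supp g \subseteq B(0,R)$ (on which $|x|/\ell \ll 1$): this yields $|W_1 - g| \leq Cg(x)|x|^2/\ell^2$, and the $\min\{|x|^2, R^2\}$ factor follows from $|x| \leq R$. The $L^1$ bound comes from integrating and using $\int g = 8\pi\delta$ from \eqref{eq:IntegralScattLength}. For item (2), I would Taylor-expand $\chi_\Lambda(x-y)$ to second order in $y$: the zeroth-order term produces exactly $\chi_\Lambda(x)\int f$, the first-order term vanishes because $f$ is even, and the integral remainder is bounded pointwise by $\tfrac12 \|\nabla^2 \chi_\Lambda\|_\infty \int |y|^2 |f(y)|\dd y \leq \tfrac12 \ell^{-2}\max_{ij}\|\partial_i\partial_j\chi\|_\infty R^2 \|f\|_{L^1}$, using $\supp f \subseteq B(0,R)$.

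For item (3), the key idea is to subtract off the analogous identity for $g$ in place of $W_1$, which by Lemma~\ref{eq:gdecay} equals $\widehat{g\omega}(0)$. The remaining error factors as
\begin{equation*}
\frac{1}{(2\pi)^2}\int \frac{(\widehat W_1 - \widehat g)(\widehat W_1 + \widehat g) - (\widehat W_1(0)-\widehat g(0))(\widehat W_1(0)+\widehat g(0))\,\one_{\{|k|\leq \ell_\delta^{-1}\}}}{2k^2}\,\dd k,
\end{equation*}
which, by the polarized form of \eqref{eq:convolutionLog}, equals (up to a universal constant) $\iint (W_1-g)(x)(W_1+g)(y)\log(|x-y|/\widetilde R)\,\dd x\,\dd y$. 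Since $R \ll \widetilde R$, the logarithm is everywhere negative on the relevant supports, and item (1) combined with the trivial bound $W_1 + g \leq Cg$ reduces the estimate to $CR^2/\ell^2$ times $\iint g(x)g(y)\log(\widetilde R/|x-y|)\,\dd x\,\dd y$; this last quantity is $4\pi \widehat{g\omega}(0) \leq C\delta$ by applying \eqref{eq:convolutionLog} to $g$ itself and invoking \eqref{eq. bound on gw_0}. For item (4), I would apply \eqref{eq:convolutionLog} directly to $W_1 - g$: the integral becomes (up to constants) $\iint (W_1-g)(x)(W_1-g)(y)\log(|x-y|/\widetilde R)\,\dd x\,\dd y$, and item (1) gives the integrand bound $CR^4/\ell^4 \cdot g(x)g(y)\log(\widetilde R/|x-y|)$, whose integral is controlled by $\widehat{g\omega}(0)$.

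The main technical obstacle I anticipate is the bookkeeping of the logarithmic singularity in items (3) and (4): one must check that the supports of $W_1 \pm g$, together with $R \ll \widetilde R$, unambiguously fix the sign of $\log(|x-y|/\widetilde R)$ throughout the integration region, so that absolute values may be transferred onto the nonnegative, integrable product $g(x)g(y)\log(\widetilde R/|x-y|)$. Additionally, one needs to track the normalization factors between the Fourier-transform convention used here and the distributional formula in Lemma~\ref{lem:FouLog} to ensure the prefactors match.
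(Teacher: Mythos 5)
Your proposal is correct, and for the first two items it is essentially identical to the paper's argument: Taylor expansion of $\chi*\chi$ around the origin (using the evenness and normalization $\|\chi\|_2=1$, so that $\nabla(\chi*\chi)(0)=0$ and $(\chi*\chi)(y)=1+\Oh(|y|^2)$), combined with $\int g = 8\pi\delta$; and a second-order Taylor expansion of $\chi_\Lambda$, with the first-order term killed by the evenness of $f$.

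For the third and fourth items, both you and the paper begin by using \eqref{eq:convolutionLog} to convert the Fourier-side integral into a position-space bilinear form against $\log|x-y|$ and then feed in \eqref{eq:Wg_approx} to extract the $R^2/\ell^2$ (resp.\ $R^4/\ell^4$) smallness. Where you diverge is in the bookkeeping. The paper writes the error as $\iint |g(x)g(y)-W_1(x)W_1(y)|\,|\log|$ and closes the estimate by observing that the $\log$-convolution of $g$ is (up to harmonic corrections) $\omega$, by the 2D Green's function of $-\Delta$ together with \eqref{eq:ScatOmega}; it then uses $\omega\le 1$ and $\int g = 8\pi\delta$. You instead factor $\widehat W_1^2-\widehat g^2=(\widehat W_1-\widehat g)(\widehat W_1+\widehat g)$ in Fourier space, pass to the polarized form of \eqref{eq:convolutionLog}, and at the end re-apply \eqref{eq:convolutionLog} to $g$ alone to recognize the remaining double integral as $\widehat{g\omega}(0)$, then invoke \eqref{eq. bound on gw_0}. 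Your route avoids the Green's-function step (which the paper states rather tersely) and lands directly on the $\widehat{g\omega}(0)$-bound claimed in the fourth item, at the price of one extra pass through the convolution identity. Both close cleanly; the sign observation you make -- that $|x-y|\le 2R\ll \widetilde R$ on the supports fixes the sign of the logarithm so that absolute values can be moved onto the nonnegative product $g(x)g(y)\log(\widetilde R/|x-y|)$ -- is exactly the point one must check and is correct. The $(2\pi)$-normalization matching you flag as a caveat is a real bookkeeping point, but it concerns consistency between \eqref{eq:InclGamma}, \eqref{eq:convolutionLog} and \eqref{eq:gomega_equivalence} in the paper itself, not a gap in your argument.
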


\begin{proof}
For \eqref{eq:Wg_approx} we use that the support of $g$ is contained in the set $\{|x|<R\}$, therefore it is enough to give here our estimate. Using the symmetries of $\chi$, the normalization $\Vert \chi \Vert_2 = 1$ (Appendix~\ref{app:locfunction}) and a Taylor expansion we see that  
\begin{equation*}
\left|1 - \frac{1}{\chi * \chi (x/\ell)} \right| \leq  \frac{1}{|\chi * \chi (x/\ell)|} \left| \int_{\mathbb{R}^2} \chi(y) [\chi(y) - \chi(x/\ell-y)] \right|\dd y \leq C \frac{|x|^2}{\ell^2} \max_{i,j}\|\partial_i \partial_j \chi\|_{\infty},
\end{equation*}
which implies the first bound. \eqref{eq:fchi_approx} is proved similarily. For the bound \eqref{eq:gw0.approx}, by Lemma~\ref{lem:FT-w} we know that
\begin{equation}
(\widehat{g \omega})_0 = \frac{1}{(2\pi)^2} \int_{\mathbb{R}^2} \frac{\widehat{g}_k^2 - \widehat{g}_0^2 \one_{\{|k|\leq \ell_{\delta}^{-1}\}}}{2k^2}\,\dd k,
\end{equation}
and using \eqref{eq:convolutionLog} for both the expressions of $W$ and $g$ we get
\begin{align*}
&\frac{1}{(2\pi)^2}\bigg\vert\int_{\mathbb{R}^2} \, \frac{ \widehat{g}^2_k - \widehat{g}_0^2 \one_{\{|k|\leq \ell_{\delta}^{-1}\}} - \widehat{W}_1^2(k) + \widehat{W}_1^2(0)\one_{\{|k|\leq \ell_{\delta}^{-1}\}}}{2k^2} \dd k \bigg\vert \\
&\leq - C\iint |g(x) g(y) -W_1(x) W_1(y)|\log \Big( \frac{x-y}{\ell_{\delta}} \Big) \dd x\dd y \\
&\leq - \frac{C}{\ell^2} \iint |x|^2 g(x) g(y) \log \Big( \frac{x-y}{\ell_{\delta}}\Big) \dd x\dd y\\
&= \frac{C}{\ell^2} \int |x|^2 g(x) \omega(x)  \dd x \leq C\frac{R^2}{\ell^2} \delta,
\end{align*}
where we used first \eqref{eq:Wg_approx}, then the fact that in $2$ dimension the $\log$ term produces a convolution with the Green's function of the Laplacian and finally formulas \eqref{eq:ScatOmega} and \eqref{eq:IntegralScattLength} (together with the bounds $\omega \leq 1$ in the support of $g$ and $2R < \ell_{\delta}$). The last inequality has a similar proof and is omitted.
\end{proof}

We now give a result of localization to large boxes for the potential part in the Hamiltonian \eqref{eq:grandcan_hamilton}.

\begin{lemma}[Localization of the potential]\label{lem:potentialloc}
The following identity holds
\begin{multline}
-\rho_{\mu}\sum_{j=1}^N  \int_{\mathbb{R}^2} g(x_j-y) \dd y + \sum_{i<j} v(x_i-x_j) \\
= \int_{\mathbb{R}^2}  \bigg[ -\rho_{\mu} \sum_{j=1}^N  \int_{\mathbb{R}^2}  w_{1,u}(x_j,y) \dd y\, + \,\sum_{i<j} w_u(x_i,x_j) \bigg] \dd u.
\end{multline}
\end{lemma}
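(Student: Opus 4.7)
The plan is to verify the identity by a straightforward application of Fubini's theorem together with one computation on the sliding convolution kernel $\chi_u(x)\chi_u(y)$. First I would prove the key pointwise identity
\begin{equation*}
\int_{\mathbb{R}^2}\chi_u(x)\chi_u(y)\,\dd u \;=\; (\chi*\chi)((x-y)/\ell), \qquad x,y\in\mathbb{R}^2.
\end{equation*}
Since $\chi_u(z)=\chi_\Lambda(z-\ell u)=\chi(z/\ell-u)$, the change of variables $v=x/\ell-u$ transforms the left-hand side into $\int_{\mathbb{R}^2}\chi(v)\chi((y-x)/\ell+v)\,\dd v$, and the spherical symmetry of $\chi$ (hence evenness) identifies this with the convolution $(\chi*\chi)$ evaluated at $(x-y)/\ell$.

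Next I would insert this identity pointwise into the definitions $w_u(x,y)=\chi_u(x)W(x-y)\chi_u(y)$ with $W(x)=v(x)/(\chi*\chi)(x/\ell)$, obtaining
\begin{equation*}
\int_{\mathbb{R}^2} w_u(x,y)\,\dd u \;=\; W(x-y)\,(\chi*\chi)((x-y)/\ell) \;=\; v(x-y),
\end{equation*}
and the same computation with $W_1$ and $g$ in place of $W$ and $v$ yields $\int_{\mathbb{R}^2} w_{1,u}(x,y)\,\dd u = g(x-y)$. The implicit assumption that $R\ell^{-1}$ is sufficiently small enters here: it guarantees that $(\chi*\chi)(\cdot/\ell)>0$ on the supports of $v$ and $g$, so the divisions defining $W$ and $W_1$ are harmless.

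To conclude, I apply Fubini to the $\dd u$-integral on the right-hand side of the claimed identity, swapping it with the finite sums in $j$ (resp. $i<j$) and with the inner $\dd y$-integral. Absolute integrability is automatic: $\chi$ has compact support in $u$-variable and $v, g\in L^1$, so the iterated integrals are well-defined on any fixed particle configuration. The two termwise identities above then collapse $\int w_{1,u}(x_j,y)\,\dd y\,\dd u$ into $\int g(x_j-y)\,\dd y$ and $\int w_u(x_i,x_j)\,\dd u$ into $v(x_i-x_j)$, producing precisely the left-hand side. I do not anticipate any real obstacle: the statement is essentially a bookkeeping verification that the localized potentials $w$ and $w_1$ were engineered so that their sliding average over translations $u$ reproduces the original interaction $v$ and the scattering-type kernel $g$, respectively.
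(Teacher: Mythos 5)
Your proof is correct and is exactly the ``direct calculation following the same lines as \cite[Proposition 3.1]{BSol}'' that the paper cites: verify the sliding identity $\int_{\mathbb{R}^2}\chi_u(x)\chi_u(y)\,\dd u=(\chi*\chi)((x-y)/\ell)$, note that $w_u$ and $w_{1,u}$ were built with $W=v/(\chi*\chi)(\cdot/\ell)$ and $W_1=g/(\chi*\chi)(\cdot/\ell)$ precisely so that this averaging reproduces $v$ and $g$, and conclude by Fubini. No gaps.
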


\begin{proof}
It is proven by direct calculation following the same lines as \cite[Proposition 3.1]{BSol}.
\end{proof}

Therefore, joining the results from Lemmas~\ref{lem:LocKinEnLarge},~\ref{lem:potentialloc} and introducing the large box Hamiltonian acting on $\mathscr{F}_s(L^2(\Lambda_u))$ as
\begin{equation}
\mathcal{H}_{\Lambda_u}(\rho_{\mu})_N := \sum_{j=1}^N \mathcal{T}^{(j)}_u   -\rho_{\mu} \sum_{j=1}^N  \int_{\mathbb{R}^2} w_{1,u}(x_j,y) \dd y \, + \,\sum_{i<j} w_u(x_i,x_j),
\end{equation}
where $\mathcal{T}^{(j)}_u$ is \eqref{eq:Texplicitexpress} for the $x_j$ variable, and the ground state energy and its density 
\begin{equation}
E_{\Lambda} (\rho_{\mu}) :=  \inf \mathrm{Spec} (\mathcal{H}_{\Lambda}(\rho_{\mu})),\qquad  e_{\Lambda} (\rho_{\mu}) := \frac{1}{\ell^2}E_{\Lambda} (\rho_{\mu}) ,
\end{equation}
we are able to prove the following. Recall that $e_0$ is defined in \eqref{def:e0}.

\begin{lemma}\label{thm:control_e_0e_L}
Under the assumptions of Lemma~\ref{lem:LocKinEnLarge},
\begin{equation}
e_0(\rho_{\mu}) \geq e_{\Lambda}(\rho_{\mu}).
\end{equation}
\end{lemma}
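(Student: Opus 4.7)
The strategy is to combine the sliding localization inequality of Lemma~\ref{lem:LocKinEnLarge} with the identity of Lemma~\ref{lem:potentialloc} and then exploit translation invariance of the box problem. Given any normalized $\Psi \in \mathscr{F}_s(L^2(\Omega))$ on $\Omega=[-L/2,L/2]^2$, applying the two lemmas sector by sector yields
\[
\langle \Psi, \mathcal{H}_{\rho_{\mu}} \Psi\rangle \;\geq\; \int_{\mathbb{R}^2} \langle \Psi, \mathcal{H}_{\Lambda_u}(\rho_{\mu}) \Psi\rangle\, \dd u.
\]

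The next step I would carry out is to identify each summand with the ground state energy of a translated box. For any $u$ such that $\Lambda_u \subseteq \Omega$, every term defining $\mathcal{H}_{\Lambda_u}(\rho_{\mu})$ acts only on the subspace of wavefunctions localized to $\Lambda_u$: the Neumann Laplacian $-\Delta_u^{\mathcal{N}}$ and the projections $P_u,Q_u$ live on $L^2(\Lambda_u)$; the cutoffs $\chi_u$ are supported in $\Lambda_u$; the localized potentials $w_u$ and $w_{1,u}$ are supported in $\Lambda_u\times\Lambda_u$; and the nonlocal pieces of the form $\chi_u(\sqrt{-\Delta}-c)_+^2\chi_u$ depend only on $\chi_u\Psi$, hence only on the restriction $\Psi|_{\Lambda_u}$. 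Via the canonical factorization $\mathscr{F}_s(L^2(\Omega)) \simeq \mathscr{F}_s(L^2(\Lambda_u)) \otimes \mathscr{F}_s(L^2(\Omega\setminus \Lambda_u))$, this shows that $\mathcal{H}_{\Lambda_u}(\rho_{\mu})$ decomposes as $\widetilde{\mathcal{H}}_{\Lambda}(\rho_{\mu}) \otimes I$, where $\widetilde{\mathcal{H}}_{\Lambda}(\rho_{\mu})$ is the spatial translate of the box Hamiltonian $\mathcal{H}_{\Lambda}(\rho_{\mu})$. Translation invariance then yields $\mathcal{H}_{\Lambda_u}(\rho_{\mu}) \geq E_{\Lambda}(\rho_{\mu})$ as operators for every such $u$.

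Finally, I would perform a volume count. Setting $A_L := \{u : \Lambda_u \subseteq \Omega\}$, a direct calculation gives $|A_L| = (L/\ell-1)^2$, whereas the boundary set $\{u : \Lambda_u\cap\Omega\neq\emptyset\} \setminus A_L$ has Lebesgue measure $O(L/\ell)$. Bounding the boundary contribution by a crude operator lower bound for $\mathcal{H}_{\Lambda_u}(\rho_{\mu})$ (say by dominating the chemical potential piece in terms of the particle number and absorbing it into the kinetic terms), dividing by $L^2$, and letting $L\to\infty$, the measure ratio $|A_L|\ell^2/L^2 = (1-\ell/L)^2$ tends to $1$, the boundary contribution is $O(L^{-1})$, and one concludes $e_0(\rho_{\mu}) \geq e_{\Lambda}(\rho_{\mu})$. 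The main technical point I expect is checking the tensor factorization for the nonlocal kinetic pieces: since $\sqrt{-\Delta}$ itself is nonlocal, the argument really hinges on $\chi_u$ having support inside $\Lambda_u$, so that the quadratic form $\|(-\Delta)^{1/4}\chi_u\Psi\|^2$ depends only on $\Psi|_{\Lambda_u}$ and the factorization is valid; the boundary bookkeeping is standard in this thermodynamic setting.
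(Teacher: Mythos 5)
Your proposal is correct and follows the same approach as the paper: apply Lemma~\ref{lem:LocKinEnLarge} together with Lemma~\ref{lem:potentialloc}, use the translation invariance of the box problem to replace each $\mathcal{H}_{\Lambda_u}(\rho_\mu)$ by $E_\Lambda(\rho_\mu)$, and pass to the thermodynamic limit. The interior/boundary split is an unnecessary complication: since $E_\Lambda(\rho_\mu)\le 0$ and the bound $\mathcal{H}_{\Lambda_u}(\rho_\mu)\ge E_\Lambda(\rho_\mu)$ holds for every $u$ with $\Lambda_u\cap\Omega\neq\emptyset$ (your tensor-factorization argument still applies with $L^2(\Lambda_u\cap\Omega)$ viewed as a subspace of $L^2(\Lambda_u)$), the paper simply integrates over all such $u$ and lets the volume ratio $|\Omega+B(0,\ell/2)|/|\Omega|\to 1$ absorb the boundary effect.
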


\begin{proof}
By direct application of Lemma~\ref{lem:LocKinEnLarge} and Lemma~\ref{lem:potentialloc} we have
\begin{equation}
\mathcal{H}_{\rho_{\mu},N}(\rho_{\mu}) \geq \int_{\ell^{-1}(\Omega + B(0,\ell/2))}\mathcal{H}_{\Lambda_u}(\rho_{\mu})_N \dd u \geq \ell^{-2}|\Omega +  B(0,\ell/2)| E_{\Lambda} (\rho_{\mu}),
\end{equation}
where the last inequality is guaranteed by the unitary equivalence $\mathcal{H}_{\Lambda_u} \cong \mathcal{H}_{\Lambda_{u'}}$ via the relation
\begin{equation}
w_{u'}(x,y) = w_u (x- \ell(u'-u), y - \ell(u'-u)).
\end{equation}
The proof is concluded taking the infimum of the spectrum of the left-hand side and dividing by $|\Omega|$ observing that in the thermodynamical limit 
\begin{equation}
\frac{\vert \Omega+ B(0,\ell/2)\vert }{|\Omega|} \xrightarrow[|\Omega| \rightarrow + \infty]{} 1.
\end{equation}
\end{proof}

Therefore, Lemma~\ref{thm:control_e_0e_L} shows that in order to prove our main result Theorem~\ref{thm:grancanonicaltherm}, it is enough to give an analogous estimate on the Hamiltonian on the large box, and it is the content of the next theorem.
\begin{theorem}\label{thm:largebox_lower}
There exist $C, \eta >0$ such that the following holds. Let $\rho_\mu >0$ and $v\in L^1(\Omega)$ be a positive, spherically symmetric potential with scattering length $a$ and $\supp(v) \subset B(0,R)$ such that $\Vert v \Vert_1 \leq  Y^{-1/8}$ and $R \leq \rho_\mu^{-1/2}$. 
Then, if $\rho_\mu a^2 \leq C^{-1}$, and the parameters are chosen as in Appendix~\ref{app:parameters}, we have
\begin{equation}
E_{\Lambda}(\rho_{\mu}) \geq - 4 \pi  \ell^2  \rho_{\mu}^2 \delta  \Big(1 - \Big( 2 \Gamma + \frac 1 2 + \log \pi \Big) \delta \Big) - C  \ell^2 \rho_{\mu}^2  \delta^{2 + \eta}. 
\end{equation}
\end{theorem}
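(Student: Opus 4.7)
The plan is to transform $\mathcal{H}_\Lambda(\rho_\mu)$ into a form amenable to Bogoliubov diagonalization in the momentum basis, through a sequence of reductions following the outline in Section~\ref{sec:strat}. First I would split the interaction energy using the projectors $P$ and $Q$ defined in \eqref{def:PQ}, classifying each two-body term by the number of $Q$ factors involved. I would then replace $v$ by $g = v\varphi$ in the contributions with few $Q$'s, using the scattering equation \eqref{eq:ScatOmega} to extract the renormalization; the positive errors produced are absorbed into a term $\mathcal{Q}_4^{\text{ren}}$. This renormalization is essential, since it turns the spurious $\widehat{v}_0$ factor into the correct $\widehat{g}_0 = 8\pi\delta$ in the leading-order energy, a step much more delicate in $2$D than in $3$D because $\int v$ is always far larger than $\delta$.

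Next I would treat the $3Q$ term, which encodes the soft-pair contribution emphasized in \cite{YY}. Two successive localizations are needed, first on the low outgoing momentum and then on the high incoming momenta (the latter more conveniently carried out in second-quantized form). In parallel I would establish a priori bounds on the number of low-momentum excitations via a Bose--Einstein condensation statement on the small box $B$ of side $d\ell \ll \ell_\delta$, which then allows one to restrict the Hamiltonian's action to states with few low excitations at a controllable localization cost. After these reductions I would pass to second quantization in momentum space on $\Lambda^*$ and perform a $c$-number substitution on the zero mode: the operators $a_0, a_0^*$ are replaced by a complex parameter $z$ with $|z|^2 = \rho_z \ell^2$, over which one later optimizes through coherent states. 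The Hamiltonian then separates into a $\rho_z$-dependent constant, a quadratic Bogoliubov operator in the non-zero momentum modes, and cubic remainders together with the localized $3Q$ piece.

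The case split on $\rho_z$ is crucial: when $\rho_z$ is far from $\rho_\mu$, the constant term is already large enough (by convexity around its minimum near $\rho_\mu$) to absorb all errors and close the bound directly; when it is close, I would carry out the full Bogoliubov diagonalization by a symplectic transformation, producing a non-negative quadratic operator plus an explicit integral of the form
\begin{equation*}
\frac{1}{2}\int \Big( \sqrt{p^4 + 2\rho_z \widehat{g}_p p^2} - p^2 - \rho_z \widehat{g}_p + \rho_z^2 \frac{\widehat{g}_p^2 - \widehat{g}_0^2 \one_{\{|p|\leq \ell_\delta^{-1}\}}}{2p^2} \Big) \frac{\dd p}{4\pi^2},
\end{equation*}
whose evaluation (analogous to Proposition~\ref{prop:integralapprox2} used in the upper bound) simultaneously compensates the renormalization correction and produces the second-order constant $4\pi\rho_\mu^2\delta^2(2\Gamma + \tfrac12 + \log\pi)$. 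The main obstacle will be showing that the soft-pair contribution from the localized $3Q$ term is dominated by the leftover positive quadratic Hamiltonian produced by the diagonalization; this is where the two-dimensional logarithmic divergences, forcing the healing length $\ell_\delta$ from \eqref{eq:def_elldelta_mu} to appear throughout, create the most significant technical departures from the $3$D treatment of \cite{FS2}. Assembling all the pieces with the parameter choices collected in Appendix~\ref{app:parameters} then yields the stated lower bound with an explicit $\eta > 0$.
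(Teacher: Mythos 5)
Your proposal faithfully reproduces the paper's strategy for Theorem~\ref{thm:largebox_lower}: the $P/Q$ splitting with renormalization absorbed into $\mathcal{Q}_4^{\rm ren}$, the two-stage localization of the $3Q$ term, a priori bounds via BEC on the small box followed by the excitation-number restriction, second quantization and $c$-number substitution, the case split on $\rho_z$, and Bogoliubov diagonalization with the soft-pair contribution absorbed by the residual positive quadratic Hamiltonian. The Bogoliubov integral you display is exactly the one computed in Proposition~\ref{prop:integralapprox2} (after the preliminary replacements of $\tau_k$ by $k^2$ and $\widehat{W}_1$ by $\widehat{g}$ in Lemmas~\ref{lem:Bogintegral_approx_tau_k}--\ref{lem:Bogintegral_approx_W_g}), so this is the same approach the paper takes.
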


The proof of Theorem~\ref{thm:largebox_lower} is given in the remaining sections of the article. 
\section{Lower bounds in position space}
\subsection{Splitting of the Potential}\label{subsec:splitpot}

By the definitions \eqref{def:PQ} of the projectors $P$ and $Q$, we see that we can split the potential in a way presented in the lemma below.
\begin{lemma}\label{lem:SF_potsplit-bigbox}
We have, recalling the definitions in \eqref{eq:SF_w12u}, that 
\begin{align} \label{eq:potsplit-bigbox}
-\rho_{\mu} \sum_{j=1}^N  \int_{\mathbb{R}^2}  w_{1}(x_j,y) \dd y + \,\frac{1}{2}\sum_{i\neq j} w(x_i,x_j) &= \sum_{j=0}^4 {\mathcal Q}_j^{\rm ren}
\end{align}
with
\begin{align}
0 \leq {\mathcal Q}_4^{\rm ren}&:=
\frac{1}{2} \sum_{i\neq j} \Big[ Q_i Q_j + \left(P_i P_j + P_i Q_j + Q_i P _j\right)\omega(x_i-x_j) \Big] w(x_i,x_j) \nonumber \\
&\,\qquad \qquad \times
\Big[ Q_j Q_i + \omega(x_i-x_j) \left(P_j P_i + P_j Q_i + Q_j P_i\right)\Big],\label{eq:SF_DefQ4}\\
{\mathcal Q}_3^{\rm ren}&:=
\sum_{i\neq j} P_i Q_j w_1(x_i,x_j) Q_i Q_j + h.c.
 \label{eq:SF_DefQ3},
 \end{align}
as well as
 \begin{align}
{\mathcal Q}_2^{\rm ren}&:=
\sum_{i\neq j} P_i Q_j w_2(x_i,x_j) Q_i P_j  + \sum_{i\neq j} P_i Q_j w_2(x_i,x_j) P_i Q_j  \\
&\quad+\frac{1}{2}\sum_{i\neq j} P_iP_j w_1(x_i,x_j) Q_i Q_j + h.c.-\rho_{\mu} \sum_{j=1}^N Q_i \int_{\mathbb{R}^2}  w_1(x_i,y) \dd y \, Q_i,
 \label{eq:SF_DefQ2}\\
{\mathcal Q}_1^{\rm ren}&:=\sum_{i,j} Q_i P_j w_2(x_i,x_j) P_i P_j - \rho_{\mu}\sum_{i=1}^N Q_i \int_{\mathbb{R}^2}  w_1(x_i,y) \dd y \, P_i + h.c.,  \label{eq:SF_DefQ1}
\intertext{and}
{\mathcal Q}_0^{\rm ren}&:= \frac{1}{2} \sum_{i \neq j} P_i P_j w_2(x_i,x_j) P_i P_j - \rho_{\mu} \sum_{j=1}^N P_j \int_{\mathbb{R}^2}   w_1(x_j,y) \dd y \, P_j. \label{eq:SF_DefQ0}
\end{align}
\end{lemma}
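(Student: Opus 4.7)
The identity is purely algebraic, so the plan is to expand both sides and match terms via a single renormalization trick. First I would insert the resolution $\mathbf{1} = P + Q$ twice (on the left and right) on each of the two particle variables appearing in $w(x_i,x_j)$, producing $16$ terms which I group according to the number of $Q$'s. Writing $B := P_iP_j + P_iQ_j + Q_iP_j$, so that $\mathbf{1}\otimes\mathbf{1} = B + Q_iQ_j$ on the two--particle space, this is
\begin{equation*}
\frac{1}{2}\sum_{i\neq j} w_{ij} = \frac{1}{2}\sum_{i\neq j}(B + Q_iQ_j)\,w_{ij}\,(B + Q_iQ_j).
\end{equation*}
In parallel I would insert $\mathbf{1}=P+Q$ on both sides of $w_1(x_i,y)$ in the chemical potential term, producing four pieces (the $PP$, $QP+h.c.$, and $QQ$ ones) that will be distributed into $\mathcal{Q}_0^{\rm ren}$, $\mathcal{Q}_1^{\rm ren}$, $\mathcal{Q}_2^{\rm ren}$ respectively.

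The renormalization step is the heart of the proof. Since $w$, $w_1$, $w_2$, $\omega$ are all multiplication operators in $x_i-x_j$ they commute pairwise, and from the definitions together with $g = v(1-\omega)$ one has the pointwise identities $w\omega = w - w_1$ and $w\omega^2 = w - w_2$. Setting $A := Q_iQ_j + \omega B$, a direct computation gives
\begin{equation*}
A^* w A = Q_iQ_j\, w\, Q_iQ_j + Q_iQ_j(w-w_1)B + B(w-w_1)Q_iQ_j + B(w-w_2)B,
\end{equation*}
so subtracting from the naive expansion yields the master identity
\begin{equation*}
\frac{1}{2}\sum_{i\neq j} w_{ij} \;=\; \mathcal{Q}_4^{\rm ren} \;+\; \frac{1}{2}\sum_{i\neq j} B\,w_2\,B \;+\; \frac{1}{2}\sum_{i\neq j}\bigl(B\,w_1\,Q_iQ_j + Q_iQ_j\,w_1\,B\bigr),
\end{equation*}
with $\mathcal{Q}_4^{\rm ren} = \tfrac{1}{2}\sum_{i\neq j} A^* w A$. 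Positivity of $\mathcal{Q}_4^{\rm ren}$ is then immediate from $w \geq 0$ pointwise: $\mathcal{Q}_4^{\rm ren} = \tfrac{1}{2}\sum (\sqrt{w}\,A)^*(\sqrt{w}\,A)$.

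The remaining work is bookkeeping: expand $B w_2 B$ into its nine $PP$/$PQ$/$QP$--sandwiches and $B w_1 Q_iQ_j$ into its three pieces, then collect them together with the four pieces of the split chemical potential into the stated $\mathcal{Q}_0^{\rm ren}, \mathcal{Q}_1^{\rm ren}, \mathcal{Q}_2^{\rm ren}, \mathcal{Q}_3^{\rm ren}$. The symmetry $w_{ij}=w_{ji}$ under the relabeling $i\leftrightarrow j$ identifies several terms (for example $\sum_{i\neq j} P_iP_j w P_iQ_j = \sum_{i\neq j} P_iP_j w Q_iP_j$), while the Hermitian conjugation identity $(P_iP_j w P_iQ_j)^* = P_iQ_j w P_iP_j$ packages the off-diagonal pieces into ``$+\,h.c.$'' expressions exactly matching \eqref{eq:SF_DefQ0}--\eqref{eq:SF_DefQ1}. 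The $\tfrac{1}{2}$ in front of the $P_iP_j w_1 Q_iQ_j$ term of $\mathcal{Q}_2^{\rm ren}$ comes from the single $PP$--piece of $B$ (vs.\ the two $i\leftrightarrow j$ equivalent $PQ$-- and $QP$--pieces, which produce the unfactored sum $\mathcal{Q}_3^{\rm ren}$).

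The only obstacle is the bookkeeping; no estimate is used. The conceptual point is that the cross terms $Q_iQ_j(w-w_1)B$ and $B(w-w_2)B$ inside $A^*wA$ are precisely the renormalization errors incurred by replacing $w$ with $w_1$ in the $3Q$--sector and with $w_2$ in the $\leq 2Q$--sector, so the decomposition is forced once one insists that $\mathcal{Q}_4^{\rm ren}$ be a complete square.
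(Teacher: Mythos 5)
Your proof is correct and takes essentially the same approach as the paper, whose own proof simply records that the identity follows from an elementary calculation using $P+Q=\one$ and $w_1 = w_2 - w\omega + w\omega^2$ (equivalent to your pointwise identities $w\omega = w - w_1$ and $w\omega^2 = w - w_2$). Recognizing $\mathcal{Q}_4^{\rm ren} = \tfrac12\sum_{i\neq j} A^* w A$ with $A := Q_iQ_j + \omega B$, subtracting it from $\tfrac12\sum_{i\neq j}(B+Q_iQ_j)\,w\,(B+Q_iQ_j)$, and then sorting the residue together with the split chemical-potential term by number of $Q$'s (using $i\leftrightarrow j$ symmetry and Hermitian conjugation to package the off-diagonal pieces) is precisely the intended calculation.
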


\begin{proof}
It follows from an elementary calculation, using that $P+Q=\one$ on $L^2(\Lambda)$ and, where needed, the identity
\begin{equation}
w_1 = w_2 -w \omega + w \omega^2.
\end{equation}
\end{proof}

We rewrite now some of the previous $Q$ terms thanks to the lemma below.

\begin{lemma}\label{lem:Qrewriting}
With the notation $\rho_0 = \frac{n_0}{\ell^2}$ we have
\begin{align}
\mathcal{Q}_0^{\rm ren} &= \frac{n_0(n_0 -1)}{2|\Lambda|} (\widehat{g}_0 + \widehat{g\omega}(0)) -\rho_{\mu} n_0 \widehat{g}(0),\\
\mathcal{Q}_1^{\rm ren} &= \Big( \frac{n_0}{|\Lambda|} - \rho_{\mu}\Big) \sum_{i=1}^N Q_i \chi_{\Lambda}(x_i) W_1 *\chi_{\Lambda}(x_i) P_i + h.c. \nonumber \\
&\quad+\frac{n_0}{|\Lambda|} \sum_{i=1}^N Q_i \chi_{\Lambda}(x_i) ((W_1 \omega) * \chi_{\Lambda})(x_i) P_i + h.c.,
\intertext{and}
\mathcal{Q}_2^{\rm ren} &\geq \sum_{i \neq j} P_i Q_j w_2(x_i,x_j) Q_i P_j + \frac{1}{2} \sum_{i \neq j} (P_i P_j w_1(x_i,x_j) Q_i Q_j + h.c.)  \nonumber \\
&\quad+ ((\rho_0 - \rho_{\mu})\widehat{W}_1(0) + \rho_0 \widehat{W_1 \omega}(0)) \sum_{j=1}^N Q_j \chi_{\Lambda}(x_j)^2 Q_j - C (\rho_{\mu} + \rho_0) \delta \Big(\frac{R}{\ell}\Big)^2 n _+.
\end{align}
\end{lemma}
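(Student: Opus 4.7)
The three identities/inequalities follow from a direct reorganization of \eqref{eq:SF_DefQ0}, \eqref{eq:SF_DefQ1}, \eqref{eq:SF_DefQ2}, by computing the action of the condensate projection $P = \ell^{-2}\vert\one_\Lambda\rangle\langle\one_\Lambda\vert$. The key elementary facts I will use are: for any one-body multiplication operator $f(x_j)$ one has $P_j f(x_j) P_j = \ell^{-2}\big(\int_\Lambda f\big)\, P_j$, and combinatorially $\sum_{i\neq j}P_iP_j = n_0(n_0-1)$, $\sum_{j\neq i}P_j = n_0 - P_i$. Moreover, the kernel $w_*(x,y)=\chi_\Lambda(x)W_*(x-y)\chi_\Lambda(y)$ satisfies the convolution-type identities $\iint w_*(x,y)\,\dd x\,\dd y = \ell^2\,\widehat{\star}(0)$ (where $\star$ is the corresponding version of $g$, $g\omega$ etc.), by the defining relation $W_* \cdot (\chi_\Lambda*\chi_\Lambda) = \ell^2\,(\text{numerator})$ following from $\chi_\Lambda*\chi_\Lambda(z)=\ell^2\,\chi*\chi(z/\ell)$.

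For $\mathcal{Q}_0^{\rm ren}$, I would simply apply the averaging identity twice. The first term in \eqref{eq:SF_DefQ0} becomes $\tfrac{1}{2\ell^2}(\widehat{g}_0+\widehat{g\omega}(0))\sum_{i\neq j}P_iP_j$, and the chemical potential term collapses to $\rho_\mu\,\widehat{g}(0)\,n_0$, since $P_j\int w_1(x_j,y)\,\dd y\,P_j = \widehat{g}(0)\,P_j$. Combining and using $\sum_{i\neq j}P_iP_j=n_0(n_0-1)$ yields the stated identity.

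For $\mathcal{Q}_1^{\rm ren}$, I would compute $P_j w_2(x_i,x_j) P_j = \ell^{-2}\chi_\Lambda(x_i)(W_2*\chi_\Lambda)(x_i)\,P_j$, sum over $j\neq i$ to produce the factor $(n_0-P_i)$, and then observe that $(n_0-P_i)P_i$ acts as $n_0 P_i$ up to a $1/\ell^2$ one-body remainder. Splitting $W_2 = W_1 + W_1\omega$ and combining the $W_1$-part with the chemical potential term gives exactly the coefficient $(\rho_0-\rho_\mu)$, while the $W_1\omega$-part gives the $\rho_0$-coefficient. The $\mathcal{Q}_2^{\rm ren}$ calculation is analogous: the ``second'' summand $\sum_{i\neq j}P_iQ_jw_2P_iQ_j$ contracts over the $i$-factor in the same way, producing $\rho_0 \sum_j Q_j\chi_\Lambda(x_j)(W_2*\chi_\Lambda)(x_j)Q_j$ (the $j=i$ term absent thanks to $Q_jP_j=0$ giving $\sum_{i\neq j}P_i=n_0$ freely). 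Subtracting the chemical potential term $-\rho_\mu\sum_j Q_j\chi_\Lambda(W_1*\chi_\Lambda)Q_j$ and splitting $W_2=W_1+W_1\omega$ yields precisely the combination $(\rho_0-\rho_\mu)(W_1*\chi_\Lambda)+\rho_0(W_1\omega)*\chi_\Lambda$, multiplied by $Q_j\chi_\Lambda(\cdot)Q_j$ and summed in $j$.

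The final step for $\mathcal{Q}_2^{\rm ren}$ (and the main obstacle, though still routine) is to replace the convolutions $W_1*\chi_\Lambda$ and $(W_1\omega)*\chi_\Lambda$ by their constant approximations $\widehat{W}_1(0)\chi_\Lambda$ and $\widehat{W_1\omega}(0)\chi_\Lambda$ inside $\sum_j Q_j\chi_\Lambda(\cdot)Q_j$. For this I would apply \eqref{eq:fchi_approx} of Lemma~\ref{lem:gomegaapprox}: with $\|W_1\|_{L^1}\leq C\delta$ and $\|W_1\omega\|_{L^1}\leq C\delta$ (using $\omega\in[0,1]$ on $\supp g$ and \eqref{eq:Wg_approx}), the pointwise error is bounded by $C\delta(R/\ell)^2$. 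Passing to the lower bound via $\pm\|\cdot\|_\infty\,Q_j$ and summing $\sum_j Q_jQ_j=n_+$, and noting that $|\rho_0-\rho_\mu|\leq\rho_0+\rho_\mu$, produces the stated remainder $-C(\rho_\mu+\rho_0)\delta(R/\ell)^2\,n_+$. The positive terms $\sum_{i\neq j}P_iQ_jw_2Q_iP_j$ and $\tfrac12\sum_{i\neq j}(P_iP_jw_1Q_iQ_j+\mathrm{h.c.})$ are kept untouched, giving the claimed inequality.
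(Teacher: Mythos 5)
Your overall strategy matches the paper's proof: the averaging identity $P_j f(x_j) P_j = \ell^{-2}\bigl(\int_\Lambda f\bigr) P_j$, the convolution identity $\int_\Lambda w_1(x_i,y)\,\dd y = \chi_\Lambda(x_i)(W_1*\chi_\Lambda)(x_i)$, the splitting $W_2=W_1+W_1\omega$, and \eqref{eq:fchi_approx} from Lemma~\ref{lem:gomegaapprox} for the final lower bound on $\mathcal{Q}_2^{\rm ren}$. This is exactly what the paper does, and your $\mathcal{Q}_0^{\rm ren}$ and $\mathcal{Q}_2^{\rm ren}$ computations are correct, including the observation for $\mathcal{Q}_2^{\rm ren}$ that $\sum_{i\neq j}P_i Q_j\ldots$ gives $(n_0-P_j)$ which becomes $n_0$ for free since $P_jQ_j=0$.

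However, you have an inconsistency for $\mathcal{Q}_1^{\rm ren}$. You write that ``$(n_0-P_i)P_i$ acts as $n_0 P_i$ up to a $1/\ell^2$ one-body remainder,'' which would make the displayed identity for $\mathcal{Q}_1^{\rm ren}$ only approximate — but the lemma asserts an \emph{exact} equality, and indeed the identity is exact. The mechanism is precisely the one you correctly invoked for $\mathcal{Q}_2^{\rm ren}$: write
\[
\ell^{-2}\,Q_i\,\chi_\Lambda (W_2*\chi_\Lambda)\,P_i\,(n_0-P_i)
= \ell^{-2}\,(n_0-P_i)\,Q_i\,\chi_\Lambda (W_2*\chi_\Lambda)\,P_i
= \ell^{-2}\,n_0\,Q_i\,\chi_\Lambda (W_2*\chi_\Lambda)\,P_i,
\]
where the first step uses that $n_0-P_i=\sum_{k\neq i}P_k$ commutes with every factor acting on $x_i$, and the second uses $(n_0-P_i)Q_i = n_0 Q_i - P_iQ_i = n_0Q_i$. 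There is no remainder term. As written, your argument would fail to establish the exact equality claimed for $\mathcal{Q}_1^{\rm ren}$; the fix is just to move $(n_0-P_i)$ to the left of $Q_i$ before replacing it by $n_0$, as you did in the $\mathcal{Q}_2^{\rm ren}$ case.
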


\begin{proof}
The first two identities are straightforward after having observed that 
\begin{equation}\label{eq:PiPiformula}
\sum_{j=1}^N P_j w_1(x_i,x_j) P_j = \frac{1}{\ell^2}\sum_{j=1}^N P_j \int_{\Lambda}  w_1(x_i,y) \dd y= \frac{n_0}{|\Lambda|} \int_{\Lambda} w_1(x_i,y) \dd y,
\end{equation}
and
\begin{equation}\label{eq:w1integralcalc}
\int_{\Lambda}   w_1(x_i,y)dy = \chi_{\Lambda}(x_i) (W_1 * \chi_{\Lambda})(x_i).
\end{equation}
For the $\mathcal{Q}_2^{\rm ren}$ term, the only parts which require a different approach are
\begin{equation}
(\rho_0 - \rho_{\mu}) \sum_{j=1}^N Q_j \chi_{\Lambda}(x_j) W_1 * \chi_{\Lambda}(x_j) Q_j + \rho_0 \sum_{j=1}^N Q_j \chi_{\Lambda}(x_j) ((W_1\omega)* \chi_{\Lambda})(x_j) Q_j.
\end{equation} 
Using Lemma~\ref{lem:gomegaapprox} we can bound
\begin{align}
\sum_{j=1}^N Q_j \chi_{\Lambda}(x_i) W_1 * \chi_{\Lambda}(x_i) Q_j &\geq  \|W_1\|_{L^1} \sum_{j=1}^N Q_j  \chi_{\Lambda}(x_j)^2 Q_j \nonumber \\ &\quad - C\max_{i,j}\|\partial_i \partial_j \chi\|_{\infty} \frac{R^2}{\ell^2 }\|W_1\|_{L^1} \|\chi\|_{\infty} n_+.
\end{align}
Recalling that $\|W_1\|_{L^1} \leq C \delta$ (Lemma~\ref{lem:gomegaapprox}) and acting similarly for the other term, this concludes the proof.
\end{proof}

As a direct consequence of the lemma above, we can derive the following first lower bound for the large box Hamiltonian.

\begin{corollary}\label{cor:H'expression}
The following bound holds for the Hamiltonian in the large box
\begin{align}
\mathcal{H}_{\Lambda}(\rho_{\mu})\big|_N &\geq \sum_{j=1}^N \mathcal{T}^{(j)}   +  \frac{n_0(n_0 -1)}{2|\Lambda|} (\widehat{g}_0 + \widehat{g\omega}(0)) -\rho_{\mu} n_0 \widehat{g}_0 \label{eq:kin+lead} \\
&\quad+ \Big( \frac{n_0}{|\Lambda|} - \rho_{\mu}\Big) \sum_{i=1}^N Q_i \chi_{\Lambda}(x_i) W_1 *\chi_{\Lambda}(x_i) P_i + h.c. \label{eq:expresschar1} \\ 
&\quad+\frac{n_0}{|\Lambda|} \sum_{i=1}^N Q_i \chi_{\Lambda}(x_i) ((W_1 \omega) * \chi_{\Lambda})(x_i) P_i + h.c.  \label{eq:expresschar2} \\
&\quad+\sum_{i \neq j} P_i Q_j w_2(x_i,x_j) Q_i P_j + \frac{1}{2} \sum_{i \neq j} (P_i P_j w_1(x_i,x_j) Q_i Q_j + h.c.)  \label{eq:expressQ2} \\
&\quad+ ((\rho_0 - \rho_{\mu})\widehat{W}_1(0) + \rho_0 \widehat{W_1 \omega}(0)) \sum_{j=1}^N Q_j \chi_{\Lambda}(x_j)^2 Q_j  \label{eq:expressQ2char} \\
&\quad- C (\rho_{\mu} + \rho_0) \delta \Big(\frac{R}{\ell}\Big)^2 n _+ + \mathcal Q_3^{\text{ren}} + \mathcal Q_4^{\text{ren}}. \label{eq:expressQ3Q4}
\end{align}

\end{corollary}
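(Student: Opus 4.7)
The Corollary is essentially a bookkeeping consequence of the two preceding lemmas, so the plan is to combine them directly and read off the terms. First, starting from the definition
\[
\mathcal{H}_{\Lambda}(\rho_{\mu})\big|_N = \sum_{j=1}^N \mathcal{T}^{(j)} - \rho_{\mu}\sum_{j=1}^N\int w_1(x_j,y)\,\dd y + \tfrac{1}{2}\sum_{i\neq j} w(x_i,x_j),
\]
I would apply Lemma~\ref{lem:SF_potsplit-bigbox} to rewrite the potential part as $\sum_{j=0}^{4} \mathcal{Q}_j^{\rm ren}$. This is an exact identity, so no estimate is lost in this step.

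Next, I would substitute into this decomposition the three statements from Lemma~\ref{lem:Qrewriting}. The formula for $\mathcal{Q}_0^{\rm ren}$ contributes the leading constant and chemical potential term that appears in~\eqref{eq:kin+lead}. The formula for $\mathcal{Q}_1^{\rm ren}$ yields precisely lines~\eqref{eq:expresschar1} and~\eqref{eq:expresschar2}, after using identities \eqref{eq:PiPiformula}--\eqref{eq:w1integralcalc} to translate the $P_i P_j$ contribution and the $(W_1\omega)$ contribution into convolutions with $\chi_\Lambda$. The lower bound on $\mathcal{Q}_2^{\rm ren}$ produces~\eqref{eq:expressQ2} and~\eqref{eq:expressQ2char}, together with the remainder $-C(\rho_\mu + \rho_0)\delta(R/\ell)^2 n_+$ appearing in \eqref{eq:expressQ3Q4}.

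Finally, since Lemma~\ref{lem:SF_potsplit-bigbox} shows $\mathcal{Q}_4^{\rm ren}\geq 0$ (so it is safe to keep as a positive remainder), and $\mathcal{Q}_3^{\rm ren}$ is kept unchanged, we simply add the remaining terms $\mathcal{Q}_3^{\rm ren}+\mathcal{Q}_4^{\rm ren}$ on the right-hand side and add back the kinetic energy $\sum_j \mathcal{T}^{(j)}$. Putting all these pieces together reproduces lines~\eqref{eq:kin+lead}--\eqref{eq:expressQ3Q4}.

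There is no genuine obstacle here: the proof is purely a substitution of one lemma into another, and the only thing to check carefully is that the numerical constants (the $\tfrac12$ factors in the $\mathcal{Q}_2^{\rm ren}$ terms, the $h.c.$ in $\mathcal{Q}_1^{\rm ren}$, and the normalization $\rho_0 = n_0/\ell^2$) are tracked consistently when moving between the two statements. Accordingly, the ``proof'' would be a one-line remark that the bound follows immediately from Lemmas~\ref{lem:SF_potsplit-bigbox} and~\ref{lem:Qrewriting}, with the inequality coming entirely from the positivity of $\mathcal{Q}_4^{\rm ren}$ and the lower bound on $\mathcal{Q}_2^{\rm ren}$.
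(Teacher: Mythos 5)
Your proposal is correct and matches the paper's (implicit) proof exactly: the corollary is obtained by substituting Lemma~\ref{lem:Qrewriting} into the exact decomposition of Lemma~\ref{lem:SF_potsplit-bigbox} and adding back the kinetic term. One small imprecision in your closing remark: since $\mathcal{Q}_4^{\text{ren}}$ is \emph{retained} on the right-hand side of \eqref{eq:expressQ3Q4} rather than discarded, its positivity is not actually invoked here—the only inequality in the corollary comes from the lower bound on $\mathcal{Q}_2^{\text{ren}}$.
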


In the lemma below we prove an estimate which is going to be useful in Section~\ref{sec:Q3loc} to localize the $\mathcal Q_3^{\text{ren}}$ term.

\begin{lemma}\label{lem:PQ'estimate}
Let $Q'$ be a possibly non self-adjoint operator on $L^2(\Lambda)$ such that $Q Q'= Q'$ and $\|Q'\| \leq 1$. Then for all $c \in (0,1)$ there is a $C >0$ such that, if $R \leq \ell$,
\begin{align*}
 \sum_{i \neq j} (P_i Q'_{j} w_1(x_i,x_j)&Q_i Q_j + h.c.) \geq   - \frac{1}{4} \mathcal Q_4^{\text{ren}} - \sum_{i \neq j} (P_i Q'_{j} w_1 \omega P_i P_j + h.c.) \\
 &- \delta n_0\Big( c K_{\ell}^{-2}  \frac{n_+}{\ell^2} +  C \frac{ K_{\ell}^{2}}{\ell^2}\sum_{j=1}^N Q_j' (Q_j')^{\dagger}  \Big).
\end{align*}
\end{lemma}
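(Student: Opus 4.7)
The plan is to perform Cauchy--Schwarz after extracting the algebraic structure of $\mathcal Q_4^{\text{ren}}$. First I would define $B_{ij} := Q_iQ_j + (P_iP_j + P_iQ_j + Q_iP_j)\omega$ so that, by \eqref{eq:SF_DefQ4}, $\mathcal Q_4^{\text{ren}} = \tfrac12 \sum_{i\neq j} B_{ij} w B_{ij}^*$. The crucial algebraic identity is
\[
Q_iQ_j \;=\; B_{ij}^* \;-\; \omega\bigl(P_iP_j + P_iQ_j + Q_iP_j\bigr),
\]
which rewrites the left-hand side as
\begin{align*}
\sum_{i\neq j}\bigl(P_iQ_j' w_1 Q_iQ_j + h.c.\bigr)
&= \sum_{i\neq j}\bigl(P_iQ_j' w_1 B_{ij}^* + h.c.\bigr) - \sum_{i\neq j}\bigl(P_iQ_j' w_1\omega\, P_iP_j + h.c.\bigr) \\
&\quad - \sum_{i\neq j}\bigl(P_iQ_j' w_1\omega\, P_iQ_j + h.c.\bigr) - \sum_{i\neq j}\bigl(P_iQ_j' w_1\omega\, Q_iP_j + h.c.\bigr).
\end{align*}
The second sum is exactly the one kept in the statement; it remains to bound the first, third and fourth sums from below.

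For the main sum, apply Cauchy--Schwarz with $A = P_iQ_j' w_1^{1/2}$, $B = w_1^{1/2}B_{ij}^*$, and parameter $\epsilon = 8$, giving
\[
P_iQ_j' w_1 B_{ij}^* + h.c. \;\geq\; -8\, P_iQ_j' w_1 (Q_j')^{\dagger} P_i - \tfrac{1}{8}\, B_{ij} w_1 B_{ij}^*.
\]
Since $w_1 = w(1-\omega) \leq w$ pointwise and $w\omega \geq 0$, summing yields $-\tfrac18 \sum_{i\neq j} B_{ij} w_1 B_{ij}^* \geq -\tfrac14 \mathcal Q_4^{\text{ren}}$. The other error is controlled by the key kernel estimate
\[
P_i w_1 P_i \;\leq\; \frac{\|W_1\|_{L^1}}{\ell^2}\, P_i \;\leq\; \frac{C\delta}{\ell^2}\, P_i,
\]
which holds because $P_i$ has rank one in the variable $x_i$ and $\|W_1\|_{L^1}\leq C\delta$ by Lemma~\ref{lem:gomegaapprox}. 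Combined with the observation that $P_iQ_i' = 0$ (forced by $Q_iQ_i' = Q_i'$ and $P_i + Q_i = 1$), this yields $\sum_{i\neq j} P_iQ_j' w_1 (Q_j')^{\dagger} P_i = \sum_{i\neq j} Q_j' P_i w_1 P_i (Q_j')^{\dagger} \leq (C\delta n_0/\ell^2) \sum_j Q_j'(Q_j')^{\dagger}$, which fits in the $\sum Q_j'(Q_j')^{\dagger}$ part of the stated error.

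The two cross sums are handled by the same Cauchy--Schwarz with free parameters $\epsilon_2, \epsilon_3 > 0$, e.g.
\[
P_iQ_j' w_1\omega\, P_iQ_j + h.c. \;\leq\; \epsilon_2\, P_iQ_j' w_1\omega (Q_j')^{\dagger} P_i + \epsilon_2^{-1}\, Q_jP_i w_1\omega P_iQ_j,
\]
and analogously for the $Q_iP_j$ piece. The estimate $P_i w_1\omega P_i \leq (C\delta/\ell^2) P_i$ (same proof, using $\omega\leq 1$) together with $\sum_{i\neq j} P_iQ_j = \sum_{i\neq j} Q_iP_j = n_0 n_+$ bounds the $\epsilon_2^{-1}$-pieces by $C\epsilon_2^{-1} \delta n_0 n_+/\ell^2$ and the $\epsilon_2$-pieces by $C\epsilon_2 (\delta n_0/\ell^2) \sum_j Q_j'(Q_j')^{\dagger}$. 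Choosing $\epsilon_2 = \epsilon_3 = C'K_\ell^2 / c$ with $C'$ depending on the universal $C$ above makes the $n_+/\ell^2$ error equal to $cK_\ell^{-2}\delta n_0 n_+/\ell^2$, at the cost of a coefficient $O(K_\ell^2)$ in front of $\delta n_0 \sum_j Q_j'(Q_j')^{\dagger}/\ell^2$, exactly as stated.

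The only non-routine step is spotting the decomposition $Q_iQ_j = B_{ij}^* - \omega(P_iP_j + P_iQ_j + Q_iP_j)$, which makes the natural Cauchy--Schwarz partner of $P_iQ_j' w_1$ precisely the building block of $\mathcal Q_4^{\text{ren}}$; everything else is an exercise in tuning two Cauchy--Schwarz parameters against the rank-one bound $P_i w_1 P_i \leq (C\delta/\ell^2)P_i$.
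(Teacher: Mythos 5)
Your proof is correct and takes essentially the same route as the paper's: the same decomposition of $Q_iQ_j$ into the $\mathcal Q_4^{\text{ren}}$ building block $B_{ij}^*$ minus $\omega(P_iP_j+P_iQ_j+Q_iP_j)$, the same Cauchy--Schwarz against $\mathcal Q_4^{\text{ren}}$ using $w_1\leq w$, and the same rank-one kernel estimate (your $P_i w_1 P_i \leq (C\delta/\ell^2)P_i$ is exactly the content of \eqref{eq:PiPiformula}--\eqref{eq:w1integralcalc} together with $\Vert W_1\Vert_1 \leq C\delta$). The only differences are cosmetic: you choose the first Cauchy--Schwarz weight $8$ explicitly rather than absorbing it into a constant, and you parameterize the second Cauchy--Schwarz with $\epsilon_2 \sim K_\ell^2/c$ where the paper uses its reciprocal $\varepsilon \sim cK_\ell^{-2}$; the remark that $P_iQ_i'=0$ is true but not actually needed since the diagonal is already excluded from the sum.
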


\begin{proof}
The idea is to reobtain the $Q_4$ term in the inequalities.
\begin{align}
 \sum_{i \neq j} (P_i Q'_{j} w_1 Q_i Q_j + h.c.)  &=   \sum_{i \neq j} P_i Q'_{j} w_1 \left[Q_i Q_j + \omega (P_iP_j +P_iQ_j + Q_i P_j) \right] + h.c. \nonumber\\
 &\quad -  \sum_{i \neq j} P_i Q'_{j} w_1\omega (P_iP_j +P_iQ_j + Q_i P_j) + h.c. \label{eq:Q'estim2}
\end{align}
We use Cauchy-Schwarz inequality on both the terms on the right-hand side. The first line of \eqref{eq:Q'estim2}, using that $w_1 \leq w$, is controlled by
\begin{align*}
C \sum_{i \neq j} P_i Q'_{j} w_1 (P_i Q'_{j})^{\dagger} + \frac{1}{4} \mathcal Q_4^{\text{ren}}
&= C\frac{n_0}{\ell^2} \sum_{j=1}^N Q'_{j}  \chi_{\Lambda}(x_j)(W_1 * \chi_{\Lambda})(x_j) (Q'_{j})^{\dagger} + \frac{1}{4} \mathcal Q_4^{\text{ren}} \\
&\leq C \frac{n_0}{\ell^2} \|\chi_{\Lambda}\|_{\infty}^2 \delta \sum_{j=1}^N Q_j'(Q_j')^{\dagger} + \frac{1}{4} \mathcal Q_4^{\text{ren}},
\end{align*}
where we used \eqref{eq:PiPiformula}, \eqref{eq:w1integralcalc}, the bound $\|W_1\|_{L^1} \leq C \delta( 1 + R^2 \ell^{-2} )$ and $R \leq \ell$.
For the second line of \eqref{eq:Q'estim2} we keep the $PP$ contribution and treat the other terms separately. 
They can be estimated as above. For instance,
\begin{align}
\sum_{i\neq j} ( P_i Q'_j w_1 \omega P_i Q_j  + h.c.) &\leq
\varepsilon^{-1}\sum_{i\neq j} P_i Q'_j w_1 \omega (P_i Q'_j)^\dagger + \varepsilon \sum_{i \neq j } P_i Q_j w_1 \omega P_i Q_j  \nonumber \\
&\leq C\delta \frac{n_0}{\ell^2} \Big( \varepsilon^{-1}\sum_{j=1}^N Q_j'  (Q_j')^{\dagger} + \varepsilon  n_+\Big), 
\end{align}
where we used the Cauchy-Schwarz inequality with weight $\varepsilon >0$. Choosing $\varepsilon = cC^{-1} K_{\ell}^{-2}$ with $c \in (0,1)$, we get
\begin{equation}
\sum_{i\neq j} P_i Q'_j w_1 \omega P_i Q_j \leq c^{-1} C^2\delta\frac{n_0 d^2 K_{\ell}^{2}}{(d\ell)^2} \sum_{j=1}^N Q_j'  (Q_j')^{\dagger} + c \delta n_0 K_{\ell}^{-2}\frac{n_+}{\ell^2}, 
\end{equation}
and the Lemma follows.
\end{proof}

\subsection{Localization of $3Q$ term}\label{sec:Q3loc}

In this section we show how we can restrict the action of one of the $Q$ projectors in the $\mathcal Q^{\text{ren}}_3$ term to low momenta. More precisely we define the following two sets of low and high momenta respectively,
\begin{equation}\label{eq:defPL_PH}
\mathcal{P}_L :=\{p \in \mathbb{R}^2\,|\; |p| \leq d^{-2}\ell^{-1}\}, \qquad \mathcal{P}_H :=\{p \in \mathbb{R}^2\,|\; |p| \geq K_H \ell^{-1}\}. 
\end{equation}
We choose the parameters $d$ and $K_H$ satisfying \eqref{eq:rel_d_k_H} so that the two sets are disjoint. 
We will localize the $Q$ projector using the following cutoff function,
\begin{equation}
f_L(r) := f (d^2 \ell r ), \qquad f(r) := \begin{cases}
1, &\text{if} \quad r \leq 1,\\
0, &\text{if} \quad r \geq 2,
\end{cases} 
\end{equation}
where $f \in C^{\infty}(\mathbb{R})$ is a non-increasing function. The localized projectors are
\begin{equation}
Q_L := Q f_L(\sqrt{-\Delta}), \qquad \overline{Q}_L := Q - Q_L, 
\end{equation}
and the localized version of $\mathcal Q_3^{\text{ren}}$ is
\begin{equation}\label{eq:defQ3low}
\mathcal Q^{\text{low}}_3 := \sum_{i \neq j} (P_i Q_{L,j} w_1(x_i,x_j)Q_i Q_j + h.c.).
\end{equation}
The number of high excitations, namely the number of bosons outside from the condensate and with momenta not in $\mathcal{P}_L$ is
\begin{equation}
n_+^H := \sum_{j=1}^N Q_j \, \one_{(d^{-2}\ell^{-1} , \infty)}(\sqrt{-\Delta_j}) Q_j.
\end{equation}
It is easy to see that 
\begin{equation}\label{eq:QnHestimate}
\sum_{j=1}^N \overline{Q}_{L,j}\overline{Q}_{L,j}^{\dagger} \leq n_+^H.
\end{equation}
The next lemma shows how the $\mathcal Q_3^{\text{ren}}$ term, with a small contribution from $ \mathcal Q_4^{\text{ren}}$ and the spectral gap from the kinetic energy (see \eqref{eq:Texplicitexpress}), can be approximated by $\mathcal Q_3^{\text{low}}$.

\begin{lemma}\label{lem:Q3loc}
Assume $R \leq \ell$ and the relation \eqref{eq:rel_locQ3low2} between the parameters. Then there exists $C>0$ such that, for any $n$-particles state $\Psi \in \mathscr{F}_s(L^2(\Lambda))$ with $n \leq 2 \rho_\mu \ell^2$,
\begin{equation*}
\langle \mathcal Q_3^{\text{ren}}\rangle_\Psi  + \frac{1}{4} \langle \mathcal Q_4^{\text{ren}} \rangle_\Psi + \frac{b}{100}\Big( \frac{\langle n_+ \rangle_\Psi}{\ell^2} + \varepsilon_T\frac{\langle n_+^H \rangle_\Psi}{(d\ell)^2}\Big) \geq \langle \mathcal Q_3^{\text{low}}\rangle_\Psi - C \delta \frac{n^2}{\ell^2} (d^{2M-2} + R^2 \ell^{-2}).
\end{equation*}
\end{lemma}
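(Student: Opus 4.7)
The strategy is to split $\mathcal Q_3^{\text{ren}} = \mathcal Q_3^{\text{low}} + \mathcal Q_3^{\text{high}}$ using the momentum decomposition $Q_j = Q_{L,j} + \overline Q_{L,j}$, and then to bound the ``high'' part
\begin{equation*}
\mathcal Q_3^{\text{high}} = \sum_{i\neq j}\bigl(P_i \overline Q_{L,j}\, w_1(x_i,x_j)\, Q_i Q_j + h.c.\bigr)
\end{equation*}
from below by invoking Lemma~\ref{lem:PQ'estimate} with the choice $Q' = \overline Q_L$. The assumptions of that lemma are easily verified: $Q\overline Q_L = Q(Q - Q f_L(\sqrt{-\Delta})) = \overline Q_L$, and $\|\overline Q_L\| \le 1$ since $0\le f_L \le 1$. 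The output is
\begin{equation*}
\mathcal Q_3^{\text{high}} \geq -\tfrac{1}{4}\mathcal Q_4^{\text{ren}} - \sum_{i\ne j}\bigl(P_i \overline Q_{L,j}\, w_1\omega\, P_i P_j + h.c.\bigr) - \delta n_0\Bigl(cK_\ell^{-2}\tfrac{n_+}{\ell^2} + C\tfrac{K_\ell^2}{\ell^2}\sum_j \overline Q_{L,j}(\overline Q_{L,j})^\dagger\Bigr).
\end{equation*}

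The $n_+$ error is handled by taking $c$ small: since $n_0 \le n \le 2\rho_\mu\ell^2$, $\delta\le 2Y$, and $\rho_\mu\ell^2 Y = K_\ell^2$, one sees $c\delta n_0 K_\ell^{-2} \le 4c$, which is $\le b/100$ for $c$ small. For the last summand, the inequality \eqref{eq:QnHestimate} gives $\sum_j \overline Q_{L,j}(\overline Q_{L,j})^\dagger \le n_+^H$, so that it becomes $\le C\delta n_0 K_\ell^2 \ell^{-2}\, n_+^H$; the parameter relation \eqref{eq:rel_locQ3low2} in Appendix~\ref{app:parameters} is precisely tailored so that this is bounded by $b\varepsilon_T (d\ell)^{-2} n_+^H /100$.

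The main technical point is the residual $PP$ term $\sum_{i\neq j}(P_i\overline Q_{L,j} w_1\omega P_i P_j + h.c.)$. Because $P_i = \ell^{-2}|1\rangle\langle 1|_{x_i}$, one computes
\begin{equation*}
P_i \,w_1(x_i,x_j)\omega(x_i-x_j)\, P_i = \ell^{-2}\,\phi(x_j)\, P_i,\qquad \phi(x_j) := \chi_\Lambda(x_j)\bigl(\chi_\Lambda * (W_1\omega)\bigr)(x_j),
\end{equation*}
so the expectation of the $PP$ term in an $n$-particle state reduces, up to combinatorial factors of size $n\cdot n_0$, to $n n_0 \ell^{-2}\langle \overline Q_{L,j} \phi(x_j) P_j\rangle_\Psi$. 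Since $\chi$ is $C^M$, so is $\phi$, and $\|\phi\|_\infty$ is controlled by $\widehat{(W_1\omega)}(0) \lesssim \delta$ (using Lemmas~\ref{eq:gdecay} and~\ref{lem:gomegaapprox}, which is also what generates the $R^2/\ell^2$ contribution in the error). Integration by parts gives $|\widehat\phi(k)| \lesssim (\ell|k|)^{-M}\, \|\phi\|_{H^M}\ell^2$, and restricting to $|k|\ge d^{-2}\ell^{-1}$ in Plancherel yields $\|\overline Q_L \phi\|_{L^2}^2 \lesssim \delta^2 \ell^2 d^{2(2M-2)}$. Combining with $\|P_j \Psi\| \le \|\Psi\|$ and the bound $n n_0 \le Cn^2$ produces the advertised error $C\delta n^2 \ell^{-2}(d^{2M-2} + R^2\ell^{-2})$.

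The main obstacle is the sharp treatment of the $PP$ term: one must use the regularity $M$ of $\chi$ to its full extent in the Fourier decay of $\phi$, and carefully balance the loss of a factor $d^{-2}$ coming from the momentum cut-off $d^{-2}\ell^{-1}$ so as to obtain the power $d^{2M-2}$ actually needed in the statement (rather than the naive $d^{2M}$). All remaining bookkeeping is straightforward once Lemma~\ref{lem:PQ'estimate} and the parameter relations in Appendix~\ref{app:parameters} are in place.
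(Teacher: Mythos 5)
Your overall structure matches the paper: split off $\mathcal Q_3^{\text{ren}}-\mathcal Q_3^{\text{low}}$, invoke Lemma~\ref{lem:PQ'estimate} with $Q'=\overline Q_L$, use \eqref{eq:QnHestimate} and the identity $\rho_\mu\ell^2 Y = K_\ell^2$ together with \eqref{eq:rel_locQ3low2} to absorb the $n_+$ and $n_+^H$ pieces into the spectral gaps, and then control the residual $PP$ term. Where you genuinely diverge from the paper is in that last step. The paper splits $W_1\omega*\chi_\Lambda = \|W_1\omega\|_{L^1}\chi_\Lambda + \varepsilon$ pointwise (Lemma~\ref{lem:gomegaapprox}/\eqref{eq:fchi_approx} gives $\|\varepsilon\|_\infty\lesssim\delta R^2/\ell^2$, hence the $R^2\ell^{-2}$ in the error), and then handles the $\chi_\Lambda^2$ piece by a weighted Cauchy--Schwarz that inserts $D_M=(\ell^{-2}-\Delta_j)^{\widetilde M}$ with $2\widetilde M\approx M$, exploiting that $\overline Q_L$ is supported at $|k|\geq d^{-2}\ell^{-1}$ to gain $d^{8\widetilde M}$ and then optimizing the weight $\varepsilon_0$. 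You instead go directly to Plancherel: $|\widehat\phi(k)|\lesssim(\ell|k|)^{-M}\ell^2\delta$ from the $M$ bounded derivatives of $\chi$, hence $\|\overline Q_L\phi\|\lesssim\delta\ell d^{2M-2}$ after integrating over $|k|\geq d^{-2}\ell^{-1}$. Both routes land on the same power $d^{2M-2}$; your Fourier argument is slightly more direct and in fact does not need the $R^2\ell^{-2}$ piece at all (the lemma's error $(d^{2M-2}+R^2\ell^{-2})$ is simply an upper bound), so your parenthetical claim that Lemma~\ref{lem:gomegaapprox} ``generates the $R^2/\ell^2$ contribution'' does not reflect your own argument. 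Two minor points of precision you should tighten if writing this up: the expectation of the $PP$ term is $\frac{n_0}{\ell^2}\sum_j\langle\overline Q_{L,j}\phi(x_j)P_j\rangle_\Psi + h.c.$ (the second factor of $n$ comes from the Cauchy--Schwarz over the sum in $j$ combined with $\sum_j\|P_j\Psi\|^2=\langle n_0\rangle_\Psi$, not from ``multiplying by $n$''); and the step extracting $\|\overline Q_{L,j}\phi(x_j)P_j\Psi\|=\ell^{-1}\|\overline Q_L\phi\|_{L^2}\|P_j\Psi\|$ deserves a sentence, since $\overline Q_{L,j}$ is not self-adjoint.
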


\begin{proof}
By definition
\begin{equation}
\mathcal Q_3^{\text{ren}} - \mathcal Q_3^{\text{low}} = \sum_{i \neq j} (P_i \overline{Q}_{L,j} w_1(x_i,x_j)Q_i Q_j + h.c.).
\end{equation}
We use now Lemma~\ref{lem:PQ'estimate} with $Q' = \overline{Q}_L$ and the estimate \eqref{eq:QnHestimate} to get 
\begin{align}
\mathcal Q_3^{\text{ren}} - \mathcal Q_3^{\text{low}} &\geq
 -\frac{1}{4} \mathcal Q_4^{\text{ren}} - \sum_{i \neq j} (P_i \overline{Q}_{L,j} w_1 \omega P_i P_j + h.c.)
 \nonumber \\
&\quad- \delta n_0\Big( c K_{\ell}^{-2}  \frac{n_+}{\ell^2} +   C \frac{d^2 K_{\ell}^{2}}{(d\ell)^2}n_+^H \Big). \label{eq:partialerrorQ3low}
\end{align}
By \eqref{eq:PiPiformula} we have 
\begin{align}
\sum_{i \neq j} &(P_i \overline{Q}_{L,j} w_1 \omega P_i P_j + h.c.)
\nonumber \\ &   
\leq \frac{n_0}{\ell^2} \bigg(\sum_{j=1}^N \overline{Q}_{L,j} \chi_{\Lambda}(x_j) \big( \Vert W_1 \omega \Vert_{L^1}  \chi_\Lambda(x_j) + \varepsilon(x_j) \big) P_j + h.c. \bigg) \label{ineq:PQPP}
\end{align} 
with $\varepsilon(x_j) = W_1 \omega * \chi_\Lambda (x_j) - \Vert W_1 \omega \Vert_{L^1} \chi_\Lambda(x_j)$. The $\varepsilon(x_j)$-term we can bound using Cauchy-Schwarz and \eqref{eq:fchi_approx},
\begin{align}
 \frac{n_0}{\ell^2} \bigg(\sum_{j=1}^N \overline{Q}_{L,j} \chi_{\Lambda}(x_j) \varepsilon(x_j)  P_j + h.c. \bigg) &\leq C \frac{n_0}{\ell^2} \sum_{j=1}^N ( \overline{Q}_{L,j} \chi_\Lambda \varepsilon \overline{Q}_{L,j}^\dagger + P_j \chi_\Lambda \varepsilon P_j ) \nonumber \\
 &\leq C \frac{n_0 R^2}{\ell^4} \delta (n_+^H + n_0). \label{eq:090822.2}
\end{align}
For the other term we take $M-1 \leq 2\widetilde{M} \leq M $ and using the notation $D_M := (\ell^{-2}-\Delta_j)^{\widetilde{M}}$, we write
\begin{equation}
 \overline{Q}_{L,j} \chi_{\Lambda}(x_j)^2 P_j + h.c. =  \overline{Q}_{L,j} D_M^{-1} [D_M\chi_{\Lambda}(x_j)^2] P_j + h.c.
\end{equation}
Therefore, by Cauchy-Schwarz inequality with weight $\varepsilon_0 >0$,
\begin{align*}
\overline{Q}_{L,j} \chi_{\Lambda}(x_j)^2 P_j + h.c. \leq \varepsilon_0 P_j + \varepsilon_0^{-1} \|D_M \chi_{\Lambda}^2\|^2_{\infty}  \overline{Q}_{L,j}  D_M^{-2} (\overline{Q}_{L,j})^{\dagger}.
\end{align*}
Now using that $\|D_M \chi^2_{\Lambda}\| \leq C \ell^{-2\widetilde{M}}$ and that $\overline{Q}_{L}$ cut momenta lower than $d^{-2}\ell^{-1}$ we obtain
\begin{align}
 \overline{Q}_{L,j} \chi_{\Lambda}(x_j)^2 P_j + h.c. &\leq \varepsilon_0 P_j + \varepsilon^{-1}_0 C  \ell^{-4\widetilde{M}} \overline{Q}_{L,j}(\ell^{-2}-\Delta_j)^{-2\widetilde{M}}  (\overline{Q}_{L,j})^{\dagger}  \nonumber\\
&\leq  \varepsilon_0 P_j + \varepsilon_0^{-1} C  d^{8 \widetilde{M}}  \overline{Q}_{L,j} (\overline{Q}_{L,j})^{\dagger}. 
\end{align} 
Therefore choosing $\varepsilon_0 =  d^{4 \widetilde{M}}$, we have
\begin{align}\label{eq:090822.3}
 \frac{n_0}{\ell^2} \bigg(\sum_{j=1}^N \overline{Q}_{L,j} \chi_{\Lambda}(x_j)^2 \Vert W_1 \omega \Vert_{L^1}   P_j + h.c. \bigg) \leq C \delta d^{2M-2} \frac{n_0 }{\ell^2}(n_+^H + n_0).
\end{align}
Inserting \eqref{eq:090822.2} and \eqref{eq:090822.3} into \eqref{ineq:PQPP} we find
\begin{equation}
\sum_{i \neq j} (P_i \overline{Q}_{L,j} w_1 \omega P_i P_j + h.c.) \leq C \delta \frac{n_0}{\ell^2} ( n_+^H + n_0) (d^{2M-2} + R^2 \ell^{-2}).
\end{equation}
We use this last bound in \eqref{eq:partialerrorQ3low} and apply it to the state $\Psi$,
\begin{align}
\langle &\mathcal Q_3^{\text{ren}}\rangle_\Psi - \langle \mathcal Q_3^{\text{low}} \rangle_\Psi \nonumber \\
&\geq
 -\frac{1}{4} \langle \mathcal Q_4^{\text{ren}} \rangle_\Psi - C \delta \frac{n}{\ell^2} ( \langle n_+^H \rangle_\Psi + n ) (d^{2M-2} + R^2 \ell^{-2}) 
 - c \delta n K_\ell^{-2} \frac{\langle n_+ \rangle_\Psi}{\ell^2} - C \delta n d^2 K_\ell^2 \frac{\langle n_+^H \rangle_\Psi}{(d\ell)^2} \nonumber \\
 & \geq  -\frac{1}{4} \langle \mathcal Q_4^{\text{ren}} \rangle_\Psi -  C \delta \frac{n^2}{\ell^2} (d^{2M-2} + R^2 \ell^{-2}) 
  - c \frac{\langle n_+ \rangle_\Psi}{\ell^2} -  C d^2 K_\ell^4 \frac{\langle n_+^H \rangle_\Psi}{(d\ell)^2},
\end{align}
where we used $n \leq 2 \rho_\mu \ell^2$ and $\ell^2 = K_\ell^2 \rho_\mu^{-1} Y^{-1}$. We conclude by choosing $c = \frac{b}{100}$ and using the relation \eqref{eq:rel_locQ3low2} between the parameters.
\end{proof}

\subsection{A priori bounds and localization of the number of excitations}\label{sec:apriori}
The purpose of this section is to get bounds on the number of excitations of the system. First of all, Theorem~\ref{thm:excitationsbound} gives a priori bounds on $n_+$. 
\begin{theorem}\label{thm:excitationsbound}
There exists a universal constant $C >0$ such that, if $\Psi \in \mathscr{F}_s(L^2(\Lambda))$ is a normalized $n-$bosons vector which satisfies 
\begin{equation}\label{eq:condensationaprioricondition}
\langle \mathcal{H}_{\Lambda}(\rho_{\mu})\rangle_{\Psi} \leq -4 \pi \rho_{\mu}^2 \ell^2 Y \left( 1 - C K_B^2 Y \vert \log Y \vert \right),
\end{equation}
with $K_B \gg 1$, and the other included parameters fixed in Appendix~\ref{app:parameters},
then
\begin{align}
\langle n_+\rangle_{\Psi} &\leq C K_B^2 K_{\ell}^2 \rho_{\mu} \ell^2   Y |\log Y|, \label{eq:condensationestimate}\\
\langle \mathcal Q_4^{\text{ren}}\rangle_{\Psi} &\leq C K_B^2 K_{\ell}^2 \rho_{\mu}^2 \ell^2 Y^2 |\log Y|,  \label{eq:estimateprioriQ4}\\
\left| \rho_{\mu} - \frac{n}{\ell^2}\right| &\leq C  K_B K_{\ell}\rho_{\mu} Y^{1/2} |\log Y|^{1/2}.\label{eq:estimatepriori.n}
\end{align}
\end{theorem}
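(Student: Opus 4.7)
The strategy is to combine the decomposition of $\mathcal H_\Lambda(\rho_\mu)$ from Corollary \ref{cor:H'expression} with a secondary sliding localization to small boxes of side $d\ell$, where the Neumann gap is actually large enough to drive Bose--Einstein condensation. Starting from Corollary \ref{cor:H'expression}, I would isolate the positive quartic $\mathcal Q_4^{\mathrm{ren}}$, the kinetic energy $\sum_j \mathcal T^{(j)}$ together with its spectral gaps \eqref{eq:SF_DefTuLarge}, and the diagonal quadratic in $Q$ on line \eqref{eq:expressQ2char}. The $c$-number condensate energy
\[
E_{\mathrm{cond}}(n_0) := \frac{n_0(n_0-1)}{2\ell^2}\bigl(\widehat g_0 + \widehat{g\omega}(0)\bigr) - \rho_\mu n_0 \widehat g_0
\]
is a convex quadratic in $n_0$ minimized near $n_0 = \rho_\mu \ell^2$ with minimum $\simeq -4\pi \rho_\mu^2 \ell^2 \delta$, which matches the threshold in \eqref{eq:condensationaprioricondition} to leading order; crucially, deviations satisfy $E_{\mathrm{cond}}(n_0) - E_{\mathrm{cond,min}} \gtrsim \delta (n_0 - \rho_\mu\ell^2)^2/\ell^2$.

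The off-diagonal pieces on lines \eqref{eq:expresschar1}--\eqref{eq:expressQ2}, together with $\mathcal Q_3^{\mathrm{ren}}$ once reduced to its low-momentum version $\mathcal Q_3^{\mathrm{low}}$ via Lemma \ref{lem:Q3loc}, are then absorbed via Cauchy--Schwarz into $\tfrac12 \mathcal Q_4^{\mathrm{ren}}$, a fraction of the kinetic energy, and the quadratic deviation term, using $\Vert v\Vert_1 \leq Y^{-1/8}$ and the parameter relations in Appendix \ref{app:parameters}. This yields a lower bound of the schematic form
\[
\mathcal H_\Lambda(\rho_\mu) \,\geq\, E_{\mathrm{cond}}(n_0) + c\,\mathcal Q_4^{\mathrm{ren}} + c\sum_j \mathcal T^{(j)} - C\rho_\mu^2 \ell^2 Y^{2+\eta}.
\]
Taking expectation in $\Psi$ and subtracting the hypothesis \eqref{eq:condensationaprioricondition} bounds $\langle \mathcal Q_4^{\mathrm{ren}}\rangle_\Psi$, the kinetic expectation, and $|n_0/\ell^2 - \rho_\mu|^2$ by a constant times $K_B^2 \rho_\mu^2 \ell^2 Y^2|\log Y|$. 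This gives \eqref{eq:estimateprioriQ4} at once and, once \eqref{eq:condensationestimate} is established, \eqref{eq:estimatepriori.n} follows from the triangle inequality $|n/\ell^2 - \rho_\mu| \leq |n_0/\ell^2 - \rho_\mu| + \langle n_+\rangle_\Psi/\ell^2$.

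The main difficulty is \eqref{eq:condensationestimate}. The only positive piece of $\sum_j\mathcal T^{(j)}$ that directly controls $n_+$ is $\mathcal T^{\mathrm{Neu,l}} = (b/\ell^2) Q$, whose gap $\ell^{-2} = K_\ell^{-2} \rho_\mu Y$ is a factor $K_\ell^2$ too small to match the required bound once the constants accumulated in the Cauchy--Schwarz steps are taken into account. To remedy this I would localize $\mathcal H_\Lambda(\rho_\mu)$ further by the same sliding technique as in Lemma \ref{lem:LocKinEnLarge}, but now to small boxes $B$ of side $d\ell$, where the Neumann Laplacian has gap $(d\ell)^{-2}$, a factor $d^{-2}$ larger. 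Repeating the previous lower bound at the small-box level, the kinetic terms $\mathcal T^{\mathrm{Neu,s}}$ and $\mathcal T^{\mathrm{gap}}$ control the number of local high-momentum excitations by $C K_B^2 \rho_\mu (d\ell)^2 Y|\log Y|$; integrating the resulting local BEC over the sliding parameter and summing the $d^{-2}$ tilings of $\Lambda$ recovers \eqref{eq:condensationestimate}. This small-box BEC, which requires running the entire first-step lower bound at a sharper scale while carefully tracking the positive constants on $\mathcal Q_4^{\mathrm{ren}}$ and on the kinetic terms through every Cauchy--Schwarz estimate, is the technical heart of Appendix \ref{app:low-smallbox}.
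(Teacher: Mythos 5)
Your proposal correctly identifies the key idea behind Appendix~\ref{app:low-smallbox}---a second sliding localization to small boxes of side $d\ell\ll\ell_\delta$, where the Neumann gap $(d\ell)^{-2}$ dominates the interaction scale $\rho_\mu\widehat g_0$---but the logical ordering and the mechanism you give for \eqref{eq:condensationestimate} are both off.

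On the ordering: you propose to first obtain \eqref{eq:estimateprioriQ4} from the large-box decomposition of Corollary~\ref{cor:H'expression}, absorbing the off-diagonal terms and $\mathcal Q_3^{\rm low}$ by Cauchy--Schwarz into the kinetic energy and $\mathcal Q_4^{\rm ren}$, treating \eqref{eq:condensationestimate} as a separate hard case. But Lemma~\ref{lem:Q3loc}, which you invoke, is stated only for $n\leq 2\rho_\mu\ell^2$, i.e.\ it already presupposes \eqref{eq:estimatepriori.n}, and more fundamentally the Cauchy--Schwarz absorptions at the large-box scale produce errors with coefficient of order $\rho_\mu\delta$ multiplying $\langle n_+\rangle_\Psi$, whereas the available large-box Neumann gap $b/\ell^2 = b\rho_\mu Y/K_\ell^2$ is a factor $K_\ell^2\gg 1$ smaller than $\rho_\mu\delta$. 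Without an a priori bound on $\langle n_+\rangle_\Psi$ these errors cannot be controlled, so the proposed first step is circular. In the paper, \eqref{eq:estimateprioriQ4} and \eqref{eq:estimatepriori.n} are the \emph{last} step, obtained from the large-box inequality \eqref{eq:Q_4largelowerbound} only after the $n_+$-bound is already known.

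On the mechanism: summing bounds on local excitation numbers $n_{B_u,+}$ over the small boxes does not yield a bound on $n_+$. The global $Q$ projects onto the orthogonal complement of constants on $\Lambda$, while $Q_{B_u}$ projects onto the complement of constants on $B_u$; a function that is constant on each small box but takes different values across boxes satisfies $n_{B_u,+}=0$ for every $u$ yet has $n_+$ of order $n$. What the paper actually does, in \eqref{eq:hamiltloclargesmallborder}, is retain the large-box Neumann gap $(1-\varepsilon_N)\,\tfrac{b}{\ell^2}\sum_j Q_{\Lambda,j} = (1-\varepsilon_N)\,\tfrac{b}{\ell^2}\,n_+$ from the large-box localization of Lemma~\ref{lem:LocKinEnLarge}, and uses the small boxes solely to prove that $\int \mathcal H_{B_u}(\rho_\mu)\,\dd u$ is bounded below by $-4\pi\rho_\mu^2\ell^2 Y(1-\tfrac12 Y|\log Y|)$ (Lemmas~\ref{lem:potentialsmalllowerbound}--\ref{lem:kinsmallboxlower}, Corollary~\ref{cor:smalllowerbound}); this is where the dominating small-box Neumann gap absorbs the Bogoliubov and quartic errors. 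The resulting unconditional bound \eqref{eq:roughlowerboundlarge} then gives \eqref{eq:condensationestimate} by subtracting the hypothesis \eqref{eq:condensationaprioricondition}.
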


\begin{proof}
It is proved in Appendix~\ref{app:low-smallbox}, using a second localization to "small boxes" of size $\ll \ell_\delta$.
\end{proof}

We also need to bound the number of low excitations, defined in terms of our modified kinetic energy $\mathcal T$. More precisely we define, for a certain $\widetilde{K}_H \gg 1$ fixed in Appendix~\ref{app:parameters}, the projectors
\begin{equation}\label{def:QHbar}
\overline{Q}_H = \one_{(0, \widetilde{K}_H^2 \ell^{-2})} (\mathcal{T}), \qquad Q_H = Q- \overline{Q}_H,
\end{equation}
which satisfy
\begin{equation}
P + \overline{Q}_H + Q_H = 1_{\Lambda}.
\end{equation}
We will consider the operators
\begin{equation}
n_+^L := \sum_j \overline{Q}_{H,j}, \qquad \widetilde{n}_+^H := \sum_{j} Q_{H,j},
\end{equation}
for which we prove the following result.
\begin{theorem}[Restriction on $n_+^L$]\label{thm:excitationrestriction}
There exist $C$, $\eta > 0$ such that the following holds. Let $\Psi \in \mathscr{F}_s(L^2(\Lambda))$ be a normalized $n$-particle vector which satisfies \eqref{eq:condensationaprioricondition}. Assume that the potential $v$ is such that $\Vert v \Vert_1 \leq Y^{-1/8}$. Then, for $\mathcal{M} \gg 1$ satisfying condition \eqref{eq:M_large_matrices} there exists a sequence $\{\Psi^{m}\}_{m \in \mathbb{Z}} \subseteq \mathscr{F}_s(L^2(\Lambda))$ such that $\sum_m \Vert \Psi^m \Vert^2 = 1$ and
\begin{equation}\label{eq:large_matrices_restrictioncondition}
\Psi^{m} = \one_{[0,\frac{\mathcal{M}}{2} +m]}(n_+^L) \Psi^{m},
\end{equation}
and such that the following lower bound holds true
\begin{align*}
\langle \Psi, \mathcal{H}_{\Lambda}(\rho_{\mu}) \Psi \rangle  \geq&  \sum_{2 |m|\leq \mathcal{M} }\langle \Psi^{m}, \mathcal{H}_{\Lambda}(\rho_{\mu}) \Psi^m\rangle -C \rho_{\mu}^2 \ell^2 Y^{2+\eta} \\
&  -4 \pi \rho_{\mu}^2 \ell^2 Y \Big( 1 - C K_B^2 Y \vert \log Y \vert \Big)\sum_{2|m| >\mathcal{M}} \|\Psi^m\|^2.  
\end{align*}
\end{theorem}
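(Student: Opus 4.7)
The plan is to carry out an IMS-type localization on the spectrum of $n_+^L$. Fix a non-negative $\varphi \in C^2_c(\mathbb{R})$ supported in $[-\mathcal{M}/2, \mathcal{M}/2]$, with $\sum_{m \in \mathbb{Z}}\varphi(\cdot - m)^2 \equiv 1$ and $\Vert \varphi''\Vert_\infty \leq C\mathcal{M}^{-2}$. Set $\varphi_m := \varphi(\cdot - m)$ and $\Psi^m := \varphi_m(n_+^L)\Psi$. The partition-of-unity identity gives $\sum_m \Vert \Psi^m\Vert^2 = 1$, and the inclusion $\supp \varphi_m \cap [0,\infty) \subseteq [0, m+\mathcal{M}/2]$ yields \eqref{eq:large_matrices_restrictioncondition} (in particular, $\Psi^m = 0$ for $m < -\mathcal{M}/2$).

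The standard IMS-type identity gives
\[
\langle \Psi, \mathcal H_\Lambda \Psi\rangle \,=\, \sum_m \langle \Psi^m, \mathcal H_\Lambda \Psi^m\rangle \,-\, \tfrac12 \sum_m \langle \Psi, [\varphi_m(n_+^L), [\mathcal H_\Lambda, \varphi_m(n_+^L)]] \Psi\rangle,
\]
and the task is to bound the double-commutator remainder by $C\rho_\mu^2 \ell^2 Y^{2+\eta}$. The kinetic part $\sum_j \mathcal T^{(j)}$ commutes with every function of $n_+^L$ by the spectral theorem (since $n_+^L$ is built from the spectral projectors of $\mathcal T$, see \eqref{def:QHbar}); so does $\mathcal Q_0^{\text{ren}}$, which depends only on $n_0 = n - n_+$. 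Only the off-diagonal interaction terms $\mathcal Q_k^{\text{ren}}$ for $k=1,\dots,4$ contribute. Each such term involves at most $k$ swaps $P \leftrightarrow Q$ and therefore shifts $n_+^L$ by an integer of modulus at most $k$ (a $P \to Q$ swap raises $n_+^L$ by $0$ or $1$ according as the resulting $Q$ lies in $Q_H$ or $\overline{Q}_H$). A second-order Taylor expansion of $\varphi_m$, combined with the partition-of-unity identity, then yields
\[
\Bigl\Vert \sum_m [\varphi_m(n_+^L), [\mathcal Q_k^{\text{ren}}, \varphi_m(n_+^L)]]\Bigr\Vert \,\leq\, \frac{C k^2}{\mathcal{M}}\, \Vert \mathcal Q_k^{\text{ren}}\Vert_{\mathrm{eff}},
\]
where $\Vert \cdot \Vert_{\mathrm{eff}}$ denotes the operator norm restricted to the subspace $\{n \leq 2\rho_\mu \ell^2\}$ singled out by the a priori bound \eqref{eq:estimatepriori.n}.

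The main obstacle is the quantitative control of $\Vert \mathcal Q_k^{\text{ren}}\Vert_{\mathrm{eff}}$. Combining the assumption $\Vert v\Vert_1 \leq Y^{-1/8}$ with $\Vert W_1\Vert_{L^1} \leq C\delta$ from Lemma~\ref{lem:gomegaapprox} and the excitation bounds of Theorem~\ref{thm:excitationsbound}, standard second-quantization estimates on the $\mathcal Q_k^{\text{ren}}$'s give $\Vert \mathcal Q_k^{\text{ren}}\Vert_{\mathrm{eff}} \leq C\rho_\mu^2 \ell^2 Y^{-1/8}$ up to logarithmic factors; this is also where the exponent $\frac{1}{8}$ in the $L^1$-assumption enters, as explained in the footnote of Theorem~\ref{thm:grancanonicaltherm}. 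The condition \eqref{eq:M_large_matrices} on $\mathcal{M}$ is precisely the one ensuring that $\mathcal M^{-1} Y^{-1/8} \leq Y^{2+\eta}$ for some $\eta > 0$, and hence that the total commutator error is at most $C\rho_\mu^2 \ell^2 Y^{2+\eta}$. Because these per-term norm estimates are rather technical and involve careful book-keeping of the various shift structures appearing in $\mathcal Q_1^{\text{ren}}, \mathcal Q_2^{\text{ren}}, \mathcal Q_3^{\text{ren}}$ and $\mathcal Q_4^{\text{ren}}$, the detailed computations are delegated to Appendix~\ref{app:proofd1d2}, as announced at item 5 of the overview in Section~\ref{sec:apriori}.

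Finally, for the tail contribution $2|m| > \mathcal{M}$, I will invoke the rough operator lower bound $\mathcal H_\Lambda \geq -4\pi\rho_\mu^2 \ell^2 Y(1 - CK_B^2 Y|\log Y|)$, which is obtained as a by-product of the Bose-Einstein condensation analysis on small boxes in Appendix~\ref{app:low-smallbox} (and which is consistent with the working hypothesis \eqref{eq:condensationaprioricondition}). Applied to each $\Psi^m$ in the tail and summed, it yields precisely the last contribution of the stated inequality.
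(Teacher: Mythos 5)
Your high-level strategy — IMS-type localization in the spectrum of $n_+^L$ together with a rough operator lower bound for the tail $2|m| > \mathcal M$ — is the same as the paper's (Lemma~\ref{lem:localization_largeMat} plus Theorem~\ref{thm:excitationsbound}). The tail part of your argument is essentially correct. However, the treatment of the localization remainder has two genuine problems.

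First, the scaling in $\mathcal M$ is off. The paper's Lemma~\ref{lem:localization_largeMat} produces an error of order $\mathcal M^{-2}$, not $\mathcal M^{-1}$. With a partition of unity $\sum_m \varphi(\cdot-m)^2\equiv 1$ where $\varphi$ is supported in $[-\mathcal M/2,\mathcal M/2]$ and the shifts are by integers, necessarily $\|\varphi\|_\infty\sim\mathcal M^{-1/2}$, so $\|\varphi''\|_\infty\sim\mathcal M^{-5/2}$ rather than the $\mathcal M^{-2}$ you state. Summing the contributions $\sum_m(\varphi_m(s)-\varphi_m(s+k))^2$ over the $O(\mathcal M)$ overlapping bumps gives $O(k^2/\mathcal M^2)$, matching the paper's $\delta_k$. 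Your $\mathcal M^{-1}$ is therefore wrong, and with it your final book-keeping: with $\mathcal M\sim Y^{-31/32}$ one has $\mathcal M^{-1}Y^{-1/8}\sim Y^{27/32}$, nowhere near $Y^{2+\eta}$. The $\mathcal M^{-2}$ factor is essential.

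Second, and more fundamentally, you propose bounding the commutator remainder through \emph{operator norms} $\Vert\mathcal Q_k^{\text{ren}}\Vert_{\mathrm{eff}}\leq C\rho_\mu^2\ell^2Y^{-1/8}$. This is not available: $\mathcal Q_4^{\text{ren}}$ contains $\sum_{i\neq j}Q_iQ_j\,w\,Q_iQ_j$ with $w$ built from the \emph{unsmoothed} potential $v$, and its operator norm scales with $\|W\|_\infty$, which is not controlled by $\|v\|_1$ (an $L^1$ potential need not be bounded). The paper avoids this by identifying the error as a clean \emph{expectation} $\delta_1\langle d_1^L\rangle_\Psi + \delta_2\langle d_2^L\rangle_\Psi$ where $d_1^L,\,d_2^L$ collect only the off-diagonal-in-$n_+^L$ parts of the Hamiltonian (definitions~\eqref{def.d1L}--\eqref{def.d2L}), and then estimates these expectations in Lemma~\ref{lem:d1d2estimate} (Appendix~\ref{app:proofd1d2}). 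That estimate is structurally different from a norm bound: it uses Cauchy--Schwarz to absorb the uncontrollable parts into a multiple of the positive $\mathcal Q_4^{\text{ren}}$ (whose expectation is bounded a priori by~\eqref{eq:estimateprioriQ4}), and it uses the Sobolev-type smoothing of the low-momentum projector $\overline{Q}_H$ via Lemma~\ref{lem:normQhigh} — namely $\|\overline{Q}_H\,w\,\overline{Q}_H\|\leq C\varepsilon_N^{-1/2}\widetilde K_H^2\ell^{-2}\|v\|_1$ — to handle the remaining pieces. The resulting bound~\eqref{eq:estimateMd1d2} also carries factors $\langle n_+\rangle_\Psi^{1/2}/n^{1/2}$, i.e.\ the smallness from Bose--Einstein condensation, which your crude norm estimate $\rho_\mu^2\ell^2Y^{-1/8}$ does not capture; those factors are needed to reach $Y^{2+\eta}$. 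Simply delegating the computation to Appendix~\ref{app:proofd1d2} does not validate the norm-based route you sketch, because that appendix establishes an expectation bound, not the norm bound you assume.
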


From this result we see that, in order to prove Theorem~\ref{thm:largebox_lower}, we only need to bound the energy of states satisfying the bound $n_+^L \leq  \mathcal M$.
In the remainder of this section, we prove Theorem~\ref{thm:excitationrestriction}.
The following lemma states that for a lower bound we can restrict to states with finitely many excitations $n_+$, up to small enough errors. The proof of this lemma is inspired by the localization of large matrices result in \cite{LS_2comp}. It is also really similar to the bounds in \cite[Proposition 21]{HST}. It could be interpreted as an analogue of the standard IMS localization formula. The error produced is written in terms of the following quantities $d_1^L$ and $d_2^L$:
\begin{align}
d_{1}^L &:= -\rho_{\mu} \sum_{i} (P_i + Q_{H,i})\int  w_1(x_i,y) \dd y \; \overline{Q}_{H,i} + h.c. \nonumber \\
&\quad +\sum_{i \neq j} (P_i + Q_{H,i})\overline{Q}_{H,j} w(x_i,x_j) \overline{Q}_{H,i} \overline{Q}_{H,j} + h.c. \nonumber \\
&\quad + \sum_{i \neq j} \overline{Q}_{H,i} (P_j + Q_{H,j}) w(x_i,x_j) (P_i + Q_{H,i})(P_j + Q_{H,j}) + h.c. \label{def.d1L} 
\intertext{and}
d_2^L &:= \sum_{i \neq j} (P_i + Q_{H,i})(P_j + Q_{H,j}) w(x_i,x_j) \overline{Q}_{H,j} \overline{Q}_{H,i} +h.c. \label{def.d2L}
\end{align}

\begin{lemma}\label{lem:localization_largeMat}
Let $\theta : \R \rightarrow [0,1]$ be any compactly supported Lipschitz function such that $\theta(s) = 1$ for $\vert s \vert < \frac 1 8$ and $\theta(s) = 0$ for $\vert s \vert > \frac 1 4$. For any $\mathcal M >0$, define $c_{\mathcal M} >0$  and $\theta_{\mathcal M}$ such that
\[ \theta_{\mathcal M}(s) = c_{\mathcal M} \theta \Big( \frac{s}{\mathcal M} \Big) , \qquad \sum_{s \in \Z} \theta_{\mathcal M}(s)^2 = 1 .\]
Then there exists a $C>0$ depending only on $\theta$ such that, for any normalized state $\Psi \in \mathscr{F}_s(L^2(\Lambda))$,
\begin{equation}\label{eq:large_matrices_decomposn+}
 \langle \Psi , \mathcal H_{\Lambda}(\rho_\mu) \Psi \rangle \geq \sum_{m \in \Z} \langle \Psi^m , \mathcal H_\Lambda(\rho_\mu) \Psi^m \rangle - \frac{C}{\mathcal{M}^2} \left( \vert \langle d_1^L \rangle_\Psi \vert + \vert \langle d_2^L \rangle_\Psi \vert \right),
 \end{equation}
where $\Psi^m = \theta_{\mathcal M}(n_+^L - m) \Psi$.
\end{lemma}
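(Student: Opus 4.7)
The plan is to apply an IMS-type commutator identity with respect to the integer-spectrum operator $n_+^L$. Writing $\theta_m := \theta_{\mathcal M}(n_+^L - m)$, the normalization $\sum_{s \in \Z}\theta_{\mathcal M}(s)^2 = 1$ yields the operator identity $\sum_{m \in \Z}\theta_m^2 = \one$. The algebraic identity $\theta_m A \theta_m = \tfrac12(\theta_m^2 A + A \theta_m^2) - \tfrac12[\theta_m,[\theta_m,A]]$, summed in $m$ and applied to $A = \mathcal H_\Lambda(\rho_\mu)$, then gives
\[
\langle \Psi, \mathcal H_\Lambda(\rho_\mu) \Psi \rangle = \sum_m \langle \Psi^m, \mathcal H_\Lambda(\rho_\mu) \Psi^m \rangle + \frac{1}{2} \sum_m \langle \Psi, [\theta_m, [\theta_m, \mathcal H_\Lambda(\rho_\mu)]] \Psi \rangle,
\]
so the lemma reduces to bounding the absolute value of the double-commutator term by $(C/\mathcal M^2)(|\langle d_1^L \rangle_\Psi| + |\langle d_2^L\rangle_\Psi|)$.

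To carry this out, I would decompose $\mathcal H_\Lambda(\rho_\mu)$ by how each of its pieces shifts $n_+^L$: inserting $\one = \overline{Q}_H + (P + Q_H)$ on every particle-side of both the kinetic and the potential terms in \eqref{eq:grandcan_hamilton}, one uses that $\overline{Q}_H$ is a spectral projector of $\mathcal T$ (so the kinetic part commutes with $n_+^L$) and that the interaction is two-body (so at most two $\overline{Q}_H$'s appear on each side of any term). Consequently $\mathcal H_\Lambda(\rho_\mu) = H_0 + H_{+1} + H_{-1} + H_{+2} + H_{-2}$, with $H_0$ commuting with $n_+^L$, each $H_{\pm j}$ changing $n_+^L$ by $\pm j$, and $H_{-j} = H_{+j}^*$. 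A patient bookkeeping, using the $i \leftrightarrow j$ symmetry of $\sum_{i \neq j}$ to merge equivalent pieces, then shows $H_{+1} + H_{-1} = d_1^L$ and $H_{+2} + H_{-2} = \tfrac{1}{2} d_2^L$; the factor $\tfrac12$ is due to the fact that in the shift-$\pm 2$ case the combination $(P+Q_H)_i(P+Q_H)_j\, w\, \overline{Q}_{H,i} \overline{Q}_{H,j}$ is symmetric under swap of particle labels, so the unrestricted sum in \eqref{def.d2L} double-counts with respect to the $\tfrac12 \sum_{i\neq j}$ in the original potential (no such symmetry is present in the shift-$\pm 1$ pieces, because the lone $\overline{Q}_H$ is on a specified particle).

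For each fixed shift, the covariance $f(n_+^L) H_{+j} = H_{+j} f(n_+^L + j)$, valid for any bounded $f$, gives
\[
[\theta_m, [\theta_m, H_{+j}]] = H_{+j} \bigl[\theta_{\mathcal M}(n_+^L + j - m) - \theta_{\mathcal M}(n_+^L - m)\bigr]^2,
\]
and summing over $m$ (using translation invariance on $\Z$) produces $\Theta_j H_{+j}$ with the scalar $\Theta_j := \sum_{n \in \Z}[\theta_{\mathcal M}(n+j) - \theta_{\mathcal M}(n)]^2$; the analogous identity for $H_{-j}$ involves the same $\Theta_j$ since $\Theta_{-j} = \Theta_j$. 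Combining,
\[
\sum_m \bigl[\theta_m, [\theta_m, \mathcal H_\Lambda(\rho_\mu)]\bigr] = \Theta_1\, d_1^L + \tfrac{1}{2}\Theta_2\, d_2^L.
\]
The Lipschitz bound $|\theta_{\mathcal M}(n+j) - \theta_{\mathcal M}(n)| \leq c_{\mathcal M}\|\theta\|_{\mathrm{Lip}}|j|/\mathcal M$, together with $c_{\mathcal M}^2 = O(1/\mathcal M)$ from the normalization and $|\supp \theta_{\mathcal M} \cap \Z| = O(\mathcal M)$, then yields $\Theta_j \leq C j^2/\mathcal M^2$, which for $|j| \leq 2$ is enough to conclude \eqref{eq:large_matrices_decomposn+}.

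The only truly delicate step is the algebraic identification $H_{+1} + H_{-1} = d_1^L$ and $H_{+2} + H_{-2} = \tfrac12 d_2^L$: it requires patient expansion of the $2^4$ ways to decorate each two-body interaction with $\overline{Q}_H$ or $(P + Q_H)$ on its four particle-sides, and careful comparison with the definitions \eqref{def.d1L}--\eqref{def.d2L}. Everything else reduces to the IMS identity, the one-line covariance relation, and the straightforward Lipschitz estimate on $\Theta_j$.
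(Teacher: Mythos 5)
Your proof is correct and takes essentially the same route as the paper: decompose $\mathcal H_\Lambda(\rho_\mu)$ by how each piece shifts $n_+^L$, express the localization error through $d_1^L$, $d_2^L$, and bound by the Lipschitz estimate on $\theta_{\mathcal M}$ — your IMS double-commutator phrasing is algebraically equivalent to the paper's direct change-of-summation-index computation. Incidentally, your observation that $H_{+2}+H_{-2}=\tfrac12 d_2^L$ (because \eqref{def.d2L} carries $\sum_{i\neq j}$ rather than $\tfrac12\sum_{i\neq j}$) is more precise than the paper's statement $\mathcal H_k + \mathcal H_{-k} = d_k^L$ for $k=1,2$; the discrepancy is harmless since the constant is absorbed into $C$.
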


\begin{proof}
Notice that $\mathcal{H}_{\Lambda}$ only contains terms that change $n_+^L$ by $0, \pm 1$ or $\pm 2$. Therefore, we write our operator as $ \mathcal H_\Lambda (\rho_\mu) = \sum_{\vert k \vert \leq 2} \mathcal{H}_k$,
with $ \mathcal H_k n_+^L = (n_+^L + k) \mathcal H_k$. Moreover, $\mathcal H_k + \mathcal H_{-k} = d^L_k$ for $k=1,2$. We use this decomposition to estimate the localized energy,
\begin{align*}
\sum_{m \in \Z} \langle \Psi^m, \mathcal H_{\Lambda}\Psi^m \rangle &= \sum_{m \in \Z} \sum_{\vert k \vert \leq 2} \langle \theta_{\mathcal M}(n_+^L -m)  \theta_{\mathcal M}(n_+^L-m+k) \Psi, \mathcal H_k \Psi \rangle\\
&= \sum_{m, s \in \Z} \sum_{\vert k \vert \leq 2} \langle \theta_{\mathcal M}(s-m)  \theta_{\mathcal M}(s-m+k) \one_{\{n_+^L =s\}} \Psi, \mathcal{H}_k \Psi \rangle\\
&= \sum_{m,s \in \Z} \sum_{\vert k \vert \leq 2} \theta_{\mathcal M}(m) \theta_{\mathcal M}(m+k) \langle \one_{\{n_+^L = s\}} \Psi, \mathcal H_k \Psi \rangle,
\end{align*}
where in the last line we changed the index $m$ into $s-m$. We can sum on $s$ to recognize
\begin{equation}
\sum_{m \in \Z} \langle \Psi^m, \mathcal H_{\Lambda} \Psi^m \rangle = \sum_{m \in \Z} \sum_{\vert k \vert \leq 2} \theta_{\mathcal M}(m) \theta_{\mathcal M}(m+k) \langle \Psi, \mathcal H_k  \Psi \rangle.
\end{equation}
Furthermore the energy of $\Psi$ can be rewritten as
\begin{equation}
\langle \Psi, \mathcal H_{\Lambda} \Psi \rangle = \sum_{\vert k \vert \leq 2} \langle \Psi, \mathcal H_k  \Psi \rangle = \sum_{m \in \Z} \sum_{\vert k \vert \leq 2} \theta_{\mathcal M}(m)^2 \langle \Psi, \mathcal H_k \Psi \rangle,
\end{equation}
by definition of $\theta_{\mathcal M}$. Thus, the localization error is
\begin{equation}
\sum_{m \in \Z} \langle \Psi^m, \mathcal H_{\Lambda} \Psi^m \rangle - \langle \Psi, \mathcal H_{\Lambda} \Psi \rangle = \sum_{\vert k \vert \leq 2} \delta_k \langle \Psi, \mathcal H_k \Psi \rangle,
\end{equation}
with 
\begin{equation}\label{eq.def.deltak}
\delta_k = \sum_{m \in \Z} \big (  \theta_{\mathcal M} (m)  \theta_{\mathcal M}(m+k) -  \theta_{\mathcal M}(m)^2  \big ) = - \frac{1}{2} \sum_m \big ( \theta_{\mathcal M}(m) -  \theta_{\mathcal M}(m+k) \big )^2.
\end{equation}
Since $\delta_0 = 0$, $\delta_k = \delta_{-k}$ and $d_k^L = \mathcal H_k + \mathcal H_{-k}$ we find
\begin{equation}
\sum_{m \in \Z} \langle \Psi^m, \mathcal H_{\Lambda} \Psi^m \rangle - \langle \Psi, \mathcal H_{\Lambda} \Psi \rangle = \delta_1 \langle d_1^L \rangle_\Psi + \delta_2 \langle d_2^L \rangle_\Psi,
\end{equation}
and only remains to prove that $\vert \delta_k \vert \leq C \mathcal M^{-2}$. Using \eqref{eq.def.deltak} and the definition of $\theta_{\mathcal M}$,
\begin{align}
\vert \delta_k \vert = \frac{c_{\mathcal M}^2}{2} \sum_{m \in \Z} \Big[ \theta \Big( \frac{m}{\mathcal M} \Big) - \theta \Big( \frac{m+k}{\mathcal M} \Big) \Big]^2.
\end{align}
We can restrict the sum to $ m  \in \big[- \frac{ \mathcal M}{2} , \frac{ \mathcal M}{2} \big]$, since the other terms vanish due to $\theta$ being a cutoff function. This sum contains $ \mathcal M + 1$ terms which we can bound using the Lipschitz constant $L$ of $\theta$,
\begin{equation}
\vert \delta_k \vert \leq c_{\mathcal M}^2\frac{ \mathcal M + 1}{2} \frac{L^2 k^2}{\mathcal M^2} \leq \frac{2 L^2 k^2}{\mathcal M^2},
\end{equation}
where in the last inequality we used
\begin{equation}
c_{\mathcal M}^2 = \Big( \sum_{s \in \Z} \theta \Big( \frac{s}{\mathcal M} \Big)^2 \Big) ^{-1} \leq \frac{1}{ \mathcal M / 4 + 1}.
\end{equation}
\end{proof}
To estimate the error in \eqref{eq:large_matrices_decomposn+}, we need the following bounds on $d_1^L$ and $d_2^L$.
\begin{lemma}\label{lem:d1d2estimate}
Let $\widetilde{\mathcal M} >0$ and $\Psi \in \mathscr{F}_s(L^2(\Lambda))$ be a normalized $n$-bosons vector satisfying 
\[\Psi = \one_{[0,\widetilde{\mathcal{M}}]}(n_+^L)\Psi.\]
Then, assuming  the choices of parameters in Appendix~\ref{app:parameters} we have
\begin{align}\label{eq:estimateMd1d2}
&\vert \langle d_{1}^L \rangle_\Psi \vert + \vert \langle d_2^L\rangle_{\Psi} \vert  \\
&\leq  \rho_{\mu}^2 \ell^2 \|v\|_1 \Big(  \frac{\langle n_+\rangle_{\Psi}^{1/2}}{n^{1/2}} + \frac{\widetilde{\mathcal{M}}^{1/2} \langle n_+\rangle_{\Psi}^{1/2}}{n} \varepsilon_N^{-1/4} \widetilde{K}_H +   \frac{\widetilde{\mathcal{M}} \langle n_+\rangle_{\Psi}}{n^2} \varepsilon_N^{-1/2} \widetilde{K}_H^2 \Big)  +C \langle \mathcal Q_4^{\rm{ren}}\rangle_{\Psi}.\nonumber
\end{align}
\end{lemma}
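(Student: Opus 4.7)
The plan is to dominate each of the four summands in $d_1^L$ and $d_2^L$ via a Cauchy--Schwarz splitting that isolates a positive quadratic form bounded by $\mathcal{Q}_4^{\rm ren}$ on one side, and a ``diagonal'' remainder on the other side that is controlled through $\|v\|_1$, the number operators $n_+,n_+^L$, and the constraint $n_+^L \leq \widetilde{\mathcal M}$. First I would observe that every term has the schematic form $\sum_i R_i\,\overline Q_{H,i}+\text{h.c.}$ (for the one-body piece in $d_1^L$ and the two-body pieces with a single $\overline Q_H$) or $\sum_{i\neq j} R_{ij}\,\overline Q_{H,i}\overline Q_{H,j}+\text{h.c.}$ (for $d_2^L$), where the prefactor $R_i$ or $R_{ij}$ involves only $P$ and $Q_H$ projectors and, in the two-body case, the factor $w(x_i,x_j)$. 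The elementary inequality
\[
2\,\Re\langle\Psi, R\,\overline Q_H\,\Psi\rangle
\leq \varepsilon\,\langle\Psi,\overline Q_H\,\mathcal B\,\overline Q_H\Psi\rangle
+\varepsilon^{-1}\langle\Psi, R\,\mathcal B^{-1}R^*\Psi\rangle,
\]
with an appropriate positive operator $\mathcal B$ (either $w(x_i,x_j)$ itself, producing a piece directly absorbable into $\mathcal{Q}_4^{\rm ren}$ via \eqref{eq:SF_DefQ4}, or, for the one-body piece, a scalar multiple of the identity), reduces everything to bounding the diagonal squares $R\mathcal B^{-1}R^*$.

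Next, the three scales on the right-hand side of \eqref{eq:estimateMd1d2} correspond to which projectors sit opposite the $\overline Q_H$'s after the split. When the opposite side is a pure $PP$-product, the identity $\sum_j P_j\,w(x_i,x_j)\,P_j = (n_0/|\Lambda|)\chi_\Lambda(x_i)(W*\chi_\Lambda)(x_i)$ combined with $\|W\|_1\leq C\|v\|_1$ supplies the prefactor $\rho_\mu^2\ell^2\|v\|_1$, and the square-root of $\sum_i\overline Q_{H,i}^*\overline Q_{H,i}\leq n_+^L\leq n_+$ provides $\langle n_+\rangle_\Psi^{1/2}$; this is the first summand. When one of the opposite factors is $Q_H$, I would extract an additional weight $\varepsilon_N^{-1/4}\widetilde K_H$ from the operator estimate that $\mathcal T\geq\varepsilon_N(-\Delta^{\mathcal N})$ together with $Q_H=\one_{[\widetilde K_H^2/\ell^2,\infty)}(\mathcal T)$, which lets one bound $\|(-\Delta^{\mathcal N})^{1/4}Q_H\|\lesssim \varepsilon_N^{-1/4}\widetilde K_H^{1/2}\ell^{-1/2}$, and absorb the remaining $w$-factor via a Sobolev-type estimate controlled by $\|v\|_1$. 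The restriction $\Psi=\one_{[0,\widetilde{\mathcal M}]}(n_+^L)\Psi$ then produces the supplementary $\widetilde{\mathcal M}^{1/2}$ in the second summand. The third summand arises analogously from the $Q_HQ_H$ configuration in $d_2^L$, doubling the kinetic weight to $\varepsilon_N^{-1/2}\widetilde K_H^2$ and replacing $\langle n_+\rangle^{1/2}\widetilde{\mathcal M}^{1/2}$ by $\langle n_+\rangle\widetilde{\mathcal M}$.

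The one-body piece $-\rho_\mu\sum_i(P_i+Q_{H,i})\int w_1(x_i,y)\dd y\,\overline Q_{H,i}$ fits into the same framework since $\int w_1(x_i,y)\dd y=\chi_\Lambda(x_i)(W_1*\chi_\Lambda)(x_i)$ is a bounded multiplication operator with $\|\,\cdot\,\|_\infty\leq C\|W_1\|_1\leq C\|v\|_1$, and the prefactor $\rho_\mu$ together with $\rho_\mu\ell^2\sim n$ reconstructs the factor $\rho_\mu^2\ell^2\|v\|_1$ appearing in each of the three error scales. The hypothesis $\|v\|_1\leq Y^{-1/8}$ is not used in this estimate itself but will be used when combining \eqref{eq:estimateMd1d2} with Theorem~\ref{thm:excitationsbound} in the proof of Theorem~\ref{thm:excitationrestriction}, to ensure smallness.

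The main obstacle is the bookkeeping: for each of the distinct projector combinations $(PP,PQ_H,Q_HP,Q_HQ_H)$ appearing in the eight summands making up $d_1^L+(d_1^L)^*+d_2^L+(d_2^L)^*$, I must choose the Cauchy--Schwarz weight $\varepsilon$ so that the total positive residual is bounded by a \emph{universal} constant times $\mathcal Q_4^{\rm ren}$, while the diagonal remainders telescope into exactly the three announced scales. In addition, the $Q_H$-kinetic bound must be used in a way compatible with the two-body structure of $w$, which is where the Sobolev-type $\varepsilon_N^{-1/4}\widetilde K_H$ weights enter. This lengthy but routine case-by-case analysis is the reason the full proof is postponed to Appendix~\ref{app:proofd1d2}.
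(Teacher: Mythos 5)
Your overall architecture — Cauchy--Schwarz splitting, absorption of one side into $\mathcal{Q}_4^{\rm ren}$, and counting via $n_+^L\leq\widetilde{\mathcal M}$ and $\|v\|_1$ — is the right skeleton and matches the paper's strategy. However, you have the central mechanism backwards, and as written the proof cannot go through.

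You claim that the weight $\varepsilon_N^{-1/4}\widetilde K_H$ comes from the $Q_H$ factors on the ``opposite'' side, via the estimate $\|(-\Delta^{\mathcal N})^{1/4}Q_H\|\lesssim\varepsilon_N^{-1/4}\widetilde K_H^{1/2}\ell^{-1/2}$. This is false. Since $Q_H=Q\,\one_{[\widetilde K_H^2\ell^{-2},\infty)}(\mathcal T)$ projects onto \emph{high} kinetic energy, the combination with $\mathcal T\geq\varepsilon_N(-\Delta^{\mathcal N})$ only gives a \emph{lower} bound on $-\Delta^{\mathcal N}$ on $\mathrm{Ran}\,Q_H$; the operator $(-\Delta^{\mathcal N})^{1/4}Q_H$ is unbounded because $\mathrm{Ran}\,Q_H$ contains arbitrarily high momenta. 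Consequently the $Q_H$ projectors contribute no kinetic weight at all in this proof; they are simply treated as bounded by $Q$, or regrouped with $P$ so that the $(P+Q_H)$ side reconstructs $\mathcal Q_4^{\rm ren}$ after adding and subtracting $\omega$-terms.

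The factors $\varepsilon_N^{-1/4}\widetilde K_H$ and $\varepsilon_N^{-1/2}\widetilde K_H^2$ actually come from the \emph{other} side, namely from $\overline Q_H=\one_{(0,\widetilde K_H^2\ell^{-2})}(\mathcal T)$, which confines the kinetic energy from \emph{above}. For $\varphi\in\mathrm{Ran}\,\overline Q_H$ one obtains $\|\Delta(\chi_\Lambda\varphi)\|\lesssim\varepsilon_N^{-1/2}\widetilde K_H^2/\ell^2$, and a Sobolev interpolation $\|\chi_\Lambda\varphi\|_\infty\lesssim\|\chi_\Lambda\varphi\|^{1/2}\|\Delta(\chi_\Lambda\varphi)\|^{1/2}$ then gives the operator norm bound $\|\overline Q_{H}\,w\,\overline Q_{H}\|\lesssim\varepsilon_N^{-1/2}\widetilde K_H^2\ell^{-2}\|v\|_1$; this is the content of Lemma~\ref{lem:normQhigh} and it is the engine behind the second and third summands in \eqref{eq:estimateMd1d2} (the second summand carries the square root of this norm after the Cauchy--Schwarz optimization). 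Your Sobolev instinct was right, but the embedding only applies under the upper kinetic cutoff embodied by $\overline Q_H$, not under $Q_H$. Once this mechanism is corrected, the remaining bookkeeping you describe (case-by-case over the projector configurations, replacing $n_+^L$ by $\widetilde{\mathcal M}$ in higher moments, reconstructing $\mathcal Q_4^{\rm ren}$) does go through.
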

\begin{proof}
We give the proof in Appendix~\ref{app:proofd1d2}.
\end{proof}
Now we can combine Lemmas~\ref{lem:localization_largeMat},~\ref{lem:d1d2estimate} and Theorem~\ref{thm:excitationsbound} to prove Theorem~\ref{thm:excitationrestriction}.
\begin{proof}[Proof of Theorem~\ref{thm:excitationrestriction}]
Given a $n$-sector state $\Psi \in L^2(\Lambda^n)$ satisfying \eqref{eq:condensationaprioricondition}, we can apply Lemma~\ref{lem:localization_largeMat} and write $\Psi^m = \theta_{\mathcal M}(n_+^L -m)\Psi$. In \eqref{eq:large_matrices_decomposn+} we split the sum into two.
The first part, for $|m| < \frac 1 2 \mathcal M$, we keep. For $|m| > \frac 1 2 \mathcal{M}$, $\Psi_m$ satisfies
\begin{equation}
\langle n_+\rangle_{\Psi^m} \geq \langle n_+^L\rangle_{\Psi^m} \geq \frac{ \mathcal{M}}{4}\| \Psi^m \|^2 ,  
\end{equation}
due to the cutoff $\theta_{\mathcal M}(n_+^L - m)$. Thanks to condition \eqref{eq:M_large_matrices} on $\mathcal M$, this is a larger bound than \eqref{eq:condensationestimate}, and thus the assumption of Theorem~\ref{thm:excitationsbound} cannot be satisfied for $\Psi^m$ and we must have the lower bound 
\begin{equation}\label{eq:PsimH.01}
\langle \Psi^m, \mathcal{H}_{\Lambda}(\rho_{\mu}) \Psi^m\rangle \geq -4 \pi \rho_{\mu}^2 \ell^2 Y \Big( 1 - C K_B^2 Y \vert \log Y \vert \Big)\|\Psi^m \|^2.
\end{equation}
We finally bound the last term in \eqref{eq:large_matrices_decomposn+}, using Lemma~\ref{lem:d1d2estimate} with $\widetilde{\mathcal M} = n$,
\begin{equation*}
\vert \langle d_{1}^L \rangle_\Psi \vert + \vert \langle d_2^L\rangle_{\Psi} \vert \leq \rho_\mu^2 \ell^2 \Vert v \Vert_1^2 \Big( \frac{1+ \varepsilon_N^{-1/4}\widetilde{K}_H}{n^{1/2}} \langle n_+ \rangle_\Psi + \frac{\varepsilon_N^{-1/2} \widetilde K_H^2}{n} \langle n_+ \rangle_\Psi \Big) + C \langle \mathcal Q_4^{\rm{ren}} \rangle_\Psi .
\end{equation*}
Now we use the condensation estimate \eqref{eq:condensationestimate} and the bound \eqref{eq:estimateprioriQ4} on $Q_4^{\text{ren}}$ to obtain 
\begin{align}\label{eq:d1d2.01}
\vert \langle d_{1}^L \rangle_\Psi \vert &+ \vert \langle d_2^L\rangle_{\Psi} \vert \nonumber \\
&\leq \rho^2_{\mu} \ell^2 \|v\|_1 \Big ( Y^{1/2} |\log Y|^{1/2} K_{\ell} K_B \widetilde{K}_H \varepsilon_N^{-1/4}+ Y |\log Y| K_{\ell}^2 K_B^2 \widetilde{K}_H^2 \varepsilon_N^{-1/2}\Big ).
\end{align}
The relation \eqref{eq:relK_B-K_H-K_N-K_L} between the parameters implies that the largest term in \eqref{eq:d1d2.01} is the first one. Using the conditions \eqref{eq:varepsilonN} and \eqref{eq:M_large_matrices} on $\varepsilon_N$ and $\mathcal{M}$ respectively, and the assumptions on $\Vert v \Vert_1$ we find
\begin{equation}\label{eq:d1d2.02}
\frac{ |\langle d_{1}^L \rangle_\Psi \vert + \vert \langle d_2^L\rangle_{\Psi} |}{\mathcal M^2} \leq \rho^2_{\mu} \ell^2 Y^{2+\eta} .
\end{equation}
Using the estimates \eqref{eq:PsimH.01} for $m > \frac 1 2 \mathcal M$ and \eqref{eq:d1d2.02} in formula \eqref{eq:large_matrices_decomposn+} we conclude the proof.
\end{proof}

\section{Lower bounds in second quantization}\label{sec:c-number}
\subsection{Second quantization formalism}

We rewrite the Hamiltonian in the second quantization formalism. Let us introduce the operators, where $\#$ can be nothing or $\dagger$ for the annihilation or creation operators on $\mathscr{F}_s(L^2(\Lambda))$, respectively,
\begin{equation}
a_0^{\#} := \frac{1}{\ell} \,a^{\#}(\theta), \qquad \text{and} \qquad [a_0, a_0^{\dagger}] = 1,
\end{equation}
being the creation and annihilation operators for bosons with zero momentum, where $\theta$ is the sharp localization function on $\Lambda$ (see \eqref{def:sharp_loc_Lambda}). For $k \in \mathbb{R}^2  \setminus\{0\}$ we also define
\begin{equation}
\widetilde{a}_k^{\#} := \frac{1}{\ell}\, a^{\#}(Q e^{ikx}\theta),
\end{equation} 
the creation and annihilation operators for bosons with non-zero momentum with $Q$ defined in \eqref{def:PQ}, and their regular analogous
\begin{equation}
a_k^{\#} :=\frac{1}{\ell} \, a^{\#}(Q e^{ikx}\chi_{\Lambda}),
\end{equation} 
where $\chi_{\Lambda}$ is the regular localization function defined in Appendix~\ref{app:locfunction}. We have the usual commutation relations, for $k,h \in \mathbb{R}^2\setminus \{0\}$
\begin{equation}\label{eq:commut_atilde}
[\widetilde a_k, \widetilde a_h ] = [a_k, a_h] = 0, \qquad \text{and} \qquad [\widetilde{a}_k,\widetilde{a}^{\dagger}_h] = \frac{1}{\ell^2}\,\langle Qe^{ikx}, Q e^{ihx} \rangle .
\end{equation}
Using that $P = \one -Q$ and $\widehat \chi_\Lambda(k) = \ell^2 \widehat \chi_\Lambda(k\ell)$,
\begin{equation}\label{eq:commut_a}
[a_k, a_h^{\dagger}] = \frac{1}{\ell^2}\, \langle Q e^{ikx}\chi_{\Lambda},\, Q e^{ihx} \chi_{\Lambda} \rangle = \widehat{\chi^2}((k-h)\ell) -\widehat{\chi}(k \ell) \overline{\widehat{\chi}}(h \ell),
\end{equation}
and
\begin{equation}\label{eq:commut_abound}
[a_k, a_h^{\dagger}] \leq 1.
\end{equation}
Let us observe, first of all, that 
\begin{equation}\label{def:n0n+}
n_0 = a^{\dagger}_0 a_0, \qquad n_+ = \frac{\ell^2}{(2\pi)^2} \int \widetilde{a}^{\dagger}_k \widetilde{a}_k \dd k.
\end{equation}
Let us introduce, for $k \in \mathbb{R}^2$, the kinetic Fourier multiplier
\begin{equation}\label{def:tauk}
\tau(k) := (1-\varepsilon_T)\Big[|k| -\frac{1}{2s\ell}\Big]_+^2 + \varepsilon_T \Big[|k| -\frac{1}{2 d s\ell}\Big]_+^2.
\end{equation}
We will need the following technical lemma to control the number operators.

\begin{lemma}\label{lem:numbercontrolhigh}
Assume the relation \eqref{eq:rel_KH_K_M} between the parameters. Let $\Psi \in \mathscr{F}_s(L^2(\Lambda))$ be a normalized state satisfying
\begin{equation}\label{eq:Psitilde_number}
\one_{[0,\mathcal{M}]} (n_+^L)\Psi= \Psi, \qquad \one_{[0, 2 \rho_{\mu}\ell^2]}(n_+)\Psi = \Psi,
\end{equation}
then the following bounds hold
\begin{align}
&\Big\langle \ell^2 \int_{\{|k| \leq 2 K_H \ell^{-1}\}} (a^{\dagger}_k a_k + \widetilde{a}^{\dagger}_k \widetilde{a}_k) \dd k \Big\rangle_{\Psi} \leq C \mathcal{M}, \label{lem81:1} \\
&\Big\langle \ell^2 \int_{\mathbb{R}^2} (a^{\dagger}_k a_k + \widetilde{a}^{\dagger}_k \widetilde{a}_k) \dd k \Big\rangle_{\Psi} \leq C \mathcal{M} + C \langle n_+^H\rangle_{\Psi}. \label{eq:high_momenta_number_control}
\end{align}
\end{lemma}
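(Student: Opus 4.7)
The plan is to convert each integral into an expectation value of an operator of the form $d\Gamma(A)$ on $\mathscr{F}_s(L^2(\Lambda))$ and then dominate the one-particle operator $A$ by a multiple of $\overline Q_H$ (whose $d\Gamma$ is $n_+^L$ and therefore $\leq \mathcal M$ by \eqref{eq:Psitilde_number}) and of $Q\one_{(d^{-2}\ell^{-1},\infty)}(\sqrt{-\Delta})Q$ (whose $d\Gamma$ dominates $n_+^H$). The starting identity is the second-quantization formula
\begin{equation*}
\int a^\dagger(g_k)\,a(g_k)\,\dd k = d\Gamma\Big(\int|g_k\rangle\langle g_k|\,\dd k\Big),
\end{equation*}
which, combined with $\int_{|k|\leq K}|e^{ik\cdot}\rangle\langle e^{ik\cdot}|\,\dd k = (2\pi)^2 \Pi_{\leq K}$ on $L^2(\mathbb R^2)$ (writing $\Pi_{\leq K}$ for the Fourier projector onto $\{|k|\leq K\}$), yields, for $g_k = \ell^{-1}Qe^{ikx}\theta$,
\begin{equation*}
\frac{\ell^2}{(2\pi)^2}\int_{|k|\leq K}\widetilde a_k^\dagger\widetilde a_k\,\dd k = d\Gamma\big(Q M_\theta\,\Pi_{\leq K}\,M_\theta Q\big),
\end{equation*}
together with the analogous formula in $M_{\chi_\Lambda}$ for the $a_k$-integrals.

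The main step for \eqref{lem81:1} is the one-particle inequality
\begin{equation*}
Q M_\theta\,\Pi_{\leq 2K_H\ell^{-1}}\,M_\theta Q \leq C\,\overline Q_H,
\end{equation*}
and the analogue with $M_{\chi_\Lambda}$. I will prove it by showing that, for any $\phi\in L^2(\Lambda)$ with $Q_H\phi = \phi$, the vector $\xi = \Pi_{\leq 2K_H\ell^{-1}} M_\theta Q\phi\in L^2(\mathbb R^2)$ carries little mass, by testing each piece of $\mathcal T$ in \eqref{eq:Texplicitexpress} separately. Plancherel gives $\|\sqrt{-\Delta}\xi\|\leq 2K_H\ell^{-1}\|\xi\|$, so both $\varepsilon_N(-\Delta^{\mathcal N})$ and $\mathcal T^{\text{kin}}$ (via the pointwise bound $\tau(k)\leq C|k|^2$ together with the $W^{1,\infty}$-boundedness of $\chi_\Lambda$) contribute terms of order $K_H^2/\ell^2$; the two uniform terms $\mathcal T^{\text{Neu,l}}$ and $\mathcal T^{\text{Neu,s}}$ are bounded by $b/\ell^2$ and $\varepsilon_T/(d\ell)^2$; and $\mathcal T^{\text{gap}}$ vanishes on $\xi$ because $2K_H\leq d^{-2}$ by the disjointness in \eqref{eq:defPL_PH}. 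The relation \eqref{eq:rel_KH_K_M} is chosen precisely so the aggregate constant is strictly smaller than $\widetilde K_H^2$, which contradicts $\langle\xi,\mathcal T\xi\rangle\geq \widetilde K_H^2\ell^{-2}\|\xi\|^2$ unless $\xi$ is suitably small, placing $Q_H\phi$ essentially in the kernel and giving the claimed operator inequality. Taking $d\Gamma$ and the expectation in $\Psi$ then yields \eqref{lem81:1} via $\langle n_+^L\rangle_\Psi \leq \mathcal M$.

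For \eqref{eq:high_momenta_number_control} I split $\int_{\mathbb R^2} = \int_{|k|\leq 2K_H\ell^{-1}} + \int_{|k|>2K_H\ell^{-1}}$. The low-momentum part is controlled by the preceding argument. For the high-momentum part, the complementary one-particle bound
\begin{equation*}
Q M_\theta\,\Pi_{>2K_H\ell^{-1}}\,M_\theta Q \leq C\,Q\one_{(d^{-2}\ell^{-1},\infty)}(\sqrt{-\Delta})Q
\end{equation*}
holds because $M_\theta$ and $M_{\chi_\Lambda}$ are bounded multipliers of $L^\infty$-norm of order one, and because $2K_H\geq d^{-2}$ ensures that the Fourier tail above $2K_H\ell^{-1}$ is captured by the projector $\one_{(d^{-2}\ell^{-1},\infty)}(\sqrt{-\Delta})$; applying $d\Gamma$ and the expectation in $\Psi$ then completes the bound.

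The principal obstacle is the quantitative one-particle inequality of the second paragraph, specifically the control of the constant appearing in $\langle\xi,\mathcal T^{\text{kin}}\xi\rangle\leq CK_H^2\ell^{-2}\|\xi\|^2$. Since $\mathcal T^{\text{kin}}$ sandwiches $\tau(\sqrt{-\Delta})$ between two multiplications by $\chi_\Lambda$, a Fourier-truncated input does not yield a Fourier-truncated output; the constant must be extracted via Plancherel and the pointwise inequality $\tau(k)\leq C|k|^2$, and all constants arising from the five components of $\mathcal T$ must be tracked carefully and matched against \eqref{eq:rel_KH_K_M} in order to place $\xi$ in the range of $\overline Q_H$.
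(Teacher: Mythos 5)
Your high-level plan---rewrite each integral as $d\Gamma$ of a one-particle operator of the form $QM_f\Pi M_fQ$ and dominate it by the relevant spectral projector of $\mathcal T$ or of $\sqrt{-\Delta}$---is the right one and is exactly the structure of the paper's proof. But the two operator inequalities you write down are too strong to be true, and the argument you sketch for the first one has a logical gap that makes it unrepairable as stated.

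The first inequality, $QM_\theta\,\Pi_{\leq 2K_H\ell^{-1}}\,M_\theta Q \leq C\,\overline Q_H$, cannot hold without an additive error term. Take any $\phi$ in the range of $Q_H$, i.e.\ $\mathcal T\phi\geq\widetilde K_H^2\ell^{-2}\phi$; then the right-hand side is zero, but $\Pi_{\leq 2K_H\ell^{-1}}M_\theta\phi$ is generically nonzero because multiplication by $\theta$ (a sharp cutoff) leaks mass to arbitrarily low momenta. Your attempt to close this hole by arguing that $\langle\xi,\mathcal T\xi\rangle\leq CK_H^2\ell^{-2}\|\xi\|^2$ while $\langle\xi,\mathcal T\xi\rangle\geq\widetilde K_H^2\ell^{-2}\|\xi\|^2$ ``unless $\xi$ is small'' does not work: the lower bound $\mathcal T\geq\widetilde K_H^2\ell^{-2}$ holds for $\phi$ in $\mathrm{Ran}\,Q_H$, not for $\xi=\Pi_{\leq 2K_H\ell^{-1}}M_\theta Q\phi$, which is a different vector and need not lie in $\mathrm{Ran}\,Q_H$. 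Testing $\mathcal T$ on $\xi$ tells you nothing about the spectral localization of $\phi$. The correct statement---which is \eqref{eq:kinetic_momentum_comparison}--\eqref{eq:kinetic_momentum_comparison2}, cited by the paper from the 3D predecessor work and stated without proof here---carries an additive error $C\big((K_H/\widetilde K_H)^M+\varepsilon_N^{3/2}\big)$, and the paper then multiplies it by $n_+\leq 2\rho_\mu\ell^2$ and uses \eqref{eq:rel_KH_K_M} to dominate this by $\varepsilon_{\mathcal M}\,\rho_\mu\ell^2=\mathcal M$. Your proposal has no mechanism for producing or absorbing such a term.

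The second inequality, $QM_\theta\,\Pi_{>2K_H\ell^{-1}}\,M_\theta Q \leq C\,Q\one_{(d^{-2}\ell^{-1},\infty)}(\sqrt{-\Delta})Q$, fails for the same reason: multiplication by $\theta$ or $\chi_\Lambda$ does \emph{not} preserve Fourier support, so a $\phi$ supported on $|p|\leq d^{-2}\ell^{-1}$ makes the right-hand side vanish while the left-hand side is nonzero. The paper handles this by splitting $1=\one_{\mathcal P_L}+\one_{\mathcal P_L^c}$, keeping $Q\one_{\mathcal P_L^c}(\sqrt{-\Delta})Q\leq n_+^H$ on the diagonal, and bounding the cross term via the operator-norm estimate $\|\one_{\mathcal P_L}(\sqrt{-\Delta})\chi_\Lambda\one_{[2K_H\ell^{-1},\infty)}\|\leq C(d^2K_H)^{-M}$, which again produces a multiplicative error $C(d^2K_H)^{-2M}n_+$ that is absorbed via \eqref{eq:rel_KH_K_M}. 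In both steps the combination of ``$n_+$ is a priori $\mathcal O(\rho_\mu\ell^2)$'' with the smallness of the combinatorial factors in \eqref{eq:rel_KH_K_M} is essential to get the error below $\mathcal M$; this is the missing ingredient in your proposal.
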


\begin{proof}
The proof is analogous for both the addends, therefore we give the proof only for the $a^{\#}_k$.
We want to compare localization in terms of kinetic energy with localization in momenta. We use \cite[Lemma 5.2]{FS2} adapted to dimension $2$:
\begin{align}
Q \chi_{\Lambda} \one_{\{|p| \leq K_H \ell^{-1}\}} \chi_{\Lambda} Q &\leq C \overline{Q}_H + C \Big( \Big( \frac{K_H}{\widetilde{K}_H}\Big)^{M}  + \varepsilon_N^{3/2}\Big),\label{eq:kinetic_momentum_comparison}\\
Q  \one_{\{|p| \leq K_H \ell^{-1}\}}  Q &\leq C \overline{Q}_H + C \Big( \Big( \frac{K_H}{\widetilde{K}_H}\Big)^{M}  + \varepsilon_N^{3/2}\Big),\label{eq:kinetic_momentum_comparison2}
\end{align}
where we recall the definition \eqref{def:QHbar} of $\overline{Q}_H$. Using \eqref{eq:kinetic_momentum_comparison} we have the following inequality in the $N-$th Fock sector
\begin{align*}
\frac{\ell^2}{(2\pi)^2}  \int_{\{|k| \leq 2 K_H \ell^{-1}\}} a^{\dagger}_k a_k \dd k \, \bigg|_N &= \sum_{j=1}^N Q_j\chi_{\Lambda}(x_j) \one_{(0,2K_H \ell^{-1}]} (\sqrt{-\Delta_j})\chi_{\Lambda}(x_j) Q_j   \\
&\leq C n_+^L + \Big( \Big(\frac{K_H}{\widetilde{K}_H}\Big)^{M} + \varepsilon_N^{3/2}\Big)n_+.
\end{align*}
Using the bounds from \eqref{eq:Psitilde_number} and the relation \eqref{eq:rel_KH_K_M} we deduce
\begin{equation}
\Big\langle \ell^2 \int_{\{|k| \leq 2 K_H \ell^{-1}\}} a^{\dagger}_k a_k \dd k \Big\rangle_{\Psi} \leq C \mathcal M + C \Big( \Big(\frac{K_H}{\widetilde{K}_H}\Big)^{M} + \varepsilon_N^{3/2}\Big)\rho_{\mu}\ell^2 \leq C \mathcal{M},
\end{equation}
thus proving \eqref{lem81:1}.
In order to obtain \eqref{eq:high_momenta_number_control} it is enough to estimate the integral on the complementary subset. We have, again on the $N-$th sector,
\begin{equation}
\ell^2 \int_{\{|k| \geq 2 K_H \ell^{-1}\}}  a^{\dagger}_k a_k  \dd k\, \bigg|_N = \sum_{j=1}^N Q_j\chi_{\Lambda}(x_j) \one_{\{|k| \geq 2K_H \ell^{-1}\}} (\sqrt{-\Delta_j})\chi_{\Lambda}(x_j) Q_j. 
\end{equation}
We insert $1 = \one_{\mathcal P_L} + \one_{\mathcal P_L^c}$ and use the Cauchy-Schwarz inequality to estimate the right-hand side,
\begin{align*}
Q \chi_{\Lambda} \one_{[2K_H \ell^{-1}, +\infty)}(\sqrt{-\Delta})\chi_{\Lambda}Q  & \leq   2 Q \one_{\mathcal{P}_L^c}(\sqrt{-\Delta}) \chi_{\Lambda}  \one_{[2K_H \ell^{-1}, +\infty)} (\sqrt{-\Delta}) \chi_{\Lambda} \one_{\mathcal{P}_L^c} (\sqrt{-\Delta}) Q.  \\
&\quad + 2 Q \one_{\mathcal{P}_L}(\sqrt{-\Delta}) \chi_{\Lambda}  \one_{[2K_H \ell^{-1}, +\infty)} (\sqrt{-\Delta}) \chi_{\Lambda} \one_{\mathcal{P}_L} (\sqrt{-\Delta}) Q.  
\end{align*}
On $\mathcal P_L^c$ we can use the bound
\begin{equation*}
Q \one_{\mathcal{P}_L^c}(\sqrt{-\Delta}) \chi_{\Lambda}  \one_{[2K_H \ell^{-1}, +\infty)} (\sqrt{-\Delta}) \chi_{\Lambda} \one_{\mathcal{P}_L^c} (\sqrt{-\Delta}) Q  \leq  \|\chi_{\Lambda}\|^2_{\infty} Q \one_{\mathcal{P}_L^c}(\sqrt{-\Delta})Q.
\end{equation*}
On $\mathcal P_L$ we bound the operator norm, multiplying and dividing by an $M$ power of the Laplacian and using that $\chi$ has $M$ bounded derivatives,
\begin{align*}
\|\one_{\mathcal{P}_L}(\sqrt{-\Delta}) \chi_{\Lambda} \one_{[2K_H \ell^{-1}, +\infty)}\|  
&\leq \| \one_{\mathcal{P}_L}(\sqrt{-\Delta}) \chi_{\Lambda} (-\Delta)^{M/2}\| \|(-\Delta)^{-M/2} \one_{[2K_H \ell^{-1}, +\infty)} \| \\
&\leq C (d^2 K_H)^{-M}.
\end{align*}
We deduce
\begin{equation}
\ell^2 \int_{\{|k| \geq 2 K_H \ell^{-1}\}} a^{\dagger}_k a_k \dd k  \leq C n_+^H + C(d^2 K_H)^{-2M} n_+,
\end{equation}
and we conclude using \eqref{eq:rel_KH_K_M} and the assumptions on $\Psi$.
\end{proof}

\subsection{Second quantized Hamiltonian}\label{subsec:sqh}

We can rewrite the $\mathcal Q_3^{\rm{low}}$ term \eqref{eq:defQ3low} in second quantized formalism
\begin{equation}
\mathcal Q_3^{\rm{low}} =  \frac{\ell^2}{(2\pi)^4}  \int_{\mathbb{R}^2 \times \mathbb{R}^2} f_L(p) \widehat{W}_1(k)  a^{\dagger}_0 \widetilde{a}_p^{\dagger} a_{p-k} a_k  \dd k \dd p + h.c.
\end{equation}
An important consideration is that we can restrict the contributions in $\mathcal Q_3^{\text{low}}$ to high momenta. This is the content of the next lemma.

\begin{lemma}[Localization of $\mathcal Q_3^{\rm{low}}$ to high momenta]\label{lem:Q3highloc}
Assume $R \leq \ell$ and the relations \eqref{cond:3Qloc}, \eqref{eq:Kl_KN_number}, \eqref{eq:rel_KH_K_M} between the parameters. If $\Psi \in \mathscr{F}_s(L^2(\Lambda))$ is a $n$-particle state satisfying \eqref{eq:condensationaprioricondition} and $\one_{[0,\mathcal{M}]} (n_+^L)\Psi= \Psi$ then we have
\begin{equation}
\langle \Psi\,|\, \mathcal Q_3^{\rm{low}}  \Psi \rangle \geq \langle \Psi \,|\, \mathcal Q_3^{\rm{high}}\Psi\rangle 
- \frac{b}{100 \ell^2} \langle n_{+} \rangle_\Psi,
\end{equation}
where
\begin{equation}
\mathcal Q^{\rm{high}}_3 = \frac{\ell^2}{(2\pi)^4}  \int_{\mathcal{P}_H \times \mathbb{R}^2} f_L(p) \widehat{W}_1(k)  a^{\dagger}_0 \widetilde{a}_p^{\dagger} a_{p-k} a_k \dd k \dd p  + h.c.,
\end{equation}
with $\mathcal{P}_H$ defined in \eqref{eq:defPL_PH}.
\end{lemma}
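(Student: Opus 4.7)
The plan is to write $\mathcal{Q}_3^{\rm{low}} - \mathcal{Q}_3^{\rm{high}} = D + D^\dagger$ where
\[
D = \frac{\ell^2}{(2\pi)^4} \int_{\mathcal{P}_H^c} \dd k \int_{\mathbb{R}^2} \dd p\, f_L(p) \widehat{W}_1(k)\, a_0^\dagger \widetilde{a}_p^\dagger a_{p-k} a_k,
\]
and to show $|\langle D + D^\dagger \rangle_\Psi| \leq \tfrac{b}{100\ell^2} \langle n_+ \rangle_\Psi$ via Cauchy--Schwarz in momentum, combined with the low-momentum number control supplied by Lemma~\ref{lem:numbercontrolhigh}.

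First, I would observe that $a_0$ commutes with each of $\widetilde{a}_p^\dagger$, $a_{p-k}$, and $a_k$ — because the latter three are built from $Q$-projected functions, orthogonal to the constant mode defining $a_0$ — so each matrix element of the integrand rewrites as $\langle \widetilde{a}_p a_0 \Psi, a_{p-k} a_k \Psi \rangle$. Applying Cauchy--Schwarz on this inner product with a weight $\alpha > 0$ and integrating in $(p,k)$, the left factor involves $\|\widetilde{a}_p a_0 \Psi\|^2 = \langle \widetilde{a}_p^\dagger \widetilde{a}_p n_0 \rangle_\Psi \leq n \langle \widetilde{a}_p^\dagger \widetilde{a}_p \rangle_\Psi$, which multiplied by $f_L(p)^2$ and integrated in $p$ is bounded by $C\rho_\mu \mathcal{M}$ via \eqref{lem81:1} (applicable because $\supp f_L$ lies in the low-momentum region by the parameter relations); paired with $\int_{\mathcal{P}_H^c} |\widehat{W}_1(k)|^2 \dd k \leq C K_H^2 \delta^2/\ell^2$, coming from $\|\widehat{W}_1\|_\infty \leq C\delta$ and the area of $\mathcal{P}_H^c$, this factor is of order $\rho_\mu \mathcal{M} K_H^2 \delta^2/\ell^2$. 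The right factor $\iint \|a_{p-k} a_k \Psi\|^2 \dd p\, \dd k$ is estimated, after the change $p \mapsto p-k$, using the operator bound $\int a_{p'}^\dagger a_{p'}\, \dd p' \leq \tfrac{(2\pi)^2}{\ell^2} n_+$ and then controlling $\int_{\mathcal{P}_H^c} \langle a_k^\dagger a_k \rangle_\Psi \dd k$ either trivially by $C\langle n_+ \rangle_\Psi/\ell^2$ or by $C\mathcal{M}/\ell^2$ through Lemma~\ref{lem:numbercontrolhigh}; taking the geometric mean of the two produces the estimate $C\rho_\mu \sqrt{\mathcal{M} \langle n_+ \rangle_\Psi}/\ell^2$.

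Combining the two factors through Cauchy--Schwarz in $(p,k)$ then yields $|\langle D \rangle_\Psi| \leq C\rho_\mu K_H \delta \sqrt{\mathcal{M} \langle n_+ \rangle_\Psi}$, after which AM--GM with the weight $\alpha$ tuned so that the coefficient in front of $\langle n_+ \rangle_\Psi$ is exactly $\tfrac{b}{200 \ell^2}$ extracts the desired term and leaves behind a residual of schematic size $\rho_\mu^2 K_H^2 \delta^2 \mathcal{M} \ell^2/b \sim \rho_\mu K_\ell^2 K_H^2 Y \mathcal{M}/b$ (using $\ell^2 = K_\ell^2 \rho_\mu^{-1} Y^{-1}$ and $\delta \leq 2Y$). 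The main obstacle I expect to be verifying that this residual is indeed negligible: this is precisely what the parameter relations \eqref{cond:3Qloc}, \eqref{eq:Kl_KN_number} and \eqref{eq:rel_KH_K_M} of Appendix~\ref{app:parameters}, together with the assumption $\|v\|_1 \leq Y^{-1/8}$ and the regime $\rho_\mu a^2 \leq C^{-1}$, are designed to force, so that the overall bound collapses to the clean statement of the lemma.
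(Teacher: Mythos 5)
Your decomposition $\mathcal Q_3^{\rm low}-\mathcal Q_3^{\rm high}=D+D^\dagger$, the use of $\|\widehat W_1\|_\infty\le C\delta$, and Lemma~\ref{lem:numbercontrolhigh} are all the right ingredients, but the Cauchy--Schwarz you run is the wrong one, and the resulting gap is not repairable within the statement you are asked to prove. You take an $L^2(\dd p\,\dd k)$ Cauchy--Schwarz across the whole integral (hence the squared $f_L(p)^2$, $|\widehat W_1(k)|^2$), which yields a \emph{product} of two integrals of which only one carries a power of $\langle n_+\rangle_\Psi$; the final estimate is then proportional to $\langle n_+\rangle_\Psi^{1/2}$ (or $\langle n_+\rangle_\Psi^{1/4}$ with the ``geometric mean'' bound for the right factor). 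The lemma, however, demands a bound that is exactly of the form $-\tfrac{b}{100\ell^2}\langle n_+\rangle_\Psi$, with no additive remainder. Your closing AM--GM $A\sqrt{\langle n_+\rangle_\Psi}\le\tfrac{b}{200\ell^2}\langle n_+\rangle_\Psi+\tfrac{50A^2\ell^2}{b}$ leaves behind a $\langle n_+\rangle$-independent residual of size $\sim\rho_\mu K_\ell^2K_H^2Y\mathcal M$. Relation \eqref{cond:3Qloc} makes $K_\ell^2K_H^2\mathcal M Y$ small only by a tiny power of $Y$ (about $Y^{15/1024}$ for the explicit parameter choice), so the residual is $\sim\rho_\mu Y^{o(1)}$, which is vastly larger than the target error scale $\rho_\mu^2\ell^2\delta^{2+\eta}\sim\rho_\mu K_\ell^2Y^{1+\eta}$, and in any case the lemma permits no additive remainder at all.

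The correct structure is operator Cauchy--Schwarz (AM--GM on operators) \emph{pointwise} in $(p,k)$ with a scalar weight $\varepsilon>0$: for each momentum pair,
\begin{equation*}
\widehat W_1(k)\bigl(a_0^\dagger\widetilde a_p^\dagger a_{p-k}a_k+\text{h.c.}\bigr)\ \geq\ -|\widehat W_1(k)|\Bigl(\varepsilon\,\widetilde a_p^\dagger a_0^\dagger a_0\widetilde a_p+\varepsilon^{-1}a_k^\dagger a_{p-k}^\dagger a_{p-k}a_k\Bigr).
\end{equation*}
Integrated, \emph{both} resulting terms are linear in $\langle n_+\rangle_\Psi$: the $\varepsilon$-term because $a_0^\dagger a_0\leq n$, $\int f_L(p)\langle\widetilde a_p^\dagger\widetilde a_p\rangle_\Psi\dd p\le C\ell^{-2}\langle n_+\rangle_\Psi$ trivially, and $|\mathcal P_H^c|\sim K_H^2\ell^{-2}$; the $\varepsilon^{-1}$-term because Lemma~\ref{lem:numbercontrolhigh} applied to $a_k\Psi$ gives $\ell^2\int f_L(p)\langle a_k^\dagger a_{p-k}^\dagger a_{p-k}a_k\rangle_\Psi\dd p\le C\mathcal M\langle a_k^\dagger a_k\rangle_\Psi$, which then pairs with $\int_{\mathcal P_H^c}\langle a_k^\dagger a_k\rangle_\Psi\dd k\le C\ell^{-2}\langle n_+\rangle_\Psi$. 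Optimizing $\varepsilon=K_H^{-1}\mathcal M^{1/2}n^{-1/2}$ yields a bound $-C\delta n^{1/2}K_H\mathcal M^{1/2}\ell^{-2}\langle n_+\rangle_\Psi$, cleanly proportional to $\langle n_+\rangle_\Psi$, whose coefficient is $\le\tfrac{b}{100}$ precisely because \eqref{cond:3Qloc} forces $Y^{1/2}K_\ell K_H\mathcal M^{1/2}\ll1$. The pointwise weight $\varepsilon$ is what lets $\langle n_+\rangle$ appear in both terms simultaneously, which the integral Cauchy--Schwarz split cannot achieve.
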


\begin{proof} First note that
\begin{equation}
\langle \Psi\,|\, (\mathcal Q_3^{\text{low}} -  \mathcal Q_3^{\text{high}} )\Psi\rangle = \frac{\ell^2}{(2\pi)^4}  \int_{\mathcal P^c_H \times \mathbb{R}^2} f_L(p) \widehat{W}_1(k)  \langle \Psi \,|\, a^{\dagger}_0 \widetilde{a}_p^{\dagger} a_{p-k} a_k  \Psi \rangle \dd k \dd p + h.c.
\end{equation}
For any $\varepsilon >0$, using Cauchy-Schwarz on the creation and annihilation operators,
\begin{multline}
\langle \Psi\,|\, (\mathcal Q_3^{\text{low}} -  \mathcal Q_3^{\text{high}} ) \Psi \rangle  \\
\geq  - C \delta \ell^2  \int_{\mathcal P^c_H \times \mathbb{R}^2} f_L(p) \Big(  \varepsilon \langle \Psi\,|\, \widetilde{a}^{\dagger}_p a^{\dagger}_0  a_0 \widetilde{a}_p \Psi \rangle + \varepsilon^{-1} \langle \Psi\,|\,  a^{\dagger}_k a^{\dagger}_{p-k} a_{p-k} a_k \Psi \rangle \Big) \dd k \dd p , \label{eq:original}
\end{multline}
where we used the fact that $\|\widehat{W}_1\|_{\infty} \leq \|W_1\|_1 \leq C \delta$ (from Lemma~\ref{lem:gomegaapprox}). We now use the following inequalities, obtained by Lemma~\ref{lem:numbercontrolhigh} and bounding $f_L$ by $1$,
\begin{equation}
\ell^2\int_{\mathcal{P}_H^c \times \mathbb{R}^2} \, f_L(p) \langle \Psi\,|\, \widetilde{a}^{\dagger}_p a^{\dagger}_0  a_0 \widetilde{a}_p \Psi \rangle \dd k \dd p \leq n \langle n_+ \rangle_{\Psi} \int_{\mathcal{P}_H^c} \dd k  = \frac{n \langle n_+\rangle_{\Psi}}{\ell^2} K_H^2,
\end{equation}
\begin{equation}
\ell^4\int_{\mathcal{P}_H^c \times \mathbb{R}^2} \, f_L(p)\langle \Psi\,|\,  a^{\dagger}_k a^{\dagger}_{p-k} a_{p-k} a_k \Psi \rangle \dd k \dd p  \leq C\mathcal{M} \langle n_+\rangle_{\Psi}.
\end{equation}
Therefore, applying to \eqref{eq:original} we obtain
\begin{equation}
\langle \Psi\,|\, (\mathcal Q_3^{\text{low}} -  \mathcal Q_3^{\text{high}} )\Psi \rangle \geq -C\delta \frac{\langle n_+\rangle_{\Psi}}{\ell^2} n \Big(\varepsilon K_H^2 + \varepsilon^{-1} \frac{\mathcal{M}}{n}\Big).
\end{equation}
Choosing $\varepsilon = K_H^{-1} \frac{\mathcal{M}^{1/2}}{n^{1/2}}$, we obtain
\begin{equation}
\langle \Psi\,|\, (\mathcal Q_3^{\text{low}} -  \mathcal Q_3^{\text{high}} ) \Psi \rangle \geq -C\delta \frac{\langle n_+\rangle_{\Psi}}{\ell^2} n \frac{K_H \mathcal{M}^{1/2}}{n^{1/2}}.
\end{equation}
We use Theorem~\ref{thm:excitationsbound} and \eqref{eq:Kl_KN_number} to bound $n^{1/2}$ by $2 \rho_\mu^{1/2} \ell$ and get
\begin{equation*}
\langle \Psi\,|\, (\mathcal Q_3^{\text{low}} - \mathcal Q_3^{\text{high}} ) \Psi \rangle \geq -C\delta \rho_\mu^{1/2} \ell K_H \mathcal M^{1/2} \frac{\langle n_+\rangle_{\Psi}}{\ell^2}.
\end{equation*}
By the assumption \eqref{cond:3Qloc} the error can be absorbed in a small fraction of the spectral gap.
\end{proof}

We are ready to state a bound for the second quantized Hamiltonian.

\begin{proposition}\label{propos:secondquant}
Assume $R \ll (\rho_\mu \delta)^{-1/2}$ and the relations of Appendix~\ref{app:parameters}
between the parameters. Let $\Psi$ be a normalized $n$-particle state satisfying \eqref{eq:condensationaprioricondition} and $\Psi = \one_{[0,\mathcal{M}]}(n_+^L)\Psi$. Then 
\begin{equation}\label{eq:Hlambdarhowhat}
\langle \Psi\,|\, \mathcal{H}_{\Lambda}(\rho_{\mu}) \Psi \rangle \geq \langle \Psi\,|\, \mathcal{H}_{\Lambda}^{\rm{2nd}}(\rho_{\mu}) \Psi \rangle -C \ell^2 \rho_{\mu}^2 \delta \Big( d^{2M-2} + R^2 \ell^{-2}  \Big),
\end{equation}
where
\begin{align}
\mathcal{H}_{\Lambda}^{\rm{2nd}} &:= \frac{\ell^2}{(2\pi)^2} \int_{\mathbb{R}^2} (1-\varepsilon_{N}) \tau(k) a^{\dagger}_k a_k \dd k + \frac{b}{2\ell^2} n_+ +b \frac{\varepsilon_T}{8 d^2 \ell^2 } n_+^H +b \frac{\varepsilon_T n_0 n_+^H}{16 d^2 \ell^2 (\rmu \ell^2)}  \label{eq:Tquantized}  \\
&\quad + \frac{1}{2\ell^2} a_0^{\dagger}a_0^{\dagger}a_0 a_0  (\widehat{g}_0 + \widehat{g\omega}(0)) -\rho_{\mu} a_0^{\dagger}a_0 \widehat{g}_0  \label{eq:leadquantized}\\
&\quad + \Big( \Big( \frac{1}{\ell^2} a_0^{\dagger}a_0 - \rho_{\mu}\Big) \widehat{W}_1(0) \frac{1}{(2\pi)^2} \int_{\mathbb{R}^2} \widehat{\chi}_{\Lambda}(k) a^{\dagger}_k a_0 \dd k + h.c. \Big)   \label{eq:charquantized} \\ 
&\quad +\Big(  \frac{1}{\ell^2} a_0^{\dagger}a_0 \widehat{\omega W_1}(0) \frac{1}{(2\pi)^2} \int_{\mathbb{R}^2} \widehat{\chi}_{\Lambda}(k) a^{\dagger}_k a_0 \dd k + h.c. \Big)\label{eq:char2quantized}\\
&\quad +\mathcal Q_2^{\rm{rest}} +  \mathcal Q_3^{\rm{high}}   \label{eq:Q2Q3quantized}\\
&\quad + \Big(\Big(\frac{1}{\ell^2}a^{\dagger}_0 a_0  - \rho_{\mu}\Big)\widehat{W}_1(0) + \frac{1}{\ell^2}a^{\dagger}_0 a_0 \widehat{W_1 \omega}(0)\Big)\frac{\ell^2}{(2\pi)^2} \int_{\mathbb{R}^2}  a^{\dagger}_k a_k \dd k, \label{eq:n+quantized}
\end{align}
with
\begin{align*}
\mathcal Q_2^{\rm{rest}}=  \frac{1}{(2\pi)^2} \int_{\mathbb{R}^2} (\widehat{W}_1(k) &+ \widehat{(W_1\omega)}(k)) a^{\dagger}_0 a^{\dagger}_k a_k a_0 \dd k \\
&+\frac{1}{2} \int_{\mathbb{R}^2} \widehat{W}_1(k) \Big( a^{\dagger}_0 a^{\dagger}_0 a_k a_{-k} + a^{\dagger}_k a^{\dagger}_{-k} a_0 a_0\Big) \dd k.
\end{align*}
\end{proposition}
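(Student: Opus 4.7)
The strategy is essentially bookkeeping: combine Corollary~\ref{cor:H'expression}, the kinetic localization from Lemma~\ref{lem:LocKinEnLarge}, and the two $\mathcal Q_3$ reduction Lemmas~\ref{lem:Q3loc}~and~\ref{lem:Q3highloc}, and then translate everything into creation/annihilation operators. First I would split the single-particle kinetic energy $\sum_j \mathcal T^{(j)}$ into four pieces by writing the coefficient of $\mathcal T^{\text{Neu,l}}$ as $b/\ell^2 = b/(2\ell^2) + b/(4\ell^2) + b/(4\ell^2)$ and the coefficient of $\mathcal T^{\text{gap}}$ as $b\varepsilon_T/(d\ell)^2 = b\varepsilon_T/(8d^2\ell^2) + b\varepsilon_T/(16d^2\ell^2)(n_0/(\rho_\mu\ell^2)) + (\text{remainder})$; by the a priori bound \eqref{eq:estimatepriori.n} from Theorem~\ref{thm:excitationsbound} we have $n_0/(\rho_\mu \ell^2) \leq 2$, so the sum of the first two summands is at most the full $\mathcal T^{\text{gap}}$ coefficient up to factors of order one. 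The four reserved quantities become exactly the kinetic terms in \eqref{eq:Tquantized}, while the leftover fractions of $\mathcal T^{\text{Neu,l}}$ and $\mathcal T^{\text{gap}}$ are the ones that will be spent on localizing $\mathcal Q_3^{\text{ren}}$.

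Next, I would apply Lemma~\ref{lem:Q3loc} to the $\mathcal Q_3^{\text{ren}}$ contribution from Corollary~\ref{cor:H'expression}. The inequality in Lemma~\ref{lem:Q3loc} consumes $\frac14 \mathcal Q_4^{\text{ren}}$ (which is nonnegative and available in \eqref{eq:expressQ3Q4}) and a $b/100$ fraction of both $n_+/\ell^2$ and $\varepsilon_T n_+^H/(d\ell)^2$; these fractions are covered by the reserved pieces of $\mathcal T^{\text{Neu,l}}$ and $\mathcal T^{\text{gap}}$. The resulting $\mathcal Q_3^{\text{low}}$ is then further reduced to $\mathcal Q_3^{\text{high}}$ via Lemma~\ref{lem:Q3highloc}, spending another $b/(100\ell^2) n_+$, again available in the reserved fraction of $\mathcal T^{\text{Neu,l}}$. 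The errors produced are all bounded by $C \ell^2 \rho_\mu^2 \delta(d^{2M-2} + R^2/\ell^2)$, matching the remainder on the right-hand side of \eqref{eq:Hlambdarhowhat}. Also, since $\mathcal Q_4^{\text{ren}}\geq 0$ we may discard the remaining $\tfrac34 \mathcal Q_4^{\text{ren}}$.

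The final step is to translate the remaining terms from position to momentum space. The leading term $\tfrac{n_0(n_0-1)}{2|\Lambda|}(\widehat g_0 + \widehat{g\omega}(0)) - \rho_\mu n_0 \widehat g_0$ becomes \eqref{eq:leadquantized} upon identifying $n_0 = a_0^\dagger a_0$. The two $Q\cdot P$ characteristic terms \eqref{eq:expresschar1}--\eqref{eq:expresschar2} become \eqref{eq:charquantized}--\eqref{eq:char2quantized} by expanding $Q_i \chi_\Lambda(x_i) (F \ast \chi_\Lambda)(x_i) P_i$ in the basis $\{e^{ikx}\chi_\Lambda\}_k$, so that the convolution gives Fourier multiplication by $\widehat F(0)\widehat \chi_\Lambda(k)$ (after observing that $F\ast\chi_\Lambda$ is well approximated by $\widehat F(0)\chi_\Lambda$ on the support of $\chi_\Lambda$; the discrepancy is absorbed into the remainder using Lemma~\ref{lem:gomegaapprox}). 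The two $\mathcal Q_2^{\text{ren}}$ pieces $PQwQP$ and $PPw_1QQ + \text{h.c.}$ become $\mathcal Q_2^{\text{rest}}$ after writing $w_1,w_2$ as their Fourier transforms and pulling out $a_0^\#$ operators. The term \eqref{eq:expressQ2char} involving the characteristic operator $Q_j \chi_\Lambda(x_j)^2 Q_j$ becomes the $n_+$-proportional term \eqref{eq:n+quantized} via $\sum_j Q_j \chi_\Lambda^2 Q_j = \ell^2(2\pi)^{-2}\int a_k^\dagger a_k\,\dd k$. Finally, $\mathcal Q_3^{\text{high}}$ is exactly the Fourier representation of the operator introduced in Lemma~\ref{lem:Q3highloc}.

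The only nontrivial obstacle is keeping track of the bookkeeping: the coefficients in the four pieces of the kinetic energy must be arranged so that the error terms from Lemmas~\ref{lem:Q3loc} and~\ref{lem:Q3highloc} are exactly absorbed, and the approximations $F\ast\chi_\Lambda \approx \widehat F(0)\chi_\Lambda$ used when passing from the convolution form \eqref{eq:expresschar1}--\eqref{eq:expresschar2} to the Fourier form \eqref{eq:charquantized}--\eqref{eq:char2quantized} must be shown to produce errors of order $C\rho_\mu^2 \ell^2 \delta R^2/\ell^2$, which they do by Lemma~\ref{lem:gomegaapprox} together with the bound $\langle n_+\rangle_\Psi \leq C\rho_\mu\ell^2 Y|\log Y| K_B^2 K_\ell^2$ from Theorem~\ref{thm:excitationsbound}. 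Once all the fractions are correctly allocated and the residues are collected, the claimed inequality \eqref{eq:Hlambdarhowhat} follows directly.
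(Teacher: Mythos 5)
Your proposal is correct and follows essentially the same route as the paper's own proof: start from Corollary~\ref{cor:H'expression}, discard the positive Neumann pieces of the kinetic energy, split the Neumann-gap and spectral-gap coefficients into a reserved part (giving the four kinetic terms in \eqref{eq:Tquantized}, using $n_0 \le 2\rho_\mu\ell^2$ for the last one) and a leftover part used to pay for Lemmas~\ref{lem:Q3loc} and~\ref{lem:Q3highloc}, sacrifice $\tfrac14 \mathcal Q_4^{\rm{ren}}$ to Lemma~\ref{lem:Q3loc}, quantize the remaining terms, and replace $\widehat W_1(k)\widehat\chi_\Lambda(k)$ by $\widehat W_1(0)\widehat\chi_\Lambda(k)$ via \eqref{eq:fchi_approx}. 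Two small points worth tightening: you do not explicitly account for the residual term $-C(\rho_\mu+\rho_0)\delta(R/\ell)^2 n_+$ from \eqref{eq:expressQ3Q4}, which the paper absorbs in a further fraction of $\mathcal T^{\text{Neu,l}}$ using the hypothesis $R\ll(\rho_\mu\delta)^{-1/2}$; and for the $\widehat W_1(k)\to\widehat W_1(0)$ replacement the crude bound $\langle n_+\rangle_\Psi\le n\le 2\rho_\mu\ell^2$ already suffices, so the sharper estimate from Theorem~\ref{thm:excitationsbound} is not actually needed at that step.
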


\begin{proof}
We use the lower bound for $\mathcal{H}_{\Lambda}(\rho_{\mu})$ from Corollary~\ref{cor:H'expression}. 
First of all, in the kinetic energy expression \eqref{eq:Texplicitexpress} we remove the positive parts depending on the Neumann Laplacian, namely $\varepsilon_N(-\Delta_{\mathcal{N}})$ and $\mathcal{T}^{\text{Neu,s}}$. Using the quantization, we obtain from \eqref{eq:Texplicitexpress} the expressions in \eqref{eq:Tquantized} with the main kinetic energy term and the spectral gaps. We bounded part of the spectral gap to get the last term in \eqref{eq:Tquantized} using $n_0 \leq 2 \rho_\mu \ell^2$ (which follows from \eqref{eq:estimatepriori.n} and \eqref{eq:Kl_KN_number}). This term will be useful later (in particular in the proof of Lemma~\ref{lem:technicalest_rhofar}).

The expressions \eqref{eq:leadquantized}, \eqref{eq:charquantized}, \eqref{eq:char2quantized}, $\mathcal Q_2^{\rm{rest}}$ and \eqref{eq:n+quantized} are obtained from \eqref{eq:kin+lead}, \eqref{eq:expresschar1}, \eqref{eq:expresschar2}, \eqref{eq:expressQ2} and \eqref{eq:expressQ2char} respectively, via a straightforward application of the quantization rules. Note that in \eqref{eq:charquantized} and \eqref{eq:char2quantized} we have changed a $\widehat W_1(k)$ (resp. $\widehat{\omega W_1}(k)$) into $\widehat W_1(0)$ (resp. $\widehat{\omega W_1}(0)$). This can be justified by using \eqref{eq:fchi_approx} in \eqref{eq:expresschar1} and \eqref{eq:expresschar2}, the error being of order $R^2 \rho_\mu^2 \delta$.
We can reabsorb the term 
\begin{equation*}
- C(\rho_{\mu} + \rho_0) \delta R^2 \frac{n_+}{\ell^2},
\end{equation*} 
in a fraction of the spectral gap because $R \ll (\rho_\mu \delta)^{-1/2}$.
Let us observe that thanks to Lemma~\ref{lem:Q3loc} we can replace $\mathcal Q_3^{\rm{ren}}+ \frac{1}{4} \mathcal Q_4^{\rm{ren}}$ by $\mathcal Q_3^{\rm{low}}$ in $\mathcal{H}_{\Lambda}(\rho_{\mu})$. Part of the error is absorbed in the spectral gap, other part appears in \eqref{eq:Hlambdarhowhat}. Finally we change $\mathcal Q_3^{\rm{low}}$ into $\mathcal Q_3^{\rm{high}}$ using Lemma~\ref{lem:Q3highloc}, the error being absorbed in a fraction of the spectral gap again.
\end{proof}

\subsection{$c\,$-number substitution}\label{subsec:c-number}

In this section we show how the energy can be bounded if we minimize over a specific class of coherent states, which are eigenvectors for the annihilation operator of the condensate. In this way we can turn the action of the condensate operators in the form of multiplication per complex numbers. Let us define 
\begin{equation}
|z\rangle = e^{-\big(\frac{|z|^2}{2} + z a^{\dagger}_0\big)}\, \Omega,
\end{equation}
for any $z \in \mathbb{C}$. As anticipated, we have
\begin{equation}
a_0 |z\rangle = z\, |z \rangle .
\end{equation}
Given any state $\Psi$ we define the $z$-dependent state
\begin{equation}
\Phi(z) := \langle z\,|\, \Psi\rangle,
\end{equation}
obtained by the partial inner product in $\mathscr{F}_s(\mathrm{Ran}P)$. One can verify that these states generate the space $\mathscr{F}_s (\mathrm{Ran}Q)$. Moreover,
\begin{equation}
1 = \frac{1}{\pi} \int_{\mathbb{C}} |z\rangle \langle z| \, \dd z.
\end{equation}
We define the following $z$-dependent density,
\begin{equation}
\rho_z := \frac{|z|^2}{\ell^2}, 
\end{equation}
and $z$-dependent Hamiltonian,
\begin{align} \label{def:Kz}
\mathcal{K}(z) &= \frac{1}{2}\rho_z^2 \ell^2 (\widehat{g}_0 + \widehat{g\omega}(0)) - \rho_{\mu}\rho_z \widehat{g}_0\ell^2 \\ 
&\quad+\mathcal{K}^{\text{Bog}} + \frac{b}{2\ell^2} n_+ + \frac{\varepsilon_T b}{8 d^2 \ell^2 } n_+^H +b \frac{\varepsilon_T |z|^2 n_+^H}{16 d^2 \ell^2 (\rho_{\mu} \ell^2)} + \varepsilon_R (\rho_{\mu} - \rho_z )^2 \delta \ell^2 \label{eq:epsilonR} \\
&\quad+ (\rho_{z} - \rho_{\mu}) \widehat{W}_1(0) \frac{\ell^2}{(2\pi)^2} \int_{\mathbb{R}^2}  a^{\dagger}_k a_k \dd k  + \mathcal Q_1^{\rm{ex}}(z) + \mathcal Q_2^{\rm{ex}}(z) + \mathcal Q_3(z),
\end{align}
where $\varepsilon_R \ll 1$ is fixed in Appendix~\ref{app:parameters}, and
\begin{align*}
\mathcal{K}^{\text{Bog}} := \frac{\ell^2}{2(2\pi)^2}\int_{\mathbb{R}^2} \Big( \mathcal{A}&(k) (a^{\dagger}_k a_{k}  +a^{\dagger}_{-k} a_{-k})+  \mathcal{B}(k) (a_k a_{-k} + a^{\dagger}_k a^{\dagger}_{-k}) \\ &+\, \mathcal{C}(k) (a^{\dagger}_k + a^{\dagger}_{-k} + a_k + a_{-k}) \Big) \dd k,
\intertext{with}
\mathcal{A}(k) &:= (1-\varepsilon_N)  \tau(k) + \mathcal{B}(k), \qquad \mathcal{B}(k) := \rho_z \widehat{W}_1(k),\\
\mathcal{C}(k) &:= \frac{(\rho_z - \rho_{\mu})}{\ell^2} \widehat{W}_1(0)  \widehat{\chi}_{\Lambda}(k)  z,\\
\mathcal Q_1^{\rm{ex}}(z) &:=  \rho_z \widehat{(\omega W_1)}(0) \frac{1}{(2\pi)^2} \int_{\mathbb{R}^2} \widehat{\chi}_{\Lambda}(k) a^{\dagger}_k z  \dd k+ h.c., \\
\mathcal Q_2^{\rm{ex}}(z) &:= \frac{\ell^2}{(2 \pi)^2} \rho_z \int_{\mathbb{R}^2} (\widehat{\omega W_1}(0) + \widehat{\omega W_1}(k) ) a^{\dagger}_k a_k \dd k, \\
\mathcal Q_3(z) &:=   \frac{\ell^2}{(2\pi)^4}  \int_{\mathcal{P}_H \times \mathbb{R}^2} f_L(p) \widehat{W}_1(k)  \Big(\bar{z} \widetilde{a}_p^{\dagger} a_{p-k} a_k + h.c. \Big) \dd k \dd p.
\end{align*}
With these notations, the following theorem holds. Recall that $\mathcal{H}_{\Lambda}^{\rm{2nd}}$ is given by Proposition~\ref{propos:secondquant}.
\begin{theorem}\label{thm:cnumber}
Assume $R \leq \ell$ and \eqref{eq:Kl_KN_number}. For any normalized $n$-particle state $\Psi$ satisfying $\Psi = \one_{[0,\mathcal{M}]}(n_+^L) \Psi$ and \eqref{eq:condensationaprioricondition} we have
\begin{equation}
\langle \Psi\,|\,\mathcal{H}_{\Lambda}^{\rm{2nd}} \Psi \rangle \geq \inf_{z \in \mathbb{R}_+} \inf_{\Phi} \langle \Phi\,|\, \mathcal{K}(z)  \Phi\rangle - C \rho_{\mu} \delta (1 + \varepsilon_R K_{\ell}^4 K_B^2 |\log Y|), 
\end{equation}
where the second infimum is over all the normalized states in $\mathscr{F}(\mathrm{Ran}Q)$ such that 
\begin{equation}\label{def:statefiniteexcit}
\Phi = \one_{[0,\mathcal{M}]}(n_+^L) \Phi, \qquad \text{and} \qquad \Phi = \one_{[0,2 \rho_{\mu}\ell^2]}(n_+) \Phi.
\end{equation}
\end{theorem}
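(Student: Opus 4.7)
\textbf{Proof plan for Theorem~\ref{thm:cnumber}.}
The main tool is the coherent state decomposition of $\Psi$ with respect to the condensate mode. Using the overcompleteness identity $\one = \pi^{-1}\int_{{\mathbb C}} |z\rangle\langle z|\, \dd z$, I decompose
\[
\Psi = \frac{1}{\pi} \int_{\mathbb C} |z\rangle \otimes \Phi(z)\, \dd z, \qquad \Phi(z) = \langle z | \Psi \rangle \in \mathscr{F}_s(\Ran Q),
\]
so that $\|\Psi\|^2 = \pi^{-1}\int \|\Phi(z)\|^2\,\dd z = 1$. Because $n_+$, $n_+^L$, $n_+^H$ and the operators $a_k, a_k^{\dagger}, \widetilde a_k, \widetilde a_k^{\dagger}$ all commute with $a_0, a_0^\dagger$ (they act on $\Ran Q$), each $\Phi(z)$ automatically inherits the restrictions \eqref{def:statefiniteexcit} that $\Psi$ satisfies. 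Writing the expectation as an iterated integral, I will evaluate $\langle \Psi | \mathcal H_{\Lambda}^{\rm 2nd} \Psi\rangle$ term by term.

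For each monomial appearing in $\mathcal H_{\Lambda}^{\rm 2nd}$, I first put it in normal order with respect to $a_0, a_0^\dagger$, then use $a_0|z\rangle = z|z\rangle$ to replace $a_0 \mapsto z$, $a_0^\dagger \mapsto \bar z$. The already normal-ordered contributions give, cleanly,
\[
\frac{\widehat g_0+\widehat{g\omega}(0)}{2\ell^2} \langle a_0^\dagger a_0^\dagger a_0 a_0\rangle_\Psi -\rho_\mu \widehat g_0 \langle a_0^\dagger a_0\rangle_\Psi
= \frac{1}{\pi}\int\Big[\tfrac{1}{2}\rho_z^2\ell^2(\widehat g_0+\widehat{g\omega}(0))-\rho_\mu\rho_z\widehat g_0\ell^2\Big]\|\Phi(z)\|^2\,\dd z,
\]
together with the expected replacements producing $\mathcal K^{\rm Bog}$, $\mathcal Q_1^{\rm ex}(z)$, $\mathcal Q_2^{\rm ex}(z)$, and $\mathcal Q_3(z)$. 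The terms $a_0^\dagger a_0^\dagger a_k a_{-k}$, $a_k^\dagger a_{-k}^\dagger a_0 a_0$, and $a_0^\dagger a_k^\dagger a_k a_0$ are already normal-ordered in the condensate variable, and hence produce no commutator correction. The only genuinely nontrivial reorderings come from $\mathcal Q_2^{\rm rest}$ and from the terms linear in $a_0$, $a_0^\dagger$ in \eqref{eq:charquantized}--\eqref{eq:char2quantized}; these produce commutator defects of the form $[a_0, a_0^\dagger]=1$ multiplied by quantities like $\widehat W_1(0)$ or $\widehat{W_1\omega}(0)$, and each such defect can be bounded, using $\widehat W_1(0),\widehat{W_1\omega}(0) \leq C\delta$, by $C\rho_\mu\delta$, which yields the first universal error in the statement.

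Next I need to extract the penalty $\varepsilon_R(\rho_\mu-\rho_z)^2\delta\ell^2$. The idea is to rewrite
\[
\tfrac{1}{2}\rho_z^2\ell^2 \widehat g_0 -\rho_\mu\rho_z\widehat g_0\ell^2 = \tfrac{1}{2}\widehat g_0\ell^2(\rho_z-\rho_\mu)^2 - \tfrac{1}{2}\widehat g_0\ell^2\rho_\mu^2,
\]
and absorb a fraction $\varepsilon_R$ of the convex quadratic $(\rho_z-\rho_\mu)^2$ into the definition of $\mathcal K(z)$. The remainder of this quadratic, combined with the cross terms $(\rho_z-\rho_\mu)\widehat W_1(0)\int a_k^\dagger a_k$ and the $c$-number versions of \eqref{eq:charquantized}, is kept inside $\mathcal K(z)$ so that $\mathcal K^{\rm Bog}$ has the stated structure. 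This is where the cost $C\rho_\mu\delta\,\varepsilon_R K_\ell^4 K_B^2|\log Y|$ enters: I bound the deviation of $\rho_z$ from $\rho_\mu$ on the support of $|\Phi(z)|^2$ using the a priori estimate \eqref{eq:estimatepriori.n} and the parameter choice \eqref{eq:Kl_KN_number}, giving $|\rho_z - \rho_\mu|^2 \lesssim \rho_\mu^2 Y |\log Y| K_B^2 K_\ell^2$.

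Finally, after these reductions, the statement of the theorem reads
\[
\langle \Psi | \mathcal H_{\Lambda}^{\rm 2nd} \Psi\rangle \geq \frac{1}{\pi}\int_{\mathbb C} \langle \Phi(z) | \mathcal K(z) \Phi(z)\rangle\,\dd z - C\rho_\mu\delta(1+\varepsilon_R K_\ell^4 K_B^2|\log Y|).
\]
Since $\mathcal K(z)$ depends on $z$ only through $|z|$ (hence only through $\rho_z$), and since each $\Phi(z)$ satisfies \eqref{def:statefiniteexcit}, I pointwise bound $\langle \Phi(z)|\mathcal K(z)\Phi(z)\rangle \geq \|\Phi(z)\|^2 \inf_{z'\in{\mathbb R}_+}\inf_\Phi \langle \Phi | \mathcal K(z')\Phi\rangle$, then integrate using $\pi^{-1}\int\|\Phi(z)\|^2 \dd z = 1$. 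The main obstacle is the careful bookkeeping of the commutator corrections: every substitution $a_0^\dagger a_0 \leftrightsquigarrow |z|^2$ in a quartic monomial produces subleading linear and constant terms which could potentially be as large as $\rho_\mu\widehat v_0 \sim \rho_\mu Y^{-1/8}$, dangerously bigger than the target precision; but a careful examination shows that each such correction only comes with $\widehat W_1$ or $\widehat{W_1\omega}$ weights, whose integrals are $O(\delta)$, so the total is of order $\rho_\mu\delta$, as required.
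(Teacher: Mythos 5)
Your overall strategy---coherent state decomposition with the resolution of identity $\one = \pi^{-1}\int|z\rangle\langle z|\,\dd z$, partial inner product $\Phi(z)=\langle z|\Psi\rangle$, and $c$-number substitution---is the paper's approach, and your argument that the conditions \eqref{def:statefiniteexcit} are inherited by $\Phi(z)$ (from commutativity of $n_+$, $n_+^L$ with $a_0,a_0^\dagger$, together with $n_+\leq n\leq 2\rho_\mu\ell^2$ which follows from \eqref{eq:estimatepriori.n} and \eqref{eq:Kl_KN_number}) is correct. The final pointwise bound and integration step is also fine. However, two places in the middle do not stand up.

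First, your claimed ``clean'' equality for the leading term is not an equality. The coherent-state substitution requires anti-normal order (all $a_0$'s to the left of all $a_0^\dagger$'s), and under this reordering $a_0^\dagger a_0^\dagger a_0 a_0 \to |z|^4 - 4|z|^2 + 2$ and $a_0^\dagger a_0 \to |z|^2-1$, not $|z|^4$ and $|z|^2$. So the term you display as exact already carries commutator corrections of size $\rho_\mu\delta$ (which is indeed what the paper says the substitution $a_0^\dagger a_0 \mapsto |z|^2 - 1$ produces, via $\rho_\mu\delta = \rho_\mu^2\ell^2\delta^2 K_\ell^{-2}$). This is a wording issue only and does not change the final order of error, but it does undermine the distinction you draw between ``genuinely nontrivial'' and ``clean'' reorderings.

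Second, and more substantively, your handling of the penalty $\varepsilon_R(\rho_\mu-\rho_z)^2\delta\ell^2$ does not match the theorem you are asked to prove. You propose to ``absorb a fraction $\varepsilon_R$ of the convex quadratic $(\rho_z-\rho_\mu)^2$'' already present in $\mathcal{Q}_0^{\rm ren}$; this gives a free positive term but also produces a $\mathcal{K}(z)$ with coefficient $\frac{1}{2}\widehat g_0 - \varepsilon_R\delta$ in front of the quadratic, whereas the $\mathcal{K}(z)$ defined in \eqref{def:Kz} retains the \emph{full} $\frac{1}{2}\widehat g_0$ \emph{and} an additional $\varepsilon_R(\rho_\mu-\rho_z)^2\delta\ell^2$. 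The paper instead adds and subtracts the operator $\varepsilon_R(\rho_\mu - n_0\ell^{-2})^2\delta\ell^2$ on $\mathcal{H}_\Lambda^{\rm 2nd}$, carries the added term through the $c$-number substitution, and bounds the \emph{expectation} of the subtracted operator using $n_0 = n - n_+$, the elementary inequality $n_+^2\leq n\, n_+$, and the a priori bounds \eqref{eq:estimatepriori.n}, \eqref{eq:condensationestimate}. This is precisely where the factor $\varepsilon_R K_\ell^4 K_B^2|\log Y|$ in the error comes from. Your alternative justification---``bound the deviation of $\rho_z$ from $\rho_\mu$ on the support of $|\Phi(z)|^2$''---is not rigorous: $z$ is a free integration variable running over all of $\mathbb{C}$, and the a priori bounds constrain the \emph{weight} $\|\Phi(z)\|^2$, not the value $\rho_z$ pointwise. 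You would need to formulate this at the operator level (as an expectation value in $\Psi$), which is exactly the paper's add-and-subtract argument.
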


\begin{remark}
In \eqref{eq:epsilonR} we used a fraction of the spectral gap to make the second to last term appear. Since we have an a priori control only on $n_+^L$, this is going to be useful in Section~\ref{subsec:rhofar} in order to control a contribution coming from the high momenta. The last positive term of the line is needed to control a negative term in Theorem~\ref{thm:lower_rho=rho_mu} which depends in a bad way on the radius of the support of the potential.
\end{remark}

\begin{proof}
The theorem is proven via a standard technique of calculating the actions of creation and annihilation operators for the condensate on the coherent state and using its eigenvector properties, for details see \cite[Theorem 8.5]{FS2}. Practically speaking it consists in the formal substitutions 
\begin{equation}
a_0 \mapsto z, \qquad a^{\dagger}_0 \mapsto \overline{z}, \qquad a^{\dagger}_0 a_0 \mapsto |z|^2-1,
\end{equation}
and getting rid of the lower order terms in $|z|$ because they produce errors of the form
\begin{equation}
\rho_{\mu} \delta = \rho_{\mu}^2 \ell^2 \delta^2 K_{\ell}^{-2}.
\end{equation}
In order to make the last term in \eqref{eq:epsilonR} appear, we add and subtract $\varepsilon_R (\rho_{\mu} - n_0 \ell^{-2})^2 \delta \ell^2$ to $\mathcal{H}_{\Lambda}^{\rm{2nd}}$ and estimate the negative contribution, recalling the estimates in Theorem~\ref{thm:excitationrestriction} and that $n_+^2 \leq n n_+$ we get
\begin{align*}
-\varepsilon_R \Big(\rho_{\mu} - \frac{n_0}{\ell^{2}}\Big)^2 \delta \ell^2 &\geq -2 \varepsilon_R \delta \ell^{-2} ((\rho_{\mu}\ell^2 - n)^2 + n_+ n) \\
&\geq - C \varepsilon_R \frac{\delta}{\ell^2} n^2 K_B^2 Y |\log Y| K_{\ell}^2 =  - C \varepsilon_R \rho_{\mu}\delta K_B^2  |\log Y| K_{\ell}^4, 
\end{align*}
which is coherent with the error terms.
\end{proof}

\section{Lower bounds for the Hamiltonian ${\mathcal K}$}\label{sec:bogestimates}
\subsection{Estimate of \texorpdfstring{$\mathcal{K}$}{K} for \texorpdfstring{$\rho_{z}$}{rho\_z} far from \texorpdfstring{$\rho_{\mu}$}{rho\_mu}}\label{subsec:rhofar}

The purpose of this section is to show that for values of $\rho_{z}$ far from the density $\rho_{\mu}$ it is possible to prove a rough estimate on the energy and eliminate these values from the analysis.
This is the content of the proposition below.
We recall that $\mathcal{K}(z)$ is defined in \eqref{def:Kz}, and we use the notations $\varepsilon_{\mathcal M }= \frac{\mathcal M}{\rho_\mu \ell^2}$ and
\begin{equation}\label{eq:deltasdefin}
\delta_1 = \frac{\varepsilon_T^2 \varepsilon_{\mathcal M} }{d^8 K_{\ell}^4 } \Big( 1+ \frac{K_{\ell}^2}{K_{H}^2}\Big),  \quad \delta_2 =  \varepsilon_{\mathcal M}^{1/2}, \quad \delta_3 =  \delta|\log (d s K_{\ell})|  +   \frac{(d K_{\ell})^4}{\varepsilon_T^2}.
\end{equation}
\begin{proposition}\label{propos:rhofar}
Assume the relations between the parameters in Appendix~\ref{app:parameters}. There exists a $C>0$, such that if we have $\rmu a^2 \leq C^{-1}$ and 
\begin{equation}\label{eq:conditionrhofar}
|\rho_{\mu} - \rho_z| \geq C \rho_{\mu} \max\Big( (\delta_1 + \delta_2 + \delta_3)^{1/2}, \delta^{1/2}\Big),
\end{equation}
then for any state $\Phi \in \mathscr{F}(\mathrm{Ran}Q) $ satisfying \eqref{def:statefiniteexcit}, we have
\begin{equation}\label{eq:LargeEnergy}
\langle \Phi \,|\, \mathcal{K}(z) \Phi \rangle \geq -\frac{1}{2} \rho_{\mu}^2 \ell^2 \widehat{g}_0 + 8\pi\Big(\frac 1 2 +2\Gamma +\log\pi\Big) \rho_{\mu}^2 \ell^2 \delta^2 .
\end{equation}
\end{proposition}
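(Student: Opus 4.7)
The strategy is to exhibit a positive lower bound in $\mathcal{K}(z)$ of size $c(\rho_z-\rho_\mu)^2\ell^2\delta$, and to show that all remaining contributions can be controlled by errors of size at most $C\rho_\mu^2\ell^2\delta(\delta_1+\delta_2+\delta_3)$, after which condition \eqref{eq:conditionrhofar} guarantees \eqref{eq:LargeEnergy}. Setting $\delta\rho := \rho_z - \rho_\mu$, the constant piece in $\mathcal{K}(z)$ expands as
\begin{align*}
\tfrac{1}{2}\rho_z^2 \ell^2 (\widehat{g}_0 + \widehat{g\omega}(0)) - \rho_\mu \rho_z \widehat{g}_0 \ell^2
&= -\tfrac{1}{2}\rho_\mu^2 \ell^2 \widehat{g}_0 + \tfrac{1}{2}\rho_\mu^2 \ell^2 \widehat{g\omega}(0) \\
&\quad + \rho_\mu \delta\rho \ell^2 \widehat{g\omega}(0) + \tfrac{1}{2}(\delta\rho)^2 \ell^2 (\widehat{g}_0 + \widehat{g\omega}(0)).
\end{align*}
The first term is the leading order in \eqref{eq:LargeEnergy}; using the integral representation \eqref{eq:gomega_equivalence} and the asymptotic calculation employed in the upper bound (essentially Proposition~\ref{prop:integralapprox2}), one identifies in $\tfrac{1}{2}\rho_\mu^2 \ell^2 \widehat{g\omega}(0)$ the explicit $8\pi(\tfrac{1}{2}+2\Gamma+\log\pi)\rho_\mu^2\ell^2\delta^2$ contribution, up to admissible errors. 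The cross term $\rho_\mu\delta\rho\ell^2\widehat{g\omega}(0)$ is absorbed into $\tfrac{1}{8}(\delta\rho)^2\ell^2\widehat{g}_0 + C\rho_\mu^2\ell^2\delta$ by Cauchy-Schwarz and the bound $|\widehat{g\omega}(0)|\leq C\delta$ from Lemma~\ref{eq:gdecay}. Combined with the penalty $\varepsilon_R(\delta\rho)^2\delta\ell^2$, the remaining $(\delta\rho)^2$ terms yield a net positive contribution of order $(\delta\rho)^2\ell^2\delta$.

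Next I would bound from below $\mathcal{K}^{\text{Bog}}$ together with the spectral gap terms, and the extra pieces $\mathcal{Q}_1^{\mathrm{ex}}$, $\mathcal{Q}_2^{\mathrm{ex}}$, $\mathcal{Q}_3$. For $\mathcal{K}^{\text{Bog}}$, complete the square in the linear $\mathcal{C}(k)$ term via a shift $a_k\mapsto a_k - c_k$ with $c_k$ proportional to $\mathcal{C}(k)/\mathcal{A}(k)$; the positivity $\mathcal{A}(k)\geq |\mathcal{B}(k)|$ makes the shifted quadratic part nonnegative, while the shift produces a constant of order
\begin{equation*}
-C\int \frac{|\mathcal{C}(k)|^2}{\mathcal{A}(k)}\,\dd k
\;\lesssim\; -(\delta\rho)^2\,\rho_z\ell^2\,\delta^2\,\bigl|\log(dsK_\ell)\bigr|\;-\;(\delta\rho)^2\ell^2\delta\cdot\frac{(dK_\ell)^4}{\varepsilon_T^2},
\end{equation*}
which is exactly bounded by $C\rho_\mu^2\ell^2\delta\cdot\delta_3$ after using $|\widehat{W}_1(0)|\leq C\delta$, $|\widehat\chi_\Lambda(k)|^2=\ell^4|\widehat\chi(k\ell)|^2$, and splitting the $k$-integral according to the behavior of $\tau(k)$ (logarithmic for $|k|\in[(ds\ell)^{-1}, (s\ell)^{-1}]$, quadratic for large $|k|$, with the gap $\varepsilon_T/(d\ell)^2$ controlling small $|k|$). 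The terms $\mathcal{Q}_1^{\mathrm{ex}}$ and $\mathcal{Q}_2^{\mathrm{ex}}$ are dealt with by Cauchy-Schwarz against $n_+$, absorbing the quadratic excitation contribution into $\tfrac{b}{2\ell^2}n_+$ and using $\langle n_+\rangle_\Phi \leq C\mathcal{M}$ together with $n_+\leq 2\rho_\mu\ell^2$; this yields a remainder of size $C\rho_\mu^2\ell^2\delta\cdot\delta_2$. Finally, $\mathcal{Q}_3(z)$ is estimated by splitting each monomial $\bar z\,\widetilde a_p^\dagger a_{p-k}a_k$ via Cauchy-Schwarz into a low-momentum factor (controlled by $\varepsilon_T b\,n_+^H/(d^2\ell^2)$ thanks to $p\in\mathcal{P}_H$ and the additional gap in \eqref{eq:epsilonR}) and two general factors absorbed into $n_+$; combining with Lemma~\ref{lem:numbercontrolhigh} and the bound $\|\widehat W_1\|_\infty \leq C\delta$ produces an error of size $C\rho_\mu^2\ell^2\delta\cdot\delta_1$.

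Putting all estimates together we obtain
\begin{equation*}
\langle \Phi, \mathcal{K}(z)\Phi \rangle \;\geq\; -\tfrac{1}{2}\rho_\mu^2\ell^2\widehat g_0 + 8\pi\bigl(\tfrac{1}{2}+2\Gamma+\log\pi\bigr)\rho_\mu^2\ell^2\delta^2 + c(\delta\rho)^2\ell^2\delta - C\rho_\mu^2\ell^2\delta(\delta_1+\delta_2+\delta_3),
\end{equation*}
for universal $c>0$. The hypothesis \eqref{eq:conditionrhofar} states that $(\delta\rho)^2 \geq C^2\rho_\mu^2\max(\delta_1+\delta_2+\delta_3,\delta)$; choosing the constant $C$ in \eqref{eq:conditionrhofar} large enough ensures $c(\delta\rho)^2\ell^2\delta$ dominates $C\rho_\mu^2\ell^2\delta(\delta_1+\delta_2+\delta_3)$, and \eqref{eq:LargeEnergy} follows. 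The main obstacle is the Bogoliubov completion of squares: tracking the negative constant produced by eliminating the linear $\mathcal{C}(k)$ term, and extracting precisely the logarithm $|\log(dsK_\ell)|$ and the gap-controlled factor $(dK_\ell)^4/\varepsilon_T^2$ defining $\delta_3$, requires a careful dyadic decomposition of the $k$-integral against the regions of validity of $\tau(k)$ and of the spectral gaps added in the definition \eqref{eq:Texplicitexpress}.
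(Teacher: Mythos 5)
Your proposal has a genuine and substantial gap in the way you handle the Bogoliubov quadratic form and, as a consequence, in where the LHY constant in \eqref{eq:LargeEnergy} actually comes from.

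You claim that after shifting $a_k \mapsto a_k - c_k$ to eliminate the linear $\mathcal{C}(k)$-term, ``the positivity $\mathcal{A}(k)\geq|\mathcal{B}(k)|$ makes the shifted quadratic part nonnegative.'' This is false: the operator $\mathcal{A}(k)(a_k^\dagger a_k + a_{-k}^\dagger a_{-k}) + \mathcal{B}(k)(a_k a_{-k} + a_k^\dagger a_{-k}^\dagger)$ is bounded \emph{below}, not nonnegative, and its infimum is $-\frac{1}{2}(\mathcal{A}(k)-\sqrt{\mathcal{A}(k)^2-\mathcal{B}(k)^2})$ times the commutator $[a_k,a_k^\dagger]$ (Theorem~\ref{thm:bogdiag}). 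Integrating this over $k$ produces $-\frac{1}{2}\rho_z^2\ell^2\widehat{g\omega}(0)$ up to errors of size $\delta_3$ --- this is precisely \eqref{eq:proofAtilde2} in the paper --- and it \emph{exactly cancels} the positive $\frac{1}{2}\rho_z^2\ell^2\widehat{g\omega}(0)$ that you extracted from the constant piece of $\mathcal{K}(z)$. There is therefore no room to ``identify the $8\pi(\frac12+2\Gamma+\log\pi)\rho_\mu^2\ell^2\delta^2$ contribution'' inside $\frac{1}{2}\rho_\mu^2\ell^2\widehat{g\omega}(0)$: that quantity has been consumed by the Bogoliubov lower bound. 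The extraction you have in mind is done in the paper only for $\rho_z\approx\rho_\mu$ (Theorem~\ref{thm:lower_rho=rho_mu}, via a more refined treatment of the diagonalization and Appendix~\ref{app:bogintegral}); it is deliberately \emph{not} attempted here.

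What the paper actually does is reduce to Lemma~\ref{lem:technicalest_rhofar}, which after the cancellation above leaves
\[
\langle\mathcal{K}(z)\rangle_\Phi \geq -\tfrac{1}{2}\rho_\mu^2\ell^2\widehat{g}_0 + \tfrac{1}{2}(\rho_\mu-\rho_z)^2\ell^2\widehat{g}_0 - C(\rho_\mu+\rho_z)^2\ell^2\delta(\delta_1+\delta_2+\delta_3) - C\rho_\mu\delta^2 K_\ell^{-2}(ds)^{-4},
\]
with \emph{no} LHY term on the right. The whole point of the hypothesis \eqref{eq:conditionrhofar} is then that $\frac{1}{2}(\rho_\mu-\rho_z)^2\ell^2\widehat{g}_0 \geq 4\pi C^2\rho_\mu^2\ell^2\delta^2$ (from the $\delta^{1/2}$-alternative in the $\max$) dominates the LHY term in the target bound \eqref{eq:LargeEnergy}, while the $(\delta_1+\delta_2+\delta_3)^{1/2}$-alternative makes the same $(\rho_\mu-\rho_z)^2$ term swallow the $\delta_j$-errors. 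Your proof only uses the second alternative; the first one is entirely missing because you believe the LHY term survives on its own. If you replace the claimed ``LHY term from $\widehat{g\omega}(0)$'' with the cancellation argument and then invoke both alternatives of the $\max$ in \eqref{eq:conditionrhofar}, the argument closes. (Secondarily: your error bookkeeping is off --- in the paper $\delta_2$ comes from the linear $\mathcal{C}(k)$-terms together with $\mathcal{Q}_1^{\rm ex}$, $\delta_3$ comes from the Bogoliubov integral remainder after the $\widehat{g\omega}(0)$-cancellation, and $\delta_1$ from the $\mathcal{Q}_3(z)$ estimate --- but this is a labeling issue, not a gap.)
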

Notice that the second order term in \eqref{eq:LargeEnergy} is larger than the one aimed for in Theorem~\ref{thm:largebox_lower}. So the statement of the proposition is that the energy is too large unless $|\rho_{\mu} - \rho_z|$ is small.
The proof of the proposition relies on the technical estimate given by the following lemma.
\begin{lemma}\label{lem:technicalest_rhofar}
Assume the relations between the parameters in Appendix~\ref{app:parameters}. For any normalized $\Phi \in \mathscr{F}(\mathrm{Ran}Q) $ such that  \eqref{def:statefiniteexcit}  holds,
\begin{align}\label{eq:controlrhofar}
\langle \Phi \,|\, \mathcal{K}(z) \Phi \rangle &\geq -\frac{\rho_{\mu}^2 \ell^2}{2} \widehat{g}_0 + \frac{\ell^2}{2}(\rho_{\mu}-\rho_z )^2 \widehat{g}_0
-  C\rho_z \rho_{\mu} \ell^2\delta\delta_1 \nonumber \\
&\quad -  C \rho_{\mu}^{1/2} (\rho_{\mu} + \rho_z)^{3/2}\ell^2 \delta\delta_2   - C\rho_z^2  \ell^2\delta \delta_3 -C \rho_{\mu} \delta^{2} K_{\ell}^{-2} (ds)^{-4}.
\end{align}
\end{lemma}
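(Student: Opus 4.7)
The plan is to dissect $\mathcal{K}(z)$ into the "condensate" polynomial in $\rho_z,\rho_\mu$, the quadratic Bogoliubov form $\mathcal{K}^{\mathrm{Bog}}$, and the remaining exchange/cubic pieces $\mathcal Q_1^{\mathrm{ex}},\mathcal Q_2^{\mathrm{ex}},\mathcal Q_3$, then match each negative contribution against one of the available positive terms: the kinetic part $(1-\varepsilon_N)\tau(k)\,a_k^\dagger a_k$, the spectral gaps $\tfrac{b}{2\ell^2}n_+$, $\tfrac{b\varepsilon_T}{8d^2\ell^2}n_+^H$, $\tfrac{b\varepsilon_T|z|^2}{16d^2\ell^2\rho_\mu\ell^2}n_+^H$, or the cushion $\varepsilon_R(\rho_\mu-\rho_z)^2\delta\ell^2$. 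Completing the square on the condensate terms yields immediately
\begin{equation*}
\tfrac12\rho_z^2\ell^2(\widehat g_0+\widehat{g\omega}(0))-\rho_\mu\rho_z\widehat g_0\ell^2 = -\tfrac12\rho_\mu^2\ell^2\widehat g_0+\tfrac12(\rho_\mu-\rho_z)^2\ell^2\widehat g_0+\tfrac12\rho_z^2\ell^2\widehat{g\omega}(0),
\end{equation*}
the two leading terms of \eqref{eq:controlrhofar}; the final summand is controlled by $|\widehat{g\omega}(0)|\le C\delta$ from \eqref{eq. bound on gw_0} and is absorbed into $\rho_z^2\ell^2\delta\delta_3$.

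Next I would handle the quadratic piece of $\mathcal{K}^{\mathrm{Bog}}$ built from $\mathcal{A}(k),\mathcal{B}(k)$. Applying the standard Bogoliubov lower bound (as in Appendix~\ref{sec:bogdiag}) and then estimating the resulting integral by the method of Appendix~\ref{app:bogintegral} (splitting at $|k|\sim\sqrt{\rho_z\widehat W_1(0)}$, exploiting $(1-\varepsilon_N)\tau(k)\sim|k|^2$ at high momenta and $\widehat W_1(k)\simeq\widehat g_0\simeq 8\pi\delta$ at low ones, with IR cutoff $(ds\ell)^{-1}$ coming from $\tau$) contributes at most $C\rho_z^2\ell^2\delta^2|\log(dsK_\ell)|$, matching the first piece of $\delta_3$; the residual error from replacing $\widehat W_1$ by $\widehat g$ and from the $\tau$-cutoff produces the $(dK_\ell)^4/\varepsilon_T^2$ correction. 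The linear $\mathcal{C}(k)(a_k^\dagger+a_k+\cdots)$ part of $\mathcal{K}^{\mathrm{Bog}}$ is eliminated by Cauchy--Schwarz with small weight $\eta$ against $(1-\varepsilon_N)\tau(k)\,a_k^\dagger a_k$, leaving a remainder of the form $\eta^{-1}(\rho_z-\rho_\mu)^2|z|^2\widehat W_1(0)^2\ell^{-2}\int|\widehat\chi(k\ell)|^2\tau(k)^{-1}dk$ which, by $|\widehat W_1(0)|\le C\delta$ and the logarithmically divergent integral being cutoff by the support of $\widehat\chi$, is swallowed by $\varepsilon_R(\rho_\mu-\rho_z)^2\delta\ell^2$ for a suitable $\eta$.

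The exchange terms $\mathcal Q_1^{\mathrm{ex}}(z)$ and $\mathcal Q_2^{\mathrm{ex}}(z)$ are diagonal or near-diagonal in $a_k^\dagger a_k$ with coefficients of size $\rho_z\delta$, so they are bounded via Cauchy--Schwarz together with $\int a_k^\dagger a_k\le C n_+\le 2C\rho_\mu\ell^2$ (from \eqref{def:statefiniteexcit}) and $|\widehat{\omega W_1}(0)|\le C\delta$; the resulting errors sit inside $\rho_z^2\ell^2\delta\delta_3$. The chemical-potential exchange $(\rho_z-\rho_\mu)\widehat W_1(0)\ell^2(2\pi)^{-2}\!\int a_k^\dagger a_k\,dk$ is absorbed by the $\varepsilon_R$-cushion via $|ab|\le\varepsilon_R a^2+C\varepsilon_R^{-1}b^2$, producing the $\rho_\mu\delta^2K_\ell^{-2}(ds)^{-4}$ tail in \eqref{eq:controlrhofar}. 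The cubic term $\mathcal Q_3(z)$ is the most delicate: Cauchy--Schwarz on each of the two operators $\bar z\widetilde a_p^\dagger$ and $a_{p-k}a_k$, with weights chosen to use the high-momentum spectral gap $b\varepsilon_T/(d\ell)^2\,n_+^H$ for the outer momentum (since $p\in\mathcal P_H$) and $\int(a_k^\dagger a_k)^2\le Cn n_+$ for the inner pair, together with $|z|^2\le n_0\le n+\mathcal M$ and $\|\widehat W_1\|_\infty\le C\delta$, yields an estimate proportional to $\delta\rho_\mu^{1/2}(\rho_\mu+\rho_z)^{3/2}\ell^2\varepsilon_{\mathcal M}^{1/2}$, i.e.\ the $\delta_2$-term; the $\delta_1$-term emerges as the leftover from bounding $\mathcal Q_1^{\mathrm{ex}}$ against the same high-momentum gap, producing the factors $\varepsilon_T^2\varepsilon_{\mathcal M}/(d^8K_\ell^4)(1+K_\ell^2/K_H^2)$.

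The main obstacle is bookkeeping: every Cauchy--Schwarz weight must be tuned so that the negative residue lands in a currently unused slice of a positive operator without inflating any of $\delta_1,\delta_2,\delta_3$ beyond the stated form. Two technical sub-points deserve attention: (i) the Bogoliubov lower bound of Step~2 is legitimate only when $\mathcal A\ge|\mathcal B|$, which here follows from the positivity of $(1-\varepsilon_N)\tau(k)$ and $|\widehat W_1(k)|\le\widehat W_1(0)$; and (ii) in $\mathcal Q_3$ one must carry both the gap factor proportional to $n_+^H$ and the specially introduced $|z|^2$-weighted gap $b\varepsilon_T|z|^2n_+^H/(16d^2\ell^2\rho_\mu\ell^2)$, which is precisely the role it was given in \eqref{eq:epsilonR}.
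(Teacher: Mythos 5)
Your overall scheme (complete the square on the condensate piece, diagonalize the quadratic Bogoliubov part, absorb the exchange and cubic pieces into the available positive operators) is the right one, but two points go astray, one of them fatal.

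\textbf{The treatment of the linear $\mathcal{C}(k)$ term fails.} You propose to Cauchy--Schwarz $\mathcal{C}(k)(a_k^\dagger + a_k)$ against $(1-\varepsilon_N)\tau(k)\,a_k^\dagger a_k$, producing a remainder proportional to $\int |\widehat{\chi}_{\Lambda}(k)|^2 \tau(k)^{-1}\,\dd k$. But $\tau(k)$ as defined in \eqref{def:tauk} vanishes identically on the disk $|k|\le (2s\ell)^{-1}$, and since $s\ll 1$ this disk has radius much larger than $\ell^{-1}$, which is precisely where essentially all the mass of $\widehat{\chi}_\Lambda(k) = \ell^2\widehat\chi(k\ell)$ is concentrated. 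The Cauchy--Schwarz weight therefore divides by zero on the region that carries the bulk of the coefficient $\mathcal{C}(k)$, and ``cutoff by the support of $\widehat\chi$'' does not rescue it since $\widehat\chi$ is not compactly supported and its tail is not the issue anyway. The kinetic energy gives you nothing against low-momentum excitations; the paper instead bounds the linear $\mathcal{C}$ term (together with $\mathcal Q_1^{\rm ex}$) against the a priori cap on the number of low excitations (via Lemma~\ref{lem:numbercontrolhigh} and the assumption $\Phi = \one_{[0,\mathcal M]}(n_+^L)\Phi$), optimizing the Cauchy--Schwarz weight to $\varepsilon = \sqrt{\mathcal M/|z|^2}$. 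This is what produces the $\rho_{\mu}^{1/2}(\rho_{\mu} + \rho_z)^{3/2}\ell^2\delta\,\delta_2$ term with $\delta_2 = \varepsilon_{\mathcal M}^{1/2}$.

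\textbf{The error attribution is swapped.} You assign $\delta_2$ to $\mathcal Q_3(z)$ and $\delta_1$ to $\mathcal Q_1^{\rm ex}$, but in the paper's scheme $\delta_2$ comes from the $\mathcal{C}$/$\mathcal Q_1^{\rm ex}$ linear pair, as above, while $\delta_1$ comes from $\mathcal Q_3(z)$: one first splits off a small $\varepsilon\tau_k$ piece from the kinetic energy (with $\varepsilon \sim \varepsilon_T^{-2}(dK_\ell)^4$ chosen so that the compensating $-\rho_z\delta\varepsilon^{-1/2}$ term fits in the $|z|^2$-weighted spectral gap), completes the square against $\mathcal Q_3$ for $k\in\mathcal P_H$ via the shifted operator $b_k := a_k + (2/\varepsilon |k|^2)(2\pi)^{-2}\!\int f_L(p)\widehat W_1(k)\,z\, a_{p-k}^\dagger\widetilde a_p\,\dd p$, and then estimates the non-commutator and commutator residues separately, producing precisely the two pieces of $\delta_1 = \varepsilon_T^2\varepsilon_{\mathcal M}d^{-8}K_\ell^{-4}(1 + K_\ell^2/K_H^2)$. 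Your sketch of two successive Cauchy--Schwarz bounds on the cubic term with high-momentum gap weights does not reproduce the $\delta_2$ structure $\rho_\mu^{1/2}(\rho_\mu+\rho_z)^{3/2}$, and it misses the square-completion step that is needed to extract a small enough $\delta_1$-type error from $\mathcal Q_3$. These assignments matter because the statement of \eqref{eq:controlrhofar} pins down the precise $\rho$-dependence of each error slot; with the sources crossed the bookkeeping cannot close.

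A minor point that does not in itself cause a problem: you discard the $\tfrac12\rho_z^2\ell^2\widehat{g\omega}(0)$ term via $|\widehat{g\omega}(0)|\le C\delta$ from \eqref{eq. bound on gw_0}; the paper instead carries it along and cancels it exactly against the $-\tfrac12\rho_z^2\ell^2\widehat{g\omega}(0)$ produced by the Bogoliubov integral (Lemma~\ref{lem:bogintestimate}). Your crude bound happens to land inside $\rho_z^2\ell^2\delta\delta_3$, so it works, but you should be aware that the exact cancellation is what the paper uses and is essential in the companion estimate for $\rho_z\simeq\rho_\mu$ (Theorem~\ref{thm:lower_rho=rho_mu}).
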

\begin{proof}[Proof of Lemma~\ref{lem:technicalest_rhofar}]
We start by estimating the $Q_1$ terms. We have for any $\varepsilon >0$
\begin{align*}
\int_{\mathbb{R}^2}   \widehat{\chi}_{\Lambda}(k) (a_k^{\dagger} z+ &a_k \bar{z}) \dd k 
\\ 
&\leq \int_{\mathbb{R}^2} |\widehat{\chi}_{\Lambda}(k)| ( \varepsilon |z|^2 +\varepsilon^{-1} a^{\dagger}_k a_k )  \dd k  \\
&
\leq C \Big(\varepsilon |z|^2 + \varepsilon^{-1} |\widehat{\chi}_\Lambda(0)|\int_{k \in \mathcal{P}_H^c}  a^{\dagger}_k a_k \dd k   + \varepsilon^{-1} \int_{k \in \mathcal{P}_H}  |\widehat{\chi}_{\Lambda}(k)| a^{\dagger}_k a_k \dd k  \Big).
\end{align*}
Considering a $\Phi$ like in the assumption we have, using $|\widehat{\chi}_{\Lambda}(0)| = \ell^2 \|\chi\|_1$ together with Lemma~\ref{lem:numbercontrolhigh},
\begin{equation}
\Big \langle |\widehat{\chi}_\Lambda(0)|\int_{k \in \mathcal{P}_H^c} a^{\dagger}_k a_k \dd k \;  + \int_{k \in \mathcal{P}_H} |\widehat{\chi}_{\Lambda}(k)| a^{\dagger}_k a_k \dd k \Big\rangle_{\Phi} \leq C \Big(\mathcal{M}+ \rho_{\mu} \ell^2  \sup_{k \in \mathcal{P}_H} (\ell^{-2} |\widehat{\chi}_{\Lambda}(k)|)\Big).
\end{equation}
Now, using \eqref{eq:fourierlocestimate} and optimizing with $\varepsilon = \sqrt{\mathcal{M}/|z|^2}$,
\begin{align}
&\Big{\langle} - \frac{\ell^2}{(2\pi)^2}\int_{\mathbb{R}^2} \,  \mathcal{C}(k) (a^{\dagger}_k+ a^{\dagger}_{-k} + a_k + a_{-k})\dd k  + \mathcal Q_1^{\text{ex}}(z) \Big{\rangle}_{\Phi} \nonumber \\ & \geq - C \delta \sqrt{\mathcal{M}} |z| (|\rho_z - \rho_{\mu}| + \rho_z) \nonumber\\
&\geq  - C \Big( \frac{\mathcal{M}}{\rho_{\mu}\ell^2}\Big)^{1/2}\rho_{\mu}^{1/2}\ell^2 \delta (\rho_{\mu} + \rho_z)^{3/2}.\label{eq:proofAtilde0}
\end{align}
For the terms that are quadratic in the field operators, we use the estimate 
\begin{equation}
\left|\left\langle \ell^2 \int_{\mathbb{R}^2}  \, \widehat{W}_1(k) a^{\dagger}_k a_k \dd k\right\rangle_{\Phi}\right| \leq C \delta (\mathcal{M} + \langle n_+^H\rangle_{\Phi}),
\end{equation}
from Lemma~\ref{lem:numbercontrolhigh} to obtain that 
\begin{multline}\label{eq:proofAtilde1}
\left\langle \mathcal Q_2^{\text{ex}} + (\rho_{z} - \rho_{\mu}) \widehat{W}_1(0) \frac{\ell^2}{(2\pi)^2} \int_{\mathbb{R}^2} \, a^{\dagger}_k a_k \dd k  + \frac{\ell^2}{(2 \pi)^2 }\int_{\mathbb{R}^2} \, \mathcal{B}_k a^{\dagger}_k a_k \dd k \right\rangle_{\Phi}  \\
\geq - C (\rho_z + \rho_{\mu}) \big( \rho_{\mu}\ell^2 \delta \varepsilon_{\mathcal M} + \delta \langle n_+^H\rangle_{\Phi} \big),
\end{multline}
where the $\mathcal{B}_k$ has been extracted from the expression of the $\mathcal{A}_k$. The first term is coherent with the error in the result and the last one can be reabsorbed in a fraction of the spectral gap because of relation \eqref{eq:rel_T2comm}.

For the remaining part of $\mathcal{A}_k$ involving $\tau_k$ we add and subtract $-\rho_z \delta \varepsilon^{-1/2} + \varepsilon \tau_k$, with $\varepsilon \geq \varepsilon_N$ and estimate
\begin{align}
(1-\varepsilon_N) \tau_k &\geq  \widetilde{\mathcal{A}}_k -\rho_z \delta \varepsilon^{-1/2} + \varepsilon \tau_k, \label{eq:Atildeestimate}
\intertext{ with } 
\widetilde{\mathcal{A}}_k &= (1-2 \varepsilon) \Big[ |k| - \frac{1}{2 d s \ell }\Big]_+^2 + \rho_z \delta \varepsilon^{-1/2}.
\end{align}
We treat the terms in \eqref{eq:Atildeestimate} separately, adding them to the remaining parts of the Hamiltonian.
The simplest one is 
\begin{equation}
-\frac{\ell^2}{(2\pi)^2}\rho_z \varepsilon^{-1/2}\delta \Big\langle \int_{\mathbb{R}^2} \, a^{\dagger}_k a_k   \dd k \Big\rangle_{\Phi} \geq - C \varepsilon^{-1/2} \rho_z \delta (\mathcal{M} + \langle n_+^H \rangle_{\Phi}),
\end{equation}
where we used Lemma~\ref{lem:numbercontrolhigh}. We use this estimate to fix the choice of $\varepsilon$ in order to absorb the last term in the fraction of the spectral gap represented by the second to last term in \eqref{eq:epsilonR}. This yields
\begin{equation}
\varepsilon = C^{-1}\varepsilon_T^{-2} (d K_{\ell})^4,
\end{equation}
for some sufficiently large constant $C$ and the relations \eqref{eq:rel_eps_M_small}, \eqref{eq:rel_T2comm} ensure that $\varepsilon_N \leq \varepsilon \ll 1$.
 For the $\widetilde{\mathcal{A}}$ term plus the $B$ terms in the Hamiltonian we use the Bogoliubov diagonalization procedure stated in Theorem~\ref{thm:bogdiag} to obtain
\begin{equation}\label{eq:tildeALHY}
\frac{\ell^2}{(2 \pi)^2} \int_{\mathbb{R}^2}  \widetilde{\mathcal{A}}_k a^{\dagger}_k a_k + \frac{\mathcal{B}_k}{2} (a^{\dagger}_k a^{\dagger}_{-k} + a_{k} a_{-k}) \,\dd k \geq  -\frac{\ell^2}{2(2 \pi)^2}  \int_{\mathbb{R}^2} \widetilde{\mathcal{A}}_k - \sqrt{\widetilde{\mathcal{A}}_k^2 - \mathcal{B}_k^2} \,\dd k,
\end{equation}
and then we use Lemma~\ref{lem:bogintestimate} and its proof choosing $K_1= \rho_z \varepsilon^{-1/2} /2$, $K_2 = 2 \rho_z$, $K = (2ds\ell)^{-1}$ and $\kappa = (1-2\varepsilon)$ to derive that 
\begin{multline}
\eqref{eq:tildeALHY} \geq -\frac{\ell^2}{2(2\pi)^2} \Big(\rho_z^2  \frac{1 + \varepsilon \delta^2}{1-2\varepsilon} \int_{\mathbb{R}^2}\dd k  \frac{\widehat{W}_1^2(k)-\widehat{W}_1^2(0)\one_{\{|k|\leq \ell_{\delta}^{-1}\}}}{2|k|^2} +  C\rho_z \varepsilon^{1/2} \delta (ds\ell)^{-2}  \\
+ \frac{C}{(1-2 \varepsilon)}\rho_z^2 \delta^2(1+ R^2 \ell_{\delta}^{-2})+\frac{C \rho_z^2}{1-2\varepsilon}\delta^2 |\log (( d s \ell)^{-1} \ell_{\delta}) |\Big).
\end{multline}
Using now Cauchy-Schwarz on the second term, Lemma~\ref{lem:gomegaapprox}, writing only the dominant terms due to the relations between the parameters and recalling the definition \eqref{eq:def_elldelta_mu} of $\ell_{\delta}$ we obtain 
\begin{equation}\label{eq:proofAtilde2}
\eqref{eq:tildeALHY} \geq -\frac{1}{2}\rho_z^2 \ell^2 \widehat{g\omega}(0) 
-C \rho_z^2\ell^2 \delta\big( \varepsilon  + \delta^2 \rho_{\mu} R^2 + \delta|\log (d s K_{\ell})|\big) - C\delta \ell^2(d s\ell)^{-4}.
\end{equation}
Due to relation \eqref{eq:relRrho_mu} the second term gives $\delta_3$, while the third one gives the last term in \eqref{eq:controlrhofar}.

We continue considering the third term in \eqref{eq:Atildeestimate} and adding it to the $\mathcal Q_3$. The latter is an integral for $k \in \mathcal{P}_H$, and dropping the part of the $\tau_k$ for $k \in \mathcal{P}_H^c$ and using that for $k \in \mathcal{P}_H $ then $\tau_k \geq |k|^2/2$, we have to estimate 
\begin{equation}\label{eq:Q3+k2_expression}
\frac{\ell^2}{(2\pi)^2} \int_{k \in \mathcal{P}_H}  \, \Big( \frac{\varepsilon}{2} k^2 a^{\dagger}_k a_k + \frac{1}{(2\pi)^2} \int \, f_L(p) \widehat{W}_1(k) (\bar{z} \widetilde{a}^{\dagger}_p a_{p-k} a_k + a^{\dagger}_k a^{\dagger}_{p-k} \widetilde{a}_p z) \Big)\dd p \dd k.
\end{equation}
We complete the square in the previous expression, introducing the operators
\begin{equation}
b_k := a_k + \frac{2}{(2\pi)^2} \int \,f_L(p) \frac{\widehat{W}_1(k)}{\varepsilon |k|^2} z a_{p-k}^{\dagger} \widetilde{a}_p \dd p,
\end{equation}
so that 
\begin{align*}
\eqref{eq:Q3+k2_expression} &= \frac{\ell^2}{(2\pi)^2} \int_{k \in \mathcal{P}_H} \, \Big( \frac{\varepsilon}{2} k^2 b^{\dagger}_k b_k - \frac{2|z|^2}{\varepsilon(2\pi)^4} \iint  f_L(p) f_L(s) \frac{\widehat{W}_1(k)^2}{k^2} \widetilde{a}_s^{\dagger} a_{s-k} a^{\dagger}_{p-k}\widetilde{a}_p\Big) \dd p \dd s \dd k \\
&\geq - \frac{2|z|^2\ell^2}{\varepsilon(2\pi)^6} \int_{k \in \mathcal{P}_H} \, \frac{\widehat{W}_1(k)^2}{k^2} \iint  f_L(p) f_L(s) \widetilde{a}_s^{\dagger}(a^{\dagger}_{p-k} a_{s-k} + [a_{s-k}, a^{\dagger}_{p-k}] )\widetilde{a}_p \dd p \dd s \dd k.
\end{align*}
For the term without commutator, estimated on a state $\Phi$ which satisfies \eqref{def:statefiniteexcit} and using Cauchy-Schwarz
\begin{equation}
\widetilde{a}_s^{\dagger}a^{\dagger}_{p-k} a_{s-k}\widetilde{a}_p \leq C (\widetilde{a}_s^{\dagger}a^{\dagger}_{p-k} a_{p-k}\widetilde{a}_s + \widetilde{a}_p^{\dagger}a^{\dagger}_{s-k} a_{s-k}\widetilde{a}_p),
\end{equation}
we have
\begin{align}
\frac{2|z|^2\ell^2}{\varepsilon(2\pi)^6}&\left\langle  \int_{k \in \mathcal{P}_H} dk\, \frac{\widehat{W}_1(k)^2}{k^2} \iint  f_L(p) f_L(s) \widetilde{a}_s^{\dagger}a^{\dagger}_{p-k} a_{p-k}\widetilde{a}_s \dd p\dd s\right\rangle_{\Phi}   \nonumber \\
&\leq C|z|^2 \varepsilon^{-1} \frac{\ell^4 \delta^2}{K_H^2} \left\langle  \int_{k \in \mathcal{P}_H}  \int   f_L(s) \widetilde{a}_s^{\dagger}a^{\dagger}_{k} a_{k}\widetilde{a}_s \dd s \dd k \right\rangle_{\Phi} \int_{p \in \mathcal{P}_L} \dd p \nonumber \\
&\leq C \varepsilon^{-1} \frac{\delta^2}{K_H^2} d^{-4} \mathcal{M} \rho_{\mu}\rho_z\ell^2, \label{eq:proofAtilde3}
\end{align}
where we used Lemma~\ref{lem:numbercontrolhigh} since the support of $f_L$ is included in the complement of $\mathcal{P}_H$, and the estimate, for $k \in \mathcal{P}_H$,
\begin{equation}
\frac{\widehat{W}_1(k)^2}{2 k^2} \leq C K_H^{-2} \delta^2 \ell^2.
\end{equation}
For the commutator part we use the estimate \eqref{eq:commut_abound}, the Cauchy-Schwarz inequality
\begin{equation}
\widetilde{a}_s^{\dagger} [a_{s-k}, a^{\dagger}_{p-k}] \widetilde{a}_p \leq C \widetilde{a}_s^{\dagger}\widetilde{a}_s + C\widetilde{a}_p^{\dagger} \widetilde{a}_p, 
\end{equation}
and Lemma~\ref{eq:gdecay} applied to $\widehat{W}_1$ instead of $\widehat{g}$ paying a small error, we get
\begin{multline}
- \frac{2|z|^2\ell^2}{\varepsilon(2\pi)^6} \left\langle \int_{k \in \mathcal{P}_H} \frac{\widehat{W}_1(k)^2}{k^2} \iint  f_L(p) f_L(s)  \widetilde{a}_s^{\dagger} [a_{s-k}, a^{\dagger}_{p-k}] \widetilde{a}_p \dd p \dd s \dd k \right\rangle_{\Phi}  \\
\geq - C\frac{|z|^2\ell^2}{\varepsilon} \delta  \left\langle  \iint  f_L(p) f_L(s) \widetilde{a}_p^{\dagger} \widetilde{a}_p \dd p\dd s \right\rangle_{\Phi}  \geq - C \varepsilon^{-1}\rho_z \delta \mathcal{M} d^{-4}, \label{eq:proofAtilde4}
\end{multline}
where in the last inequality we used Lemma~\ref{lem:numbercontrolhigh}.

Collecting formulas \eqref{eq:proofAtilde0}, \eqref{eq:proofAtilde1}, \eqref{eq:proofAtilde2}, \eqref{eq:proofAtilde3} and \eqref{eq:proofAtilde4} and observing that 
\begin{equation}\label{eq:squarecompleteg}
\frac{1}{2} \rho_z^2 \ell^2 \widehat{g}_0 -\rho_z\rho_{\mu} \ell^2 \widehat{g}_0 = \frac{1}{2} (\rho_z - \rho_{\mu})^2 \ell^2 \widehat{g}_0 - \frac{1}{2} \rho_{\mu}^2\ell^2 \widehat{g}_0,
\end{equation} 
we obtain the result. 
\end{proof}

\begin{proof}[Proof of Proposition~\ref{propos:rhofar}]
We observe that, thanks to the relations \eqref{eq:rel_d_k_H}, \eqref{eq:rel_T2comm}, \eqref{eq:rel_epsT_d_Kl}, we have $\delta_j \ll 1$ for $j =1,2,3$.
Each coefficient of the $\delta_j$ in formula \eqref{eq:controlrhofar} can be bounded by 
 \begin{equation}
 C \delta (\rho_{\mu} -\rho_z)^2\ell^2 +C \rho_{\mu}^2 \ell^2 \delta.
 \end{equation}
 Therefore, Lemma~\ref{lem:technicalest_rhofar} and $\widehat{g}_0 = 8 \pi \delta$ implies the bound 
\begin{align*}
\langle \mathcal{K}(z)\rangle_{\Phi} \geq& -\frac{1}{2}  \rho_{\mu}^2 \ell^2 \widehat{g}_0 + \frac{1}{2}  (\rho_{\mu} - \rho_z)^2 \ell^2 \widehat{g}_0( 1- C (\delta_1 + \delta_2 + \delta_3 ) )  \\
&- C \rho_{\mu}^2 \ell^2 \delta (\delta_1 + \delta_2 + \delta_3 + \delta^2 (K_\ell d s)^{-4}) \\
\geq&  -\frac{1}{2}  \rho_{\mu}^2 \ell^2 \widehat{g}_0 +  \frac{1}{4}   \ell^2 \widehat{g}_0(\rho_{\mu} - \rho_z)^2 - C \rho_{\mu}^2 \ell^2 \delta (\delta_1 + \delta_2 + \delta_3 + \delta^{2}(K_\ell d s)^{-4}).
\end{align*}
Note that $\delta^{2}(K_\ell d s)^{-4} \ll \delta$ due to \eqref{eq:K_Bsmallbox} and \eqref{eq:Kl_KN_number}. By the assumption on $(\rho_{\mu}-\rho_z)^2$ the second term is of higher order both of the $\delta_j$ errors and of the desired quantity in the statement of the Proposition.
\end{proof}
\subsection{Estimate of \texorpdfstring{$\mathcal{K}$}{K} for \texorpdfstring{$\rho_{z} \simeq \rho_{\mu}$}{rho\_z =  rho\_mu}}
\label{sec.9.2}

We study here the main case, that is when $\rho_{z}$ is close to $\rho_{\mu}$. More precisely, we consider the complementary situation to \eqref{eq:conditionrhofar}, when
\begin{equation}\label{eq:rho_mu_control_close}
|\rho_{\mu} - \rho_z| \leq K_{\ell}^{-2} \rho_{\mu},
\end{equation}
where we used that, thanks to the choices of the parameters \eqref{eq:rel_T2comm}, \eqref{eq:Kl_KN_number} and \eqref{eq:rel_epsT_d_Kl}, we have 
\begin{equation}\label{eq:boundcondition_rhodelta}
K_{\ell}^2 \max\Big( (\delta_1 + \delta_2 + \delta_3)^{1/2}, \delta^{1/2}\Big) \leq C^{-1}.
\end{equation}

Using again \eqref{eq:squarecompleteg} and reabsorbing the term $(\rho_z - \rho_{\mu}) \widehat{W}_1(0) \frac{\ell^2}{(2\pi)^2} \int a^{\dagger}_k a_k \dd k $ in part of the spectral gap of $n_+$, we have the estimate of $\mathcal{K}(z)$ from \eqref{def:Kz},
\begin{align}
\mathcal{K}(z) \geq& -\frac{1}{2} \rho_{\mu}^2 \ell^2 \widehat{g}_0 + \frac{1}{2} \rho_z^2 \ell^2 \widehat{g\omega}(0) + \frac{1}{2} (\rho_z- \rho_{\mu})^2 \ell^2 \widehat{g}_0 \nonumber\\
&+\mathcal{K}^{\text{Bog}} + \frac{b}{4\ell^2} n_+ +b \frac{\varepsilon_T}{8 d^2 \ell^2 } n_+^H +b \frac{\varepsilon_T \vert z \vert^2 n_+^H}{16 d^2 \ell^2 (\rho_{\mu} \ell^2)} +  \varepsilon_R (\rho_{\mu} - \rho_z )^2 \delta \ell^2 \nonumber \\
&+ \mathcal Q_1^{\text{ex}}(z) + \mathcal Q_2^{\text{ex}}(z) + \mathcal Q_3(z),\label{ine:Kz}
\end{align}
and in the following we want to give a lower bound for the expression above using a diagonalization method for the Bogoliubov Hamiltonian. In order to do that, let us introduce a couple of new creation and annihilation operators
\begin{equation}\label{eq:defboperator}
b_k :=\frac{1}{\sqrt{1-\alpha_k^2}}  (a_k + \alpha_k a^{\dagger}_{-k} + c_k), 
\end{equation}
where 
\begin{align*}
\alpha_k &:= \mathcal{B}(k)^{-1} \Big( \mathcal{A}(k) - \sqrt{\mathcal{A}(k)^2 - \mathcal{B}(k)^2}\Big),\\
c_k &:= \frac{2 \mathcal{C}(k)}{\mathcal{A}(k) + \mathcal{B}(k) +  \sqrt{\mathcal{A}(k)^2 - \mathcal{B}(k)^2}} \one_{\{|k| \leq \frac{1}{2}K_H \ell^{-1}\}},
\end{align*}
and the diagonalized Bogoliubov Hamiltonian
\begin{equation}\label{def:Khdiag}
\mathcal{K}^{\rm{Diag}}_H := \frac{\ell^2}{(2\pi)^2} \int_{\{ \vert k \vert \geq \frac 1 2 K_H \ell^{-1} \}} \mathcal{D}(k) b^{\dagger}_k b_k \dd k,
\end{equation}
where 
\begin{equation}
\mathcal{D}(k) := \frac{1}{2} \big( \mathcal{A}(k) + \sqrt{\mathcal{A}(k)^2 - \mathcal{B}(k)^2}\big).
\end{equation}

\begin{theorem}\label{thm:lower_rho=rho_mu}
Assume the relations between the parameters in Appendix~\ref{app:parameters}. For any state $\Phi \in \mathscr{F}_s(L^2(\Lambda))$ such that \eqref{def:statefiniteexcit} holds and $\frac{9}{10} \rho_{\mu} \leq \rho_z \leq \frac{11}{10}\rho_{\mu}$ we have
\begin{align*}
\langle \mathcal{K}^{\rm{Bog}}& \rangle_{\Phi} + \frac{1}{2} \rho_z^2 \ell^2 (\widehat{g \omega})_0 + \frac{1}{2} (\rho_z- \rho_{\mu})^2 \ell^2 \widehat{g}_0 \\
&\geq  (1 - \varepsilon_K) \left\langle \mathcal{K}^{\rm{Diag}}_H\right\rangle_{\Phi} + 4 \pi \Big( 2 \Gamma + \frac 1 2 + \log \pi \Big) \rho^2_{z} \ell^2 \delta^2 \\ &\quad - C (\rho_{\mu}-\rho_z)^2 \ell^2 \delta^2 \rho_{\mu} R^2
-C \rho_{\mu}^2 \ell^2 \delta ( K_H^{4-M}K_\ell \delta^{-1/2}) + C r(\rho_{\mu}) \ell^2,
\end{align*}
where the error term is given by 
\begin{align*}
r(\rho_{\mu}) :=  \rho^2_{\mu} \delta^2 \big( \delta \vert \log(\delta) \vert R^2 \rho_{\mu} +\delta \vert \log (\delta ) \vert + d + \varepsilon_T \vert \log \delta \vert + (sK_\ell)^{-1} +   \varepsilon_N \delta^{-1}\big). 
\end{align*}
\end{theorem}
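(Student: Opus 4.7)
The plan is to perform a Bogoliubov diagonalization on $\mathcal{K}^{\rm{Bog}}$ using the operators $b_k$ defined in~\eqref{eq:defboperator}, and then to evaluate the resulting constant term (the ``Bogoliubov integral'') with enough precision to extract the LHY constant $2\Gamma + \tfrac{1}{2} + \log\pi$. Applying Theorem~\ref{thm:bogdiag} from Appendix~\ref{sec:bogdiag}, together with the completion of the square in the $b_k$ variables that absorbs the linear $\mathcal{C}(k)$-terms into the constants $c_k$, one obtains
\begin{equation*}
\mathcal{K}^{\rm{Bog}} \geq \mathcal{K}^{\rm{Diag}} - \frac{\ell^2}{2(2\pi)^2}\int_{\mathbb{R}^2} \bigl(\mathcal{A}(k) - \sqrt{\mathcal{A}(k)^2 - \mathcal{B}(k)^2}\bigr)\,\dd k - \frac{\ell^2}{(2\pi)^2}\int_{\mathbb{R}^2} |c_k|^2 \mathcal{D}(k)\,\dd k,
\end{equation*}
where $\mathcal{K}^{\rm{Diag}} = \frac{\ell^2}{(2\pi)^2}\int \mathcal{D}(k) b_k^\dagger b_k\,\dd k \geq 0$. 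Splitting $\mathcal{K}^{\rm{Diag}} = \mathcal{K}^{\rm{Diag}}_H + \mathcal{K}^{\rm{Diag}}_L$ according to whether $|k| \geq \tfrac{1}{2} K_H \ell^{-1}$, I drop the positive low-momentum part $\mathcal{K}^{\rm{Diag}}_L$ and retain $(1-\varepsilon_K)\mathcal{K}^{\rm{Diag}}_H$, keeping the fraction $\varepsilon_K \mathcal{K}^{\rm{Diag}}_H$ in reserve to absorb the soft-pair error treated later in Section~\ref{subsec:Q3_technical}.

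Second, I evaluate the Bogoliubov integral. Writing $\mathcal{A}(k) = (1-\varepsilon_N)\tau(k) + \rho_z \widehat{W}_1(k)$ and $\mathcal{B}(k) = \rho_z \widehat{W}_1(k)$, the leading term in the expansion of $\mathcal{A} - \sqrt{\mathcal{A}^2 - \mathcal{B}^2}$ is $\mathcal{B}^2/(2\mathcal{A}) \approx \rho_z^2 \widehat{W}_1(k)^2/(2k^2)$. Using the key identity~\eqref{eq:gomega_equivalence} together with the approximations $\widehat{W}_1 \approx \widehat{g}$ from Lemma~\ref{lem:gomegaapprox} and \eqref{eq:gw0.approx}, this leading piece cancels the positive term $\tfrac{1}{2} \rho_z^2 \ell^2 (\widehat{g\omega})_0$ from the left-hand side, up to the small cutoff mismatch encoded in~\eqref{eq:gw0.approx}. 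The higher-order corrections in the expansion, together with the precise evaluation of the $2D$ logarithmic integrals (the content of Appendix~\ref{app:bogintegral}), produce the sharp LHY contribution $4\pi(2\Gamma + \tfrac{1}{2} + \log\pi)\rho_z^2 \ell^2 \delta^2$. Errors from replacing $\widehat{W}_1$ by $\widehat{g}$ give $R^2 \rho_\mu \delta^3$-type terms, matching of the logarithmic cutoffs with $\ell_\delta^{-1}$ contributes $\delta^3 |\log\delta|$-type terms, and the high-momentum tails from the regularity of $\chi$ produce the $K_H^{4-M}$-type terms; all of these are collected into $r(\rho_\mu)$ and the explicit high-momentum error term in the statement.

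Third, I handle the linear correction $\frac{\ell^2}{(2\pi)^2}\int |c_k|^2 \mathcal{D}(k)\,\dd k$. On the cutoff support $\{|k|\leq K_H \ell^{-1}/2\}$ defining $c_k$, the denominator $\mathcal{D}(k)$ is controlled from below by $c(\tau(k) + \rho_\mu \delta)$, so a direct integration yields a leading contribution of size $\tfrac{1}{2}(\rho_z - \rho_\mu)^2 \ell^2 \widehat{g}_0$ together with a remainder. The leading piece is exactly absorbed by the corresponding positive term $\tfrac{1}{2}(\rho_z - \rho_\mu)^2 \ell^2 \widehat{g}_0$ on the left-hand side, and bounding $|\widehat{\chi}_\Lambda(k)|^2$ together with the Taylor expansion of $\widehat{W}_1(0)$ around zero leaves precisely the declared error $(\rho_\mu - \rho_z)^2 \ell^2 \delta^2 \rho_\mu R^2$. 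The main obstacle of the argument is the sharp second step: extracting the constant $2\Gamma + \tfrac{1}{2} + \log\pi$ requires a delicate $2D$ asymptotic analysis of the integrand, involving the subtle interplay between the scattering-length scale $\delta^{-1}$, the healing length $\ell_\delta$ from Lemma~\ref{lem:FT-w}, and the explicit logarithmic cutoff at $2 e^{-\Gamma} \widetilde R^{-1}$; this is where the choice $\delta = \delta_\mu$ is crucial in producing the cancellation, and it drives all the numerical constants appearing in the second-order term of Theorem~\ref{thm:largebox_lower}.
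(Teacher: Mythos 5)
Your overall strategy mirrors the paper's: diagonalize $\mathcal{K}^{\rm{Bog}}$ via Theorem~\ref{thm:bogdiag}, keep the high-momentum quadratic part $(1-\varepsilon_K)\mathcal{K}^{\rm{Diag}}_H$, evaluate the Bogoliubov constant to extract the 2D LHY coefficient, and absorb the $c_k$-correction into the positive $(\rho_z-\rho_\mu)^2$ term. The second and third steps match the paper closely.

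There is however a gap in your first step, and a resulting misattribution of the $K_H^{4-M}$ error. In the paper's definition~\eqref{eq:defboperator}, the shift $c_k$ carries the built-in cutoff $\one_{\{|k|\leq \frac{1}{2}K_H\ell^{-1}\}}$. If you complete the square with \emph{these} $c_k$'s, the linear $\mathcal{C}(k)$-terms for $|k|>\frac{1}{2}K_H\ell^{-1}$ are \emph{not} absorbed, so the inequality
\begin{equation*}
\mathcal{K}^{\rm{Bog}} \geq \mathcal{K}^{\rm{Diag}} - \tfrac{\ell^2}{2(2\pi)^2}\textstyle\int(\mathcal{A}-\sqrt{\mathcal{A}^2-\mathcal{B}^2}) - \tfrac{\ell^2}{(2\pi)^2}\int |c_k|^2 \mathcal{D}(k)\,\dd k
\end{equation*}
as you state it is not valid: the high-$k$ linear piece is simply dropped, not bounded. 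The paper's argument handles this by a Cauchy--Schwarz estimate on $\mathcal{C}(k)(a_k^\dagger + a_k + \ldots)$ over $\{|k|>\tfrac{1}{2}K_H\ell^{-1}\}$ \emph{before} diagonalizing, bounding $|\widehat{\chi}_\Lambda(k)|\lesssim \ell^2 K_H^{-M}$ and $n_+\leq 2\rho_\mu\ell^2$, which is exactly the source of the $-C\rho_\mu^2\ell^2\delta(K_H^{4-M}K_\ell\delta^{-1/2})$ term in the statement. You instead attribute the $K_H^{4-M}$ error to "high-momentum tails from the regularity of $\chi$" inside the Bogoliubov integral, but that integral only involves $\mathcal{A}(k)$ and $\mathcal{B}(k)$, neither of which contains $\widehat{\chi}_\Lambda$; so the $\chi$-regularity cannot appear there. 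The fix is to treat the high-$k$ linear term separately first (Cauchy--Schwarz plus $n_+\leq 2\rho_\mu\ell^2$), replace $\mathcal{C}$ by the truncated $\widetilde{\mathcal{C}}$, and only then apply Theorem~\ref{thm:bogdiag}; after that your steps two and three go through as described.

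A smaller point: once you have the quadratic operator on the full momentum range, you also need the observation that $\widetilde{\mathcal{K}}^{\rm{Diag}} = \frac{\ell^2}{(2\pi)^2}\int(1-\alpha_k^2)\mathcal{D}_k b_k^\dagger b_k\,\dd k \geq (1-CK_\ell^4 K_H^{-4})\,\mathcal{K}^{\rm{Diag}}_H$, using $|\alpha_k|\leq C K_\ell^2 K_H^{-2}$ on the high-momentum region. Your heuristic "keep $\varepsilon_K$ in reserve" gives the right shape of the answer but should be tied to this bound to justify the precise $(1-\varepsilon_K)$ prefactor under relation~\eqref{eq:rel_epsK_KM2}.
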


In the proof of Theorem~\ref{thm:lower_rho=rho_mu} we are going to use the following formulas and estimates for the commutators of the operators, recalling that $\widehat{\chi}_{\Lambda}$ is even,
\begin{align}
[b_k,b_h] &= \frac{\alpha_k - \alpha_h}{\sqrt{1-\alpha_k^2}\sqrt{1-\alpha_h^2}} \Big( \widehat{(\chi^2)}((k+h)\ell) -\widehat{\chi}(k \ell) \widehat{\chi}(h \ell)\Big), \label{eq:commut_bb}\\
[b_k,b_h^{\dagger}] &= \frac{1 - \alpha_k\alpha_h}{\sqrt{1-\alpha_k^2}\sqrt{1-\alpha_h^2}} \Big( \widehat{(\chi^2)}((k-h)\ell) -\widehat{\chi}(k \ell) \widehat{\chi}(h \ell)\Big),\label{eq:commut_bbdagger}\\
[\widetilde{a}^{\dagger}_p,b^{\dagger}_k] &= \frac{\alpha_{k} }{\sqrt{1-\alpha^2_k}}[\widetilde{a}_p^{\dagger}, a_k]= \frac{\alpha_{k} }{\sqrt{1-\alpha^2_k}} \ell^{-2} \langle e^{ipx}, Q \chi_{\Lambda} e^{ikx}\rangle,\label{eq:commut_astarbstar}\\
[\widetilde{a}_p, b^{\dagger}_{-k}]&= \frac{1}{\sqrt{1-\alpha_k^2}}[\widetilde{a}_p,a^{\dagger}_{-k}]  = \frac{1}{\sqrt{1-\alpha_k^2}}(\widehat{\chi}((p+k) \ell) - \widehat{\theta}(p\ell)\widehat{\chi}(k\ell)).\label{eq:commut_abstar}
\end{align}

\begin{proof}
Let us start by showing that the contribution coming from the $\mathcal{C}(k)$ gives an error term for $|k| > \frac{1}{2} K_H \ell^{-1}$.

By Cauchy-Schwarz we have $a^{\dagger}_k + a_k \leq a^{\dagger}_k a_k +1 $ and then we recognize $n_+$ \eqref{def:n0n+},
\begin{align*}
 \frac{\ell^2}{2(2\pi)^2}&\int_{\{|k| > \frac{1}{2} K_H \ell^{-1}\}}  \mathcal{C}(k) (a^{\dagger}_k + a^{\dagger}_{-k} + a_k + a_{-k}) \, \dd k \\
 &\geq -C |\rho_{\mu}-\rho_z| |\widehat{W}_1(0)| |z| \int_{\{|k| > \frac{1}{2} K_H \ell^{-1}\}} |\widehat{\chi_{\Lambda}}(k)| (a^{\dagger}_k a_k + 1) \,\dd k \\
 &\geq - C \rho_{\mu} \delta |z| (n_+ +1) K_H^{4-M},
\end{align*}
where we use the assumption on $\rho_z$ and that by Lemma~\ref{lem:localization_properties},
\begin{equation}
\ell^{-2} \sup_{|k| > \frac{1}{2}K_H \ell^{-1}} (1+ (k\ell)^2)^2 |\widehat{\chi_{\Lambda}}(k)| \leq C K_H^{4-M}.
\end{equation}
When we apply to $\Phi$ we have $n_+ \leq 2 \rho_{\mu}\ell^2$ and
\begin{equation}
\frac{\ell^2}{2(2\pi)^2}\int_{\{|k| > \frac{1}{2} K_H \ell^{-1}\}}  \mathcal{C}(k) \langle a^{\dagger}_k + a^{\dagger}_{-k} + a_k + a_{-k} \rangle_\Phi \, \dd k \geq - C \rho_\mu^2 \ell^2 \delta (K_H^{4-M} \sqrt{\rho_\mu} \ell).
\end{equation}
Therefore
\begin{equation}
\mathcal{K}^{\text{Bog}} \geq \widetilde{\mathcal{K}}^{\text{Bog}} -C \rho_{\mu}^2 \ell^2 \delta ( K_H^{4-M} K_\ell \delta^{-1/2}),
\end{equation}
where $\widetilde{\mathcal{K}}^{\text{Bog}}$ is the same as $\mathcal{K}^{\text{Bog}}$ but with $\mathcal{C}(k)$ substituted by
\begin{equation}
\widetilde{\mathcal{C}}(k) := \mathcal{C}(k) \one_{\{|k|\leq \frac{1}{2}K_H \ell^{-1}\}}.
\end{equation}
The bound on the commutator \eqref{eq:commut_abound} allows us to use Theorem~\ref{thm:bogdiag} to diagonalize the Bogoliubov Hamiltonian
\begin{align*}
\widetilde{\mathcal{K}}^{\text{Bog}} &\geq  \widetilde{\mathcal{K}}^{\text{Diag}} - \frac{\ell^2}{2(2\pi)^2} \int_{\mathbb{R}^2} \Big( \mathcal{A}(k) - \sqrt{\mathcal{A}(k)^2 - \mathcal{B}(k)^2}\Big) \dd k \\
&\quad - (\rho_z - \rho_{\mu})^2 \widehat{W}_1(0)^2  \frac{z^2}{(2\pi)^2 \ell^2} \int_{\{|k|\leq \frac{1}{2}K_H \ell^{-1}\}}  \frac{|\widehat{\chi_{\Lambda}}(k)|^2}{\mathcal{A}(k) + \mathcal{B}(k)} \dd k,
\end{align*}
where 
\begin{align}
\widetilde{ \mathcal K}^{\text{Diag}} = \frac{\ell^2}{(2\pi)^2} \int (1-\alpha_k^2 ) \mathcal D_k b_k^\dagger b_k \dd k \geq \frac{\ell^2}{(2\pi)^2} \int_{\{ \vert k \vert > \frac 1 2 K_H \ell^{-1} \}} (1-\alpha_k^2 ) \mathcal D_k b_k^\dagger b_k \dd k.
\end{align}
Using the inequality $\vert \alpha_k \vert \leq C \rho_z \delta k^{-2} \leq C K_\ell^2 K_H^{-2}$ we find
\begin{equation}
\widetilde{ \mathcal K}^{\rm{Diag}} \geq \mathcal K^{\rm{Diag}}_H (1 - C K_\ell^4 K_H^{-4} ).
\end{equation}

The calculation of the Bogoliubov integral is given in Appendix~\ref{app:bogintegral}. Combining the results of Lemma~\ref{lem:Bogintegral_approx_tau_k}, Lemma~\ref{lem:Bogintegral_approx_W_g} and Proposition~\ref{prop:integralapprox2} and multiplying everything by $\ell^2$ we find
\begin{align*}
- \frac{\ell^2}{2(2\pi)^2} \int_{\mathbb{R}^2} &\Big( \mathcal{A}(k) - \sqrt{\mathcal{A}(k)^2 - \mathcal{B}(k)^2}\Big) \dd k+ \frac{1}{2}\widehat{g\omega}(0) \rho_z^2 \ell^2 \\
&\geq 4 \pi \Big( 2 \Gamma + \frac 1 2 + \log \pi \Big) \rho_z^2 \ell^2 \delta^2 + r(\rho_{\mu}) \ell^2, 
\end{align*}
where $r(\rho_{\mu})$ is defined in the statement of the theorem. 
For the remaining term we use the estimate 
\begin{equation}
\mathcal{A}(k) + \mathcal{B}(k) \geq 2 \rho_z \widehat{W}_1(k) \geq 2 \rho_z \widehat{W}_1(0) (1 - C\delta(kR)^2),
\end{equation}
where we used a Taylor expansion and the fact that $W_1$ is even. By this last estimate, together with Lemma~\ref{lem:localization_properties} and \eqref{eq:Wg_approx} we obtain
\begin{align*}
- (\rho_z - \rho_{\mu})^2 \widehat{W}_1^2(0) & \frac{z^2}{(2\pi)^2 \ell^2} \int_{\{|k|\leq \frac{1}{2}K_H \ell^{-1}\}}  \frac{|\widehat{\chi_{\Lambda}}(k)|^2}{\mathcal{A}(k) + \mathcal{B}(k)}
\\
&\geq - (\rho_z - \rho_{\mu})^2 \frac{\widehat{W}_1(0)}{2} \ell^2 \big( 1 + C \rho_{\mu}\delta^2 R^2 K_H^2 K_\ell^{-2} \big)\\
&\geq - (\rho_z - \rho_{\mu})^2 \frac{\widehat{g}(0)}{2} \ell^2 \big( 1 + C \rho_{\mu} \delta R^2 \big),
\end{align*}
where in the last line we used $K_H \ll \delta^{-1/2}$ from \eqref{cond:3Qloc}.
\end{proof}

\subsection{Contribution of \texorpdfstring{$\mathcal Q_3$}{Q\_3}}\label{subsec:Q3_technical}

The aim of this section is to bound the $3Q$ term from below, namely
\[ \mathcal Q_3(z) = \frac{\bar z \ell^2}{(2\pi)^4} \int_{\mathcal P_H \times \R^2} \widehat W_1(k) f_L(p) (\widetilde{a}_p^\dagger a_{p-k} a_k + \text{h.c.})  \dd k \dd p, \]
which turns out to be controlled by quadratic Hamiltonian $\mathcal{K}_H^{\rm{Diag}}$ defined in \eqref{def:Khdiag}, absorbing $\mathcal Q_2^{\rm{ex}}$ and $\mathcal Q_1^{\rm{ex}}$. More precisely we prove

\begin{theorem}\label{thm.Q3z}
Assume the relations between the parameters in Appendix~\ref{app:parameters} to be satisfied. Then there exist a universal constant $C >0$ such that for any state $\Phi$ satisfying \eqref{def:statefiniteexcit} we have
\begin{align*}
 &\left\langle (1- \varepsilon_K) \mathcal{K}_H^{\rm{Diag}} + \mathcal Q_3(z) + \mathcal Q_2^{ex} + \mathcal Q_1^{ex} + \frac{b}{100} \frac{n_+}{\ell^2} + \frac{\varepsilon_T b}{100}\frac{ n_+^{H}}{(d\ell)^{2}} \right\rangle_{\Phi} \\
 &\quad\quad\geq - C \rho_z^2 \ell^2 \delta^2 \Big( \delta K_H^{-8} K_{\ell}^{10} d^{-4}  +  \varepsilon_K^{-1} K_H^{-12} K_\ell^{10} d^{-8} +  d^{-8}K_{\ell}^2 K_H^{-4} + \varepsilon_K^{-1} K_H^{-2M-8} K_\ell^6 d^{-8} \\
 &\quad\quad\quad + \delta K_\ell^2 \vert \log \delta \vert^2 + \delta^{-1} K_\ell^2 d^{8M-2} \varepsilon_T^{-1} + \varepsilon_{\mathcal M}^{1/2} (K_\ell^4 K_H^{-4} + \delta^{-1} K_H^{-M} d^{-2}) \Big). \nonumber
\end{align*}
\end{theorem}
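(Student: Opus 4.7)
The strategy is to combine $\mathcal Q_3(z)$ with the positive diagonal operator $(1-\varepsilon_K)\mathcal K_H^{\rm Diag}$ by completing the square in the Bogoliubov operators $b_k$, and then use $\mathcal Q_2^{\rm ex}(z)$, $\mathcal Q_1^{\rm ex}(z)$, and the two spectral gaps $\frac{b}{100\ell^2}n_+$ and $\frac{\varepsilon_T b}{100(d\ell)^2} n_+^H$ to absorb the resulting remainder. First I would invert the Bogoliubov transformation \eqref{eq:defboperator} to write, for $|k| \geq K_H \ell^{-1}/2$,
\begin{equation*}
a_k = \sqrt{1-\alpha_k^2}\, b_k - \alpha_k a_{-k}^\dagger - c_k,
\end{equation*}
and substitute this (and its analogue for $a_{p-k}$) into the expression for $\mathcal Q_3(z)$. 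In the domain of integration $k\in\mathcal P_H$, $p\in\supp f_L$ one has $c_k = c_{p-k} = 0$ (the cutoff in the definition of $c_k$) and $|\alpha_k|,|\alpha_{p-k}| \leq C K_\ell^2 K_H^{-2} \ll 1$. Moreover, by \eqref{eq:rel_d_k_H} the momentum $p-k$ also lies in the domain of $\mathcal K_H^{\rm Diag}$, so $b_{p-k}$ is well defined. The dominant part of $\mathcal Q_3(z)$ thus takes the form
\begin{equation*}
\frac{\bar z \ell^2}{(2\pi)^4} \int_{\mathcal P_H \times \R^2} f_L(p) \widehat W_1(k)\sqrt{1-\alpha_k^2}\sqrt{1-\alpha_{p-k}^2}\, \widetilde a_p^\dagger b_{p-k} b_k \, \dd k\, \dd p + h.c.,
\end{equation*}
with correction terms carrying a factor of $\alpha_k$ or $\alpha_{p-k}$.

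The next step is to complete the square against $(1-\varepsilon_K) \mathcal D(k) b_k^\dagger b_k$. For each fixed $k\in\mathcal P_H$, after normal-ordering, the cubic term is linear in $b_k^\dagger$ (resp.\ $b_k$) with coefficient essentially $\bar z \widehat W_1(k) \int f_L(p) \widetilde a_p^\dagger b_{p-k}^\dagger \, \dd p$ (resp.\ its adjoint). Treating $b_k$ as the variable and applying the pointwise-in-$k$ operator inequality $\lambda B^\dagger B + B C + C^\dagger B^\dagger \geq -\lambda^{-1} C^\dagger C$ with $\lambda = (1-\varepsilon_K)\mathcal D(k)$ produces a lower bound whose remainder is a non-positive operator of the form
\begin{equation*}
-\frac{C|z|^2 \ell^2}{(1-\varepsilon_K)(2\pi)^6} \iiint \frac{\widehat W_1(k)^2}{\mathcal D(k)}\, f_L(p) f_L(p')\, \widetilde a_{p'}^\dagger b_{p'-k}^\dagger b_{p-k} \widetilde a_{p}\, \dd k\, \dd p\, \dd p'.
\end{equation*}
On $\mathcal P_H$ one has $\mathcal D(k) \geq c k^2$, so $\widehat W_1(k)^2/\mathcal D(k) \leq C \rho_\mu^{-1} \delta^2 K_H^{-2}$. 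Applying Cauchy--Schwarz in $p,p'$ to decouple the bilinear form in $\widetilde a$ and returning $b_{p-k}^\dagger b_{p-k}$ to $a_{p-k}^\dagger a_{p-k}$ modulo small corrections, the remainder is then absorbed by $\frac{b}{100\ell^2} n_+$ (through $\int f_L(p)^2 \widetilde a_p^\dagger \widetilde a_p \dd p \leq C d^{-4} n_+/\ell^2$) together with $\frac{\varepsilon_T b}{100 (d\ell)^2} n_+^H$ (absorbing the $a_{p-k}^\dagger a_{p-k}$ factor, which lives on momenta close to $\mathcal P_H$ and is controlled by Lemma~\ref{lem:numbercontrolhigh}). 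The linear piece $\mathcal Q_1^{\rm ex}(z)$ is controlled exactly as in \eqref{eq:proofAtilde0} by Cauchy--Schwarz against a further small fraction of $\frac{b}{\ell^2}n_+$, and the quadratic $\mathcal Q_2^{\rm ex}(z)$ enters through the matching between the coefficient $\rho_z \widehat{\omega W_1}$ arising in the completed square and the one appearing in its own integrand.

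The hard part is accounting for the many commutator corrections. The $b_k$'s are not canonical: by \eqref{eq:commut_bb}--\eqref{eq:commut_bbdagger} their commutators involve $\widehat{(\chi^2)}((k\pm h)\ell) - \widehat\chi(k\ell)\widehat\chi(h\ell)$ rather than Dirac deltas, and \eqref{eq:commut_astarbstar}--\eqref{eq:commut_abstar} yield further corrections whenever $\widetilde a_p$ or $\widetilde a_p^\dagger$ is commuted past a $b_k^\#$. Each such correction, after integration, is estimated using the number-operator bounds of Lemma~\ref{lem:numbercontrolhigh} together with the a priori restrictions $\Phi = \one_{[0,\mathcal M]}(n_+^L)\Phi$ and $\Phi = \one_{[0,2\rho_\mu\ell^2]}(n_+)\Phi$. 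The various weights in the Cauchy--Schwarz inequalities (both in the main completed-square step and in these commutator corrections) must be optimized as functions of $K_H$, $K_\ell$, $d$, $\varepsilon_T$, $\varepsilon_K$, and $\varepsilon_{\mathcal M} = \mathcal M/(\rho_\mu \ell^2)$, with the factor $M$ entering via the tail decay $|\widehat\chi(k\ell)| \leq C (K_H)^{-M}$ for $|k| \gtrsim K_H \ell^{-1}$; this bookkeeping is what produces precisely the combination of error terms stated in the theorem.
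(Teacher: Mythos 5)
Your overall strategy — substitute $b_k$'s for $a_k$'s in $\mathcal Q_3(z)$, complete the square against $(1-\varepsilon_K)\mathcal K_H^{\rm Diag}$, and control the remainder with the spectral gaps and the commutator bounds — is indeed the paper's approach, and you correctly identify the two key cancellations at a heuristic level (the commutator part of the completed square against $\mathcal Q_2^{\rm ex}$, the $\alpha$-corrections against $\mathcal Q_1^{\rm ex}$). However, there is a genuine gap in how you then propose to estimate $\mathcal Q_1^{\rm ex}$. You say it is "controlled exactly as in \eqref{eq:proofAtilde0} by Cauchy--Schwarz against a further small fraction of $\frac{b}{\ell^2}n_+$." This cannot work: a direct Cauchy--Schwarz on $\mathcal Q_1^{\rm ex} = \rho_z \widehat{(\omega W_1)}(0)\, z\, \frac{1}{(2\pi)^2}\int\widehat\chi_\Lambda(k)\,a_k^\dagger\,\dd k + h.c.$, with weight chosen so that the $n_+$-part is absorbed in $\frac{b}{100\ell^2}n_+$, produces a leftover of order $\rho_z^2\ell^2\delta\,K_\ell^2$ — which exceeds the LHY-order target $\rho_z^2\ell^2\delta^2$ by a factor $\delta^{-1}K_\ell^2 \gg 1$. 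The bound \eqref{eq:proofAtilde0} (which gives $\varepsilon_{\mathcal M}^{1/2}\rho_\mu^2\ell^2\delta$) suffers the same problem, since $\varepsilon_{\mathcal M}^{1/2}\gg\delta$. This cruder estimate is only used in the paper in the regime $|\rho_z-\rho_\mu|$ large, where the excess from $\frac{1}{2}(\rho_\mu-\rho_z)^2\ell^2\widehat g_0$ compensates; in the regime relevant to Theorem~\ref{thm.Q3z} it does not.

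The paper's actual resolution is that $\mathcal Q_1^{\rm ex}$ must be cancelled against the term $\mathcal Q_3^{(4)}$ coming from the commutator $\alpha_k[b_{p-k},b_{-k}^\dagger](\widetilde a_p^\dagger+\widetilde a_p)$ in the $b$-substitution: using \eqref{eq:Walpha_highcontrol}, $\rho_z\widehat{(\omega W_1)}(0)$ is rewritten (up to an error of order $\rho_z\delta^2|\log\delta|$) as $\frac{1}{(2\pi)^2}\int_{\mathcal P_H}\widehat W_1(k)\alpha_k\,\dd k$, so that the leading parts of $\mathcal Q_1^{\rm ex}$ and of $\mathcal Q_3^{(4)}$ are equal up to sign (Lemmas~\ref{lem:Q1ex} and~\ref{lem.Q34}). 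Only the small remainders (involving higher powers of $\alpha_k$ and off-diagonal parts of $[b_{p-k},b_{-k}^\dagger]$) are estimated by Cauchy--Schwarz against $n_+$ and $n_+^H$, and those \emph{are} small enough. Without invoking this cancellation, your argument does not close. A minor point: the dominant term in $\mathcal Q_3(z)$ after inverting the Bogoliubov transform has coefficient $\big(\sqrt{1-\alpha_k^2}\sqrt{1-\alpha_{p-k}^2}\big)^{-1}$, not $\sqrt{1-\alpha_k^2}\sqrt{1-\alpha_{p-k}^2}$ as you write; this does not affect the argument since $\alpha$ is small on $\mathcal P_H$, but it indicates the inversion $a_k=(b_k-\alpha_k b_{-k}^\dagger)/\sqrt{1-\alpha_k^2}$ was not carried through carefully.
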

In order to prove this theorem, we start by rewriting $\mathcal Q_3(z)$ in terms of the $b_k$'s defined in \eqref{eq:defboperator}. Notice that $c_k = c_{p-k} = 0$ if $k \in \mathcal P_H$ and $p \in \mathcal P_L$, and
\begin{equation}
 a_k = \frac{b_k -\alpha_k b_{-k}^\dagger}{\sqrt{1- \alpha_k^2}}, \qquad a_{p-k} =  \frac{b_{p-k} -\alpha_{p-k} b_{k-p}^\dagger}{\sqrt{1- \alpha_{p-k}^2}}.
\end{equation}
Therefore,
\[a_{p-k} a_k = \frac{1}{\sqrt{1-\alpha_k^2}\sqrt{1-\alpha_{p-k}^2}} \big( b_{p-k} b_k - \alpha_k b_{p-k} b_{-k}^\dagger - \alpha_{p-k} b_{k-p}^\dagger b_k + \alpha_{p-k} \alpha_k b_{k-p}^\dagger b_{-k}^\dagger \big),\]
and $\mathcal Q_3(z) = \mathcal Q_3^{(1)} + \mathcal Q_3^{(2)} + \mathcal Q_3^{(3)} + \mathcal Q_3^{(4)}$ where
\begin{align}\label{defQ31}
\mathcal Q_3^{(1)} &= \frac{z \ell^2}{(2\pi)^4} \int_{\mathcal{P}_H \times \R^2}  \frac{f_L(p) \widehat{W}_1(k)}{\sqrt{1-\alpha_k^2}\sqrt{1-\alpha_{p-k}^2}} \big( \widetilde a_p^\dagger b_{p-k} b_k + \alpha_k \alpha_{p-k} \widetilde a_p^\dagger b_{k-p}^\dagger b_{-k}^\dagger +  h.c. \big), \\
\label{defQ32}
 \mathcal Q_3^{(2)} &= -\frac{z \ell^2}{(2\pi)^4} \int_{\mathcal{P}_H \times \R^2}   \frac{f_L(p) \widehat{W}_1(k) \alpha_k}{\sqrt{1-\alpha_k^2} \sqrt{1-\alpha_{p-k}^2}} \big( \widetilde{a}^\dagger_p b^\dagger_{-k} b_{p-k} + b_{p-k}^\dagger b_{-k} \widetilde{a}_p \big),\\
 \label{defQ33}
\mathcal Q_3^{(3)} &= -\frac{z \ell^2}{(2\pi)^4} \int_{\mathcal{P}_H \times \R^2}  \frac{f_L(p) \widehat{W}_1(k) \alpha_{p-k}}{\sqrt{1-\alpha_k^2} \sqrt{1-\alpha_{p-k}^2}} \big( \widetilde{a}^\dagger_p b^\dagger_{k-p} b_{k} + b_{k}^\dagger b_{k-p} \widetilde{a}_p \big),\\
\label{defQ34}
\mathcal Q_3^{(4)} &= - \frac{z \ell^2}{(2\pi)^4} \int_{\mathcal{P}_H \times \R^2}  \frac{f_L(p) \widehat{W}_1(k)}{\sqrt{1-\alpha_k^2} \sqrt{1-\alpha_{p-k}^2}} \alpha_k [b_{p-k} , b_{-k}^\dagger] ( \widetilde{a}_p^\dagger + \widetilde a_p) .
\end{align}

In the remaining of this section, we get lower bounds on those four terms (Lemmas~\ref{lem.Q31},~\ref{lem.Q32} and~\ref{lem.Q34} below) hence proving Theorem~\ref{thm.Q3z}.

We collect here some important technical estimates which are going to be useful in the following.

\begin{lemma}\label{lem:usefulQ3}
The following bounds hold: 
\begin{align}
&|\alpha_k| \leq  C\rho_z \delta |k|^{-2} \leq C K_{\ell}^2 K_H^{-2}, \quad \text{for } |k| \geq \frac{1}{2} K_H \ell^{-1},\label{eq:alpha_highcontrol} \\
&\mathcal{D}_k \geq \frac{1}{2}|k|^2 \geq \frac{1}{8} K_H^2 \ell^{-2}, \quad \text{for }|k| \geq \frac{1}{2} K_H \ell^{-1}, \label{eq:D_highcontrol}\\
&\bigg\vert \rho_z \widehat{(\omega W_1)}(0) -\frac{1}{(2\pi)^2}\int_{ \mathcal{P}_H} \widehat{W}_1(k) \alpha_k \dd k \bigg\vert \leq  C  \rho_z \delta^2 \vert \log \delta \vert, \label{eq:Walpha_highcontrol}\\
&\bigg\vert \widehat{(\omega W_1)}(0) -\frac{1}{(2\pi)^2}\int_{ \mathcal{P}_H} \frac{\widehat{W}_1(k)^2}{2 \mathcal{D}_k}  \dd k \bigg\vert \leq C   \delta^2 \vert \log \delta \vert, \label{eq:WD_highcontrol}
\end{align}
and
\begin{align}
\rho_z\frac{\ell^2}{(2\pi)^2}\int_{\mathbb{R}^2} \widehat{(W_1\omega)}(k) a^{\dagger}_k a_k \dd k &\geq \rho_z\widehat{(W_1\omega)}(0) \frac{\ell^2}{(2\pi)^2} \int_{\mathbb{R}^2} a^{\dagger}_k a_k \dd k -4 \rho_{z} \delta n_+^H \nonumber \\  &\quad - C \rho_z \delta d^{-2}  \frac{R}{\ell}n_+. \label{eq:Q2approx}
\end{align}
\end{lemma}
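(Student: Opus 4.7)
The first two items are pointwise bounds valid for $|k| \geq \tfrac{1}{2} K_H \ell^{-1}$, and follow from the fact that on this range the kinetic multiplier dominates the potential one. My plan: the definition \eqref{def:tauk} of $\tau$ together with the parameter relations in Appendix~\ref{app:parameters} (in particular $K_H \gg s^{-1}, (ds)^{-1}$) gives $\tau(k) \geq \tfrac{1}{2} |k|^2$, so that $\mathcal A(k) \geq (1-\varepsilon_N)|k|^2/2$. Since $|\mathcal B(k)| \leq \rho_z \Vert W_1 \Vert_{L^1} \leq C\rho_z \delta$ by Lemma~\ref{lem:gomegaapprox}, the identity $\alpha_k = \mathcal B(k) / (\mathcal A(k) + \sqrt{\mathcal A(k)^2 - \mathcal B(k)^2})$ immediately gives $|\alpha_k| \leq \mathcal B(k) / \mathcal A(k) \leq C\rho_z \delta/|k|^2$, and the numerical bound then follows from $\rho_z \ell^2 \delta = O(K_\ell^2)$. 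The lower bound $\mathcal D_k \geq \mathcal A_k/2 \geq |k|^2/2$ then gives \eqref{eq:D_highcontrol}.

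\textbf{The integral approximations \eqref{eq:Walpha_highcontrol}--\eqref{eq:WD_highcontrol} are the main technical obstacle}, because they require tracking several layers of approximation against a logarithmically divergent integrand. My approach is first to expand, for $|k| \geq \tfrac{1}{2} K_H \ell^{-1}$,
\begin{equation*}
\alpha_k = \frac{\rho_z \widehat{W}_1(k)}{2 |k|^2} + O\!\Big(\frac{(\rho_z \delta)^2}{|k|^4}\Big),
\qquad
\frac{1}{2 \mathcal D_k} = \frac{1}{2|k|^2} + O\!\Big(\frac{\rho_z \delta}{|k|^4}\Big),
\end{equation*}
whose error terms integrate against $\widehat{W}_1(k)^2 \leq C\delta^2$ to at most $O(\delta^2)$ using $\int_{|k|\geq K_H/\ell} |k|^{-6} \dd k \sim \ell^4 K_H^{-4}$. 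Both integrals thereby reduce to $\rho_z \int_{\mathcal P_H} \widehat{W}_1(k)^2/(2|k|^2)\,\dd k$. I would then compare this against the renormalized full-space integral $\int_{\mathbb R^2} (\widehat{W}_1(k)^2 - \widehat{W}_1(0)^2 \one_{\{|k|\leq \ell_\delta^{-1}\}})/(2k^2)\,\dd k$, which equals $(2\pi)^2 \widehat{g\omega}(0) + O(R^2 \delta/\ell^2)$ by \eqref{eq:gw0.approx}. The discrepancy is the annular contribution
\begin{equation*}
\widehat{W}_1(0)^2 \int_{\ell_\delta^{-1} \leq |k| \leq K_H \ell^{-1}} \frac{\dd k}{2|k|^2}
\;\sim\; \delta^2 \log(K_H \ell^{-1} \ell_\delta)
\;\sim\; \delta^2 |\log \delta|
\end{equation*}
(using $\ell_\delta/\ell = e^\Gamma/(2 K_\ell)$ and $\widehat{W}_1(0) = 8\pi\delta + O(R^2 \delta/\ell^2)$), which is exactly the claimed error. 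Finally, Lemma~\ref{lem:gomegaapprox} yields $\widehat{W_1 \omega}(0) = \widehat{g \omega}(0) + O(R^2 \delta/\ell^2)$, closing the chain and producing both \eqref{eq:Walpha_highcontrol} and \eqref{eq:WD_highcontrol}.

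\textbf{For the operator inequality \eqref{eq:Q2approx},} I would rewrite the difference of the two sides as
\begin{equation*}
\rho_z \frac{\ell^2}{(2\pi)^2} \int_{\mathbb R^2} \bigl(\widehat{W_1 \omega}(k) - \widehat{W_1 \omega}(0)\bigr)\, a_k^\dagger a_k \, \dd k
\end{equation*}
and split the integration at $|k| = d^{-2} \ell^{-1}$. On the low-momentum side, the bound $|\widehat{W_1 \omega}(k) - \widehat{W_1 \omega}(0)| \leq |k| \Vert x W_1 \omega \Vert_{L^1} \leq CR \delta |k| \leq C \delta d^{-2} R/\ell$, combined with the Plancherel identity $\frac{\ell^2}{(2\pi)^2}\int a_k^\dagger a_k \dd k = \sum_j Q_j \chi_\Lambda(x_j)^2 Q_j \leq n_+$, produces the $-C\rho_z \delta d^{-2}(R/\ell) n_+$ term. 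On the high-momentum side, the trivial bound $|\widehat{W_1 \omega}(k) - \widehat{W_1 \omega}(0)| \leq 2\delta$ together with a commutator-type comparison $Q \chi_\Lambda \one_{\{|p|>d^{-2}\ell^{-1}\}} \chi_\Lambda Q \leq 2 Q \one_{\{|p|>d^{-2}\ell^{-1}\}} Q + O(d^4) Q$ (from $\Vert [\chi_\Lambda, \one_{\{|p|>K\}}] \Vert \lesssim \Vert \nabla \chi_\Lambda \Vert_\infty /K = O(d^2)$ at $K = d^{-2}\ell^{-1}$) bounds $\frac{\ell^2}{(2\pi)^2}\int_{\{|k| > d^{-2}\ell^{-1}\}} a_k^\dagger a_k \dd k \leq 2 n_+^H + O(d^4) n_+$, giving the $-4 \rho_z \delta n_+^H$ term. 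The residual $d^4 n_+$ is of lower order and may be absorbed in the spectral gap; checking that this absorption is consistent with the parameter relations of Appendix~\ref{app:parameters} is the final bookkeeping task.
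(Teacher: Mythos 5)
Your bounds \eqref{eq:alpha_highcontrol}--\eqref{eq:WD_highcontrol} are essentially correct and follow the same route as the paper. For the pointwise bounds you correctly observe that $\tau(k) \geq (1-o(1))|k|^2$ on $\mathcal P_H$ (using $(ds)^{-1} \ll K_H$), so $\mathcal A_k \gtrsim |k|^2$ and $|\alpha_k| = |\mathcal B_k|/(\mathcal A_k + \sqrt{\mathcal A_k^2 - \mathcal B_k^2}) \lesssim \rho_z\delta/|k|^2$. For the two integral identities, your plan (expand $\alpha_k \approx \rho_z \widehat W_1(k)/(2|k|^2)$ and $1/(2\mathcal D_k) \approx 1/(2|k|^2)$, then invoke \eqref{eq:gw0.approx} and identify the annular contribution $\widehat W_1(0)^2\int_{\ell_\delta^{-1} < |k| < K_H/\ell}\dd k/(2k^2) \sim \delta^2 \log(K_H \ell_\delta/\ell) \sim \delta^2 |\log\delta|$) captures the correct mechanism, and differs from the paper only in bookkeeping order: the paper first extends $\mathcal P_H$ to $\mathbb R^2$ (their term $(I)$) and then replaces $\alpha_k$ by $\rho_z\widehat g_k/(2k^2)$ (their $(II)$, further subdivided), whereas you first simplify $\alpha_k$ and then compare with the full-space renormalized integral. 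You should just be sure to also track the error from replacing $\tau(k)$ by $|k|^2$ — this is $O(\rho_z\delta/(ds\ell|k|^3))$ rather than $O((\rho_z\delta)^2/|k|^4)$ as you wrote, but it integrates to $\rho_z\delta^2/(ds K_H)$ which is $\ll \rho_z\delta^2$ by the parameter relations, so it is harmless.

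However, your argument for \eqref{eq:Q2approx} has a genuine gap. You split the momentum integral at $|k| = d^{-2}\ell^{-1}$ and then invoke the commutator estimate $\|[\chi_\Lambda,\one_{\{|p|>K\}}]\| \lesssim \|\nabla\chi_\Lambda\|_\infty/K$ to get the factor $O(d^4)$. This estimate is false for the \emph{sharp} Fourier cutoff $\one_{\{|p|>K\}}$: the pseudodifferential-type bound $\|[\chi,f(-i\nabla)]\| \lesssim \|\nabla\chi\|_\infty\|\nabla f\|_\infty$ requires $f$ to be Lipschitz, and for $f = \one_{\{|\cdot|>K\}}$ the derivative is a surface measure. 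Concretely, if $\widehat\chi_\Lambda$ is concentrated in $|k| \lesssim \ell^{-1}$, then $[\chi_\Lambda,\one_{\{|p|>K\}}]$ acts essentially as $\widehat\chi_\Lambda$-convolution near the sphere $\{|p| = K\}$ and has operator norm $O(1)$, not $O((K\ell)^{-1})$. This matters because $n_+^H$ is defined with the \emph{same} threshold $d^{-2}\ell^{-1}$, so splitting at $d^{-2}\ell^{-1}$ leaves no momentum gap at all and the norm $\|\one_{\{|p|\leq d^{-2}/\ell\}}\,\chi_\Lambda\,\one_{\{|p|>d^{-2}/\ell\}}\|$ is $O(1)$. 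Your plan would thus produce an error $\sim\rho_z\delta\,n_+$, which the lemma does not allow. The paper avoids this by cutting at a \emph{higher} scale $\xi\ell^{-1}$ with $\xi \gg d^{-2}$ and inserting the decomposition $\one_{\mathcal P_L}+\one_{\mathcal P_L^c}$: the $\mathcal P_L^c$-piece is bounded by $n_+^H$ directly, while the $\mathcal P_L$-piece sees a \emph{gap} of size $\sim\xi/\ell$ in momentum, so that $\mathcal N = \|\one_{\mathcal P_L}\chi_\Lambda\one_{\{|p|>\xi/\ell\}}\|^2 \leq C\xi^{-2}d^{-4}$ via the $M$-fold decay of $\widehat\chi_\Lambda$. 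Combined with a \emph{second}-order Taylor bound $|\widehat{W_1\omega}(k) - \widehat{W_1\omega}(0)| \lesssim R^2\delta|k|^2$ on the low-momentum side (your first-order bound $\lesssim R\delta|k|$ throws away the symmetry of $W_1\omega$ and is not what gets optimized), the two errors $\rho_z\xi^2\delta R^2/\ell^2\,n_+$ and $\rho_z\delta\xi^{-2}d^{-4}\,n_+$ are balanced at $\xi^2 = d^{-2}\ell/R$, producing the stated $\rho_z\delta d^{-2}(R/\ell)n_+$. You would need to introduce this free scale $\xi$ and the momentum gap to close your argument.
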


\begin{proof}
The first two inequalities are straightforward from the definitions of the terms. For the third one we split the difference in the following way,
\begin{align}
&\bigg\vert\rho_z \widehat{(\omega W_1)}(0) -\frac{1}{(2\pi)^2}\int_{k \in \mathcal{P}_H} \widehat{W}_1(k) \alpha_k \dd k\bigg\vert \nonumber \\
&\leq  C \bigg\vert \rho_z\int_{k \notin \mathcal P_H} \frac{\widehat{W}_1(k) \widehat{g}_k - \widehat{W}_1(0)\widehat{g}_0 \one_{\{|k| \leq \ell_{\delta}^{-1}\}}}{2 k^2}\dd k\bigg\vert 
+ C \left| \int_{k \in \mathcal P_H} \widehat{W}_1(k) \Big( \alpha_k - \rho_z \frac{\widehat{g}_k}{2 k^2} \Big)dk \right| 
\nonumber \\
&=: (I) + (II).
\end{align}
For the first integral we do a further splitting of the domain of integration, considering $(I) \leq (I,<) + (I,>)$ for $|k| \leq \ell_{\delta}^{-1}$ or otherwise, respectively. For $(I,<)$ we consider a Taylor expansion of the numerator and we get, recalling the symmetry of $g$ which in the integration drops the first order, 
\begin{equation}
(I,<) \leq C \rho_z R^2 \delta^2 \int_{\{|k| \leq \ell_{\delta}^{-1}\}} \leq C\rho_z R^2 \delta^2\ell^{-2}_{\delta}.
\end{equation}
For the $(I,>)$ we proceed by a direct calculation and obtain
\begin{equation}
(I,>) \leq C \rho_z \delta^2 \log(K_H \ell^{-1}\ell_{\delta}).
\end{equation}
Let us analyze the second integral. We have that $|\mathcal{B}_k / \mathcal{A}_k| \leq 1/2$ and therefore we can expand in the following way 
\begin{equation}
\widehat{W}_1(k) \alpha_k = \rho_z^{-1} \mathcal{A}_k \bigg( 1-  \sqrt{1-\frac{\mathcal{B}_k^2}{\mathcal{A}_k^2}}\bigg) \simeq \rho_z\frac{\widehat{W}_1(k)^2}{2 \mathcal{A}_k} + C \rho_z^3 \frac{\widehat{W}_1(k)^4}{\mathcal{A}_k^3}.
\end{equation}
We deduce
\begin{align*}
(II) &\leq C\bigg\vert \int_{k \in \mathcal{P}_H} \Big(\widehat{W}_1(k)  \alpha_k - \rho_z \frac{\widehat{W}_1(k)^2}{2 \mathcal{A}_k}\Big) \dd k \bigg\vert + C\rho_z\bigg\vert \int_{k \in \mathcal P_H} \widehat{W}_1(k)\bigg( \frac{\widehat{W}_1(k)}{2 \mathcal{A}_k} -  \frac{\widehat{g}_k}{2 k^2} \bigg) \dd k \bigg\vert\\
&\leq C \rho_z^3\int_{k \in \mathcal P_H} \frac{\widehat{W}_1(k)^4}{\mathcal{A}_k^3} \dd k+ C\rho_z\bigg\vert \int_{k \in \mathcal{P}_H} \widehat{W}_1(k)\bigg( \frac{\widehat{W}_1(k)}{2 \mathcal{A}_k} -  \frac{\widehat{g}_k}{2 k^2} \bigg) \dd k \bigg\vert  \\
&\leq C \rho_z^3 \ell^{4}\delta^4 K_H^{-4}  +  C\rho_z \bigg\vert \int_{k \in \mathcal P_H} \widehat{W}_1(k)^2\bigg( \frac{1}{2 \mathcal{A}_k} -\frac{1}{|k|^2} \bigg) \dd k\bigg\vert  \\ & \quad + C \rho_z \bigg\vert \int_{k \in \mathcal P_H} \bigg(  \widehat{W}_1(k)\frac{\widehat{W}_1(k)- \widehat{g}_k}{2 k^2} \bigg)\dd k \bigg\vert,
\end{align*}
where we used that $\mathcal{A}_k \geq \frac{1}{2}|k|^2$ for $k \in \mathcal P_H$. For the remaining terms, we use that in $\mathcal P_H$ we have $0 < k^2 - \tau_k \leq 2 |k| (d s\ell)^{-1}$,
\begin{align*}
C\rho_z \bigg\vert \int_{k \in \mathcal P_H} \frac{\widehat{W}_1(k)^2}{k^2}\bigg( \frac{k^2 -\mathcal{A}_k }{ \mathcal{A}_k} \bigg) \dd k \bigg\vert &\leq C \rho_z \int_{k \in \mathcal P_H} \frac{\widehat{W}_1(k)^2}{ k^2} \bigg( \frac{2|k|(ds\ell)^{-1}}{k^2} + \rho_z \frac{\widehat{W}_1(k)}{k^2}\bigg)\\
&\leq C \rho_z \delta^2 (ds)^{-1} K_H^{-1} + C\rho_z^2 \ell^2 \delta^3 K_H^{-2}.  
\end{align*}
By Cauchy-Schwarz inequality we get for the last term
\small
\begin{align*}
\rho_z \bigg\vert \int_{k \in \mathcal P_H} \bigg(  \widehat{W}_1(k)\frac{\widehat{W}_1(k)- \widehat{g}_k}{2 k^2} \bigg)\dd k \bigg\vert \leq  C \rho_z \delta \int_{k \in \mathcal P_H} \frac{\widehat{W}_1(k)^2}{2k^2} \dd k + C \rho_z \delta^{-1} \int_{k \in \mathcal P_H} \frac{(\widehat{W}_1(k) - \widehat{g}_k)^2}{2k^2}\dd k .
\end{align*}
\normalsize
We complete the domain of the integrals: by Lemma~\ref{lem:gomegaapprox} we get 
\begin{align*}
\rho_z \delta \int_{k \in \mathcal P_H} \frac{\widehat{W}_1(k)^2}{2k^2} \dd k &\leq C \rho_z \delta  \widehat{(g \omega)}_0 + C \rho_z \delta \int_{k \notin \mathcal P_H} \frac{\widehat{W}_1(k)^2 - \widehat{W}_1(0)^2 \one_{\{|k|\leq \ell_{\delta}^{-1}\}}}{2k^2} \dd k \\
&\leq C \rho_z\delta^2 + C \rho_z \delta^3  (R^2 \ell^{-2}_{\delta} + \log(K_H \ell^{-1}\ell_{\delta})),
\end{align*}
and
\begin{align*}
&\rho_z \delta^{-1} \int_{k \in \mathcal P_H} \frac{(\widehat{W}_1(k)- \widehat{g}_k)^2}{2k^2} \dd k \\
&\leq C\rho_z \delta^{-1} \frac{R^4}{\ell^4} \widehat{g\omega}(0) + C \rho_z \delta^{-1} \bigg\vert \int_{k \notin \mathcal P_H} \frac{(\widehat{W}_1(k)-\widehat{g}_k )^2 - (\widehat{W}_1(0) -\widehat{g}_0 )^2 \one_{\{|k|\leq \ell_{\delta}^{-1}\}}}{2k^2} \dd k \bigg\vert\\
&\leq C\rho_z\frac{R^4}{\ell^4} + C \rho_z \delta \Big(R^2 \ell^{-2}_{\delta} + \frac{R^2}{\ell^2}\log(K_H \ell^{-1}\ell_{\delta})\Big).
\end{align*}
We conclude the proof of \eqref{eq:Walpha_highcontrol} by collecting all the previous estimates and exploiting the relations between the parameters so that $\rho_z \delta^2 \vert \log \delta \vert$ is the dominant term.

For the inequality \eqref{eq:WD_highcontrol}, we can derive it from \eqref{eq:Walpha_highcontrol} and the control on the first term of $(II)$ above using that, for $k \in \mathcal P_H$, $\left| 1- \frac{\mathcal{A}_k}{\mathcal{D}_k} \right|\leq \frac{\mathcal{B}_k^2}{\mathcal{A}_k} \leq C \rho_{z}^2 \delta^2 |k|^{-4} $.

For the last inequality, we estimate the difference, splitting the integral for $|k|\leq \xi \ell^{-1}$ or otherwise,
\begin{align*}
\rho_z&\frac{\ell^2}{(2\pi)^2}\int_{\mathbb{R}^2} (\widehat{(W_1\omega)}(k) -\widehat{(W_1\omega)}(0))a^{\dagger}_k a_k \dd k \\
&\geq -C \rho_z \xi^2\delta \frac{R^2}{\ell^2} n_+  - \frac{2\ell^2}{(2\pi)^2} \rho_z \delta  \int_{\mathbb{R}^2} a^{\dagger}_k \one_{\{|k| \geq \xi \ell^{-1}\}} a_k \dd k
\end{align*}
where we used a Taylor expansion and estimated the integral for $|k| \leq \xi \ell^{-1}$. For the second term we exploit the second quantization in a $N-$bosons sector and we insert symmetrically the sum of projectors $1 = \one_{\{\sqrt{-\Delta} \in \mathcal P_L\}} + \one_{\{\sqrt{-\Delta} \in \mathcal P_L^c\}}$
\begin{align*}
\frac{\ell^2}{(2\pi)^2}\int_{\mathbb{R}^2} a^{\dagger}_k \one_{\{|k| \geq \xi \ell^{-1}\}} a_k \dd k\Big|_N &= \sum_{j=1}^N Q_j \chi_{\Lambda}(x_j) \one_{\{\sqrt{-\Delta_j} \geq \xi \ell^{-1}\}} \chi_{\Lambda}(x_j) Q_j \\
&\geq 2 n_+^H  + 2 \mathcal{N} n_+
\end{align*}
where we estimated by a Cauchy-Schwarz the cross terms $(\mathcal P_L, \mathcal P_L^c)$ to make them comparable to the diagonal terms and denoted by  
\begin{equation}
\mathcal{N} := \|\one_{\{\sqrt{-\Delta} \in \mathcal P_L\}} \chi_{\Lambda} (x) \one_{\{\sqrt{-\Delta} \geq \xi \ell^{-1}\}}\|^2 \leq C \xi^{-2} d^{-4}.
\end{equation}
Here we used the regularity properties of $\chi_{\Lambda}$ dividing and multiplying by $-\Delta$. We conclude optimizing $\xi$ by the choice $\xi^2 = d^{-2} \frac{\ell}{R }$.
\end{proof}

\subsubsection{Estimates on \texorpdfstring{$\mathcal Q_3^{(1)}$}{Q\_3\^1}}

The first part $\mathcal Q_3^{(1)}$ will absorb $\mathcal Q_2^{\rm{ex}}$ using $\mathcal K_H^{\rm{Diag}}$.

\begin{lemma}[Estimates on $\mathcal Q_3^{(1)}$] \label{lem.Q31}
For any state $\Phi$ satisfying \eqref{def:statefiniteexcit} we have
\begin{multline*}
\left\langle \mathcal Q_3^{(1)} + \mathcal Q_2^{ex} + \big(1- 2 \varepsilon_K \big) \mathcal K^{\rm{Diag}}_H + \frac{b}{100} \frac{n_+}{\ell^2} + \frac{b}{100}\frac{\varepsilon_T n_+^{H}}{(d\ell)^{2}} \right\rangle_{\Phi} \\
 \geq  - C \rho_z^2 \ell^2 \delta^2 \Big( \delta K_H^{-8} K_{\ell}^{10} d^{-4}  +  \varepsilon_K^{-1} K_H^{-12} K_\ell^{10} d^{-8} +  d^{-8}K_{\ell}^2 K_H^{-4} \Big).
\end{multline*}
\end{lemma}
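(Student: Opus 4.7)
The strategy is a Bogoliubov-type completion of the square. I would use $\mathcal K_H^{\rm{Diag}}$ to absorb the cubic terms in $\mathcal Q_3^{(1)}$, producing a residual that approximately cancels $\mathcal Q_2^{\rm{ex}}$ thanks to the identity \eqref{eq:WD_highcontrol}. The small reserve $\varepsilon_K \mathcal K_H^{\rm{Diag}}$ and the spectral gaps $\frac{b}{100}\frac{n_+}{\ell^{2}}$ and $\frac{b}{100}\frac{\varepsilon_T n_+^{H}}{(d\ell)^{2}}$ on the left-hand side will absorb the various commutator and subleading remainders.

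First I would split $\mathcal Q_3^{(1)} = \mathcal Q_3^{(1,a)} + \mathcal Q_3^{(1,b)}$ into the ``normal'' part containing $\widetilde a_p^\dagger b_{p-k} b_k + \text{h.c.}$ and the ``anomalous'' part containing $\alpha_k \alpha_{p-k} \widetilde a_p^\dagger b_{k-p}^\dagger b_{-k}^\dagger + \text{h.c.}$ The anomalous part is controlled via Cauchy--Schwarz against a fraction $\varepsilon_K$ of $\mathcal K_H^{\rm{Diag}}$: using $|\alpha_k \alpha_{p-k}| \leq C K_\ell^4 K_H^{-4}$ on $\mathcal P_H$ from \eqref{eq:alpha_highcontrol}, the support bound $\int f_L(p)^2 \dd p \leq C d^{-4} \ell^{-2}$, and the size of $\widehat W_1$ together with \eqref{eq:D_highcontrol}, this contributes a bound matching the error proportional to $\varepsilon_K^{-1} K_H^{-12} K_\ell^{10} d^{-8}$ in the statement.

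For the normal part $\mathcal Q_3^{(1,a)}$, I would complete the square at fixed $p$: for each $k \in \mathcal P_H$,
\[
\mathcal D_k b_k^\dagger b_k + A(p,k)\bigl(\widetilde a_p^\dagger b_{p-k} b_k + \text{h.c.}\bigr) \geq - \frac{|A(p,k)|^2}{\mathcal D_k}\, \widetilde a_p b_{p-k}^\dagger b_{p-k} \widetilde a_p^\dagger,
\]
with $A(p,k) = \bar z f_L(p) \widehat W_1(k)/[(2\pi)^4 \sqrt{(1-\alpha_k^2)(1-\alpha_{p-k}^2)}]$, dropping the positive completed square. After the change of variables $q = p-k$, integrating the residual in $k$, and applying \eqref{eq:WD_highcontrol} in the form $\int_{\mathcal P_H} \widehat W_1(k)^2/(2\mathcal D_k) \dd k = (2\pi)^2 \widehat{(\omega W_1)}(0) + \mathcal O(\delta^2 |\log \delta|)$, together with the approximation $a_q \approx b_q$ on $\mathcal P_H$ (from \eqref{eq:alpha_highcontrol}), the residual becomes approximately $- \rho_z \widehat{(\omega W_1)}(0) \cdot \frac{\ell^2}{(2\pi)^2}\int a_q^\dagger a_q \dd q$, cancelling the $\widehat{(\omega W_1)}(0)$-contribution of $\mathcal Q_2^{\rm{ex}}$. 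The $\widehat{(\omega W_1)}(k)$-contribution of $\mathcal Q_2^{\rm{ex}}$ is reduced to the same form by \eqref{eq:Q2approx}, the correction being absorbed in $\frac{b}{100}\frac{n_+}{\ell^{2}}$ and $\frac{b}{100}\frac{\varepsilon_T n_+^H}{(d\ell)^2}$.

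The main obstacle is the bookkeeping of the commutators between $b$-operators from \eqref{eq:commut_bb}--\eqref{eq:commut_bbdagger} and between $\widetilde a_p$ and the $b$'s from \eqref{eq:commut_astarbstar}--\eqref{eq:commut_abstar}, which appear both when completing the square and when commuting $\widetilde a_p$ past $b_{p-k}^\dagger b_{p-k}$ to reach the $a_q^\dagger a_q$ form; these induce off-diagonal remainders estimated via further Cauchy--Schwarz against the unused fraction of the spectral gaps. The error terms $\delta K_H^{-8} K_\ell^{10} d^{-4}$ and $d^{-8} K_\ell^2 K_H^{-4}$ arise, respectively, from the $\mathcal O(\delta^2|\log\delta|)$ correction in \eqref{eq:WD_highcontrol} propagated through the residual, and from the next-order expansion of $(1-\alpha_k^2)^{-1/2}(1-\alpha_{p-k}^2)^{-1/2}$ combined with the non-cancellation in the change of variables $q = p-k$ (since $\widehat W_1(p-q)$ and $\mathcal D_{p-q}$ only match $\widehat W_1(-q)$ and $\mathcal D_q$ up to $\mathcal O(|p|/|q|)$ corrections controlled by $d^{-2}/K_H$).
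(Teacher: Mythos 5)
Your high-level plan (separate the $\alpha_k\alpha_{p-k}$-anomalous part, complete the square on the normal part, use \eqref{eq:WD_highcontrol} to produce a $\widehat{(\omega W_1)}(0)$-term that absorbs $\mathcal Q_2^{\rm{ex}}$) is the right shape, and separating the anomalous part at the start is a perfectly sensible variation on the paper's argument, which instead keeps it inside the completed square. However, the completion of the square as you wrote it is wrong, and this is not cosmetic. You complete ``at fixed $p$'': $\mathcal D_k b_k^{\dagger}b_k + A(p,k)(\widetilde a_p^{\dagger}b_{p-k}b_k + \text{h.c.})\geq -|A(p,k)|^2\mathcal D_k^{-1}\,\widetilde a_p b_{p-k}^{\dagger}b_{p-k}\widetilde a_p^{\dagger}$. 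Integrating this over $p$ produces $\int f_L(p)\dd p$ copies of $\mathcal D_k b_k^{\dagger}b_k$ on the left, but you only have one. The correct completion is at fixed $k$ with $X_k := \int A(p,k)^* b_{p-k}^{\dagger}\widetilde a_p\,\dd p$, so that the residual $\mathcal D_k^{-1}X_k^{\dagger}X_k$ is a \emph{double} integral over $(p,s)$ with cross terms $\widetilde a_p^{\dagger}b_{p-k}b_{s-k}^{\dagger}\widetilde a_s$. Your single-$p$ expression is not what comes out, and in fact the object you end up describing (after $q=p-k$, ``the residual becomes approximately $-\rho_z\widehat{(\omega W_1)}(0)\int a_q^{\dagger}a_q\,\dd q$'') is essentially the normal-ordered piece of the double-integral residual, which is a subleading error term, not the term that cancels $\mathcal Q_2^{\rm ex}$.

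The cancellation with $\mathcal Q_2^{\rm{ex}}$ comes precisely from the piece you have relegated to ``bookkeeping of commutators'': the $b$'s are not canonical, and $[b_{p-k},b_{s-k}^{\dagger}]\approx \widehat{\chi^2}((p-s)\ell)$ (see \eqref{eq:commut_bbdagger}). In the double-integral residual this enforces $p\approx s$ and turns $\int\!\!\int f_L(p)f_L(s)\,\widetilde a_p^{\dagger}\widehat{\chi^2}((p-s)\ell)\widetilde a_s\,\dd p\,\dd s$ into $\sum_j Q_{L,j}^{\dagger}\chi_{\Lambda}^2 Q_{L,j}$, i.e.\ a localized number operator. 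That object --- not $\int a_q^{\dagger}a_q\,\dd q$ directly --- is what \eqref{eq:WD_highcontrol} multiplies, and it is then compared to $\sum_j Q_j\chi_\Lambda^2 Q_j$ and to $\mathcal Q_2^{\rm{ex}}$ via \eqref{eq:Q2approx}, paying $n_+^H$-errors absorbed by the gap. Your narrative has this inverted. Relatedly, the provenance you assign to $\delta K_H^{-8}K_\ell^{10}d^{-4}$ (``the $\mathcal O(\delta^2|\log\delta|)$ correction in \eqref{eq:WD_highcontrol} propagated through'') is also off: that correction, multiplied by $n_+/\ell^2$, is absorbed in the spectral gap via \eqref{eq:rel_T2comm2}; the $\delta K_H^{-8}K_\ell^{10}d^{-4}$ error actually comes from the commutator $[b_k^{\dagger},\widetilde a_{-p}^{\dagger}b_{p-k}^{\dagger}]$ already present when you complete the square (the paper's $\mathcal T_1$). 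To repair your proof you would need to write the residual as the genuine $\int\!\!\int_{p,s}$ expression, extract the $[b_{p-k},b_{s-k}^{\dagger}]$ commutator to produce $\sum_j Q_{L,j}^{\dagger}\chi_\Lambda^2 Q_{L,j}$, and separately control the fully normal-ordered cross-$(p,s)$ part by Cauchy--Schwarz against $\varepsilon_K\mathcal K_H^{\rm{Diag}}$ plus the number bound from Lemma~\ref{lem:numbercontrolhigh} --- that is where the $d^{-8}K_\ell^2 K_H^{-4}$ error term comes from.
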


\begin{proof}
We first reorder the creation an annihilation operators, applying a change of variables $k \mapsto -k, p \mapsto -p$ in the $\alpha$ terms,
\begin{align*}
\mathcal Q_3^{(1)} &= \frac{z \ell^2}{(2\pi)^4} \int_{\mathcal{P}_H \times \R^2} \frac{f_L(p) \widehat W_1(k)}{\sqrt{1-\alpha_k^2}\sqrt{1-\alpha_{p-k}^2}} \\
& \quad \times \big( \widetilde a_p^\dagger b_{p-k} b_k + \alpha_k \alpha_{p-k} \widetilde a_{-p}^\dagger b_{p-k}^\dagger b_{k}^\dagger + b_k^\dagger b_{p-k}^\dagger \widetilde a _p + \alpha_k \alpha_{p-k} b_k b_{p-k} \widetilde{a}_{-p} \big)  \dd k \dd p \\
&= \frac{z \ell^2}{(2\pi)^4} \int_{\mathcal{P}_H \times \R^2} \frac{f_L(p) \widehat W_1(k)}{\sqrt{1-\alpha_k^2}\sqrt{1-\alpha_{p-k}^2}} \Big( \big( \widetilde a_p^\dagger b_{p-k} + \alpha_k \alpha_{p-k} b_{p-k} \widetilde a_{-p} \big) b_k  \\ 
& \quad + b_k^\dagger \big( b_{p-k}^\dagger \widetilde a_p + \alpha_k \alpha_{p-k} \widetilde a^\dagger_{-p} b_{p-k}^\dagger \big) + \alpha_k \alpha_{p-k} \big( \left[ b_k , b_{p-k} \widetilde a_{-p} \right]  + \big[ \widetilde a _{-p}^\dagger b_{p-k}^\dagger, b_k^\dagger \big] \big) \Big)  \dd k \dd p .
\end{align*}
We can complete the square to get, for $\varepsilon_K \ll 1$ fixed in Appendix~\ref{app:parameters},
\begin{equation}\label{eq.Q31.low}
\mathcal Q_3^{(1)} + (1- 3 \varepsilon_K) \mathcal K^{\rm{Diag}}_H =  (1- 3 \varepsilon_K) \frac{\ell^2}{(2\pi)^2} \int_{\mathcal{P}_H}  \mathcal D_k c_k^\dagger c_k \dd k + \frac{\ell^2}{(2\pi)^2} \int_{\mathcal{P}_H} \big( \mathcal T_1(k) + \mathcal T_2(k) \big)\dd k, 
\end{equation}
where we mantained a small portion of $\mathcal{K}^{\rm{Diag}}_H$ in order to bound other error terms and we defined  
\begin{equation}
 c_k = b_k + \frac{z}{\mathcal D_k (1- 3\varepsilon_K) (2\pi)^2} \int \frac{f_L(p) \widehat W_1(k)}{\sqrt{1-\alpha_k^2}\sqrt{1-\alpha_{p-k}^2}} \Big( b_{p-k}^\dagger \widetilde a_p + \alpha_k \alpha_{p-k} \widetilde a_{-p}^\dagger b_{p-k}^\dagger \Big)   \dd p,
\end{equation}
\begin{equation}\label{term.T1}
\mathcal T_1(k) = \frac{z }{(2\pi)^2} \int_{ \R^2}  \frac{f_L(p) \widehat W_1(k)}{\sqrt{1-\alpha_k^2}\sqrt{1-\alpha_{p-k}^2}} \alpha_k \alpha_{p-k} \big( \big[ b_k^\dagger, \widetilde a_{-p}^\dagger b_{p-k}^\dagger \big] + h.c. \big) \dd p,
\end{equation}
\begin{align}\label{term.T2}
\mathcal T_2 (k) &= - \frac{\vert z \vert^2 \widehat W_1(k)^2}{(1-3\varepsilon_K) \mathcal D_k (1-\alpha_k^2)(2\pi)^4}  \int \frac{f_L(p) f_L(s)}{\sqrt{1-\alpha_{s-k}^2} \sqrt{1- \alpha_{p-k}^2}} \\ & \qquad \times \big( \widetilde a_{p}^\dagger b_{p-k}  + \alpha_k \alpha_{p-k} b_{p-k} \widetilde a_{-p} \big) \big( b_{s-k}^\dagger \widetilde{a}_s + \alpha_k \alpha_{s-k} \widetilde a_{-s}^\dagger b_{s-k}^\dagger \big)  \dd p \dd s.
\end{align}

$\bullet$ Let us estimate the error term $\mathcal T_1(k)$. We use 
$ \big[ b_k^\dagger, \widetilde a_{-p}^\dagger b_{p-k}^\dagger \big] = \widetilde{a}_{-p}^\dagger \big[ b_k^\dagger, b_{p-k}^\dagger \big] + \big[ b_k^\dagger, \widetilde a_{-p}^\dagger \big] b_{p-k}^\dagger $
and the Cauchy-Schwarz inequality with weights $\varepsilon_1$, $\varepsilon_2 >0$,
\begin{align*}
\mathcal T_1(k) &\geq - C z \int_{ \R^2}  \frac{f_L(p) \widehat W_1(k)}{\sqrt{1-\alpha_k^2}\sqrt{1-\alpha_{p-k}^2}} \vert \alpha_k \alpha_{p-k} \vert \\ 
& \quad \Big(  \big( \varepsilon_1 \widetilde a_{-p}^\dagger \widetilde a_{-p} + \varepsilon_1^{-1} \big) \vert [ b_k^\dagger, b_{p-k}^\dagger] \vert + \vert [b_k^\dagger, \widetilde a_{-p}^\dagger ] \vert \big( \varepsilon_2 b_{p-k}^\dagger b_{p-k}  + \varepsilon_2^{-1} \big) \Big)\dd p.
\end{align*}

By \eqref{eq:commut_astarbstar} and \eqref{eq:commut_bb} we have $ \vert [ b_k^{\dagger}, \widetilde{a}_{-p}^\dagger] \vert \leq C \vert \alpha_k \vert$ and $\vert [ b_k^{\dagger} , b_{p-k}^{\dagger} ] \vert \leq C \vert \alpha_k \vert$. Therefore using \eqref{eq:alpha_highcontrol},
\begin{align*}
\frac{\ell^2}{(2\pi)^2} \int_{\mathcal{P}_H} \mathcal T_1(k)  \dd k &\geq -C \frac{z \ell^2}{(2\pi)^4} \int_{\mathcal{P}_H \times \R^2} \frac{\vert f_L(p) \vert \rho_z^3 \delta^4 }{k^6} \\ 
& \quad \times \Big(  (\varepsilon_1 \widetilde a_{-p}^\dagger \widetilde a_{-p} + \varepsilon^{-1}_1) +  ( \varepsilon_2 b_{p-k}^\dagger b_{p-k} + \varepsilon_2^{-1} ) \Big)  \dd k \dd p .
\end{align*}
Due to the presence of the cutoff $f_L$ on low momenta and the bounds
\begin{align}
\int_{\mathcal{P}_L}  ( \varepsilon_1 \widetilde a_{-p}^\dagger \widetilde a_{-p} + \varepsilon_1^{-1} ) \dd p &\leq C \frac{\varepsilon_1 n_+}{\ell^2} + \varepsilon_1^{-1} \frac{d^{-4}}{\ell^2},\\
\int_{\mathcal{P}_L} ( \varepsilon_2 b_{k}^\dagger b_{k} + \varepsilon_2^{-1})\dd p &\leq C \frac{d^{-4}}{\ell^2} ( \varepsilon_2 b_k^\dagger b_k + \varepsilon_2^{-1}) ,
\end{align}
where we changed the $k$ variable, we find,
\begin{equation*}
\frac{\ell^2}{(2\pi)^2} \int_{\mathcal{P}_H}  \mathcal T_1(k) \dd k\geq 
 -C z \rho_z^3 \delta^4  \int_{\mathcal P_H} \frac{1}{k^6} \big(  (\varepsilon_1 n_+ + \varepsilon_1^{-1} d^{-4}) +  d^{-4} ( \varepsilon_2  b_k^\dagger b_k + \varepsilon_2^{-1} ) \big) \dd k.
\end{equation*}
We insert $\mathcal D_k \geq C^{-1} k^{2}$ in front of $b_k^\dagger b_k$ and get the bound
\begin{align*}
\frac{\ell^2}{(2\pi)^2} \int_{\mathcal{P}_H} \mathcal T_1(k) \dd k \geq & - C z \rho_z^3 \delta^4 \ell^6 K_H^{-4} \varepsilon_1 \frac{n_+}{\ell^2} - C z \varepsilon_1^{-1} d^{-4} \rho_z^3 \delta^4 \ell^4 K_H^{-4} \nonumber \\ 
& - C \varepsilon_2 \ell^6 z \rho_z^3 \delta^4 K_H^{-8} d^{-4}  \ell^2 \int_{\mathcal P_H}  \mathcal{D}_k b_k^\dagger b_k \dd k -C \varepsilon_2^{-1} \ell^4 z \rho_z^3 \delta^4 d^{-4} K_H^{-4} .
\end{align*}
One can choose $\varepsilon_1$, $\varepsilon_2$ such that the first and third terms are absorbed in the positive $\frac{b}{100} \frac{n_+}{\ell^2}$ and $\varepsilon_K \mathcal K_{H}^{\rm{Diag}}$ respectively. With this choice the second and fourth terms are errors of respective sizes
\[ C \ell^2 \rho_z^2  \delta^2 (\delta K_{\ell}^{10} K_H^{-8} d^{-4}) \quad \text{ and } \quad  C \ell^2 \rho_z^2 \delta^2 (  \delta K_{\ell}^{10} K_H^{-12} d^{-4}\varepsilon_K^{-1}) .\]

 $\bullet$ Let us now focus on the square term $\mathcal T_2 (k)$ in \eqref{term.T2}. One can write, in normal order,
\begin{align*}
\widetilde a_p^\dagger b_{p-k} + \alpha_k \alpha_{p-k} b_{p-k} \widetilde a_{-p} &= \widetilde a_p^\dagger b_{p-k} + \alpha_k \alpha_{p-k} \widetilde a_{-p} b_{p-k} + \alpha_k \alpha_{p-k}  [ b_{p-k}, \widetilde a_{-p} ] ,
\end{align*} 
and use the Cauchy-Schwarz inequality with weight $ \varepsilon_K$ on the cross terms to find
\begin{align}\label{eq.T2}
\mathcal{T}_2(k) &\geq (1+ \varepsilon_K) \mathcal T_2'(k) + (1+\varepsilon^{-1}_K) \mathcal T_2''(k),
\intertext{with}
\mathcal T_2'(k) &= -  \frac{\vert z \vert^2 \widehat W_1(k)^2 }{(1-3\varepsilon_K) \mathcal D_k (1-\alpha_k^2) (2\pi)^4} \int \frac{f_L(p) f_L(s)}{\sqrt{1-\alpha_{p-k}^2}\sqrt{1-\alpha_{s-k}^2}} \nonumber \\ 
& \qquad \times ( \widetilde a_p^\dagger + \alpha_k \alpha_{p-k} \widetilde a_{-p}) b_{p-k} b_{s-k}^\dagger (\widetilde a_s + \alpha_k \alpha_{s-k} \widetilde a _{-s}^\dagger )  \dd p \dd s,\nonumber \\
\mathcal T_2''(k) &= - \frac{\vert z \vert^2 \widehat W_1(k)^2}{(1-3\varepsilon_K) \mathcal D_k (1-\alpha_k^2) (2\pi)^4} \int  \frac{f_L(p) f_L(s)}{\sqrt{1-\alpha_{s-k}^2} \sqrt{1-\alpha_{p-k}^2}} \nonumber \\
& \qquad \times  \alpha_k^2 \alpha_{p-k} \alpha_{s-k} \vert [ b_{p-k}, \widetilde a_{-p}] \vert \vert [ \widetilde a_{-s}^\dagger, b_{s-k}^\dagger] \vert \dd p \dd s .\nonumber 
\end{align}
$\mathcal T_2''$ we can estimate (for $k \in \mathcal{P}_H$),
\begin{align*}
\frac{\ell^2}{(2\pi)^2}\int_{\mathcal{P}_H} \mathcal T_2 '' (k) \dd k&\geq - C \rho_z \ell^4 \Big( \int_{\mathcal{P}_H} \frac{\widehat W_1(k)^2}{(1-3\varepsilon_K) \mathcal D_k} \vert \alpha_k \vert^4 \dd k \Big) d^{-8} \ell^{-4} \sup \vert [ b_{p-k} , \widetilde a_{-s} ] \vert ^2 ,
\end{align*}
and by \eqref{eq:WD_highcontrol}, \eqref{eq:commut_astarbstar} and \eqref{eq:alpha_highcontrol} we get
\begin{equation}\label{eq.T2''}
(1+ \varepsilon_K^{-1}) \frac{\ell^2}{(2\pi)^2} \int_{\mathcal{P}_H} \mathcal T_2''(k) \dd k \geq -C \rho_z^2 \ell^2\delta^2( \varepsilon_K^{-1}  K_H^{-12} K_{\ell}^{10} d^{-8}).
\end{equation}
Now we use a commutator to write $\mathcal T_{2}' = \mathcal T_{2,\rm{op}}' + \mathcal T_{2,\rm{com}}'$ in normal order for the $b_k$, with
\begin{align}
\mathcal T_{2,\rm{op}}'(k) &=  - \frac{\vert z \vert^2 \widehat W_1(k)^2}{(2\pi)^4 (1-3 \varepsilon_K) \mathcal D_k (1-\alpha_k^2)} \int \frac{f_L(p) f_L(s)}{\sqrt{1-\alpha_{p-k}^2}\sqrt{1-\alpha_{s-k}^2}} \nonumber \\
& \qquad \times ( \widetilde a_p^\dagger + \alpha_k \alpha_{p-k} \widetilde a _{-p}) b_{s-k}^\dagger b_{p-k} ( \widetilde a _{s} + \alpha_k \alpha_{s-k} \widetilde{a}_{-s}^\dagger)  \dd p \dd s, \nonumber \\
\label{term.T2com}
\mathcal T_{2,\rm{com}}'(k) &=  - \frac{\vert z \vert^2 \widehat W_1(k)^2}{(2\pi)^4 (1-3\varepsilon_K) \mathcal D_k (1-\alpha_k^2)} \int  \frac{f_L(p) f_L(s)}{\sqrt{1-\alpha_{p-k}^2}\sqrt{1-\alpha_{s-k}^2}} \nonumber \\
& \qquad \times ( \widetilde a_p^\dagger + \alpha_k \alpha_{p-k} \widetilde a _{-p}) [ b_{p-k} , b_{s-k}^\dagger] ( \widetilde a _{s} + \alpha_k \alpha_{s-k} \widetilde{a}_{-s}^\dagger) \dd p \dd s .
\end{align}
$\bullet$ In order to estimate the error term $\mathcal T_{2,\rm{op}}'$, we introduce
\begin{equation}
\tau_s := \widetilde a _{s} + \alpha_k \alpha_{s-k} \widetilde{a}_{-s}^\dagger \quad \text{and} \quad \mathcal C := \sup_{p,s \in \mathcal{P}_L , k \in \mathcal{P}_H} \vert [  b_{p-k} , \tau_s ] \vert .
\end{equation}
In $\mathcal T_{2,\rm{op}}'$ we commute the $b$'s trough the $a$'s,
\begin{align*}
\tau_p^\dagger b_{s-k}^\dagger b_{p-k} \tau_s = b_{s-k}^\dagger \tau_p^\dagger \tau_s b_{p-k} + [\tau_p^\dagger, b_{s-k}^\dagger] \tau_s b_{p-k} + b_{s-k}^\dagger \tau_p^\dagger [b_{p-k}, \tau_s] +  [\tau_p^\dagger, b_{s-k}^\dagger] [b_{p-k}, \tau_s],
\end{align*}
and use the Cauchy-Schwarz inequality
\begin{align*}
\tau_p^\dagger b_{s-k}^\dagger b_{p-k} \tau_s \leq C( b_{s-k}^\dagger \tau_p^\dagger \tau_p b_{s-k} +  b_{p-k}^\dagger \tau ^\dagger_s \tau_s b_{p-k} + \mathcal C^2 ) .
\end{align*}
Inserting this in $\mathcal T_{2,\rm{op}}'$,  bounding $(1-3 \varepsilon_K)(1-\alpha_k) \geq 1/2$ and noticing that we can exchange $s$ and $p$ in the integral, we find
\begin{align*}
\mathcal T_{2,\rm{op}}'(k) \geq  -C \frac{\vert z \vert^2 \widehat W_1(k)^2}{\mathcal D_k } \int \frac{f_L(p) f_L(s)}{\sqrt{1-\alpha_{p-k}^2}\sqrt{1-\alpha_{s-k}^2}} (b_{s-k}^\dagger \tau_p^\dagger \tau_p b_{s-k} +  \mathcal C^2) \dd p \dd s.
\end{align*}
When we apply this operator to the state $\Phi$ which satisfies $\one_{[0,\mathcal M]}(n_+^L) \Phi = \Phi$ we can apply Lemma~\ref{lem:numbercontrolhigh} for the vector $b_{s-k} \Phi$ to get the estimate
\begin{align*}
 \langle \mathcal T_{2,\rm{op}}'(k) \rangle_{\Phi} \geq - C \frac{ \vert z \vert^2 \widehat W_1(k)^2}{\mathcal D_k} \Big( \ell^{-2} \mathcal M \int f_L(s) \langle b_{s-k}^\dagger b_{s-k}  \rangle_\Phi \dd s+ d^{-8} \ell^{-4} \mathcal C^2  \Big) ,
\end{align*}
and finally, using again \eqref{eq:WD_highcontrol} and \eqref{eq:D_highcontrol}, and the fact that $\mathcal{C} \leq C K_{\ell}^2 K_H^{-2}$ by \eqref{eq:commut_astarbstar} and Lemma~\ref{lem:localization_properties},
\begin{align}
\frac{\ell^2}{(2\pi)^2} \int_{\mathcal{P}_H} \langle \mathcal T_{2,\rm{op}}'(k) \rangle_{\Phi} \dd k &\geq - C \rho_z \ell^2 \delta^2 K_H^{-4} d^{-4} \mathcal M \langle \mathcal K_H^{\rm{Diag}} \rangle_{\Phi} - C \rho_z \delta d^{-8} K_{\ell}^4 K_H^{-4} \nonumber \\
&\geq-C K_{\ell}^4 K_H^{-4} d^{-4} \varepsilon_{\mathcal M} \mathcal K_H^{\rm{Diag}} - C \rho_z^2\ell^2\delta^2 d^{-8}K_{\ell}^2 K_H^{-4}. \label{eq.T2op}
\end{align}
The first part can be absorbed in the positive $\varepsilon_K \mathcal K_H^{\rm{Diag}}$, as long as the relation \eqref{eq:rel_epsK_KM2} holds, and the second part contributes to the error term.

$\bullet$ We now turn to $\mathcal T_{2,\rm{com}}'$ given in \eqref{term.T2com}. This term will absorb $\mathcal Q_2^{ex}$. We first use Lemma~\ref{lem:localization_properties}, \eqref{eq:commut_bbdagger} and \eqref{eq:alpha_highcontrol} to estimate the commutator,
\begin{align*}
\left\vert [\right.&\left.b_{p-k}, b_{s-k}^\dagger]-  \widehat{\chi^2}((p-s)\ell) \right\vert \\ &= \left\vert \alpha_{p-k} \alpha_{s-k} \widehat{\chi^2}((p-s)\ell)\right\vert + \left\vert (1- \alpha_{p-k} \alpha_{s-k}) \widehat{\chi} ((p-k)\ell) \widehat{\chi} ((s-k) \ell)  \right\vert \leq C K_{\ell}^{4}K_H^{-4},
\end{align*}
and bounding then by a Cauchy-Schwarz inequality
\begin{align*}
( \widetilde a_p^\dagger + \alpha_k \alpha_{p-k} \widetilde a _{-p})& (\widehat{(\chi^2)}((p-s)\ell) + C K_{\ell}^{4} K_{H}^{-4}) ( \widetilde a _{s} + \alpha_k \alpha_{s-k} \widetilde{a}_{-s}^\dagger) \\
&\leq  \widetilde a_p^\dagger \widehat{\chi^2}((p-s)\ell) \widetilde a_s + C (\widetilde a_{-p}^\dagger \widetilde a_{-p} + \widetilde{a}_s^\dagger \widetilde a_s ) K_\ell^4 K_H^{-4}.
\end{align*}
We get, by using Lemma~\ref{lem:numbercontrolhigh} 
\begin{align*}
\frac{\ell^2}{(2\pi)^2} \int_{\mathcal{P}_H} \mathcal T_{2,\rm{com}}'(k) \dd k &\geq 
 - \frac{\ell^2}{(2\pi)^2} \int_{\mathcal{P}_H} \frac{\vert z \vert^2 \widehat W_1(k)^2}{(2\pi)^4 (1-3\varepsilon_K) \mathcal D_k (1-\alpha_k^2)} \\
& \quad \times \int \frac{f_L(p) f_L(s)}{\sqrt{1-\alpha_{p-k}^2}\sqrt{1-\alpha_{s-k}^2}}  \widetilde a_p^{\dagger} \widehat{\chi^2}((p-s)\ell) \widetilde a_s   \dd p \dd s \dd k    \\
&\quad - C \Big( \int_{\mathcal{P}_H} \frac{|z|^2 \widehat W_1(k)^2}{\mathcal D_k}  \dd k \Big) d^{-4}  K_\ell^4 K_H^{-4} \frac{n_+}{\ell^2}.
\end{align*}
Using \eqref{eq:D_highcontrol} the last part is of order 
$K_{\ell}^6 K_H^{-4} d^{-4} \frac{n_+}{\ell^2}$
and can be absorbed in a fraction of the positive $\frac{b}{100} \frac{n_+}{\ell^2}$ by \eqref{eq:rel_T2comm}. For the first term we use the following formula valid in a Fock sector with $N$ bosons
\begin{equation}
\frac{\ell^4}{(2\pi)^4} \int f_L(p) f_L(s) \widetilde{a}_p^{\dagger}  \widehat{\chi^2}((p-s)\ell)  a_s \dd s \dd p|_N = \sum_{j=1}^N Q_{L,j}^{\dagger} \chi_\Lambda^2(x_j) Q_{L,j},
\end{equation}
to rewrite, due to \eqref{eq:rel_T2comm} and by \eqref{eq:D_highcontrol}, 
\begin{align*}
\frac{\ell^2}{(2\pi)^2} \int_{\mathcal{P}_H} \mathcal T_{2,\rm{com}}'(k) \dd k &\geq - \frac{(1+C\varepsilon_K )}{(2\pi)^{2}} \int  \frac{\rho_z \widehat W_1(k)^{2}}{\mathcal D_k (1-\alpha_k^2)} \dd k \sum_{j=1}^N Q_{L,j}^{\dagger} \chi_\Lambda^2(x_j) Q_{L,j} -\frac{b}{200}\frac{n_+}{\ell^2} \\
&\geq - (1+ C \varepsilon_K + C K_\ell^4 K_H^{-4}) \\ & \quad \times \Big( 2 \rho_z \widehat{(\omega W_1)}(0) + C \rho_z \delta^2 \vert \log \delta \vert \Big) \sum_{j=1}^N Q_{L,j}^{\dagger} \chi_\Lambda^2 Q_{L,j} -\frac{b}{200}\frac{n_+}{\ell^2}.
\end{align*} 
In this last expression we want to replace $Q_{L,j}$ by $Q_j$. Using Cauchy-Schwarz with weight $\varepsilon_0$ we find
\begin{equation}
Q_{L,j}^{\dagger} \chi_\Lambda^2 Q_{L,j} \leq (1+\varepsilon_0) Q_j \chi_\Lambda^2 Q_j + (1+\varepsilon_0^{-1}) Q_j (f_L - 1) \chi_j^2 (f_L - 1) Q_j,
\end{equation}
and since $f_L$ localizes on low momenta we can bound the second term by $n_+^H$, and $\varepsilon_0 Q_j \chi_j^2 Q_j$ by $C \varepsilon_0 n_+$,
\begin{align*}
\frac{\ell^2}{(2\pi)^2} \int_{\mathcal{P}_H} \mathcal  T_{2,\rm{com}}'(k)\dd k & \geq - 2 \rho_z \widehat{(\omega W_1)}(0) \sum_{j=1}^N \big( Q_j \chi_j^2 Q_j + C \varepsilon^{-1}_0 n_+^H + C \varepsilon_0 n_+ \big) \\
&\quad- C \Big(\rho_z \ell^2\delta^2 \vert \log \delta \vert  + \rho_z \ell^2\delta \varepsilon_K + \rho_z \ell^2 \delta K_\ell^4 K_H^{-4} + \frac{b}{200}\Big) \frac{n_+}{\ell^2}.
\end{align*}
The $n_+^H$-part can be absorbed by the positive $\frac{b}{100} \frac{\varepsilon_T n_+^H}{(d \ell)^2}$ if we choose $\varepsilon_0 \simeq \frac{\rho_z \delta d^2 \ell^2}{\varepsilon_T} \simeq \frac{d^2 K_\ell^2}{\varepsilon_T}$. With this choice the $n_+$ terms are of order
\begin{equation}
\Big( \frac{d^2 K_\ell^4}{\varepsilon_T} + \delta K_\ell^2 \vert \log \delta \vert + K_\ell^2 \varepsilon_K + K_\ell^6 K_H^{-4} + \frac{b}{200} \Big) \frac{n_+}{\ell^2} .
\end{equation}
Those terms are absorbed in a fraction of the positive $\frac{b}{100} \frac{n_+}{\ell^2}$, as long as we have the relations \eqref{eq:rel_T2comm}, \eqref{eq:rel_Kl_KH_d}, \eqref{eq:Kl_KN_number} and \eqref{eq:rel_T2comm2}.
We deduce
\begin{equation*}
\frac{\ell^2}{(2\pi)^2} \int_{\mathcal{P}_H}  \mathcal T_{2,\rm{com}}'(k)\dd k \geq - 2 \rho_z \widehat{(\omega W_1)}(0) \sum_{j=1}^N Q_j \chi_j^2 Q_j  - \frac{b}{150} \frac{n_+}{\ell^2} - \frac{b}{100}\frac{\varepsilon_T n_+^H}{(d\ell)^2}.
\end{equation*}

To compare the remaining part with $\mathcal Q_2^{\rm{ex}}$ we use \eqref{eq:Q2approx} to find
\begin{align*}
\mathcal Q_2^{\rm{ex}} &= \rho_z \frac{\ell^2}{(2\pi)^2}  \int \Big( \widehat{W\omega}(k) + \widehat{W\omega}(0) \Big) a_k^\dagger  a_k \dd k \\
&\geq 2 \rho_z \frac{\ell^2}{(2\pi)^2} \widehat{W_1 \omega}(0) \int a_k^\dagger a_k \dd k - C \rho_z \delta ( d^{-2} R \ell^{-1} n_+ + C n_+^H)\\
&= 2 \rho_z \widehat{W_1 \omega}(0) \sum_j Q_j \chi_\Lambda^2 Q_j - C \rho_z \delta d^{-2} R \ell^{-1} n_+ + C \rho_z \delta n_+^H .
\end{align*}
Using that $\rho_z \simeq \rho_{\mu}$, the remaining parts are absorbed by the spectral gaps and then we get
\begin{align*}
\frac{\ell^2}{(2\pi)^2} \int_{\mathcal{P}_H} \mathcal T_{2,\rm{com}}'(k) \dd k + \mathcal Q_2^{\rm{ex}} \geq - \frac{b}{100}\frac{n_+}{\ell^2} - \frac{b}{100} \frac{\varepsilon_T n_+^H}{(d\ell)^2}  .
\end{align*}
This last estimate, together with \eqref{eq.Q31.low}, \eqref{eq.T2}, \eqref{eq.T2''} and \eqref{eq.T2op} concludes the proof.
\end{proof}

\subsubsection{Estimates on \texorpdfstring{$\mathcal Q_3^{(2)}$}{Q\_3\^2} and \texorpdfstring{$\mathcal Q_3^{(3)}$}{Q\_3\^3}}

\begin{lemma}[Estimates on $\mathcal Q_3^{(2)}$ and $\mathcal Q_3^{(3)}$] \label{lem.Q32}
For any normalized state $\Phi$ satisfying \eqref{def:statefiniteexcit} we have
\begin{align*}
\Big\langle \mathcal Q_3^{(2)} + \mathcal Q_3^{(3)} + \frac{\varepsilon_K}{100} \mathcal K^{\rm{Diag}}_H \Big\rangle_\Phi \geq - C \rho_z^2\ell^2  \delta^2 \varepsilon_K^{-1} K_H^{-2M-8} K_\ell^6 d^{-8}.
\end{align*}
\end{lemma}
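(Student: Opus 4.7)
The strategy parallels the treatment of $\mathcal{Q}_3^{(1)}$ in Lemma~\ref{lem.Q31}, but is considerably simpler: both $\mathcal{Q}_3^{(2)}$ and $\mathcal{Q}_3^{(3)}$ carry an extra factor of $\alpha_k$ (respectively $\alpha_{p-k}$), which by \eqref{eq:alpha_highcontrol} is of size $K_\ell^2 K_H^{-2}$ on $\mathcal{P}_H$. These terms are therefore already small and can be absorbed in a fraction $\tfrac{\varepsilon_K}{200}$ of $\mathcal{K}_H^{\rm Diag}$ without altering the leading cancelations arranged in Lemma~\ref{lem.Q31}.

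First, I exploit the evenness in $k$ of $\widehat{W}_1$, $\alpha_k$ and $\mathcal{D}_k$. The change of variable $k\mapsto -k$ in $\mathcal{Q}_3^{(2)}$ turns $b_{-k}^\dagger b_{p-k}$ into $b_k^\dagger b_{p+k}$ while preserving the measure and the integration domain $\mathcal{P}_H$; an analogous substitution ($k\mapsto -k$ followed by $p\mapsto -p$) in $\mathcal{Q}_3^{(3)}$ brings it to the same generic shape. Both terms can thus be written as
\begin{equation*}
-\frac{\ell^2}{(2\pi)^4}\int_{\mathcal{P}_H\times\mathbb{R}^2} \sigma_k\, f_L(p)\,\widehat{W}_1(k)\,\bigl(b_k^\dagger A_{p,k}+A_{p,k}^\dagger b_k\bigr)\,\dd p\,\dd k,
\end{equation*}
with $\sigma_k=\mathcal{O}(\alpha_k)$ and $A_{p,k}=\widetilde{a}_p^\dagger b_{p+k}$ (up to harmless prefactors $(1-\alpha^2)^{-1/2}$), after noting that creation operators commute so that the reordering $\widetilde{a}_p^\dagger b_k^\dagger b_{p+k}=b_k^\dagger\widetilde{a}_p^\dagger b_{p+k}=b_k^\dagger A_{p,k}$ is exact.

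I then complete the square against $\tfrac{\varepsilon_K}{200}\mathcal{K}_H^{\rm Diag}$. Choosing
\begin{equation*}
X_k:=\frac{200\,z\,\alpha_k\,\widehat{W}_1(k)}{\varepsilon_K (2\pi)^2\mathcal{D}_k\sqrt{1-\alpha_k^2}}\int \frac{f_L(p)}{\sqrt{1-\alpha_{p+k}^2}}\,A_{p,k}\,\dd p,
\end{equation*}
so that the linear cross term cancels $\mathcal{Q}_3^{(j)}$, the inequality $\tfrac{\varepsilon_K}{200}\mathcal{D}_k(b_k+X_k)^\dagger(b_k+X_k)\ge 0$ yields
\begin{equation*}
\mathcal{Q}_3^{(j)}+\tfrac{\varepsilon_K}{200}\mathcal{K}_H^{\rm Diag}\ \ge\ -\frac{\varepsilon_K}{200}\int_{\mathcal{P}_H}\frac{\ell^2\mathcal{D}_k}{(2\pi)^2}\,X_k^\dagger X_k\,\dd k.
\end{equation*}
Expanding $X_k^\dagger X_k$ as a double $p,s$-integral of $A_{p,k}^\dagger A_{s,k}=b_{p+k}^\dagger\widetilde{a}_p\widetilde{a}_s^\dagger b_{s+k}$, I reorder using $\widetilde{a}_p\widetilde{a}_s^\dagger=\widetilde{a}_s^\dagger\widetilde{a}_p+[\widetilde{a}_p,\widetilde{a}_s^\dagger]$ and a Cauchy--Schwarz inequality between the $p$ and $s$ variables; the operator part collapses to a single integral of the form $\int f_L(p)^2 \widetilde{a}_p^\dagger\widetilde{a}_p\,b_{p+k}^\dagger b_{p+k}\,\dd p$, which is controlled on states satisfying \eqref{def:statefiniteexcit} using Lemma~\ref{lem:numbercontrolhigh} together with $\mathcal{D}_k\ge k^2/2$ on $\mathcal{P}_H$, the support estimate $|\mathcal{P}_L|\le C d^{-4}\ell^{-2}$, the bound $|\alpha_k|\le C\rho_z\delta|k|^{-2}$, and $|\widehat{W}_1(k)|\le C\delta$. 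Straightforward size-counting then produces the advertised error.

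The main technical nuisance is bookkeeping the residual commutators generated by the reorderings. Each commutator $[\widetilde{a}_p,b_{k'}^\dagger]$ or $[b_{p+k},b_{s+k}^\dagger]$, via \eqref{eq:commut_atilde}--\eqref{eq:commut_abstar}, is expressed through $\widehat{\chi}$ evaluated at a momentum of order $K_H\ell^{-1}$; the regularity estimate of Lemma~\ref{lem:localization_properties} then yields a decay factor $K_H^{-M}$. Two such pairings (one for each $b$-operator) contribute the $K_H^{-2M}$ in the final bound, while the remaining $\varepsilon_K^{-1}K_H^{-8}K_\ell^6 d^{-8}$ comes from the $k$-integral $\int_{\mathcal{P}_H}\alpha_k^2\widehat{W}_1(k)^2/\mathcal{D}_k\,\dd k\lesssim \rho_z^2\delta^4\ell^4 K_H^{-4}$ combined with the $\mathcal{P}_L$-volume and the prefactor $\varepsilon_K^{-1}$. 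These commutator contributions are the one genuine subtlety; every other manipulation is routine Cauchy--Schwarz on states with $n_+^L\le\mathcal{M}$ and $n_+\le 2\rho_\mu\ell^2$.
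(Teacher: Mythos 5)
Your approach---completing the square against a fraction of $\mathcal{K}_H^{\rm{Diag}}$---is a reasonable alternative to the paper's direct Cauchy--Schwarz split, and is consistent in spirit with what the paper does for $\mathcal Q_3^{(1)}$. However, the argument contains a genuine gap in the reordering step that also corrupts your final bookkeeping.

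You claim that ``creation operators commute so that the reordering $\widetilde{a}_p^\dagger b_k^\dagger b_{p+k}=b_k^\dagger\widetilde{a}_p^\dagger b_{p+k}$ is exact.'' This is false: $b_k^\dagger$ is not a pure creation operator. From \eqref{eq:defboperator}, $b_k^\dagger = \frac{1}{\sqrt{1-\alpha_k^2}}(a_k^\dagger + \alpha_k a_{-k} + \bar{c}_k)$ carries an annihilation component $\alpha_k a_{-k}$, so by \eqref{eq:commut_astarbstar} the commutator $[\widetilde{a}_p^\dagger, b_k^\dagger]$ is of order $\alpha_k\,\widehat{\chi}((p\mp k)\ell)$, which is nonzero. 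This is not a harmless omission: that very commutator is what produces one of the two $\sup_{k\in\mathcal{P}_H}|\widehat{\chi}(k\ell)| \sim K_H^{-M}$ factors in the stated bound, and the paper devotes a separate estimate (the term it calls $II$, obtained from $\widetilde{a}_p^\dagger b_{k-p}^\dagger b_k = b_{k-p}^\dagger \widetilde{a}_p^\dagger b_k + [\widetilde{a}_p^\dagger, b_{k-p}^\dagger] b_k$) to it. As written, your completion of the square does not produce $\mathcal Q_3^{(j)}$ on the nose; there is a residual term $\propto \int \sigma_k f_L(p) \widehat{W}_1(k)\bigl([\widetilde{a}_p^\dagger, b_k^\dagger] b_{p+k} + \text{h.c.}\bigr)$ that you must estimate in addition to $-\frac{\varepsilon_K}{200}\int\mathcal{D}_k X_k^\dagger X_k$.

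Relatedly, your claim that ``two such pairings (one for each $b$-operator)'' inside $X_k^\dagger X_k$ account for the $K_H^{-2M}$ is not the right accounting. In the paper, one factor of $K_H^{-M}$ enters through the commutator $[b_{k-p},\widetilde{a}_p]$ (the $II$-term), and the second enters through the weight $\varepsilon^{-1}$ in the Cauchy--Schwarz inequality, whose size is dictated by the requirement of absorbing the other half into $\varepsilon_K \mathcal{K}_H^{\rm{Diag}}$; the weight therefore inherits a factor of $\sup\widehat{\chi}$. Note that, by contrast, the square term $X_k^\dagger X_k$ in your scheme (which corresponds to what the paper labels $I$) is controlled purely by $\alpha_{p-k}\sim K_\ell^2 K_H^{-2}$ and Lemma~\ref{lem:numbercontrolhigh}, with no $\widehat{\chi}$-decay entering at all; the paper absorbs it entirely into $\varepsilon_K\mathcal{K}_H^{\rm{Diag}}$ via the relation \eqref{eq:rel_epsK_KM2}. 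You would need to reinstate the dropped commutator, estimate it separately (this is a routine Cauchy--Schwarz, but the choice of weight is what ties the absorption condition to the second factor of $K_H^{-M}$), and then redo the size-counting.
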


\begin{proof}
We focus on $\mathcal Q_3^{(3)}$ (the estimates on $\mathcal Q_3^{(2)}$ are similar), and decompose it into $\mathcal Q_3^{(3)} = I + II$, where
\begin{align}
I &= -  \frac{z \ell^2}{(2\pi)^4} \int_{\mathcal P_H \times \R^2}  \frac{f_L(p) \widehat W_1(k) \alpha_{p-k}}{\sqrt{1-\alpha_k^2} \sqrt{1-\alpha_{p-k}^2}} \big( b^\dagger_{k-p} \widetilde{a}^\dagger_p b_{k} + b_{k}^\dagger \widetilde{a}_p  b_{k-p} \big)  \dd k \dd p,
\intertext{and}
II &= - \frac{z \ell^2}{(2\pi)^4} \int_{\mathcal P_H \times \R^2}   \frac{f_L(p) \widehat W_1(k) \alpha_{p-k}}{\sqrt{1-\alpha_k^2} \sqrt{1-\alpha_{p-k}^2}} \big( [\widetilde a_p^\dagger , b_{k-p}^\dagger ] b_{k} + b_{k}^\dagger  [ b_{k-p},  \widetilde a_p] \big) \dd k \dd p.
\end{align}
The first part we estimate using Cauchy-Schwarz with weight $\varepsilon$, and by \eqref{eq:alpha_highcontrol}
\begin{align*}
I &\geq - \frac{z \ell^2}{(2\pi)^4} \int_{\mathcal P_H \times \R^2}  \frac{f_L(p) \widehat W_1(k) \alpha_{p-k}}{\sqrt{1-\alpha_k^2} \sqrt{1-\alpha_{p-k}^2}}  ( \varepsilon b_{k-p}^\dagger \widetilde a_p^\dagger \widetilde a_p b_{k-p} + \varepsilon^{-1} b_k^\dagger b_k )  \dd k \dd p \\
&\geq - C z \ell^2 \delta K_\ell^2 K_H^{-2} \int_{\mathcal P_H \times \R^2} f_L(p) ( \varepsilon b_{k-p}^{\dagger} \widetilde a_p^\dagger \widetilde a_p b_{k-p} + \varepsilon^{-1} b_k^\dagger b_k )  \dd k \dd p ,
\end{align*}
and using Lemma~\ref{lem:numbercontrolhigh}, 
\begin{equation}
\langle \Phi, \int f_L(p) b_{k-p}^\dagger \widetilde a_p^\dagger \widetilde a_p b_{k-p} \dd p \Phi \rangle \leq C \ell^{-2} \mathcal M \langle \Phi , b_{k}^\dagger b_{k} \Phi \rangle.
\end{equation}
We choose $\varepsilon = \sqrt{d^{-4}/ \mathcal M} $, and insert $\mathcal{D}_k \geq K_H^2 \ell^{-2}$,
\begin{align}
\langle I \rangle_\Phi &\geq - C z \delta K_\ell^2 K_H^{-2} (\varepsilon \mathcal M + \varepsilon^{-1} d^{-4} ) \int_{k \in \mathcal P_H} \langle b_k^\dagger b_k \rangle_\Phi \dd k \nonumber \\
&\geq - C (\rho_z^{1/2} \ell \delta K_\ell^2 K_H^{-4} \mathcal M^{1/2} d^{-2}) \ell^2 \int_{k \in \mathcal P_H} \mathcal D_k \langle b_k^\dagger b_k \rangle_\Phi \dd k .
\end{align}
Thanks to condition \eqref{eq:rel_epsK_KM2}, $I$ can be absorbed in the positive $\frac{\varepsilon_K}{100}\mathcal{K}^{\rm{Diag}}_H$ term. Now we return to the commutator term, which can be estimated using a Cauchy-Schwarz inequality with new weight $\varepsilon$,
\begin{align*}
II \geq - 2 \frac{ z \ell^2}{(2\pi)^4} \int_{\mathcal{P}_H \times \R^2} \vert [ b_{k-p} , \widetilde a_p ] \vert f_L(p) \vert \widehat W_1(k) \alpha_k \vert ( \varepsilon b_k^\dagger b_k + \varepsilon^{-1})  \dd k \dd p .
\end{align*}
 We use the commutator bound $\vert [b_{k-p},  \widetilde a_p] \vert \leq C \alpha_{k-p} \sup_{k \in \mathcal{P}_H} \widehat \chi (k \ell)$ from \eqref{eq:commut_astarbstar},
\small
\begin{align*}
II & \geq -C z K_\ell^2 K_H^{-2} \Big( \sup_{k \in \mathcal{P}_H} \widehat \chi (k\ell) \Big) d^{-4} \Big(  \varepsilon K_\ell^2 K_H^{-4}  \delta \ell^2 \int_{\mathcal{P}_H} \mathcal D_k b_k^\dagger b_k \dd k +\varepsilon^{-1} \int_{\mathcal{P}_H} \widehat W_1(k) \alpha_k \dd k  \Big).  
\end{align*}
\normalsize
With $\varepsilon^{-1} \simeq \varepsilon_K^{-1} z d^{-4}  K_\ell^4 K_H^{-6} \delta  \big( \sup \widehat \chi \big)  $ and our choice of parameters, the first part is absorbed in the positive $\varepsilon_K \mathcal K^{\rm{Diag}}$ term. We estimate the last part using \eqref{eq:Walpha_highcontrol} and Lemma~\ref{lem:localization_properties}, and then $II$ contributes with an error of order $\varepsilon_K^{-1} \rho_z^2 \ell^2 \delta^2 K_H^{-2M-8} K_\ell^6 d^{-8}$.
\end{proof}

\subsubsection{Estimates on \texorpdfstring{$\mathcal Q_3^{(4)}$}{Q\_3}}

First we rewrite $\mathcal Q_1^{\rm{ex}}$ as a term appearing in $\mathcal Q_3^{(4)}$.

\begin{lemma}\label{lem:Q1ex} Assume that Assumptions of Appendix~\ref{app:parameters} are satisfied. Then there exists a universal constant $C>0$ such that
\begin{align*}
\mathcal Q_1^{\rm{ex}} + \frac{b}{100} \frac{n_+}{\ell^2} + \frac{b}{100} \frac{ \varepsilon_T n_+^H}{(d\ell)^2} & \geq \frac{z \ell^2}{(2\pi)^4}\int_{\mathcal{P}_H \times \R^2} \widehat W_1(k) \alpha_k \widehat{\chi^2}(p \ell) f_L(p) (\widetilde a_p^{\dagger} + \widetilde a_p) \dd k \dd p \\
& \qquad  -C\rho^2_z\ell^2 \delta^3 K_\ell^2 \vert \log \delta \vert^2 - \rho^{2}_z \ell^2 \delta K_\ell^2 d^{8M-2} \varepsilon_T^{-1}.
\end{align*}
\end{lemma}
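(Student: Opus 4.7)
\textbf{Proof plan for Lemma~\ref{lem:Q1ex}.}

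The plan is to rewrite $\mathcal Q_1^{\rm{ex}}$ in terms of the $\widetilde{a}_p$ operators via a Fourier identity, split the resulting integral by the momentum cutoff $f_L$, replace the scalar prefactor on the low-momentum part using the identity \eqref{eq:Walpha_highcontrol}, and finally bound both the replacement error and the high-momentum remainder by Cauchy--Schwarz against the two spectral-gap terms $\frac{b}{100}\frac{n_+}{\ell^2}$ and $\frac{b}{100}\frac{\varepsilon_T n_+^H}{(d\ell)^2}$ appearing on the left-hand side.

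First, using $a_k^\dagger = \frac{1}{\ell}a^\dagger(Q e^{ikx}\chi_{\Lambda})$, $\widetilde{a}_p^\dagger = \frac{1}{\ell}a^\dagger(Q e^{ipx}\theta)$, Fourier inversion $\int \widehat{f}(k)e^{ikx}dk = (2\pi)^2 f(x)$, the scaling $\widehat{\chi_{\Lambda}^2}(p) = \ell^2 \widehat{\chi^2}(p\ell)$, and $\chi_{\Lambda}^2 \theta = \chi_{\Lambda}^2$, both sides of
\[
\frac{1}{(2\pi)^2}\int \widehat{\chi}_{\Lambda}(k)(a_k^\dagger + a_k)\,\dd k \;=\; \frac{\ell^2}{(2\pi)^2}\int \widehat{\chi^2}(p\ell)(\widetilde{a}_p^\dagger + \widetilde{a}_p)\,\dd p
\]
equal $\frac{1}{\ell}\bigl(a^\dagger(Q\chi_{\Lambda}^2) + a(Q\chi_{\Lambda}^2)\bigr)$. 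Restricting to real $z \in \mathbb{R}_+$ (as in Theorem~\ref{thm:cnumber}), this identity rewrites
\[
\mathcal Q_1^{\rm{ex}} \;=\; \frac{z\,\rho_z\,\widehat{(\omega W_1)}(0)\,\ell^2}{(2\pi)^2}\int \widehat{\chi^2}(p\ell)(\widetilde{a}_p^\dagger + \widetilde{a}_p)\,\dd p.
\]
Inserting $1 = f_L(p) + (1-f_L(p))$ splits $\mathcal Q_1^{\rm{ex}}$ into a low-momentum part (supported on $|p|\leq 2d^{-2}\ell^{-1}$) and a high-momentum part (supported on $|p|\geq d^{-2}\ell^{-1}$).

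On the low-momentum part, I apply \eqref{eq:Walpha_highcontrol} to replace the scalar $\rho_z\,\widehat{(\omega W_1)}(0)$ by $\frac{1}{(2\pi)^2}\int_{\mathcal P_H}\widehat W_1(k)\alpha_k\,\dd k$: the principal term is exactly the target operator on the RHS of the lemma, and the replacement error is a scalar $E$ with $|E|\leq C\rho_z\delta^2|\log\delta|$ multiplied by $z\ell^2\int \widehat{\chi^2}(p\ell) f_L(p)(\widetilde{a}_p^\dagger+\widetilde{a}_p)\,\dd p = \frac{z\ell}{(2\pi)^2}(a^\dagger(\phi_L)+a(\phi_L))$, where $\phi_L \in QL^2(\Lambda)$ satisfies $\|\phi_L\|^2 \leq C\ell^{-2}$ by Plancherel applied to $\widehat{\chi^2}\cdot f_L$. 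A Cauchy--Schwarz inequality of the form $\lambda(B+B^\ast)\geq -\varepsilon B^\ast B - \varepsilon^{-1}\lambda^2$ with $\varepsilon \sim (\rho_z \ell^4)^{-1}$ absorbs the $n_+$-contribution into $\frac{b\,n_+}{100\,\ell^2}$, leaving a $c$-number error of order $\rho_z^3\ell^4 \delta^4|\log\delta|^2$, which, using $\rho_z\ell^2\delta \simeq K_\ell^2$, matches $\rho_z^2\ell^2\delta^3 K_\ell^2|\log\delta|^2$.

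On the high-momentum part, the support $|p|\geq d^{-2}\ell^{-1}$ implies $|p\ell|\geq d^{-2}$, and the $M$-regularity of $\chi$ yields $|\widehat{\chi^2}(p\ell)|\leq C(1+|p\ell|)^{-M}$. Writing this part as $\nu(B_H+B_H^\ast)$ with $\nu = \frac{z\rho_z\widehat{(\omega W_1)}(0)\ell^2}{(2\pi)^2}$ and $B_H = \frac{\ell}{(2\pi)^2}a^\dagger(\phi_H)$, the Plancherel bound yields $\|\phi_H\|^2 \leq C\ell^{-2}\int_{|q|\geq d^{-2}}(1+|q|)^{-2M}\,\dd q \leq C\ell^{-2}d^{4M-4}$. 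I apply Cauchy--Schwarz pointwise in $p$ with weight chosen so that the resulting $\int_{|p|\geq d^{-2}\ell^{-1}}\widetilde{a}_p^\dagger\widetilde{a}_p\,\dd p$ term is absorbed into $\frac{b\,\varepsilon_T\,n_+^H}{100(d\ell)^2}$ (using that this integral equals $(2\pi)^2 n_+^H/\ell^2$ up to boundary effects, as $\theta$ acts as the identity on $L^2(\Lambda)$). The residual $c$-number term is of order $\nu^2 \ell^{-2}d^{4M-4}/(\varepsilon_T/d^2) \simeq \rho_z^3\ell^4 d^{8M-2}/\varepsilon_T$, which, using $\rho_z\ell^2\delta \simeq K_\ell^2$, matches the stated second error $\rho_z^2\ell^2 \delta K_\ell^2 d^{8M-2}\varepsilon_T^{-1}$.

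The main technical obstacle is choosing the Cauchy--Schwarz weights so that (i) all positive operator contributions fit inside the available fractions of the two spectral-gap terms on the LHS, and (ii) the resulting $c$-number errors come out with the specific powers of $\delta$, $K_\ell$, $d$, and $\varepsilon_T$ stated. In particular, identifying the $n_+^H/\ell^2$ contribution from the localized integral $\int_{|p|\geq d^{-2}\ell^{-1}}\widetilde{a}_p^\dagger\widetilde{a}_p\,\dd p$ requires the observation that $\theta = \mathbbm 1_\Lambda$ acts as the identity on $L^2(\Lambda)$, so that the $\widetilde{a}$-localization in momentum coincides with the $\sqrt{-\Delta}$-localization defining $n_+^H$ modulo negligible boundary corrections.
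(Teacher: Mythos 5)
Your proposal follows essentially the same strategy as the paper: rewrite $\mathcal Q_1^{\rm ex}$ in terms of the $\widetilde a_p$'s, trade the scalar $\rho_z\widehat{(\omega W_1)}(0)$ for the $k$-integral via \eqref{eq:Walpha_highcontrol}, insert the cutoff $f_L$, and absorb the two resulting operator errors into the fractions of the spectral gaps by Cauchy--Schwarz. The minor structural difference — you split by $f_L$ \emph{before} doing the scalar replacement, while the paper replaces first and inserts $f_L$ last — is immaterial, since the high-momentum prefactor is of the same order ($\sim\rho_z\delta$) in either order. You also phrase the Cauchy--Schwarz in operator form against $a^\dagger(\phi)$ with a Plancherel bound on $\|\phi\|$, rather than the paper's pointwise-in-$p$ Cauchy--Schwarz against $\sup_p|\widehat{\chi^2}(p\ell)|$; these are equivalent and give comparable errors.

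One bookkeeping point deserves attention. For the high-momentum remainder you use the decay $|\widehat{\chi^2}(q)|\leq C(1+|q|)^{-M}$, which is the naive bound coming from $\chi\in C^M$. But $\chi^2 = C_M^2\,\zeta^{2M+4}$ vanishes to order $2M+4$ at $\partial\supp\chi$, so $\chi^2$ is roughly twice as regular and $|\widehat{\chi^2}(q)|\leq C(1+|q|^2)^{-M}$, which is what the paper uses (that is how $\sup_{p\in\mathcal P_L^c}|\widehat{\chi^2}(p\ell)|\leq Cd^{4M}$ is obtained). Tracing your computation $\nu^2\ell^{-2}d^{4M-4}/(\varepsilon_T/d^2)$ through with $\nu\sim z\rho_z\delta\ell^2$ and $z^2=\rho_z\ell^2$ actually produces $\rho_z^3\ell^4\delta^2\,d^{4M-2}\varepsilon_T^{-1}=\rho_z^2\ell^2\delta K_\ell^2\,d^{4M-2}\varepsilon_T^{-1}$, not the stated $\rho_z^2\ell^2\delta K_\ell^2\,d^{8M-2}\varepsilon_T^{-1}$; the $\simeq$ you invoke silently uses $\delta^2\approx d^{4M}$, a coincidence of the specific parameter choices rather than an identity. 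With the correct decay $|\widehat{\chi^2}(q)|\leq C(1+|q|^2)^{-M}$ your Plancherel bound becomes $\|\phi_H\|^2\leq C\ell^{-2}d^{8M-4}$ and the computation lands cleanly on $d^{8M-2}$ as stated. So the gap is a small missing observation on the extra smoothness of $\chi^2$, not a structural problem with the argument.
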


\begin{proof}
First we can rewrite $\mathcal Q_1^{\rm{ex}}$ in terms of the $\widetilde a_p$'s, 
\[\mathcal Q_1^{\rm{ex}} = z \rho_z \widehat{ \omega W_1}(0) \frac{\ell^2}{(2\pi)^2} \int \widehat{\chi^2}(p \ell) ( \widetilde a_p^{\dagger} + \widetilde a_p ) \dd p,\]
and then we use \eqref{eq:Walpha_highcontrol} to compare $\widehat{\omega W_1}(0)$ with an integral in $k$, and using the bound $K_{\ell}^2 K_H^{-2} \ll 1$, 
\begin{align}
\mathcal Q_1^{\rm{ex}} &\geq  \frac{z \ell^2}{(2\pi)^4} \int_{\mathcal P_H \times \R^2} \widehat{W}_1(k) \alpha_k \widehat{\chi^2}(p \ell) (\widetilde a_p^\dagger + \widetilde a_p) \dd k \dd p \nonumber \\ &\quad - C \rho_z \delta^2 \vert \log \delta \vert z \ell^2 \int_{\R^2} \widehat{\chi^2}(p \ell) (\widetilde a_p^\dagger + \widetilde a_p) \dd p . \label{eq:Q1ex.bound1}
\end{align}
The second integral can be estimated using a Cauchy-Schwarz inequality with weight $\varepsilon$,
\begin{align} \label{eq:chiapap}
\rho_z \ell^2\delta^2 z  \int_{\R^2} \widehat{\chi^2}(p \ell) (\widetilde a_p^\dagger + \widetilde a_p) \dd p &\leq \varepsilon \rho_z \ell^2 \delta^2 z \int |\widehat{\chi^2}(p \ell)| +  C  \varepsilon^{-1} \rho_z \ell^2 \delta^2 z  \int  |\widehat{\chi^2}(p \ell)| \widetilde a_p^\dagger \widetilde a_p \dd p \nonumber \\
&\leq C \varepsilon \rho_z \delta^2 z  + C \varepsilon^{-1} \rho_z \delta^2 z  n_+ .
\end{align}
where we used Lemma \eqref{lem:localization_properties}. With $\varepsilon \simeq z \delta K_{\ell}^2 \vert \log \delta \vert$, the second part is absorbed by the positive fraction of $\frac{n_+}{\ell^2}$, and the first term is of order $ \rho^2_z \ell^2\delta^3 K_\ell^2 \vert \log \delta \vert$. Hence,
\begin{equation}
 \mathcal Q_1^{\rm{ex}} \geq \frac{z \ell^2}{(2\pi)^4} \int_{\mathcal P_H \times \R^2} \widehat{W}_1(k) \alpha_k \widehat{\chi^2}(p \ell) (\widetilde a_p^\dagger + \widetilde a_p) \dd k \dd p - C  \rho^2_z\ell^2 \delta^3 K_\ell^2 \vert \log \delta \vert^2 - \frac{b}{100} \frac{n_+}{\ell^2} .
\end{equation}
Finally we want to insert the cutoff $f_L(p)$ inside the integral. The error we make is estimated similarly,
\begin{align*}
    \frac{z \ell^2}{(2\pi)^4} \int_{\mathcal P_H \times \R^2} \widehat{W}_1(k) & \alpha_k \widehat{\chi^2}(p \ell) (1-f_L(p)) (\widetilde a_p^\dagger + \widetilde a_p) \dd k \dd p \\ & \geq - C z \ell^2 \rho_z \delta \int_{\mathcal P_L^c} \widehat{\chi^2}(p \ell) (\widetilde a_p^\dagger + \widetilde a_p ) \dd p \\
    &\geq -C \varepsilon  z \rho_z \delta \ell^2 \int_{\mathcal P_L^c} |\widehat{\chi^2}(p \ell)| \dd p - C \varepsilon^{-1} z \rho_z \delta \ell^2 \int_{\mathcal P_L^c} |\widehat{\chi^2}(p \ell)| \widetilde a_p^\dagger \widetilde a_p \dd p \\
    &\geq - C \varepsilon z \rho_z \delta d^{4M-4} - C \varepsilon^{-1} z \rho_z \delta d^{4M} n_+^H ,
\end{align*}
where we used $\sup_{p \in \mathcal P_L^c} \vert \widehat{\chi^2}(p\ell) \vert \leq C d^{4M}$. With $\varepsilon \simeq z K_{\ell}^2 d^{4M+2} \varepsilon^{-1}_T$ the first part is of order $\rho^{2}_z \ell^2 \delta K_\ell^2 d^{8M-2} \varepsilon_T^{-1}  $ and the second is absorbed in a fraction of $\frac{\varepsilon_T n_+^H}{(d \ell)^2}$.
\end{proof}

Now we have all we need to estimate $\mathcal Q_3^{(4)}$.

\begin{lemma}[Estimates on $\mathcal Q_3^{(4)}$] \label{lem.Q34}
For any state $\Phi$ satisfying \eqref{def:statefiniteexcit} we have
\begin{align*}
\left\langle \mathcal Q_3^{(4)} + \mathcal Q_1^{\rm{ex}} + \frac{b}{100} \frac{n_+}{\ell^2} + \frac{b}{100} \frac{ \varepsilon_T n_+^H}{(d\ell)^2} \right\rangle_{\Phi} \geq & - C \rho_z^2\ell^2  \delta^3 K_\ell^2 \vert \log \delta \vert^2 - C \rho_z^2\ell^2  \delta K_\ell^2 d^{8M-2} \varepsilon_T^{-1} \\ & -C \rho_z^2 \ell^2 \delta^2 \varepsilon_{\mathcal M}^{1/2} (K_\ell^4 K_H^{-4} + \delta^{-1} K_H^{-M} d^{-2})  .
\end{align*}
\end{lemma}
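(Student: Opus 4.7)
The plan is to combine $\mathcal Q_3^{(4)}$ with the lower bound for $\mathcal Q_1^{\rm ex}$ provided by Lemma~\ref{lem:Q1ex}. That lemma already absorbs the relevant fractions of the spectral gaps $\frac{b}{100}\frac{n_+}{\ell^2}$ and $\frac{b}{100}\frac{\varepsilon_T n_+^H}{(d\ell)^2}$ and produces the first two error terms in the statement, namely $\rho_z^2\ell^2\delta^3 K_\ell^2 |\log\delta|^2$ and $\rho_z^2\ell^2 \delta K_\ell^2 d^{8M-2}\varepsilon_T^{-1}$. Thus it is enough to bound from below the sum of $\mathcal Q_3^{(4)}$ and the single-particle operator
\[\frac{z\ell^2}{(2\pi)^4}\int_{\mathcal P_H\times\mathbb R^2}\widehat W_1(k)\,\alpha_k\,\widehat{\chi^2}(p\ell)\,f_L(p)\,(\widetilde a_p^\dagger+\widetilde a_p)\,\dd k\,\dd p\]
by $-C\rho_z^2\ell^2\delta^2\varepsilon_{\mathcal M}^{1/2}(K_\ell^4 K_H^{-4}+\delta^{-1}K_H^{-M}d^{-2})$.

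The key step is to expand the commutator in $\mathcal Q_3^{(4)}$ via \eqref{eq:commut_bbdagger} (using $\alpha_{-k}=\alpha_k$), writing
\[[b_{p-k},b_{-k}^\dagger]=\frac{1-\alpha_k\alpha_{p-k}}{\sqrt{1-\alpha_k^2}\sqrt{1-\alpha_{p-k}^2}}\,\Big(\widehat{\chi^2}(p\ell)-\widehat\chi((p-k)\ell)\widehat\chi(-k\ell)\Big).\]
After substitution and combination with the integral above, the two $\widehat{\chi^2}(p\ell)$ contributions nearly cancel, leaving a remainder whose prefactor is $\alpha_k\bigl(1-\tfrac{1-\alpha_k\alpha_{p-k}}{(1-\alpha_k^2)(1-\alpha_{p-k}^2)}\bigr)=O\bigl(\alpha_k^3+\alpha_k\alpha_{p-k}^2\bigr)$. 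Together with the piece originating from $\widehat\chi((p-k)\ell)\widehat\chi(-k\ell)$, this yields two residual terms $\mathcal E_1$ and $\mathcal E_2$ to control.

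For each $\mathcal E_j$ I apply the Cauchy--Schwarz inequality with a weight $\varepsilon>0$ to the field operator, $\widetilde a_p^\dagger+\widetilde a_p\leq\varepsilon^{-1}+\varepsilon\,\widetilde a_p^\dagger\widetilde a_p$. The quadratic part $\int f_L(p)(\cdot)\widetilde a_p^\dagger\widetilde a_p\dd p$ is controlled via Lemma~\ref{lem:numbercontrolhigh} and the assumption $n_+^L\leq\mathcal M$, giving $\mathcal M/\ell^2=\rho_\mu\varepsilon_{\mathcal M}$; the remaining scalar integrals are handled by \eqref{eq:Walpha_highcontrol} (which furnishes $\int_{\mathcal P_H}\widehat W_1(k)\alpha_k\dd k\leq C\rho_z\delta\ell^2$), by \eqref{eq:alpha_highcontrol}, and by the Schwartz-type bounds for $\widehat\chi$ in Lemma~\ref{lem:localization_properties}. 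For $\mathcal E_1$ the small factor is $K_\ell^4 K_H^{-4}$, coming directly from $\alpha_k^2+\alpha_{p-k}^2$ on $\mathcal P_H$. For $\mathcal E_2$ the factor $\widehat\chi(-k\ell)$ contributes the decay $K_H^{-M}$ from $k\in\mathcal P_H$, while the $p$-integration against $|\widehat\chi((p-k)\ell)|$ over the low-momentum volume of $\mathcal P_L$ produces the $d^{-2}$ loss. The square root $\varepsilon_{\mathcal M}^{1/2}$ appears in both by optimizing the Cauchy--Schwarz weight $\varepsilon$.

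The main obstacle is bookkeeping rather than conceptual: one must be careful that the small spectral-gap fractions already consumed by Lemma~\ref{lem:Q1ex} do not have to be used again, and that each $\varepsilon$-optimization produces exactly the announced $\varepsilon_{\mathcal M}^{1/2}$ balance after multiplication by $\rho_z^2\ell^2\delta^2$. The conceptual heart of the argument is the near-cancellation between $\mathcal Q_3^{(4)}$ and the $\mathcal Q_1^{\rm ex}$-generated integral, reflecting the identity that, when restricted to $k\in\mathcal P_H$ and $p\in\mathcal P_L$, the commutator $[b_{p-k},b_{-k}^\dagger]$ coincides with $\widehat{\chi^2}(p\ell)$ up to precisely the two small corrections isolated above.
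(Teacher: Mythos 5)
Your proposal follows essentially the same route as the paper: you use the commutator formula \eqref{eq:commut_bbdagger} to split $\mathcal Q_3^{(4)}$ into the $\widehat{\chi^2}(p\ell)$ piece (which, combined with the lower bound from Lemma~\ref{lem:Q1ex}, leaves only a cubic-in-$\alpha$ residual, yielding the $K_\ell^4 K_H^{-4}\varepsilon_{\mathcal M}^{1/2}$ term) and the $\widehat\chi(k\ell)\widehat\chi((p-k)\ell)$ piece (bounded via the Fourier decay $K_H^{-M}$ and a Cauchy--Schwarz with weight optimized to give $d^{-2}\varepsilon_{\mathcal M}^{1/2}$); both the decomposition and the endgame estimates match the paper's. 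One small inaccuracy: for the cubic residual you cite \eqref{eq:Walpha_highcontrol}, which controls $\int_{\mathcal P_H}\widehat W_1(k)\alpha_k\,\dd k$, but the residual carries $\alpha_k^3$, so the paper instead invokes the pointwise bound \eqref{eq:alpha_highcontrol} and computes $\int_{\mathcal P_H}|k|^{-6}\,\dd k\sim \ell^4 K_H^{-4}$ directly; this is a trivially correctable bookkeeping slip and does not affect the conclusion.
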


\begin{proof}

We use the commutator formula 
$$[ b_{p-k}, b_{-k}^{\dagger}] = (1-\alpha_k \alpha_{p-k}) \Big( \widehat{\chi^2}(p\ell) - \widehat \chi (k \ell) \widehat \chi ((p-k)\ell) \Big),$$ and split into $\mathcal Q_3^{(4)} = I + II$, with
\begin{align*}
I &= - \frac{z \ell^2}{(2\pi)^4} \int_{\mathcal P_H \times \R^2} \frac{f_L(p) \widehat W_1(k)}{\sqrt{1-\alpha_k^2} \sqrt{1-\alpha_{p-k}^2}} \alpha_k (1-\alpha_k \alpha_{p-k}) \widehat{\chi^2}(p\ell) (\widetilde a_p^\dagger + \widetilde a_p)  \dd k \dd p,
\intertext{and}
II &= - \frac{z \ell^2}{(2\pi)^4} \int_{\mathcal P_H \times \R^2} \frac{f_L(p) \widehat W_1(k)}{\sqrt{1-\alpha_k^2} \sqrt{1-\alpha_{p-k}^2}} \alpha_k (1-\alpha_k \alpha_{p-k}) \widehat \chi (k\ell) \widehat \chi ((p-k)\ell) (\widetilde a_p^\dagger + \widetilde a_p)  \dd k \dd p .
\end{align*}

In $I$ we recognize the lower bound on $\mathcal Q_1^{\rm{ex}}$ given by Lemma~\ref{lem:Q1ex} with opposite sign, up to an error term:
\begin{multline*}
I + \mathcal Q_1^{\rm{ex}} + \frac{b}{100} \frac{n_+}{\ell^2} + \frac{b}{100} \frac{ \varepsilon_T n_+^H}{(d\ell)^2} + C\rho^2_z\ell^2 \delta^3 K_\ell^2 \vert \log \delta \vert^2 + \rho^{2}_z \ell^2 \delta K_\ell^2 d^{8M-2} \varepsilon_T^{-1} \\ \geq  - C z \ell^2 \int_{\mathcal P_H \times \R^2} f_L(p) \widehat W_1(k) \alpha_k^3 \widehat {\chi^2}(p\ell) (\widetilde a_p^\dagger + \widetilde a_p) \dd k \dd p .
\end{multline*}
This remaining integral can be estimated, by \eqref{eq:alpha_highcontrol}, as
\begin{align*}
&\left\vert z \ell^2 \int_{\mathcal P_H \times \R^2} f_L(p) \widehat W_1(k) \alpha_k^3 \widehat {\chi^2}(p\ell) (\widetilde a_p^\dagger + \widetilde a_p) \dd k \dd p \right \vert \\&\leq C \vert z \vert \ell^2 \rho_z^3 \delta^4 \int_{\mathcal P_H} k^{-6} \dd k \int_{\mathcal P_L} |\widehat{\chi^2}(p\ell)| (\widetilde a_p^\dagger + \widetilde a _p) \dd p,
\end{align*}
and after applying to the state $\Phi$ we use a Cauchy-Schwarz inequality with weight $\sqrt \mathcal M$,
\begin{align*}
\Big\vert z  & \ell^2 \int_{\mathcal P_H \times \R^2} f_L(p)  \widehat W_1(k)  \alpha_k^3 \widehat{\chi^2}(p\ell) \langle \widetilde a_p^\dagger + \widetilde a_p \rangle_{\Phi} \dd k \dd p \Big\vert 
 \\  & \leq C  z  \rho_z^3\ell^6  \delta^4 K_H^{-4} \Big( \sqrt{ \mathcal M} \int_{\mathcal P_L} |\widehat{\chi^2}(p\ell)| \dd p + \frac{1}{\sqrt{ \mathcal M}} \int_{\mathcal P_L} |\widehat{\chi^2}(p\ell)| \langle \widetilde a_p^\dagger \widetilde a_p \rangle_{\Phi} \dd p \Big) \\
&\leq C z \rho_z^3 \ell^4 \delta^4 K_H^{-4} \sqrt{\mathcal{M}} \leq C \rho_z^2 \ell^2 \delta^2 \big(K_\ell^4 K_H^{-4} \sqrt{\varepsilon_{\mathcal M}}\big).
\end{align*}
Finally we bound, by \eqref{eq:fourierlocestimate} and \eqref{eq:Walpha_highcontrol},
\begin{align*}
\vert \langle II \rangle_{\Phi} \vert &\leq z \ell^2 \sup_{h \in \mathcal P_H}|\widehat{\chi}(h\ell)|\int_{\mathcal P_H}  |\widehat W_1(k)| \alpha_k |\widehat \chi(k \ell)| \dd k \int_{\mathcal P_L} \langle \widetilde a_p^\dagger + \widetilde a_p \rangle_{\Phi} \dd p\\
&\leq C z \rho_z \ell^2 \delta K_H^{-M}\Big( d^2 \mathcal{M}^{1/2} \int_{\mathcal P_L} \dd p + d^{-2} \mathcal M^{-1/2} \int_{\mathcal P_L} \langle \widetilde a_p^\dagger \widetilde a_p \rangle_{\Phi} \dd p \Big),
\end{align*}
where we used a Cauchy-Schwarz inequality with weight $d^2 \sqrt{\mathcal M}$. Thus,
\begin{equation*}
\vert \langle II \rangle_\Phi \vert \leq C  \rho^2_z\ell^2 \delta K_H^{-M} d^{-2}\varepsilon_{\mathcal M}^{1/2}.
\end{equation*}
\end{proof}

\subsection{Conclusion: Proof of Theorem~\ref{thm:largebox_lower}}\label{subsec:proof_lower_final}

In Section~\ref{sec:largebox} we showed how the proof of Theorem~\ref{thm:main_lower} is reduced to the proof of Theorem~\ref{thm:largebox_lower}, which we give here.
\begin{proof}[Proof of Theorem~\ref{thm:largebox_lower}]
Recall the choices of the parameters in Appendix~\ref{app:parameters}. Let us consider a normalized $n$-particle state $\Psi \in \mathscr{F}_s(L^2(\Lambda))$ which satisfies \eqref{eq:condensationaprioricondition} for a certain large constant $C_0>0$,
\begin{equation}
\langle \mathcal{H}_{\Lambda}(\rho_{\mu})\rangle_{\Psi} \leq -4 \pi \rho_{\mu}^2 \ell^2 Y \big( 1 - C_0 K_B^2 Y \vert \log Y \vert \big).
\end{equation}
If such a state does not exists, our desired lower bound follows,
because
\begin{equation}\label{eq:simple}
-4 \pi \rho_{\mu}^2 \ell^2 Y \big( 1 - C_0 K_B^2 Y \vert \log Y \vert \big)\geq -4\pi\rho_\mu^2 \ell^2\delta\Big(1-\Big(2\Gamma +\frac 1 2 +\log \pi\Big)\delta \Big).
\end{equation}

So we can assume  the existence of $\Psi$. By Theorem~\ref{thm:excitationrestriction} there exists a sequence of $n$-particle states $\{\Psi^m\}_{m \in \mathbb Z} \subseteq \mathscr{F}_s(L^2(\Lambda))$ and $C_1, \eta_1>0$ such that 
\begin{align*}
\langle \Psi, \mathcal{H}_{\Lambda}(\rho_{\mu}) \Psi \rangle  \geq&  \sum_{2|m|\leq \mathcal{M} }\langle \Psi^{(m)}, \mathcal{H}_{\Lambda}(\rho_{\mu}) \Psi^m\rangle -C_1 \rho_{\mu}^2 \ell^2 \delta^{2+\eta_1} \\
&  -4 \pi \rho_{\mu}^2 \ell^2 Y \big( 1 - C_1 K_B^2 Y \vert \log Y \vert \big)\sum_{2|m| > \mathcal{M}} \|\Psi^m\|^2.  
\end{align*}
For $|m| \leq \frac{\mathcal M}{2}$, we have that $\Psi^m = \one_{[0,\mathcal M]}(n_+^L) \Psi^m$. If we prove the lower bound for all $\Psi^m$ such that $ |m| \leq \frac{\mathcal M}{2}$ then we would get (using \eqref{eq:simple} with $C_0$ replaced by $C_1$)
\begin{align*}
\langle \Psi, \mathcal{H}_{\Lambda}(\rho_{\mu}) \Psi \rangle  \geq&  -4\pi\rho_\mu^2 \ell^2\delta\Big(1-\Big(2\Gamma +\frac 1 2 +\log \pi\Big)\delta\Big)\sum_{m }\Vert  \Psi^{(m)}\Vert^2 -C_1 \rho_{\mu}^2 \ell^2 \delta^{2+\eta_1}, 
\end{align*}

Therefore, the theorem is proven if we derive the corresponding lower bound for any $n-$particle, normalized state $\widetilde \Psi \in \mathscr{F}_s(L^2(\Lambda))$ such that 
\begin{equation}
\widetilde \Psi = \one_{[0,\mathcal{M}]}(n_+^L) \widetilde{\Psi}.
\end{equation} 
By Proposition~\ref{propos:secondquant}, for such a state there exists a constant $C_2 >0$ such that 
\begin{equation}
\langle \widetilde{\Psi}, \mathcal{H}_{\Lambda}(\rho_{\mu}) \widetilde{\Psi}\rangle \geq \langle \widetilde{\Psi}, \mathcal{H}_{\Lambda}^{\rm{2nd}}(\rho_{\mu})\widetilde{\Psi}\rangle -C_2 \ell^2 \rho_{\mu}^2 \delta \big( d^{2M-2} + R^2 \ell^{-2}  \big),
\end{equation}
where the last term is an error term of order $\rho_{\mu}^2 \ell^2 \delta^{2+\eta_2}$, for some $\eta_2 >0$, thanks to relations \eqref{eq:rel_locQ3low} and \eqref{eq:relRrho_mu}.
Then, by Theorem~\ref{thm:cnumber}, there exists a constant $C_3>0$ such that
\begin{equation}
\langle \widetilde{\Psi},\mathcal{H}_{\Lambda}^{\rm{2nd}} \widetilde{\Psi} \rangle \geq \inf_{z \in \mathbb{R}_+} \inf_{\Phi} \langle \Phi , \mathcal{K}(z)  \Phi\rangle  - C_3 \rho_{\mu} \delta (1 + \varepsilon_R K_{\ell}^4 K_B^2 |\log Y|),
\end{equation}
where the infimum is over the $\Phi$'s which satisfy \eqref{def:statefiniteexcit}. The last term is an error term of order $\rho_{\mu}^2 \ell^2 \delta^{2+\eta_3}$ for some $\eta_3>0$, thanks to relation \eqref{eq:rel_epsR_Kl_KB}. The proof is reduced now to getting a lower bound for $\mathcal{K}(z)$.
We have two cases, according to different values of $z$:
\begin{itemize}
\item If $|\rho_z -\rho_{\mu}| \geq C \rho_{\mu} \max((\delta_1+ \delta_2+ \delta_3)^{1/2}, \delta^{1/2})$ then Proposition~\ref{propos:rhofar} implies the bound 
\begin{equation}
\langle \mathcal{K}(z)\rangle_{\Phi} \geq -\frac{1}{2} \rho_{\mu}^2 \ell^2 \widehat{g}_0 + 8\pi \Big(2\Gamma +\frac 1 2 +\log \pi\Big)\rho_{\mu}^2 \ell^2 \delta^2,
\end{equation}
and the second term is twice the LHY-term and positive, therefore there is nothing more to prove;
\item Otherwise $|\rho_z -\rho_{\mu}| \leq \rho_\mu K_{\ell}^{-2}$ (see Section~\ref{sec.9.2}). In this case we can use \eqref{ine:Kz} and Theorem~\ref{thm:lower_rho=rho_mu} to obtain $C_4,\eta_4 >0$, such that  
\begin{align}
\langle \mathcal{K}(z) \rangle_{\Phi} &\geq -\frac{1}{2} \rho_{\mu}^2 \ell^2 \widehat{g}_0+ (1-\varepsilon_K)\langle \mathcal{K}_H^{\text{Diag}}\rangle_{\Phi} +4\pi \Big(2\Gamma +\frac 1 2 +\log \pi\Big) \rho^2_{z} \ell^2 \delta^2 \nonumber \\
&\quad + \Big\langle b\frac{n_+}{4\ell^2} + b  \frac{\varepsilon_T n_+^H}{8 d^2\ell^2}  + \mathcal Q_1^{\rm{ex}}(z) + \mathcal Q_2^{\rm{ex}} + \mathcal Q_3(z)\Big\rangle_{\Phi} - C_4\rho_{\mu}^2 \ell^2\delta^{2+\eta_4},\label{eq:final_proof_1}
\end{align}
where we used that $C \rho_{\mu}^2 \ell^2 \delta ( K_H^{4-M} K_{\ell} \delta^{-1/2}) + | r(\rho_{\mu}) | \ell^2 \leq C\rho_{\mu}^2 \ell^2\delta^{2+\eta_4} $, thanks to the relations \eqref{eq:rel_s_Kl}, \eqref{eq:rel_T2comm} and that $M >4$. We conclude observing that, thanks to Theorem~\ref{thm.Q3z}, we have the existence of $C_5,\eta_5 >0$ such that  
\begin{equation}\label{eq:final_proof_2}
\Big\langle (1-\varepsilon_K)\mathcal{K}_H^{\rm{Diag}} + \mathcal Q_3(z) + \mathcal Q_2^{\rm{ex}} + \mathcal Q_1^{\rm{ex}} + \frac{b}{100} \frac{n_+}{\ell^2} + \frac{b\varepsilon_T}{100}\frac{ n_+^{H}}{(d\ell)^{2}} \Big\rangle_{\Phi} \geq - C_5  \rho_z^2\ell^2 \delta^{2 + \eta_5}, 
\end{equation}
where the error has been obtained using relations \eqref{eq:rel_Kl_KH_d}, \eqref{eq:rel_errQ3_statement}, \eqref{eq:rel_epsK_KM2}, \eqref{eq:rel_locQ3low2} and \eqref{eq:rel_KH_K_M}.
Thanks to the assumptions on $\rho_z$ and $\rho_{\mu}$, there exist $C_6,\eta_6>0$ such that 
\begin{equation}\label{eq:final_proof_3}
|\rho_z^2\ell^2 \delta^{2 } - \rho_{\mu}^2\ell^2 \delta^{2 }| \leq C_6 \rho_{\mu}^2 \ell^2 \delta^2 K_{\ell}^{-2}=  C_6 \rho_{\mu}^2 \ell^2 \delta^{2+\eta_6}, 
\end{equation}
so that, plugging \eqref{eq:final_proof_2} into \eqref{eq:final_proof_1} and substituting the $\rho_z$ by the $\rho_{\mu}$ using \eqref{eq:final_proof_3} gives the desired lower bound and the right order for the error terms.
\end{itemize}
We choose $C = \sum_{j=1}^6 C_j$ and $\eta = \min_{j=1,\ldots, 6} \eta_j$.
We conclude using that $\widehat{g}_0=8\pi\delta$ to get that
$$\inf_{z \in \mathbb{R}_+} \inf_{\Phi} \langle \Phi , \mathcal{K}(z)  \Phi\rangle\geq -4\pi\ell^2\rho_\mu^2\delta\Big(1-\Big(2\Gamma +\frac 1 2 +\log \pi\Big)\delta\Big)-C\rho_{\mu}^2 \ell^2 \delta^{2+\eta}.$$

This finishes the proof of Theorem~\ref{thm:largebox_lower}.
\end{proof}

\appendix

\section{Reduction to smaller boxes for the upper bound}\label{appendix b}

We provide here the necessary tools to go from a fixed box with compactly support potentials in the grand canonical setting, Theorem~\ref{thm.upperbound.grandcanonical}, to the thermodynamic limit with potentials allowing a tail, Theorem~\ref{thermo dynamic limit theorem}. The same techniques can be found in \cite{BCS} with only minor deviations surrounding the non-compactness of the potential.

Given a potential $v$, we define
\[e(\rho):=\lim_{L\rightarrow \infty}e_L(\rho)=\lim_{L\rightarrow \infty}\inf_{\psi\in H_0^1(\Lambda_L^{\rho L^2})}\frac{\innerp{\psi}{\mathcal{H}_v^{\rho L^2}\psi}}{L^2},\]
where the limit is taken such that $\rho L^2=N \in \mathbb{N}$ and
\begin{equation*}
\mathcal{H}_v^{N}=\sum_{i=1}^N -\Delta_i+\sum_{i<j}^N v(x_i-x_j).
\end{equation*}
We write $v=v \one_{B(0,R)}+v \one_{B(0,R)^c}=v_R+v_{tail}$ where the $v_{tail}$ will always be treated as an error term. Let $v_{R}^{per}(x)=\sum_{k\in\mathbb{Z}^2}v_{R}(x+kL)$. In order for this to be finite we understand $R$ to be smaller than $L$.  We neglect the $N$ in the hamiltonian when it is operating on the Fock space.

The result below evaluates the error when going from periodic boundary conditions to Dirichlet boundary conditions. 
\begin{lemma}\label{lem.B1}
There exists a universal $C>0$, such that given $R_0>0$ and a periodic, bosonic trial function $\Psi_L\in \mathscr{F}(\Lambda_L)$, there exists a Dirichlet trial function $\Psi_{L+2R_0}^D\in \mathscr{F}(L^2(\Lambda_{L+2R_0}))$ satisfying, for $j\in \mathbb{N}_0$,
	\begin{align}\label{moments preserved}
		\innerp{\Psi_{L+2R_0}^D}{\mathcal{N}^j\Psi_{L+2R_0}^D}&=\innerp{\Psi_L}{\mathcal{N}^j\Psi_L},
	\intertext{and} \label{energy bound}
		\innerp{\Psi_{L+2R_0}^D}{\mathcal{H}_{v_{R}}\Psi_{L+2R_0}^D}&\leq \innerp{\Psi_L}{\mathcal{H}_{v_{R}^{per}}\Psi_L}+\frac{C}{LR_0}\innerp{\Psi_L}{\mathcal{N}\Psi_L}.
	\end{align}
\end{lemma}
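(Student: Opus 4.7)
\textit{Plan.} I will construct $\Psi^D_{L+2R_0}$ by multiplying the periodic extension of $\Psi_L$ by a smooth, translated cutoff $\chi_a$ whose square forms a partition of unity under $L\Z^2$-translations, then average over the translation parameter $a\in\Lambda_L$ to produce the sharp $1/(LR_0)$ kinetic error. The moments of $\mathcal{N}$ and the potential-energy inequality turn out to hold exactly, for every choice of $a$; only the kinetic error needs the averaging trick.

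\textit{Step 1 (construction).} Take $\chi_1:\R\to[0,1]$ smooth, supported in $[-L/2-R_0,L/2+R_0]$, equal to $1$ on $[-L/2+R_0,L/2-R_0]$, with $\|\chi_1'\|_\infty\leq C/R_0$, and satisfying the Gabor-type identity $\sum_{k\in\Z}\chi_1(s+kL)^2\equiv 1$ (built by a symmetric $\sin^2/\cos^2$ interpolation of $\chi_1^2$ on the two overlap intervals of length $2R_0$). Set $\chi(x):=\chi_1(x_1)\chi_1(x_2)$; then $\supp\chi\subseteq\Lambda_{L+2R_0}$, $\|\nabla\chi\|_\infty\leq C/R_0$, and $\sum_{k\in\Z^2}\chi(x+Lk)^2\equiv 1$. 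For $a\in\Lambda_L$ set $\chi_a(x):=\chi(x-a)$, extend each $n$-particle component $\Psi_{L,n}$ of $\Psi_L$ periodically to $\R^{2n}$, and define
\[\Psi^D_{n,a}(x_1,\dots,x_n):=\Big(\prod_{i=1}^n\chi_a(x_i)\Big)\,\Psi_{L,n}(x_1,\dots,x_n).\]
After translating by $-a$ (a unitary that leaves both $-\Delta$ and the translation-invariant $v_R$ invariant) this yields a bosonic vector in $\mathscr{F}(L^2(\Lambda_{L+2R_0}))$ vanishing on $\partial\Lambda_{L+2R_0}$.

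\textit{Step 2 (exact moments and potential-energy bound).} Since $|\Psi_{L,n}|^2$ is periodic in each variable, the partition-of-unity identity gives, for every $a$,
\[\|\Psi^D_{n,a}\|^2=\int_{\R^{2n}}\prod_i\chi_a(x_i)^2|\Psi_{L,n}|^2\,dx=\int_{\Lambda_L^n}|\Psi_{L,n}|^2\prod_i\Big(\sum_{k_i\in\Z^2}\chi_a(y_i+Lk_i)^2\Big)dy=\|\Psi_{L,n}\|^2,\]
so \eqref{moments preserved} holds sector-by-sector for every $j\in\N_0$. Writing $y_m:=x_m\bmod L\in\Lambda_L$, positivity of $v_R$ gives the pointwise inequality $v_R(x_i-x_j)\leq v_R^{\mathrm{per}}(y_i-y_j)$, and the same partition-of-unity reduction yields
\[\sum_{i<j}\innerp{\Psi^D_{n,a}}{v_R(x_i-x_j)\Psi^D_{n,a}}\leq\sum_{i<j}\innerp{\Psi_{L,n}}{v_R^{\mathrm{per}}(y_i-y_j)\Psi_{L,n}}\]
with no remainder.

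\textit{Step 3 (kinetic energy --- the main obstacle, removed by averaging).} The IMS identity $\int|\nabla(\chi_a f)|^2=\int\chi_a^2|\nabla f|^2-\int\chi_a(\Delta\chi_a)|f|^2$, together with $-\chi\Delta\chi=|\nabla\chi|^2-\tfrac12\Delta(\chi^2)$ and the fact that $\sum_k\chi(\cdot+Lk)^2$ is constant (hence has vanishing Laplacian, which kills the $\Delta(\chi^2)$ contribution after two integrations by parts in the periodic variable), reduces to
\[\sum_{i=1}^n\|\nabla_i\Psi^D_{n,a}\|^2=\sum_i\int_{\Lambda_L^n}|\nabla_i\Psi_{L,n}|^2\,dy+\sum_i\int_{\Lambda_L^n}|\Psi_{L,n}|^2\sum_{k\in\Z^2}|\nabla\chi(y_i+Lk-a)|^2\,dy.\]
A pointwise bound on the error term only yields $Cn/R_0^2$, which is insufficient. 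The main obstacle is to exploit the periodicity further, and the fix is to average the error over $a\in\Lambda_L$: for fixed $y_i$, the change of variables $b=y_i+Lk-a$ converts $\int_{\Lambda_L}\sum_k\,\cdot\,da$ into $\int_{\R^2}|\nabla\chi(b)|^2\,db$, since the sets $\{y_i+Lk-\Lambda_L\}_{k\in\Z^2}$ tile $\R^2$. Because $\nabla\chi$ is supported in a frame of area $\lesssim LR_0$ on which $|\nabla\chi|\leq C/R_0$, this integral is $\leq CL/R_0$, so the mean kinetic error is at most
\[\frac{1}{L^2}\cdot\frac{CL}{R_0}\sum_n n\,\|\Psi_{L,n}\|^2=\frac{C}{LR_0}\innerp{\Psi_L}{\mathcal{N}\Psi_L}.\]
Any $a\in\Lambda_L$ attaining this mean yields \eqref{energy bound}. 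The averaging --- producing $1/(LR_0)$ rather than $1/R_0^2$ --- reflects that the two-dimensional boundary corridor has area of order $LR_0$, and is the technical heart of the lemma.
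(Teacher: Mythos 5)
Your proof is correct and follows the standard argument that the paper's references (Robinson, Lemma 2.1.3; BCS, Lemma A.1) use: multiply the periodic extension by a cutoff whose translates square to a partition of unity, observe that moments and potential energy are handled exactly for every translation parameter, and average over that parameter to turn the pointwise $1/R_0^2$ kinetic error into $1/(LR_0)$. The paper delegates the proof entirely to those references, and your averaging step is indeed the technical heart of the cited argument.
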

\begin{proof}
	The result is independent of dimension, see \cite[Lemma 2.1.3]{Robinson} or \cite[Lemma A.1]{BCS} for a proof in the 3D case.
\end{proof}
Next step is to glue the Dirichlet boxes together in order to construct a trial function on a thermodynamic box.
\begin{theorem}\label{thm.B2}
	Let $\Psi^D_{L+2R_0}\in \mathscr{F}_s(L^2(\Lambda_{L+2R_0}))$ be a trial function with Diriclet boundary conditions and extend it to $\mathbb{R}^2$ by $0$. Then for ${L}_k=k(L+2R_0+R)$, $k\in \mathbb{N}$, we define $\Psi_{{L}_k}\in \mathscr{F}_s(L^2(\Lambda_{{L}_k}))$ by
	\begin{equation}
		\Psi^{(m)}_{L_k}(x_1,\ldots,x_m)= \frac{1}{\Vert(\Psi_{L+2R_0}^{D })^{(n)}\Vert^{k^2-1}}\prod_{i=1}^{k^2}(\Psi_{L+2R_0}^{D })^{(n)}(x_{1+n(i-1)}-c_i,\ldots,x_{in}-c_i),
	\end{equation}
	if $m = n k^2$, and $\Psi_{L_k}^{(m)} = 0$ otherwise. Here $c_i$ defines an enumeration of the lattice points on $\mathbb{Z}^2(L+2R_0)$. Then $\Psi_{L_k}$ satisfies
	\begin{equation}
		\innerp{\Psi_{L_k}}{\mathcal{N}^j\Psi_{L_k}}=k^{2j}	\innerp{\Psi_{L+2R_0}^D}{\mathcal{N}^j\Psi_{L+2R_0}^D},\qquad j\in\mathbb{N}_0.
	\end{equation}
	Furthermore if $v$ satisfies the decay condition \eqref{eq:decay} of Theorem~\ref{thm:LHY_2D}, then there exits a constant $C$ only depending on $\eta_0$ and $C_0$ such that
	\begin{equation}
		\innerp{\Psi_{L_k}}{\mathcal{H}_v\Psi_{L_k}}\leq k^{2}	\innerp{\Psi_{L+2R_0}^D}{\mathcal{H}_{v_{R}}\Psi^D_{L+2R_0}}+k^2\innerp{\Psi^D_{L+2R_0}}{\mathcal{N}^2\Psi^D_{L+2R_0}}\frac{C a^{\eta_0}}{R^{2+\eta_0}}.
	\end{equation}
\end{theorem}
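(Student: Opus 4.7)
The plan is to exploit the disjoint-support structure of $\Psi_{L_k}$: by construction each translated copy of $(\Psi^D_{L+2R_0})^{(n)}$ is supported in $\Lambda_{L+2R_0}+c_i$, and the lattice spacing $L+2R_0+R$ makes these $k^2$ sub-boxes mutually disjoint, with any two points chosen from distinct sub-boxes separated by at least $R$.

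First I would establish the number-moment identity. Using the disjointness of the supports, the $L^2$ norm of the symmetrized product factorizes so that the prescribed normalization $\|(\Psi^D_{L+2R_0})^{(n)}\|^{1-k^2}$ yields $\|\Psi_{L_k}^{(nk^2)}\|^{2} = \|(\Psi^D_{L+2R_0})^{(n)}\|^{2}$. Since the only non-zero sector of $\Psi_{L_k}$ coming from the $n$-sector of $\Psi^D_{L+2R_0}$ is the $(nk^2)$-sector, this gives at once $\langle \mathcal{N}^j\rangle_{\Psi_{L_k}} = \sum_n (nk^2)^j\,\|(\Psi^D_{L+2R_0})^{(n)}\|^2 = k^{2j}\langle \mathcal{N}^j\rangle_{\Psi^D_{L+2R_0}}$.

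For the energy inequality I would decompose $\mathcal{H}_v = \mathcal{H}_{v_R} + V_{tail}$ with $V_{tail}=\sum_{\alpha<\beta}v_{tail}(x_\alpha-x_\beta)$, then split $V_{tail}$ into intra- and inter-sub-box pair contributions. Since $-\Delta$ is local and $v_R$ is supported in $B(0,R)$, neither couples particles between distinct sub-boxes; together with the disjoint support of the copies this gives $\langle \mathcal{H}_{v_R}\rangle_{\Psi_{L_k}} = k^2\langle \mathcal{H}_{v_R}\rangle_{\Psi^D_{L+2R_0}}$. The intra-box contribution of $V_{tail}$ is controlled by the pointwise bound $v_{tail}(z)\leq Ca^{\eta_0}R^{-2-\eta_0}$ (valid since $v_{tail}\equiv 0$ on $|z|<R$ and \eqref{eq:decay} is decreasing for $|z|\geq R\geq C_0 a$), giving a per-box contribution $\leq Ca^{\eta_0}R^{-2-\eta_0}\binom{\mathcal{N}_i}{2}$ which sums to $\leq k^2\langle \mathcal{N}^2\rangle_{\Psi^D_{L+2R_0}}Ca^{\eta_0}R^{-2-\eta_0}$ via $\sum_i \mathcal{N}_i^2 \leq \bigl(\sum_i\mathcal{N}_i\bigr)^2$ and the number-moment identity just proved. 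For the inter-sub-box contribution, points $x\in\Lambda_{L+2R_0}+c_i$ and $y\in\Lambda_{L+2R_0}+c_j$ with $i\neq j$ satisfy $|x-y|\geq |c_i-c_j|-(L+2R_0)$, which is a universal fraction of $|c_i-c_j|$ because the lattice spacing strictly exceeds the sub-box size; the decay \eqref{eq:decay} then yields $v_{tail}(x-y)\leq Ca^{\eta_0}|c_i-c_j|^{-2-\eta_0}$. Summing over $j\neq i$ gives a lattice sum bounded by $C(L+2R_0+R)^{-2-\eta_0}$ uniformly in $k$ because $\eta_0>0$, and multiplying by the number $k^2$ of starting sub-boxes and by the pair count per pair of boxes (bounded by $\langle \mathcal{N}^2\rangle_{\Psi^D_{L+2R_0}}$) produces the claimed error.

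The main obstacle is precisely this last point: obtaining a $k^2$ scaling for the inter-sub-box tail, rather than the $k^4$ one would get from a naive pointwise bound applied to all $\sim k^4$ pairs of sub-boxes. The improvement relies crucially on the power-law decay in \eqref{eq:decay} with exponent $\eta_0>0$, which makes the lattice sum $\sum_{j\neq i}|c_i-c_j|^{-2-\eta_0}$ finite and independent of $k$, so that the per-starting-box tail interaction is uniformly $O(a^{\eta_0}R^{-2-\eta_0})$.
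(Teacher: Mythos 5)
Your proof is correct and follows essentially the same strategy as the paper's: split off the tail interaction, organize the pair contributions by geometric separation between sub-boxes, and use the power-law decay \eqref{eq:decay} to obtain a convergent lattice sum, which is exactly what gives the $k^2$ rather than $k^4$ scaling. One slip to repair in your intra-box step: the chain ``$\sum_i\mathcal{N}_i^2\leq(\sum_i\mathcal{N}_i)^2$ and the number-moment identity'' actually yields $\langle\mathcal{N}^2\rangle_{\Psi_{L_k}}=k^4\langle\mathcal{N}^2\rangle_{\Psi^D_{L+2R_0}}$, off by $k^2$; the correct (and simpler) observation is that on the support of $\Psi^{(nk^2)}_{L_k}$ each sub-box carries exactly $n$ particles, so $\sum_i\mathcal{N}_i^2=k^2n^2$ holds as an identity, and taking the expectation over the sector index $n$ with weights $\Vert(\Psi^D_{L+2R_0})^{(n)}\Vert^2$ returns $k^2\langle\mathcal{N}^2\rangle_{\Psi^D_{L+2R_0}}$ directly.
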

\begin{proof}
	The expectation of $\mathcal{N}^j$ can be computed using that
	\[\Vert \Psi_{L_k}^{(m)}\Vert^2=\begin{cases}
		\Vert \Psi_{L+2R_0}^{(n)}\Vert^2 &\text{if}\quad m=k^2n,\\
		0 &\text{otherwise}.
	\end{cases}\]
	However for the potential energy we need to estimate the interaction between the boxes and the long range interaction inside the box. We observe that
	\begin{equation}
		\begin{aligned}
			\innerp{\Psi_{L_k}}{\mathcal{H}_v\Psi_{L_k}}-k^{2}	\innerp{\Psi_{L+2R_0}^D}{\mathcal{H}_{v_{R}}\Psi_{L+2R_0}^D}&= \sum_{n\geq 0} \sum_{i<j}^{k^2n}\int\vert \Psi_{L_k}^{(k^2n)}\vert ^2v_{tail}(x_i-x_j) \dd x,
		\end{aligned}
	\end{equation}
	where we used that the kinetic energy of the two terms are equal and only the tail of the potential survives due to the corridors between the boxes. We further estimate
\begin{align}
\sum_{n\geq 0} \sum_{i<j}^{k^2n}\int\vert \Psi_{L_k}^{(k^2n)}\vert ^2v_{tail}(x_i-x_j)\dd x &\leq\sum_{n\geq 0}k^2n \sum_{j=2}^{k^2n}\int\vert \Psi_{L_k}^{(k^2n)}\vert^2v_{tail}(x_1-x_j) \dd x \nonumber\\ 
&\leq\sum_{n\geq 0}k^2n \sum_{j=2}^{k^2n}\int\vert \Psi_{L_k}^{(k^2n)}\vert^2\frac{C a^{\eta_0}}{|x_1-x_j|^{2+\eta_0}} \dd x,\label{eq:intermediate_step_A2}
\end{align}
where we used \eqref{eq:decay}.
If $s \in \mathbb{N}$ denotes the number of aligned boxes separating $x_1$ from $x_j$, then $|x_1 - x_j| \geq (s-1)L + R$ and there are $4(s+1)+1$ of such possible boxes. Summing on $s$ we get 
\begin{align*}
\eqref{eq:intermediate_step_A2}&\leq
			\sum_{n\geq 0}k^2n\sum_{s=1 }^k\frac{C_0a^{\eta_0}n(4(s+1)+1)}{((s-1)L+R)^{2+\eta_0}}\Vert \Psi_{L_k}^{(k^2n)}\Vert^2\\&\leq k^2\innerp{\Psi^D_{L+2R_0}}{\mathcal{N}^2\Psi_{L+2R_0}^D}C_0\Big(\frac{9a^{\eta_0}}{R^{2+\eta_0}}+\frac{a^{\eta_0}}{L^{2+\eta_0}}\sum_{s=1}^\infty\frac{4}{s^{1+\eta_0}}+\frac{9}{s^{2+\eta_0}}\Big).
\end{align*}
	 In fact the largest term is the contribution of $v_{tail}$ inside the box and its $8$ neighbours which here is represented by the term $\frac{9a^{\eta_0}}{R^{2+\delta}}$. 
\end{proof}
We have thus far constructed a sequence of grand canonical trial functions on larger and larger boxes, where we control the energy and the expected number of particles. The last part will be to relate this sequence to $e(\rho)$. For this we will use the continuity and convexity of $e(\rho)$ see \cite[Thm. 3.5.8 and 3.5.11]{Ruelle} together with the Legendre transformation being an involution on such functions.
\begin{theorem}\label{thm.B3}
	Let $\Psi_{L_k}\in \mathscr{F}(L^2(\Lambda_{L_k}))$ be a sequence with Dirichlet boundary conditions such that $L_k\rightarrow \infty$ as $k\rightarrow \infty$. Assume that there exist $C,c>0$ such that, for all $k\in \mathbb{N}$,
	\[\innerp{\Psi_{L_k}}{\mathcal{N}\Psi_{L_k}}\geq \rho(1+c\rho)L_k^2,\qquad \quad \innerp{\Psi_{L_k}}{\mathcal{N}^2\Psi_{L_k}}\leq C(\rho L_k^2)^2,\] then
	\[e(\rho)\leq \lim_{k\rightarrow \infty}\frac{\innerp{\Psi_{L_k}}{\mathcal{H}_v\Psi_{L_k}}}{L_k^2}.\]
\end{theorem}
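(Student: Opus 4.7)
The plan is to pass from the grand canonical trial sequence $\{\Psi_{L_k}\}$ to a canonical upper bound on $e(\rho)$ via Legendre duality, as the excerpt hints.

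\textbf{Step 1 (Pressure and biconjugation).} I introduce the pressure
\begin{equation*}
p(\mu) := \sup_{\rho' \geq 0}\bigl(\mu \rho' - e(\rho')\bigr), \qquad \mu \in \R.
\end{equation*}
By Ruelle \cite[Thm.~3.5.8 and 3.5.11]{Ruelle}, $e:[0,\infty)\to[0,\infty)$ is convex and continuous with $e(0)=0$, so biconjugation gives $e(\rho) = \sup_{\mu \in \R}(\mu \rho - p(\mu))$. Since $v \geq 0$ implies $e \geq 0$, for $\mu \leq 0$ the supremum defining $p(\mu)$ is attained at $\rho' = 0$, giving $p(\mu) = 0$ and hence $\mu\rho - p(\mu) = \mu\rho \leq 0 \leq e(\rho)$. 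Consequently
\begin{equation}
e(\rho) = \sup_{\mu \geq 0}\bigl(\mu \rho - p(\mu)\bigr). \label{eq:biconj-planB3}
\end{equation}

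\textbf{Step 2 (Grand canonical limit).} Decomposing any normalized $\Psi \in \mathscr{F}_s(L^2(\Lambda_L))$ into $N$-particle sectors $\Psi = \sum_N \Psi_N$ and using $\langle \Psi_N, \mathcal{H}_v \Psi_N\rangle \geq E_L(N) \|\Psi_N\|^2$ yields
\begin{equation*}
\frac{\langle \Psi, (\mathcal{H}_v - \mu\mathcal{N})\Psi\rangle}{L^2} \geq \inf_{N \in \mathbb{N}_0} \frac{E_L(N) - \mu N}{L^2} = \inf_{\rho' \in L^{-2}\mathbb{N}_0}\bigl(e_L(\rho') - \mu\rho'\bigr).
\end{equation*}
The convergence $e_L(\rho') \to e(\rho')$ (Ruelle) is uniform on compact sets by standard convex analysis, and coercivity of $e(\rho') - \mu \rho'$ at infinity for fixed $\mu$ rules out minimizing sequences escaping to $\rho' = \infty$; together these give
\begin{equation*}
\liminf_{L \to \infty} \frac{E_{gc}(\mu, L)}{L^2} \geq \inf_{\rho' \geq 0}\bigl(e(\rho') - \mu\rho'\bigr) = -p(\mu).
\end{equation*}

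\textbf{Step 3 (Conclusion).} Applying Step 2 to $\Psi_{L_k}$ gives, for every $\mu \geq 0$,
\begin{equation*}
\liminf_{k \to \infty}\frac{\innerp{\Psi_{L_k}}{(\mathcal{H}_v - \mu\mathcal{N})\Psi_{L_k}}}{L_k^2} \geq -p(\mu).
\end{equation*}
Using the hypothesis $\innerp{\Psi_{L_k}}{\mathcal{N}\Psi_{L_k}} \geq \rho(1+c\rho)L_k^2 \geq \rho L_k^2$ together with $\mu \geq 0$,
\begin{equation*}
\lim_{k \to \infty} \frac{\innerp{\Psi_{L_k}}{\mathcal{H}_v\Psi_{L_k}}}{L_k^2} \geq \mu\rho - p(\mu).
\end{equation*}
Taking the supremum over $\mu \geq 0$ and invoking \eqref{eq:biconj-planB3} yields the claim.

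\textbf{Main obstacle.} The only nontrivial ingredient is Step 2, namely the lower bound on the grand canonical free energy density in the thermodynamic limit, which requires passing the infimum through the limit $L \to \infty$; this reduces to uniform convergence $e_L \to e$ on compact intervals and coercivity at infinity, both of which are classical and contained in \cite{Ruelle}. The second-moment bound $\innerp{\Psi_{L_k}}{\mathcal{N}^2\Psi_{L_k}} \leq C(\rho L_k^2)^2$ is not directly used in the Legendre argument (which needs only a lower bound on $\innerp{\Psi_{L_k}}{\mathcal{N}\Psi_{L_k}}$); however, it was crucial upstream in Theorem~\ref{thm.B2} to control the interaction tail via $v_{\text{tail}}$ and guarantees that $\innerp{\Psi_{L_k}}{\mathcal{H}_v\Psi_{L_k}}/L_k^2$ remains bounded along the sequence.
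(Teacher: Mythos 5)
Your Step 1 and Step 3 are the right skeleton (Legendre duality plus involution), and they match the paper's strategy. But Step 2, which you identify as "the only nontrivial ingredient," contains a real gap that the paper's proof handles carefully and yours does not.

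First, the assertion that $e_L \to e$ uniformly on compact sets "by standard convex analysis" is unjustified as stated. The standard lemma you are implicitly invoking (pointwise convergence of convex functions is locally uniform) requires each $e_L$ to be convex at finite $L$, which does not follow from Ruelle: Ruelle establishes convexity of the limit $e$, not of the finite-volume densities. What the paper does instead is establish a quantitative one-sided comparison, equation \eqref{eq. canonical, box to thermodynamic}, by a gluing argument with corridors of width $L_k^{1-\epsilon}$ that leverages the decay hypothesis \eqref{eq:decay} on the potential tail. This avoids any claim about convexity of $e_L$.

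Second, and this is the more serious issue, you claim the second-moment bound $\innerp{\Psi_{L_k}}{\mathcal{N}^2\Psi_{L_k}} \leq C(\rho L_k^2)^2$ "is not directly used in the Legendre argument," but it is in fact used twice in the paper's proof: (i) in the Markov-type bound $\innerp{\Psi_{L_k}}{\mathcal{N}\chi(\mathcal{N}\geq ML_k^2)\Psi_{L_k}} \leq (ML_k^2)^{-1}\innerp{\Psi_{L_k}}{\mathcal{N}^2\Psi_{L_k}}$, which lets one cut off at $\mathcal{N}\leq ML_k^2$ while still retrieving the full $\mu\rho$ term with $M$ chosen large enough in terms of $C$ and $c$; and (ii) to control the error in \eqref{eq. canonical, box to thermodynamic}, which is quadratic in $m$, so that when summed against $\|\Psi_{L_k}^{(m)}\|^2$ it produces $\innerp{\Psi_{L_k}}{\mathcal{N}^2\Psi_{L_k}}/L_k^{4+\eta_0-(2+\eta_0)\epsilon} \leq C\rho^2 L_k^{-\eta_0+(2+\eta_0)\epsilon}$. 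Your sketch simply takes the infimum over all particle sectors and appeals to coercivity of $e(\rho')-\mu\rho'$, but the error $e_{L_k}(\rho')-e(\rho')$ coming from the potential tail grows like $\rho'^2$ and is not uniform over $\rho'$, so one cannot pass the limit through the infimum without the cutoff and the second-moment control. To close the gap you would need to reproduce essentially the chain \eqref{eq. inequality relating to chemical potential}--\eqref{eq. canonical, box to thermodynamic}, which is where the actual work of this theorem lies.
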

\begin{proof}
We insert a chemical potential $\mu$, and find that, using the positivity of $\mathcal{H}_v$ and $\mathcal{N}$, for any $\mu\geq 0$ and $M>0$ we have
\small
\begin{equation*}
	\begin{aligned}
		&\frac{\innerp{\Psi_{L_k}}{\mathcal{H}_v\Psi_{L_k}}}{L_k^2}\\&\geq  	\frac{\innerp{\Psi_{L_k}}{(\mathcal{H}_v-\mu \mathcal{N})\chi(\mathcal{N}\leq ML_k^2)\Psi_{L_k}}}{L_k^2}+\frac{\mu}{L_k^2}\big(\innerp{\Psi_{L_k}}{\mathcal{N}\Psi_{L_k}}-\innerp{\Psi_{L_k}}{\mathcal{N}\chi(\mathcal{N}\geq ML_k^2)\Psi_{L_k}}\big)\\&\geq \sum_{m=0}^{ML_k^2}\Big(e_{L_k}\Big(\frac{m}{L_k^2}\Big)-\mu\frac{m}{L_k^2}\Big)\Vert \Psi_{L_k}^{(m)}\Vert^2+\frac{\mu}{L_k^2}\Big(\innerp{\Psi_{L_k}}{\mathcal{N}\Psi_{L_k}}-\frac{1}{ML_k ^2}\innerp{\Psi_{L_k}}{\mathcal{N}^2\Psi_{L_k}}\Big).
\end{aligned}
\end{equation*}
\normalsize
Fixing $M$ large enough in terms of $C$ and $c$ then gives
	\begin{equation}\label{eq. inequality relating to chemical potential}
		\frac{		\innerp{\Psi_{L_k}}{\mathcal{H}\Psi_{L_k}}}{L_{k}^2}\geq
		\sum_{m=0}^{ML^2}\Big(e_{L_k}\Big(\frac{m}{L_k^2}\Big)-\mu\frac{m}{L_k^2}\Big)\Vert \Psi_{L_k}^{(m)}\Vert^2+\mu\rho.
	\end{equation}
As in Theorem~\ref{thm.B2}, we glue several copies of a minimizer of $e_{L_k}(\rho)$, each copy living on a different box. We leave corridors of size $L_k^{1-\epsilon}$ between the boxes and this has the consequence of changing the density to $\rho(1+L_k^{-\epsilon})^{-2}$. Assuming further that $v$ satisfies the conditions of Theorem~\ref{thm:LHY_2D} we estimate the ignored interactions to find
\begin{equation}\label{eq. canonical, box to thermodynamic}
		e(\rho(1+L_k^{-\epsilon})^{-2})\leq  e_{L_k}(\rho)(1+L_k^{-\epsilon})^{-2}+\frac{C(\rho L_k^2)^2}{L_{k}^{4+\eta_0-(2+\eta_0)\epsilon}}.
\end{equation}
Using \eqref{eq. canonical, box to thermodynamic} in \eqref{eq. inequality relating to chemical potential} yields
	\begin{align*}
	&L_k^{-2} \innerp{\Psi_{L_k}}{\mathcal{H}\Psi_{{L}_k}}\\ &\geq  \mu\rho +(1+{L}_k^{-\epsilon})^2\sum_{m=0}^{M{L}_k^2}\Big(e\Big(\frac{m}{{L}_k^2}(1+{L}_k^{-\epsilon})^{-2}\Big)-(1+{L}_k^{-\epsilon})^{-2}\mu\frac{m}{{L}_k^2}-\frac{Cm^2}{{L}_k^{4+\eta_0-(2+\eta_0)\epsilon}}\Big)\Vert \Psi_{{L}_k}^{(m)}\Vert^2\\&\geq\mu\rho-(1+{L}_k^{-\epsilon})^{2}e^*(\mu)-C\rho^2{L}_k^{-\eta_0+(2+\eta_0)\epsilon}, 
	\end{align*}
	where $^*$ defines the Legendre transformation with respect to the interval $[0,M]$. Choosing $\epsilon>0$ small enough and letting $k$ go to infinity yields
	\begin{equation}
		\lim_{k\rightarrow \infty}\frac{		\innerp{\Psi_{{L}_k}}{\mathcal{H}\Psi_{{L}_k}}}{{L}_{k}^2}\geq\sup_{\mu\in[0,\infty)}(\mu\rho-e^*(\mu))=\sup_{\mu\in\mathbb{R}}(\mu\rho-e^*(\mu))=e(\rho),
	\end{equation}
where we used that $e^*(\mu)\geq 0$ for all $\mu\in \mathbb{R}$ and that the Legendre transformation is an involution.
\end{proof}

\section{Bogoliubov diagonalization}\label{sec:bogdiag}

\begin{theorem}\label{thm:bogdiag}
Let $a_{\pm}$ be operators on a Hilbert space satisfying $\left[ a_+, a_- \right] = 0$. For $\mathcal A > 0$, $\mathcal B \in \R$ satisfying either $\vert \mathcal B \vert < \mathcal A$ or $\mathcal B = \mathcal A$ and arbitrary $\kappa \in \C$, we have the operator identity
\begin{align*}
\mathcal{A}&(a^{\dagger}_+ a_+ + a^{\dagger}_- a_-) + \mathcal{B}(a_+^{\dagger} a^{\dagger}_- + a_+ a_-) + \kappa (a^{\dagger}_+ + a_-) + \overline{\kappa}(a_+ + a_-^{\dagger})  \\
&= (1- \alpha^2) \mathcal D ( b^\dagger_+ b_+ + b^\dagger_- b_- ) -\frac{1}{2} (\mathcal{A} - \sqrt{\mathcal{A}^2 - \mathcal{B}^2}) ([a_+,a^{\dagger}_+] + [a_-, a^{\dagger}_-]) - \frac{2|\kappa|^2}{\mathcal{A}+\mathcal{B}},
\end{align*}
where $\mathcal D = \frac{1}{2}  \big( \mathcal A + \sqrt{\mathcal A^2 - \mathcal B^2}\big)$ and
\begin{equation}
b_+ = \frac{1}{\sqrt{1 - \alpha^2}} \big( a_+ + \alpha a_-^\dagger + \bar c_0 \big), \quad b_- =  \frac{1}{\sqrt{1 - \alpha^2}} \big( a_- + \alpha a_+^\dagger + c_0 \big),
\end{equation}
with 
\begin{equation}
\alpha = \mathcal B^{-1} \big( \mathcal A - \sqrt{\mathcal A^2 - \mathcal B^2} \big), \quad c_0 = \frac{2 \bar \kappa}{\mathcal A + \mathcal B + \sqrt{\mathcal A^2 - \mathcal B^2}}.
\end{equation}
\end{theorem}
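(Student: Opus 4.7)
The identity is a purely algebraic statement about two commuting pairs of creation/annihilation operators, so the plan is a direct computation: expand both sides using the definitions of $b_{\pm}$ and match coefficients of each monomial in $a_{\pm}, a_{\pm}^{\dagger}$, using only the assumption $[a_+, a_-] = 0$.

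First I would expand
$(1-\alpha^{2})(b_{+}^{\dagger}b_{+} + b_{-}^{\dagger}b_{-}) = (a_{+}^{\dagger} + \alpha a_{-} + c_{0})(a_{+} + \alpha a_{-}^{\dagger} + \bar c_{0}) + (a_{-}^{\dagger} + \alpha a_{+} + \bar c_{0})(a_{-} + \alpha a_{+}^{\dagger} + c_{0}).$
Using $[a_+,a_-] = 0$ to symmetrize $a_- a_+ = a_+ a_-$, and bringing $a_{\pm}a_{\pm}^{\dagger}$ to normal order, the sum collects into four groups: (i) a diagonal part $(1+\alpha^{2})(a_{+}^{\dagger}a_{+} + a_{-}^{\dagger}a_{-})$, together with a commutator contribution $\alpha^{2}([a_+,a_+^{\dagger}]+[a_-,a_-^{\dagger}])$; (ii) an off-diagonal part $2\alpha(a_{+}^{\dagger}a_{-}^{\dagger} + a_{+}a_{-})$; (iii) linear terms $\bar c_{0}(1+\alpha)(a_{+}^{\dagger}+a_{-}) + c_{0}(1+\alpha)(a_{+} + a_{-}^{\dagger})$; (iv) a constant $2|c_{0}|^{2}$.

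Multiplying by $\mathcal D$ and identifying with the left-hand side reduces the claim to three algebraic identities. Setting $S := \sqrt{\mathcal{A}^{2}-\mathcal{B}^{2}}$, so that $\mathcal D = (\mathcal A + S)/2$ and $\alpha = (\mathcal A - S)/\mathcal{B}$, one checks that $\alpha$ is a root of $\mathcal{B} x^{2} - 2\mathcal{A} x + \mathcal{B} = 0$, from which $\mathcal{D}(1+\alpha^{2}) = \mathcal{A}$ and $2\mathcal{D}\alpha = \mathcal{B}$ follow immediately; this matches groups (i) and (ii). The same computation gives $2\mathcal{D}\alpha^{2} = \mathcal{A} - S$, so the commutator term on the right-hand side is precisely $\tfrac{1}{2}(\mathcal A - S)([a_+,a_+^{\dagger}]+[a_-,a_-^{\dagger}])$, moved to the opposite side of the identity with the minus sign displayed in the statement. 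For the linear contribution (iii), the factorization $(\mathcal{A}+S)(\mathcal A + \mathcal{B} - S) = \mathcal{B}(\mathcal A + \mathcal{B} + S)$ yields $\mathcal{D}(1+\alpha)\bar c_{0} = \kappa$ with the prescribed choice $c_{0} = 2\bar\kappa/(\mathcal{A}+\mathcal{B}+S)$; and the same factorization produces $2\mathcal{D}|c_{0}|^{2} = 2|\kappa|^{2}/(\mathcal{A}+\mathcal{B})$, exactly the constant subtracted on the right-hand side.

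There is no genuine obstacle beyond bookkeeping: under $|\mathcal{B}| < \mathcal{A}$ all of $\alpha, \mathcal{D}, c_{0}$ are well defined and $1-\alpha^{2} \neq 0$, so the normalization $(1-\alpha^{2})^{-1/2}$ in the definition of $b_{\pm}$ is legitimate. The degenerate case $\mathcal{B} = \mathcal{A}$ (so $\alpha = 1$) is accommodated in the statement by keeping the product $(1-\alpha^{2})\mathcal D\, b_{\pm}^{\dagger}b_{\pm}$ together, which remains finite and is obtained as a limit from the nondegenerate case; the only point of care in the limit is that the linear and constant corrections remain bounded since $c_{0}$ stays finite.
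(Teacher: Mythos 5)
Your proposal is correct and takes essentially the same approach as the paper, which simply states ``This follows directly from algebraic computations'' without giving details. Your expansion, normal ordering, and the three algebraic identities $\mathcal D(1+\alpha^2)=\mathcal A$, $2\mathcal D\alpha=\mathcal B$, and $\mathcal D(1+\alpha)\bar c_0=\kappa$ (together with $2\mathcal D|c_0|^2 = 2|\kappa|^2/(\mathcal A+\mathcal B)$) check out.
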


\begin{remark}
Note that the normalization of $b_\pm$ is chosen such that 
\begin{equation}
[ b_+, b_+^\dagger ] = \frac{ [ a_+ , a_+^\dagger ] - \alpha^2 [ a_- , a_-^\dagger ]}{1 - \alpha^2} , 
\end{equation}
and we recover the canonical commutation relations $[ b_+, b_+^\dagger ] = 1$ when $a_+$ and $a_-$ satisfies them as well.
\end{remark}

\begin{proof}
This follows directly from algebraic computations.
\end{proof}

\section{Calculation of the Bogoliubov integral}\label{app:bogintegral}

For functions $\alpha, \beta$, and parameter $\varepsilon \geq 0$, we define
%
\begin{align}
	I_{\varepsilon}(\alpha,\beta) :=\frac{1}{2(2 \pi)^2} \int_{\mathbb{R}^2} &\Big( \sqrt{(1-\varepsilon)^2 \alpha^2(k) + 2 (1-\varepsilon) \rho \alpha(k) \beta(k) } - (1-\varepsilon) \alpha(k) - \rho \beta(k) \nonumber\\ & \quad +\rho^2 \frac{\widehat g^2_k - \widehat g^2_0\one_{\{|k|\leq \ell_{\delta}^{-1}\}}}{k^2}\Big) \dd k . \label{def.Ieps}
\end{align}
We recall that $\widehat{g}_0 = 8 \pi \delta$, where $\delta$ satisfies $\frac 1 2 Y \leq \delta \leq 2 Y$.
We are mainly interested into two special cases, namely $I_0(k^2,\widehat{g})$ and $I_{\varepsilon_N}(\tau, \widehat W_1)$. In this section we estimate these integrals.

\begin{lemma}\label{lem:Bogintegral_approx_tau_k}
	
	We can replace $\tau_k$ by $k^2$ up to the following error,
	
	\[ \vert I_{\varepsilon_N}(\tau, \widehat W_1) - I_{\varepsilon_N}(k^2, \widehat W_1) \vert \leq C \rho^2 \delta^2 \big( d + \varepsilon_T \vert \log Y \vert + (sK_\ell)^{-1} \big). \]
	
\end{lemma}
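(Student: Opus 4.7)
The plan is to observe that the $-\rho\widehat{W}_1(k)$ piece and the counterterm $\rho^2(\widehat{g}_k^2 - \widehat{g}_0^2\one_{\{|k|\leq \ell_\delta^{-1}\}})/k^2$ in \eqref{def.Ieps} do not depend on the first argument $\alpha$, and hence cancel in the difference $I_{\varepsilon_N}(\tau,\widehat W_1)-I_{\varepsilon_N}(k^2,\widehat W_1)$. Setting
\[
H_k(\alpha):=\sqrt{(1-\varepsilon_N)^2\alpha^2 + 2(1-\varepsilon_N)\rho\alpha\widehat{W}_1(k)}-(1-\varepsilon_N)\alpha,
\]
the lemma is reduced to estimating $\frac{1}{2(2\pi)^2}\int_{\R^2}(H_k(\tau_k)-H_k(k^2))\,\dd k$. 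I would first record that $\alpha\mapsto H_k(\alpha)$ is non-negative, strictly increasing, concave, with $H_k(0)=0$ and $\lim_{\alpha\to\infty}H_k(\alpha)=\rho\widehat{W}_1(k)$, and combine this with $0\leq\widehat{W}_1(k)\leq C\delta$ (Lemma~\ref{lem:gomegaapprox}) to derive the two-regime bounds $H_k(\alpha)\leq C\min(\sqrt{\rho\delta\,\alpha},\rho\delta)$ and $H_k'(\alpha)\leq C\min(\sqrt{\rho\delta/\alpha},(\rho\delta)^2/\alpha^2)$.

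Since $d<1$ implies $(2s\ell)^{-1}<(2ds\ell)^{-1}$, I would partition $\R^2$ into $A=\{|k|<(2s\ell)^{-1}\}$, where $\tau_k=0$; $B=\{(2s\ell)^{-1}\leq|k|\leq(2ds\ell)^{-1}\}$, where $\tau_k=(1-\varepsilon_T)(|k|-(2s\ell)^{-1})^2$ so that $k^2-\tau_k=\varepsilon_T k^2 + O(|k|(s\ell)^{-1})$; and $C=\{|k|>(2ds\ell)^{-1}\}$, where $0\leq k^2-\tau_k\leq C|k|(s\ell)^{-1}(1+\varepsilon_T/d)$. On $A$, direct insertion of the uniform bound $H_k(k^2)\leq C\min(\sqrt{\rho\delta}|k|,\rho\delta)$ and a radial integration yields, after the parameter relations of Appendix~\ref{app:parameters}, a contribution controlled by $C\rho^2\delta^2(sK_\ell)^{-1}$. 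On $C$, the mean-value bound $|H_k(k^2)-H_k(\tau_k)|\leq H_k'(\tau_k)(k^2-\tau_k)$ combined with $H_k'(\tau_k)\leq C(\rho\delta)^2/k^4$ in the large-momentum subregion $|k|\gtrsim\sqrt{\rho\delta}$ gives an integrand of size $(\rho\delta)^2/(|k|^3 s\ell)$; the radial integral $\int_{(2ds\ell)^{-1}}^\infty r^{-2}\,\dd r\sim ds\ell$ produces precisely the $d\rho^2\delta^2$ term, while the complementary small-momentum subregion is controlled by $C\rho^2\delta^2(sK_\ell)^{-1}$ via $H_k'(\tau_k)\leq C\sqrt{\rho\delta}/|k|$.

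The delicate region is the annulus $B$, which must produce the $\varepsilon_T|\log Y|$ factor: the uniform bound $H_k(k^2)\leq C\rho\delta$ together with $|B|\sim(ds\ell)^{-2}$ would only yield $(dsK_\ell)^{-2}$, which is too weak. Instead I write $H_k(k^2)-H_k(\tau_k)=\int_{\tau_k}^{k^2}H_k'(\xi)\,\dd\xi$ and exploit the explicit linearity in $\varepsilon_T$ of $k^2-\tau_k\leq\varepsilon_T k^2 + C|k|(s\ell)^{-1}$. When $|k|\gtrsim\sqrt{\rho\delta}$ (so that $\xi\sim k^2$ throughout the integration range), plugging in $H_k'(\xi)\leq C(\rho\delta)^2/\xi^2$ gives $|H_k(k^2)-H_k(\tau_k)|\leq C\varepsilon_T(\rho\delta)^2/k^2 + C(\rho\delta)^2/(|k|^3 s\ell)$; polar integration of the first term over the annular range $(2s\ell)^{-1}<|k|<(2ds\ell)^{-1}$ yields $C\varepsilon_T\rho^2\delta^2\log(1/d)\sim C\varepsilon_T\rho^2\delta^2|\log Y|$ since $d$ is a fractional power of $Y$ by the parameter choices, and the second term contributes at most $C\rho^2\delta^2(sK_\ell)^{-1}$ exactly as in $C$. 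The subregion $|k|\lesssim\sqrt{\rho\delta}$ of $B$ is treated with the small-$\alpha$ asymptotics of $H_k'$ and its contribution is absorbed in the previous error terms.

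The hardest point is the sharp extraction of $\varepsilon_T|\log Y|$ from $B$: had one used $k^2-\tau_k\leq Ck^2$ together with the small-$\alpha$ expansion $H_k(\alpha)\sim\sqrt{\rho\widehat{W}_1\alpha}$, only a $\sqrt{\varepsilon_T}$ gain would emerge. The full $\varepsilon_T$ relies both on positioning the mean-value step in the large-$\alpha$ regime of $H_k$ (where $H_k'$ decays fast enough to produce a logarithm via $\int\dd r/r$) and on the explicit linearity of $k^2-\tau_k$ in $\varepsilon_T$ coming from the definition \eqref{def:tauk}. Once this is established, summing over $A$, $B$, $C$ and invoking Appendix~\ref{app:parameters} to absorb lower-order terms such as $(dsK_\ell)^{-j}$ with $j\geq 2$ into the dominant $d+(sK_\ell)^{-1}$ completes the proof.
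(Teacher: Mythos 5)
Your proposal is correct and takes essentially the same route as the paper. You observe that the $\tau$-independent pieces of the integrand cancel in the difference, reduce to the single function $H_k$, and then control $H_k(k^2)-H_k(\tau_k)$ via the mean-value theorem using the derivative bound $H_k'(\tau)\lesssim\sqrt{\rho\delta/\tau}$ for $\tau\lesssim\rho\delta$ and $H_k'(\tau)\lesssim(\rho\delta)^2/\tau^2$ for $\tau\gtrsim\rho\delta$, together with the explicit formula for $k^2-\tau_k$ in each momentum range; this is exactly the paper's $\partial_\tau F_k$ estimate and splitting at $(2s\ell)^{-1}$, $\sqrt{\rho\delta}$, $(2ds\ell)^{-1}$, just organized slightly differently (you nest the $\sqrt{\rho\delta}$ threshold inside the $\tau_k$-shape regions, the paper makes it a top-level split). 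Two small slips worth noting, neither of which affects the order of the estimate: when you extract the logarithm you should integrate $\varepsilon_T(\rho\delta)^2/k^2$ over $\{\sqrt{\rho\delta}<|k|<(2ds\ell)^{-1}\}$ rather than $\{(2s\ell)^{-1}<|k|<(2ds\ell)^{-1}\}$, since the bound $H_k'(\xi)\lesssim(\rho\delta)^2/\xi^2$ only applies once $\xi\gtrsim\rho\delta$ (the ratio of radii is still a fractional power of $Y$, so the resulting $|\log Y|$ is unchanged); and the "complementary small-momentum subregion" you mention inside $C=\{|k|>(2ds\ell)^{-1}\}$ is empty because $(2ds\ell)^{-1}\gg\sqrt{\rho\delta}$, so $C$ lies entirely in the large-momentum regime.
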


\begin{proof} 
	We recall the definition \eqref{def:tauk} of $\tau_k$,
	\[ \tau_k = (1-\varepsilon_T) \Big( \vert k \vert - \frac{1}{2s\ell} \Big)^2_+ + \varepsilon_T \Big( \vert k \vert - \frac{1}{2ds \ell} \Big)^2_+, \]
	from which we deduce the bounds
	\begin{align}\label{bound.tauk.1}
		\vert \tau_k - k^2 \vert &\leq 
		\begin{cases}
			\frac{1}{2s\ell} \vert k \vert + \frac{1}{2(s\ell)^2}, & \text{if} \quad \vert k \vert > \frac{1}{2ds \ell},\\
			\varepsilon_T \vert k \vert^2 + \frac{3}{2 s\ell} \vert k \vert, &\text{if} \quad \frac{1}{2 s\ell} < \vert k \vert < \frac{1}{2ds \ell}.
		\end{cases}
	\end{align}
	We write the integral as
	\begin{equation}\label{eq:Itau}
		I_{\varepsilon_N}(\tau,\widehat W_1) = \frac{1}{2(2\pi)^2} \int_{\R^2} F_k(\tau_k,\widehat W_1(k)) \dd k,
	\end{equation}
	with
	\begin{equation*}
		F_k(\tau,w) = \sqrt{(1-\varepsilon_N)^2 \tau^2 + 2 (1-\varepsilon_N)\rho w \tau} 
		- (1-\varepsilon_N)\tau - \rho w + \rho^2 \frac{\widehat g^2(k) - \widehat g^2(0) \one_{\{\vert k \vert \leq \ell_\delta^{-1}\}}}{2 k^2}.
	\end{equation*}
	We first consider separately the small $k$'s. Indeed, $\tau_k =0$ for $\vert k \vert \leq \frac{1}{2s \ell}$ and thus
	\begin{align*}
		\vert F_k(\tau_k, \widehat W_1(k)) - F_k(k^2, \widehat W_1(k)) \vert &= \Big\vert \sqrt{(1-\varepsilon_N)^2 k ^4 + 2 (1-\varepsilon_N)\rho \hat W_1(k) k^2} - (1-\varepsilon)k^2 \Big \vert \\
		&\leq C \sqrt{\rho\delta} \vert k \vert,
	\end{align*}
	(recall that $(s K_\ell)^{-1} \ll 1$) and
	\begin{equation}
		\frac{1}{2(2\pi)^2} \int_{\vert k \vert \leq (2s\ell)^{-1}} \vert F_k(\tau_k,\widehat W_1(k)) - F_k(k^2,\widehat W_1(k)) \vert \dd k \leq C \rho^2 \delta^2 (sK_\ell)^{-3}.
	\end{equation}
	The part with larger $k$ we bound using the derivatives of $F$ and deduce
	\begin{align}
		\vert I_{\varepsilon_N}(\tau, \widehat W_1) - I_{\varepsilon_N}(k^2,\widehat W_1) \vert &\leq\frac{1}{2(2\pi)^2} \int_{\vert k \vert > (2s \ell)^{-1}} \sup_{\tau \in [\tau_k, k^2] } \vert \partial_\tau F_k(\tau,\widehat W_1(k)) \vert \cdot \vert \tau_k - k^2 \vert \dd k \nonumber \\
		&\quad + C \rho^2 \delta^2 (sK_\ell)^{-3}.\label{eq:C4}
	\end{align}
	The derivative of $F$ is given by
	\begin{equation}\label{eq:Ftauw}
		\partial_\tau F(\tau,w) = \frac{(1-\varepsilon_N)^2 \tau + (1-\varepsilon_N) \rho w}{\sqrt{(1-\varepsilon_N)^2 \tau^2 + 2 (1-\varepsilon_N) \rho w \tau }} - (1- \varepsilon_N)
	\end{equation}
	and can be estimated for $\tau \in [ \tau_k, k^2 ]$ as
	\begin{equation}\label{eq:partialtau}
		\vert \partial_\tau F_k(\tau, \widehat W_1(k)) \vert \leq
		\begin{cases}
			C \frac{\sqrt{\rho \delta}}{\vert k \vert - (2s \ell)^{-1}}, &\text{if } (2s\ell)^{-1}< \vert k \vert< \sqrt{\rho \delta}, \\
			C \frac{\rho^2 \delta^2}{k^4}, &\text{if } \vert k \vert > \sqrt{\rho \delta} .
		\end{cases}
	\end{equation}
	Indeed, for $\vert k \vert < \sqrt{\rho \delta}$, we just need to bound individualy each term in \eqref{eq:Ftauw}, whereas for $\vert k \vert > \sqrt{\rho \delta}$, we have $\tau_k > C \rho \widehat W_1(k)$ and we use a Taylor expansion of the square root to get
	\begin{align*}
		\vert \partial_\tau F_k(\tau, \widehat W_1(k)) \vert \leq C \frac{\rho^2 \widehat W_1(k)^2}{(1-\varepsilon_N) \tau^2} \leq C \frac{\rho^2 \delta^2}{k^4}.
	\end{align*}
	
	Now we split the integral in \eqref{eq:C4} into 3 parts, we use \eqref{eq:partialtau}, \eqref{bound.tauk.1} to bound it and find:
	\begin{align*}
		\frac{1}{2(2\pi)^2} \int_{\{(2s\ell)^{-1} < k < \sqrt{\rho \delta}\} } \sup_{\tau \in [\tau_k, k^2] } \vert \partial_\tau F_k(\tau,\widehat W_1(k)) \vert \vert \tau_k - k^2 \vert \dd k
		&\leq C \rho^2 \delta^2 \Big( \varepsilon_T + \frac{1}{s K_\ell} \Big),\\
		\frac{1}{2(2\pi)^2} \int_{\{\sqrt{\rho \delta} < k < (2ds\ell)^{-1}\}} \sup_{\tau \in [\tau_k, k^2] } \vert \partial_\tau F_k(\tau,\widehat W_1(k)) \vert \vert \tau_k - k^2 \vert \dd k
		&\leq C \rho^2 \delta^2 \Big( \varepsilon_T \vert \log Y \vert + \frac{1}{sK_\ell} \Big), \\
		\frac{1}{2(2\pi)^2} \int_{\{k > (2ds\ell)^{-1}\}} \sup_{\tau \in [\tau_k, k^2] } \vert \partial_\tau F_k(\tau,\widehat W_1(k)) \vert \vert \tau_k - k^2 \vert \dd k
		&\leq C \rho^2 \delta^2 d.
	\end{align*}
\end{proof}

\begin{lemma}\label{lem:Bogintegral_approx_W_g}
	
	We can replace $\widehat W_1(k)$ by $\widehat g_k$ up to the following error
	
	\[ \vert I_{\varepsilon_N}(k^2, \widehat W_1) - I_{\varepsilon_N}(k^2, \widehat g) \vert \leq C \rho^2 \delta^2 K_\ell^{-1} + C \rho^2 \delta \varepsilon_N.\]
	
\end{lemma}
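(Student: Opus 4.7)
The plan is to compare the two integrals term by term. Since $(1-\varepsilon_N)k^2$ and the $\widehat g$-regularisation in \eqref{def.Ieps} are identical in both instances, the difference reduces to
\[
I_{\varepsilon_N}(k^2,\widehat W_1) - I_{\varepsilon_N}(k^2,\widehat g)
= \frac{1}{2(2\pi)^2}\int_{\mathbb R^2}\big[\Phi_k(\widehat W_1(k)) - \Phi_k(\widehat g_k)\big]\,\dd k,
\]
where $\Phi_k(w) := \sqrt{(1-\varepsilon_N)^2 k^4 + 2(1-\varepsilon_N)\rho w k^2} - \rho w$. I would split the integration domain at the healing scale $k_0 = \ell_\delta^{-1}\sim \sqrt{\rho\delta}$.

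On the low-momentum region $|k|\le k_0$, a direct computation gives $\Phi_k'(w) = \rho\big[(1+2\rho w/((1-\varepsilon_N)k^2))^{-1/2}-1\big]\in[-\rho,0]$, so that $|\Phi_k'(w)|\le \rho$ uniformly in $w\in[\widehat g_k,\widehat W_1(k)]$. Combining with the pointwise estimate $|\widehat W_1(k)-\widehat g_k|\le \|W_1-g\|_{L^1}\le C\delta R^2/\ell^2$ that follows from \eqref{eq:Wg_approx}, and invoking the hypothesis $R\le \rho^{-1/2}$ so that $R^2/\ell^2 \le Y/K_\ell^2$, integration of $|\Phi_k(\widehat W_1)-\Phi_k(\widehat g)|\le \rho|\widehat W_1-\widehat g_k|$ over the disc of radius $k_0$ gives a contribution of size $\rho^2\delta^3/K_\ell^2$, well within the target error.

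On the high-momentum region $|k|>k_0$, since $2\rho|w|/((1-\varepsilon_N)k^2)$ is uniformly bounded by a small constant for $w\in\{\widehat W_1(k),\widehat g_k\}$, I Taylor expand the square root to get
\[
\Phi_k(w) = -\frac{\rho^2 w^2}{2(1-\varepsilon_N)k^2} + R_k(w),\qquad R_k'(w) = O\big(\rho^3 w^2/k^4\big).
\]
The remainder contributes $\int_{|k|>k_0}\rho^3\delta^2|\widehat W_1-\widehat g_k|/k^4\,\dd k \le C\rho^2\delta^2 R^2/\ell^2$, again subdominant. The leading Taylor piece produces
\[
\frac{\rho^2}{1-\varepsilon_N}\int_{|k|>k_0}\frac{\widehat g_k^2-\widehat W_1(k)^2}{2k^2}\,\dd k.
\]
The delicate step is to combine the identity \eqref{eq:gomega_equivalence} with the approximation \eqref{eq:gw0.approx} from Lemma~\ref{eq:gdecay}: both individually relate the regularised integral to $\widehat{(g\omega)}(0)$, so the difference kills this common singular piece, leaving an $O(\delta R^2/\ell^2)$ error plus small corrections of order $\delta^2 R^2\ell_\delta^{-2}$ coming from the $\mathbf 1_{\{|k|\le\ell_\delta^{-1}\}}$-cutoff (controlled by a second-order Taylor expansion of $\widehat W_1^2-\widehat W_1(0)^2$ and $\widehat g_k^2-\widehat g_0^2$ at the origin, exploiting radial symmetry). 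Multiplying by $\rho^2$ and using $R^2/\ell^2\le 2\delta/K_\ell^2$ yields the main contribution of size $\rho^2\delta^2/K_\ell^2$. Expanding $(1-\varepsilon_N)^{-1}=1+O(\varepsilon_N)$ and absorbing the $\varepsilon_N$-correction against the crude bound $|\int(\widehat W_1^2-\widehat g^2)/(2k^2)|\le C\delta$ (a direct consequence of \eqref{eq. bound on gw_0} and \eqref{eq:gw0.approx}) produces the second error piece $C\rho^2\delta\varepsilon_N$.

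The main obstacle is the high-momentum leading term: individually, each of $\int_{|k|>k_0}\widehat W_1^2/k^2$ and $\int_{|k|>k_0}\widehat g_k^2/k^2$ is of size $\delta$, but their difference must be shown to be smaller by a factor $R^2/\ell^2$. This forces the simultaneous and matched use of the regularised Fourier identity from Lemma~\ref{lem:FT-w} for both potentials, together with careful bookkeeping of the low-$k$ boundary correction produced by the $\mathbf 1_{\{|k|\le\ell_\delta^{-1}\}}$-cutoff when the base-point values $\widehat W_1(0)$ and $\widehat g_0$ differ by a $O(\delta R^2/\ell^2)$ amount.
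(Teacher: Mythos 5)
Your proposal is correct and reaches the same bound as the paper, but via a slightly different and arguably more streamlined organization. The key structural difference: the paper replaces the $\widehat g$-regularization in \eqref{def.Ieps} by a $(1-\varepsilon_N)^{-1}$-weighted $\widehat W_1$-regularization, producing intermediate quantities $J(\widehat W_1)$ and $J(\widehat g)$ and incurring the $\rho^2\delta\varepsilon_N$ error up front. You instead observe that the regularization term is \emph{literally the same} in both $I_{\varepsilon_N}(k^2,\widehat W_1)$ and $I_{\varepsilon_N}(k^2,\widehat g)$ (it always involves $\widehat g$), so it cancels outright in the difference. What you buy is clean cancellation and a tighter bound; what you pay is that the leading second-order Taylor piece $\rho^2\int_{|k|>\ell_\delta^{-1}}(\widehat g_k^2-\widehat W_1^2)/(2k^2)$ must then be estimated directly. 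You correctly identify this as the delicate step: neither $\int\widehat g^2/k^2$ nor $\int\widehat W_1^2/k^2$ converges absolutely in a useful way, so one cannot estimate the numerator pointwise, and the resolution is exactly as you say --- combine the identity \eqref{eq:gomega_equivalence} for $\widehat g$ with the approximation \eqref{eq:gw0.approx} for $\widehat W_1$ so that the common $\widehat{g\omega}(0)$-piece cancels, then control the low-$k$ boundary correction produced by the indicator cutoff via a second-order Taylor expansion (as in \eqref{eq:332}). In the paper's version this same cancellation happens implicitly: the $\widehat W_1$-regularization inside $G_k$ is precisely the leading Taylor term of the square root, so that $\partial_w G_k = \rho[(1+x)^{-1/2}-1+x/2]$ with $x = 2\rho w/((1-\varepsilon_N)k^2)$, and the elementary inequality $|( 1+x)^{-1/2}-1+x/2|\le Cx^2$ (valid for all $x\ge 0$, not just small $x$) gives the $O(\rho^3w^2/k^4)$ bound directly. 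The two routes are comparable in effort.

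One small imprecision in your write-up: the concluding sentence asserting a ``crude bound $|\int(\widehat W_1^2-\widehat g^2)/(2k^2)|\le C\delta$'' to produce the $\rho^2\delta\varepsilon_N$ term is both unnecessary in your approach (the common regularization cancels, so no $(1-\varepsilon_N)^{-1}$-expansion is needed --- $(1-\varepsilon_N)^{-1}\le 2$ suffices, and your difference is already $O(\delta R^2/\ell^2)\ll\delta\varepsilon_N$) and slightly misstated as written (one should restrict the domain to $|k|>\ell_\delta^{-1}$ for the individual integrals $\int\widehat g^2/k^2$ to converge, cf.\ \eqref{eq. bound on intg2}). Since the lemma claims an upper bound and your estimate is in fact sharper, this does not affect the validity of the proof.
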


\begin{proof}
	Recall that $I_{\varepsilon_N}(k^2, \widehat W_1)$ is given by \eqref{eq:Itau}. We first use \eqref{eq:gw0.approx} and \eqref{eq. bound on gw_0} to replace the last part,
	\begin{equation*}
		\rho^2 \int_{\R^2} \frac{\widehat g_k^2 - \widehat g_0^2 \one_{\{\vert k \vert \leq \ell_\delta^{-1}\}}}{2 k^2} \dd k = \rho^2 \int_{\R^2} \frac{\widehat W_1(k)^2 - \widehat W_1(0)^2 \one_{\{\vert k \vert \leq \ell_\delta^{-1}\}}}{2 (1-\varepsilon_N) k^2} \dd k + \mathcal{O}( \rho^2 \delta \varepsilon_N + \rho^2 \delta^2 K_\ell^{-2}),
	\end{equation*}
	so that
	\begin{equation} \label{eq:INJN}
		I_{\varepsilon_N}(k^2, \widehat W_1) = J(\widehat W_1) + \mathcal{O}( \rho^2 \delta \varepsilon_N + \rho^2 \delta^2 K_\ell^{-2}) \quad \text{and} \quad I_{\varepsilon_N}(k^2, \widehat g) = J(\widehat g),
	\end{equation}
	with
	\begin{equation}
		J(w) = \frac{1}{2(2\pi)^2} \int_{\R^2} G_k(w_k,w_0) \dd k,
	\end{equation}
	and
	\begin{align*}
		G_k(w,w_0) &= \sqrt{(1-\varepsilon_N)^2 k^4 + 2 (1-\varepsilon_N)\rho w k^2} 
		- (1-\varepsilon_N)k^2 - \rho w + \rho^2 \frac{w^2 - w_0^2 \one_{\{\vert k \vert \leq \ell_\delta^{-1}\}}}{2 (1-\varepsilon_N) k^2}.
	\end{align*}
	Note that $G_k$ is independent of $w_0$ for $\vert k \vert > \ell_\delta^{-1}$. Then we split $J(w)$ into two parts,
	\begin{align}
		J(w) &= \frac{1}{2(2\pi)^2} \int_{\{\vert k \vert < \ell^{-1}_\delta\}} G_k(w_k,w_0) \dd k + \frac{1}{2(2\pi)^2} \int_{\{\vert k \vert > \ell^{-1}_\delta\}} G_k(w_k) \dd k \nonumber \\
		&=: J_<(w) + J_>(w).
	\end{align}
	For $k > \ell_\delta^{-1}$ we use
	\begin{equation}
		\vert J_>(\widehat W_1) - J_>(\widehat g) \vert \leq \frac{1}{2(2\pi)^2} \int_{\{\vert k \vert > \ell_\delta^{-1}\}} \sup_{w \in [\widehat g_k, \widehat W_1(k)] } \vert \partial_w G_k(w) \vert \cdot \vert \widehat W_1(k) - \widehat g_k \vert \dd k,
	\end{equation}
	with
	\begin{equation}
		\partial_w G = \frac{\rho}{\sqrt{1+ \frac{2 \rho w}{(1-\varepsilon_N) k^2}}} - \rho + \frac{\rho^2 w}{(1-\varepsilon_N)k^2} .
	\end{equation}
	We use a Taylor expansion of the square root to get
	\begin{align*}
		\vert J_>(\widehat W_1) - J_>(\widehat g) \vert & \leq C \rho^3 \int_{\vert k \vert > \ell^{-1}_\delta} \frac{\widehat g_k^2}{k^4} \vert \widehat W_1(k) - \widehat g_k \vert \dd k.
	\end{align*}
	Since $\vert \widehat W_1(k) - \widehat g_k \vert \leq C \delta^2 K_\ell^{-1}$ (by \eqref{eq:Wg_approx}) and $\int \widehat g_k^2 k^{-2} \dd k < C \delta$ (see \eqref{eq. bound on gw_0}) we deduce
	\begin{equation} \label{eq:C11}
		\vert J_>(\widehat W_1) - J_>(\widehat g) \vert \leq C \rho^2 \delta^2 K_\ell^{-1} .
	\end{equation}
	For $k < \ell_\delta^{-1}$ we start by focusing on the first part of $G_k$,
	\begin{equation}
		F_k(w) = \sqrt{(1-\varepsilon_N)^2 k^4 + 2 (1-\varepsilon_N)\rho w k^2} - (1-\varepsilon_N)k^2 - \rho w.
	\end{equation}
	Since $\vert \partial_w F_k \vert \leq C \rho$, we have
	\begin{equation}
		\Big\vert \int_{\{\vert k \vert < \ell_\delta^{-1}\}} F_k(\widehat W_1(k)) - F_k(\widehat g_k) \dd k \Big\vert \leq C \rho \int_{\{\vert k \vert < \ell_\delta^{-1}\}} \vert \widehat W_1(k) - \widehat g_k \vert \dd k \leq C \rho^2 \delta^3 K_\ell^{-1}.
	\end{equation}
	Now
	\begin{align*}
		\vert J_<(\widehat W_1) - J_<(\widehat g) \vert &\leq C \Big\vert \int_{\{\vert k \vert < \ell_\delta^{-1}\}} F_k(\widehat W_1(k)) - F_k(\widehat g_k) \dd k \Big\vert \\ & \quad + C \Big\vert \int_{\{\vert k \vert < \ell_\delta^{-1}\}} \rho^2 \frac{\widehat W_1(k)^2 - \widehat W_1(0)^2}{2(1-\varepsilon_T)k^2} \dd k \Big\vert \nonumber \\
		&\quad+ C \Big\vert \int_{\{\vert k \vert < \ell_\delta^{-1}\}} \rho^2 \frac{\widehat g_k^2 - \widehat g_0^2}{2(1-\varepsilon_T)k^2} \dd k \Big\vert \\
		&\leq C \rho^2 \delta^3 K_\ell^{-1} + C \rho^2 R^2 \delta^2 \ell_\delta^{-2} \leq C \rho^2 \delta^2 K_\ell^{-1},
	\end{align*}
	where we used \eqref{eq:332}. Combining this with \eqref{eq:INJN} and \eqref{eq:C11} the Lemma is proved.
\end{proof}

\begin{proposition}\label{prop:integralapprox2}
	There exists a universal constant $C >0$ such that, for any $\varepsilon \in [0,1)$,
	\begin{align*}
		\Big\vert I_\varepsilon (k^2, \widehat{g}_k) - 4 \pi \rho^2 \delta \Big( 1 - \frac{\delta}{Y} + \delta \log \delta + \Big(\frac 1 2 + 2 \Gamma + \log(\pi)\Big) \delta \Big) \Big\vert \\ \leq C \rho^2 \delta^3 \big( \vert \log(\delta) \vert R^2 \rho + 1 \big) + C \rho^2 \delta \varepsilon,
	\end{align*}
	where $I_\varepsilon$ is defined in \eqref{def.Ieps}. In particular when $\delta=\delta_0$ we deduce
	\begin{align*}
		\Big\vert I_\varepsilon (k^2, \widehat{g}_k) - 4 \pi \rho^2 \delta^2_0 \Big(\frac 1 2 + 2 \Gamma + \log(\pi) \Big) \Big\vert \leq C \rho^2 \delta^3 \big( \vert \log(\delta) \vert R^2 \rho + 1 \big) + C \rho^2 \delta \varepsilon.
	\end{align*}
\end{proposition}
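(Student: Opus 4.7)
The strategy is to reduce to $\varepsilon=0$, replace $\widehat{g}_k$ by its value $\widehat{g}_0 = 8\pi\delta$ at the origin, and then evaluate the remaining fully explicit integral. The first reduction is the simpler one: viewing the integrand as a function of $s = 1-\varepsilon$ and differentiating, one finds $|\partial_s(\text{integrand})| \leq C\rho\delta$ for $|k|^2 \leq \rho\delta$ and $|\partial_s(\text{integrand})| \leq C\rho^2\delta^2/k^2$ for $|k|^2 \geq \rho\delta$ (the latter exploiting the Taylor cancellation $\sqrt{1+x}-1-x/2 = O(x^2)$ inside the square root), which integrates in $k$ to $|I_\varepsilon - I_0| \leq C\rho^2\delta\varepsilon$.

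For the replacement of $\widehat g_k$ by $\widehat g_0$, set $h_k := \widehat g_k - \widehat g_0$ and $F_k(w) := \sqrt{k^4 + 2\rho k^2 w} - k^2 - \rho w$. The key algebraic observation is that, for $k^2 \gg \rho\widehat g_0$,
\begin{align*}
F_k'(\widehat g_0) = -\frac{\rho^2\widehat g_0}{k^2} + O\!\left(\frac{\rho^3 \widehat g_0^2}{k^4}\right), \qquad \tfrac{1}{2}F_k''(\widehat g_0) = -\frac{\rho^2}{2k^2} + O\!\left(\frac{\rho^3 \widehat g_0}{k^4}\right),
\end{align*}
so that both the linear and the quadratic terms in the Taylor expansion of $F_k$ around $\widehat g_0$ cancel exactly against the explicit terms $\rho^2 \widehat g_0 h_k/k^2$ and $\rho^2 h_k^2/(2k^2)$ coming from $\rho^2(\widehat g_k^2 - \widehat g_0^2)/(2k^2)$. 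Together with the symmetry bound $|h_k| \leq C\delta\min(1,(kR)^2)$ (from $\widehat g_k - \widehat g_0 = \int g(x)(\cos(k\cdot x) - 1)\,\dd x$ and the evenness of $g$), and splitting the momentum integral into the three regimes $|k|\leq\sqrt{\rho\delta}$, $\sqrt{\rho\delta}\leq|k|\leq R^{-1}$, $|k|\geq R^{-1}$, each contributes at most $C\rho^3\delta^3 R^2$, with the $|\log\delta|$ factor arising from the logarithmic integration in the intermediate regime.

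Once $\widehat g_k$ is replaced by $\widehat g_0$, passing to polar coordinates and setting $u = k/\sqrt{\rho\widehat g_0}$ and $\lambda = \ldelta^{-1}/\sqrt{\rho\widehat g_0}$ reduces the integral to $\frac{(\rho\widehat g_0)^2}{4\pi} A$, where
\begin{align*}
A = \int_0^\lambda h(u)\, u\,\dd u + \int_\lambda^\infty \Big(h(u) + \frac{1}{2u^2}\Big) u\,\dd u, \qquad h(u) := \sqrt{u^4+2u^2}-u^2-1.
\end{align*}
The substitution $w = u^2+1$ makes $\int(\sqrt{w^2-1}-w)\,\dd w$ elementary, and introducing an auxiliary upper cutoff $\Lambda$ and letting $\Lambda\to\infty$ eliminates the logarithmic divergences between the two pieces, producing $A = \tfrac{1}{8} - \tfrac{1}{2}\log(2\lambda)$. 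Inserting $\widehat g_0 = 8\pi\delta$ together with $\log\lambda = -\Gamma - \tfrac{1}{2\delta} + \tfrac{1}{2Y} - \tfrac{1}{2}\log(2\pi\delta)$ (derived from $\ldelta = \tfrac12 e^\Gamma a\, e^{1/(2\delta)}$ and $\log(\rho a^2) = -Y^{-1}$) then reproduces the claimed three-term asymptotic expansion. The delicate step in the plan is the replacement $\widehat g_k\rightsquigarrow\widehat g_0$: the naive bounds on $F_k(\widehat g_k) - F_k(\widehat g_0)$ and on $\rho^2(\widehat g_k^2-\widehat g_0^2)/(2k^2)$ are individually of order $\rho^2\delta^2$, far larger than the target $\rho^3\delta^3 R^2|\log\delta|$, and it is only the exact matching of the first two Taylor coefficients of $F_k$ with these explicit subtraction terms that recovers the sharp error bound.
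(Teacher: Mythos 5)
Your overall strategy mirrors the paper's: first replace $\widehat g_k$ by $\widehat g_0$, then evaluate the remaining explicit integral in rescaled variables. The paper carries the factor $(1-\varepsilon)$ through the substitution $q = k(\rho\widehat g_0)^{-1/2}(1-\varepsilon)^{1/2}$ and Taylor expands in $\varepsilon$ at the end, whereas you remove $\varepsilon$ up front, but that is only bookkeeping. The Taylor cancellation you make explicit is exactly what the paper encodes implicitly: their $\partial_g F$ in \eqref{eq:derofF} already contains the term $+\rho^2\widehat g_k/k^2$ from differentiating the subtraction, so the mean-value bound $\sup_g|\partial_g F|\cdot|\widehat g_k - \widehat g_0|$ automatically yields the $O(\rho^3\widehat g_0^2/k^4)$ decay you describe without writing out the two Taylor coefficients separately.

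The substantive issue is in the explicit evaluation. With $h(u) = \sqrt{u^4+2u^2}-u^2-1$ and
\begin{equation*}
A = \int_0^\lambda h(u)\,u\,\dd u + \int_\lambda^\infty\Big(h(u)+\tfrac{1}{2u^2}\Big)u\,\dd u,
\end{equation*}
the substitution $w=u^2+1$ gives $\int(\sqrt{w^2-1}-w)\,\dd w = \frac{w}{2}\sqrt{w^2-1} - \frac12\log\big(w+\sqrt{w^2-1}\big) - \frac{w^2}{2}$, and sending the auxiliary cutoff to infinity yields $A = \frac{1}{8} - \frac{1}{4}\log 2 - \frac12\log\lambda$, not $A = \frac{1}{8} - \frac12\log(2\lambda) = \frac18 - \frac12\log 2 - \frac12\log\lambda$ as you wrote; the discrepancy is $\frac14\log 2$. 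Feeding your $A$ into $I_0 = \frac{(\rho\widehat g_0)^2}{4\pi}A$ together with your (correct) formula for $\log\lambda$ produces the coefficient $\frac12+2\Gamma+\log(\pi/2)$ in the second-order term instead of the $\frac12+2\Gamma+\log\pi$ in the statement, so as written the plan would prove the wrong LHY constant. A further minor point: the bound $|\partial_s(\text{integrand})|\le C\rho^2\delta^2/k^2$ you quote for the $\varepsilon$-reduction on $|k|^2\ge\rho\delta$ would integrate to a logarithm in two dimensions; the actual bound is $C\rho^2\widehat g_k^2/k^2$, and one must invoke the decay of $\widehat g_k$ (as in \eqref{eq. bound on intg2}) to conclude $\int|\partial_\varepsilon(\text{integrand})|\,\dd k\le C\rho^2\delta$.
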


\begin{proof}
	At first we want to replace $\widehat{g}_k$ by $\widehat{g}_0$ in the integral $I$:
	\begin{align}
		|I_\varepsilon(k^2, \widehat{g}_k)- I_\varepsilon(k^2, \widehat{g}_0)|&\leq \int_{\R^{2}} |F(k^2, \widehat{g}_k) - F(k^2, \widehat{g}_0)| \dd k\\
		& \leq \int_{\R^{2}} \sup_{g\in [\widehat{g}_k, \widehat{g}_0]}\left |\partial_g F(k^2, g)\right |\left |\widehat{g}_{k} -\widehat{g}_0 \right | \dd k =: I_{\leqslant} + I_{\geqslant},
	\end{align}
	where we split for values of $|k|$ under or above $(\rho \delta)^{1/2}$. Notice that
	\begin{equation}
		\partial_g F(k^2, \widehat g_k) =\frac{\rho k^{2}}{\sqrt{k^{4}+2\rho \widehat{g}_kk^{2} }}-\rho +\frac{\rho^{2} \widehat{g}_k}{k^{2}}.
		\label{eq:derofF}
	\end{equation}
	By a Taylor expansion we can prove that
	\begin{equation}
		I_{\leqslant} \leq
		C \int_{\{|k| \leq (\rho \delta)^{1/2}\}}\Big( R^2 (\rho \widehat{g}_0)^{1/2} k^3 + \rho \widehat{g}_0k^2 + R^2\big (\rho \widehat{g}_0\big )^{2} \Big) \dd k
		\leq C R^2 (\rho \delta)^3.
	\end{equation}
	In the other case we have, by Taylor expansion of the square root in \eqref{eq:derofF},
	\begin{align*}
		I_{\geqslant} &\leq C\rho\int_{\{|k| \geq (\rho \delta)^{1/2}\}} \frac{(\rho \widehat{g}_0)^2}{k^4} |\widehat{g}_{k} -\widehat{g}_0 | \dd k \\
		&\leq C \left (\rho \widehat{g}_0\right )^3 \Big( \int_{\{ (\rho \delta)^{1/2}\leq |k| \leq (\rho \delta)^{1/2} \widehat{g}_0^{-1/2}\}} \frac{R^2}{k^2} \dd k + \int_{\{|k| \geq (\rho \delta)^{1/2} \widehat{g}_0^{-1/2}\}} \frac{\dd k}{k^4} \Big) \\
		&\leq C \left (\rho \delta\right )^{3} R^2 |\log \delta | + C \left (\rho \delta\right )^{2}\delta.
	\end{align*}
		We deduce that $|I_\varepsilon(k^2, \widehat{g}_k)- I_\varepsilon(k^2, \widehat{g}_0)| \leq C \rho^2 \delta^3 (1 + R^2 \rho \log(\delta))$. Now remains to compute $I_\varepsilon(k^2, \widehat{g}_0)$. In this integral we use the new variable $q = k (\rho \widehat{g}_0)^{-\frac 1 2} (1-\varepsilon)^{\frac 1 2}$,
	\begin{equation}
		I_\varepsilon(k^2, \widehat{g}_0) = \frac{(\rho \widehat{g}_0)^2}{2(2\pi)^2(1-\varepsilon)} \int_{\R^2} \Big(\sqrt{q^4 + 2 q^2} - q^2 - 1 + \frac{\one_{\{\vert q \vert > (1-\varepsilon)^{\frac 1 2} \ell_\delta^{-1} (\rho \widehat{g}_0)^{- \frac 1 2}\}}}{2 q^2} \Big)\dd q.
	\end{equation}
	In term of $c_0 = (1-\varepsilon)^{\frac 1 2} \ell_\delta^{-1} (\rho \widehat{g}_0)^{- \frac 1 2}$, this integral is explicitly computable and equal to
	\begin{equation}
		I_\varepsilon(k^2, \widehat{g}_0) = \frac{(\rho \widehat{g}_0)^2}{4 \pi (1-\varepsilon)} \Big( \frac 1 8 - \frac{\log 2}{4} + \frac 1 2 \log(c^{-1}_0) \Big).
	\end{equation}
	With $\widehat{g}_0 = 8 \pi \delta$ and $c_0 = 2 e^{-\Gamma} e^{-\frac{1}{2 \delta}} (1-\varepsilon)^{\frac 1 2} \widehat{g}_0^{-\frac 1 2} (\rho a^2)^{-\frac 1 2}$ (see \eqref{eq:Defldelta}), we find
	\begin{equation}
		I_\varepsilon(k^2, \widehat{g}_0) = 4 \pi \rho^2 \delta^2 \Big( \frac{1}{\delta} - \frac{1}{Y} + \log \delta + \frac 1 2 + 2 \Gamma + \log(\pi) \Big) (1+ \grandO(\varepsilon)).
	\end{equation}
\end{proof}
\begin{remark}\label{remark.delta.low}
With the arbitrary parameter $\delta$ (within the range $\frac 1 2 Y \leq \delta \leq 2 Y$), one can deduce from Proposition~\ref{prop:integralapprox2} that our lower bound on the energy is
\begin{equation}
e^{\rm{2D}}(\rho) \geq 4\pi \rho^2 \delta \Big( 2 - \frac{\delta}{Y} + \delta \log \delta + \Big( \frac 1 2 + 2 \Gamma + \log (\pi) \Big) \delta \Big) - C \rho^2 Y^{2+\eta}.
\end{equation}
However, this lower bound is maximized by $\delta = Y( 1 - Y \vert \log Y \vert + o(Y \vert \log Y\vert))$, thus leading to the optimal choice $\delta = \delta_0$.
\end{remark}

We conclude this section by a general bound on Bogoliubov integrals that is used several times throughout the paper.

\begin{lemma}\label{lem:bogintestimate}
	For two functions $\mathcal{A}, \mathcal{B} : \mathbb{R}^2 \rightarrow \mathbb{R}$ such that
	\begin{equation}
		\mathcal{A}(k) \geq \kappa [|k| - K]_+^2 + 2 K_1 \delta, \qquad |\mathcal{B}(k)| \leq K_2 \delta, \qquad |\mathcal{B}(k) - \mathcal{B}(0)| \leq K_2 R^2 \delta |k|^2,
	\end{equation}
	for constants $\kappa > 0, 0 < K_2 \leq K_1, \ell_{\delta}^{-1} < K $, then there exists $C > 0$ such that
	\begin{align}
		\int_{\mathbb{R}^2} \, &\big(\mathcal{A}(k) - \sqrt{\mathcal{A}(k)^2 - \mathcal{B}(k)^2} \big)\dd k \nonumber \\
		&\leq \kappa^{-1}\int_{\mathbb{R}^2} \, \frac{\mathcal{B}^2(k) - \mathcal{B}^2(0)\one_{\{|k|\leq \ell_{\delta}^{-1}\}}}{2|k|^2}\dd k\nonumber \\
		&\quad + C \frac{K_2^2}{K_1} \delta K^2 + C \kappa^{-1} K_2^2 \delta^2 (1 + R^2 \ell^{-2}_{\delta} ) +C \kappa^{-1} K_2^2\delta^2 |\log (2K \ell_{\delta})|
		\nonumber \\
		&\quad+ C \min\Big( K_2^4 \delta^4 \kappa^{-3} K^{-4}, C \frac{K_2^2 \delta^2 \kappa^{-1}}{K_1^2} \int_{\mathbb{R}^2} \frac{\mathcal{B}(k)^2 - \mathcal{B}(0)^2 \one_{\{|k| < \ell_{\delta}^{-1}\}}}{|k|^2} \dd k \Big).
	\end{align}
\end{lemma}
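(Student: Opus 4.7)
The starting point is the algebraic identity
$$\mathcal{A}(k) - \sqrt{\mathcal{A}(k)^2 - \mathcal{B}(k)^2} = \frac{\mathcal{B}(k)^2}{\mathcal{A}(k) + \sqrt{\mathcal{A}(k)^2 - \mathcal{B}(k)^2}}.$$
The hypotheses force $|\mathcal{B}| \leq K_2\delta \leq K_1\delta \leq \mathcal{A}/2$, so $|\mathcal{B}/\mathcal{A}| \leq 1/2$ and a Taylor expansion of the square root yields the pointwise bound
$$\mathcal{A} - \sqrt{\mathcal{A}^2 - \mathcal{B}^2} \leq \frac{\mathcal{B}^2}{2\mathcal{A}} + C\,\frac{\mathcal{B}^4}{\mathcal{A}^3}.$$
The plan is to integrate this inequality and handle the main term $\int \mathcal{B}^2/(2\mathcal{A})\,\dd k$ and the remainder $\int \mathcal{B}^4/\mathcal{A}^3\,\dd k$ separately; the former will be compared to the target $\kappa^{-1}\int (\mathcal{B}^2 - \mathcal{B}(0)^2 \one_{\{|k|\leq\ell_{\delta}^{-1}\}})/(2|k|^2)\,\dd k$, and the latter produces the \textit{min}-term in the statement.

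For the main term, the natural decomposition is $\mathbb{R}^2 = \{|k|\leq K\} \cup \{|k|>K\}$, matching the two regimes of the assumption $\mathcal{A}(k) \geq \kappa[|k|-K]_+^2 + 2K_1\delta$. On the ball $\{|k|\leq K\}$, using only $\mathcal{A}\geq 2K_1\delta$ already gives
$$\int_{\{|k|\leq K\}} \frac{\mathcal{B}^2}{2\mathcal{A}}\,\dd k \leq C\,\frac{K_2^2}{K_1}\,\delta K^2,$$
which matches the first error term of the lemma. To close the bookkeeping, the corresponding piece of the target integral $\kappa^{-1}\int_{\{|k|\leq K\}}(\mathcal{B}^2 - \mathcal{B}(0)^2\one_{\{|k|\leq\ell_{\delta}^{-1}\}})/(2|k|^2)\,\dd k$ must also be controlled in absolute value: the regularity assumption $|\mathcal{B}^2(k)-\mathcal{B}(0)^2| \leq 2K_2^2 R^2\delta^2|k|^2$ tames the singular region $|k|\leq\ell_{\delta}^{-1}$ and contributes at most $\kappa^{-1}K_2^2\delta^2 R^2\ell_{\delta}^{-2}$, while on the annulus $\ell_{\delta}^{-1}<|k|\leq K$ (where $|\mathcal{B}|\leq K_2\delta$ and the integrand is simply $K_2^2\delta^2/|k|^2$) a polar computation yields the logarithm $\kappa^{-1}K_2^2\delta^2|\log(2K\ell_{\delta})|$. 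On $\{|k|>K\}$, I plan to use the stronger $\mathcal{A}\geq \kappa(|k|-K)^2+2K_1\delta$: a secondary split at $|k|=K+\sqrt{2K_1\delta/\kappa}$ isolates the transition annulus (handled by $\mathcal{A}\geq 2K_1\delta$) from the outer zone (where $\mathcal{A}\geq \kappa(|k|-K)^2$ and the change of variable $r=|k|-K$ permits a direct comparison to the target integrand $1/|k|^2$); both contributions stay of order at most $\kappa^{-1}K_2^2\delta^2$ and are absorbed into the stated error terms.

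For the remainder $\int \mathcal{B}^4/\mathcal{A}^3\,\dd k$, two complementary estimates will yield the two arguments of the \textit{min}. Using $\mathcal{A}\geq \kappa(|k|-K)^2+2K_1\delta$ and splitting at $|k|=2K$, the inner piece is negligible while the outer piece (where $\mathcal{A}\geq \kappa|k|^2/4$) integrates to $C K_2^4\delta^4 \kappa^{-3}K^{-4}$. Alternatively, factoring $\mathcal{B}^4/\mathcal{A}^3 \leq \mathcal{B}^2/(4K_1^2\delta^2)\cdot \mathcal{B}^2/\mathcal{A}$ via $\mathcal{A}\geq 2K_1\delta$ and recycling the main-term bound produces $C\kappa^{-1}K_2^2\delta^2 K_1^{-2}\int (\mathcal{B}^2-\mathcal{B}(0)^2\one_{\{|k|<\ell_{\delta}^{-1}\}})/|k|^2\,\dd k$. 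Taking the minimum of the two recovers the lemma's last error. I expect the main technical obstacle to be the transition region $|k|\sim K$, where $\kappa(|k|-K)^2$ is comparable to $2K_1\delta$, so that the two lower bounds on $\mathcal{A}$ must be glued together carefully so as not to lose either the $K_1/\kappa$ balance or the optimal $K$-dependence encoded in the leading error.
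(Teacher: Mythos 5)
Your overall strategy matches the paper's: Taylor-expand the square root into a main term $\mathcal{B}^2/(2\mathcal{A})$ plus a remainder $O(\mathcal{B}^4/\mathcal{A}^3)$, compare the main term to the target integrand $\kappa^{-1}\mathcal{B}^2/(2|k|^2)$, and bound the remainder in two complementary ways to produce the $\min$. However, the specific decomposition you propose introduces an avoidable gap.

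The paper splits at $|k|=2K$, not $K$, and this is the key simplification you miss. On $\{|k|\leq 2K\}$ the paper bounds $\mathcal{A}-\sqrt{\mathcal{A}^2-\mathcal{B}^2}\leq C\mathcal{B}^2/\mathcal{A}\leq CK_2^2\delta/K_1$ pointwise using only $\mathcal{A}\geq 2K_1\delta$, integrates over the ball, and treats the corresponding piece of the target (the regularity bound for $|k|\leq\ell_\delta^{-1}$, the crude bound and a logarithm for $\ell_\delta^{-1}<|k|\leq 2K$) entirely separately. On $\{|k|\geq 2K\}$ the crucial fact is $(|k|-K)^2\geq |k|^2/4$, so $(|k|-K)^{-2}\leq |k|^{-2}(1+CK/|k|)$ with a uniformly bounded ratio, and the excess integrates to $O(\kappa^{-1}K_2^2\delta^2)$. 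Your plan instead subsplits $\{|k|>K\}$ at $|k|=K+\sqrt{2K_1\delta/\kappa}$ and runs the $(|k|-K)^{-2}$ vs. $|k|^{-2}$ comparison on the outer zone; but on the annulus $K+\sqrt{2K_1\delta/\kappa}<|k|<2K$ these quantities are \emph{not} comparable, and the substitution $r=|k|-K$ produces both a term of size $\kappa^{-1}K_2^2\delta^2 K/\sqrt{2K_1\delta/\kappa}$ (which one can check is dominated by $K_2^2\delta K^2/K_1$, not by $\kappa^{-1}K_2^2\delta^2$ as you assert) and a logarithm $\kappa^{-1}K_2^2\delta^2\log\big(K\sqrt{\kappa/(K_1\delta)}\big)$. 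The latter is not controlled by the stated error $\kappa^{-1}K_2^2\delta^2|\log(2K\ell_\delta)|$ unless $\sqrt{K_1\delta/\kappa}\gtrsim\ell_\delta^{-1}$, and no such hypothesis is given.

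You correctly flag the transition region $|k|\sim K$ as the main obstacle, but the clean resolution is not to introduce the intermediate scale $\sqrt{2K_1\delta/\kappa}$ — it is to extend the "inner" region out to $2K$ and never attempt the $(|k|-K)^{-2}\sim|k|^{-2}$ comparison there, relying only on $\mathcal{A}\geq 2K_1\delta$ and the first error term $CK_2^2\delta K^2/K_1$ to absorb everything on that annulus. With that adjustment your argument closes; the remainder estimates and the $\min$ term are handled exactly as in the paper.
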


\begin{proof}
	We show that the difference
	\begin{equation}
		\int_{\mathbb{R}^2} \, \Big(\big(\mathcal{A}(k) - \sqrt{\mathcal{A}(k)^2 - \mathcal{B}(k)^2} \big) - \kappa^{-1} \frac{\mathcal{B}^2(k) - \mathcal{B}^2(0)\one_{\{|k|\leq \ell_{\delta}^{-1}\}}}{2|k|^2}\Big)\dd k,
	\end{equation}
	is bounded by the desired error terms.
	
	For $ |k | \leq 2K$ we have that
	\begin{align*}
		\int_{|k|\leq 2K} \, \big(\mathcal{A}(k) - \sqrt{\mathcal{A}(k)^2 - \mathcal{B}(k)^2} \big)\dd k &\leq C \int_{\{|k|\leq 2K\}} \,\frac{\mathcal{B}^2(k)}{\mathcal{A}(k)}\dd k \\
		&\leq C \frac{K_2^2}{K_1} \delta \int_{\{|k|\leq 2K\}} \dd k\,= C \frac{K_2^2}{K_1} \delta K^2,
	\end{align*}
	while for the $\mathcal{B}$ part, using the assumption on $|\mathcal{B}(k) - \mathcal{B}(0)|$,
	\begin{equation}
		\kappa^{-1} \int_{\{|k|\leq \ell_{\delta}^{-1}\}} \frac{|\mathcal{B}^2(k) - \mathcal{B}^2(0)|}{2|k|^2}\dd k\leq C \kappa^{-1} K_2^2 R^2 \delta^2 \int_{\{|k|\leq \ell_{\delta}^{-1}\}} \dd k = C \kappa^{-1} K_2^2 R^2 \delta^2 \ell^{-2}_{\delta},
	\end{equation}
	and
	\begin{equation}
		\kappa^{-1} \int_{\{\ell^{-1}_{\delta} \leq |k| \leq 2K\}} d\, \frac{\mathcal{B}_k^2}{2|k|^2}\dd k \leq C \kappa^{-1} K_2^2\delta^2 |\log (2K \ell_{\delta})|.
	\end{equation}
		For $|k | \geq 2K$ we have, by a Taylor expansion,
	\begin{equation}
		\mathcal{A}(k) - \sqrt{\mathcal{A}(k)^2 - \mathcal{B}(k)^2} \leq \frac{1}{2} \frac{\mathcal{B}(k)^2}{\mathcal{A}(k)} + C \frac{\mathcal{B}(k)^4}{\mathcal{A}(k)^3}.
	\end{equation}
		For the first term we observe that
	\begin{equation}
		\frac{\mathcal{B}(k)^2}{\mathcal{A}(k)} \leq \kappa^{-1} \frac{\mathcal{B}(k)^2}{(|k|-K)^2} \leq \kappa^{-1} \frac{\mathcal{B}(k)^2}{|k|^2} \Big( 1+ \frac{K}{|k|}\Big),
	\end{equation}
	giving
	\begin{align*}
		\int_{\{|k| \geq 2 K\}} \, \bigg( \frac{\mathcal{B}(k)^2}{\mathcal{A}(k)} - \kappa^{-1}\frac{\mathcal{B}(k)^2}{2|k|^2}\bigg)\dd k\leq C K \kappa^{-1} \int_{\{|k| \geq 2 K\}} \frac{\mathcal{B}(k)^2}{|k|^3}\dd k \leq C \kappa^{-1} K_2^2\delta^2,
	\end{align*}
	while for the second one we can bound either
	\begin{equation}
		\int_{\{|k| \geq 2 K\}} \, \frac{\mathcal{B}(k)^4}{\mathcal{A}(k)^3}\dd k \leq C K_2^4 \delta^4 \kappa^{-3}\int_{\{|k| \geq 2 K\}} \frac{\dd k}{|k|^6} \leq C K_2^4 \delta^4 \kappa^{-3} K^{-4},
	\end{equation}
	or as in the following,
	\begin{align*}
		\int_{\{|k| \geq 2 K\}} \, \frac{\mathcal{B}(k)^4}{\mathcal{A}(k)^3}\dd k \leq \frac{K_2^2 \delta^2 \kappa^{-1}}{K_1^2} \int_{\{|k| \geq 2 K\}} \, \frac{\mathcal{B}(k)^2}{\mathcal{A}(k)} \dd k\leq C \frac{K_2^2 \delta^2 \kappa^{-1}}{K_1^2} \int_{\{|k| \geq 2 K\}} \, \frac{\mathcal{B}(k)^2}{|k|^2}\dd k,
	\end{align*}
	adding and subtracting the term $C \frac{K_2^2 \delta^2 \kappa^{-1}}{K_1^2} \int_{\{|k| <2K\}} \frac{\mathcal{B}(k)^2 - \mathcal{B}(0)^2 \one_{\{|k| < \ell_{\delta}^{-1}\}}}{|k|^2}$ we have
	\begin{equation}
		\int_{\{|k| \geq 2 K\}} \, \frac{\mathcal{B}(k)^4}{\mathcal{A}(k)^3}\dd k \leq C \frac{K_2^2 \delta^2 \kappa^{-1}}{K_1^2} \int_{\mathbb{R}^2} \frac{\mathcal{B}(k)^2 - \mathcal{B}(0)^2 \one_{\{|k| < \ell_{\delta}^{-1}\}}}{|k|^2} \dd k.
	\end{equation}
	This finishes the proof of Lemma \ref{lem:bogintestimate}.
\end{proof}

\section{A priori bounds}\label{app:low-smallbox}
In this section, we prove Theorem~\ref{thm:excitationsbound}.
We study a localized problem on a shorter length scale $d\ell$ such that 
\begin{equation}
d \ell \ll \ell_{\delta} \ll \ell.
\end{equation}
where we recall that $\ell_{\delta}$ is the healing length. We are able, in this section, to prove Bose-Einstein condensation in boxes with length scale smaller than the healing length. A key point is that, at this scale, we can use a larger Neumann gap to reabsorb the errors. We will show how the proof of Theorem~\ref{thm:excitationsbound} reduces to this localized problem.

We introduce the small box centered at $u \in \mathbb{R}^2$ to be 
\begin{equation}
B_u = \Lambda \cap \bigg\{ d\ell u + \Big[-\frac{d\ell}{2} , \frac{d\ell}{2}  \Big]^2 \bigg\}.
\end{equation}
The associated localization functions are
\begin{equation}\label{eq:Blocfunction}
\chi_{B_u}(x) := \chi \left( \frac{x}{\ell}\right) \chi \left( \frac{x}{d\ell} -u \right),
\end{equation}
where we highlight that 
\begin{equation}
\iint |\chi_{B_u}|^2 \dd x \dd u = \ell^2.
\end{equation}
In order to construct the small box Hamiltonian, we introduce the localized potentials 
\begin{align}\label{eq:SF_3.5}
W^{\rm s}(x) &:= \frac{W(x)}{\chi*\chi(x/d\ell)},  &w_{B_u}(x,y) := \chi_{B_u} (x) W^s(x-y)\chi_{B_u}(y), \\
W_1^{\rm s}(x) &:= \frac{W_1(x)}{\chi*\chi(x/d\ell)},  &w_{1,B_u}(x,y) := \chi_{B_u} (x) W_1^s(x-y)\chi_{B_u}(y),  \\
W_2^{\rm s}(x) &:= \frac{W_2(x)}{\chi*\chi(x/d\ell)},  &w_{2,B_u}(x,y) := \chi_{B_u} (x) W_2^s(x-y)\chi_{B_u}(y),
\end{align} 
where we recall that $W, W_1, W_2$ are localized versions of $v, g, (1+ \omega) g$, respectively (see formulas \eqref{eq:SF_w12u} and \eqref{eq:scat_defs}). Since  $v$ has support in $B(0,R)$, we see that $W^{\rm s}$ is well-defined as $d\ell$ is required to be larger then $R$.
Clearly $W^{\rm s}$ depends on $d\ell$ and thus $\rho_\mu$, but we will not reflect this in our notation.

Similarly to Lemma~\ref{lem:gomegaapprox}, $W_1^{\rm s}$ satisfies the following inequalities which can be proven in analogous ways considering the length scale $d \ell$ in place of $\ell$
\begin{align}
&\int W_2^{\rm s} \leq 2\int W_1^{\rm s} \leq C \delta, \label{eq:smallW_1control}\\
 &0 \leq W_1^{\rm s}(x) - g(x)\leq C g(x) \frac{|x|^2}{(d\ell)^2},\label{eq:smallW_1approx_g}\\
&\bigg| \frac{1}{(2\pi)^2} \int_{\mathbb{R}^2} \frac{\widehat{W}_1^{\rm s}(k)^2-\widehat{W}_1^{\rm s}(0)^2\one_{\{|k| \leq \ell_{\delta}^{-1}\}}}{2 k^2}\dd k  - \widehat{g\omega}(0)\bigg| \leq C \frac{R^2\delta}{(d \ell)^2}. \label{eq:smallW_approx_gomega}
\end{align}
We define furthermore, as operators on $L^2(B_u)$,
\begin{align}
P_{B_u} := \frac{1}{|B_u|} | \one_{B_u} \rangle \langle \one_{B_u}|, \qquad Q_{B_u}  := \one_{B_u} - P_{B_u},
\end{align}
\textit{i.e.} $P_{B_u}$ is the orthogonal projection in $L^2(B_u)$ onto the constant functions and $Q_{B_u}$ is the projection to the  orthogonal complement. We can therefore introduce the number operators as well
\begin{equation}
n_{B_u} := \sum_{j=1}^N \one_{B_u,j}, \quad n_{B_u,0} := \sum_{j=1}^N P_{B_u,j}, \quad n_{B_u,+} := \sum_{j=1}^N Q_{B_u,j},
\end{equation}
and the small-box kinetic energy
\begin{align}
\mathcal{T}_{B_u}  := Q_{B_u}\Big(
\chi_{B_u}
\Big[ \sqrt{-\Delta} - \frac{1}{ds\ell} \Big]^2_{+}
\chi_{B_u}
+
\frac{\varepsilon_T}{2} (1 + \pi^{-2})\frac{1}{(d\ell)^2}  \Big) Q_{B_u}.
\end{align}

We are now ready to define the localized Hamiltonian ${\mathcal H}_{B_u}$ which acts on the symmetric Fock space ${\mathcal F}_s (L^2(B_u))$. It preserves particle number and is given as
\begin{align}\label{eq:SF_Def_HB}
{\mathcal H}_{B_u}(\rmu)_{N} :=
 \sum_{i=1}^N (1-\varepsilon_N) \mathcal{T}_{B_u}^{(i)} -
\rmu \sum_{i=1}^N \int w_{1, B_u}(x_i,y)\,\dd y + \frac{1}{2}\sum_{i \neq j} w_{B_u}(x_i,x_j),
\end{align}
on the $N$-particle sector.

An adaptation to dimension $2$ of \cite[Theorem 3.10]{BSol} allows us relate $\mathcal{H}_B(\rho_{\mu})$ to the original Hamiltonian in the large box, using the condition \eqref{eq:condition_small_loc}.
This gives the lower bound
\begin{equation}
\mathcal{H}_{\Lambda}(\rho_{\mu}) \geq (1-\varepsilon_N) \frac{b}{\ell^2}\sum_{j=1}^N Q_{\Lambda,j} + \int_{\mathbb{R}^2} \, \mathcal{H}_{B_u} (\rho_{\mu})\dd u. 
\end{equation}
We would like to restrict the previous integral to boxes that are not too small. Therefore, we identify the following sets of integration, for $\xi \in [0,1]$, 
\begin{equation}
\Lambda_{\xi} := \Big\{u \in \mathbb{R}^2\,\Big|\, \vert\ell d u \vert_{\infty} - \frac{\ell}{2}(d+1) \leq  - \xi d \ell \Big\}, 
\end{equation}
underlying the property
\begin{equation}
 \Lambda_{\xi_1} \subseteq \Lambda_{\xi_2} \quad \text{if}\quad \xi_1 \geq \xi_2, 
\end{equation} 
and we observe that integration outside $\Lambda_0$ is zero because there is no more intersection between the small box and $\Lambda$. 
The following Lemma guarantees that we can restrict the integration for the potential over set $\Lambda_{1/10}$ (where $1/10$ is chosen arbitrarily) and estimate the remaining part by a frame inside $\Lambda_{1/10}$.

\begin{lemma}\label{lem:stripes_smallboxes}
For all $x \in \Lambda$ we have the estimate
\begin{multline}
-\rho_{\mu} \iint \, w_{1,{B_u}}(x,y)\dd y \dd u \\
\geq -\rho_{\mu} \int_{\Lambda_{\frac{1}{10}}} \,\int \, w_{1,{B_u}}(x,y) \dd y \dd u
-3\rho_{\mu} \int_{\Lambda_{\frac{1}{10}}\setminus\Lambda_{\frac{1}{5}}} \int \, w_{1,{B_u}}(x,y)\dd y \dd u.
\end{multline}
\end{lemma}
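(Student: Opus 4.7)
The proof begins with the observation that $w_{1, B_u}(x,y)\geq 0$ and vanishes whenever $B_u \cap \Lambda = \emptyset$. Since $x \in \Lambda$, the factor $\chi_{B_u}(x) = \chi(x/\ell)\chi(x/(d\ell)-u)$ forces the effective $u$-integrand into $\Lambda_0$, so that
\[
-\rho_\mu \int_{\mathbb{R}^2}\int w_{1,B_u}(x,y)\,\dd y\,\dd u = -\rho_\mu \int_{\Lambda_0} \int w_{1,B_u}(x,y)\,\dd y\,\dd u.
\]
Splitting $\Lambda_0 = \Lambda_{1/10}\cup (\Lambda_0\setminus\Lambda_{1/10})$, the claim is equivalent to the pointwise-in-$x$ inequality
\[
\int_{\Lambda_0\setminus\Lambda_{1/10}}\int w_{1,B_u}(x,y)\,\dd y\,\dd u \leq 3\int_{\Lambda_{1/10}\setminus\Lambda_{1/5}}\int w_{1,B_u}(x,y)\,\dd y\,\dd u.
\]

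The plan is to prove this by a geometric shift argument in the $u$-variable. The only $u$-dependence of $w_{1,B_u}(x,y)$ enters through $\chi(x/(d\ell)-u)\chi(y/(d\ell)-u)$. I would partition the outer frame $\Lambda_0\setminus\Lambda_{1/10}$ into four edge rectangles (one per side of the square $\Lambda_0$) plus four corner squares. On each edge rectangle, introduce the affine translation $\Phi: u\mapsto u'=u-\tfrac{1}{10}n$, with $n$ the outward unit normal to that side; this has unit Jacobian and sends the edge rectangle bijectively into a subset of $\Lambda_{1/10}\setminus\Lambda_{1/5}$. Since $x\in\Lambda$ forces $|x/(d\ell)|_\infty \leq 1/(2d)$ while $|u|_\infty \in ((1+d)/(2d)-1/10,(1+d)/(2d)]$, the coordinate of $x/(d\ell)-u$ along $n$ lies in $[-1/2,-2/5)$, close to the boundary of $\supp\chi$; after the shift this coordinate moves into $[-2/5,-3/10)$, i.e.\ further into the interior of $\supp\chi$. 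The monotonicity properties of $\chi$ collected in Appendix~\ref{app:locfunction} (a spherically symmetric bump, hence non-increasing along each coordinate away from the origin) then yield $\chi(x/(d\ell)-u)\leq \chi(x/(d\ell)-\Phi(u))$. Since $|x-y|\leq R\ll d\ell$ on the support of $W_1^{\rm s}$, the analogous bound holds with $y$ replacing $x$, so
\[
w_{1,B_u}(x,y)\leq w_{1,B_{\Phi(u)}}(x,y).
\]
A change of variable $u\mapsto \Phi(u)$ in the outer integral then gives the edge-by-edge estimate. The four corner squares are handled analogously using a diagonal shift $\Phi: u\mapsto u-\tfrac{1}{10}(n_i+n_j)$, invoking the same coordinate-wise monotonicity. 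Summing the four edge contributions and the four corner contributions provides the constant $3$, which comfortably absorbs the bookkeeping from the partition.

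The main obstacle is the precise treatment of the corners, where a single-direction shift cannot simultaneously push $u$ away from both adjacent sides of $\partial\Lambda_0$. The diagonal-shift strategy requires checking both that $\Phi(u)$ lies in $\Lambda_{1/10}\setminus\Lambda_{1/5}$ and that the pointwise comparison of $\chi$ persists in two coordinates at once; both reduce to the product structure of the sup-norm support $[-1/2,1/2]^2$ of $\chi$ together with its coordinate-wise monotonicity.
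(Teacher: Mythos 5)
Your proposal is correct and takes essentially the same approach as the paper's proof: restrict the $u$-integral to $\Lambda_0$ by support, split into the inner region and the outer frame $\Lambda_0 \setminus \Lambda_{1/10}$, and use the coordinate-wise monotonicity of $\chi$ to compare the outer-frame contribution to the inner frame $\Lambda_{1/10}\setminus\Lambda_{1/5}$, with the factor $3$ arising from the triple overlap of the shifted edges and corners at the corners of the inner frame. One small inaccuracy: $\chi$ in Appendix~\ref{app:locfunction} is a product of coordinate-wise bump functions rather than spherically symmetric as you state, but this still gives the coordinate-wise monotonicity your shift argument uses, so the reasoning goes through unchanged; your explicit change-of-variable formulation and the paper's pointwise max--min comparison are two phrasings of the same inequality.
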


\begin{proof}
The proof follows the same lines as in \cite[Lemma E.1]{FS2}. We split the domain of integration $\Lambda_{1/10}$ and $\Lambda_0 - \Lambda_{1/10}$ and we estimate the integral over the latter. By the definition of $w_{1,B}$ we have simply to estimate the quantity
\begin{equation}
\int_{\Lambda_0 \setminus \Lambda_{1/10}}  \, \chi\left( \frac{x}{\ell d} - u\right) \chi\left( \frac{y}{\ell d} -u \right)\dd u.
\end{equation}
We use that $\chi$ is a product of decreasing functions in the variables $u_1, u_2$ and observe that
\begin{multline}
\max_{\frac{1}{2}(d^{-1}+1) -\frac{(\ell d)^{-1}}{10} \leq |u_1| \leq \frac{1}{2} (d^{-1}+1)}  \chi\left( \frac{x}{\ell d} - u\right) \chi\left( \frac{y}{\ell d} -u \right)   \\
\leq \min_{\frac{1}{2} (d^{-1} + 1) -\frac{2(d\ell)^{-1}}{10} \leq |u_1|\leq \frac{1}{2d} +\frac{1}{2} -\frac{(d\ell)^{-1}}{10 }} \chi\left( \frac{x}{\ell d} - u\right) \chi\left( \frac{y}{\ell d} -u \right),
\end{multline}
so that we can estimate the integral over the frame pointwise, getting a factor of $3$ due to the presence of the corners. 
\end{proof}

Thanks to Lemma~\ref{lem:stripes_smallboxes} we can write
\begin{equation}\label{eq:hamiltloclargesmallborder}
\mathcal{H}_{\Lambda}(\rho_{\mu}) \geq (1-\varepsilon_N) \frac{b}{\ell^2}\sum_{j=1}^N Q_{\Lambda,j} + \int_{\Lambda_{\frac{1}{5}}} \, \mathcal{H}_{B_u} ( \rho_{\mu}) \dd u +    \int_{\Lambda_{\frac{1}{10}}\setminus\Lambda_{\frac{1}{5}}} \, \mathcal{H}_{B_u} (4 \rho_{\mu})\dd u,
\end{equation}
where we dropped the positive part of $\mathcal{H}_{B_u}$ in $\Lambda_0\setminus\Lambda_{\frac{1}{10}}$. We are now ready to give lower bounds for kinetic and potential energies in terms of the number of particles. From this the lower bound for the small box Hamiltonian is going to follow in Corollary~\ref{cor:smalllowerbound} below. 

In order to prove Theorem~\ref{thm:excitationsbound}, we provide a lower bound on $\mathcal H_{B_u}(\rho_\mu)$. For notational simplicity we will remove the index $u$. Lemmas~\ref{lem:potentialsmalllowerbound} and~\ref{lem:kinsmallboxlower} below give first lower bounds on the potential and kinetic energy respectively.

\begin{lemma}\label{lem:potentialsmalllowerbound}
There exists a constant $C>0$ depending only on $\chi$ such that 
\begin{equation*}
-\rho_{\mu}\int \,  \sum_{j =1}^N w_{1,B}(x,y)\dd y + \frac{1}{2} \sum_{i \neq j } w_B(x_i, x_j) \geq A_0 + A_2 + \frac{1}{2} \mathcal Q_{4}^{\rm{ren, s}} - C\delta \Big( \rho_{\mu} + \frac{n_{0,B}}{|B|}\Big) n_{+,B},
\end{equation*}
with 
\begin{align*}
A_0 &:= \frac{n_{0,B}(n_{0,B} + 1)}{2|B|^2} \iint  \, w_{2,B}(x,y)\dd x \dd y \\
&\qquad- \Big(  \frac{\rho_{\mu} n_{0,B}}{|B|} + \frac{1}{4} \big( \rho_{\mu}- \frac{n_{0,B}-1}{|B|}\big)^2\Big) \iint  \, w_{1,B}(x,y)\dd x \dd y,\\
A_2 &:= \frac{1}{2} \sum_{i \neq j} P_i P_j w_{1,B} Q_i Q_j + h.c.,
\end{align*}
and $\mathcal Q_4^{\rm{ren,s}}$ is the analogue of \eqref{eq:SF_DefQ4}, but for the small box $B$.
\end{lemma}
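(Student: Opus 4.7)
Plan. The approach is to mimic the large-box potential splitting of Lemma~\ref{lem:SF_potsplit-bigbox} inside the small box $B$, then complete the square between the chemical-potential pieces of $\mathcal Q_1^{\mathrm{ren,s}}$ and $\mathcal Q_0^{\mathrm{ren,s}}$, absorbing the quadratic residues into the positive renormalized four-body term $\mathcal Q_4^{\mathrm{ren,s}}$.

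First, I would apply the $P_B/Q_B$ splitting verbatim to the small-box potential energy, producing a decomposition $\sum_{k=0}^4 \mathcal Q_k^{\mathrm{ren,s}}$ exactly parallel to Lemma~\ref{lem:SF_potsplit-bigbox} (with $P,Q,w_\bullet$ replaced by $P_B,Q_B,w_{\bullet,B}$). Using the identity $\sum_j P_{B,j}(\cdot)P_{B,j}=|B|^{-1}\iint(\cdot)\,n_{0,B}$ I would evaluate
\[\mathcal Q_0^{\mathrm{ren,s}}=\tfrac{n_{0,B}(n_{0,B}-1)}{2|B|^2}\iint w_{2,B}-\rmu\tfrac{n_{0,B}}{|B|}\iint w_{1,B};\]
the piece $\tfrac12\sum_{i\neq j}P_iP_j w_{1,B}Q_iQ_j+h.c.$ of $\mathcal Q_2^{\mathrm{ren,s}}$ is already $A_2$, and I would keep $\tfrac12\mathcal Q_4^{\mathrm{ren,s}}\geq 0$ explicitly while reserving the other half as a reservoir for absorbing quadratic remainders.

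The central algebraic step is to handle $\mathcal Q_1^{\mathrm{ren,s}}$ together with the diagonal pieces of $\mathcal Q_2^{\mathrm{ren,s}}$. The analogue of Lemma~\ref{lem:Qrewriting} lets one rewrite these as $\alpha(\mathcal B+\mathcal B^*)+\beta(\mathcal B_\omega+\mathcal B_\omega^*)+\gamma\,\mathcal D$, where $\mathcal B:=\sum_i Q_i\int w_{1,B}(x_i,y)\dd y\,P_i$, $\mathcal B_\omega$ is its analogue with $w_{1,B}\omega$, $\mathcal D=\sum_i Q_i\chi_B^2 Q_i$, and $\alpha,\beta,\gamma$ are scalar multiples of $\rmu-(n_{0,B}-1)/|B|$ and $n_{0,B}/|B|$. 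An operator Cauchy--Schwarz of the form $(\mathcal B+\lambda)^*(\mathcal B+\lambda)\geq 0$, with $\lambda$ tuned to exactly $\tfrac12\alpha$, produces the completed-square scalar $-\tfrac14(\rmu-(n_{0,B}-1)/|B|)^2\iint w_{1,B}$ that appears in $A_0$, plus a residual $4\alpha^{-2}\mathcal B^*\mathcal B$ of $PQ\!\cdot\!w\!\cdot\!QP$ type, bounded thanks to $\Vert w_{1,B}\Vert_{L^1}\leq C\delta$ from \eqref{eq:smallW_1control} and absorbed in $\tfrac12\mathcal Q_4^{\mathrm{ren,s}}$ via the positivity \eqref{eq:SF_DefQ4}. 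The missing shift $n_{0,B}(n_{0,B}-1)\to n_{0,B}(n_{0,B}+1)$ in the $w_{2,B}$-coefficient of $A_0$ arises from the normal-ordering correction of the diagonal $\mathcal Q_2^{\mathrm{ren,s}}$-piece $P_iQ_j w_{2,B}P_iQ_j$, which contributes an additional $n_{0,B}/|B|^2\iint w_{2,B}$.

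The remaining operators, namely $\mathcal Q_3^{\mathrm{ren,s}}$ and the leftover off-diagonal contributions of $\mathcal Q_2^{\mathrm{ren,s}}$ (including the chemical-potential piece $-\rmu Q_i\int w_{1,B}Q_i$), can then be bounded from below by the announced error $-C\delta(\rmu+n_{0,B}/|B|)n_{+,B}$ through weighted Cauchy--Schwarz together with \eqref{eq:smallW_1control}, choosing the weights so that the associated positive squares fit inside what is left of $\tfrac12\mathcal Q_4^{\mathrm{ren,s}}$. The main obstacle is the fine tuning in the central step: the Cauchy--Schwarz weight must simultaneously produce the prefactor $\tfrac14$, the precise shift by $n_{0,B}-1$ inside the effective chemical potential, and leave the residual operator squares small enough to be absorbed by $\mathcal Q_4^{\mathrm{ren,s}}$ rather than the Neumann gap, which must remain intact for the companion kinetic bound of Lemma~\ref{lem:kinsmallboxlower}.
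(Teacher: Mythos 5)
Your overall plan matches the paper's one-sentence sketch (analogue of Lemma~\ref{lem:SF_potsplit-bigbox}, followed by Lemma~\ref{lem:Qrewriting} and the pattern of \cite[Lemma~E.7]{FS2}): you correctly identify the $P_B/Q_B$ splitting, that $\mathcal Q_0^{\rm{ren,s}}$ evaluates to a scalar via $\sum_j P_{B,j}(\cdot)P_{B,j}=|B|^{-1}\iint(\cdot)\,n_{0,B}$, that the $PPw_1QQ$ piece is exactly $A_2$, and that $\frac12\mathcal Q_4^{\rm{ren,s}}$ is retained while the other half absorbs remainders. However, there are two concrete gaps in the central step. First, your explanation of the shift $n_{0,B}(n_{0,B}-1)\to n_{0,B}(n_{0,B}+1)$ does not hold: the diagonal $\mathcal Q_2^{\rm{ren,s}}$ piece $\sum_{i\neq j}P_iQ_jw_{2,B}P_iQ_j$ evaluates to $\frac{n_{0,B}}{|B|}\sum_j Q_j\bigl(\int w_{2,B}(y,x_j)\,\dd y\bigr)Q_j$, which is an operator bounded by $C\delta\frac{n_{0,B}}{|B|}n_{+,B}$ and lives in the error budget, not a scalar of the form $\frac{n_{0,B}}{|B|^2}\iint w_{2,B}$ — no normal-ordering produces the extra factor.

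Second, the Cauchy--Schwarz mechanism as you wrote it, $(\mathcal B+\lambda)^*(\mathcal B+\lambda)\geq 0$ with a \emph{scalar} $\lambda=\frac12\alpha$, produces the scalar $|\lambda|^2=\frac14\alpha^2$, not $\frac14\alpha^2\iint w_{1,B}$; the $\iint w_{1,B}$ has to come from somewhere. What actually works is the kernel-weighted inequality $\sum_i\bigl(\sqrt{\mu}\,Q_i\pm\mu^{-1/2}P_i\bigr)\mathcal W\bigl(\sqrt{\mu}\,Q_i\pm\mu^{-1/2}P_i\bigr)\geq 0$ with $\mathcal W=\int w_{1,B}(x_i,y)\,\dd y\geq 0$, which together with $\sum_i P_i\mathcal W P_i=\frac{n_{0,B}}{|B|}\iint w_{1,B}$ yields
\[
\alpha\sum_i\bigl(Q_i\mathcal W P_i+{\rm h.c.}\bigr)\ \geq\ -\mu\sum_i Q_i\mathcal W Q_i - \frac{\alpha^2}{\mu}\,\frac{n_{0,B}}{|B|}\iint w_{1,B},
\]
and the factor $\frac14$ emerges only from the specific weight $\mu=4n_{0,B}/|B|$. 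With this choice the residual is the one-body diagonal term $\frac{4n_{0,B}}{|B|}\sum_i Q_i\mathcal W Q_i\leq C\delta\frac{n_{0,B}}{|B|}n_{+,B}$, which belongs in the stated $-C\delta(\rho_\mu+n_{0,B}/|B|)n_{+,B}$ error and \emph{not} in $\frac12\mathcal Q_4^{\rm{ren,s}}$: your residual $\mathcal B^*\mathcal B$ has $\bigl(\int w_{1,B}\bigr)$ as a one-body multiplier on each leg, so it does not share the genuine two-body $w(x_i-x_j)$ structure that $\mathcal Q_4^{\rm{ren,s}}$ controls. The remaining off-diagonal $\mathcal Q_2,\mathcal Q_3$ pieces and the $\omega$-corrections are estimated along the lines you indicate, but you should route them to the same $n_{+,B}$-error (via $\Vert W_1^{\rm s}\omega\Vert_1\leq C\delta$ and the analogue of Lemma~\ref{lem:PQ'estimate}) rather than counting on the Neumann gap, which indeed must stay untouched for Lemma~\ref{lem:kinsmallboxlower}.
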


\begin{proof}
The proof follows from an analogous potential splitting like in Lemma~\ref{lem:SF_potsplit-bigbox} and Lemma~\ref{lem:Qrewriting} and by the same lines as \cite[Lemma E.7]{FS2}. 
\end{proof}

\begin{lemma}\label{lem:kinsmallboxlower}
	For the kinetic energy on the small box in the $N$'th sector we have
\begin{align*}
Q_B&\chi_B \Big[ \sqrt{-\Delta} - \frac{1}{ds\ell} \Big]^2_{+}
\chi_B Q_B + A_2 \\
&\geq  -\frac{1}{2}\widehat{g\omega}(0) \frac{N(N+1)}{|B|^2}  \int \chi_B^2 + \mathcal{E}_2(N) + \mathcal{E}_4(N) - C \delta \frac{N +1}{|B|}  n_{+,B},
\end{align*}
where 
\begin{align}\label{eq:errors_thmsmall}
\mathcal{E}_2(N) &:= - C \delta \Big( \frac{R^2}{(d\ell)^2} + \delta |\log(d s K_{\ell})| +   \delta^2\Big) \frac{N(N+1)}{|B|^2}  \int \chi_{B}^2, \\
\mathcal{E}_4(N) &:= -C \Big( \delta^4 (ds\ell)^4 \Big( \frac{N+1}{|B|}\Big)^3 + \delta (ds\ell)^{-2}\Big) \frac{N}{|B|} \int \chi_{B}^2. 
\end{align}
\end{lemma}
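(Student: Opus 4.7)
The strategy is to treat $A_2$ as the off-diagonal part of a quadratic Bogoliubov-type Hamiltonian, and absorb it into a portion of the positive kinetic energy by completing the square, so that only a constant (the main term) and genuine error pieces remain.

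First I would rewrite both the kinetic term and $A_2$ in the second-quantized language on $L^2(B)$. Introduce the operators $a_0^B = |B|^{-1/2}\, a(\mathbf{1}_B)$ attached to $P_B$, and $a_k^B = \ell^{-1}\, a(Q_B \chi_B e^{ikx})$ attached to $Q_B$. Because $w_{1,B}(x,y) = \chi_B(x)\chi_B(y)\,W_1^{\mathrm s}(x-y)$, the term $A_2$ becomes, up to $c$-number substitution $a_0^B \mapsto \sqrt{n_{0,B}}$,
$$A_2 \;\simeq\; \frac{\rho_{0,B}}{2}\,\frac{\ell^2}{(2\pi)^2}\int \widehat{W}_1^{\mathrm s}(k)\bigl(a_k^{B\dagger} a_{-k}^{B\dagger} + a_k^B a_{-k}^B\bigr)\,\dd k,$$
while the kinetic form $\chi_B[\sqrt{-\Delta}-(ds\ell)^{-1}]_+^2\chi_B$ bounds the one-body part below by $\frac{\ell^2}{(2\pi)^2}\int \tau^B(k)\, a_k^{B\dagger}a_k^B\,\dd k$ with $\tau^B(k) := [|k| - (ds\ell)^{-1}]_+^2$.

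Next I would apply the Bogoliubov diagonalization of Theorem~\ref{thm:bogdiag} to this quadratic piece, retaining only the positive diagonal operator (which I discard for the lower bound) and the constant $c$-number term. This gives
$$\sum_j T_j^{(B)} + A_2 \;\geq\; -\frac{n_{0,B}(n_{0,B}+1)}{2|B|^2}\,\frac{\ell^2 |B|}{(2\pi)^2}\int\Bigl(\tau^B(k) - \sqrt{\tau^B(k)^2 - (\rho_{0,B}\widehat{W}_1^{\mathrm s}(k))^2}\Bigr)\dd k \cdot \int \chi_B^2$$
(up to remainders discussed below). At this point I invoke Lemma~\ref{lem:bogintestimate} with $\mathcal A = \tau^B$, $\mathcal B = \rho_{0,B}\widehat{W}_1^{\mathrm s}$, $K = (ds\ell)^{-1}$, $K_1,K_2 \sim \rho_{\mu}\delta$, $\kappa = 1$. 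The leading term extracted is
$$-\frac{1}{2}\,\rho_{0,B}^2\int \frac{\widehat{W}_1^{\mathrm s}(k)^2 - \widehat{W}_1^{\mathrm s}(0)^2\,\mathbf{1}_{\{|k|\leq \ell_\delta^{-1}\}}}{2k^2}\,\dd k \cdot \int \chi_B^2,$$
which by \eqref{eq:smallW_approx_gomega} equals $-\tfrac12 \rho_{0,B}^2\,\widehat{g\omega}(0) \int\chi_B^2$ up to an $R^2(d\ell)^{-2}$ correction. Replacing $n_{0,B}(n_{0,B}+1)$ by $N(N+1)$ produces the advertised $-C\delta(N+1)|B|^{-1}n_{+,B}$ error. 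The remaining error pieces from Lemma~\ref{lem:bogintestimate} (the $K_2^2 K^2/K_1$, $\kappa^{-1}K_2^2\delta^2|\log(2K\ell_\delta)|$, $\kappa^{-1}K_2^2 R^2\delta^2\ell_\delta^{-2}$ and $K_2^4\delta^4\kappa^{-3}K^{-4}$ terms) match exactly the expressions in $\mathcal E_2(N)$ and $\mathcal E_4(N)$ after plugging in the chosen values of the parameters and the factor $\int\chi_B^2$.

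The main obstacle is that, because of the cutoff $\chi_B$, the operators $a_k^B$ do \emph{not} satisfy canonical commutation relations — $[a_k^B,(a_h^B)^\dagger]$ is a convolution-type kernel involving $\widehat{\chi_B^2}$ rather than a delta function — and similarly $a_0^B$ is not exactly a $c$-number on our state. Consequently both the $c$-number substitution and the Bogoliubov completion leave behind quartic remainders in $(a_k^B)^\#$. The task is to bound these remainders by the (unused) spectral gap $\tfrac{\varepsilon_T}{2}(1+\pi^{-2})(d\ell)^{-2}Q_B$ built into $\mathcal T_{B_u}$, together with the positive diagonalized Bogoliubov operator that was discarded. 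Careful bookkeeping with Cauchy–Schwarz (with weights tuned to the kinetic gap) shows that all such leftover pieces are of the form $\delta^4(ds\ell)^4(N/|B|)^3\cdot N|B|^{-1}\int\chi_B^2$ or $\delta(ds\ell)^{-2}\cdot N|B|^{-1}\int\chi_B^2$, precisely what $\mathcal E_4(N)$ accounts for. This delicate balance between the completion-of-squares remainder and the kinetic gap is what dictates the particular form of $\mathcal E_4$ in the statement.
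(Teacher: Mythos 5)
Your high-level plan (rewrite in second-quantized form, apply Bogoliubov diagonalization, then Lemma~\ref{lem:bogintestimate}) is the right general shape, but there are two concrete problems.

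First, the Bogoliubov step as you set it up cannot run. You take $\mathcal A = \tau^B(k) = \bigl[\,|k|-(ds\ell)^{-1}\bigr]_+^2$ and $\mathcal B = \rho_{0,B}\widehat W_1^{\mathrm s}(k)$. For $|k| < (ds\ell)^{-1}$ one has $\mathcal A(k)=0$ while $\mathcal B(k)>0$, so $\mathcal A(k)^2 - \mathcal B(k)^2 < 0$: Theorem~\ref{thm:bogdiag} is inapplicable and the hypothesis $\mathcal A(k)\geq \kappa[|k|-K]_+^2 + 2K_1\delta$ of Lemma~\ref{lem:bogintestimate} fails (it requires a strictly positive mass $2K_1\delta$ for all $k$). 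The paper fixes this by \emph{adding and subtracting} the diagonal quantity $A_1 := \frac{\widehat W_1^{\mathrm s}(0)}{(2\pi)^2}\int\bigl(d_p^\dagger d_p + d_{-p}^\dagger d_{-p}\bigr)\dd p$; the addition supplies the necessary $2\widehat W_1^{\mathrm s}(0)$ mass in $\mathcal A_p$, and the subtraction is precisely what costs $C\delta\,\frac{n_0+1}{|B|}\,n_{+,B}$ — i.e.\ the advertised last error term. Without an $A_1$-type device your square-completion does not exist.

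Second, you attribute the error $-C\delta\,\frac{N+1}{|B|}\,n_{+,B}$ to the replacement $n_{0,B}(n_{0,B}+1)\mapsto N(N+1)$, but that replacement is \emph{free} for a lower bound: $\widehat{g\omega}(0)=\int g\omega\geq 0$ and $n_{0,B}\leq N$, so enlarging $n_{0,B}(n_{0,B}+1)$ to $N(N+1)$ only makes $-\frac{1}{2}\widehat{g\omega}(0)\frac{(\cdot)}{|B|^2}\int\chi_B^2$ more negative. As explained above, the true source of that error term is $A_1$.

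Beyond these gaps, your route also diverges in a way that creates avoidable work: the paper does not perform a $c$-number substitution on the small box. Instead it introduces the dressed operators $d_p^\dagger := |B|^{-1/2}a^\dagger(Q_B\chi_B e^{-ipx})\,a_0$, so that $A_2$ and the kinetic form are \emph{exactly} quadratic in $d_p^\#$ with operator-valued commutators $[d_p,d_p^\dagger]\approx \frac{n_0}{|B|}\int\chi_B^2$. Theorem~\ref{thm:bogdiag} is stated for arbitrary commuting $a_\pm$ precisely so that it applies directly to the $d_p$'s, producing the constant term with the factor $\frac{n_0}{|B|}\int\chi_B^2\leq\frac{N}{|B|}\int\chi_B^2$ automatically, and leaving no quartic remainders to absorb into the $\varepsilon_T$ spectral gap. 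Your proposal of a $c$-number substitution plus ad hoc control of non-CCR quartic remainders would need substantially more justification, and the sketch you give for that step does not show how the error sizes come out as $\mathcal E_4$ rather than something larger.
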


\begin{proof}
Let us introduce the operators 
\begin{equation}
d^{\dagger}_p := \frac{1}{|B|^{1/2}} a^{\dagger}(Q_B \chi_B e^{-ipx}) a_0,
\end{equation}
where $a_0 = \frac{1}{\ell} a (1)$ and $a,a^{\dagger}$ are the annihilation and creation operators on $\mathscr{F}_s(L^2(\Lambda))$. Further we introduce
\begin{equation}
A_1 := \frac{\widehat{W}_1^s(0)}{(2\pi)^2}  \int_{\mathbb{R}^2} \, (d^{\dagger}_p d_p + d^{\dagger}_{-p} d_{-p})\dd p.
\end{equation}
Now using that on the $N$'th sector we have
\begin{equation}
\int \,  \Big[ \vert p \vert - \frac{1}{ds\ell} \Big]^2_{+} d^{\dagger}_p d_p \dd  p=  \frac{(n_0+1)}{|B|} \sum_{j=1}^N Q_{B,j}\chi_{B}(x_j)\Big[ \sqrt{-\Delta} - \frac{1}{ds\ell} \Big]^2_{+} \chi_B (x_j) Q_{B,j},
\end{equation}
and that $n_0 \leq N$, we get, adding $A_1$ and $A_2$ to the kinetic energy
\begin{equation}
Q_B\chi_B \Big[ \sqrt{-\Delta} - \frac{1}{ds\ell} \Big]^2_{+}
\chi_B Q_B + A_1 +A_2 \geq \frac{1}{2(2\pi)^2} \int \,h_p \dd p,
\end{equation}
where 
\begin{equation}
h_p := \mathcal{A}_p (d^{\dagger}_p d_p + d^{\dagger}_{-p} d_{-p} )  + \mathcal{B}_p (d^{\dagger}_p d^{\dagger}_{-p} + d_{-p} d_p),
\end{equation}
with 
\begin{equation}
\mathcal{A}_p := (1- \varepsilon_N) \frac{|B|}{N+1} \Big[ |p| - \frac{1}{ds\ell}\Big]^2_+  +  2 \widehat{W}_1^s(0), \qquad \mathcal{B}_p := \widehat{W}_1^s(p).
\end{equation}
The additional $A_1$ term is estimated, thanks to \eqref{eq:smallW_1control}, by 
\begin{equation}
A_1 \leq C \delta \frac{n_0+1}{|B|} n_{+,B}, 
\end{equation}
which contributes to the last term in the result of the lemma. 
By an application of Theorem~\ref{thm:bogdiag} we get the bound
\begin{equation}\label{eq:small_Bog_estimate}
\frac{1}{2(2\pi)^2} \int \,h_p\dd p  \geq -\frac{1}{2(2\pi)^2}  \frac{N}{|B|}  \int \chi_B^2\int \, (\mathcal{A}_p -\sqrt{\mathcal{A}_p^2 - \mathcal{B}^2_p})\dd p,
\end{equation}
and therefore we want to bound the latter.
We observe that, thanks to \eqref{eq:smallW_1approx_g}, $\widehat W_1^{\rm s}(0) \geq C \delta $ for a certain $C < 8\pi$. 
Choosing the parameters
\begin{equation}
K = (d s\ell)^{-1}, \quad \kappa = (1-\varepsilon_N)\frac{|B|}{N+1}, \quad K_1 =1, \quad K_2 = C,
\end{equation}
we can apply Lemma~\ref{lem:bogintestimate} to obtain 
\begin{multline*}
-\frac{1}{2(2\pi)^2} \int \, (\mathcal{A}_p -\sqrt{\mathcal{A}_p^2 - \mathcal{B}^2_p})\dd p \geq -\frac{N+1}{|B|(1-\varepsilon_N)} \widehat{g\omega}(0) - \frac{N+1}{|B|}\frac{R^2}{(d\ell)^2}\delta+ C \delta (ds\ell)^{-2}\\
  -C \frac{N+1}{|B|}   \delta^2 \Big(1 + \Big(\frac{R}{ \ell_{\delta}}\Big)^{2} \Big)  -C \frac{N+1}{|B|} \delta^2 |\log (2d s K_{\ell})| - C  \delta^4 \frac{(N+1)^3}{|B|^3} (ds\ell)^{4},
\end{multline*}
where we used that $\varepsilon_N \leq 1/2$ and \eqref{eq:smallW_approx_gomega} to approximate the leading term by $\widehat{g\omega}(0)$ getting the second term as an error. Plugging the last estimate in \eqref{eq:small_Bog_estimate} we get the result with the error terms coherent with the definitions of $\mathcal{E}_2(n)$ and $\mathcal{E}_4(n).$
\end{proof}

We will also need the following estimates.

\begin{lemma}\label{lem:gomegasmallapprox}
Let $\ell_{\text{min}}$ denote the shortest length of the box $B$, then there exists a constant $C>0$ such that 
\begin{align}\label{eq:step_techn_smallIntw:result}
&\bigg|\iint  \, w_{1,B}(x,y) \dd x \dd y-  8\pi \delta\int \chi_B^2 \bigg| \leq  C \delta \Big(\frac{R}{\ell_{\text{min}}}\Big)^2 \int \chi_B^2,
\intertext{and}
\label{eq:step_techn_smallIntw:result2}
&\iint \,  w_{2,B} (x,y)\dd x \dd y \geq \iint    w_{1,B} (x,y)\dd x \dd y + \widehat{g\omega}(0)\int \chi_B^2 - C \frac{R\delta^2}{\ell_{\min}^2} \int \chi_B^2.
\end{align}
\end{lemma}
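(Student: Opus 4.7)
Both bounds follow from the same change of variables $z = x-y$ to rewrite
$$\iint w_{j,B}(x,y)\,\dd x\,\dd y = \int W_j^s(z)\,\rho_B(z)\,\dd z,\qquad \rho_B(z) := \int \chi_B(y+z)\chi_B(y)\,\dd y,$$
for $j \in \{1,2\}$, and then to exploit the radiality of $g$ and $g\omega$ together with the even symmetry $\rho_B(-z) = \rho_B(z)$ in order to replace $W_j^s$ by $g$ (respectively $g + g\omega$) and $\rho_B(z)$ by $\rho_B(0) = \int \chi_B^2$.

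For \eqref{eq:step_techn_smallIntw:result}, the plan is to split
$$\iint w_{1,B}\,\dd x\,\dd y - 8\pi\delta \int\chi_B^2 = \int (W_1^s(z) - g(z))\rho_B(z)\,\dd z + \int g(z)(\rho_B(z) - \rho_B(0))\,\dd z,$$
using $\int g = 8\pi\delta$ from \eqref{eq:IntegralScattLength}. Since $\chi*\chi$ is smooth, even, and equals $\Vert\chi\Vert_2^2 = 1$ at the origin, Taylor expansion gives the pointwise bound $|W_1^s(z) - g(z)| \leq C g(z) (|z|/(d\ell))^2$ on $\supp g \subset B(0,R)$; combined with the Cauchy--Schwarz bound $\rho_B(z) \leq \rho_B(0) = \int\chi_B^2$, this controls the first integral with an error of order $\delta R^2/\ell_{\min}^2 \int \chi_B^2$. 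For the second integral, the evenness of $\rho_B$ kills the first-order term in a Taylor expansion at the origin, and the symmetric second difference identity followed by an integration by parts in $y$ yields $|\rho_B(z) - \rho_B(0)| \leq \tfrac{1}{2}|z|^2 \Vert \nabla\chi_B\Vert_{L^2}^2$. Together with $\int g |z|^2\,\dd z \leq C R^2 \delta$ and the Poincar\'e-type inequality $\Vert\nabla\chi_B\Vert^2_{L^2} \leq C\ell_{\min}^{-2}\int\chi_B^2$ (derived from the explicit product form $\chi_B(x) = \chi(x/\ell)\chi(x/(d\ell) - u)$ and the smoothness and compact support of $\chi$), this gives the announced bound.

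For \eqref{eq:step_techn_smallIntw:result2}, observe that $W_2^s - W_1^s$ equals $(g\omega)^s(z) := g(z)\omega(z)/[(\chi*\chi)(z/\ell)(\chi*\chi)(z/(d\ell))]$. Since $0 \leq \omega \leq 1$ on $\supp g$ and each factor in the denominator is at most $1$, the function $(g\omega)^s - g\omega$ is nonnegative and can be dropped, yielding
$$\iint (w_{2,B} - w_{1,B})\,\dd x\,\dd y \geq \int g\omega(z)\rho_B(z)\,\dd z = \widehat{g\omega}(0)\int\chi_B^2 + \int g\omega(z)(\rho_B(z)-\rho_B(0))\,\dd z.$$
The remaining error is then controlled by the same second-order Taylor plus Poincar\'e argument applied to the radial function $g\omega$, using $\int g\omega \leq C\delta$ from \eqref{eq. bound on gw_0}.

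The main technical obstacle is verifying the Poincar\'e-type inequality $\Vert \nabla\chi_B\Vert_{L^2}^2 \leq C\ell_{\min}^{-2}\int\chi_B^2$ uniformly across all translates $u$, including boundary boxes for which $B = B_u \cap \Lambda$ has a side shorter than $d\ell$; this is essentially a scale-invariance statement for the one-dimensional Dirichlet energy of $\chi$, applied in each coordinate direction using the product structure of $\chi_B$.
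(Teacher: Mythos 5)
Your proof is correct and follows essentially the same decomposition as the paper's: isolate the error $W_1^{\rm s}-g$ using \eqref{eq:smallW_1approx_g}, then control the localization error by exploiting radial symmetry (of $W_1^{\rm s}$, $g$, $g\omega$) together with the evenness of the correlation function and a second-order Taylor expansion. The only real difference is the technical norm you choose to estimate the localization error: you bound $\|\nabla^2\rho_B\|_\infty\leq\|\nabla\chi_B\|_{L^2}^2$ by integration by parts plus Cauchy--Schwarz and then invoke a Poincar\'e-type $L^2$ bound on $\nabla\chi_B$, while the paper works directly with the $L^\infty$ Hessian bound \eqref{eq:bound_loc_smallbox_M} together with $|B|^{-1}(\int\chi_B)^2\leq\int\chi_B^2$.

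The ``main technical obstacle'' you identify is in fact no obstacle: your Poincar\'e-type inequality follows immediately from the paper's cited bound. Since $\chi_B$ vanishes on the boundary of its support, integration by parts and Cauchy--Schwarz give
\begin{equation*}
\|\nabla\chi_B\|_{L^2}^2=-\int\chi_B\,\Delta\chi_B\leq C\,\|\nabla^2\chi_B\|_\infty\int\chi_B\leq C\,\frac{1}{|B|\ell_{\min}^2}\Big(\int\chi_B\Big)^2\leq\frac{C}{\ell_{\min}^2}\int\chi_B^2,
\end{equation*}
which is exactly what you need, using \eqref{eq:bound_loc_smallbox_M} and $(\int\chi_B)^2\leq|B|\int\chi_B^2$. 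Thus there is no need for the separate verification via the explicit product form of $\chi$ that you sketch at the end; your approach reduces to the same input as the paper's. For \eqref{eq:step_techn_smallIntw:result2}, dropping the nonnegative difference $(g\omega)^{\rm s}-g\omega$ for a lower bound and applying the same Taylor argument to $g\omega$ with $\int g\omega\le C\delta$ is the right move; note that this naturally produces an error of order $R^2\delta/\ell_{\min}^2$, which matches the first bound and is (apart from the precise powers of $R$ and $\delta$, where the paper's displayed exponents look like a minor typo) what is needed downstream.
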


\begin{proof}
Since $8\pi \delta= \int g$, and thanks to \eqref{eq:smallW_1approx_g}, we can write the inequality
\begin{equation}\label{eq:step_techn_smallIntw}
\bigg|\iint (W^s_1(x) -g(x) )\chi_B^2(y) \dd x \dd y  \bigg| \leq C\delta \Big(\frac{R}{\ell_{\text{min}}} \Big)^2 \int \chi_B^2,
\end{equation}
where we used $\ell_{\text{min}}\leq d\ell$.
By a Taylor expansion for the localization function and the fact that $W$ is spherically symmetric and \eqref{eq:smallW_1control}, we have, on the other hand, 
\begin{align}
\left|\iint  \, w_{1,B}(x,y)\dd x \dd y - \int \, W^s_1(x) \dd x \int \chi_B^2\right| &\leq  C R^2\|\nabla^2\chi_B\|_{\infty} \int W_1^s (x) \dd x \int \chi_B \nonumber \\
&\leq  C \Big( \frac{R}{\ell_{\text{min}}}\Big)^2  \delta  \int \chi_B^2,\label{eq:step_techn_smallIntw2}
\end{align}
and where we used that $|B|^{-1} (\int \chi_B)^2 \leq  \int \chi_B^2$ and the bound \eqref{eq:bound_loc_smallbox_M}.

Then inequality \eqref{eq:step_techn_smallIntw:result} follows by \eqref{eq:step_techn_smallIntw} and \eqref{eq:step_techn_smallIntw2}.
The inequality \eqref{eq:step_techn_smallIntw:result2} follows from a very similar argument.
\end{proof}

Combining the results of Lemma~\ref{lem:potentialsmalllowerbound} and~\ref{lem:kinsmallboxlower}, we deduce that the Hamiltonian on the small box has the following lower bound, which is coherent with the main order of the energy expansion.

\begin{theorem}\label{thm:lowerboundSmalln}
Assume the conditions from Appendix~\ref{app:parameters}, then for any box $B$ we have the following bound on the $N$'th sector
\begin{equation}
\mathcal{H}_{B} (\rho_{\mu})|_N \geq \Big(  \frac{1}{4} \Big( \rho_{\mu} - \frac{N}{|B|}\Big)^2- \frac{1}{2}\rho_{\mu}^2  \Big) \iint  w_{1,B} 
+ \frac{1}{2} \mathcal Q_4^{\rm{ren,s}} + \mathcal{E}_2(N) + \mathcal{E}_4(N),   \label{eq:lowersmallnfixed}
\end{equation}
with $\mathcal{E}_2$ and $\mathcal{E}_4$ defined in \eqref{eq:errors_thmsmall}.
\end{theorem}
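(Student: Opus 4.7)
The plan is to combine Lemmas~\ref{lem:potentialsmalllowerbound},~\ref{lem:kinsmallboxlower} and~\ref{lem:gomegasmallapprox}, ensuring that the $A_2$ terms cancel between kinetic and potential sides and that the $\widehat{g\omega}(0)$ contributions from the Bogoliubov integral cancel against the leading $w_{2,B}$ contribution from the potential. First I would write
\begin{equation*}
\mathcal{H}_B(\rho_\mu)\big|_N = (1-\varepsilon_N)\Big(Q_B \chi_B [\sqrt{-\Delta}-(ds\ell)^{-1}]_+^2 \chi_B Q_B\Big) + (1-\varepsilon_N)\tfrac{\varepsilon_T}{2}(1+\pi^{-2})(d\ell)^{-2} n_{+,B} + \text{pot},
\end{equation*}
apply Lemma~\ref{lem:potentialsmalllowerbound} to the potential, and apply Lemma~\ref{lem:kinsmallboxlower} to the kinetic piece together with $A_2$ coming from the potential. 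The $A_2$ terms cancel by design, leaving
\begin{equation*}
\mathcal{H}_B(\rho_\mu)\big|_N \geq A_0 - \tfrac{1}{2}\widehat{g\omega}(0)\tfrac{N(N+1)}{|B|^2}\int \chi_B^2 + \tfrac{1}{2}\mathcal{Q}_4^{\mathrm{ren,s}} + \mathcal{E}_2(N) + \mathcal{E}_4(N) + (\text{gap} - \text{errors in }n_{+,B}).
\end{equation*}

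Next I would use Lemma~\ref{lem:gomegasmallapprox} to expand $\iint w_{2,B} = \iint w_{1,B} + \widehat{g\omega}(0)\int\chi_B^2 + O(R\delta^2 \ell_{\min}^{-2}\int\chi_B^2)$ inside $A_0$. After this substitution, and after substituting $n_{0,B}=N-n_{+,B}$ on the $N$-particle sector, the $\widehat{g\omega}(0)\,N(N+1)/(2|B|^2)$ term arising from $w_{2,B}$ exactly cancels the corresponding $-\frac{1}{2}\widehat{g\omega}(0)N(N+1)/|B|^2$ term from Lemma~\ref{lem:kinsmallboxlower} (up to an $n_{+,B}$-correction). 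An elementary algebraic computation, treating $N/|B|$ and $(N-1)/|B|$ as equal up to $O(|B|^{-1})$ contributions that can be absorbed into $\mathcal{E}_2(N)$, yields
\begin{equation*}
\frac{N(N+1)}{2|B|^2} - \frac{\rho_\mu N}{|B|} - \frac{1}{4}\Big(\rho_\mu - \tfrac{N-1}{|B|}\Big)^2 = \frac{1}{4}\Big(\rho_\mu-\tfrac{N}{|B|}\Big)^2 - \frac{1}{2}\rho_\mu^2 + O(|B|^{-1}),
\end{equation*}
which is the coefficient of $\iint w_{1,B}$ claimed in \eqref{eq:lowersmallnfixed}. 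Using \eqref{eq:step_techn_smallIntw:result} then shows that the $O(|B|^{-1})$ correction times $\iint w_{1,B}$ is absorbed into $\mathcal{E}_2(N)$.

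All remaining contributions are operator corrections proportional to $n_{+,B}$: the explicit $-C\delta(\rho_\mu + n_{0,B}/|B|)\,n_{+,B}$ from Lemma~\ref{lem:potentialsmalllowerbound}, the $-C\delta(N+1)/|B|\,n_{+,B}$ from Lemma~\ref{lem:kinsmallboxlower}, and the $n_{+,B}$-corrections generated by replacing $n_{0,B}(n_{0,B}+1)$ by $N(N+1)$ and $(N-1)/|B|-n_{+,B}/|B|$-type expansions. Each of these is bounded by $C\delta(\rho_\mu + N/|B|)\,n_{+,B}$. The main obstacle, and the place where the choice of parameters in Appendix~\ref{app:parameters} is used, is absorbing all such contributions into the Neumann spectral gap $(1-\varepsilon_N)\frac{\varepsilon_T}{2}(1+\pi^{-2})(d\ell)^{-2}\,n_{+,B}$, which amounts to the requirement $\delta(\rho_\mu + N/|B|)(d\ell)^2 \ll \varepsilon_T$; on the $N$-sector with $N/|B| \lesssim \rho_\mu$ (outside which the statement is trivial since then the $\frac{1}{4}(\rho_\mu - N/|B|)^2$ term already dominates), this reduces to the standing assumption $\rho_\mu \delta (d\ell)^2 \ll \varepsilon_T$. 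Once this absorption is carried out, the leftover pieces constitute precisely $\mathcal{E}_2(N) + \mathcal{E}_4(N)$ together with the positive $\frac{1}{2}\mathcal{Q}_4^{\mathrm{ren,s}}$, giving \eqref{eq:lowersmallnfixed}.
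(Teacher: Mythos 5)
Your plan matches the paper's proof step for step: both combine Lemmas~\ref{lem:potentialsmalllowerbound} and~\ref{lem:kinsmallboxlower} (which already arranges the cancellation of $A_2$), absorb the $n_{+,B}$-proportional corrections into the $\frac{\varepsilon_T}{2}(1+\pi^{-2})(d\ell)^{-2}\,n_{+,B}$ part of $\mathcal{T}_B$ using \eqref{eq:rel_T2comm}, expand $\iint w_{2,B}$ inside $A_0$ via \eqref{eq:step_techn_smallIntw:result2} so that the $\widehat{g\omega}(0)\,N(N+1)/(2|B|^2)$ terms cancel, and then read off the coefficient of $\iint w_{1,B}$ by elementary algebra. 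Your algebraic identity is exactly the one the paper uses (the paper has an evident typographical sign slip in the intermediate display, writing $-\frac{N^2}{2|B|^2}$ where $+\frac{N(N+1)}{2|B|^2}$ is meant, but the final coefficient agrees with yours).

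One small imprecision: you route the $O(|B|^{-1})$ algebraic offsets coming from replacing $(N-1)/|B|$ by $N/|B|$ and $n_{0,B}(n_{0,B}+1)$ by $N(N+1)$ into $\mathcal{E}_2(N)$. That does not fit: multiplied by $\iint w_{1,B}\simeq 8\pi\delta\int\chi_B^2$, these offsets produce a term of order $\rho_\mu\delta$ with no factor of $n_{+,B}$ and no factor of $N^2/|B|^2$, so it cannot be dominated by $\mathcal{E}_2(N)$ when $N$ is small. The portion of that offset depending on $n_{+,B}$ does get absorbed into the Neumann gap (as you and the paper both do), while the purely $O(|B|^{-1})$, $n_{+,B}$-independent remainder is exactly the separate error $C\rho_\mu\delta\frac{1}{|B|}\int\chi_B^2$ that the paper flags explicitly in the proof of Corollary~\ref{cor:smalllowerbound} (``A further error term of order $C\rho_\mu\delta\frac{1}{|B|}\int\chi_B^2$ is created by the substitutions of $n''\pm 1$ by $n''$''), not inside $\mathcal{E}_2(N)$. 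This does not change the overall plan, but your attribution of the error to $\mathcal{E}_2(N)$ should be corrected to match where it actually lands.
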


\begin{proof}
The combination of Lemmas~\ref{lem:potentialsmalllowerbound},~\ref{lem:kinsmallboxlower}
gives
\begin{align*}
\mathcal{H}_B (\rho_{\mu}) &\geq
\sum_{j=1}^n Q_{B,j}\Big(
 \frac{\varepsilon_T}{2} (1 + \pi^{-2})\frac{1}{(d\ell)^2}  \Big) Q_{B,j}
+A_0  
-\frac{1}{2}\widehat{g\omega}(0) \frac{N(N+1)}{|B|^2}  \int \chi_B^2 + \frac{1}{2} \mathcal Q_4^{\rm{ren,s}}\\
&\quad+\mathcal{E}_2(N) + \mathcal{E}_4(N) - C \delta \rho_{\mu} n_{+,B}.
\end{align*}
We observe that we can choose a constant $C'>0$ such that 
\begin{equation}
\sum_{j=1}^N Q_{B,j}\Big(
\frac{\varepsilon_T}{2} \frac{1}{(d\ell)^2}  \Big) \geq C' \rho_{\mu} \delta n_{+,B},
\end{equation}
where we used \eqref{eq:rel_T2comm} and, choosing the right $C'$, we can cancel the last term with $n_{+,B}$ for a lower bound. The same can be said for the errors produced by replacing $n_0 = N- n_{+}$ by $N$.
By Lemma~\ref{lem:gomegasmallapprox} we have
\begin{align*}
A_0 &-\frac{1}{2}\widehat{g\omega}(0) \frac{N (N+1)}{|B|}  \int \chi_B^2  \\
&\geq \Big( -\frac{N^2}{2|B|^2} - \Big( \rho_{\mu}\frac{N}{|B|} + \frac{1}{4} \Big( \rho_{\mu} - \frac{N}{|B|}\Big)^2 \Big)  \Big) \iint  w_{1,B} -  C \frac{N^2}{|B|^2}\frac{R\delta^2}{\ell_{\min}^2}  \int \chi_B^2  \\
&\geq \Big(  \frac{1}{4} \Big( \rho_{\mu} - \frac{N}{|B|}\Big)^2- \frac{1}{2}\rho_{\mu}^2  \Big) \iint  w_{1,B} - C \frac{N^2}{|B|^2}\frac{R\delta^2}{\ell_{\min}^2}  \int \chi_B^2,
\end{align*}
and this gives the result since the last term can be reabsorbed in the $\mathcal{E}_2$ term.
\end{proof}

We deduce the following corollary.

\begin{corollary}\label{cor:smalllowerbound}
Assume $B$ is a small box with shortest side length $\ell_{\text{min}} \geq \frac{d\ell}{10 }$ and that the conditions of Appendix~\ref{app:parameters} hold true. Then we have the following lower bound
\begin{align*}
\mathcal{H}_{B}(\rho_{\mu}) \geq  - \frac{1}{2}\rho_{\mu}^2  \iint   \,  w_{1,B}(x,y)\dd x \dd y   -  C \rho_{\mu}^2 \delta^2 (d s K_{\ell})^{-2} \int\chi_{B}^2 -C \rho_{\mu} \delta \frac{1}{|B|} \int \chi_{B}^2. 
\end{align*}
\end{corollary}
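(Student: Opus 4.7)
My plan is to work sector by sector on $\mathscr{F}_s(L^2(B))=\bigoplus_N L^2_s(B^N)$ and show that the per-sector lower bound provided by Theorem~\ref{thm:lowerboundSmalln} yields, uniformly in $N$, the claimed operator inequality. Applying Theorem~\ref{thm:lowerboundSmalln} and discarding the non-negative term $\tfrac12 \mathcal{Q}_4^{\rm ren,s}$ gives
\begin{equation*}
\mathcal{H}_B(\rho_\mu)\big|_N \geq \Big(\tfrac{1}{4}(\rho_\mu - N/|B|)^2 - \tfrac{1}{2}\rho_\mu^2\Big)\iint w_{1,B}(x,y)\,\dd x\,\dd y + \mathcal{E}_2(N) + \mathcal{E}_4(N),
\end{equation*}
and Lemma~\ref{lem:gomegasmallapprox} lets me replace $\iint w_{1,B}$ by $8\pi\delta\int\chi_B^2$ inside the positive quadratic piece at the cost of a correction of order $\delta R^2/(d\ell)^2\int\chi_B^2$, which is absorbed by the parameter choices. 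The main negative term $-\tfrac{1}{2}\rho_\mu^2\iint w_{1,B}$ is preserved as in the statement.

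The heart of the proof is then to dominate $\mathcal{E}_2(N)+\mathcal{E}_4(N)$ by the positive quadratic term $2\pi\delta(N/|B|-\rho_\mu)^2\int\chi_B^2$. Setting $a:=N/|B|-\rho_\mu$, I expand $(N/|B|)^k=(\rho_\mu+a)^k$ in each error and use $|a|\leq \tfrac{a^2}{4\rho_\mu}+\rho_\mu$ and (when $|a|\leq\rho_\mu$) $a^4\leq \rho_\mu^2 a^2$. Each $a$-dependent piece has a prefactor which—thanks to the parameter conditions of Appendix~\ref{app:parameters}, in particular $R^2/(d\ell)^2\ll 1$, $\delta\,|\log(dsK_\ell)|\ll 1$ and $dsK_\ell$ bounded (the latter following from $d\ell\ll\ell_\delta$)—is small relative to the quadratic coefficient of order $\delta$, and so is absorbed into the positive quadratic. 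The surviving $a$-independent parts produce exactly the two terms claimed: the second contribution of $\mathcal{E}_4(N)$ yields $\delta(ds\ell)^{-2}\rho_\mu\int\chi_B^2\simeq \rho_\mu^2\delta^2(dsK_\ell)^{-2}\int\chi_B^2$, while the constant parts from $\mathcal{E}_2(N)$ and the $\delta R^2/(d\ell)^2$ correction fit into the term $\rho_\mu\delta|B|^{-1}\int\chi_B^2$ (or into the $(dsK_\ell)^{-2}$ term), using $1/|B|\sim (d\ell)^{-2}\simeq \rho_\mu Y/(dK_\ell)^2$.

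The main obstacle is the quartic growth $(N/|B|)^4$ in the first part of $\mathcal{E}_4$, which is not absorbed by a quadratic for arbitrary $N$. I resolve this by a case split. In the moderate regime $N/|B|\leq 2\rho_\mu$ one has $|a|\leq \rho_\mu$, so $a^4\leq \rho_\mu^2 a^2$ and the quartic piece contributes a constant of order $\delta^4(ds\ell)^4\rho_\mu^4\int\chi_B^2\simeq (dsK_\ell)^4\rho_\mu^2\delta^2\int\chi_B^2$, which is controlled by the final error because $dsK_\ell$ is bounded. In the large regime $N/|B|>2\rho_\mu$, instead of fully discarding $\mathcal{Q}_4^{\rm ren,s}$ I retain a fraction of it; since $\mathcal{Q}_4^{\rm ren,s}$ is essentially the full repulsive potential energy, on the $N$-th sector it is of size $(N/|B|)^2\delta|B|$, which overwhelms both the quartic negative piece and the negative main term $-\tfrac12\rho_\mu^2\iint w_{1,B}$, so the claimed bound holds trivially. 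Combining the two regimes produces the stated uniform-in-$N$ inequality and hence the operator bound on $\mathscr{F}_s(L^2(B))$.
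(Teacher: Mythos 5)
Your argument works in the regime $N/|B|\leq 2\rho_\mu$, where the positive quadratic $\tfrac14(\rho_\mu - N/|B|)^2\iint w_{1,B}$ from Theorem~\ref{thm:lowerboundSmalln} absorbs the negative errors; this part is consistent with what the paper does, and your accounting of the surviving $\mathcal{E}_2$ and $\mathcal{E}_4$ pieces matches the two error terms in the Corollary. However, the large-$N$ branch of your case split has a genuine gap. You assert that $\mathcal Q_4^{\rm ren,s}$ is ``essentially the full repulsive potential energy'' and hence of size $(N/|B|)^2\delta |B|$, which would dominate the quartic part of $\mathcal E_4(N)$ and the negative main term. But $\mathcal Q_4^{\rm ren,s}$ is only a \emph{nonnegative} operator of the form $\tfrac12\sum_{i\neq j} A_{ij}^* w_B A_{ij}$; it is not bounded below by anything proportional to $(N/|B|)^2\delta |B|$ as an operator. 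Its expectation on a condensate state is indeed large, but for a general state — in particular for Jastrow-like states, or states with excitations chosen to make $A_{ij}\Psi$ small — $\langle\mathcal Q_4^{\rm ren,s}\rangle$ can be made small while the kinetic cost remains bounded. Since the purpose of the Corollary is a lower bound uniform over all states, ``retaining a fraction of $\mathcal Q_4^{\rm ren,s}$'' gives you nothing beyond nonnegativity, and the quartic term $(N/|B|)^4$ in $\mathcal E_4$ is left uncontrolled for large $N$.

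The paper closes this gap differently: it uses superadditivity of the ground-state energy under particle splitting. Writing $e_B(n,\rho_\mu)$ for the $n$-particle ground state energy of $\mathcal H_B(\rho_\mu)$ and using $v\geq 0$, one has $e_B(n,\rho_\mu)\geq m\,e_B(n',\rho_\mu) + e_B(n'',\rho_\mu)$ with $n=mn'+n''$ and $n''<n'$. Choosing $n'=3\rho_\mu|B|$ makes the coefficient $\tfrac14(\rho_\mu - n'/|B|)^2 - \tfrac12\rho_\mu^2 = \tfrac12\rho_\mu^2 > 0$, so with the parameter constraints the bound of Theorem~\ref{thm:lowerboundSmalln} gives $e_B(n',\rho_\mu)\geq 0$, and hence $e_B(n,\rho_\mu)\geq e_B(n'',\rho_\mu)$ with $n''< 3\rho_\mu|B|$. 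Applying Theorem~\ref{thm:lowerboundSmalln} to $n''$ then lands in exactly the moderate regime your argument handles. To fix your proof you would need to add this superadditivity step (or an equivalent mechanism) for the regime $N/|B|>2\rho_\mu$, rather than relying on a nonexistent operator lower bound for $\mathcal Q_4^{\rm ren,s}$.
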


\begin{proof}
We split the particles in $m$ subsets of $n'$ particles and a remaining group of $n''$, with $n'' < n' < n$. If we ignore the positive interactions between the subsets, and denoting by $e_B(n, \rho_{\mu})$ the ground state energy of $\mathcal{H}_B(\rho_{\mu})$ restricted to states with $n$ particles in the box $B$, then
\begin{equation}
e_B(n, \rho_{\mu}) \geq  m e_B(n', \rho_{\mu}) + e_B(n'' , \rho_{\mu}).
\end{equation}

From formula \eqref{eq:lowersmallnfixed} in Theorem~\ref{thm:lowerboundSmalln} applied for $n'$ in place of $n$ and, choosing $n' = 3 \rho_{\mu} |B|$, we see that 
the first term becomes, thanks to Lemma~\ref{lem:gomegasmallapprox}
\begin{equation}
\frac{1}{2}  \rho^2_{\mu} \iint w_{1,B}  \geq 4 \pi\rho^2_{\mu} \delta \Big( 1- C \Big(\frac{R}{\ell_{\text{min}}}\Big)^2\Big)  \int \chi_B^2.  
\end{equation} 
From the following controls on the error terms 
\begin{align}
\mathcal{E}_2(n') &\leq C \rho^2_{\mu} \delta^2  (d K_\ell)^{-2} \int \chi_{B}^2, \label{eq:technerrosmallcond1}\\
\mathcal{E}_4(n') &\leq C \rho^2_{\mu} \delta^2 (d s K_{\ell})^{-2} \int \chi_B^2, \label{eq:technerrosmallcond2}\\
C \rho_{\mu}^2\delta \Big(\frac{R}{\ell_{\text{min}}}\Big)^2\int \chi_B^2
 &\leq C\rho_{\mu}^2 \delta\frac{R^2}{(d\ell)^2}\int \chi_B^2 \leq C \rho_{\mu}^2 \delta^2 (d K_{\ell})^{-2} \int \chi_B^2, 
\end{align}
we see that the first term is the leading term of the energy. Since it is clearly positive, we obtain that with the aforementioned choice of $n'$, we have $e_{B}(n', \rho_{\mu}) \geq 0$ and, then, using the previous equality, we can state that 
\begin{equation}
e_B(n, \rho_{\mu}) \geq e_B(n'', \rho_{\mu}).
\end{equation} 

The Corollary follows using again Theorem~\ref{thm:lowerboundSmalln} with $n''$ in place of $n$ to obtain the lower bound and using \eqref{eq:technerrosmallcond1} and \eqref{eq:technerrosmallcond2} for $n''$ to control the errors, using that $s^{-1}\gg 1$ to obtain one of the error terms in the statement. A further error term of order \[C \rho_{\mu} \delta \frac{1}{|B|} \int \chi_B^2,\] is created by the substitutions of the terms $n'' \pm 1$ by $n''$.
\end{proof}

We are finally ready to use the lower bound on the small box Hamiltonian to obtain a bound on the number of excited particles in the large box, result stated in the Theorem below. By an abuse notation, from now on, the operators $n, n_+, n_0$ start again to denote the number operators in the large box.

\begin{theorem}\label{thm:small_largeHam_lowerbound}
We have the following lower bound for the large box Hamiltonian
\begin{align}
\mathcal{H}_{\Lambda}(\rho_{\mu}) \geq -4 \pi \rho_{\mu}^2 \ell^2 Y \Big(1 - \frac 1 2 Y |\log Y| \Big) + \frac{b}{2 \ell^2} n_+,\label{eq:roughlowerboundlarge}
\end{align}
and if there exists a normalized $\Psi \in \mathscr{F}_s(L^2(\Lambda))$ with $n$ particles in $\Lambda$ such that \eqref{eq:condensationaprioricondition} holds:
\begin{equation}
\langle \mathcal{H}_{\Lambda}(\rho_{\mu})\rangle_{\Psi} \leq -4 \pi \rho_{\mu}^2\ell^2 Y (1- C K_B^2 Y|\log Y|), 
\end{equation}
 then the bound \eqref{eq:condensationestimate} for the number of excitations holds:
\begin{equation}
\langle n_+\rangle_{\Psi} \leq C n K_B^2 K_{\ell}^2 Y |\log Y|.
\end{equation}
\end{theorem}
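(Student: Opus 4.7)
\textbf{Proof proposal for Theorem~\ref{thm:small_largeHam_lowerbound}.}

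The plan is to start from the sliding localization inequality \eqref{eq:hamiltloclargesmallborder}, which already delivers the spectral-gap term $(1-\varepsilon_N)\frac{b}{\ell^2}\sum_{j=1}^N Q_{\Lambda,j}$ and reduces the problem to controlling the integral of the small-box Hamiltonians $\mathcal H_{B_u}(\rho_\mu)$ over $\Lambda_{1/5}$ and of $\mathcal H_{B_u}(4\rho_\mu)$ over the frame $\Lambda_{1/10}\setminus\Lambda_{1/5}$. Applying Corollary~\ref{cor:smalllowerbound} pointwise in $u$ turns each small-box operator into a scalar two-body integral $-\tfrac{1}{2}\rho_\mu^{2}\iint w_{1,B_u}$ (resp.\ with $4\rho_\mu$) plus two explicit error terms.

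Next I would integrate the leading term in $u$. By the very definition of $W_1^{\rm s}$ and $W_1$, one has the chain of identities $W_1^{\rm s}(z)(\chi\ast\chi)(z/(d\ell))=W_1(z)$ and $W_1(z)(\chi\ast\chi)(z/\ell)=g(z)$, so a Fubini argument gives
\begin{equation*}
\int_{\mathbb R^2}\dd u\iint w_{1,B_u}(x,y)\,\dd x\,\dd y \;=\; \ell^{2}\widehat g_0 \;=\; 8\pi\delta\,\ell^{2}.
\end{equation*}
Restricting the $u$-integration to $\Lambda_{1/5}$ loses only a boundary layer of relative width $O(d)$; this loss is compensated (with room to spare) by the frame contribution coming from $\Lambda_{1/10}\setminus\Lambda_{1/5}$ where the density $4\rho_\mu$ is large enough to dominate it. The error terms from Corollary~\ref{cor:smalllowerbound}, namely $C\rho_\mu^{2}\delta^{2}(dsK_\ell)^{-2}\int\chi_{B_u}^{2}$ and $C\rho_\mu\delta|B|^{-1}\int\chi_{B_u}^{2}$, integrate to quantities of size $C\rho_\mu^{2}\ell^{2}\delta^{2}(dsK_\ell)^{-2}$ and $C\rho_\mu\delta\,d^{-2}$ respectively, both of which are smaller than $\rho_\mu^{2}\ell^{2}Y^{2}|\log Y|$ thanks to the parameter choices in Appendix~\ref{app:parameters}. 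Altogether this yields the lower bound $-4\pi\rho_\mu^{2}\ell^{2}\delta+ o(\rho_\mu^2\ell^2 Y^2|\log Y|)+(1-\varepsilon_N)\frac{b}{\ell^2}n_+$.

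The final step in the first claim is to translate $\delta=\delta_\mu$ into $Y$. Using \eqref{eq:def_delta_mu} one has $\delta_\mu = Y/(1+Y|\log Y|) = Y(1-Y|\log Y|)+O(Y^{3}|\log Y|^{2})$, hence
\begin{equation*}
-4\pi\rho_\mu^{2}\ell^{2}\delta_\mu \;\geq\; -4\pi\rho_\mu^{2}\ell^{2}Y\bigl(1-\tfrac12 Y|\log Y|\bigr)
\end{equation*}
for $\rho_\mu a^{2}$ small, while the remaining $(1-\varepsilon_N)\frac{b}{\ell^2}n_+$ accommodates the stated $\frac{b}{2\ell^{2}}n_+$ and still leaves spectral gap to absorb the small-box errors. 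This proves \eqref{eq:roughlowerboundlarge}. For \eqref{eq:condensationestimate}, we take the expectation of \eqref{eq:roughlowerboundlarge} in a state $\Psi$ satisfying the hypothesis \eqref{eq:condensationaprioricondition} and subtract the two bounds:
\begin{equation*}
\frac{b}{2\ell^{2}}\,\langle n_+\rangle_\Psi \;\leq\; 4\pi\rho_\mu^{2}\ell^{2}Y^{2}|\log Y|\bigl(C K_B^{2}-\tfrac12\bigr).
\end{equation*}
Multiplying by $\ell^{2}/b$ and using $\ell^{2}=K_\ell^{2}\rho_\mu^{-1}Y^{-1}$ together with $n\simeq \rho_\mu\ell^{2}=K_\ell^{2}Y^{-1}$ yields $\langle n_+\rangle_\Psi\leq C K_B^{2}K_\ell^{2} n Y|\log Y|$, which is the desired condensation estimate.

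The main obstacle will be the careful bookkeeping of the boundary layer $\Lambda_{1/10}\setminus\Lambda_{1/5}$: one must verify that the factor $4$ in the density there, together with the localization $\chi_{B_u}$, indeed makes the frame contribution negligible (not negative) at the level of precision $\rho_\mu^{2}\ell^{2}Y^{2}|\log Y|$, and that the expansion $\delta_\mu\mapsto Y(1-Y|\log Y|)$ is carried out with enough accuracy that the numerical prefactor $\tfrac12$ in the stated bound is achieved while the remainders are absorbed by the room provided by the parameters $d,s,\varepsilon_N,\varepsilon_T, K_\ell, K_B$ of Appendix~\ref{app:parameters}.
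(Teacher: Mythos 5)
Your high-level plan matches the paper's: start from the sliding localization \eqref{eq:hamiltloclargesmallborder}, apply Corollary~\ref{cor:smalllowerbound} pointwise in $u$, evaluate the leading scalar term by Fubini $\int_{\R^2}\!\dd u\iint w_{1,B_u}=8\pi\delta\ell^2$, expand $\delta_\mu$ in $Y$, and subtract from \eqref{eq:condensationaprioricondition}. The Fubini identity and the final algebra for \eqref{eq:condensationestimate} are stated correctly.

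However, your account of how the boundary layer is controlled is backwards, and this is the one nontrivial technical point in the proof. Restricting the $u$-integration from $\R^2$ to $\Lambda_{1/5}$ does not cause a ``loss'' that needs compensating: since $\iint w_{1,B_u}\geq 0$, shrinking the $u$-domain makes $-\tfrac12\rho_\mu^2\int_{\Lambda_{1/5}}\!\dd u\iint w_{1,B_u}$ \emph{less} negative, so one simply bounds it below by $-4\pi\rho_\mu^2\delta\ell^2$ after enlarging to $\R^2$. The frame term $\int_{\Lambda_{1/10}\setminus\Lambda_{1/5}}\mathcal H_{B_u}(4\rho_\mu)\,\dd u$ is an \emph{additional} negative contribution, and the factor $4$ in the density makes its leading part (scaling like $16\rho_\mu^2$) \emph{larger}, not smaller; it cannot be dominated by anything. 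What actually controls it is that for $u$ in the frame the box $B_u$ sits near the edge of $\Lambda$, where $|\chi_{B_u}|\leq C(\ell_{\min}/\ell)^M$, hence $\int_{\Lambda_{1/10}\setminus\Lambda_{1/5}}\int\chi_{B_u}^2\,\dd x\,\dd u\leq C(\ell_{\min}/\ell)^{2M}\ell^2\leq C d^{2M}\ell^2\leq C\delta\ell^2$ (see \eqref{eq:chiB_border} and \eqref{eq:localdep_param_small}). This polynomial smallness of the localization function near $\partial\Lambda$, with $M$ large, is the missing ingredient in your argument; without it the frame contribution would wreck the bound.

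A second, more minor gap: you write that the spectral gap $(1-\varepsilon_N)\frac{b}{\ell^2}n_+$ ``leaves room to absorb the small-box errors.'' The errors from Corollary~\ref{cor:smalllowerbound} (e.g.\ $C\rho_\mu^2\ell^2\delta^2(dsK_\ell)^{-2}$ and $C\rho_\mu\delta d^{-2}$) are $c$-numbers, not operators; a spectral-gap term proportional to $n_+$ vanishes on the condensate and therefore cannot control a state-independent constant. Those errors must instead be accommodated by the slack coming from the expansion $\delta_\mu=Y(1-Y|\log Y|)+O(Y^3|\log Y|^2)$ and, via \eqref{eq:K_Bsmallbox}, by the $K_B^2$ margin present in the hypothesis \eqref{eq:condensationaprioricondition} when the two bounds are subtracted.
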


\begin{proof}
We study the integration over $\Lambda_{1/10} \setminus\Lambda_{1/5}$ from formula \eqref{eq:hamiltloclargesmallborder}. By \cite[(C.6)]{FS} we have $|\chi_{B_u}| \leq C (\ell_{\text{min}} \ell^{-1})^{M}$ and then
\begin{equation}\label{eq:chiB_border}
\int_{\Lambda_{1/10}-\Lambda_{1/5}} \int \chi_{B_u}(x)^2 \dd x \dd u \leq C \Big( \frac{\ell_{\text{min}}}{ \ell} \Big)^{2M} (\ell d )^2 d^{-2} \leq C \Big( \frac{\ell_{\text{min}} }{\ell} \Big)^{2M} \ell^2.
\end{equation}
By the joint action of Corollary~\ref{cor:smalllowerbound} and Lemma~\ref{lem:gomegasmallapprox} we get 
\begin{equation}
\mathcal{H}_{B_u} (4\rho_{\mu}) \geq -C \rho_{\mu}^2 \delta \int \chi_{B_u}^2 - C \rho_{\mu} \delta \Big( \rho_{\mu}\delta (dsK_{\ell})^{-2} + \frac{1}{|B_u|} \Big)\int \chi_{B_u}^2, 
\end{equation}
and therefore, using \eqref{eq:chiB_border} and that $|B|= d^2 \ell^2$ we have 
\begin{equation}
\int_{\Lambda_{1/10}  \setminus \Lambda_{1/5}} \mathcal{H}_{B_u} (4\rho_{\mu}) \dd u \geq - C \rho_{\mu}^2\ell^2 \delta (1 +  \delta   (dsK_{\ell})^{-2} )\Big( \frac{\ell_{\text{min}}}{\ell}\Big)^{2M} - C \rho_{\mu} \delta d^{-2},
\end{equation}
Using the definition of $\ell_{\text{min}} = d\ell/10$ and the relations between the parameters \eqref{eq:K_Bsmallbox} we get
\begin{equation}\label{eq:localdep_param_small}
\Big( \frac{\ell_{\text{min}}}{\ell}\Big)^{2M}  \leq d^{2M} \leq \delta,\qquad 
\rho_{\mu} \delta d^{-2} \leq \rho_{\mu}^2 \ell^2  \delta^2 (K_{\ell} d)^{-2} \leq \rho_{\mu}^2 \ell^2  \delta^2 K_B^2, \qquad  
\end{equation}
which makes the integral coherent with the statement of the Theorem using the expansion of $\delta$,
\begin{equation}\label{eq:delta_expansion_small}
\delta \simeq Y - Y^2 |\log Y| + \mathcal{O}(Y^3 \vert \log Y \vert^2).
\end{equation} 
For the remaining integral in formula \eqref{eq:hamiltloclargesmallborder} we use Corollary~\ref{cor:smalllowerbound} and Lemma~\ref{lem:gomegasmallapprox} to get 
\begin{align*}
\int_{\Lambda_{1/10}}& \mathcal{H}_{B_u} (\rho_{\mu}) \dd u  \\
\geq& -\int_{\Lambda_{1/10}}  \left[\iint  \dd x\dd y \, \frac{1}{2}\rho_{\mu}^2   w_{1,B_u}(x,y) + C \rho_{\mu} \delta \Big( \rho_{\mu}\delta (dsK_{\ell})^{-2} + \frac{1}{|B_u|} \Big)\int \chi_{B_u}^2\right]  \dd u\\
\geq& -4\pi \rho_{\mu}^2 \ell^2 \delta -  C \rho_{\mu}^2 \ell^2 \delta^2 K_B^2,
\end{align*}
where we used \eqref{eq:relRrho_mu}, \eqref{eq:K_Bsmallbox} and 
\begin{equation*}
\iiint \, w_{1,B_u} (x,y) \dd x \dd y\dd u = 8 \pi \delta \ell^2, \qquad \iint  \chi_{B_u}(x)^2\dd u \dd x = \ell^2.
\end{equation*}
Collecting the previous estimates, together with \eqref{eq:hamiltloclargesmallborder} and the fact that $\varepsilon_N \leq \frac{1}{2}$, we finally get \eqref{eq:roughlowerboundlarge}, using the expansion \eqref{eq:delta_expansion_small} of $\delta$.

The proof of the bound on $n_+$ is proven noting that, joining together the a priori bound \eqref{eq:condensationaprioricondition} with the obtained lower bound we get 
\begin{equation}
\frac{b}{2 \ell^2} \langle n_+ \rangle_{\Psi} \leq C K_B^2\rho_{\mu} \ell^2 Y^2 |\log Y|,
\end{equation}
and conclude recalling that $\ell = \rho_{\mu}^{-1/2}Y^{-1/2} K_{\ell}  $.
\end{proof}

We follow now a similar strategy to obtain a lower bound for the large box Hamiltonian and get an a priori bound on the number of particles and a control on $\mathcal Q_4^{\rm{ren}}$ in the large box.

\begin{corollary}
If there exists a $n$-particles state $\Psi \in \mathscr{F}_s(L^2(\Lambda))$ such that \eqref{eq:condensationaprioricondition} holds, then the a priori bounds on $n$ and $\mathcal Q_4^{\rm{ren}}$ hold:
\begin{equation}
\left| \rho_{\mu} - \frac{n}{\ell^2}\right| \leq C K_B K_{\ell} \rho_{\mu}  Y^{1/2} |\log Y|^{1/2} ,\qquad \langle \mathcal Q_4^{\rm{ren}}\rangle_{\Psi} \leq C K_B^2 K_{\ell}^2\rho_{\mu}^2 \ell^2 Y^2 |\log Y|.
\end{equation}
\end{corollary}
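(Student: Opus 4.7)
The plan is to revisit the small-box lower bound derivation from Theorem~\ref{thm:small_largeHam_lowerbound}, reinstating the two non-negative contributions that were discarded on the way from Theorem~\ref{thm:lowerboundSmalln} to Corollary~\ref{cor:smalllowerbound}, namely the density-deviation term
\[
\tfrac{1}{4}\Big(\rho_\mu - \tfrac{N}{|B|}\Big)^2 \iint w_{1,B} \dd x\dd y
\]
and $\tfrac{1}{2}\mathcal{Q}_4^{\rm ren,s}$. Retaining a fraction of each and integrating over $u$ along the sliding localization upgrades the bound of Theorem~\ref{thm:small_largeHam_lowerbound} to
\[
\mathcal{H}_\Lambda(\rho_\mu) \geq -4\pi\rho_\mu^2\ell^2 Y\Big(1-\tfrac{1}{2}Y|\log Y|\Big) + \tfrac{b}{2\ell^2}n_+ + c\,\tfrac{\delta}{\ell^2}(\rho_\mu\ell^2 - n)^2 + c\,\mathcal{Q}_4^{\rm ren} - \mathcal{R},
\]
where $\mathcal{R}$ is an error operator dominated by $C\rho_\mu\delta n_+$ plus the error terms already absorbed into the proof of Theorem~\ref{thm:small_largeHam_lowerbound}. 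Comparing this with the a priori upper bound \eqref{eq:condensationaprioricondition} will then yield both claims.

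For the density estimate, since $u\in\Lambda_{1/5}$ ensures $B_u\subset\Lambda$ with $|B_u|=(d\ell)^2$, Lemma~\ref{lem:gomegasmallapprox} gives $\iint w_{1,B_u}\geq c\delta(d\ell)^2$. Integrating the retained quadratic term,
\[
\int_{\Lambda_{1/5}} \tfrac{1}{4}\Big(\rho_\mu - \tfrac{n_{B_u}}{|B_u|}\Big)^2\!\!\iint w_{1,B_u} \dd x \dd y\,\dd u \;\geq\; c\,\delta\,(d\ell)^{-2}\!\!\int_{\Lambda_{1/5}}\!(\rho_\mu|B_u|-n_{B_u})^2 \dd u.
\]
The identities $\int_{\mathbb{R}^2}\one_{B_u}(x)\dd u = 1$ for $x\in\Lambda$ and $\int_{\mathbb{R}^2}|B_u|\dd u = \ell^2$ imply $\int(\rho_\mu|B_u|-n_{B_u})\dd u = \rho_\mu\ell^2 - n$ (up to a boundary contribution from $\Lambda_0\setminus\Lambda_{1/5}$ controlled as in Lemma~\ref{lem:stripes_smallboxes}). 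Cauchy--Schwarz against the set $\Lambda_{1/5}$, whose area is of order $d^{-2}$, then produces $\int_{\Lambda_{1/5}}(\rho_\mu|B_u|-n_{B_u})^2\dd u \geq c\,d^2(\rho_\mu\ell^2-n)^2$, so the density-deviation contribution to the lower bound is exactly $c\,\delta\,\ell^{-2}(\rho_\mu\ell^2-n)^2$ as claimed.

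For the $\mathcal{Q}_4^{\rm ren}$ bound, we use the key identity
\[
\int_{\mathbb{R}^2}\chi_{B_u}(x)\chi_{B_u}(y)\dd u = \chi\!\big(\tfrac{x}{\ell}\big)\chi\!\big(\tfrac{y}{\ell}\big)(\chi*\chi)\!\big(\tfrac{x-y}{d\ell}\big),
\]
which, combined with the definition of $W^{\mathrm{s}}$, yields $\int w_{B_u}(x,y)\dd u = w(x,y)$. Integrating $\tfrac{1}{4}\mathcal{Q}_4^{\rm ren,s}$ therefore reproduces a positive multiple of $\mathcal{Q}_4^{\rm ren}$ in the large box, modulo the mismatch between the small-box projectors $P_{B_u},Q_{B_u}$ and the large-box $P,Q$; this mismatch is handled exactly as $A_1$ was in the proof of Lemma~\ref{lem:kinsmallboxlower}, producing only an additional error of order $\delta\rho_\mu\langle n_+\rangle_\Psi$. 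By the already established condensation bound \eqref{eq:condensationestimate}, this error is bounded by $C K_B^2 K_\ell^2\rho_\mu^2\ell^2 Y^2|\log Y|$, which is within the budget.

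Applying the improved lower bound to $\Psi$ and subtracting the assumed upper bound gives
\[
c\,\tfrac{\delta}{\ell^2}(\rho_\mu\ell^2-n)^2 + c\,\langle\mathcal{Q}_4^{\rm ren}\rangle_\Psi \;\leq\; C K_B^2\,\rho_\mu^2\ell^2 Y^2|\log Y|.
\]
The bound on $\mathcal{Q}_4^{\rm ren}$ follows directly. For the density, since $n$ is deterministic on the $n$-sector, we extract $(\rho_\mu\ell^2-n)^2 \leq C K_B^2\rho_\mu^2\ell^4 Y|\log Y|$, hence $|\rho_\mu - n/\ell^2| \leq C K_B\rho_\mu Y^{1/2}|\log Y|^{1/2}$, which is even stronger than the stated bound. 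The main obstacle in fleshing this out will be the second paragraph: controlling the boundary contribution from the frame $\Lambda_0\setminus\Lambda_{1/5}$ (where $|B_u|$ is no longer constant) well enough that Cauchy--Schwarz still delivers the global deviation $\rho_\mu\ell^2-n$, which should be routine using the $3\rho_\mu$ enhancement in \eqref{eq:hamiltloclargesmallborder} exactly as in the proof of Theorem~\ref{thm:small_largeHam_lowerbound}.
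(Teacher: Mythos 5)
You take a genuinely different route from the paper. The paper simply re-runs the potential splitting of Lemma~\ref{lem:potentialsmalllowerbound} and the Bogoliubov integral bound of Lemma~\ref{lem:kinsmallboxlower} \emph{directly at the large-box scale} $\Lambda$ (citing \cite[Appendix~E.2]{FS2}), now with the $n_+$-linear errors controlled by the freshly established bound \eqref{eq:condensationestimate}, and so reads off the lower bound \eqref{eq:Q_4largelowerbound} with $\tfrac12\mathcal Q_4^{\rm ren}$ and $2\pi(\rho_\mu-\tfrac{n}{\ell^2})^2\ell^2\delta$ already in large-box form; comparing against \eqref{eq:condensationaprioricondition} then gives both estimates at once. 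Your proposal instead tries to manufacture these large-box quantities by integrating the retained small-box positive terms over the sliding center $u$, and both halves of that plan have concrete gaps.

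For the density deviation, your Cauchy--Schwarz requires the weight $\iint w_{1,B_u}$ to be bounded below by $c\,\delta(d\ell)^2$ uniformly over $u\in\Lambda_{1/5}$. This is false: by Lemma~\ref{lem:gomegasmallapprox} we have $\iint w_{1,B_u}\sim\delta\int\chi_{B_u}^2$, and the factor $\chi(x/\ell)$ in $\chi_{B_u}(x)=\chi(x/\ell)\chi(x/(d\ell)-u)$ decays like $\big(\mathrm{dist}(x,\partial\Lambda)/\ell\big)^{M+2}$ with $M$ large, so the weight is suppressed by arbitrarily high powers already for boxes $B_u$ located a fixed fraction of $\ell$ from the boundary of $\Lambda$. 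The unweighted Cauchy--Schwarz on $\Lambda_{1/5}$ therefore does not apply; a weighted Cauchy--Schwarz with weight $\int\chi_{B_u}^2$ would instead recover the $\chi_\Lambda$-weighted count $\sum_j\chi_\Lambda(x_j)^2$ rather than $n$, and closing the gap $\sum_j\chi_\Lambda(x_j)^2 \to n$ is additional (doable, but not ``routine'' as stated) work using \eqref{eq:condensationestimate}. For $\mathcal Q_4^{\rm ren}$ the situation is worse: the identity $\int w_{B_u}\,\dd u=w$ does hold, but $\mathcal Q_4^{\rm ren,s}$ is dressed on both sides by the $u$-dependent small-box projectors $P_{B_u},Q_{B_u}$, and these are not the $\Lambda$-scale $P,Q$ --- the function $\one_{B_u}$ is not in $\mathrm{Ran}\,P$, and being zero-average on $B_u$ is unrelated to being zero-average on $\Lambda$. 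Integrating $\mathcal Q_4^{\rm ren,s}$ over $u$ therefore does \emph{not} produce a multiple of $\mathcal Q_4^{\rm ren}$: there is no cancellation of the $u$-dependent projectors analogous to the cancellation of the $u$-dependent potentials, and the analogy you draw with the $A_1$ term in the proof of Lemma~\ref{lem:kinsmallboxlower} does not apply ($A_1$ is an additive quadratic reorganization inside one fixed small box, not a change of projector length scale). As written, this step is not established, and this is why the paper re-derives the bound directly at the $\Lambda$-scale rather than integrating small-box quantities.
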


\begin{proof}
We observe that we have the following lower bound, reproducing analogous estimates for potential and kinetic energies from Lemmas~\ref{lem:potentialsmalllowerbound} and~\ref{lem:kinsmallboxlower} but adapted to the large box $\Lambda$ (for details, see \cite[Appendix E.2]{FS2}), where we estimate the $n_+$ contributions thanks to Theorem~\ref{thm:small_largeHam_lowerbound},
\begin{equation}\label{eq:Q_4largelowerbound}
\langle \mathcal{H}_{\Lambda} (\rho_{\mu}) \rangle_\Psi \geq \frac{1}{2} \langle \mathcal Q_4^{\rm{ren}}\rangle_\Psi - 4 \pi \rho^2_{\mu} \ell^2 \delta + 2 \pi \Big(\rho_{\mu} - \frac{n}{\ell^2} \Big)^2\ell^2 \delta  - CK_B^2 K_{\ell}^2 \rho_{\mu}^2 \ell^2\delta^2. 
\end{equation}
By the assumption, the expansion of $\delta$ in terms of $Y$ and \eqref{eq:Q_4largelowerbound} we get 
\begin{equation}
 \Big(\frac{n}{\ell^2} - \rho_{\mu} \Big)^2\ell^2 \delta + \langle \mathcal Q_4^{\rm{ren}} \rangle_\Psi \leq C K_B^2 K_{\ell}^2\rho_{\mu}^2 \ell^2 Y^2 |\log Y|,
\end{equation}
which implies the desired bounds.
\end{proof}

\section{Technical estimates for off-diagonal excitation terms}\label{app:proofd1d2}
We give here a proof of Lemma~\ref{lem:d1d2estimate}, bounding the terms $d_1^L$ and $d_2^L$ defined in \eqref{def.d1L} and \eqref{def.d2L}.
We are going to use the following dimension independent estimates which are proven in \cite[Corollary F.6]{FS2} in order to prove the technical lemma below. There exists $C>0$, such that, for any $\varphi \in \mathrm{Ran}\overline{Q}_{H}$,
\begin{equation}\label{eq:corF6}
 \|\Delta (\chi_{\Lambda}\varphi)\| \leq C \varepsilon_N^{-1/2} \frac{\widetilde{K}_H^2}{\ell^2}, \qquad \|\Delta^{\mathcal{N}} \varphi\| \leq C \varepsilon_N^{-1} \frac{\widetilde{K}_H^2}{\ell^2}.
\end{equation}

\begin{lemma}\label{lem:normQhigh}
If we assume the relations between the parameters in Appendix~\ref{app:parameters}, then there exists $C>0$ such that 
\begin{equation}
\|\overline{Q}_{H,x} w(x,y) \overline{Q}_{H,x}\| \leq C \varepsilon_N^{-1/2} \frac{\widetilde{K}_H^2}{\ell^2} \|v\|_1.
\end{equation}
\end{lemma}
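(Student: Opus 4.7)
First I would dispose of the outer $\chi_\Lambda(y)$ factor. Since $w(x,y)=\chi_\Lambda(y)\chi_\Lambda(x)W(x-y)$ and $\overline{Q}_{H,x}$ acts only in the $x$-variable, one has the pointwise identity
\begin{equation*}
[\overline{Q}_{H,x}\, w\, \overline{Q}_{H,x}\phi](x,y)=\chi_\Lambda(y)\,[A_y\phi(\cdot,y)](x),\qquad A_y:=\overline{Q}_H\,\chi_\Lambda\,W(\cdot-y)\,\overline{Q}_H,
\end{equation*}
so by Fubini the two-particle operator norm is bounded by $\|\chi\|_\infty\sup_y\|A_y\|_{L^2(\Lambda)\to L^2(\Lambda)}$ and the proof reduces to estimating the one-body operator $A_y$ uniformly in $y$. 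The second preliminary step trades $\|v\|_1$ for $\|W\|_1$: since $\supp v\subset B(0,R)$ with $R\ll\ell$ (Appendix~\ref{app:parameters}) and $\chi*\chi$ is continuous with maximum $\|\chi\|_2^2=1$ at the origin, there exists a universal $c>0$ with $\chi*\chi(x/\ell)\geq c$ on $\supp v$, hence $W(x)\leq c^{-1}v(x)$ pointwise and $\|W\|_1\leq c^{-1}\|v\|_1$.

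The heart of the argument is the bound on $\|A_y\|$. Because $M_y:=\chi_\Lambda W(\cdot-y)$ is pointwise nonnegative, $A_y=B_y^{*}B_y$ with $B_y:=M_y^{1/2}\overline{Q}_H$, so $\|A_y\|=\|B_y\|^{2}$. For any $\phi\in L^2(\Lambda)$, Hölder's inequality gives
\begin{equation*}
\|B_y\phi\|_2^{2}=\int W(x-y)\,\chi_\Lambda(x)\,|(\overline{Q}_H\phi)(x)|^{2}\,\dd x\leq \|W\|_1\,\|\chi_\Lambda^{1/2}\overline{Q}_H\phi\|_\infty^{2}.
\end{equation*}
The remaining $L^\infty$-norm is then controlled by the two-dimensional Sobolev inequality $\|f\|_\infty^{2}\leq C\|f\|_2\|\Delta f\|_2$ (proved by splitting $\|\widehat f\|_{L^1}=\int_{|\xi|<R}+\int_{|\xi|>R}$ and optimizing in $R$), applied to $f=\chi_\Lambda^{1/2}\overline{Q}_H\phi$. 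Expanding $\Delta f$ by the product rule produces three terms, each of which is bounded by combining \eqref{eq:corF6} with the interpolation $\|\nabla\overline{Q}_H\phi\|_2^{2}\leq\|\overline{Q}_H\phi\|_2\|\Delta^\mathcal{N}\overline{Q}_H\phi\|_2$; this yields $\|\chi_\Lambda^{1/2}\overline{Q}_H\phi\|_\infty^{2}\leq C\varepsilon_N^{-1/2}\widetilde{K}_H^{2}/\ell^{2}$, which together with the $\|W\|_1\leq c^{-1}\|v\|_1$ bound proves the lemma.

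The main obstacle is the sharp tracking of the $\varepsilon_N$-dependence in the last step. A naive use of the second bound in \eqref{eq:corF6} to control $\|\chi_\Lambda^{1/2}\Delta(\overline{Q}_H\phi)\|_2$ produces only $\varepsilon_N^{-1}$; obtaining the claimed $\varepsilon_N^{-1/2}$ requires exploiting the gain coming from the factor $\chi_\Lambda^{1/2}$ and combining both bounds of \eqref{eq:corF6} through the product rule for $\Delta(\chi_\Lambda^{1/2}\overline{Q}_H\phi)$. Verifying that the regularity of $\chi$ from Appendix~\ref{app:locfunction} makes $\chi^{1/2}$ sufficiently smooth for these manipulations (so that $\|\nabla\chi_\Lambda^{1/2}\|_\infty$ and $\|\Delta\chi_\Lambda^{1/2}\|_\infty$ carry acceptable $\ell$-dependence) is the only technical point that needs real care.
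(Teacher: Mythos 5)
Your first reduction step already creates an essentially irreparable loss. By bounding $\chi_\Lambda(y)\leq\|\chi\|_\infty$ you throw away the one cancellation that makes the $\varepsilon_N^{-1/2}$ attainable, namely that on the support of $W(x-y)$ (which has radius $R\ll\ell$) the factor $\chi_\Lambda(y)$ is very close to $\chi_\Lambda(x)$, so that the quadratic form really produces $\chi_\Lambda(x)^2|\varphi(x)|^2$ and not merely $\chi_\Lambda(x)|\varphi(x)|^2$. Once this matching factor is gone, your factorization $A_y=B_y^*B_y$ with $B_y=(\chi_\Lambda W(\cdot-y))^{1/2}\overline Q_H$ followed by H\"older gives
\begin{equation*}
\|B_y\phi\|_2^2\leq\|W\|_1\,\|\chi_\Lambda^{1/2}\overline Q_H\phi\|_\infty^2,
\end{equation*}
and the object $\|\chi_\Lambda^{1/2}\overline Q_H\phi\|_\infty$ genuinely cannot be brought down to the claimed order $\varepsilon_N^{-1/4}\widetilde K_H/\ell$ from \eqref{eq:corF6}. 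The product-rule strategy you sketch in your final paragraph does not close the gap: in the expansion
\begin{equation*}
\Delta(\chi_\Lambda^{1/2}\varphi)=(\Delta\chi_\Lambda^{1/2})\varphi+2\nabla\chi_\Lambda^{1/2}\cdot\nabla\varphi+\chi_\Lambda^{1/2}\Delta\varphi,
\end{equation*}
the third term is only controlled through the second bound of \eqref{eq:corF6}, $\|\Delta^{\mathcal N}\varphi\|\leq C\varepsilon_N^{-1}\widetilde K_H^2/\ell^2$, and multiplying by $\chi_\Lambda^{1/2}$ gives no gain since $\chi_\Lambda^{1/2}$ is of order one on the bulk of $\Lambda$. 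Rewriting $\chi_\Lambda^{1/2}\Delta\varphi=\chi_\Lambda^{-1/2}\bigl(\Delta(\chi_\Lambda\varphi)-(\Delta\chi_\Lambda)\varphi-2\nabla\chi_\Lambda\cdot\nabla\varphi\bigr)$ to invoke the first bound of \eqref{eq:corF6} fails because $\chi_\Lambda^{-1/2}$ is unbounded. Your approach therefore delivers at best $\varepsilon_N^{-1}\widetilde K_H^2/\ell^2\|v\|_1$, a loss of a full factor $\varepsilon_N^{-1/2}$.

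The paper avoids the factorization altogether. Writing the quadratic form as $\int|\varphi(x)|^2\chi_\Lambda(x)W(x-y)\chi_\Lambda(y)\,\mathrm dx$ and Taylor-expanding $\chi_\Lambda(y)=\chi_\Lambda(x)+[\chi_\Lambda(y)-\chi_\Lambda(x)]$ gives a main term $I_1$ in which the full square $\chi_\Lambda(x)^2$ multiplies $|\varphi(x)|^2$, so it is exactly $|(\chi_\Lambda\varphi)(x)|^2$ to which the first bound of \eqref{eq:corF6} applies directly and yields $\varepsilon_N^{-1/2}$. The remainder $I_2$ pays an $\varepsilon_N^{-3/4}$ from $\|\chi_\Lambda\varphi\|_\infty\|\varphi\|_\infty$, but picks up a Taylor gain $R/\ell$; the parameter relations \eqref{eq:relRrho_mu}, \eqref{eq:varepsilonN} and \eqref{eq:Kl_KN} guarantee $(R/\ell)\varepsilon_N^{-1/4}\leq C$, bringing $I_2$ back to order $\varepsilon_N^{-1/2}\widetilde K_H^2/\ell^2\|v\|_1$. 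The rest of your setup (reduction to a one-body operator for fixed $y$, the Sobolev inequality $\|f\|_\infty^2\leq C\|f\|\,\|\Delta f\|$, the comparison $\|W\|_1\lesssim\|v\|_1$) is fine and essentially matches the paper; the missing ingredient is precisely the Taylor splitting of $\chi_\Lambda(y)$ before any pointwise bound is taken.
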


\begin{proof}
The proof is an adaptation to $2$ dimensions of \cite[Lemma 5.3]{FS2}. Let $\varphi \in \mathrm{Ran}\overline{Q}_{H,x}$ with $\|\varphi\|_2 = 1$, then
\begin{equation}
\|\overline{Q}_{H,x} w(x,y) \overline{Q}_{H,x} \varphi \| \leq I_1 + I_2,
\end{equation}
where
\begin{align*}
I_1 &= \int_{\mathbb{R}^2}  \, \chi_{\Lambda}(x)^2 |\varphi(x)|^2  v(x-y)\dd x,\\
I_2 &= \int_{\mathbb{R}^2}  \, \chi_{\Lambda}(x)\, |\chi_{\Lambda}(x) - \chi_{\Lambda}(y)| \,|\varphi(x)|^2  v(x-y)\dd x.
\end{align*}
We use the technical Lemma~\ref{lem:norminftyto2} below to get 
\begin{equation}
|I_1| \leq \|\chi_{\Lambda} \varphi\|^2_{\infty}\|v\|_1 \leq C \|v\|_1 \|\chi_{\Lambda}\varphi\| \|\Delta \chi_{\Lambda} \varphi\| \leq C\frac{\widetilde{K}_H^2}{\ell^2}  \varepsilon_N^{-1/2} \|v\|_1,
\end{equation}
by \eqref{eq:corF6} and \eqref{eq:norminftyto2_R} and 
\begin{align*}
|I_2| &\leq C\frac{R}{\ell} \| \chi_{\Lambda} \varphi\|_{\infty}  \|\varphi\|_{\infty}\|v\|_1 \leq C \frac{R}{\ell}  \|\Delta (\chi_{\Lambda}\varphi) \|^{1/2}  \|\Delta^{\mathcal{N}}  \varphi \|^{1/2}_{L^2\big(\left[-\frac{L}{2},\frac{L}{2} \right]^2\big)} \|v\|_1 \\ &\leq C\varepsilon_N^{-1/2} \Big( \frac{R}{\ell}\varepsilon_N^{-1/4} \Big) \frac{\widetilde{K}_H^2}{\ell^2}\|v\|_1  \leq C \varepsilon_N^{-1/2} \frac{\widetilde{K}_H^2}{\ell^2}\|v\|_1,
\end{align*}
by a Taylor expansion for the localization function, \eqref{eq:norminftyto2_box} and \eqref{eq:norminftyto2_R} for $\varphi$ and $\chi_{\Lambda}\varphi$, respectively, \eqref{eq:corF6} and the choice of the parameters in \eqref{eq:varepsilonN}, \eqref{eq:relRrho_mu}, \eqref{eq:Kl_KN} 
and this concludes the proof.
\end{proof}

In the proof of Lemma~\ref{lem:normQhigh} we used the following result.

\begin{lemma}\label{lem:norminftyto2}
Let $-\Delta^{\mathcal{N}}$ denote the Neumann Laplacian on $[-\frac{L}{2}, \frac{L}{2}]^2$. There exists $C>0 $ such that, for all $f \in \mathcal{D}(-\Delta^{\mathcal{N}})$ such that $\int_{[-\frac{L}{2},\frac{L}{2}]^2}f(x) \dd x = 0$, we have
\begin{equation}\label{eq:norminftyto2_box}
\|f\|_{\infty} \leq C \|f\|_{L^2([-\frac{L}{2}, \frac{L}{2}]^2)}^{1/2} \|-\Delta^{\mathcal{N}}f\|_{L^2([-\frac{L}{2}, \frac{L}{2}]^2)}^{1/2}.
\end{equation}
Also, for all $f \in H^2(\mathbb{R}^2)$,
\begin{equation}\label{eq:norminftyto2_R}
\|f\|_{\infty} \leq C \|f\|^{1/2}\|\Delta f\|^{1/2}.
\end{equation}
\end{lemma}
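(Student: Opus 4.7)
Both inequalities are refined Agmon/Gagliardo--Nirenberg-type embeddings that exploit the fact that in dimension two the critical Sobolev index for $L^\infty$ lies between $H^1$ and $H^2$. The strategy in both cases is identical: expand $f$ in an orthonormal basis of eigenfunctions of the Laplacian (Fourier modes on $\mathbb{R}^2$, Neumann cosines on the box), bound the $L^\infty$ norm by the $\ell^1$ norm of the coefficients times the uniform bound on the eigenfunctions, split this $\ell^1$ sum at a cutoff frequency, apply Cauchy--Schwarz on each piece, and optimize the cutoff.

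For the whole-space statement \eqref{eq:norminftyto2_R}, the plan is to use the Fourier inversion bound $\|f\|_\infty \leq (2\pi)^{-2}\|\widehat{f}\|_{L^1}$ and split
\[
\|\widehat{f}\|_{L^1} \;=\; \int_{|k|\leq \kappa} |\widehat{f}(k)|\,\dd k \;+\; \int_{|k|> \kappa} \frac{1}{k^2}\cdot k^2 |\widehat{f}(k)|\,\dd k .
\]
By Cauchy--Schwarz the first piece is bounded by $\sqrt{\pi}\,\kappa\, \|\widehat{f}\|_{L^2}$ and the second by $\bigl(\int_{|k|>\kappa} k^{-4}\dd k\bigr)^{1/2}\|\widehat{\Delta f}\|_{L^2} = \sqrt{\pi}\,\kappa^{-1}\|\widehat{\Delta f}\|_{L^2}$. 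Using Plancherel and optimizing $\kappa^2 = \|\Delta f\|/\|f\|$ yields the claim.

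For the box statement \eqref{eq:norminftyto2_box}, I expand $f$ in the $L^2$-normalized Neumann basis $\{\varphi_n\}_{n \in \mathbb{N}_0^2}$ on $[-L/2,L/2]^2$ given by products of cosines, with eigenvalues $\lambda_n = \pi^2(n_1^2+n_2^2)/L^2$. The mean-zero hypothesis kills the $n=0$ coefficient. Each $\varphi_n$ satisfies $\|\varphi_n\|_\infty \leq 2/L$, so $\|f\|_\infty \leq (2/L) \sum_{n \neq 0} |c_n|$ with $c_n=\langle f,\varphi_n\rangle$. Splitting at a cutoff $M$ and using Cauchy--Schwarz,
\[
\sum_{0<\lambda_n \leq M} |c_n| \;\leq\; \bigl(\#\{n : \lambda_n \leq M\}\bigr)^{1/2} \|f\|_{L^2}, \qquad \sum_{\lambda_n>M}|c_n| \;\leq\; \Bigl(\sum_{\lambda_n>M}\lambda_n^{-2}\Bigr)^{1/2} \|\Delta^{\mathcal N} f\|_{L^2}.
\]
Because the eigenvalues are explicit, the counting function satisfies $\#\{n:\lambda_n\leq M\} \leq C(1+ML^2)$ and $\sum_{\lambda_n>M}\lambda_n^{-2}\leq CL^2/M$ (both by comparison with the integral $\int_{r>\sqrt{M}L/\pi} r^{-3}\,\dd r$). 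Combining these and optimizing $M=\|\Delta^{\mathcal N}f\|/\|f\|$ gives the stated bound.

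The only mild subtlety is keeping the constants independent of $L$ in the box case: the factor $1/L$ in $\|\varphi_n\|_\infty$ must be combined with the $L^2$ Weyl counting and tail estimates so that every $L$-dependence cancels, which it does because both the counting function and the eigenvalue sums scale in the correct balancing powers of $L$. No new ideas beyond elementary Fourier/eigenfunction analysis are required, and no outside results need to be invoked.
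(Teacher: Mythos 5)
Your approach is correct, but it is genuinely different from the paper's. The paper proves the $\mathbb{R}^2$ inequality by first rescaling, $f_\lambda(x):=f(\lambda x)$ with $\lambda$ chosen so that $\|f_\lambda\|=\|\Delta f_\lambda\|$, and then applying a single Cauchy--Schwarz with the fixed weight $(1+|p|^4)^{\pm 1/2}$:
\[
\|f_\lambda\|_\infty^2 \le C\Big(\int |\widehat{f_\lambda}|\Big)^2 \le C\int(1+|p|^4)|\widehat{f_\lambda}|^2\,\dd p = C\big(\|f_\lambda\|^2+\|\Delta f_\lambda\|^2\big)=2C\|\Delta f_\lambda\|^2,
\]
after which the scaling identities undo $\lambda$. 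For the box case the paper only says the proof is ``an adaptation'' and gives no details. You instead split the frequency domain at a cutoff and optimize, which avoids the scaling step entirely; the advantage is that the same template transfers verbatim to the box via the Neumann cosine basis, where a dilation $x\mapsto\lambda x$ does not preserve $[-L/2,L/2]^2$ and the paper's trick is not directly available. So your method is the one that actually handles both statements uniformly, and it is arguably more elementary.

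One small point you should tighten. In the box case you bound the counting function by $C(1+ML^2)$, and you claim every $L$-dependence cancels; but the stray ``$1$'' does not scale, so after multiplying by the $2/L$ factor from $\|\varphi_n\|_\infty$ you are left with an extra $L^{-1}\|f\|$ term. This term is harmless, but you need to say why: the mean-zero hypothesis gives the Poincar\'e-type bound $\|f\|\le (L^2/\pi^2)\,\|\Delta^{\mathcal N}f\|$ (spectral gap $\pi^2/L^2$), so the optimal cutoff $M=\|\Delta^{\mathcal N}f\|/\|f\|$ automatically satisfies $ML^2\ge\pi^2$, and one may replace $C(1+ML^2)$ by $C'ML^2$ before optimizing. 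Equivalently, the extra term obeys $L^{-1}\|f\|\le \pi^{-1}\|f\|^{1/2}\|\Delta^{\mathcal N}f\|^{1/2}$. With this observation made explicit your argument is complete and correct, with constants manifestly independent of $L$.
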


\begin{proof}
Let us prove the last inequality, the first one being proven by an adaptation for the box. We use a scaling argument defining 
\begin{equation}
f_{\lambda}(x) := f(\lambda x), \qquad x \in \mathbb{R}^2.
\end{equation}
Given a $f \in H^2(\mathbb{R}^2)$, it is clearly possible to choose $\lambda$ such that $\|f_{\lambda}\| = \|\Delta f_{\lambda} \|$. Now, for the given $\lambda$, we have 
\begin{align*}
\|f_{\lambda}\|^2_{\infty} \leq \frac{1}{(2\pi)^2} \Big( \int_{\mathbb{R}^2} \, |\hat{f}_{\lambda}(p)|\dd p\Big)^2 \leq C \int_{\mathbb{R}^2} \, (1+|p|^4)\, |\hat{f}_{\lambda}(p)|^2\dd p = C \|\Delta f_{\lambda}\|^2,
\end{align*}
where we multiplied and divided by $(1+|p|^4)^{1/2}$ and used the Cauchy-Schwarz inequality and the choice of $\lambda$. Applied to $f_{\lambda}$ with the $\lambda$ chosen above the previous inequality becomes 
\begin{equation}
\|f\|_{\infty}^2 = \|f_{\lambda}\|_{\infty}^{2} \leq C \|\Delta f_{\lambda}\|^2 = C \|f_{\lambda}\| \|\Delta f_{\lambda}\| = C \|f\| \|\Delta f\|,
\end{equation}
where in the last equality we used the scaling properties of the dilatation in $\lambda$ w.r.t. the $L^2$ norm.
\end{proof}

\begin{proof}[Proof of Lemma~\ref{lem:d1d2estimate}]
Let $\Psi \in \mathscr{F}_s(L^2(\Lambda))$ be satisfying the assumptions of Lemma~\ref{lem:d1d2estimate}. Our goal is to prove the following estimate
\begin{align}
\langle (d_{1}^L + d_2^L)\rangle_{\widetilde{\Psi}} &\leq \rho_{\mu} \|v\|_1 \big(  \langle n_+ \rangle_{\widetilde{\Psi}}  +   n^{1/2}   \langle  n_+^L\rangle_{\widetilde{\Psi}}^{1/2} +  \langle  (n_+^L)^2\rangle_{\widetilde{\Psi}}^{1/2} \varepsilon_N^{-1/4} \widetilde{K}_H +  \langle    n_+^L \widetilde{n}_+^H\rangle_{\widetilde{\Psi}} \varepsilon_N^{-1/4} \widetilde{K}_H    \big. \nonumber\\
 &\quad+ \left. \big( \langle \widetilde{n}_+^H n_+^L \rangle_{\widetilde{\Psi}}^{1/2} \langle (n_+^L)^2 \rangle_{\widetilde{\Psi}}^{1/2}  + \langle \widetilde{n}_+^H n_+^L \rangle_{\widetilde{\Psi}}    \big)n^{-1} \varepsilon_N^{-1/2} \widetilde{K}_H^2  \right) + C \langle  Q_4^{\text{ren}}\rangle_{\widetilde{\Psi}}. \label{eq:partial_d1d2estimate}
\end{align}

We split the $d_{j}^L$ in several terms multiplying out the parentheses in \eqref{def.d1L} and \eqref{def.d2L}.
All these terms we treat individually using Cauchy-Schwarz inequalities. Similar bounds have been carried out in \cite{FS2}.
Here we just bound some representative examples to illustrate the procedure and the role played by Lemma~\ref{lem:normQhigh}.

Let us start using the Cauchy-Schwarz inequality for any $\varepsilon >0 $ to get 
\begin{equation*}
\Big|\Big\langle -\rho_{\mu} \sum_{i} P_i \int\dd y \, w_1(x_i,y)\; \overline{Q}_{H,i} + h.c. \Big\rangle_{\Psi}\Big| \leq \frac{n}{\ell^2} \|w_1\|_1 \big(\varepsilon n + \varepsilon^{-1} \langle n_+^L\rangle_{\Psi}\big), 
\end{equation*}
observing that $\|w_1\|_1 \leq C\delta$ and choosing $\varepsilon = \langle n_+^L\rangle_{\Psi}^{1/2} n^{-1/2}$, we obtain the desired quantity.

For the following term, for any $\varepsilon >0$, we use the Cauchy-Schwarz inequality and Lemma~\ref{lem:normQhigh},
\begin{align*}
\Big|\Big\langle  \sum_{i,j} P_i \overline{Q}_{H,j} w \overline{Q}_{H,i} \overline{Q}_{H,j}\Big\rangle_{\Psi}\Big| &\leq  \varepsilon \frac{n}{\ell^2} \|w\|_1\langle n^L_+ \rangle_{\Psi}  + \varepsilon^{-1} \|\overline{Q}_H w\overline{Q}_H\| \sum_{i \neq j} \langle \overline{Q}_{H,i} \overline{Q}_{H,j} \rangle_{\Psi}\\
&\leq \frac{\|w\|_1}{\ell^2}( \varepsilon n^2  + \varepsilon^{-1} \varepsilon_N^{-1/2} \widetilde{K}_H^2 \langle (n_+^L)^2 \rangle_{\Psi})
\end{align*}
where we used that $n_+^L \leq n_+$. Choosing $\varepsilon = \varepsilon_N^{-1/4} \widetilde{K}_H \langle (n_+^L)^2 \rangle_{\Psi}^{1/2} n^{-1}$, we obtain
\begin{equation}
\Big|\Big\langle  \sum_{i,j} P_i \overline{Q}_{H,j} w \overline{Q}_{H,i} \overline{Q}_{H,j}\Big\rangle_{\Psi}\Big| \leq n\frac{\langle (n_+^L)^2\rangle_{\Psi}^{1/2}}{\ell^2} \varepsilon_N^{-1/4} \widetilde{K}_H \|w\|_1.
\end{equation}

For the next term we want to apply a Cauchy-Schwarz inequality to reobtain a $Q_4^{\text{ren}}$ term. In order to do that we are going to complete the $Q_H$ to a $Q= Q_H + \overline{Q}_H$. 
\begin{align*}
\Big|\Big\langle  \sum_{i \neq j} \overline{Q}_{H,i} P_j w Q_{H,i} Q_{H,j}  + h.c.\Big\rangle_{\Psi} \Big| 
&\leq  \Big|\Big\langle \sum_{i \neq j} \overline{Q}_{H,i} P_j w Q_i Q_j \Big\rangle_{\Psi} + h.c. \Big|  \\
&\quad+  \Big|\Big\langle \sum_{i \neq j} \overline{Q}_{H,i} P_j w (Q_{H,i} \overline{Q}_j + \overline{Q}_{H,i} Q_{H,j}) \Big\rangle_{\Psi} + h.c. \Big| \\
&\quad + \Big|\Big\langle \sum_{i,j} P_i \overline{Q}_{H,j} w \overline{Q}_{H,i} \overline{Q}_{H,j} \Big\rangle_{\Psi}\Big|.
\end{align*}
The second term and the third terms can be estimated in the same manner as above, so let us focus on completing the first term in order to obtain the $Q_4$.
\begin{align}
 \Big|\Big\langle&\sum_{i \neq j}  \overline{Q}_{H,i} P_j w Q_i Q_j \Big\rangle_{\Psi} + h.c. \Big|  \\
 &\leq \Big|\Big\langle \sum_{i \neq j} \overline{Q}_{H,i} P_j w (Q_i Q_j + \omega (P_i P_j + P_i Q_j + Q_i P_j)) \Big\rangle_{\Psi} + h.c. \Big|  \label{eq:Q4recostructionQhigh}\\
 &\quad + \Big|\Big\langle \sum_{i \neq j} \overline{Q}_{H,i} P_j w  \omega ( P_i Q_j + Q_i P_j)) \Big\rangle_{\Psi} + h.c. \Big|  \\
 &\quad + \Big|\Big\langle \sum_{i \neq j} \overline{Q}_{H,i} P_j w  \omega P_i P_j) \Big\rangle_{\Psi} + h.c. \Big|.
\end{align}
The second and the third terms are treated as above, using that $0 \leq \omega \leq 1$ on the support of $w$. By a Cauchy-Schwarz inequality on the first term we get
\begin{equation}
\eqref{eq:Q4recostructionQhigh} \leq \langle \mathcal Q_4^{\text{ren}}\rangle_{\Psi} + C \frac{n}{\ell^2} \|w\|_1 \langle n_+\rangle_{\Psi}.
\end{equation}

Collecting the previous estimates including the ones not explicitly treated, we obtain \eqref{eq:partial_d1d2estimate}.

Bounding $n_+^L \leq \widetilde{\mathcal{M}}$ in \eqref{eq:partial_d1d2estimate} where it appears for higher moments than $1$, using that $\widetilde{n}_+^H \leq n $ and that $\varepsilon_N^{-1/4} \widetilde{K}_H \geq 1$ by \eqref{eq:rel_KH_KN} gives the result.
This finishes the proof of Lemma~\ref{lem:d1d2estimate}.
\end{proof}

\section{Properties of the localization function}\label{app:locfunction}
We collect here the definition and some important properties of the localization function that are used throughout the paper.

We define
\begin{equation}
\chi(x) := C_M (\zeta_1(x_1) \zeta_2(x_2))^{M+2},
\end{equation}
where
\begin{equation}
\zeta(y) := \begin{cases}
\cos(\pi y), &|y| \leq 1/2,\\
0, &|y| >1/2,
\end{cases}
\end{equation}
where $M \in \mathbb{N}$ is chosen even and large enough. The normalization constant $C_M>0$ is chosen in order to obtain $\|\chi\|_2 = 1$. We have $0 \leq \chi \in C^M(\mathbb{R}^2)$. We also define $\chi_\Lambda(x) = \chi(x/\ell)$.
 
\begin{lemma}\label{lem:localization_properties}
Let $\chi$ be the localization fuction defined above and let $M \in 2 \mathbb{N}$. Then, for all $k \in \mathbb{R}^2$, 
\begin{equation}
|\widehat{\chi}(k)| \leq \frac{C_{\chi}}{(1+ |k|^2)^{M/2}}, 
\end{equation}
where $C_{\chi} = \int |(1-\Delta)^{M/2}\chi|$.
If, furthermore, $|k| \geq \frac{1}{2} K_K \ell^{-1}$,
\begin{equation}\label{eq:fourierlocestimate}
|\widehat{\chi_{\Lambda}}(k)| = \ell^2 |\widehat{\chi}(k\ell)| \leq C \ell^2 K_{H}^{-M}.
\end{equation}
\end{lemma}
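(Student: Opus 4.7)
The plan is to prove the two bounds as a standard consequence of integration by parts (in the form of multiplication by polynomials under the Fourier transform), exploiting the fact that $\chi$ has compact support and enough smoothness that $(1-\Delta)^{M/2}\chi$ is integrable.

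First I would verify the regularity statement implicit in the definition $C_{\chi} = \int |(1-\Delta)^{M/2}\chi|$. Since $\zeta(y) = \cos(\pi y)\,\mathbbm{1}_{|y|\leq 1/2}$ vanishes to first order at $y = \pm 1/2$, the power $\zeta^{M+2}$ vanishes to order $M+2$ at the boundary of its support, so $\chi = C_M(\zeta_1\zeta_2)^{M+2}$ belongs to $C^{M+1}_c(\mathbb{R}^2)$. In particular, since $M$ is even, $(1-\Delta)^{M/2}\chi$ is a continuous function with compact support, and $C_\chi<\infty$.

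For the pointwise Fourier bound, I would use the identity
\begin{equation*}
(1+|k|^2)^{M/2}\,\widehat{\chi}(k) = \widehat{(1-\Delta)^{M/2}\chi}(k),
\end{equation*}
which holds because $M/2$ is an integer, so repeated integration by parts is legitimate, and then use the trivial bound $\|\widehat{f}\|_\infty \leq \|f\|_1$ to conclude
\begin{equation*}
|\widehat{\chi}(k)|\,(1+|k|^2)^{M/2} \leq \|(1-\Delta)^{M/2}\chi\|_{L^1} = C_\chi,
\end{equation*}
which is the first claim.

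For the second claim, a change of variables gives $\widehat{\chi_\Lambda}(k) = \ell^2\widehat{\chi}(k\ell)$ directly, so I just apply the first bound at the point $k\ell$: if $|k| \geq \tfrac{1}{2}K_H\ell^{-1}$ then $|k\ell|\geq K_H/2$, hence $(1+|k\ell|^2)^{M/2} \geq (K_H/2)^M$ and
\begin{equation*}
\ell^2|\widehat{\chi}(k\ell)| \leq \frac{C_\chi\,\ell^2}{(K_H/2)^M} \leq C\,\ell^2 K_H^{-M}.
\end{equation*}
There is no real obstacle here: the only subtlety worth flagging is the smoothness of $\chi$ at the boundary of its support, which is why the exponent is taken to be $M+2$ rather than $M$ in the definition of $\chi$.
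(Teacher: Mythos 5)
Your proof is correct, and the paper in fact states Lemma~\ref{lem:localization_properties} in Appendix~\ref{app:locfunction} without giving a proof, so there is nothing to compare against; the argument you supply is the standard one that the authors evidently had in mind. The key observations are all in order: since $\zeta$ vanishes to first order at $y=\pm\tfrac12$, the power $\zeta^{M+2}$ (extended by zero) is $C^{M+1}$, hence $\chi\in C^{M+1}_c(\mathbb{R}^2)\subset W^{M,1}$, which justifies both the integration by parts behind the identity $(1+|k|^2)^{M/2}\widehat{\chi}=\widehat{(1-\Delta)^{M/2}\chi}$ (recall $M$ is even so $(1-\Delta)^{M/2}$ is a genuine differential operator of order $M$) and the finiteness of $C_\chi$. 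The trivial bound $\|\widehat f\|_\infty\le\|f\|_1$ then gives the first inequality, and the scaling identity $\widehat{\chi_\Lambda}(k)=\ell^2\widehat{\chi}(k\ell)$ together with $|k\ell|\ge K_H/2$ gives the second, with $C=2^M C_\chi$ absorbed into the universal constant. You also correctly read the typo $K_K$ in the hypothesis as $K_H$.
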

An important property for the localization function $\chi_{B_u}$, $u \in \mathbb{R}^2$, on the small boxes, namely
\begin{equation}\label{eq:chiB}
\chi_{B_u}(x) := \chi_{\Lambda}(x) \chi \Big( \frac{x}{d\ell}-u\Big),
\end{equation}
which is used in Appendix~\ref{app:low-smallbox}, is the following bound
\begin{equation}\label{eq:bound_loc_smallbox_M}
\|\nabla^2\chi_{B_u}\|_{\infty} \leq C_M \frac{1}{|B_u|\ell_{\min}^2} \int \chi_{B_u},
\end{equation}
which is taken from \cite[Appendix C]{FS}. Here it is key the fact that we do not consider a smooth function but we require $\chi$ to have a finite degree of regularity measured by the parameter $M$.

\section{Comparing Riemann sums and integrals}\label{app:sum_to_int}

We will show in this section that we could approximate integrals on $\R^2$ by Riemann sums when it was needed in \eqref{eq. sum just before the int} to prove the upper bound. Recall that the assumptions of Theorem~\ref{thm.upperbound.soft.pot} were
\begin{equation}\label{eq: assumptions on R and L}
R \leq \rho^{-1/2} Y^{1/2}, \qquad L_{\beta} = \rho^{-1/2} Y^{-\beta}.
\end{equation}

We divide $\R^2$ into small squares $\square_p$ of size $\frac{2\pi}{L}$ centered at $p \in \Lambda_L^* = \frac{2\pi}{L} \Z^2$. Then, clearly
\begin{align}
	\Big\vert \frac{4 \pi^2}{ L^2} \sum_{p \in \Lambda_L^*} f(p) - \int_{\R^2} f(k) \dd k \Big\vert
	&\leq \frac{C}{L^3}\sum_{p \in \Lambda_L^*} \sup_{\square_p} \vert \nabla f \vert. \label{ine:riesum}
\end{align}
We consider the functions present in the two sums of \eqref{eq. sum just before the int}. With $\alpha_p$ and $\gamma_p$ given in \eqref{def:alphagamma} the first term is
\begin{align}\label{eq:def_f}
	f(p)&=p^{2}+\rho_0\widehat{g}_{p}-\sqrt{p^{4}+2\rho_0\widehat{g}_{p}p^{2}}+\rho_0 \big(\widehat{v}_{p}-\widehat{g}_{p}\big)\big (\gamma_{p}+\alpha_{p}\big ) \nonumber \\
	&=p^{2}+\rho_0\widehat{g}_{p}-\sqrt{p^{4}+2\rho_0\widehat{g}_{p}p^{2}}+\rho_0 ( \widehat{v}_{p}-\widehat{g}_{p} )\Big(\frac{p^{2}}{2\sqrt{p^{4}+2\rho_0\widehat{g}_{p}p^{2}}}-\frac{1}{2}\Big),
\end{align}
and the second term 
\begin{align}\label{eq:def_d}
	d (p,r)=\widehat{v}_{r}\alpha_{p+r}\alpha_{p}.
\end{align}
We then have the following estimates
\begin{lemma}\label{lem:RiemannSums2}
Let $f,d$ be as in \eqref{eq:def_f} and \eqref{eq:def_d}. Then,
	\begin{align}
		\Big \vert \frac{1}{\vert \Lambda_{\beta} \vert}\sum_{p\in \Lambda_{\beta}^{*}}f(p)-\int_{\R^{2}}f(k) \frac{\dd k}{4\pi^2} \Big\vert &\leq C \rho^{2}Y^{1/2 + \beta}\widehat{v}_0, \label{eq:f}
		\intertext{and}
		\Big\vert \frac{1}{\vert \Lambda_{\beta} \vert^{2}}\sum_{p,r\neq 0}d(p,r)-\int_{\R^{4}}d(p,r) \frac{\dd p\dd r}{\left (4\pi^2\right )^{2}} \Big\vert &\leq C \rho^{2}Y^{1/2 + \beta}\widehat{v}_0.\label{eq:da}
	\end{align}
\end{lemma}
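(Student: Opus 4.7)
The plan is to apply the Riemann-sum comparison \eqref{ine:riesum} to $f$ (and its four-dimensional analogue for $d$), so that the prefactor $C/L_\beta^3$ delivers the $1/L_\beta=\rho^{1/2}Y^\beta$ gain that converts a bound on $\int|\nabla f|$ into the claimed error $\rho^2 Y^{1/2+\beta}\widehat v_0$. Since the sums in \eqref{eq. sum just before the int} omit $p=0$ (and analogously the lattice slices $p=0$, $r=0$, $p+r=0$ for $d$), one first subtracts these points and estimates the integral of $f$ over the corresponding box $\square_0$ directly: the crude bounds $|B(p)|\le C\rho_0\widehat g_0$ and $|\gamma_p+\alpha_p|\le 1/2$ give $|f|\le C\rho_0\widehat v_0$ near the origin, so this contribution is $O(\rho_0\widehat v_0/L_\beta^2)=O(\rho^2 Y^{1+2\beta}\widehat v_0)$, which is absorbed thanks to $\beta\ge 3/2$. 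An analogous accounting treats the excluded slices in the four-dimensional case.

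For $f=B+D$, with $B(p):=p^2+\rho_0\widehat g_p-\sqrt{p^4+2\rho_0\widehat g_p p^2}$ and $D(p):=\rho_0(\widehat v_p-\widehat g_p)(\gamma_p+\alpha_p)$, split the $p$-integration at $|p|=\sqrt{\rho_0\widehat g_0}$. Using the chain rule together with $|\nabla\widehat g_p|\le R\widehat g_0$ and the decay \eqref{eq:vhatp}, a short calculation — in which the apparent $1/|p|$ singularity of the square-root derivative is cancelled by the explicit factor $|p|$ — yields $|\nabla B(p)|\le C\sqrt{\rho_0\widehat g_0}$ on the small-$p$ region and $|\nabla B(p)|\le C(\rho_0\widehat g_0)^2/|p|^3+CR(\rho_0\widehat g_0)^2/|p|^2$ for $|p|\ge\sqrt{\rho_0\widehat g_0}$. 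Integration gives $\int|\nabla B|\lesssim (\rho_0\widehat g_0)^{3/2}\lesssim \rho^{3/2}Y^{3/2}\le C\rho^{3/2}Y^{1/2}\widehat v_0$ after using $\widehat v_0\ge\widehat g_0\sim Y$ and the bound $R\le \rho^{-1/2}Y^{1/2}$ from \eqref{eq: assumptions on R and L}. For $D$, the explicit identity $\gamma_p+\alpha_p=(p^2-\sigma_p)/(2\sigma_p)$ with $\sigma_p:=\sqrt{p^4+2\rho_0\widehat g_p p^2}$ gives $|\nabla(\gamma_p+\alpha_p)|\le C/|p|$ on the small-$p$ region, which is $L^1$-integrable in two dimensions; combining with $|\widehat v_p-\widehat g_p|\le C\widehat v_0$ and $|\nabla(\widehat v_p-\widehat g_p)|\le CR\widehat v_0$ produces the matching bound $\int|\nabla D|\le C\rho^{3/2}Y^{1/2}\widehat v_0$.

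For $d(p,r)=\widehat v_r\alpha_{p+r}\alpha_p$, the product rule gives
\[
|\nabla d|\le R\widehat v_0\,|\alpha_{p+r}\alpha_p|+\widehat v_0\bigl(|\nabla\alpha_{p+r}||\alpha_p|+|\alpha_{p+r}||\nabla\alpha_p|\bigr).
\]
The first term integrates to $R\widehat v_0\,\|\alpha\|_{L^1}^2\le C\rho^{-1/2}Y^{1/2}(\rho Y|\log Y|)^2\widehat v_0$, comfortably inside the target. The second term is the delicate one: since $|\nabla\alpha_p|\sim\sqrt{\rho_0\widehat g_0}/|p|^2$ near the origin is \emph{not} in $L^1$, one cannot bound the sum by $\int|\nabla\alpha|\cdot\int|\alpha|$. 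Instead, the lattice regularisation $\sup_{\square_p}|\nabla\alpha|\le C\sqrt{\rho_0\widehat g_0}/|p|^2$ (valid for $p\ne 0$) together with the change of variable $s=p+r$ reduces the task to the discrete sum $\sum_{0<|p|\le\sqrt{\rho_0\widehat g_0}}|p|^{-2}\lesssim L_\beta^2|\log(L_\beta\sqrt{\rho_0\widehat g_0})|\sim L_\beta^2|\log Y|$, which is finite. Combining with $\sum_s|\alpha_s|\le CN$ from Lemma~\ref{lem:ag} gives a contribution of order $\widehat v_0(\rho_0\widehat g_0)^{3/2}|\log Y|/L_\beta\sim\rho^2 Y^{3/2+\beta}|\log Y|\widehat v_0\le\rho^2 Y^{1/2+\beta}\widehat v_0$, using $Y|\log Y|\le 1$.

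The main technical obstacle is precisely this lattice-driven estimate of $|\nabla\alpha|$: the continuum $L^1$-norm diverges at the origin, so the naive separation of integrals fails, and one must instead keep the discrete sum and absorb the logarithmic factor using the extra power of $Y$ provided by the target bound. Once this is done the bound for $d$, and hence \eqref{eq:da}, follows; \eqref{eq:f} has already been handled by the integrable estimates for $\nabla B$ and $\nabla D$ above.
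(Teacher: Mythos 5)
Your strategy mirrors the paper's: apply \eqref{ine:riesum} and its four-dimensional analogue, split the gradient estimates at $|p|\sim\sqrt{\rho_0\widehat g_0}$, and treat the non-$L^1$ singularity $|\nabla\alpha_p|\sim\sqrt{\rho_0\widehat g_0}/|p|^2$ by keeping the discrete sum, whose smallest mode is $2\pi/L_\beta$. Your decomposition $f=B+D$ is a harmless reorganization of the paper's split $\partial_p f=A_p+B_p+C_p$, and your bounds for $\nabla B$, $\nabla D$ and for the $R\widehat v_0|\alpha_{p+r}\alpha_p|$ piece of $\nabla d$ are correct.

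There is, however, an arithmetic slip in the key step for $d$. You evaluate $\frac{\widehat v_0}{L_\beta^5}\bigl(\sum_s|\nabla\alpha_s|\bigr)\bigl(\sum_p|\alpha_p|\bigr)$ as $\widehat v_0(\rho_0\widehat g_0)^{3/2}|\log Y|/L_\beta$, but Lemma~\ref{lem:ag} gives $\sum_p|\alpha_p|\le CN=C\rho L_\beta^2$, not $CL_\beta^2\rho_0\widehat g_0$; the two differ by $\widehat g_0=8\pi\delta\sim Y$. With $\sum_s|\nabla\alpha_s|\lesssim\sqrt{\rho_0\widehat g_0}\,L_\beta^2|\log Y|$ from the $|p|^{-2}$ singularity, the product is actually $C\widehat v_0\,\rho\sqrt{\rho_0\widehat g_0}\,|\log Y|/L_\beta\sim C\widehat v_0\rho^2 Y^{1/2+\beta}|\log Y|$ — a factor $Y^{-1}$ larger than your expression. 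With the corrected power, $Y|\log Y|\le 1$ no longer absorbs the logarithm, and this term exceeds the lemma's right-hand side $C\rho^2 Y^{1/2+\beta}\widehat v_0$ by a factor $|\log Y|$. The paper's own proof bounds exactly the same pair of sums and asserts the conclusion without displaying the last estimate; the logarithm is ultimately innocuous because the final errors in Theorems~\ref{thm.upperbound.grandcanonical} and~\ref{thermo dynamic limit theorem} already carry a $|\log\delta_0|$ factor (and $\beta$ can be taken strictly above $3/2$), but your argument should not rely on an algebraic error to hide it — correct the power and carry the $|\log Y|$ explicitly.
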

\begin{proof}
	In order to apply \eqref{ine:riesum}, we start by calculating the gradient
	\begin{align}
		\partial_{p}f&=2p +\rho_0 \partial_{p}\widehat{g}_{p}-2p\frac{\big( 1+\frac{\rho_0\widehat{g}_{p} }{p^{2}}+\frac{\rho_0\partial_{p}\widehat{g}_{p} }{2p}\big)}{\sqrt{1+\frac{2\rho_0\widehat{g}_{p} }{p^{2}}}}\nonumber\\
		&\quad+\rho_0 \big(\partial_{p}\widehat{v}_{p}-\partial_{p}\widehat{g}_{p}\big)\Big (\frac{p^{2}}{2\sqrt{p^{4}+2\rho_0\widehat{g}_{p}p^{2}}}-\frac{1}{2}\Big )\nonumber\\
		&\quad+\rho_0^{2} \left (\widehat{v}_{p}-\widehat{g}_{p}\right )\frac{\widehat{g}_{p} p^{3}-\frac{1}{2}\partial_{p}\widehat{g}_{p}p^{4}}{\left (p^{4}+2\rho_0\widehat{g}_{p}p^{2}\right )^{\frac{3}{2}}}\nonumber\\
		&:= A_p+B_p+C_p.
	\end{align}
	We will now systematically omit the constants and study separately the cases $p\leq \sqrt{2\rho_0 \widehat{g}_{0}}$ (case $1$ referring to $A_{p}^{1}$) and $p\geqslant \sqrt{2\rho_0 \widehat{g}_{0}}$ (case $2$ referring to $A_{p}^{2}$). We then get by elementary inequalities
\begin{align}
		 \left \vert A_p^{1}\right \vert &\leq  (\rho \widehat{g}_0)^{1/2},  
		&\left \vert A_p^{2}\right \vert& \leq   \frac{\rho^2 R \widehat{g}_0 \widehat{g}_p}{p^2}  + \frac{(\rho \widehat{g}_{p})^2}{p^3} , \qquad \quad \,\, \nonumber \\
		\left \vert B_p^{1}\right \vert &\leq  \rho R(\widehat{v}_0-\widehat{g}_0),
		&\left \vert B_p^{2}\right \vert &\leq  \frac{R\rho^{2} (\widehat{v}_0-\widehat{g}_{0})\widehat{g}_{p}}{p^{2}}, \qquad \qquad \qquad   \nonumber\\
		\left \vert C_p^{1}\right \vert &\leq  \frac{\rho^{1/2}(\widehat{v}_0-\widehat{g}_0)}{\widehat{g}_0^{1/2}}, &
		\left \vert C_p^{2}\right \vert &\leq  (\widehat{v}_0-\widehat{g_0})\Big( \frac{\rho^2 \widehat{g}_{0}}{p^{3}}+\frac{R\rho^{2} \widehat{g}_{p}}{p^{2}} \Big),\nonumber
	\end{align}
	where we used $\vert\widehat{g}_0-\widehat{g}_p\vert\leq \vert\widehat{g}_0^{3/2}\vert$ for $p\leq (\rho \widehat{g}_0)^{1/2}$.
	This way we can use inequality \eqref{ine:riesum} and the decay of $\widehat{g}_{p}$ \eqref{eq:vtildehat.bound} to get
	\begin{align}
		\frac{1}{L_{\beta}^3} \sum_{p \in \Lambda^*}\left \vert\partial_{p}f \right \vert\dd p&\leq \frac{C}{L_{\beta}} \big((\rho \widehat{g}_0)^{3/2}+ \rho^{3/2}\widehat{g}_0^{1/2}(\widehat{v}_0-\widehat{g}_0) +\rho^2\widehat{g}_0R+R\rho^2(\widehat{v}_0-\widehat{g}_0)\big)\\& \leq C \widehat v_0 \rho^2 Y^{1/2 + \beta}, \nonumber
	\end{align}
	where we used \eqref{eq: assumptions on R and L}, and $\widehat{g}_0\leq \widehat{v}_0$. We use the same method to prove \eqref{eq:da}. We have
\begin{align}\label{ine: alpha_new}
\vert\alpha_{p} \vert\leq \begin{cases}\frac{\sqrt{\rho\widehat{g}_{0} }}{p}, & \text{ for } p\leq \sqrt{\rho\widehat{g}_{0} }, \\
\frac{\rho\left \vert\widehat{g}_{p}\right \vert}{p^{2}}, & \text{ for } p\geq \sqrt{\rho\widehat{g}_{0} }.
\end{cases}
\end{align}
	We have to calculate
	\begin{equation}
		\partial_{p}\alpha_{p}=-\frac{\rho\partial_{p} \widehat{g}_{p}}{2\sqrt{p^{4}+2\rho\widehat{g}_{p}p^{2} }}-\frac{\rho \widehat{g}_{p}\left (4p^{3}+4\rho\widehat{g}_{p}p+2\rho\partial_{p}\widehat{g}_{p}p^{2}\right )}{2({p^{4}+2\rho\widehat{g}_{p}p^{2} })^{3/2}},\nonumber
	\end{equation}
yielding
\begin{align}\label{ine: dalpha_new}
\vert\partial_{p}\alpha_{p} \vert\leq 
\begin{cases}\frac{\sqrt{\rho \widehat{g}_{0}}R}{p}+\left (\rho \widehat{g}_{0}\right )^{-1/2}+\frac{ \sqrt{\rho\widehat{g}_{0} }}{p^{2}}+\frac{ \sqrt{\rho\widehat{g}_{0} }R}{p}, & \text{ for }p\leq \sqrt{\rho\widehat{g}_{0} },\\
\frac{\rho \widehat{g}_{p}R}{p^{2}}+\frac{\rho \widehat{g}_{p}}{p^{3}}+\frac{\left (\rho \widehat{g}_{p}\right )^{2}}{p^{5}}+\frac{\left (\rho \widehat{g}_{p}\right )^{2}R}{p^{4}}, & \text{ for } p\geq \sqrt{\rho\widehat{g}_{0} }.
\end{cases}
\end{align}
	The divergence in $p\to 0$ implies to remove a little box around the point $0$
	\begin{align*}
		\Big\vert\frac{16\pi^{4}}{L_{\beta}^{4}}\sum_{p,r\neq 0} d (p,r)-\int_{\R^{4}} d(p,r )\dd p\dd r\Big \vert &\leq \Big\vert\frac{16\pi^{4}}{L_{\beta}^{4}}\sum_{p,r\neq 0} d(p,r)-\int_{\big(\R^{2}\backslash[-\frac{1}{L},\frac{1}{L}]^{2}\big )^{2}} d(p,r)\dd p\dd r\Big\vert \\
		&\quad + \Big \vert\int_{\R^{2}\times[-\frac{1}{L},\frac{1}{L}]^{2}} d(p,r)\dd p\dd r\Big \vert  \\
		&\quad+ \Big \vert\int_{[-\frac{1}{L},\frac{1}{L}]^{2}\times \R^{2}} d(p,r)\dd p\dd r\Big \vert. 
	\end{align*}
	where the last two terms in the above can be bounded by $\rho^{2}Y^{1/2+\beta}\widehat{v}_{0}$. Finally a direct computation using the decay of $\widehat{g}_{p}$, the bounds \eqref{eq:vhatp}, \eqref{ine: alpha_new},  \eqref{ine: dalpha_new}, and \eqref{ine:riesum} yields
	\begin{align}
		&\Big \vert\frac{16\pi^{4}}{L_{\beta}^{4}}\sum_{p,r\neq 0} d (p,r)-\int_{\big(\R^{2}\backslash[-\frac{1}{L},\frac{1}{L}]^{2}\big )^{2}} d(p,r)\dd p\dd r\Big \vert 
		\nonumber \\
		&\leq \frac{1}{L_{\beta}^{5}}\sum_{p,r\neq 0}\sup_{\square_p\times \square_r}\left \vert\nabla_{p,r} d(p,r)\right \vert\nonumber\\
		&\leq C\frac{\widehat{v}_{0}}{L_{\beta}^{5}}\sum_{p\neq 0}\vert\partial_{p}\alpha_{p}\vert\sum_{r\neq 0}\vert\alpha_{r}\vert+C\frac{R\widehat{v}_{0}}{L_{\beta}^{5}}\Big(\sum_{r}\vert\alpha_{r}\vert\Big )^{2}\nonumber\\
		&\leq C\widehat{v}_0\rho^2Y^{1/2 + \beta},
	\end{align}
	where we used the estimates of Lemma~\ref{lem:ag}. This concludes the proof.
\end{proof}

\section{Fixing parameters for the lower bound}\label{app:parameters}
Here we collect all the relations and dependencies of the several parameters involved in the lower bound for the convenience of the reader. Furthermore, we end the section by making an explicit choice that satisfies all the relations.
Recall that we have the small parameter
\begin{equation*}
Y = Y_{\mu} = |\log(\rmu a^2)|^{-1}.
\end{equation*}
We use the following notation throughout the article
\begin{equation}\label{eq:order_comparison}
A \ll B \quad \text{if and only if there exist } C, \varepsilon > 0  \;\text{ s.t. } A \leq CY^{\varepsilon} B.
\end{equation}
In the proof of the lower bound, a number of positive parameters are needed. These are the following
\begin{equation}
d, s, \varepsilon_T, \varepsilon_K, \varepsilon_N, \varepsilon_{\mathcal{M}}
\ll
1
\ll
{\mathcal M},
K_{\ell}, K_H,\widetilde{K}_H, K_N, K_B.
\end{equation}
These will be chosen below.

Furthermore, there are length scales $\ell_{\delta}$ and $R$.
These will be chosen to satisfy
\begin{align}
&R \leq \rho_{\mu}^{-1/2}, & \text{Condition on the radius of the support }, \label{eq:relRrho_mu} \\
&\ell_{\delta} = \frac{e^{\Gamma}}{2} \rho^{-1/2}_{\mu} Y^{-1/2}, &\text{healing length condition}\label{eq:rel_healing}.
\end{align}
Some first relations between the parameters are 
\begin{align} 
&d \ll 1 \ll K_{\ell},  &\text{sep. of small and large boxes},\label{eq:d^-2<<Kl}\\
&d^{-2} \ll K_H \ll \widetilde{K}_H, &\text{sep. of low and high momenta},\label{eq:rel_d_k_H}\\
&d \ll (s K_{\ell})^{-1} \ll 1, &\text{condition for Bog. integral} \label{eq:rel_s_Kl},\\
&d^2 K_{\ell}^4 \ll \varepsilon_T  \ll  d s K_{\ell}, &\text{spectral gap condition} \label{eq:rel_T2comm},\\
&ds^{-1} \leq C, &\text{localization to small boxes}.
\end{align}

The combination of \eqref{eq:rel_d_k_H} and \eqref{eq:rel_T2comm} implies the following relations:
\begin{equation}\label{eq:rel_Kl_KH_d}
K_\ell \ll K_{\ell}^2 \ll s d^{-1} \ll d^{-1} \ll d^{-2} \ll K_H. 
\end{equation}

Defining 
\begin{equation}\label{eq:varepsilonN}
\varepsilon_N := K_N^{-1} Y, \qquad \varepsilon_{\mathcal{M}} := \frac{\mathcal{M}}{\rho_{\mu}\ell^2}, 
\end{equation}
we give the following conditions which control the magnitude of the large parameters in terms of $Y$:
\begin{align}
&(dsK_{\ell})^{-1} \ll K_B, & \text{condition errors in small box}, \label{eq:K_Bsmallbox}\\
&K_B K_{\ell} \widetilde{K}_H K_N^{1/4} \ll Y^{-1/4}, &\text{small error in large matrices}, \label{eq:relK_B-K_H-K_N-K_L} \\
&K_{\ell}^{-1} K_N^{1/4} \ll Y^{-1/2}, &\text{technical estimate in large matrices}  \label{eq:Kl_KN}, \\
&\widetilde{K}_H K_N^{-1/4} \gg Y^{1/4}, &\text{technical estimate in large matrices}, \label{eq:rel_KH_KN}\\
&K_{\ell}^2 K_H^2 \mathcal{M} \ll Y^{-1},  &\text{second localization of 3Q term}, \label{cond:3Qloc}\\
&K_B^2 K_{\ell}^2 \ll Y^{-1/4}, &\text{number for high momenta} \label{eq:Kl_KN_number},\\
&K_{\ell}^{10} K_H^{-8} d^{-4} \ll Y^{-1}, &\text{condition error in } \mathcal{T}_1. \label{eq:rel_errQ3_statement}
\end{align}
Here the magnitude of the small parameters:
\begin{align}
&\varepsilon_R \ll K_B^{-2} K_{\ell}^{-2} |\log Y|^{-1}, & \text{Condition on $\varepsilon_R$},\label{eq:rel_epsR_Kl_KB}\\
&\varepsilon_K \ll K_{\ell}^{-2}, & \text{error in } \mathcal T_{2,com}',\label{eq:rel_T2comm2}\\
&\varepsilon_K \gg K_{\ell}^4 K_H^{-4} (d^{-2} \varepsilon_{\mathcal{M}}^{1/2}+ d^{-4} \varepsilon_{\mathcal M}), &\text{condition error in } \mathcal{T}_1 \text{ and }\mathcal{T}_2, \label{eq:rel_epsK_KM2}\\
&\varepsilon_{\mathcal{M}} \ll d^{8}K_{\ell}^{4}\varepsilon_T^{-2}, &\text{condition for error } \delta_1. \label{eq:rel_epsT_d_Kl}\\
&\varepsilon_N \leq \varepsilon_T^{-2} d^4 K_\ell^4, &\text{bound from Lemma~\ref{lem:technicalest_rhofar}}. \label{eq:rel_eps_M_small}
\end{align}

We use the fundamental property of the system that the number of excitations of our state is relatively small compared to the number of particles (expressed by the condition $\varepsilon_{\mathcal{M}} \ll 1$) but still larger that a certain threshold. This property is expressed by the following condition:
\begin{equation}
\mathcal{M} \gg Y^{-7/8} |\log Y|^{1/4} K_B^{1/2} K_{\ell}^{1/2} K_N^{1/8} \widetilde{K}_H^{1/2}\|v\|_1^{1/2}. \label{eq:M_large_matrices}
\end{equation}

The following are conditions that impose constraints on the size of $M$, the degree of regularity of the localization function $\chi$:
\begin{align}
& d^{2M-2} \ll Y,   &\text{error in localization 3Q}, \label{eq:rel_locQ3low}\\
& d^2 K_{\ell}^{4} \ll \varepsilon_T , &\text{error in localization 3Q},\label{eq:rel_locQ3low2}\\
& \varepsilon_N^{3/2} + \Big( \frac{K_H}{\widetilde{K}_H}\Big)^M + (d^2K_H)^{-2M} \leq \varepsilon_{\mathcal{M}}, &\text{number for high momenta} \label{eq:rel_KH_K_M},\\
&(s^{-2} + d^{-2})(s d )^{-2} s^M \leq C,   & \text{localization to small boxes}. \label{eq:condition_small_loc}
\end{align}

A choice of parameters, non-optimal in the size of the error produced, fitting the previous conditions, is the following,
\begin{align}
 M &= 258,  &\mathcal{M} &= Y^{-\frac{31}{32}},    &\varepsilon_T &= Y^{\frac{1}{512} - \frac{1}{8192}}, \nonumber\\
K_{\ell} &= Y^{-\frac{1}{2048}},  &K_H &= Y^{-	\frac{1}{128}},  &\widetilde{K}_H &= Y^{-\frac{1}{64}},\nonumber \\ 
d &= Y^{\frac{1}{512}}, &  K_N&= Y^{-\frac{1}{512}},  &s &= Y^{\frac{1}{4096}},\nonumber\\
K_B &= Y^{-\frac{1}{512}},   &\varepsilon_K &= Y^{\frac{1}{512}}, &\varepsilon_{\mathcal{M}} &= Y^{{\frac{1}{32}} + \frac{1}{1024}}.
\end{align}
This choice is not made without any particular view towards optimality.
 


\bibliographystyle{hplain}
\bibliography{2drefs}

\end{document}